\titleformat{\section}{\normalfont\Large\bfseries}{\thesection}{1em}{}
\titlespacing*{\section}{0pt}{*0}{*0}   
\def\section{\@startsection{section}{1}{\z@}
  {-2.0ex plus -0.5ex minus -.2ex} 
  {1.0ex plus 0.2ex}               
  {\normalfont\Large\bfseries}}
  \let\oldparagraph\paragraph
  \renewcommand{\paragraph}{
    \@ifstar
      \xxxParagraphStar
      \xxxParagraphNoStar
  }
  \newcommand{\xxxParagraphStar}[1]{\oldparagraph*{#1}\mbox{}}
  \newcommand{\xxxParagraphNoStar}[1]{\oldparagraph{#1}\mbox{}}
  \let\oldsubparagraph\subparagraph
  \renewcommand{\subparagraph}{
    \@ifstar
      \xxxSubParagraphStar
      \xxxSubParagraphNoStar
  }
  \newcommand{\xxxSubParagraphStar}[1]{\oldsubparagraph*{#1}\mbox{}}
  \newcommand{\xxxSubParagraphNoStar}[1]{\oldsubparagraph{#1}\mbox{}}
\patchcmd\longtable{\par}{\if@noskipsec\mbox{}\fi\par}{}{}
\def\maxwidth{\ifdim\Gin@nat@width>\linewidth\linewidth\else\Gin@nat@width\fi}
\def\maxheight{\ifdim\Gin@nat@height>\textheight\textheight\else\Gin@nat@height\fi}
\def\fps@figure{htbp}
  \renewcommand*\contentsname{Table of contents}
  \newcommand\contentsname{Table of contents}
  \renewcommand*\listfigurename{List of Figures}
  \newcommand\listfigurename{List of Figures}
  \renewcommand*\listtablename{List of Tables}
  \newcommand\listtablename{List of Tables}
  \renewcommand*\figurename{Figure}
  \newcommand\figurename{Figure}
  \renewcommand*\tablename{Table}
  \newcommand\tablename{Table}
\newtheorem{theorem}{Theorem}
\newtheorem{lemma}{Lemma}[section] 
\newtheorem{proposition}{Proposition} 
\newtheorem{remark}{Remark}[section]
\newtheorem{definition}{Definition}[section]
\newtheorem{assumption}{Assumption}
 \newtheorem{claim}{Claim}
\numberwithin{equation}{section}
\newcommand{\ssg}{\mathsf{SSG}}
\newcommand{\tssg}{\mathsf{TAVIE}\text{-}\mathsf{SSG}}
\begin{document}

\def\spacingset#1{\renewcommand{\baselinestretch}%
{#1}\small\normalsize} \spacingset{1}


\newcommand{\anon}{0} 
\if0\anon
{
  \title{\bf A Generalized Tangent Approximation based Variational Inference Framework for Strongly Super-Gaussian Likelihoods}
  \date{}
  \renewcommand\thefootnote{\fnsymbol{footnote}}
  \author[1,*,†]{Somjit Roy}
  \author[1,*]{Pritam Dey}
  \author[2]{Debdeep Pati}
  \author[1]{Bani K. Mallick}

  \affil[1]{\small Department of Statistics, Texas A\&M University, College Station, TX 77843}
  \affil[2]{\small Department of Statistics, University of Wisconsin-Madison, Madison, WI 53706}

  \footnotetext[1]{
  \parbox{\textwidth}{
  These authors contributed equally.
  \textsuperscript{†}Corresponding author, e-mail: \href{mailto:sroy\_123@tamu.edu}{sroy\_123@tamu.edu}.
  }
  }
  \maketitle
} \fi

\if1\anon
{
  \bigskip
  \bigskip
  \bigskip
  \begin{center}
    {\LARGE\bf A Generalized Tangent Approximation based Variational Inference Framework for Strongly Super-Gaussian Likelihoods}
\end{center}
  \medskip
} \fi

\renewcommand\thefootnote{\arabic{footnote}}

\bigskip
\begin{abstract}
\noindent{
Variational inference, as an alternative to Markov chain Monte Carlo sampling, has played a transformative role in enabling scalable computation for complex Bayesian models. Nevertheless, existing approaches often depend on either rigid model-specific formulations or stochastic black-box optimization routines.  Tangent approximation is a principled class of structured variational methods that exploits the geometry of the underlying probability model. However, its utility has largely been confined to logistic regression and related modeling regimes. In this article, we propose a novel variational framework based on tangent transformation for a broad class of probability models characterized by strongly super-Gaussian likelihoods. Our method leverages convex duality to construct tangent minorants of the log-likelihood, thereby inducing conjugacy with Gaussian priors over model parameters in an otherwise intractable setup. Under mild assumptions on the data-generating mechanism, we establish algorithmic convergence guarantees, a contribution that stands in contrast to the limited theoretical assurances typically available for black-box variational methods. Additionally, we derive near-minimax optimal bounds for the variational risk. Superior performance of our proposed methodology is illustrated on simulated and real-data scenarios that challenge state-of-the-art variational algorithms in terms of scalability and their ability to consistently capture complex underlying data structure.
}
\end{abstract}

\noindent%
{\it Keywords:} Convex Duality; Tangent Minorant; Expectation-Maximization Algorithm; Fractional Likelihood; Fixed-point Convergence; Variational Risk Bound.
\vfill

\newpage
\spacingset{1.8} 

\section{Introduction}\label{section:introduction}

\subsection{Variational Inference for Scalable Bayesian Computation}

Monte Carlo (MC) methods have long served as the backbone of Bayesian computation, enabling asymptotically exact posterior sampling~\citep{Gelfand-MCMC-JASA,Neal2011HMC}. However, the advent of large and complex datasets has spurred the development of approximate Bayesian techniques~\citep{Rockova2017ParticleEM,RossellAbrilBhattacharya2021} that trade some statistical accuracy for substantial computational gains, among which variational inference (VI) has emerged as the most prominent one. Typically, VI leverages deterministic optimization, minimizing a divergence measure (such as Kullback-Leibler (KL) divergence) between the intractable target posterior and a family of tractable distributions~\citep{katsevich2024approximation}, regarded as the variational family. Recasting Bayesian inference as an optimization problem allows VI to scale to modern large-data applications spanning graphical models~\citep{Jordan1999-graphical-VI,Wainwright-Jordan-Graphical-VI}, hidden Markov models~\citep{MacKay-technical-report}, latent variable models~\citep{Blei-LDA-JMLR}, and neural networks~\citep{Graves-NN-VI}. For a comprehensive review on various aspects of VI, see~\cite{Blei-VI-review-JASA}.

A substantial body of VI research centers on formulations that impose structural assumptions within the variational family~\citep{Hoffman-Blei-SSVI-2015}. A canonical instance is mean-field variational inference (MFVI), which owing to its conceptual and computational simplicity, has been widely employed in statistical physics~\citep{parisi1988statistical} and later in Bayesian statistics~\citep{Jordan1999-graphical-VI}. The mean-field approximation enables an efficient block-wise optimization routine, coordinate ascent VI (CAVI), with well-established convergence guarantees and optimality properties~\citep{Wang03072019,AlquierRidgway2020,alpha-VB,Pati-Annals-CAVI}. While this structural independence ensures scalability, it often yields overconfident posteriors that underestimate true uncertainty~\citep{Blei-VI-review-JASA}. Complementing MFVI, expectation propagation (EP)~\citep{minka-EP} refines local approximations via moment matching, yielding richer posterior representations. Extensions such as non-conjugate variational message passing (NCVMP)~\citep{NCVMP} further adapt EP to non-conjugate exponential family models~\citep{Nott-GLMMs-VI}.
%
{
\renewcommand{\baselinestretch}{1.0}\normalsize
\begin{figure}[t!]
    \centering

     \begin{minipage}[t]{0.55\linewidth}
        \centering
        \includegraphics[width=\linewidth]{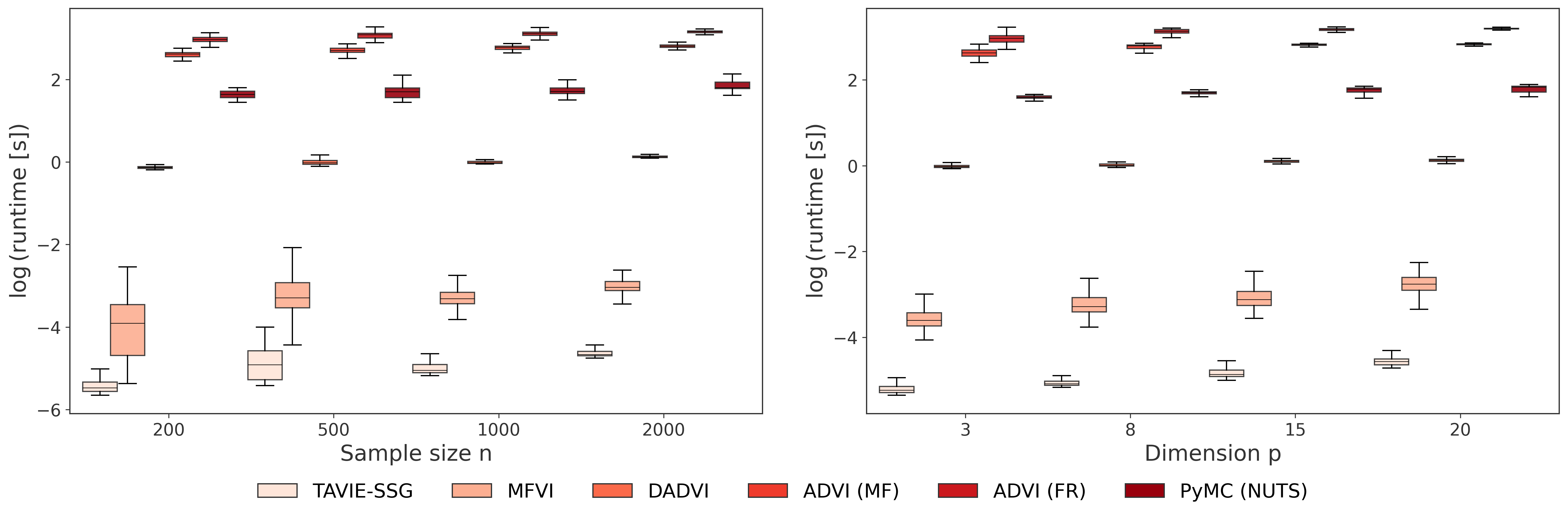}
        \caption{\footnotesize{Runtimes (in $\log$-scale) across $100$ data repetitions of $\tssg$ and competitors for Student's-$t$ (Type I $\ssg$) likelihood ($\nu=5$) in Section~\ref{subsec:sim-exp-student}, under varying sample sizes and feature dimensions.}}
        \label{fig:runtime_student}
    \end{minipage}
    \hfill
    \begin{minipage}[t]{0.40\linewidth}
        \centering
        \includegraphics[width=\linewidth]{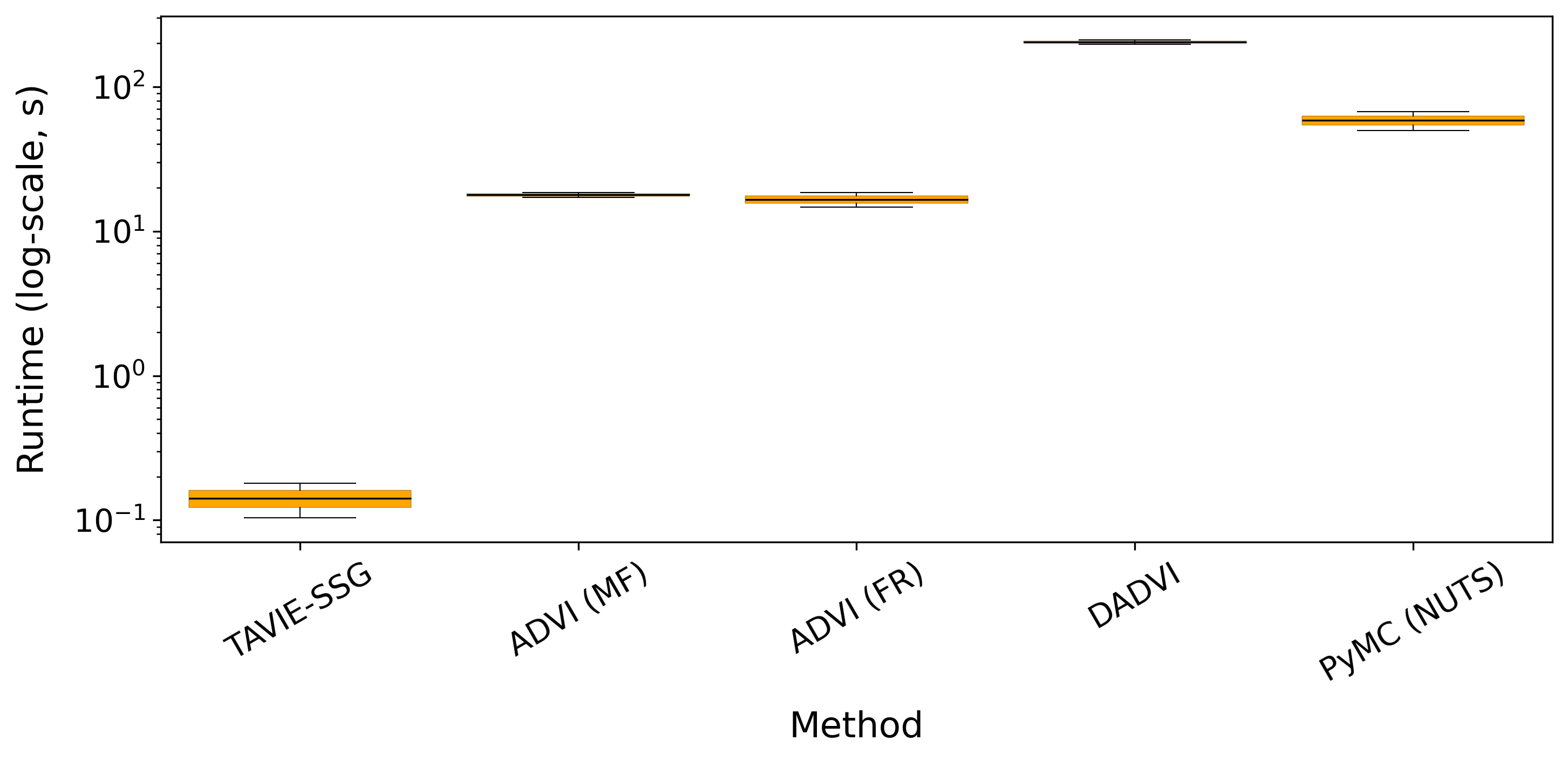}
        \caption{\footnotesize{Bayesian quantile regression (BQR) runtimes (in $\log$-scale) on the sub-sampled ($n=10^4$) U.S. 2000 Census dataset in Section~\ref{subsec:Census-data-study}.}}
        \label{fig:runtime_BQR}
    \end{minipage}
\end{figure}
}

More recently, black-box VI (BBVI) algorithms~\citep{ranganath2014bbvi,Blei-VI-review-JASA} have emerged as a flexible alternative to structured VI methods, extending applicability to a broad class of probabilistic models beyond the scope of MFVI and EP by operating under minimal structural assumptions. Their flexibility makes them well-suited for complex, non-conjugate, and high-dimensional Bayesian models~\citep{kingma2014vae}.
In particular, automatic differentiation VI (ADVI)~\citep{advi} is a widely celebrated BBVI algorithm that employs stochastic gradient estimators of the variational objective, providing scalable and easily deployable inference routines. Further,~\cite{dadvi} proposed deterministic ADVI (DADVI), which replaces stochastic gradient estimation by fixed MC approximations. These BBVI algorithms act as the major workhorse for inference in several probabilistic programming frameworks~\citep{gal2016dropout,tran2016edward,bingham2019pyro}. However, the generality of BBVI methods frequently comes at the expense of computational efficiency because of their reliance on MC approximations.  In certain modeling regimes, as shown in Figures~\ref{fig:runtime_student}-\ref{fig:runtime_BQR} and supported by the numerical studies in Section \ref{sec:TAVIE-SSG-applications}, exact MC sampling algorithms can achieve better scalability than BBVI methods, reducing the extent to which these VI methods fulfill their intended computational gains.

Bridging the preceding methodological divide, tangent approximation (or tangent transformation) based VI offers a principled middle ground~\citep{Jaakkola-PhD-thesis,Jaakkola-Jordan-2000}. By exploiting convex duality~\citep{Rockafellar1970} of the $\log$-likelihood, tangent approximation provides a deterministic and fully explainable optimization framework, while being independent of MC approximations akin to BBVI. Such characteristics not only enhance scalability but allows for provable convergence guarantees and statistical optimality~\citep{Ghosh-JMLR-VI}. Collectively, these insights motivate a systematic treatment of tangent approximation as the foundation for a unified VI framework applicable to a wide class of probabilistic models.

\subsection{Related Works and Our Contributions}
\label{subsec:related-works-contribution}

We begin by reviewing the literature on tangent approximation. This technique, introduced by~\cite{Jaakkola-Jordan-2000} employs convex duality to construct tangent minorants of the logistic $\log$-likelihood, thereby inducing conjugacy under a Gaussian prior endowed over the regression parameters in an otherwise intractable setup. It has found widespread applications in graphical models~\citep{Jordan1999-graphical-VI}, low-rank approximations~\citep{low-rank-aaprox-VI}, non-conjugate latent Gaussian models~\citep{Khan-Seeger}, 
and sparse kernel machines~\citep{Shi-Yu}.
Also,~\cite{durante2019conditionally} showed that tangent approximation can be reformulated as MFVI for conditionally conjugate inference in Bayesian logistic models through P\'{o}lya-Gamma data augmentation.
Further,~\cite{Ghosh-JMLR-VI} studied the optimality and algorithmic stability of tangent approximation for logit and multinomial logit models. Despite these advances, the adoption of tangent approximation beyond logit models remains elusive. To address this limitation, we utilize the notion of strong super-Gaussianity with respect to likelihoods.

Strong super-Gaussianity~\citep{palmer-2005-paper,palmer-book-chapter} enables parametric minorization of probability densities. Notably,~\cite{seeger2011} considered the factorization of prior distribution into product of strongly super-Gaussian ($\ssg$) potentials to develop a scalable variational framework for image learning under the sparse linear model regime. Our approach is fundamentally different, where we exploit the $\ssg$ property intrinsic to a broad class of likelihood functions. 

Building on this idea, we develop $\mathsf{T}$angent $\mathsf{A}$pproximation based $\mathsf{V}$ariational $\mathsf{I}$nferenc$\mathsf{E}$ for $\ssg$ likelihoods ($\tssg$), a framework that leverages the geometry of the underlying $\ssg$ likelihood to construct a tangent minorant and uses the resulting lower bound as the working likelihood. In principle, this general construction provides a variational recipe that can lead to tractable inference across a wide class of Bayesian models.
However in this article, we focus on two key classes of likelihoods, where specification of priors conjugate to the minorant enables an efficient VI framework. The first comprises linear regression models with non-Gaussian (heavy-tailed) errors~\citep{RossellRubio2018} arising in robust regression~\citep{huber-robust-regression}. By invoking the $\ssg$ property of heavy-tailed likelihoods, we provide a novel and scalable VI approach, with applications spanning financial risk modeling~\citep{graphical-modeling-heavy-tailed-markets} and skewed, heavy-tailed models such as Bayesian quantile regression~\citep{Yu-Moeed-Bayesian-QR}. The second class includes discrete-response (count) models, encompassing both the logistic regression formulation of~\cite{Jaakkola-Jordan-2000} and Negative-Binomial regression model, which is popular in biostatistics~\citep{He2021NEBULA}.

$\tssg$ works under an $\alpha$-fractional likelihood setup, where the original likelihood is raised to an exponent $\alpha \in (0,1]$. This tempering step aligns our formulation with the literature on Bayesian fractional posteriors~\citep{FrielPettitt2008,Bayesian-fractional-posterior} and related modeling settings~\citep{chernozhukov2003mcmc,jiang2008gibbs}.
Minorizing the obtained tempered likelihood and combining with a conjugate prior, we derive the corresponding $\alpha$-fractional variational posteriors (or power variational posteriors) for the considered $\ssg$ models and devise an efficient variational Expectation-Maximization (EM) algorithm for the optimization of the ensuing variational parameters. With $n$ being the sample size, $\tssg$'s construction facilitates the decomposition of the $n$-variate optimization into $n$ independent univariate closed form updates per iteration, allowing the algorithm to scale linearly in $n$ and rendering it embarrassingly parallelizable. This scalability advantage is strongly corroborated by numerical comparisons with competing VI and MC methods in Figures \ref{fig:runtime_student}-\ref{fig:runtime_BQR} and Section \ref{sec:TAVIE-SSG-applications}.

From the theoretical standpoint, a primary technical contribution of our work is to establish convergence of the $\tssg$ algorithm with associated rates starting from any arbitrary initialization, despite the nonconvex nature of the objective function (see Figure~\ref{fig:ELBO_landscape}). Specifically, we exploit the Kurdyka–\L ojasiewicz property~\citep{lojasiewicz1963propriete,kurdyka1998gradients} to guarantee that the sequence of $\tssg$ iterates have a limit point that coincides with a fixed-point of the underlying map of the variational EM updates. While such arguments are standard in the broader nonconvex optimization literature~\citep{attouch2010proximal,bolte2014proximal}, they have not previously been leveraged to study convergence of variational algorithms. Furthermore, adopting a frequentist perspective to assess the statistical accuracy of the variational proxy, we derive near-minimax bounds for the integrated variational risk under both $\alpha$-R\'{e}nyi divergence (for the fractional likelihood) and Hellinger distance (for the standard likelihood). These methodological and theoretical advances are substantiated with extensive empirical analyses demonstrating the superior scalability, accuracy, and robustness of $\tssg$ relative to state-of-the-art VI and MC algorithms across both simulated and real-world data scenarios. A fully reproducible \texttt{Python} implementation of $\tssg$ is available at 
\href{https://anonymous.4open.science/r/__TAVIE-SSG__}{\texttt{anonymous.4open.science/r/\_\_TAVIE-SSG\_\_}}.

The remainder of the article is organized as follows. Section \ref{section:notation} lays out the notational conventions used throughout. Section \ref{subsec:SSG-likelihood-models} introduces the classes of $\ssg$ likelihood models. Section \ref{subsec:prior-posterior} develops the $\alpha$-fractional formulation and derives the corresponding fractional variational posterior distributions under conjugate priors. Section \ref{subsec:tavie-algo} presents the $\tssg$ variational EM algorithm. Section~\ref{subsec:calibration-alpha} provides a practical recommendation for choosing $\alpha$. Section \ref{section:theory} contains theoretical results, including convergence guarantees of the $\tssg$ iterates and near-minimax bounds for the integrated variational risk under both fractional and standard likelihood settings. Section \ref{sec:TAVIE-SSG-applications} demonstrates the empirical success of $\tssg$, while Section \ref{section:discussion} wraps up with a discussion.

\subsection{Notation}\label{section:notation}
Let $\mathbb{R}$ and $\mathbb{N}$ denote the real and natural numbers, and let $\mathbb{R}^{+}$ and $\mathbb{R}_{0}^{+}$ denote the positive and nonnegative reals, respectively. For $n,p\ge 1$, $\mathbb{R}^{n}$ and $\mathbb{R}^{n\times p}$ denote Euclidean vector and matrix spaces, and $\mathbb{Z}_{+,0}^{n}$ denotes the set of $n$-dimensional vectors of nonnegative integers. For  $x \in \mathbb{R}^{n}$, $\lVert x\rVert_{2}$ is its $\ell_2$ norm. We use $[n] = \{1, \ldots, n\}$ for index sets, $\equiv$ for definitional equivalence, $|A|$ for the determinant of a square matrix $A$, and $I_p$ for the identity matrix of order $p$. For sequences $a_n, b_n$, $a_n = \mathcal{O}(b_n)$ signifies $|a_n| \leq C |b_n|$ for some constant $C$ and $a_n \asymp b_n$ denotes equivalence up to universal positive constants. For a vector-valued function $f(x)$, $\nabla_{x} f(x)$ is its gradient with respect to $x$. The indicator function of the event $A$ is $\mathds{1}(A)$. We write $\mathcal{N}_{p}(\mu, \Sigma)$ and $\mathcal{N}_{p}(x\mid \mu, \Sigma)$ for the Gaussian distribution and its evaluation at $x\in \mathbb{R}^{p}$, and analogously $\mathcal{NG}_{p}(\mu, \Sigma, a, b)$ and $\mathcal{NG}_{p}(x, y \mid \mu, \Sigma, a, b)$ for the Normal-Gamma distribution and its evaluation at $(x, y) \in \mathbb{R}^{p}\times \mathbb{R}^{+}$. 
i.i.d. stands for independently and identically distributed. 
Finally, $\mathbb{P}_{\theta_0}$ is the probability measure induced by the data-generating distribution under the true parameter value $\theta_0$.

\section{The \texorpdfstring{$\tssg$}{TAVIE-SSG} Framework}\label{section:methods}

At the outset, we define the class of {strongly super-Gaussian} ($\ssg$) {density functions}. 

\begin{definition}[Strong super-Gaussian ($\ssg$) density function~\citep{palmer-2005-paper}]
\label{def:SSG-density}
A probability density function $p:\mathbb{R} \to \mathbb{R}^{+}$, which is symmetric about $0$ is $\ssg$, if the mapping $g:\mathbb{R}^{+}_0 \to \mathbb{R}$ defined by $g(s) := \log p(\sqrt{s})$ is convex and monotonically decreasing.
\end{definition}
Prominent examples of $\ssg$ density functions include the {generalized Gaussian} family ($p(z) \propto \exp\{-\tau |{z}|^{d}\}$, for $\tau \in \mathbb{R}^+$, $0 < d \leq 2$, and $z \in \mathbb{R}$) and any {scale-mixtures of Gaussians} ($p(z) \propto \int_{0}^{\infty} \exp\{-sz^2\}k(s)ds$, with $k$ being any density supported on $\mathbb{R}^{+}$ and $z \in \mathbb{R}$). Motivated by the goal of encompassing a broad class of probabilistic models, we generalize the notion of strong super-Gaussianity to likelihood functions in Definition \ref{def:SSG}.

\subsection{Strongly Super-Gaussian Likelihood Models}\label{subsec:SSG-likelihood-models}

Let a collection of observed data units be $\mathcal{D}_n := \{(\mathbf{x}_i, y_i): i\in [n]\}$ with design matrix $\mathbf{X} = (\mathbf{x}_1, \ldots, \mathbf{x}_n)^{\top} \in \mathbb{R}^{n \times p}$ and response vector $y = (y_1, \ldots, y_n)^{\top} \in \mathbb{R}^n$. Definition \ref{def:SSG} below characterizes the class of models with $\ssg$ likelihoods.

\begin{definition}[Strongly-super Gaussian ($\mathsf{SSG}$) likelihood function]
\label{def:SSG}
The conditional likelihood of $y_i$ given $\mathbf{x}_i$ is said to be a {$\ssg$} likelihood function if:
\begin{equation}\label{eq:SSG-defintion}
p(y_i \mid \mathbf{x}_i, \theta) := r_i \, \exp \left\{\,s_i \zeta_i + t_i\,h(\zeta_i^2)\right\},\quad \zeta_i := u_i \mathbf{x}_i^{\top}\beta + v_i,
\end{equation}
independently for $i\in [n]$, where $\theta\in \Theta$ is the set of model parameters including the regression coefficient vector $\beta \in \mathbb{R}^{p}$ and (possibly) dispersion parameter $\tau$;  $h:\mathbb{R}^{+}_{0} \to \mathbb{R}$ is a convex and monotonically decreasing function which is at least twice continuously differentiable on $\mathbb{R}^{+}$; and $r_i, t_i \in \mathbb{R}^{+}, s_i, u_i, v_i\in \mathbb{R}$ are constants (possibly) depending on $y_i$ and $\tau$. 
\end{definition}
Building on this general definition, we next present two key classes of $\ssg$ likelihoods. 

\noindent
\textbf{Type I $\ssg$ likelihoods (Heavy-tailed families)}. Consider the linear regression model, $y_i = \mathbf{x}_i^{\top}\beta + \epsilon_i$, with i.i.d. $\ssg$ error $\epsilon_i$. The likelihood then takes the form:
\begin{equation}
\label{eq:SSG-typeI-likelihood}
p(y_i \mid \mathbf{x}_i, \theta) \propto \tau \exp\left\{h\left(\tau^2(y_i - \mathbf{x}_i^{\top}\beta)^2\right)\right\}, \quad i\in [n],
\end{equation}
with $\theta = (\beta^{\top}, \tau^2)^{\top} \in \mathbb{R}^p \times \mathbb{R}^{+}$ and $h$ being convex, decreasing, and twice differentiable on $\mathbb{R}^{+}$. Notable examples include the {Laplace} family ($h(t) =-\sqrt{t}$) and the {Student's-$t$} family with fixed $\nu \in \mathbb{N}$ degrees of freedom ($h(t) =-(\nu + 1) \log (1 + t/\nu)/2$). In general, error distributions representable as scale-mixtures of Gaussians fall under this category in \eqref{eq:SSG-typeI-likelihood}.

\noindent
\textbf{Type II $\ssg$ likelihoods (Bernoulli-type models)}.
We consider discrete response models induced by a sequence of {Bernoulli} trials having the following likelihood form:
\begin{equation}
\label{eq:SSG-typeII-likelihood}
p(y_i \mid \mathbf{x}_i, \beta) \propto 
\frac{\exp\{a_i \mathbf{x}_i^{\top}\beta\}}{\left[1 + \exp\{\mathbf{x}_i^{\top}\beta\}\right]^{b_i}} 
= \exp\left\{\left(a_i-\frac{b_i}{2}\right)\mathbf{x}_i^{\top}\beta + b_i h\left((\mathbf{x}_i^{\top}\beta)^2\right)\right\},
\end{equation}
independently for $i \in [n]$, where $a_i$ and $b_i$ may depend on $y_i$, $\beta \in \mathbb{R}^p$, and $h(t) = -\log\left[2\cosh(\sqrt{t}/2)\right]$. This class contains the {Binomial} ($y_i\mid \mathbf{x}_i, \beta \sim \mathrm{Bin}(m_i, p_i)$) and the {Negative-Binomial} ($y_i\mid \mathbf{x}_i, \beta \sim \mathrm{NB}(m_i, p_i)$) regression models, with $p_i := (1+ \exp\{-\mathbf{x}_i^{\top}\beta\})^{-1}$ and $m_i \in \mathbb{R}^{+}$, for $i\in [n]$, and subsumes the {logistic} regression case \citep{Jaakkola-Jordan-2000}. The parameterizations of the aforementioned $\ssg$ likelihoods appear in Section~\ref{sec:parameterization-details} of Supplementary Materials and the corresponding $(r_i, s_i, t_i, \zeta_i)$ specifications are summarized in Table \ref{tab:ssg-types} below.

{
\renewcommand{\baselinestretch}{1.0}\normalsize
\begin{table}[H]
\centering
\caption{\small Specification of $(r_i, s_i, t_i, \zeta_i)$ in Definition \ref{def:SSG} for Type I and Type II $\ssg$ likelihoods.}
\label{tab:ssg-types}
\begin{tabular}{lccccc}
\toprule
\toprule
Likelihood Type & $r_i$ & $s_i$ & $t_i$ & $\zeta_i$ & $\theta$\\
\midrule
Type~I $\ssg$ & $\tau$ &
$0$ &
$1$ &
$\tau\,(y_i - \mathbf{x}_i^{\top}\beta)$ &
$(\beta^{\top}, \tau^2)^{\top}$\\
Type~II $\ssg$ & $1$ &
$a_i - \tfrac{b_i}{2}$ &
$b_i$ &
$\mathbf{x}_i^{\top}\beta$ &
$\beta$\\
\midrule
\bottomrule
\end{tabular}
\end{table}
}
We now observe that the $\ssg$ likelihood  in \eqref{eq:SSG-defintion} admits the following {tangent minorizer}~\citep{palmer-2005-paper}, which forms the backbone of the $\tssg$ framework. 
\begin{proposition}[Tangent minorizer]\label{lemma:tangent-lower-bound}
For any $\xi_i \in \mathbb{R}^{+}$, the likelihood in \eqref{eq:SSG-defintion} has a minorizer:
\begin{equation}\label{eq:general-minorizer}
\varphi(y_i \mid \mathbf{x}_i, \theta, \xi_i) := r_i \, \exp \left\{\,s_i \zeta_i + t_i\,A(\xi_i)\zeta_i^2 + t_i\,\gamma(\xi_i)\right\} \leq p(y_i \mid \mathbf{x}_i, \theta),
\end{equation}
for $i\in [n]$, where $\gamma(t) := h(t^2) - t^2h'(t^2)$ and $A(t) := h'(t^2)$. Equality in \eqref{eq:general-minorizer} holds if and only if $|\zeta_i| = \xi_i$, i.e., $\varphi(y_i \mid \mathbf{x}_i, \theta, \xi_i)$ is tangent to $p(y_i \mid \mathbf{x}_i, \theta)$ at $|\zeta_i| = \xi_i$.
\end{proposition}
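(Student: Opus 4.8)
The plan is to reduce the entire statement to the elementary fact that a differentiable convex function lies above each of its tangent lines (the supporting-hyperplane inequality), applied to $h$ in the variable $s=\zeta_i^2$. Fix $i\in[n]$ and set $s=\zeta_i^2\ge 0$ and $s_0=\xi_i^2\ge 0$. Since $h$ is convex and (at least) once continuously differentiable on $\mathbb{R}^{+}$, we have $h(s)\ge h(s_0)+h'(s_0)(s-s_0)$ for all $s,s_0\ge 0$. Substituting and regrouping the right-hand side gives
\begin{equation}\label{eq:proof-tangent-quadratic}
h(\zeta_i^2)\;\ge\; h'(\xi_i^2)\,\zeta_i^2 \;+\; \bigl(h(\xi_i^2)-\xi_i^2 h'(\xi_i^2)\bigr)\;=\;A(\xi_i)\,\zeta_i^2+\gamma(\xi_i),
\end{equation}
which is precisely the quadratic-in-$\zeta_i$ lower bound on the nonconstant part of the exponent. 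This is the only nontrivial step.

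Next I would propagate \eqref{eq:proof-tangent-quadratic} through the monotone operations defining $p$. Because $t_i\in\mathbb{R}^{+}$, multiplying \eqref{eq:proof-tangent-quadratic} by $t_i$ preserves the inequality; adding the common linear term $s_i\zeta_i$, applying the (monotone) exponential, and multiplying by $r_i\in\mathbb{R}^{+}$ yields
\[
p(y_i\mid \mathbf{x}_i,\theta)=r_i\exp\{s_i\zeta_i+t_i h(\zeta_i^2)\}\;\ge\; r_i\exp\{s_i\zeta_i+t_iA(\xi_i)\zeta_i^2+t_i\gamma(\xi_i)\}=\varphi(y_i\mid\mathbf{x}_i,\theta,\xi_i),
\]
which is \eqref{eq:general-minorizer}. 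For the equality (tangency) claim, observe that the chain above turns an equality in \eqref{eq:proof-tangent-quadratic} into $p=\varphi$ and vice versa, since $r_i e^{s_i\zeta_i}>0$ and $e^{(\cdot)}$ is strictly increasing. Equality in the tangent-line bound holds at $s_0=\xi_i^2$ iff $s=s_0$, i.e. $\zeta_i^2=\xi_i^2$; because $\xi_i\ge 0$, this is equivalent to $|\zeta_i|=\xi_i$. Hence $\varphi(\cdot,\xi_i)$ touches $p$ exactly on $\{|\zeta_i|=\xi_i\}$, establishing tangency there.

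The proof has essentially no analytic obstacle; the work is organizational (bookkeeping the constants $r_i,s_i,t_i$ and the reparametrization $A,\gamma$). The two points I would treat with a little care are: (i) the ``only if'' direction of the equality statement, which requires that $h$ not be affine on any subinterval of $\mathbb{R}^{+}$ — I would either invoke strict convexity directly or verify it for each instance in Table~\ref{tab:ssg-types} (the Laplace choice $h(t)=-\sqrt t$, the Student's-$t$ choice $h(t)=-\tfrac{\nu+1}{2}\log(1+t/\nu)$, and the Bernoulli-type choice $h(t)=-\log[2\cosh(\sqrt t/2)]$ are all strictly convex on $\mathbb{R}^{+}$); and (ii) the boundary case $\xi_i=0$, where $A(0)=h'(0)$ and $\gamma(0)=h(0)$ should be read via the one-sided limit $h'(0^{+})$, which exists by monotonicity and convexity of $h$. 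Note also that the decreasing-ness of $h$ is not needed for the bound itself — only $t_i>0$ and convexity — though it is used elsewhere in the $\tssg$ construction.
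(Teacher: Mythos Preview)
Your proof is correct and follows essentially the same approach as the paper: apply the supporting-hyperplane inequality for the convex function $h$ at the point $\xi_i^2$, then substitute into the exponent of the likelihood and identify $A$ and $\gamma$. Your treatment is in fact slightly more careful than the paper's, which asserts the ``only if'' direction without explicitly noting that strict convexity of $h$ is what rules out equality away from $|\zeta_i|=\xi_i$.
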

\begin{proof}
Since $h$ is convex, the tangent minorization in~\eqref{eq:general-minorizer} follows from the supporting hyperplane inequality, $h(\zeta_i^2)\ge h(\xi_i^2)+h'(\xi_i^2)(\zeta_i^2-\xi_i^2)$, with equality if and only if $|\zeta_i|=\xi_i$. 
\end{proof}
Our $\tssg$ framework entails three key components: (i) {tangent minorization} of the likelihood via Proposition \ref{lemma:tangent-lower-bound}, (ii) conjugate prior specification for the resulting minorized likelihood, and (iii) {optimization} of the ensuing variational posterior with respect to the variational parameters $\{\xi_i\}_{i \in [n]}$. We now develop these components for the two main $\ssg$ types, noting that the construction can be adapted to any $\ssg$ likelihood in Definition~\ref{def:SSG}.


For the Type I and Type II $\ssg$ likelihoods in \eqref{eq:SSG-typeI-likelihood}–\eqref{eq:SSG-typeII-likelihood}, the tangent minorizers are:
\begin{align}\label{eq:minorizers-type-I-II}
    \varphi(y_i \mid \mathbf{x}_i, \theta, \xi_i) = \begin{cases}
        \tau \exp\left\{A(\xi_i)\tau^2(y_i - \mathbf{x}_i^{\top}\beta)^2 + \gamma(\xi_i)\right\}, & \text{Type I $\ssg$},\\
        \exp\left\{\left(a_i-\frac{b_i}{2}\right)\mathbf{x}_i^{\top}\beta + b_i A(\xi_i)(\mathbf{x}_i^{\top}\beta)^2 + b_i\gamma(\xi_i)\right\}, & \text{Type II $\ssg$},
    \end{cases}
\end{align}
where $\theta = (\beta^{\top}, \tau^2)^{\top}$ and $\theta = \beta$ for Type I and Type II $\ssg$ likelihoods, respectively; along with $A(t)$ and $\gamma(t)$ as defined in Proposition \ref{lemma:tangent-lower-bound}.

The tangent minorizer $\varphi(y_i\mid \mathbf x_i, \theta, \xi_i)$ in~\eqref{eq:general-minorizer} is an adaptive exponential-quadratic approximation in $\zeta_i$ to the true $\ssg$ likelihood in~\eqref{eq:SSG-defintion}. Unlike a simple Gaussian approximation to the response distribution~\citep{breslow1993approximate}, $\tssg$ continues to target the original non-Gaussian model and uses this local surrogate only to obtain tractable variational updates in $\xi_i$. To illustrate, under linear regression with Student's-$t$ errors, the true likelihood is replaced by a tangent minorizer that, although Gaussian in $\zeta_i$, has an adaptive curvature $-A(\xi_i) = (\nu+1){/2(\nu+\xi_i^2)}$, which decreases with $|\xi_i|$
and hence downweights large residuals. In this way, the surrogate remains locally faithful to the true likelihood near $\xi_i$ (see top left panel of Figure~\ref{fig:minorizer_plot}), while retaining the robustness of the Student's-$t$ model. The same intuition applies to other Type I $\ssg$ likelihoods, including Laplace regression with adaptive curvature $(2\xi_i)^{-1}$.
For Type II $\ssg$ likelihoods, the tangent minorizer likewise provides a local approximation around $\xi_i$ (see bottom left panel of Figure~\ref{fig:minorizer_plot}), but without replacing the count model by a Gaussian approximation; preserving mean-variance structure and discreteness of the true likelihood. This minorization induces the Jensen's gap:
\begin{align}
\label{eq:jensen's-gap}
\Delta_{2, i} := \log p(y_{i}\mid \mathbf{x}_i, \theta) - \log \varphi(y_i \mid \mathbf{x}_i, \theta, \xi_i) =  h(\zeta_i^2) - h(\xi_i^{2}) - h'(\xi_i^2)(\zeta_i^2 - \xi_i^2),
\end{align}
for $i\in [n]$, where $\zeta_i$ for Type I and Type II $\ssg$ likelihoods are as in Table~\ref{tab:ssg-types}. Thus, $\Delta_{2, i}$ is precisely the vertical discrepancy between the true $\log$-likelihood and its tangent minorizer. The right panels of Figure~\ref{fig:minorizer_plot} illustrate that near the optimized variational parameter $\xi^{\star}$, the tangent minorizer closely tracks the true $\log$-likelihood. A detailed empirical study of this gap is provided in Section \ref{subsec:jensen's-gap} of Supplementary Materials. 

{
\renewcommand{\baselinestretch}{1.0}\normalsize
\begin{figure}[!t]
    \centering
    \includegraphics[width=0.42\textwidth]{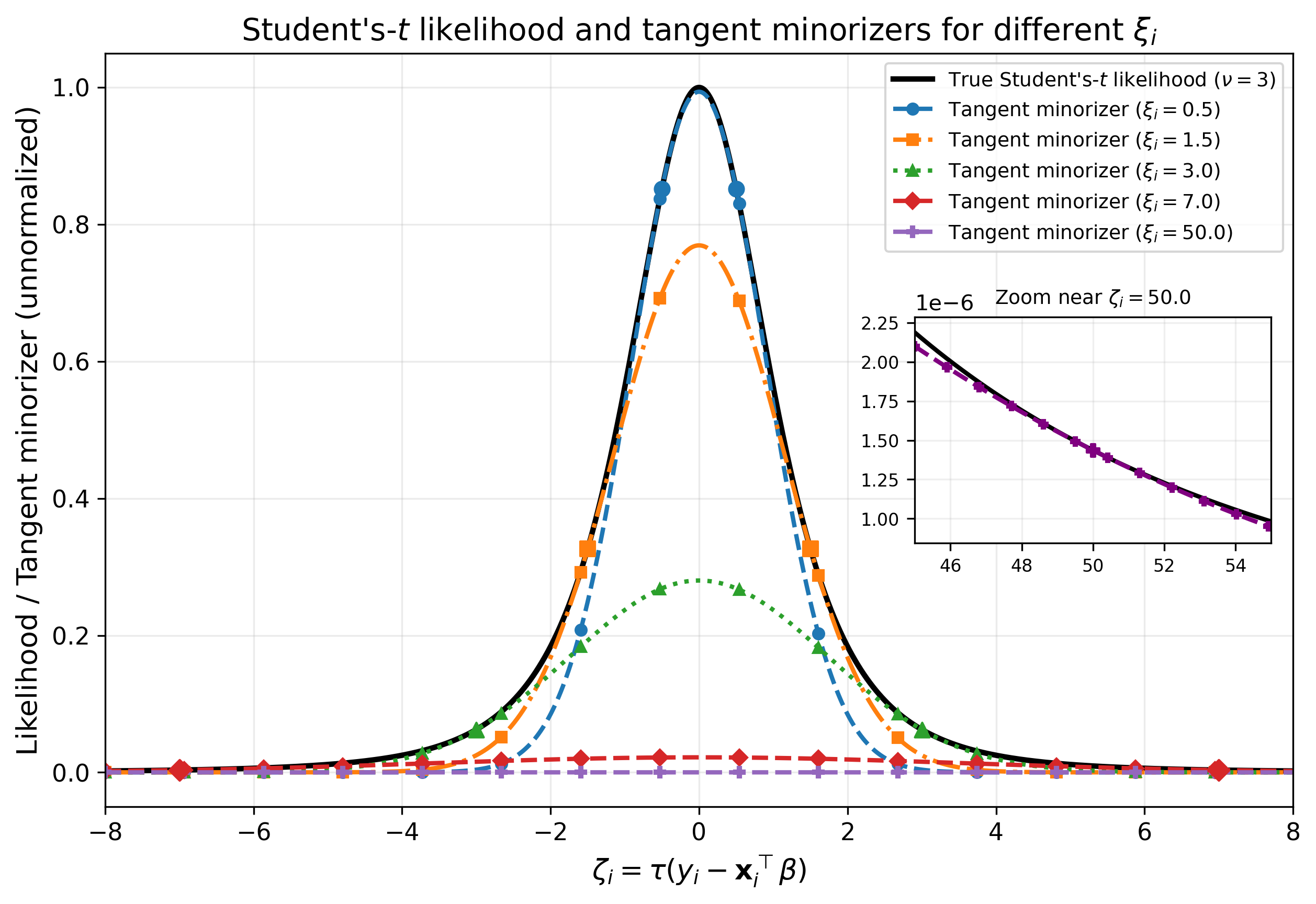}
    \hfill
    \includegraphics[width=0.52\textwidth]{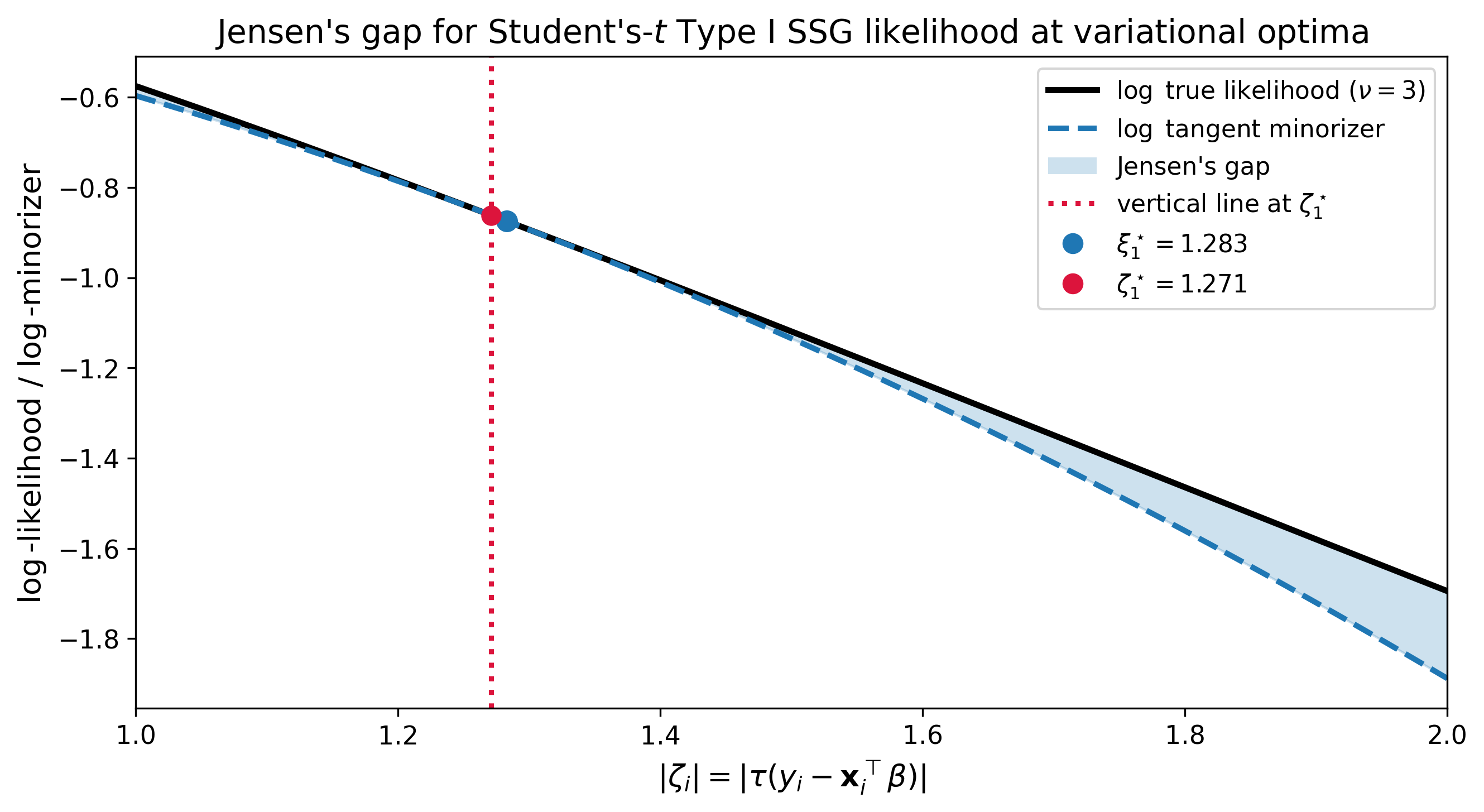}
    \includegraphics[width=0.44\textwidth]{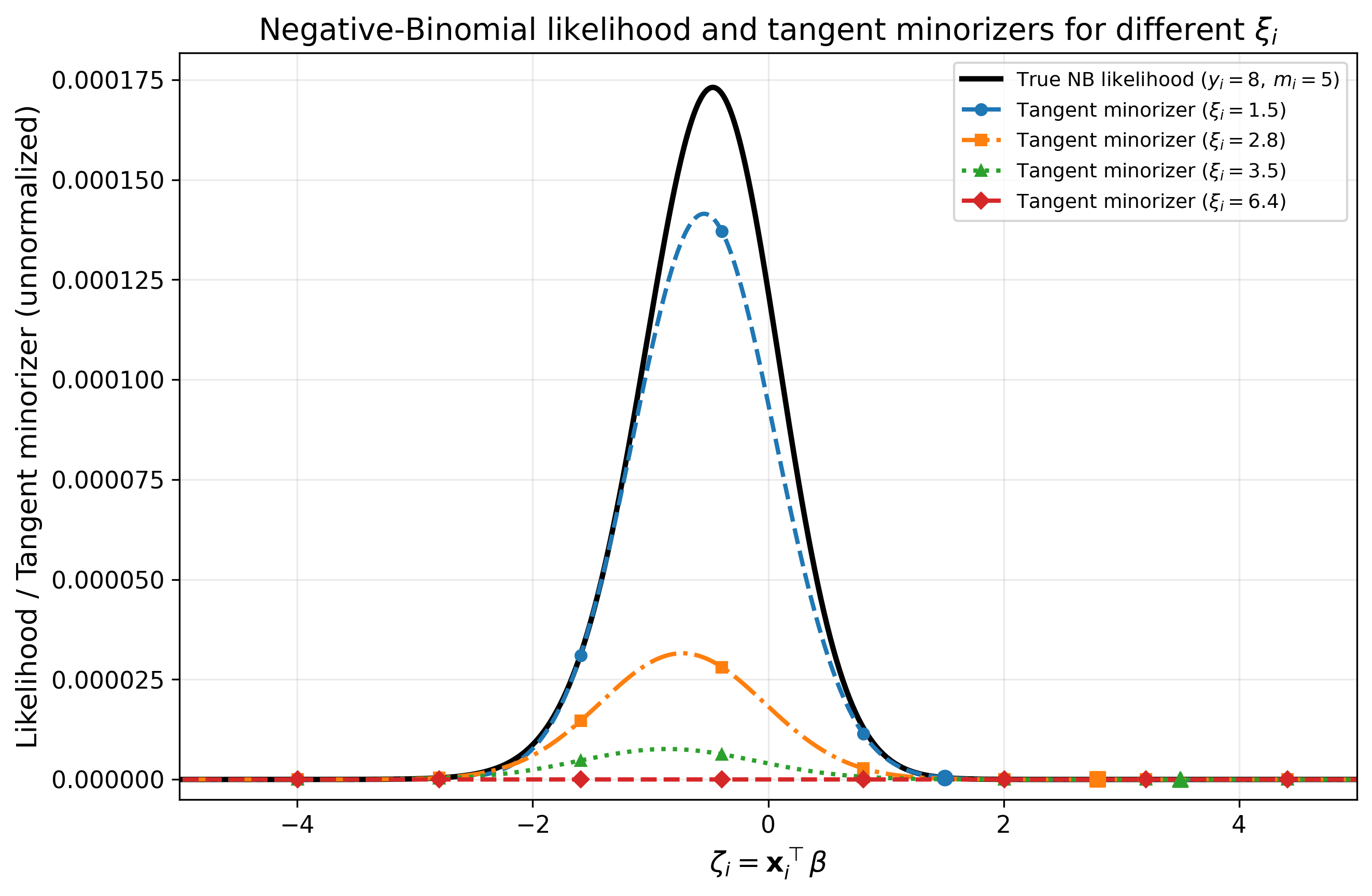}
    \hfill
    \includegraphics[width=0.52\textwidth]{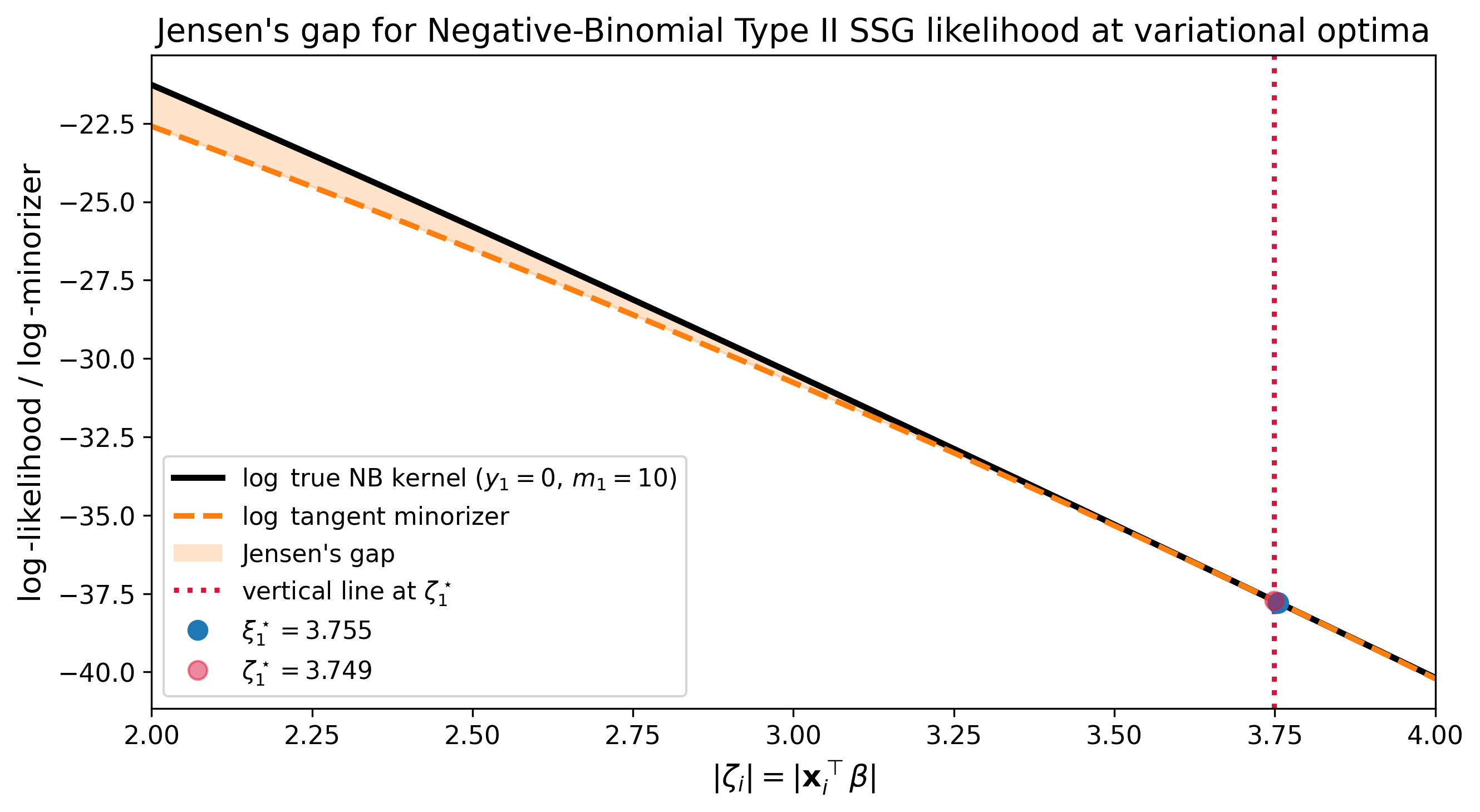}
    \caption{\footnotesize{Illustration of the tangent minorization mechanism underlying $\tssg$ for Type I and Type II $\ssg$ likelihoods. \emph{Left panels}: Adaptive exponential-quadratic surrogates constructed at different $\xi_i$'s. \emph{Right panels}: Jensen's gap near the optimized variational parameter $\xi_{1}^{\star}$ and the fitted latent value $\zeta_1^{\star}$.}}
    \label{fig:minorizer_plot}
\end{figure}
}

\subsection{Priors and \texorpdfstring{$\alpha$}{alpha}-Fractional Variational Posteriors}
\label{subsec:prior-posterior}

Consider the {true} $\alpha$-fractional likelihood, $p_{\alpha}(y \mid \mathbf{X}, \theta) := \{\prod_{i \in [n]} p(y_i \mid \mathbf{x}_i, \theta)\}^\alpha$ and its tangent minorizer, $\varphi_{\alpha}(y \mid \mathbf{X}, \theta, \xi) := \left\{\prod_{i\in [n]} \varphi(y_i \mid \mathbf{x}_i, \theta, \xi_i)\right\}^{\alpha}$, for any fixed choice of $\alpha \in (0, 1]$. Here, $\xi := (\xi_1, \ldots, \xi_n)^{\top}\in \mathbb{R}_{+}^{n}$ is the $n$-vector of variational parameters. Note, $\alpha=1$ recovers the standard tangent minorizer as in~\eqref{eq:general-minorizer}. With $\varphi_{\alpha}(y\mid \mathbf{X}, \theta, \xi)$ as the {working likelihood} and prior distribution $\pi(\theta)$ over the model parameters $\theta$, the tractable $\tssg$ $\alpha$-fractional variational posterior surrogate distribution is:
\begin{equation}\label{eq:alpha-frac-posterior}
\pi_{\alpha}(\theta \mid \mathcal{D}_n, \xi) := \frac{\varphi_{\alpha}(y \mid \mathbf{X}, \theta, \xi) \pi(\theta)}{\varphi_{\alpha}(y \mid \mathbf{X}, \xi)},
\qquad \varphi_{\alpha}(y \mid \mathbf{X}, \xi) := \int_{\theta \in \Theta} \varphi_{\alpha}(y \mid \mathbf{X}, \theta, \xi) \pi(\theta) d\theta.
\end{equation}
In \eqref{eq:alpha-frac-posterior}, $\varphi_{\alpha}(y \mid \mathbf{X}, \xi)$ admits a closed form if $\pi(\theta)$ is conjugate to the tangent minorizer $\varphi_{\alpha}(y \mid \mathbf{X}, \theta, \xi)$. 
For Type I $\ssg$ likelihoods, we consider the {multivariate Normal-Gamma} prior distribution, $\pi(\theta) \equiv \pi(\beta, \tau^2) \equiv \mathcal{NG}_p(\beta, \tau^2 \mid \mu, \Sigma, a,b)$, for which the $\alpha$-variational posterior distribution is $\pi_{\alpha}(\theta \mid \mathcal{D}_n, \xi) \equiv \mathcal{NG}_p(\beta, \tau^2 \mid \mu_{\alpha}(\xi), \Sigma_{\alpha}(\xi), a_{\alpha},b_{\alpha}(\xi))$ having hyperparameters:
\begin{equation}\label{eq:typeI-variational-hyperparameters}
\begin{split}
&\Sigma_{\alpha}(\xi) = \left[\Sigma^{-1} -2\alpha\mathbf{X}^{\top}\mathcal{A}(\xi)\mathbf{X}\right]^{-1},\quad \mu_{\alpha}(\xi) = \Sigma_{\alpha}(\xi)\left[\Sigma^{-1}\mu - 2\alpha\mathbf{X}^{\top}\mathcal{A}(\xi)y\right],\\
&a_{\alpha} = a + n\alpha,\quad b_{\alpha}(\xi) = b - 2\alpha y^{\top}\mathcal{A}(\xi)y + \mu^{\top} \Sigma^{-1} \mu - \mu_{\alpha}(\xi)^{\top} \Sigma^{-1}_{\alpha}(\xi) \mu_{\alpha}(\xi),
\end{split}    
\end{equation}
where $\mathcal{A}(\xi) := \mathrm{diag}(A(\xi_1), \ldots, A(\xi_n))$. Similarly, for Type II $\ssg$ likelihoods, a {multivariate Gaussian} prior, $\pi(\theta) \equiv \pi(\beta) \equiv \mathcal{N}_p(\beta \mid \mu, \Sigma)$, yields the $\alpha$-variational posterior distribution, $\pi_{\alpha}(\beta \mid \mathcal{D}_n, \xi) \equiv \mathcal{N}_p(\beta\mid \mu_{\alpha}(\xi), \Sigma_{\alpha}(\xi))$, having hyperparameters:
\begin{equation}\label{eq:typeII-variational-hyperparameters}
\begin{split}
&\Sigma^{-1}_{\alpha}(\xi) = \Sigma^{-1} - 2\alpha \mathbf{X}^{\top}\mathcal{A}(\xi)\mathrm{diag}\{\mathbf{b}\}\mathbf{X},\quad
\mu_{\alpha}(\xi) = \Sigma_{\alpha}(\xi)\left[\Sigma^{-1}\mu + \alpha \mathbf{X}^{\top}\left(\mathbf{a} - \frac{\mathbf{b}}{2}\right)\right],
\end{split}    
\end{equation}
where $\mathbf{a} := (a_1, \dots, a_n)^{\top}$, $\mathbf{b} := (b_1, \dots, b_n)^{\top}$, and $\mathcal{A}(\xi)$ has the same form as in \eqref{eq:typeI-variational-hyperparameters}.

Customarily, VI is performed by maximizing the evidence lower bound ($\mathsf{ELBO}$)~\citep{Blei-VI-review-JASA}, $\mathcal{L}_{\mathrm{true}}(q) := \int_{\theta\in \Theta}\log \frac{p_{\alpha}(y\mid \mathbf X, \theta)\pi(\theta)}{q(\theta)}q(\theta)d\theta$.
We treat $\varphi_{\alpha}(y\mid \mathbf{X}, \theta, \xi)$ as the working likelihood, analogously yielding the $\tssg$ $\mathsf{ELBO}$ as:
\begin{align}\label{eq:ELBO-original}
    \mathcal{L}(q, \xi) := \int_{\theta \in \Theta}\log \frac{\varphi_{\alpha}(y\mid \mathbf{X}, \theta, \xi)\pi(\theta)}{q(\theta)}q(\theta)d\theta\;\leq\;\mathcal{L}_{\mathrm{true}}(q).
\end{align}
For any fixed $\xi \in \mathbb{R}_{+}^{n}$, 
$\mathcal{L}(q, \xi)$ is maximized by the {exact} variational posterior $q_{\xi} \equiv \pi_{\alpha}(\cdot \mid \mathcal{D}_n, \xi)$ over the space of all densities $\mathcal{P}_{\Theta}$ supported on $\Theta$.
Substituting $q \equiv q_{\xi}$ in \eqref{eq:ELBO-original} reduces $\mathcal{L}(q, \xi)$ to the profile $\tssg$ $\mathsf{ELBO}$ $\mathsf{L}(\xi) := \mathcal{L}(q_{\xi}, \xi) = \log\varphi_{\alpha}(y\mid \mathbf{X}, \xi)$, which is maximized with respect to $\xi$ to obtain the optimal variational parameter $\xi^{\star} \in \mathbb{R}_{+}^{n}$.
The explicit forms of $\mathsf{L}(\xi)$ (up to additive constants) for Type I and Type II $\ssg$ likelihoods are:
\begin{equation}\label{eq:ELBO-general}
\begin{split}
\mathsf{L}(\xi) = \begin{cases}
        -\frac{a_{\alpha}}{2}\log b_{\alpha}(\xi) + \frac{1}{2}\log |\Sigma_{\alpha}(\xi)| + \alpha\sum_{i\in[n]}\gamma(\xi_i), & \text{Type I $\ssg$},\\
        \frac{1}{2}\mu_{\alpha}(\xi)^{\top}\Sigma_{\alpha}^{-1}(\xi) \mu_{\alpha}(\xi) + \frac{1}{2}\log |\Sigma_{\alpha}(\xi)| + \alpha\sum_{i\in[n]}b_i\gamma(\xi_i), & \text{Type II $\ssg$}.
    \end{cases}
\end{split}
\end{equation}
Before presenting the optimization algorithm for $\mathsf{L}(\xi)$ in Section~\ref{subsec:tavie-algo}, we note that replacing $\mathcal L_{\mathrm{true}}(q)$ with $\mathcal L(q,\xi)$, evaluated at $q\equiv q_\xi$, introduces an additional approximation cost given by the $q_\xi$-average of the total Jensen’s gap $\Delta_2=\sum_{i\in[n]}\Delta_{2,i}$. Thus, beyond the usual variational discrepancy (Kullback-Leibler (KL) divergence) between the variational and true $\alpha$-fractional posteriors,
$\tssg$ also incurs a minorization-induced discrepancy arising from the use of the tangent surrogate in the $\mathsf{ELBO}$. These two sources of approximation error are examined empirically in Sections~\ref{subsec:ELBO-gap}-\ref{subsec:empirical-true-variational-posterior} of Supplementary Materials.

\subsection{The \texorpdfstring{$\tssg$}{TAVIE-SSG} EM Algorithm}\label{subsec:tavie-algo}

Although direct optimization of $\mathsf{L}(\xi)$ over $\xi$ is feasible, it is typically nonconvex incurring $\mathcal{O}(n^2)$ computational cost per iteration, thus limiting the scalability of our framework even for moderate $n$. To mitigate this issue, a more principled and computationally efficient approach emerges by reformulating this task as a maximum likelihood estimation problem in $\xi$, where $\theta$ is treated as a latent variable. This perspective naturally leads to an elegant {Expectation–Maximization} (EM) procedure~\citep{EM-Dempster-Laird-Rubin}, outlined as follows.

\textbf{Complete data likelihood and surrogate construction}. Let $\mathbb{E}_{\xi}$ denote the expectation with respect to the variational posterior distribution, $\pi_{\alpha}(\theta \mid \mathcal{D}_n, \xi)$, for any $\xi \in \mathbb{R}^n_{+}$. Then, the general EM surrogate function for the $(l+1)$th step is:
\begin{align}\label{eq:EM-surrogate}
\begin{split}
\mathcal{Q}(\xi^{\dagger} \mid \xi^{(l)}) 
&:= \mathbb{E}_{\xi^{(l)}} \left[\log \pi(\theta)\right] + \alpha \sum_{i\in[n]} \mathbb{E}_{\xi^{(l)}} \left[\log \varphi(y_i \mid \mathbf{x}_i, \theta, \xi^{\dagger}_i) \right] \\
&= \alpha \sum_{i\in[n]} t_i\left\{\,A(\xi_i^{\dagger}) \kappa_i(\xi^{(l)}) + \gamma(\xi_i^{\dagger})\right\} + \mathfrak{C}(\xi^{(l)})
\end{split}
\end{align}
where $\mathfrak{C}(\xi^{(l)})$ collects terms free of $\xi^{\dagger}$ and $\kappa_i(\xi) := \mathbb{E}_{\xi} \left[\zeta_i^2\right]$ is given by:
\begin{align}\label{eq:kappa-Type-I-II}
    \kappa_i(\xi) 
    = \begin{cases}
        \mathbf{x}_i^{\top}\Sigma_{\alpha}(\xi) \mathbf{x}_i + \frac{a_\alpha}{b_{\alpha}(\xi)}\left(y_i - \mathbf{x}_{i}^{\top}\mu_{\alpha}(\xi)\right)^2, &\text{Type I } \mathsf{SSG},\\
        \mathbf{x}_i^{\top}\Sigma_{\alpha}(\xi)\mathbf{x}_i + \left(\mathbf{x}_i^{\top}\mu_{\alpha}(\xi)\right)^{2}, & \text{Type II } \mathsf{SSG},
    \end{cases}
\end{align}
for $i\in [n]$.
Next, we present the optimization of $\mathcal{Q}(\xi^{\dagger}\mid \xi^{(l)})$ in \eqref{eq:EM-surrogate} with respect to $\xi^{\dagger}$.

\textbf{Maximization of the surrogate}. Since, the surrogate function in \eqref{eq:EM-surrogate} splits into a {coordinate-wise sum} over $i\in [n]$, the iterative EM update for the $(l+1)$th step involves $n$ {univariate} optimizations given by:
\begin{equation}
\label{eq:TAVIE-xi-update}
{\xi}_i^{(l+1)}
= \arg\max_{{\xi}_i^{\dagger} > 0}\;\mathbb{E}_{\xi^{(l)}} \left[\log \varphi(y_i \mid \mathbf{x}_i, \theta, {\xi}_i^{\dagger}) \right] = \sqrt{\kappa_i(\xi^{(l)})},\quad i\in [n],
\end{equation}
where the last equality in \eqref{eq:TAVIE-xi-update} above follows from $\gamma'(t) = -t^2 A'(t)$ for $t\in \mathbb{R}^{+}$, which uses the definitions of $\gamma(t)$ and $A(t)$ in Proposition \ref{lemma:tangent-lower-bound}.
This splitting of the multivariate optimization step into $n$ univariate optimization problems with simple closed form solutions in \eqref{eq:TAVIE-xi-update} makes the $\xi$-update linear in $n$ and embarrassingly parallelizable. Each $\tssg$ iteration, however, also updates the variational posterior hyperparameters, yielding an overall per-iteration complexity $\mathcal{O}(np^{2} + p^{3})$; see Section \ref{sec:time-complexity-analysis-TAVIE-SSG} of Supplementary Materials for details. Thus, $\tssg$ is computationally efficient in large $n$ and small-to-moderate $p$ settings, its primary target regime, as evidenced by the empirical results in Section \ref{sec:TAVIE-SSG-applications}.
We conclude by formalizing the $\tssg$ EM algorithm in Algorithm \ref{alg:tavie-em}.
{
\renewcommand{\baselinestretch}{1.0}\normalsize
\begin{algorithm}[t!]
\caption{The $\tssg$ EM Algorithm}
\label{alg:tavie-em}
\DontPrintSemicolon

\KwIn{Data $\mathcal{D}_n$, prior hyperparameters, tempering parameter $\alpha$, tolerance $\texttt{tol}$.}
\KwOut{Variational parameters $\xi^\star$ and variational posterior hyperparameters.}

\textbf{Initialize}: Set $t \gets 0$ and initialize $\xi^{(0)} \in \mathbb{R}_{+}^{n}$. \;

\Repeat{$\lVert \xi^{(t)} - \xi^{(t-1)} \rVert_2 \le$ \texttt{tol}}{
  \tcc{Update variational posterior hyperparameters}
  \uIf{Type I $\ssg$ likelihoods}{
    update $(\mu_{\alpha}(\xi^{(t)}),\, \Sigma_{\alpha}(\xi^{(t)}),\, a_{\alpha},\, b_{\alpha}(\xi^{(t)}))$ via~\eqref{eq:typeI-variational-hyperparameters}\;
  }
  \ElseIf{Type II $\mathsf{SSG}$ likelihoods}{
    update $(\mu_{\alpha}(\xi^{(t)}),\, \Sigma_{\alpha}(\xi^{(t)}))$ via~\eqref{eq:typeII-variational-hyperparameters}\;
  }

  \tcc{Update variational parameters (coordinate-wise)}
  \For{$i \in [n]$}{
    $\xi_i^{(t+1)} \gets \sqrt{\kappa_i(\xi^{(t)})}$ \quad where $\kappa_i(\xi)=\mathbb{E}_{\xi} \left[\zeta_i^2\right]$ is defined in~\eqref{eq:kappa-Type-I-II}\;
  }

  $t \gets t+1$\;
}

\end{algorithm}
}

\subsection{Calibration of \texorpdfstring{$\alpha$}{alpha}}
\label{subsec:calibration-alpha}

The likelihood tempering parameter $\alpha\in(0,1]$ in Algorithm~\ref{alg:tavie-em} controls the strength of posterior updating in $\tssg$. As shown in the sensitivity analysis in Section~\ref{sec:sensitivity-alpha} of Supplementary Materials, larger $\alpha$ yields sharper posteriors and more accurate point estimates of $\theta$, whereas smaller $\alpha$ produces more diffuse posteriors and conservative uncertainty quantification (UQ), inducing an accuracy–coverage trade-off.

Accordingly, our practical recommendation is to choose $\alpha$ through coverage calibration~\citep{SyringMartin2019}, so as to balance posterior concentration against interval estimation. Concretely, we select $\alpha$ so that the resulting credible intervals attain the target nominal frequentist coverage level; see Section~\ref{sec:calibration-alpha} of Supplementary Materials for illustration. More computationally involved approaches with improved UQ for variational methods include the variational weighted likelihood bootstrap of~\cite{han2019statisticalinferencemeanfieldvariational}.

\section{Theoretical Results}\label{section:theory}

Theoretical investigation of VI typically proceeds along two complementary directions: 
(i) convergence properties of the iterative optimization algorithm 
\citep{ZhangZhou2017, Pati-Annals-CAVI} and 
(ii) frequentist properties of the resulting posterior or point estimates, quantified through the variational risk~\citep{Wang03072019,AlquierRidgway2020}. Building on this general perspective, we develop the theoretical underpinnings of $\tssg$. Proofs of all theoretical results are provided in Sections~\ref{app:TAVIE-convergence-proof} and \ref{app:variational-risk-bounds} of Supplementary Materials.

\subsection{Convergence Guarantee for the \texorpdfstring{$\tssg$}{TAVIE-SSG} EM Algorithm}\label{subsec:algorithmic-convergence}

A key early theoretical contribution 
by~\cite{wu1983convergence} established the monotonicity of the EM-generated sequence of likelihood values and stationarity of its limit points under mild regularity conditions.
Modern convergence theory of EM algorithms largely follows two routes. One line studies local contractions of the EM operator near a well-separated maximizer, typically requiring strong curvature conditions such as local strong concavity of the surrogate $\mathcal Q$, or equivalently a Polyak–{\L}ojasiewicz (PL) inequality~\citep{Polyak1963,lojasiewicz1963propriete} for the population likelihood. These assumptions yield linear convergence rates and underlie the analyses of \cite{balakrishnan2017statistical} and related refinements.  In contrast, the $\tssg$ variational parameter $\xi\in \mathbb{R}_{+}^{n}$ behave as latent variables from an algorithmic perspective precluding population-level analyses.


A second line of work views EM as block-coordinate or alternating
optimization and leverages analytic or semi-algebraic structure of the
objective. Here, the Kurdyka-{\L}ojasiewicz (K\L) property~\citep{lojasiewicz1963propriete,kurdyka1998gradients} is central in proving convergence of EM-type iterates under substantially
weaker curvature conditions. More recent geometric analyses recast EM as alternating maximization over distributional and parametric spaces, using functional inequalities ($\log$-Sobolev or Wasserstein contraction), and establish convergence under global joint PL conditions for the EM objective~\citep{caprio2025fast}.

In this article, we adopt the latter strategy by first 
observing that the EM sequence can be rephrased as an alternating maximization problem:
\begin{eqnarray*}
\max_{q\in\mathcal{P}_{\Theta},\;\xi\in\mathbb{R}^{n}_{+}}\;\mathcal L(q,\xi),
\quad 
\mathsf{L}(\xi)=\max_{q\in\mathcal{P}_{\Theta}}\mathcal L(q,\xi)=\mathcal L(q_\xi,\xi),
\end{eqnarray*}
where 
for fixed $\xi \in \mathbb{R}^{n}_{+}$, $q_\xi \in \mathcal{P}_{\Theta}$ denotes a (locally) unique maximizer. Under mild differentiability and uniqueness assumptions on the inner maximizer, stationarity of the joint objective in $(q,\xi)$ is equivalent to stationarity of the profile objective $\mathsf{L}$, as solidified in Proposition \ref{proposition:stationarity-equivalence} 
below.
\begin{proposition}[Stationarity equivalence]
\label{proposition:stationarity-equivalence}
Let $\mathcal L(q,\xi)$ be differentiable in $\xi$ and suppose that for each $\xi$ in an open neighborhood of $\xi^\star$, the inner maximizer $q_\xi=\arg\max_{q\in\mathcal{P}_{\Theta}}\mathcal L(q,\xi)$ exists, is unique and depends smoothly (or continuously) on $\xi$ so that the envelope theorem~\citep{Danskin1967} applies.  Then the following are equivalent:
\begin{enumerate}
  \item $(q^\star,\xi^\star)$ satisfies the first-order necessary conditions for a stationary point of $\mathcal L(q,\xi)$, i.e., the functional derivative in the $q$-direction vanishes at $q^\star$ and $\nabla_\xi\mathcal L(q^\star,\xi^\star)=0$.
  \item $\xi^\star$ satisfies $\nabla_{\xi} \mathsf{L}(\xi^\star)=0$ and $q^\star=q_{\xi^\star}$.
\end{enumerate}
\end{proposition}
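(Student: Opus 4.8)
## Proof Proposal

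The plan is to prove the equivalence by exploiting the envelope theorem to relate the gradient of the profile objective $\mathsf{L}$ to the partial gradient of the joint objective $\mathcal{L}$ at the inner maximizer. The two directions are essentially symmetric once the envelope identity is in place, so I would organize the argument around establishing that identity first.

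\textbf{Step 1 (Envelope identity).} First I would fix $\xi$ in the given open neighborhood of $\xi^\star$ and use the hypotheses---existence, uniqueness, and smooth (or continuous) dependence of $q_\xi$ on $\xi$---to invoke the envelope theorem of \cite{Danskin1967}. Since $q_\xi$ maximizes $q \mapsto \mathcal{L}(q,\xi)$ over $\mathcal{P}_\Theta$, the functional derivative of $\mathcal{L}(\cdot,\xi)$ in the $q$-direction vanishes at $q_\xi$ (subject to the density constraint, handled by a Lagrange term that is constant along the constraint manifold), and the envelope theorem gives $\nabla_\xi \mathsf{L}(\xi) = \nabla_\xi \mathcal{L}(q,\xi)\big|_{q=q_\xi}$. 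In our concrete setting $q_\xi \equiv \pi_\alpha(\cdot \mid \mathcal{D}_n,\xi)$ and $\mathsf{L}(\xi) = \log \varphi_\alpha(y \mid \mathbf{X},\xi)$, so differentiability in $\xi$ follows from the explicit forms in \eqref{eq:ELBO-general}, and the smoothness of $\xi \mapsto q_\xi$ is transparent from the closed-form hyperparameter maps in \eqref{eq:typeI-variational-hyperparameters} and \eqref{eq:typeII-variational-hyperparameters}.

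\textbf{Step 2 ((1) $\Rightarrow$ (2)).} Suppose $(q^\star,\xi^\star)$ satisfies the first-order conditions for $\mathcal{L}$: the functional derivative in the $q$-direction vanishes at $q^\star$ and $\nabla_\xi \mathcal{L}(q^\star,\xi^\star) = 0$. Stationarity in the $q$-direction together with strict concavity/uniqueness of the inner problem forces $q^\star = q_{\xi^\star}$ (here I would note that $\mathcal{L}(\cdot,\xi)$ is the negative KL divergence to $\pi_\alpha(\cdot\mid\mathcal{D}_n,\xi)$ up to an additive constant, so the unique stationary point of the inner problem is the maximizer). Substituting into the envelope identity from Step 1 yields $\nabla_\xi \mathsf{L}(\xi^\star) = \nabla_\xi \mathcal{L}(q_{\xi^\star},\xi^\star) = \nabla_\xi \mathcal{L}(q^\star,\xi^\star) = 0$, which is exactly (2).

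\textbf{Step 3 ((2) $\Rightarrow$ (1)).} Conversely, assume $\nabla_\xi \mathsf{L}(\xi^\star) = 0$ and $q^\star = q_{\xi^\star}$. By definition of $q_{\xi^\star}$ as the inner maximizer, the functional derivative of $\mathcal{L}(\cdot,\xi^\star)$ in the $q$-direction vanishes at $q^\star$. By the envelope identity, $\nabla_\xi \mathcal{L}(q^\star,\xi^\star) = \nabla_\xi \mathcal{L}(q_{\xi^\star},\xi^\star) = \nabla_\xi \mathsf{L}(\xi^\star) = 0$. Hence both first-order conditions for $\mathcal{L}$ hold at $(q^\star,\xi^\star)$, giving (1) and completing the equivalence.

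\textbf{Main obstacle.} The delicate point is the rigorous application of the envelope theorem in the infinite-dimensional $q$-component: one must justify interchanging the $\xi$-derivative with the maximization over the density space $\mathcal{P}_\Theta$, and handle the normalization constraint on $q$ carefully (so that the "functional derivative in the $q$-direction vanishes" is interpreted modulo the constraint). In the general abstract statement this is exactly what the smoothness/continuity hypothesis on $q_\xi$ and the cited envelope theorem are assumed to supply; in the $\tssg$ application it is harmless because $q_\xi$ is an explicit Normal-Gamma (resp. Gaussian) family with $\xi$-analytic hyperparameters, so $\mathsf{L}(\xi)$ is itself given in closed form by \eqref{eq:ELBO-general} and all derivative interchanges are elementary. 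I would therefore present Steps 1--3 at the level of the abstract hypotheses and remark that the hypotheses are automatically met for the two $\ssg$ classes considered.
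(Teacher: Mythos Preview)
Your proposal is correct and follows essentially the same approach as the paper: establish the envelope identity $\nabla_\xi \mathsf{L}(\xi)=\nabla_\xi\mathcal L(q_\xi,\xi)$ via Danskin's theorem, then use it in both directions together with the fact that $q$-stationarity under uniqueness forces $q^\star=q_{\xi^\star}$. Your version is more detailed (e.g., invoking the KL/strict-concavity structure to pin down $q^\star=q_{\xi^\star}$ in Step~2 and discussing the constraint), but the logical skeleton is identical to the paper's proof.
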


Note that, a global joint PL condition for $(q,\xi)\mapsto -\mathcal L(q,\xi)$ is typically too strong and hard to verify when $\mathsf{L}$ depends on $\xi$ in a highly nonconvex way (for instance, see Figure~\ref{fig:ELBO_landscape}).
Instead, if the profile objective $\mathsf{L}$ is real-analytic on a compact set containing the iterates,
then it satisfies the K\L{} property at every point in that set.  Combining monotone ascent of the alternating updates, boundedness of iterates, and the local K\L{} property, Theorem \ref{theorem:convergence} yields convergence of the iterates to a critical point of $\mathsf{L}$ (hence to a stationary pair $(q_{\xi^{\star}}, \xi^{\star})$ of the alternating maximization problem), even when a global PL inequality is unavailable.

{
\renewcommand{\baselinestretch}{1.0}\normalsize
\begin{figure}[t!]
    \centering

    \begin{subfigure}{0.49\linewidth}
        \centering
        \includegraphics[width=\linewidth]{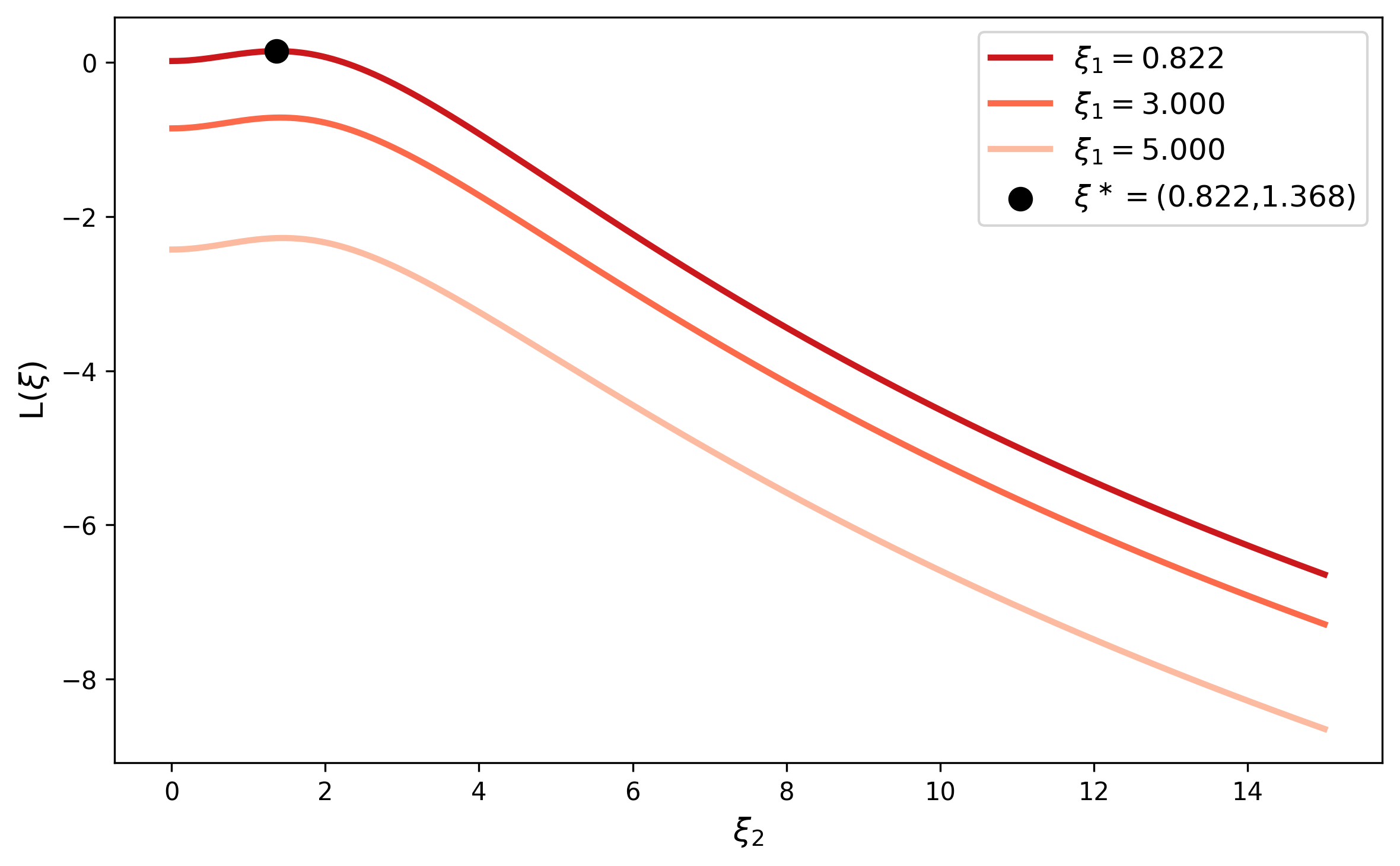}
    \end{subfigure}
    \hfill
    \begin{subfigure}{0.42\linewidth}
        \centering
        \includegraphics[width=\linewidth]{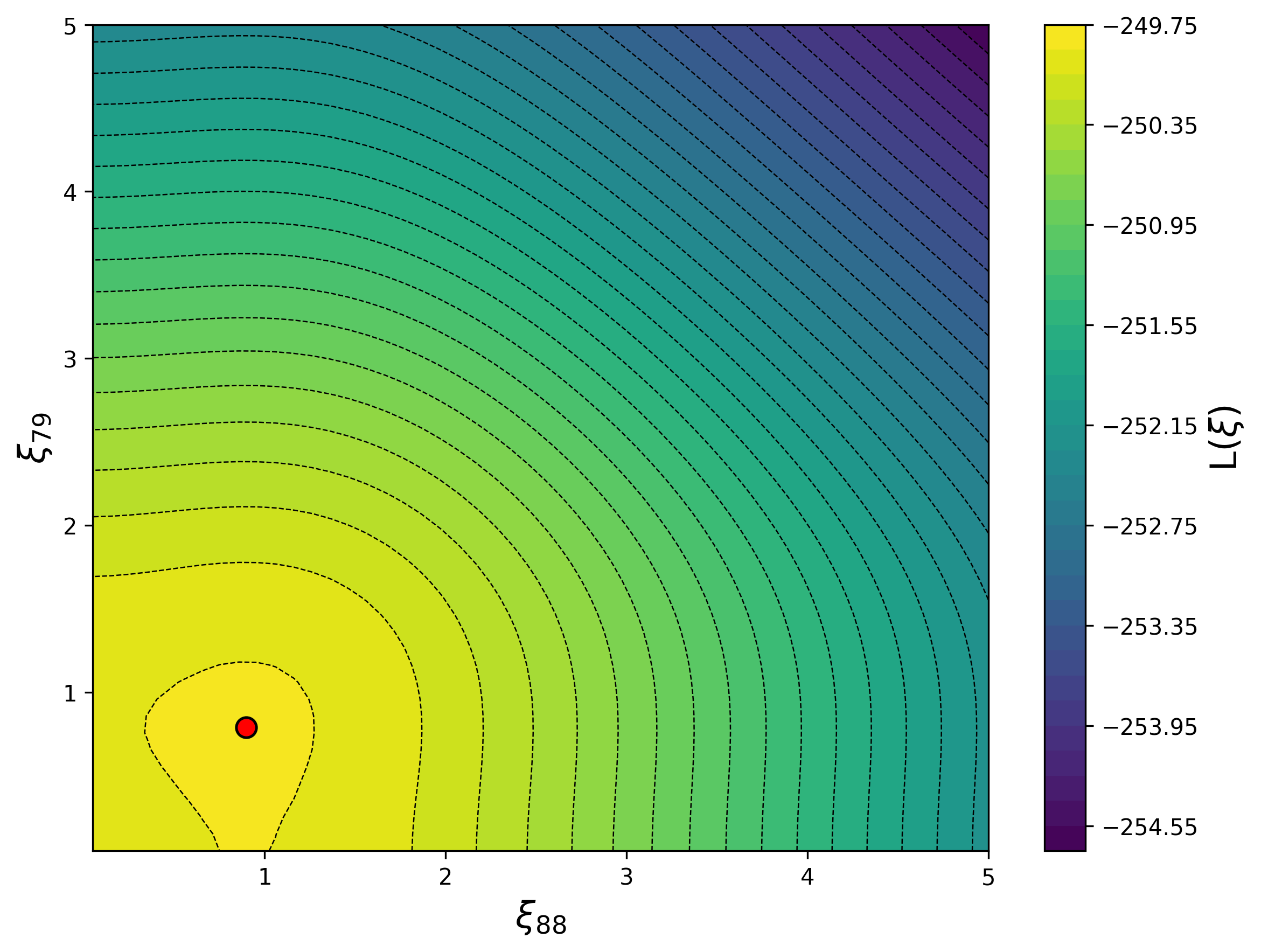}
    \end{subfigure}

    \caption{\footnotesize{Overview of the nonconvex landscape of $\mathsf{L}(\xi)$. Data $\mathcal{D}_n$ is generated from Student's-$t$ $\ssg$ likelihood ($\nu=5, \tau^2=3$) with $\beta\sim \mathcal{N}_{p+1}(0, 0.5^2 I_{p+1})$ and covariates $x_{i1}=1, x_{ij}\sim \mathcal{N}_{1}(0, 1)$, i.i.d. for $j=2, \ldots,p+1$. \emph{Left}: $\tssg$ converges in $25$ iterations for $(n, p) = (2, 2)$ with optimal $\xi^{\star} = (0.822, 1.368)$. \emph{Right}: $\tssg$ converges in $66$ iterations for $(n, p) = (100, 50)$; the contour plot shows a randomly selected two-dimensional slice through the optimal iterate at $(\xi_{88}^{\star}, \xi_{79}^{\star}) = (0.897, 0.791)$ [red circle].}}
    \label{fig:ELBO_landscape}
\end{figure}
}

We begin by first introducing in Assumptions \ref{ass:1} and \ref{ass:2}, a set of standard regularity conditions on the $\ssg$ likelihoods as well as on the design matrix $\mathbf{X}$.
\begin{assumption}[Regularity on $h$]
\label{ass:1}
The function $h: \mathbb{R}^{+}_{0}\to \mathbb{R}$ in Definition \ref{def:SSG} satisfies the following regularity conditions: (i) $h$ is thrice differentiable on $\mathbb{R}^{+}$ and all derivatives up to order $3$ are continuous on $\mathbb{R}^{+}$; (ii) $h$ is decreasing with $\lim_{t \rightarrow \infty} h(t) = -\infty$; (iii) $h''(t) > 0$; and (iv) there exists a constant $K\in \mathbb{R}^{+}$ such that the function $t \mapsto h(t^2)$ is $K$-Lipschitz.
\end{assumption}

\begin{assumption}[Design regularity]
\label{ass:2}
The rows $\{\mathbf{x}_i: i \in [n]\}$ of $\mathbf{X}$ are non-zero.
\end{assumption}

Detailed discussion on the significance and validity of Assumptions \ref{ass:1} and \ref{ass:2} is provided in Section~\ref{sec:discussion-assumptions} of Supplementary Materials. We now state our main convergence result. 

\begin{theorem}[Convergence of the $\tssg$ EM algorithm]
\label{theorem:convergence}
Under Assumptions \ref{ass:1}(i)-(iii) and \ref{ass:2}, the sequence of $\tssg$ iterates $\{\xi^{(t)}: t\geq 0\}$ from Algorithm \ref{alg:tavie-em},
for both Type I and Type II $\ssg$ likelihoods, converges to a fixed-point $\xi^{\star} \in \mathbb{R}^{n}_{+}$ for any initialization $\xi^{(0)} \in \mathbb{R}_{+}^{n}$. Moreover, any such fixed-point is a critical point of $\mathsf{L}(\xi)$, i.e., $\nabla_{\xi}\mathsf{L}(\xi) = 0$.
\end{theorem}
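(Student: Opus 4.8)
The plan is to read Algorithm~\ref{alg:tavie-em} as an EM / alternating-maximization scheme for the profile objective $\mathsf{L}$ and to run it through the Kurdyka--\L ojasiewicz (K\L) convergence machinery for monotone sequences \citep{attouch2010proximal,bolte2014proximal}. Concretely, I would verify for the iterates $\{\xi^{(t)}\}$: (H1) \emph{sufficient ascent}, $\mathsf{L}(\xi^{(t+1)}) - \mathsf{L}(\xi^{(t)}) \ge a\lVert\xi^{(t+1)} - \xi^{(t)}\rVert_2^2$ for some $a>0$; (H2) a \emph{relative-error} bound, $\lVert\nabla_\xi\mathsf{L}(\xi^{(t+1)})\rVert_2 \le b\lVert\xi^{(t+1)} - \xi^{(t)}\rVert_2$; (H3) that $\{\xi^{(t)}\}$ is bounded, with $\mathsf{L}$ continuous along a convergent subsequence; together with the K\L\ property of $\mathsf{L}$ at cluster points. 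These four ingredients yield $\sum_t\lVert\xi^{(t+1)} - \xi^{(t)}\rVert_2 < \infty$, hence convergence of the whole sequence to a single limit $\xi^\star$; then (H2) forces $\nabla_\xi\mathsf{L}(\xi^\star) = 0$, and Proposition~\ref{proposition:stationarity-equivalence} identifies $(q_{\xi^\star},\xi^\star)$ as a stationary pair of the alternating problem.

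The first and most delicate step is to confine the iterates to a fixed compact subset of the open orthant $(\mathbb{R}^{+})^n$. Because $h$ is decreasing and convex, $A(\xi_i) = h'(\xi_i^2) \in [h'(0),0]$, so $\mathcal A(\xi) \preceq 0$ with $\lVert\mathcal A(\xi)\rVert \le |h'(0)|$ uniformly; consequently, for both $\ssg$ types, $\Sigma^{-1} \preceq \Sigma_\alpha^{-1}(\xi) \preceq \Sigma^{-1} + c_1\mathbf X^{\top}\mathbf X$ for a $\xi$-free constant $c_1$, which sandwiches $\Sigma_\alpha(\xi)$ uniformly, and completing the square in the conjugate update shows $b_\alpha(\xi) = b + (\text{a }\beta\text{-minimized nonnegative quadratic}) \ge b > 0$ for Type~I. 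Feeding these bounds into \eqref{eq:kappa-Type-I-II-main} gives uniform two-sided control $m_i \le \kappa_i(\xi) \le M_i$ with $m_i = \mathbf x_i^{\top}(\Sigma^{-1} + c_1\mathbf X^{\top}\mathbf X)^{-1}\mathbf x_i > 0$ by Assumption~\ref{ass:2} ($\mathbf x_i \neq 0$); hence each iterate $t\ge 1$ obeys $\xi_i^{(t)} = \sqrt{\kappa_i(\xi^{(t-1)})} \in [\sqrt{m_i},\sqrt{M_i}]$, so $\{\xi^{(t)}\}_{t\ge 1}$ lies in the compact box $\mathcal K := \prod_i[\sqrt{m_i},\sqrt{M_i}] \subset (\mathbb{R}^{+})^n$, bounded away from the boundary $\{\xi_i=0\}$. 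On $\mathcal K$, all derivatives of $h$ up to order three are continuous (Assumption~\ref{ass:1}(i)) and $h''>0$ (Assumption~\ref{ass:1}(iii)), so $A'(\xi_i) = 2\xi_i h''(\xi_i^2)$ is bounded between positive constants there --- the fact that powers (H1) and (H2).

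With the iterates localized, (H1) and (H2) are essentially bookkeeping. Each univariate surrogate $g_i(\xi_i^{\dagger}) = t_i\{A(\xi_i^{\dagger})\kappa_i(\xi^{(t)}) + \gamma(\xi_i^{\dagger})\}$ has $g_i'(\xi_i^{\dagger}) = t_i A'(\xi_i^{\dagger})\{\kappa_i(\xi^{(t)}) - (\xi_i^{\dagger})^2\}$ via $\gamma'(t) = -t^2 A'(t)$; since $(\xi_i^{(t+1)})^2 = \kappa_i(\xi^{(t)})$ and $A' \ge c_0 > 0$ on $\mathcal K$, this derivative is bounded below in magnitude by a positive multiple of $|\xi_i^{(t+1)} - \xi_i^{\dagger}|$ with the correct sign, so integration gives quadratic ascent $g_i(\xi_i^{(t+1)}) - g_i(\xi_i^{(t)})$ of order $|\xi_i^{(t+1)} - \xi_i^{(t)}|^2$; summing over $i$ and combining with the EM inequality $\mathsf{L}(\xi^{(t+1)}) - \mathsf{L}(\xi^{(t)}) \ge \mathcal Q(\xi^{(t+1)}\mid\xi^{(t)}) - \mathcal Q(\xi^{(t)}\mid\xi^{(t)})$ (the gap being $\mathrm{KL}(\pi_\alpha(\cdot\mid\mathcal D_n,\xi^{(t)})\,\|\,\pi_\alpha(\cdot\mid\mathcal D_n,\xi^{(t+1)})) \ge 0$) yields (H1). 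For (H2), the envelope theorem \citep{Danskin1967} gives $\partial_{\xi_i}\mathsf{L}(\xi) = \alpha t_i A'(\xi_i)\{\kappa_i(\xi) - \xi_i^2\}$; evaluating at $\xi^{(t+1)}$ and using $(\xi_i^{(t+1)})^2 = \kappa_i(\xi^{(t)})$ turns the bracket into $\kappa_i(\xi^{(t+1)}) - \kappa_i(\xi^{(t)})$, which is $\mathcal O(\lVert\xi^{(t+1)} - \xi^{(t)}\rVert_2)$ since $\kappa_i$ is $C^1$, hence Lipschitz, on $\mathcal K$. For (H3), compactness of $\mathcal K$ supplies a convergent subsequence and, by continuity of $\mathsf{L}$ and monotonicity of $\{\mathsf{L}(\xi^{(t)})\}$, every cluster point shares the same finite $\mathsf{L}$-value. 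Finally, $\mathsf{L}$ is real-analytic on a neighborhood of $\mathcal K$ for the Type~I and Type~II families --- it is assembled from $h$, the (analytic) inverse and determinant of the positive-definite matrix $\Sigma_\alpha^{-1}(\xi)$, and $\log$ --- so by \L ojasiewicz's gradient inequality it has the K\L\ property on $\mathcal K$. The abstract convergence result of \citep{attouch2010proximal,bolte2014proximal} (descent formulation, applied to $-\mathsf{L}$) then gives finite length and $\xi^{(t)}\to\xi^\star\in\mathcal K$; (H2) forces $\nabla_\xi\mathsf{L}(\xi^\star)=0$, and since $A'(\xi_i^\star)>0$ on $\mathcal K$ this is equivalent to $(\xi_i^\star)^2 = \kappa_i(\xi^\star)$ for all $i$, i.e., $\xi^\star$ is a fixed-point of the $\tssg$ update map. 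All constants above are independent of $\xi^{(0)}\in\mathbb{R}_{+,0}^n$, so the conclusion holds for arbitrary initialization.

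I expect the main obstacle to be the compactness step: establishing the uniform positive lower bound $\kappa_i(\xi)\ge m_i>0$ and the uniform positivity $b_\alpha(\xi)\ge b$ (Type~I), so that the iterates never drift toward the degenerate boundary $\{\xi_i=0\}$, on which $A'$ vanishes and the local strong-concavity and relative-error estimates collapse. A secondary subtlety is the analyticity/K\L\ verification, since it is precisely what upgrades the routine subsequential convergence available from (H1)--(H3) alone to convergence of the entire sequence.
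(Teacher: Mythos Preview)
Your proposal is correct and follows the same K\L--based strategy as the paper (sufficient ascent, relative-error bound, boundedness, and real-analyticity of $\mathsf{L}$ to invoke the finite-length argument of \cite{attouch2010proximal,bolte2014proximal}). The one genuine difference is the localization step. You trap \emph{all} iterates $t\ge 1$ in a single global box $\mathcal K=\prod_i[\sqrt{m_i},\sqrt{M_i}]$ by exploiting the uniform bound $A(\xi_i)=h'(\xi_i^2)\in[h'(0),0]$ (valid because Assumption~\ref{ass:1}(i) forces $h'$ continuous at $0$, hence finite there), which in turn gives uniform two-sided control of $\Sigma_\alpha(\xi)$, $\mu_\alpha(\xi)$, $b_\alpha(\xi)\ge b$, and thus of $\kappa_i(\xi)$. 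The paper instead proceeds locally: it first shows the upper level set $\{\mathsf{L}\ge \mathsf{L}(\xi^{(0)})\}$ is compact (using $\gamma(\xi_i)\to-\infty$), extracts cluster points, proves each is a fixed-point of $H(\xi)=(\sqrt{\kappa_1(\xi)},\ldots,\sqrt{\kappa_n(\xi)})$ and hence strictly positive, and only then establishes (H1)--(H2) on a box $[\delta(\xi^\star)/2,R]^n$ around that cluster point. Your global box is cleaner---it sidesteps the upper-level-set and cluster-point lemmas and makes all constants uniform from $t=1$---while the paper's local route is the one that would survive if $h'(0^+)=-\infty$ were allowed (it is not, under Assumption~\ref{ass:1}(i) as stated, so the two approaches are equally valid here).
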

In conjunction with Theorem \ref{theorem:convergence}, the real analyticity of $\mathsf{L}$ for Type I and II $\ssg$ likelihoods near its critical points ensures the K{\L} exponent property~\citep{KL-SIAM-1,bolte2014proximal}, i.e., there exists an exponent $\Omega \in (0, 1), c_{\mathrm{K\L{}}} > 0$, and a neighborhood $\mathcal{U}$ of $\xi^{\star}$, such that:
\begin{equation}\label{eq:KL-exponent}
    \lVert \nabla_{\xi}\mathsf{L}(\xi)\rVert_{2} \geq c_{\mathrm{K\L{}}}|\mathsf{L}(\xi^{\star}) - \mathsf{L}(\xi)|^{\Omega},\quad \text{for all }\xi\in \mathcal{U}.
\end{equation}
Contingent on $\Omega$ and $\mathcal{U}$, we formalize a local convergence rate result for the $\tssg$ iterates in Theorem \ref{theorem:convergence-rate} below.

\begin{theorem}[Convergence rate of the $\tssg$ EM algorithm]
\label{theorem:convergence-rate}
Define the value gap $s_t := \mathsf{L}(\xi^{\star}) - \mathsf{L}(\xi^{(t)})$. Then under Assumptions \ref{ass:1}(i)-(iii) and \ref{ass:2}, there exists $T\in \mathbb{N}$ and $C>0$, such that for all $t \geq T$, we have $\xi^{(t)}\in \mathcal{U}$ and:
\begin{enumerate}
    \item [i.] linear rate for $\Omega \in (0, \tfrac{1}{2}]$: there exists $q\in (0, 1)$, $s_t \leq C q^{t-T}$, and $\lVert \xi^{(t)} - \xi^{\star}\rVert_{2} \leq C q^{\tfrac{t-T}{2}}$.
    %
    \item [ii.] sub-linear rate for $\Omega \in (\tfrac{1}{2}, 1)$: $s_t \leq C(t-T)^{-\tfrac{1}{2\Omega -1}}$ and $\lVert \xi^{(t)}-\xi^{\star}\rVert_2 \leq C(t-T)^{-\tfrac{1-\Omega}{2\Omega-1}}$.
\end{enumerate}
\end{theorem}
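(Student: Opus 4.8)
The plan is to run the standard Kurdyka--\L ojasiewicz (K\L) rate machinery on the profile objective $\mathsf{L}$ along the tail of the iterate sequence. Three ingredients are needed: a \emph{sufficient-ascent} inequality, a \emph{relative-error} (gradient-to-step) inequality, and the K\L-exponent bound \eqref{eq:KL-exponent}; these combine into a scalar recursion for the value gap $s_t$, which is then solved in each exponent regime and lifted to an iterate rate via a path-length argument. As a setup step I would use Theorem \ref{theorem:convergence} to obtain $\xi^{(t)}\to\xi^{\star}$ with $\mathsf{L}(\xi^{(t)})\uparrow\mathsf{L}(\xi^{\star})$, so $s_t\downarrow 0$ and $s_t\ge 0$, and fix $T$ so that $\xi^{(t)}\in\mathcal{U}$ for $t\ge T$. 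A useful preliminary observation is that the closed-form update $\xi_i^{(t+1)}=\sqrt{\kappa_i(\xi^{(t)})}$ together with $\mathbf{x}_i^{\top}\Sigma_{\alpha}(\xi^{(t)})\mathbf{x}_i>0$ (Assumption \ref{ass:2}; positive-definiteness of $\Sigma_{\alpha}$, which holds because $h$ decreasing makes $-2\alpha\mathbf{X}^{\top}\mathcal{A}(\xi)\mathbf{X}$ positive semidefinite) forces $\xi^{(t)}\in\mathbb{R}^{n}_{+}$ for $t\ge 1$ and $\xi^{\star}\in\mathbb{R}^{n}_{+}$, so every optimum below is interior and all first-order conditions hold with equality.

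For \textbf{sufficient ascent}, I would invoke the EM ascent identity $\mathsf{L}(\xi^{(t+1)})-\mathsf{L}(\xi^{(t)})\ge\mathcal{Q}(\xi^{(t+1)}\mid\xi^{(t)})-\mathcal{Q}(\xi^{(t)}\mid\xi^{(t)})$ and the coordinatewise concavity of $\xi^{\dagger}\mapsto\mathcal{Q}(\xi^{\dagger}\mid\xi^{(t)})$: each summand of \eqref{eq:EM-surrogate} has second derivative $-4\alpha t_i(\xi_i^{\dagger})^2 h''((\xi_i^{\dagger})^2)$ at its maximizer $\sqrt{\kappa_i(\xi^{(t)})}$, which is strictly negative by $h''>0$ (Assumption \ref{ass:1}(iii)) and stays $\le-2a$ on a neighborhood of $\xi^{\star}$ (uniformly over the finitely many $i\in[n]$), giving
\[
\mathsf{L}(\xi^{(t+1)})-\mathsf{L}(\xi^{(t)})\ \ge\ a\,\lVert\xi^{(t+1)}-\xi^{(t)}\rVert_2^2,\qquad t\ge T.
\]
For the \textbf{relative error}, the M-step optimality gives $\nabla_{\xi^{\dagger}}\mathcal{Q}(\xi^{(t+1)}\mid\xi^{(t)})=0$, while the envelope/self-consistency identity $\nabla_{\xi}\mathsf{L}(\xi)=[\nabla_{\xi^{\dagger}}\mathcal{Q}(\xi^{\dagger}\mid\xi)]_{\xi^{\dagger}=\xi}$ (a direct computation from \eqref{eq:ELBO-original}--\eqref{eq:EM-surrogate}) gives $\nabla_{\xi}\mathsf{L}(\xi^{(t+1)})=[\nabla_{\xi^{\dagger}}\mathcal{Q}(\xi^{\dagger}\mid\xi^{(t+1)})]_{\xi^{\dagger}=\xi^{(t+1)}}$; subtracting the two and using Lipschitz continuity of $\xi\mapsto\nabla_{\xi^{\dagger}}\mathcal{Q}(\,\cdot\mid\xi)$ on a compact neighborhood of $\xi^{\star}$ (finite since $\kappa_i,\Sigma_{\alpha},\mu_{\alpha},b_{\alpha}$ and the derivatives of $A,\gamma$ are smooth and nondegenerate there) yields $b>0$ with $\lVert\nabla_{\xi}\mathsf{L}(\xi^{(t+1)})\rVert_2\le b\,\lVert\xi^{(t+1)}-\xi^{(t)}\rVert_2$ for $t\ge T$. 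Applying \eqref{eq:KL-exponent} at $\xi^{(t+1)}\in\mathcal{U}$ and chaining,
\[
s_t-s_{t+1}\ \ge\ a\,\lVert\xi^{(t+1)}-\xi^{(t)}\rVert_2^2\ \ge\ \tfrac{a}{b^2}\lVert\nabla_{\xi}\mathsf{L}(\xi^{(t+1)})\rVert_2^2\ \ge\ \rho\, s_{t+1}^{2\Omega},\qquad \rho:=a\,c_{\mathrm{K\L{}}}^2/b^2>0.
\]

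To solve the recursion and lift it: when $\Omega\in(0,\tfrac12]$, $2\Omega-1\le 0$ and $s_{t+1}\le s_T$ give $s_{t+1}^{2\Omega}\ge s_T^{2\Omega-1}s_{t+1}$, hence $s_{t+1}\le q\,s_t$ with $q:=(1+\rho\,s_T^{2\Omega-1})^{-1}\in(0,1)$ (the case $s_T=0$ being trivial), so $s_t\le C q^{\,t-T}$; then sufficient ascent gives $\lVert\xi^{(k+1)}-\xi^{(k)}\rVert_2\le\sqrt{(s_k-s_{k+1})/a}\le\sqrt{s_k/a}\lesssim q^{(k-T)/2}$, and summing the geometric tail, $\lVert\xi^{(t)}-\xi^{\star}\rVert_2\le\sum_{k\ge t}\lVert\xi^{(k+1)}-\xi^{(k)}\rVert_2\lesssim q^{(t-T)/2}$. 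When $\Omega\in(\tfrac12,1)$, set $\phi(s)=s^{1-2\Omega}$ (positive, decreasing); splitting into the two cases $s_t^{2\Omega}\le 2s_{t+1}^{2\Omega}$ and $s_t^{2\Omega}>2s_{t+1}^{2\Omega}$ and using the recursion in the first and monotonicity of $\phi$ in the second gives $\phi(s_{t+1})-\phi(s_t)\ge\mu$ for a constant $\mu>0$ independent of $t$, which telescopes to $\phi(s_t)\ge\mu(t-T)$, i.e.\ $s_t\le(\mu(t-T))^{-1/(2\Omega-1)}$; the iterate bound then follows from the classical desingularization argument \citep{bolte2014proximal,attouch2010proximal} (concavity of $\varphi(s):=\tfrac{1}{c_{\mathrm{K\L{}}}(1-\Omega)}s^{1-\Omega}$, the K\L{} inequality as $\varphi'(s_{t+1})\lVert\nabla_{\xi}\mathsf{L}(\xi^{(t+1)})\rVert_2\ge 1$, the two structural inequalities, and telescoping), yielding $\sum_{k\ge t}\lVert\xi^{(k+1)}-\xi^{(k)}\rVert_2\lesssim\varphi(s_t)\asymp s_t^{1-\Omega}\lesssim(t-T)^{-(1-\Omega)/(2\Omega-1)}$.

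\textbf{Main obstacle.} The crux is establishing the two structural inequalities uniformly on a fixed neighborhood of $\xi^{\star}$: uniform strong concavity of $\mathcal{Q}(\,\cdot\mid\xi)$ and Lipschitzness of $\xi\mapsto\nabla_{\xi^{\dagger}}\mathcal{Q}(\,\cdot\mid\xi)$, which require keeping $h,h',h''$ and the variational moments $\kappa_i,\Sigma_{\alpha},\mu_{\alpha},b_{\alpha}$ bounded away from degeneracy along the tail (this is exactly where Assumptions \ref{ass:1}(i)--(iii) and \ref{ass:2} enter). Once these are in hand, the remaining steps are the textbook K\L{} rate argument.
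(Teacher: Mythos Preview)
Your proposal is correct and follows essentially the same route as the paper: both derive the sufficient-ascent bound from local strong concavity of $\mathcal{Q}(\cdot\mid\xi^{(t)})$ at the M-step maximizer (via $h''>0$ on a compact box around $\xi^{\star}$), the relative-error bound from the envelope identity $\nabla_{\xi}\mathsf{L}(\xi)=\nabla_{\xi^{\dagger}}\mathcal{Q}(\xi\mid\xi)$ and Lipschitzness of $\nabla_{\xi^{\dagger}}\mathcal{Q}$ in its second argument, chain these with the K\L\ inequality into the recursion $s_t-s_{t+1}\ge\rho\,s_{t+1}^{2\Omega}$, and then solve the recursion and lift to iterate rates by a tail path-length sum. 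The only cosmetic differences are that the paper first proves a uniformized K\L\ claim over the cluster set (superfluous here since Theorem~\ref{theorem:convergence} already gives $\Psi=\{\xi^{\star}\}$) and handles the sublinear regime via a mean-value-theorem argument on $\phi(u)=u^{1-2\Omega}$ rather than your Attouch--Bolte case split; both are standard and equivalent.
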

As $\Omega$ is unknown, the algorithm may exhibit only sub-linear convergence. Achieving linear convergence would require a PL inequality, which in turn demands positive definiteness of the Hessian of $-\mathsf{L}$ at its critical points and thus additional restrictions on the prior and design. Hence, we do not pursue this direction.
%


\subsection{Variational Risk Bounds}\label{subsec:var-risk}
We now derive $\alpha$-dependent risk bounds for the variational posterior $\pi_{\alpha}(\theta\mid\mathcal D_n,\xi^\star)$. For $\alpha\in(0,1)$, risk is measured through the discrepancy between the variational posterior and the true parameter $\theta_0\in\Theta$ using the $\alpha$-R\'enyi divergence:
\begin{align}\label{eq:alpha-renyi-divergence}
    D_{\alpha}(\theta, \theta_0) := \frac{1}{n(\alpha-1)}\log \int\left\{p(y \mid \mathbf{X}, \theta)\right\}^{\alpha}\left\{p(y \mid \mathbf{X}, \theta_0)\right\}^{1-\alpha}dy,
\end{align}
where the factor $n^{-1}$ is used to measure the average discrepancy per observation. See~\cite{Bayesian-fractional-posterior} for detailed insights into $\alpha$-R\'{e}nyi posterior risk bounds. To present the variational risk bound for the Type I $\ssg$ likelihoods, in addition to Assumption \ref{ass:1}, we impose the following mild {moment condition} which is satisfied by both the Laplace and Student’s-$t$ (for $\nu > 2$) families; see Section \ref{sec:discussion-assumptions} of Supplementary Materials.

\begin{assumption}[Existence of moments for Type I $\mathsf{SSG}$ likelihoods]
\label{ass:3}
The $k$th moment exists and
$\mathscr{E}_k := \mathcal{Z}_h^{-1}\int |s|^k\exp\left\{h(s^2)\right\}ds  < \infty$, for all $k\in [2]$, where $\mathcal{Z}_h = \int_{\mathbb{R}} \exp\left\{h(s^2)\right\}ds$.
\end{assumption}

Theorems~\ref{theorem-alpha-less-1-variational-risk-bound-Type-I} and~\ref{theorem-alpha-less-1-variational-risk-bound-Type-II} bound the variational risk, obtained by averaging the $\alpha$-R\'enyi divergence in~\eqref{eq:alpha-renyi-divergence} under the optimal variational posterior, for Type I and Type II $\ssg$ likelihoods.

\begin{theorem}[$\alpha$-R\'{e}nyi risk bound for Type I $\ssg$ likelihoods]
\label{theorem-alpha-less-1-variational-risk-bound-Type-I}
Under $p(y\mid \mathbf{X}, \theta_0)$ in \eqref{eq:SSG-typeI-likelihood}, Assumptions \ref{ass:1}(i), \ref{ass:1}(iv), and \ref{ass:3}; and any $\varepsilon_n \in \left(0, \frac{1}{3}\right)$, with $\mathbb{P}_{\theta_0}$-probability at least $\left(1-3\varepsilon_n - [(D-1)^2 n \varepsilon_n^2]^{-1}\right)$, the variational risk with respect to $D_{\alpha}(\theta, \theta_0)$ satisfies:
%
\begin{align}\label{eq:type-1-risk-bound-alpha-less-1}
    \begin{split}
    (1-\alpha)\int_{\theta\in \Theta}D_{\alpha}(\theta, \theta_0)\pi_{\alpha}(\theta\mid \mathcal{D}_n, \xi^{\star})d\theta
        \leq D\alpha \varepsilon_n^2 + \frac{p\log p}{n} + C_1\frac{p}{n}\log\left(\frac{1}{\varepsilon_n}\right),
    \end{split}
\end{align}
for any $\alpha\in (0,1)$, arbitrary $D>1$, and constant $C_1 \in \mathbb{R}^{+}$ depending only on prior hyperparameters $(\mu, \Sigma, a, b)$, design matrix $\mathbf{X}$, and true parameter $\theta_0$.
\end{theorem}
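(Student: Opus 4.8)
The plan is to combine a change-of-measure (PAC-Bayes type) bound for the $\alpha$-R\'enyi divergence with the tangent minorization of Proposition~\ref{lemma:tangent-lower-bound} and the global optimality of $\xi^\star$. First I would record the exact exponential-moment identity implied by \eqref{eq:alpha-renyi-divergence}: writing $\Lambda_n(\theta)=\log\{p(y\mid\mathbf X,\theta)/p(y\mid\mathbf X,\theta_0)\}$, one has $\mathbb E_{\theta_0}\exp\{\alpha\Lambda_n(\theta)+n(1-\alpha)D_\alpha(\theta,\theta_0)\}=1$ for each $\theta$; integrating against $\pi$, Fubini, and Markov give $\int\exp\{\alpha\Lambda_n(\theta)+n(1-\alpha)D_\alpha(\theta,\theta_0)\}\pi(d\theta)\le\varepsilon_n^{-1}$ with $\mathbb P_{\theta_0}$-probability at least $1-\varepsilon_n$. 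On that event, the Donsker--Varadhan inequality applied with the data-dependent measure $\pi_\alpha(\cdot\mid\mathcal D_n,\xi^\star)$ yields
\[
n(1-\alpha)\!\int\! D_\alpha(\theta,\theta_0)\,\pi_\alpha(\theta\mid\mathcal D_n,\xi^\star)\,d\theta\;\le\;\alpha\!\int\!\log\frac{p(y\mid\mathbf X,\theta_0)}{p(y\mid\mathbf X,\theta)}\,\pi_\alpha(\theta\mid\mathcal D_n,\xi^\star)\,d\theta+\mathrm{KL}\!\big(\pi_\alpha(\cdot\mid\mathcal D_n,\xi^\star)\,\|\,\pi\big)+\log(1/\varepsilon_n).
\]

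Next I would bound the right-hand side using the minorizer. Since $\varphi_\alpha(y\mid\mathbf X,\theta,\xi)\le p(y\mid\mathbf X,\theta)^\alpha$ for every $\xi$ (Proposition~\ref{lemma:tangent-lower-bound}) and $\pi_\alpha(\cdot\mid\mathcal D_n,\xi^\star)$ is the $\mathsf{ELBO}$-maximizer $q_{\xi^\star}$ at fixed $\xi^\star$, the first two terms collapse to $-\mathsf L(\xi^\star)+\alpha\log p(y\mid\mathbf X,\theta_0)$. Because $\xi^\star$ maximizes $\mathsf L$, for any probability measure $\rho$ on $\Theta$ and any $\xi$ one has $\mathsf L(\xi^\star)\ge\mathcal L(\rho,\xi)$; optimizing over $\xi$ is coordinate-wise and, by $\gamma'(t)=-t^2A'(t)$, gives $\sup_{\xi_i\ge0}\{A(\xi_i)c+\gamma(\xi_i)\}=h(c)$ at $\xi_i^2=c$. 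Plugging $c=\mathbb E_\rho[\zeta_i^2]$ (and noting the log-normalizer of the Type I density cancels between $\mathsf L$ and $\alpha\log p(y\mid\mathbf X,\theta_0)$), this reduces, on the event above, to
\[
n(1-\alpha)\!\int\! D_\alpha\,\pi_\alpha(\cdot\mid\mathcal D_n,\xi^\star)\;\le\;\mathrm{KL}(\rho\,\|\,\pi)+\alpha n\big(\log\tau_0-\mathbb E_\rho[\log\tau]\big)+\alpha\!\!\sum_{i\in[n]}\!\big\{h(\zeta_i^2(\theta_0))-h(\mathbb E_\rho[\zeta_i^2])\big\}+\log(1/\varepsilon_n),
\]
where $\zeta_i^2(\theta_0)=\tau_0^2(y_i-\mathbf x_i^\top\beta_0)^2$.

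I would then choose $\rho=\mathcal N_p(\beta_0,\sigma_\beta^2 I_p)\otimes\mathrm{Unif}\big([\tau_0^2(1-\eta),\tau_0^2(1+\eta)]\big)$ with $\eta\asymp\varepsilon_n^2$ and $\sigma_\beta^2\asymp\varepsilon_n^4/p$ (up to factors depending on $K$, $\tau_0$, and $\mathbf X$). Because $\rho$ factorizes, $\rho_\beta$ is symmetric about $\beta_0$, and $\mathbb E_{\rho_\tau}[\tau^2]=\tau_0^2$, all terms involving the residual $\epsilon_i=y_i-\mathbf x_i^\top\beta_0$ cancel, leaving $\mathbb E_\rho[\zeta_i^2]-\zeta_i^2(\theta_0)=\tau_0^2\sigma_\beta^2\|\mathbf x_i\|_2^2$. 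Assumption~\ref{ass:1}(iv) ($t\mapsto h(t^2)$ is $K$-Lipschitz) together with $|\sqrt a-\sqrt b|\le\sqrt{|a-b|}$ then gives $|h(\zeta_i^2(\theta_0))-h(\mathbb E_\rho[\zeta_i^2])|\le K\tau_0\sigma_\beta\|\mathbf x_i\|_2$, so the corresponding sum is $\lesssim K\tau_0\sigma_\beta\sqrt n\,\|\mathbf X\|_F\lesssim n\varepsilon_n^2$ by Cauchy--Schwarz; likewise $\log\tau_0-\mathbb E_{\rho_\tau}[\log\tau]\le\eta\lesssim\varepsilon_n^2$. A direct computation of the KL divergence of $\rho$ from the Normal--Gamma prior gives $\mathrm{KL}(\rho\,\|\,\pi)\lesssim\tfrac p2\log p+C_1p\log(1/\varepsilon_n)$ (the leading contribution being $-\tfrac p2\log\sigma_\beta^2$). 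Dividing by $n$ and absorbing $\log(1/\varepsilon_n)/n$ and $\mathcal O(p/n)$ pieces into $C_1$ and (after rescaling $D$) into $D\alpha\varepsilon_n^2$ delivers \eqref{eq:type-1-risk-bound-alpha-less-1}. The probability statement then follows by combining the single $\varepsilon_n$ above with three further Markov bounds on the empirical averages ($n^{-1}\sum_i\epsilon_i^2$, $n^{-1}\sum_i|\epsilon_i|$, and a cross term) that enter the non-deterministic part of the estimate, giving the $4\varepsilon_n$, and a Chebyshev bound on $n^{-1}\sum_i\epsilon_i^2$ whose variance is finite by Assumption~\ref{ass:3} ($\mathscr E_4<\infty$), with threshold $\asymp(D-1)\varepsilon_n$, producing the $[(D-1)^2n\varepsilon_n^2]^{-1}$.

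The main obstacle will be the joint calibration of $\rho$: the bias terms $\sum_i\{h(\zeta_i^2(\theta_0))-h(\mathbb E_\rho[\zeta_i^2])\}$ and $n(\log\tau_0-\mathbb E_\rho[\log\tau])$ demand that $\rho$ concentrate at scale $\varepsilon_n^2$ around $\theta_0$, while $\mathrm{KL}(\rho\,\|\,\pi)$ grows like $p\log(1/\varepsilon_n)$ as it does; the two are reconciled only because Assumption~\ref{ass:1}(iv) converts the $\mathcal O(\sigma_\beta^2\|\mathbf x_i\|_2^2)$ perturbation of $\mathbb E_\rho[\zeta_i^2]$ into an $\mathcal O(\sigma_\beta\|\mathbf x_i\|_2)$ perturbation of $h$ (a mean-value bound through $h'$ would instead produce factors $1/|\epsilon_i|$ that are not integrable under heavy tails). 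A secondary, purely bookkeeping hurdle is tracking the normalizing constant of the heavy-tailed density so that it cancels exactly between $-\mathsf L(\xi^\star)$ and $\alpha\log p(y\mid\mathbf X,\theta_0)$.
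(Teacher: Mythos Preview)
Your overall architecture—exponential-moment identity, Markov, Donsker--Varadhan with $\rho=q^\star$, then the minorizer and the global optimality $\mathsf L(\xi^\star)\ge \mathcal L(\rho,\xi)$—is exactly what the paper does in its majorization lemma. The genuine divergence is in the \emph{choice of the comparison measure}. The paper takes $\tilde q=\pi(\cdot)\mathds 1_{\mathcal B_n(\theta_0,\varepsilon)}/\pi(\mathcal B_n)$ together with a data-dependent $\tilde\xi_i=\tau_0|y_i-\mathbf x_i^\top\beta_0|$; it then (i) bounds $\mathrm{KL}(\tilde q\|\pi)=-\log\pi(\mathcal B_n)$ by a Normal--Gamma small-ball estimate, and (ii) bounds $\mathcal I(\tilde q,\tilde\xi)$ by $Dn\varepsilon^2$ via Chebyshev (this is where the free parameter $D$ and the $[(D-1)^2n\varepsilon^2]^{-1}$ term enter), after three Markov bounds on $n^{-1}\sum_i|\epsilon_i|^k$, $k\in\{1,2,4\}$ (this is where the remaining $3\varepsilon$ and Assumption~\ref{ass:3} enter). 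Your route—an explicit product $\rho$ and the exact optimization over $\xi$ yielding $h(\mathbb E_\rho[\zeta_i^2])$—is cleaner: the identity $\mathbb E_\rho[\zeta_i^2]-\zeta_i^2(\theta_0)=\tau_0^2\sigma_\beta^2\|\mathbf x_i\|_2^2$ together with $\sqrt{a+c}-\sqrt a\le\sqrt c$ makes the whole bias term \emph{deterministic}, so no moment bounds and no Chebyshev are needed, and Assumption~\ref{ass:3} becomes superfluous.

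This is also where your last paragraph goes wrong. Because your bound is deterministic on the single Markov event, there is \emph{no} ``non-deterministic part'': the empirical averages $n^{-1}\sum_i\epsilon_i^2$, $n^{-1}\sum_i|\epsilon_i|$ never appear in your estimate, so the three extra Markov bounds and the Chebyshev bound you invoke are spurious. What you actually prove is $(1-\alpha)\int D_\alpha\,d\pi_\alpha(\cdot\mid\mathcal D_n,\xi^\star)\le C'\alpha\varepsilon_n^2+(p\log p)/n+C_1(p/n)\log(1/\varepsilon_n)$ with $\mathbb P_{\theta_0}$-probability at least $1-\varepsilon_n$, where $C'$ is a \emph{fixed} constant depending on $K,\tau_0,\mathbf X$. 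This is in several respects stronger than the stated theorem, but it is not literally the same statement: you do not obtain an \emph{arbitrary} $D>1$ in front of $\alpha\varepsilon_n^2$ coupled with the probability $1-4\varepsilon_n-[(D-1)^2n\varepsilon_n^2]^{-1}$; your ``rescaling $D$'' only covers $D\ge C'$. Either drop the probability cosmetics and state the cleaner result you actually get, or, if you want the theorem verbatim, switch to the paper's prior-restriction $\tilde q$ where the Chebyshev and the three Markov steps are genuinely needed.
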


\begin{theorem}[$\alpha$-R\'{e}nyi risk bound for Type II $\ssg$ likelihoods]
\label{theorem-alpha-less-1-variational-risk-bound-Type-II}
Under $p(y\mid \mathbf{X}, \beta_0)$ in \eqref{eq:SSG-typeII-likelihood} and any $\varepsilon_n\in \left(0, \frac{1}{3}\right)$, with $\mathbb{P}_{\beta_0}$-probability at least $\left(1-2\varepsilon_n - [(D-1)^2n\varepsilon_n^2]^{-1}\right)$, the variational risk with respect to $D_{\alpha}(\beta, \beta_0)$ satisfies:
%
\begin{equation}\label{eq:type-2-risk-bound-alpha-less-1}
\begin{split}
(1-\alpha)\int_{\beta\in \mathbb{R}^p} D_{\alpha}(\beta, \beta_0)\pi_\alpha(\beta \mid \mathcal{D}_n, \xi^{\star})d\beta \leq D\alpha \varepsilon_n^2 + \frac{p\log p}{n} + C_2\frac{p}{n}\log\left(\frac{1}{\varepsilon_n}\right),
\end{split}
\end{equation}
for any $\alpha\in (0, 1)$, arbitrary $D>1$, and constant $C_2\in \mathbb{R}^{+}$ depending only on prior hyperparameters $(\mu, \Sigma)$, design matrix $\mathbf{X}$, and true parameter $\beta_0$.
\end{theorem}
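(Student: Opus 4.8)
The plan is to adapt the variational-risk argument for $\alpha$-fractional posteriors \citep{Bayesian-fractional-posterior,AlquierRidgway2020} to the present setting, where two features depart from the classical analysis: the working likelihood $\varphi_\alpha(y\mid\mathbf{X},\beta,\xi^\star)$ is a \emph{minorizer} of the true likelihood $p_\alpha(y\mid\mathbf{X},\beta)$ rather than the likelihood itself, and $\xi^\star$ is data-dependent. Write $\ell_n(\beta):=\sum_{i\in[n]}\log p(y_i\mid\mathbf{x}_i,\beta)$, $r_n(\beta):=\ell_n(\beta)-\ell_n(\beta_0)$, $h_n(\beta):=(1-\alpha)D_\alpha(\beta,\beta_0)\ge 0$, and $q^\star:=\pi_\alpha(\cdot\mid\mathcal{D}_n,\xi^\star)$, so that the left-hand side of \eqref{eq:type-2-risk-bound-alpha-less-1} is exactly $\int h_n\,dq^\star$. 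Unwinding the definition \eqref{eq:alpha-renyi-divergence} gives the exact identity $\mathbb{E}_{\beta_0}[e^{\alpha r_n(\beta)}]=e^{-nh_n(\beta)}$, hence $\mathbb{E}_{\beta_0}\!\int e^{\alpha r_n+nh_n}\,d\pi=1$, and Markov's inequality makes $\log\!\int e^{\alpha r_n+nh_n}\,d\pi\le\log(1/\varepsilon_n)$ outside an event of $\mathbb{P}_{\beta_0}$-probability $\varepsilon_n$.

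First I would apply the Donsker--Varadhan (Gibbs) variational representation with test function $g=\alpha r_n+nh_n$ and measure $q^\star$, obtaining $n\!\int h_n\,dq^\star\le \log\!\int e^{\alpha r_n+nh_n}\,d\pi+\mathrm{KL}(q^\star\,\|\,\pi)-\alpha\!\int r_n\,dq^\star$. The correction $\mathrm{KL}(q^\star\,\|\,\pi)-\alpha\!\int r_n\,dq^\star$ is handled through the joint $\mathsf{ELBO}$ optimality of $(q^\star,\xi^\star)$: since $q^\star$ maximizes $\mathcal L(\cdot,\xi^\star)$ and $\xi^\star$ maximizes $\mathsf{L}=\max_q\mathcal L(q,\cdot)$, for every probability measure $\rho\ll\pi$ and every $\xi\in\mathbb{R}^n_{+,0}$ we have $\int\log\varphi_\alpha(y\mid\mathbf{X},\beta,\xi^\star)\,dq^\star-\mathrm{KL}(q^\star\,\|\,\pi)\ge \int\log\varphi_\alpha(y\mid\mathbf{X},\beta,\xi)\,d\rho-\mathrm{KL}(\rho\,\|\,\pi)$. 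Bounding $\log\varphi_\alpha(y\mid\mathbf{X},\beta,\xi^\star)\le\alpha\ell_n(\beta)$ on the left via Proposition~\ref{lemma:tangent-lower-bound}, and writing $\Delta(\beta,\xi):=\alpha\ell_n(\beta)-\log\varphi_\alpha(y\mid\mathbf{X},\beta,\xi)\ge 0$ for the minorization gap on the right, a rearrangement yields $\mathrm{KL}(q^\star\,\|\,\pi)-\alpha\!\int r_n\,dq^\star\le \mathrm{KL}(\rho\,\|\,\pi)-\alpha\!\int r_n\,d\rho+\int\Delta(\beta,\xi)\,d\rho$.

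It remains to choose $(\rho,\xi)$ and estimate the three terms. I would take $\xi_i:=|\mathbf{x}_i^{\top}\beta_0|$, the tangency points of Proposition~\ref{lemma:tangent-lower-bound} at $\beta_0$, and $\rho$ equal to $\mathcal{N}_p(\beta_0,\sigma_n^2 I_p)$ restricted to a fixed ball $\{\lVert\beta-\beta_0\rVert\le\delta\}$, with $\sigma_n^2\asymp\varepsilon_n^2/p$. Then: (i) the closed-form Gaussian--Gaussian KL (plus a bounded truncation correction) gives $\mathrm{KL}(\rho\,\|\,\pi)\le \tfrac12 p\log p+p\log(1/\varepsilon_n)+C$, producing the $\tfrac{p\log p}{n}$ and $\tfrac{p}{n}\log(1/\varepsilon_n)$ terms; (ii) on the ball, $(\mathbf{x}_i^{\top}\beta)^2$ and $\xi_i^2$ stay in a compact interval on which $h''$ is bounded (Assumption~\ref{ass:1}(iii) with continuity), so the Bregman-divergence form of $\Delta$ satisfies $\Delta(\beta,\xi)\le C\alpha\,\lVert\mathbf{X}(\beta-\beta_0)\rVert^2$ and hence $\int\Delta\,d\rho=O(\alpha\varepsilon_n^2)$ up to design-dependent constants; (iii) splitting $-\alpha\!\int r_n\,d\rho=\alpha\!\int\mathrm{KL}_n(\beta)\,d\rho+\alpha\!\int\!\big(\ell_n(\beta_0)-\ell_n(\beta)-\mathrm{KL}_n(\beta)\big)\,d\rho$, where $\mathrm{KL}_n(\beta):=\sum_i\mathrm{KL}\big(p(\cdot\mid\mathbf{x}_i,\beta_0)\,\|\,p(\cdot\mid\mathbf{x}_i,\beta)\big)$ is deterministic, a second-order expansion on the ball bounds the population part by $O(\alpha\varepsilon_n^2)$, and a Chebyshev bound on the zero-mean fluctuation — using that $\mathrm{Var}_{\beta_0}\big(\log\tfrac{p(\cdot\mid\mathbf{x}_i,\beta_0)}{p(\cdot\mid\mathbf{x}_i,\beta)}\big)\lesssim (\mathbf{x}_i^{\top}(\beta-\beta_0))^2$ uniformly on the ball — controls the remaining part by a multiple of $(D-1)\alpha\,n\varepsilon_n^2$ outside an event of probability $[(D-1)^2 n\varepsilon_n^2]^{-1}$. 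Dividing through by $n$ and collecting terms gives \eqref{eq:type-2-risk-bound-alpha-less-1}; a union bound over the Markov and Chebyshev events — a second $\varepsilon_n$ being absorbed from the step certifying that $\int_{\lVert\beta-\beta_0\rVert\le\delta}e^{\alpha r_n}\,d\pi$, equivalently $\mathsf{L}(\xi^\star)$, is not atypically small — produces the stated probability $1-2\varepsilon_n-[(D-1)^2 n\varepsilon_n^2]^{-1}$.

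I expect the minorization slack $\int\Delta(\beta,\xi)\,d\rho$ to be the main obstacle, since it has no counterpart in the classical fractional-posterior argument and the Bregman divergence of $h$ grows only linearly at infinity, so $\Delta$ cannot be bounded quadratically over all of $\mathbb{R}^p$. The resolution is precisely to anchor the reference $\xi$ at the Proposition~\ref{lemma:tangent-lower-bound} tangency points $\xi_i=|\mathbf{x}_i^{\top}\beta_0|$ and to truncate $\rho$ to a compact ball, so that Assumption~\ref{ass:1}(iii) supplies the uniform curvature bound on $h''$ that drives $\int\Delta\,d\rho$ down to $O(\alpha\varepsilon_n^2)$. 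A secondary point is that $\xi^\star$ is data-dependent, so comparing against $(\rho,\xi^\star)$ would leave an uncontrolled slack; the global optimality of $\xi^\star$ for $\mathsf{L}$ (Section~\ref{subsec:prior-posterior}) is what allows comparison against the deterministic pair $(\rho,\xi)$ instead. Finally, in contrast to the Type~I bound, because $\beta$ is finite-dimensional with a light-tailed Gaussian prior and all relevant functions of $\mathbf{x}_i^{\top}\beta$ are bounded on compacta, no moment condition analogous to Assumption~\ref{ass:3} is needed here — which is why the probability budget loses only $2\varepsilon_n$ rather than the $4\varepsilon_n$ of Theorem~\ref{theorem-alpha-less-1-variational-risk-bound-Type-I}.
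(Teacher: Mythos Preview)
Your overall architecture matches the paper's: a Donsker--Varadhan/Markov step on $\int e^{\alpha r_n+nh_n}\,d\pi$, then the joint $\mathsf{ELBO}$ optimality of $(q^\star,\xi^\star)$ to swap to a deterministic pair $(\rho,\xi)$ with $\xi_i=|\mathbf{x}_i^\top\beta_0|$, then a Chebyshev bound on the fluctuation of $-\alpha\int r_n\,d\rho$. The one substantive difference is cosmetic: the paper takes $\rho$ to be the \emph{prior restricted to a KL-type neighborhood} $\mathcal{B}_n(\beta_0,\varepsilon_n)$ (so that $\mathrm{KL}(\rho\,\|\,\pi)=-\log\pi(\mathcal{B}_n)$ becomes a small-ball probability, bounded via Anderson/Gaussian concentration), whereas you take $\rho$ to be a small-variance Gaussian at $\beta_0$ and compute the Gaussian--Gaussian KL directly. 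Both routes give the same $\tfrac{p\log p}{n}+\tfrac{C_2 p}{n}\log(1/\varepsilon_n)$ order.

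There is, however, a real gap in your accounting of the second $\varepsilon_n$. You attribute it to ``certifying that $\mathsf{L}(\xi^\star)$ is not atypically small,'' but no such step appears in your decomposition (the ELBO comparison already dispenses with $\mathsf{L}(\xi^\star)$). The actual source is that the Type~II minorization gap has the form $\Delta(\beta,\xi)=\alpha\sum_{i}b_i\cdot[\text{Bregman term}]$, and for the Negative-Binomial sub-case $b_i=y_i+m_i$ is \emph{random}. Your claim ``$\Delta(\beta,\xi)\le C\alpha\|\mathbf{X}(\beta-\beta_0)\|^2$ up to design-dependent constants'' is therefore false as stated: the constant is $n^{-1}\sum_i b_i$, which is data-dependent. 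The paper handles this with a single Markov inequality, $\mathbb{P}_{\beta_0}\big(n^{-1}\sum_i(y_i+m_i)>\mathbf{m}^\star(1+\varepsilon_n^{-1}e^{\|\mathbf{X}\|_{2,\infty}\|\beta_0\|})\big)\le\varepsilon_n$, which is precisely what supplies the second $\varepsilon_n$ in the probability budget. (For the Binomial sub-case $b_i=m_i$ is deterministic and only $1\cdot\varepsilon_n$ would be needed; the stated $2\varepsilon_n$ covers the worst case.) Once you insert this Markov step, your argument goes through with the correct probability $1-2\varepsilon_n-[(D-1)^2 n\varepsilon_n^2]^{-1}$.
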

Having presented the variational risk bounds, we highlight two key observations below.
\begin{remark}
\label{remark:near-minimaxity}
Taking $\varepsilon^{2}_{n} \asymp n^{-1} p \log n$, the resulting risk bounds in \eqref{eq:type-1-risk-bound-alpha-less-1} and \eqref{eq:type-2-risk-bound-alpha-less-1} is $\mathcal{O}(n^{-1} p)$, up to logarithmic factors, yielding near-minimax optimality (refer to the study on Bayesian oracle inequality (BOI) in~\cite{Bayesian-fractional-posterior} based on PAC-Bayes inequality).
\end{remark}

\begin{remark}
\label{remark:alpha-dependence}
The tempering parameter $\alpha\in (0, 1)$ in 
$\pi_{\alpha}(\theta \mid \mathcal{D}_n, \xi^{\star})$ and 
$D_{\alpha}(\theta, \theta_0)$, in Theorems \ref{theorem-alpha-less-1-variational-risk-bound-Type-I} and \ref{theorem-alpha-less-1-variational-risk-bound-Type-II} need not coincide. Using the inequality, $\tfrac{\alpha_1(1-\alpha_2)}{\alpha_2(1-\alpha_1)} D_{\alpha_2} \le D_{\alpha_1} \le D_{\alpha_2}$ for $0 < \alpha_1 \le \alpha_2 < 1$~\citep{van2014renyi}, one can extend the preceding results to any $D_{\delta}$ with $\delta \in (0, 1)$.
%
%
\end{remark}

We next treat the case $\alpha=1$ under a compact parameter space $\Theta$. This restriction is rather mild, as our prior choices in Section~\ref{subsec:prior-posterior} place super-exponentially decaying mass away from their modes, allowing truncation of the original parameter space to a sufficiently large compact subset; see Section~\ref{subsec:compact-parameter-space} of Supplementary Materials. In this setting, variational risk is measured by the average squared Hellinger distance:
%
\begin{align}\label{eq:Hellinger-divergence}
    \mathcal{H}^{2}(\theta \parallel \theta_0) := \frac{1}{n}\sum_{i\in [n]}\int \left(\sqrt{p(y_i\mid \mathbf{x}_i, \theta)} - \sqrt{p(y_i\mid \mathbf{x}_i, \theta_0)}\right)^2dy_i.
\end{align}
We conclude our theoretical exposition with the counterparts of Theorems~\ref{theorem-alpha-less-1-variational-risk-bound-Type-I} and~\ref{theorem-alpha-less-1-variational-risk-bound-Type-II} for $\alpha = 1$. 
%
\begin{theorem}[$\mathcal{H}^{2}(\theta \parallel \theta_0)$ risk bound for Type I $\mathsf{SSG}$ likelihoods]
\label{theorem-alpha-equals-1-variational-risk-bound-Type-I}
Under $p(y\mid \mathbf{X}, \theta_0)$ in \eqref{eq:SSG-typeI-likelihood}, Assumptions \ref{ass:1}(i), \ref{ass:1}(iv), and \ref{ass:3}; and any $\varepsilon _n\in \left(0, \frac{1}{3}\right)$, with $\mathbb{P}_{\theta_0}$-probability at least $(1 - 3\varepsilon_n - [(D-1)^2n\varepsilon_n^2]^{-1} - 2e^{-\tfrac{c n \varepsilon_n^2}{2}})$, the variational risk with respect to $\mathcal{H}^{2}(\theta \parallel \theta_0)$ satisfies:
\begin{align}\label{eq:type-1-risk-bound-alpha-1}
    \begin{split}
    c\int_{\Theta}\mathcal{H}^{2}(\theta \parallel \theta_0) \pi_{1}(\theta\mid \mathcal{D}_n, \xi^{\star})d\theta \leq \left(D + \frac{3c}{2}\right)\varepsilon_n^2 + \frac{p\log p}{n} + \frac{C_3 p}{n}\log\left(\frac{1}{\varepsilon_n}\right),
    \end{split}
\end{align}
for arbitrary $D>1, c\in \mathbb{R}^{+}$, and $C_3 \in \mathbb{R}^{+}$ depending only on prior hyperparameters $(\mu, \Sigma, a, b)$, design matrix $\mathbf{X}$, and true parameter $\theta_0$.
\end{theorem}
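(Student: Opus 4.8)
The plan is to transfer the PAC--Bayes argument behind Theorem \ref{theorem-alpha-less-1-variational-risk-bound-Type-I} to the $\alpha=1$ regime, where the ``free-lunch'' R\'enyi identity is unavailable and must be replaced by a prior-mass-plus-testing argument in the spirit of Ghosal--Ghosh--van der Vaart and its tempered/variational extensions \citep{AlquierRidgway2020}. Write $\Lambda_n(\theta):=\log\{p(y\mid\mathbf{X},\theta)/p(y\mid\mathbf{X},\theta_0)\}$, abbreviate $p_{\theta,i}:=p(y_i\mid\mathbf{x}_i,\theta)$, and set $\widehat q:=\pi_1(\cdot\mid\mathcal D_n,\xi^\star)$. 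Two structural facts will be used repeatedly: (i) by Proposition \ref{lemma:tangent-lower-bound} at $\alpha=1$, $\varphi_1(y\mid\mathbf{X},\theta,\xi)\le p(y\mid\mathbf{X},\theta)$ pointwise in $(\theta,\xi)$; and (ii) since $\xi^\star$ maximizes $\mathsf L(\xi)=\max_q\mathcal L(q,\xi)$ with $\widehat q$ the exact inner maximizer, $\log\varphi_1(y\mid\mathbf{X},\xi^\star)=\mathcal L(\widehat q,\xi^\star)\ge\mathcal L(\rho,\xi)$ for every $\rho\ll\pi$ and every $\xi\in\mathbb R^n_{+,0}$.

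\emph{Step 1 (lower bound on the evidence).} Unfolding (ii) and writing $\log\varphi_1=\log p-\mathcal G_\theta$ with $\mathcal G_\theta\ge0$ the tangent-minorizer gap, we obtain $\log\{\varphi_1(y\mid\mathbf{X},\xi^\star)/p(y\mid\mathbf{X},\theta_0)\}\ge\int\Lambda_n\,d\rho-\mathcal G(\rho,\xi)-\mathrm{KL}(\rho\parallel\pi)$. Choose $\rho$ to be $\pi$ restricted and renormalized to the KL-ball $B_n=\{\theta:\tfrac1n\sum_i\mathrm{KL}(p_{\theta_0,i}\parallel p_{\theta,i})\le\varepsilon_n^2,\ \tfrac1n\sum_i\mathrm{Var}_{\theta_0}(\log\tfrac{p_{\theta_0,i}}{p_{\theta,i}})\le\varepsilon_n^2\}$. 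Under Assumptions \ref{ass:1}(i) and \ref{ass:3}, a local quadratic expansion of the per-observation KL (its coefficients finite thanks to the moment condition) gives $B_n\supseteq\{\,\lVert\theta-\theta_0\rVert\lesssim\varepsilon_n\,\}$, so the Normal--Gamma prior yields $\mathrm{KL}(\rho\parallel\pi)=-\log\pi(B_n)\lesssim p\log p+p\log(1/\varepsilon_n)$; on $B_n$, $\mathbb E_{\theta_0}\Lambda_n(\theta)\ge-n\varepsilon_n^2$ with variance $\le n\varepsilon_n^2$, so Chebyshev gives $\int\Lambda_n\,d\rho\ge-Dn\varepsilon_n^2$ with $\mathbb P_{\theta_0}$-probability at least $1-[(D-1)^2n\varepsilon_n^2]^{-1}$; and $\mathcal G(\rho,\xi)=O(n\varepsilon_n^2)$ for a suitable $\xi$, by Assumption \ref{ass:1}(iv) and the concentration of $\rho$ near $\theta_0$ --- the same gap estimate already used for Theorem \ref{theorem-alpha-less-1-variational-risk-bound-Type-I}. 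Hence, on this event, $\varphi_1(y\mid\mathbf{X},\xi^\star)\ge p(y\mid\mathbf{X},\theta_0)\exp\{-Dn\varepsilon_n^2-\mathcal G-p\log p-Cp\log(1/\varepsilon_n)\}$.

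\emph{Step 2 (testing on Hellinger shells --- the crux).} Let $U_n=\{\theta\in\Theta:\mathcal H^2(\theta\parallel\theta_0)>M\varepsilon_n^2\}$. Since $\Theta$ is compact, its $\varepsilon_n$-covering number in the Hellinger metric is $\lesssim(1/\varepsilon_n)^p$; combining the single-alternative likelihood-ratio tests (for which $\mathbb E_{\theta_0}\phi+\sup_\theta\mathbb E_\theta(1-\phi)\lesssim e^{-cn\mathcal H^2(\theta\parallel\theta_0)}$, via $\mathbb E_{\theta_0}\sqrt{p_{\theta,i}/p_{\theta_0,i}}=1-\tfrac12 H_i^2$) over a net of $U_n$ produces a test $\phi_n$ with $\mathbb E_{\theta_0}\phi_n\le e^{c'p\log(1/\varepsilon_n)-cnM\varepsilon_n^2}$ and $\sup_{\theta\in U_n}\mathbb E_\theta(1-\phi_n)\le e^{-cnM\varepsilon_n^2}$. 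On $\{\phi_n=0\}$ --- an event of $\mathbb P_{\theta_0}$-probability at least $1-e^{-cnM\varepsilon_n^2}$ --- a Fubini-plus-Markov step gives $\mathds 1\{\phi_n=0\}\int_{U_n}\{p(y\mid\mathbf{X},\theta)/p(y\mid\mathbf{X},\theta_0)\}\,\pi(d\theta)\le e^{-cnM\varepsilon_n^2/2}$ with high probability; using $\varphi_1\le p$ in the numerator of $\widehat q(U_n)$ and the Step-1 bound in its denominator, $\widehat q(U_n)\le\exp\{-cnM\varepsilon_n^2/2+Dn\varepsilon_n^2+\mathcal G+p\log p+Cp\log(1/\varepsilon_n)\}$, which is negligible once $M$ is a large enough multiple of $D$ (and $p\log p,\,p\log(1/\varepsilon_n)\lesssim n\varepsilon_n^2$ in the stated regime). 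Finally, $\mathcal H^2\le2$ on $\Theta$ gives $\int\mathcal H^2\,d\widehat q\le M\varepsilon_n^2+2\widehat q(U_n)$; multiplying by $c$ and keeping the lower-order evidence terms explicit recovers $(D+\tfrac{3c}{2})\varepsilon_n^2+\tfrac{p\log p}{n}+\tfrac{C_3p}{n}\log(1/\varepsilon_n)$, with $C_3$ collecting the prior-mass and minorizer-gap constants. A union bound over the ``good'' events of Steps 1--2 yields the stated probability $1-4\varepsilon_n-[(D-1)^2n\varepsilon_n^2]^{-1}-2\exp\{-cn\varepsilon_n^2/2\}$, where the $4\varepsilon_n$ absorbs the extra Markov steps on $\mathcal G$ and on the numerator likelihood-ratio integral.

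The main obstacle is Step 2: unlike the fractional case, no algebraic identity turns the variational objective directly into a R\'enyi divergence, so one must (a) build tests uniform over Hellinger shells --- which is exactly what forces $\Theta$ to be compact, injects the entropy term $\tfrac pn\log(1/\varepsilon_n)$, and produces the exponential failure probability $2\exp\{-cn\varepsilon_n^2/2\}$; and (b) verify that the minorizer gap $\mathcal G(\rho,\xi)$ and the numerator likelihood-ratio integral are compatible with these exponential rates, which is precisely where Assumptions \ref{ass:1}(iv) and \ref{ass:3} are needed. The local quadratic KL expansion, the Normal--Gamma prior-mass lower bound, and the Chebyshev control on $\int\Lambda_n\,d\rho$ are routine and essentially inherited from the proof of Theorem \ref{theorem-alpha-less-1-variational-risk-bound-Type-I}.
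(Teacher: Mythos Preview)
Your high-level strategy—evidence lower bound plus testing on Hellinger shells—is valid and would yield a bound of the correct order, but it differs from the paper's route and does not recover the constant $(D+\tfrac{3c}{2})$ as stated. The paper does \emph{not} bound $\widehat q(U_n)$ and then integrate. Instead it starts from the test-weighted identity $\mathbb E_{\theta_0}[e^{\ell_n(\theta,\theta_0)}(1-\phi_{n,\varepsilon})]\le e^{-cn\mathcal H^2(\theta\parallel\theta_0)\mathds 1(\mathcal H^2\ge\varepsilon^2)}$, integrates against the prior restricted to a sieve, and applies Donsker--Varadhan with $\rho=q^\star$ (restricted). This produces the \emph{integrated} Hellinger risk on $\{\mathcal H^2\ge\varepsilon_n^2\}$ directly on the left side of a single inequality; the term $\tfrac{cn\varepsilon_n^2}{2}$ appears from the Markov step on $\phi_{n,\varepsilon}$, and the term $Dn\varepsilon_n^2$ appears after invoking variational optimality to replace $\mathcal I(q^\star,\xi^\star)+\mathrm{KL}(q^\star\parallel\pi)$ by $\mathcal I(\tilde q,\tilde\xi)+\mathrm{KL}(\tilde q\parallel\pi)$. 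Adding the trivial $c\varepsilon_n^2$ from the region $\{\mathcal H^2\le\varepsilon_n^2\}$ yields exactly $(D+\tfrac{3c}{2})\varepsilon_n^2$. In your shell argument, by contrast, you need $M$ large enough that $\widehat q(U_n)$ is exponentially small; setting $cM=D+\tfrac{3c}{2}$ does not in general make the exponent $-cnM\varepsilon_n^2/2+Dn\varepsilon_n^2$ negative, so the claimed coefficient is not justified as written.

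Two smaller points. First, your ``local quadratic expansion of the per-observation KL'' is not how the paper handles the prior-mass step: it bounds the log-pseudo-likelihood ratio $\Delta(\theta,\theta_0)=\Delta_1+\Delta_2$ directly, using the $K$-Lipschitz property of $t\mapsto h(t^2)$ (Assumption~\ref{ass:1}(iv)) for $\Delta_1$ and a second-order Taylor remainder for the Jensen gap $\Delta_2$, with the data-dependent choice $\tilde\xi_i=\tau_0|y_i-\mathbf x_i^\top\beta_0|$. This yields the containment $\{\lVert\beta-\beta_0\rVert_2\lesssim\varepsilon_n^2,\ |\tau^2-\tau_0^2|\lesssim\varepsilon_n^3\}\subset\mathcal B_n(\theta_0,\varepsilon_n)$, not a $\varepsilon_n$-ball. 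Second, the $4\varepsilon_n$ in the probability comes from three Markov applications on the empirical absolute moments $\mathcal E_k(y,\mathbf X,\beta_0)$ for $k\in\{1,2,4\}$ (Assumption~\ref{ass:3}) used inside that bound, inherited from the Type~I proof of Theorem~\ref{theorem-alpha-less-1-variational-risk-bound-Type-I}; it is not a catch-all for ``extra Markov steps on $\mathcal G$ and on the numerator integral''.
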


\begin{theorem}[$\mathcal{H}^{2}(\theta \parallel \theta_0)$ risk bound for Type II $\mathsf{SSG}$ likelihoods]
\label{theorem-alpha-equals-1-variational-risk-bound-Type-II}
Under $p(y\mid \mathbf{X}, \beta_0)$ in \eqref{eq:SSG-typeII-likelihood} and any $\varepsilon_n \in \left(0, \frac{1}{3}\right)$, with $\mathbb{P}_{\beta_0}$-probability at least $(1 - 2\varepsilon_n - [(D-1)^2n\varepsilon_n^2]^{-1} - 2e^{-\tfrac{c n \varepsilon_n^2}{2}})$, the variational risk with respect to $\mathcal{H}^{2}(\beta \parallel \beta_0)$ satisfies:
\begin{align}\label{eq:type-2-risk-bound-alpha-1}
    \begin{split}
    c\int_{\beta\in \Theta}\mathcal{H}^{2}(\beta \parallel \beta_0)\pi_{1}(\beta\mid \mathcal{D}_n, \xi^{\star})d\beta
        \leq \left(D + \frac{3c}{2}\right)\varepsilon_n^2 + \frac{p\log p}{n} + \frac{C_4 p}{n}\log\left(\frac{1}{\varepsilon_n}\right),
    \end{split}
\end{align}
for arbitrary $D>1$, $c \in \mathbb{R}^{+}$, and $C_4\in \mathbb{R}^{+}$ depending only on prior hyperparameters $(\mu, \Sigma)$, design matrix $\mathbf{X}$, and true parameter $\beta_0$.
\end{theorem}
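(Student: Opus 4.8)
The plan is to mirror the argument behind Theorem~\ref{theorem-alpha-equals-1-variational-risk-bound-Type-I}, replacing the Normal--Gamma computations by their Gaussian analogues and exploiting that for the Bernoulli-type family $h(t)=-\log[2\cosh(\sqrt t/2)]$ the functions $h,h',h''$ are globally bounded (so no moment condition is needed). Work on the compact truncated $\Theta$, on which $\beta\mapsto\log p(y_i\mid\mathbf x_i,\beta)$ is bounded and Lipschitz and $\mathcal H^{2}(\beta\parallel\beta_0)$ is comparable, up to constants depending only on $\mathbf X$ and $\Theta$, to $\lVert\beta-\beta_0\rVert_2^{2}$. The starting point is the $\mathsf{ELBO}$ optimality of $q^\star:=\pi_1(\cdot\mid\mathcal D_n,\xi^\star)=\mathcal N_p(\cdot\mid\mu_1(\xi^\star),\Sigma_1(\xi^\star))$ (truncated): since $q^\star$ maximizes $\mathcal L(\cdot,\xi^\star)$ over $\mathcal P_\Theta$ and $\xi^\star$ maximizes $\mathsf L$, for every competitor $\rho\in\mathcal P_\Theta$,
\[
\int\log\varphi_1(y\mid\mathbf X,\beta,\xi^\star)\,q^\star(d\beta)-\mathrm{KL}(q^\star\Vert\pi)\ \ge\ \int\log\varphi_1(y\mid\mathbf X,\beta,\xi^\star)\,\rho(d\beta)-\mathrm{KL}(\rho\Vert\pi).
\]
Writing $r_n(\beta):=\log\{p(y\mid\mathbf X,\beta)/p(y\mid\mathbf X,\beta_0)\}$ and the nonnegative minorizer gaps $G_n(\beta):=\sum_{i\in[n]}\{\log p(y_i\mid\mathbf x_i,\beta)-\log\varphi(y_i\mid\mathbf x_i,\beta,\xi^\star_i)\}\ge0$ (Proposition~\ref{lemma:tangent-lower-bound}), substituting $\log\varphi_1=\log p-G_n$ and discarding the nonnegative terms appearing with the favourable sign gives the oracle inequality $\int r_n(\beta)q^\star(d\beta)\ge\int r_n(\beta)\rho(d\beta)-\int G_n(\beta)\rho(d\beta)-\mathrm{KL}(\rho\Vert\pi)$.

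The two classical ingredients control the right-hand side. For prior concentration, take $\rho$ to be $\mathcal N_p(\mu,\Sigma)$ conditioned on the ball $B(\beta_0,\varepsilon_n)\subset\Theta$: a standard small-ball estimate gives $\mathrm{KL}(\rho\Vert\pi)=-\log\pi(B(\beta_0,\varepsilon_n))\lesssim p\log(1/\varepsilon_n)+p\log p$ (the $p\log p$ arising from the log-volume of the $p$-ball, hence isolated with coefficient one), while on $B(\beta_0,\varepsilon_n)$ one has $-\mathbb E_{\beta_0}r_n(\beta)=\sum_i\mathrm{KL}_i\lesssim n\varepsilon_n^{2}$ by the quadratic control of the per-observation KL on $\Theta$, and the fluctuation of $\int(r_n(\beta)-\mathbb E_{\beta_0}r_n(\beta))\rho(d\beta)$ (mean zero, variance $\lesssim n\varepsilon_n^2$) is handled by Markov's inequality, yielding the term $D\varepsilon_n^{2}$ on an event of $\mathbb P_{\beta_0}$-probability at least $1-[(D-1)^2n\varepsilon_n^2]^{-1}$; the extra $2\varepsilon_n$ is lost exactly as in the proof of Theorem~\ref{theorem-alpha-less-1-variational-risk-bound-Type-II}, in controlling the normalising constant. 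The gap term is bounded using the fixed-point identity \eqref{eq:TAVIE-xi-update}, $(\xi^\star_i)^2=\mathbb E_{q^\star}[(\mathbf x_i^\top\beta)^2]=(\mathbf x_i^\top\mu_1(\xi^\star))^2+\mathbf x_i^\top\Sigma_1(\xi^\star)\mathbf x_i$, together with $\Sigma_1(\xi^\star)\preceq Cn^{-1}I_p$ (from Assumption~\ref{ass:2} and $A(\xi_i)<0$ in \eqref{eq:typeII-variational-hyperparameters}) and a preliminary consistency $\lVert\mu_1(\xi^\star)-\beta_0\rVert_2=o(1)$ of the variational mean (a crude version of the present bound, bootstrapped): since each Bregman term $h((\mathbf x_i^\top\beta)^2)-h((\xi^\star_i)^2)-h'((\xi^\star_i)^2)((\mathbf x_i^\top\beta)^2-(\xi^\star_i)^2)$ is $O(|(\mathbf x_i^\top\beta)^2-(\xi^\star_i)^2|)$ by boundedness of $h'$, this forces $\int G_n(\beta)\rho(d\beta)\lesssim n\varepsilon_n^{2}+p$, absorbed into the stated bound.

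The genuinely new step, relative to the $\alpha<1$ case, is to lower-bound $\int r_n(\beta)q^\star(d\beta)$ by $-\,cn\!\int\mathcal H^{2}(\beta\parallel\beta_0)q^\star(d\beta)$ up to a stochastic slack. For $\alpha<1$ this is immediate from $\mathbb E_{\beta_0}[(p_\beta/p_{\beta_0})^\alpha]=e^{-(1-\alpha)nD_\alpha(\beta,\beta_0)}$ via Fubini and Markov; at $\alpha=1$ that device is unavailable, so I would run a uniform Hellinger-testing argument over the compact $\Theta$: using $\mathbb E_{\beta_0}[e^{\frac12 r_n(\beta)}]=\prod_i(1-\tfrac12 H_i^2(\beta))\le e^{-\frac n2\mathcal H^{2}(\beta\parallel\beta_0)}$ and $\mathbb E_{\beta_0}r_n(\beta)=-\sum_i\mathrm{KL}_i\le-n\mathcal H^2(\beta\parallel\beta_0)$, cover $\Theta$ by an $\varepsilon_n$-net of cardinality $\log N\lesssim p\log(1/\varepsilon_n)$, apply a Bernstein/sub-exponential tail to the centered bounded increments of $r_n$ at the net points (with the slicing over the magnitude of $\mathcal H^2$ that keeps a $-\tfrac{cn}{2}\mathcal H^2$ surviving), and pass to all $\beta\in\Theta$ by Lipschitzness, obtaining on an event of $\mathbb P_{\beta_0}$-probability at least $1-2e^{-cn\varepsilon_n^2/2}$ the uniform bound $r_n(\beta)\le-\tfrac{cn}{2}\mathcal H^{2}(\beta\parallel\beta_0)+Cn\varepsilon_n^{2}+Cp\log(1/\varepsilon_n)$, hence the same bound after integrating against $q^\star$. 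Substituting into the oracle inequality, collecting the prior-concentration and gap bounds, and dividing by $n$ yields $c\int\mathcal H^{2}(\beta\parallel\beta_0)q^\star(d\beta)\le(D+\tfrac{3c}{2})\varepsilon_n^{2}+\tfrac{p\log p}{n}+\tfrac{C_4 p}{n}\log(1/\varepsilon_n)$ on the intersection of the three events, whose complement has probability at most $2\varepsilon_n+[(D-1)^2n\varepsilon_n^2]^{-1}+2e^{-cn\varepsilon_n^2/2}$, as claimed. I expect the uniform Hellinger-testing step to be the main obstacle --- in particular folding the covering and Bernstein constants into $C_4$ with no residual $\varepsilon_n$-dependence --- with the bootstrapped consistency of $\mu_1(\xi^\star)$ needed to tame the minorizer gap being the secondary technical point; the remaining prior-mass and boundedness computations are routine given compactness of $\Theta$ and the explicit form of $h$ for the Bernoulli-type likelihood.
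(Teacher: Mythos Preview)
Your high-level architecture (oracle inequality from $\mathsf{ELBO}$ optimality, prior small-ball bound, Hellinger-testing step to pass from $r_n$ to $-cn\mathcal H^2$) is in the right spirit, but there are two substantive departures from the paper's argument, one of which is a genuine gap.

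\textbf{The minorizer-gap handling.} You freeze $\xi^\star$ in the oracle inequality and are then forced to control $\int G_n(\beta)\rho(d\beta)$ with $G_n$ depending on $\xi^\star$; this is what drives you to the ``bootstrapped consistency of $\mu_1(\xi^\star)$'' step, which you correctly flag as the secondary obstacle but which is in fact circular (you need $\lVert\mu_1(\xi^\star)-\beta_0\rVert_2=o(1)$ to prove a bound that would imply it). The paper avoids this entirely: since $(q^\star,\xi^\star)$ jointly maximizes $\mathcal L(q,\xi)$ (Lemma~\ref{lemma-variational-optimizer}), one has $\mathcal I(q^\star,\xi^\star)+\mathrm{KL}(q^\star\Vert\pi)\le \mathcal I(\tilde q,\tilde\xi)+\mathrm{KL}(\tilde q\Vert\pi)$ for \emph{any} pair $(\tilde q,\tilde\xi)$. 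Choosing $\tilde\xi_i=|\mathbf x_i^\top\beta_0|$ makes the tangent minorizer touch the likelihood at $\beta_0$, so the Jensen gap collapses on the ball $B(\beta_0,\varepsilon_n)$ (the $\Delta_2$ computation in the proof of Theorem~\ref{theorem-alpha-less-1-variational-risk-bound-Type-II}), with no need for any information about $\xi^\star$ or $\mu_1(\xi^\star)$. You should use joint optimality in $(q,\xi)$, not just in $q$.

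\textbf{The Hellinger step.} Your proposed covering/Bernstein route is different from the paper's, which invokes the Ghosal--van der Vaart testing machinery (Lemma~\ref{lemma:hell-risk-assumption}) to obtain a test $\phi_{n,\varepsilon}$ with exponential type-I/II errors, then works with $\mathbb E_{\theta_0}[e^{\ell_n(\theta,\theta_0)}(1-\phi_{n,\varepsilon})]\le e^{-cn\mathcal H^2}$ via Donsker--Varadhan, splitting over $\mathcal F_{n,\varepsilon}$ and its complement. Your hands-on covering argument could in principle reproduce this, but the ``bounded increments'' claim is problematic for the Negative-Binomial case ($b_i=y_i+m_i$ is unbounded), so a straight Bernstein bound at net points does not obviously apply; the paper's route sidesteps this by citing the abstract test-existence result on compact $\Theta$. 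The remaining pieces (prior mass bound, $\tilde q$ as the restricted prior, Chebyshev for the $D\varepsilon_n^2$ term) match the paper.
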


Following Remark \ref{remark:near-minimaxity} and taking $\varepsilon_n^2 \asymp n^{-1} p\log n$, the variational risk bounds in Theorems \ref{theorem-alpha-equals-1-variational-risk-bound-Type-I} and \ref{theorem-alpha-equals-1-variational-risk-bound-Type-II} retain {near-minimax} optimality up to logarithmic factors. All bounds in Theorems~\ref{theorem-alpha-less-1-variational-risk-bound-Type-I}-\ref{theorem-alpha-equals-1-variational-risk-bound-Type-II} are {non-asymptotic}.
The associated constants appear in the theorem proofs.
Furthermore, Section~\ref{sec:empirical-evidence-risk-bounds} of Supplementary Materials evidences the empirical validation of the risk bounds.


\section{\texorpdfstring{$\tssg$}{TAVIE-SSG} in Action}\label{sec:TAVIE-SSG-applications}

We showcase $\tssg$'s empirical performance across multiple simulated experiments and challenging real-world applications including Bayesian quantile regression (BQR) on the U.S. 2000 Census dataset~\citep{yang-FastQR-pmlr} and spatial count regression for spatial transcriptomics~\citep{STARmap}. It is benchmarked against state-of-the-art VI methods, including ADVI MF and FR (mean-field and full-rank)~\citep{advi}, DADVI~\citep{dadvi}, and MFVI~\citep{mfvi-student}, along with an efficient MC sampler viz., \texttt{PyMC}’s {No-U-Turn Sampler} (NUTS)~\citep{pymc}. Across all applications, $\tssg$ demonstrates strong statistical accuracy and computational efficiency, while competing methods exhibit less stable and less consistent performance.

\subsection{Student's-\texorpdfstring{$t$}{t} Type I \texorpdfstring{$\ssg$}{SSG} Likelihood}
\label{subsec:sim-exp-student}

For $\nu \in \mathbb{N}$, the Student's-$t$ model, $y_i\mid \theta, \nu\stackrel{\mathrm{ind.}}{\sim} t_{\nu}(\mu_i = \mathbf{x}_i^{\top}\beta, \tau)$, follows from \eqref{eq:SSG-typeI-likelihood}.
Data $\mathcal{D}_n$ is simulated using: $\nu=5$, $\tau^2=3$, $\beta\sim \mathcal{N}_{p+1}(0, I_{p+1})$, and $\mathbf{X}=(\mathbf{x}_1, \ldots, \mathbf{x}_{n})^{\top}\in \mathbb{R}^{n\times \overline{p+1}}$ with $x_{i1}=1$ and $x_{ij} \stackrel{\mathrm{i.i.d.}}{\sim} \mathcal{N}_1(0, 1)$ for $j=2,\ldots,p+1$.
We consider two experiments: (E1) $n\in \{200, 500, 1000, 2000\}$ with fixed $p=8$ and (E2) $p\in \{3, 8, 15, 20\}$ with fixed $n=1000$. Under both, $\tssg$ is implemented with $\alpha=1$, prior hyperparameters $(\mu, \Sigma, a, b) = (0, I_{p+1}, 0.025, 0.025)$, and convergence tolerance of $10^{-9}$; competing methods are executed as described in Section~\ref{app-competing-methods-details} of Supplementary Materials. Performance is assessed through {mean-squared error} (MSE) of $\widehat{\theta} = (\widehat{\beta}^{\top}, \widehat{\tau}^{2})^{\top}$, defined as $(p+1)^{-1}\lVert \beta - \widehat{\beta}\rVert_{2}^{2}$ and $(\tau^2 - \widehat{\tau}^{2})^2$, runtime, and in experiment E1, sliced Wasserstein (SW) distance~\citep{bonneel2015sliced,kolouri2016sliced} between the true and variational posteriors. Numerical results based on $100$ replications for MSE and runtime, and $10$ replications for posterior distance are visualized in Figures \ref{fig:student_singlep_multin} and \ref{fig:student_singlen_multip} and tabulated in Sections~\ref{app-avg-metrics-student-singlep-multin} and \ref{app-avg-metrics-student-singlen-multip} of Supplementary Materials.

{
\renewcommand{\baselinestretch}{1.0}\normalsize
\begin{figure}[H]
    \centering
    \includegraphics[width=0.9\linewidth]{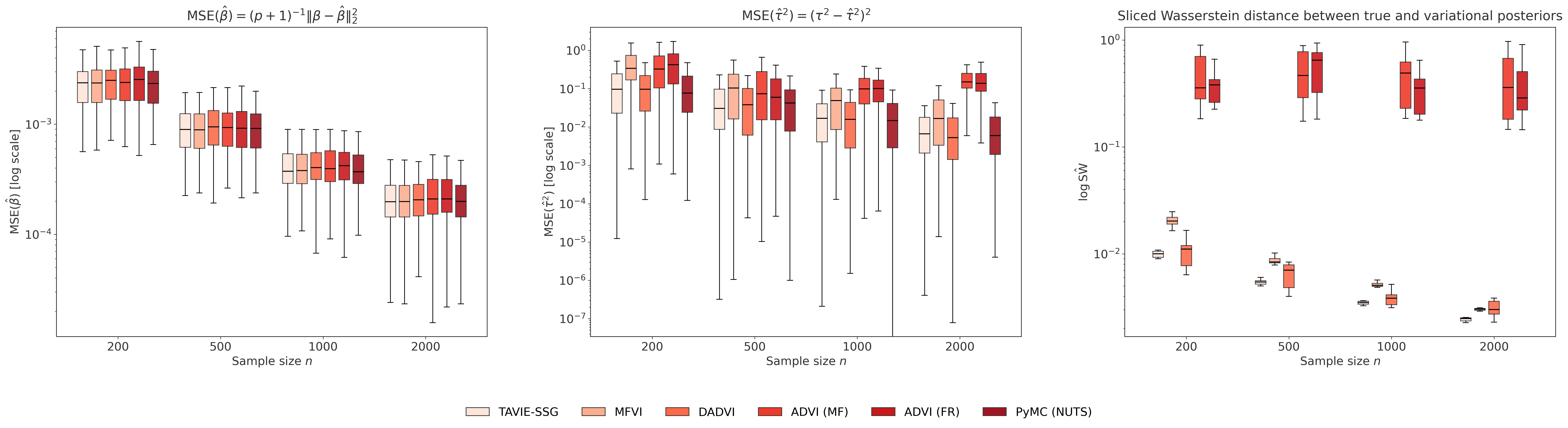}
    \caption{\footnotesize{MSEs of $(\widehat{\beta}, \widehat{\tau}^2)$ and sliced Wasserstein (SW) distances, both on the $\log$-scale, for $\tssg$ and competing methods under the Student’s-$t$ $\mathsf{SSG}$ likelihood ($\nu=5$) in experiment E1.
    }}
    \label{fig:student_singlep_multin}
\end{figure}
}

{
\renewcommand{\baselinestretch}{1.0}\normalsize
\begin{figure}[H]
    \centering
    \includegraphics[width=0.7\linewidth]{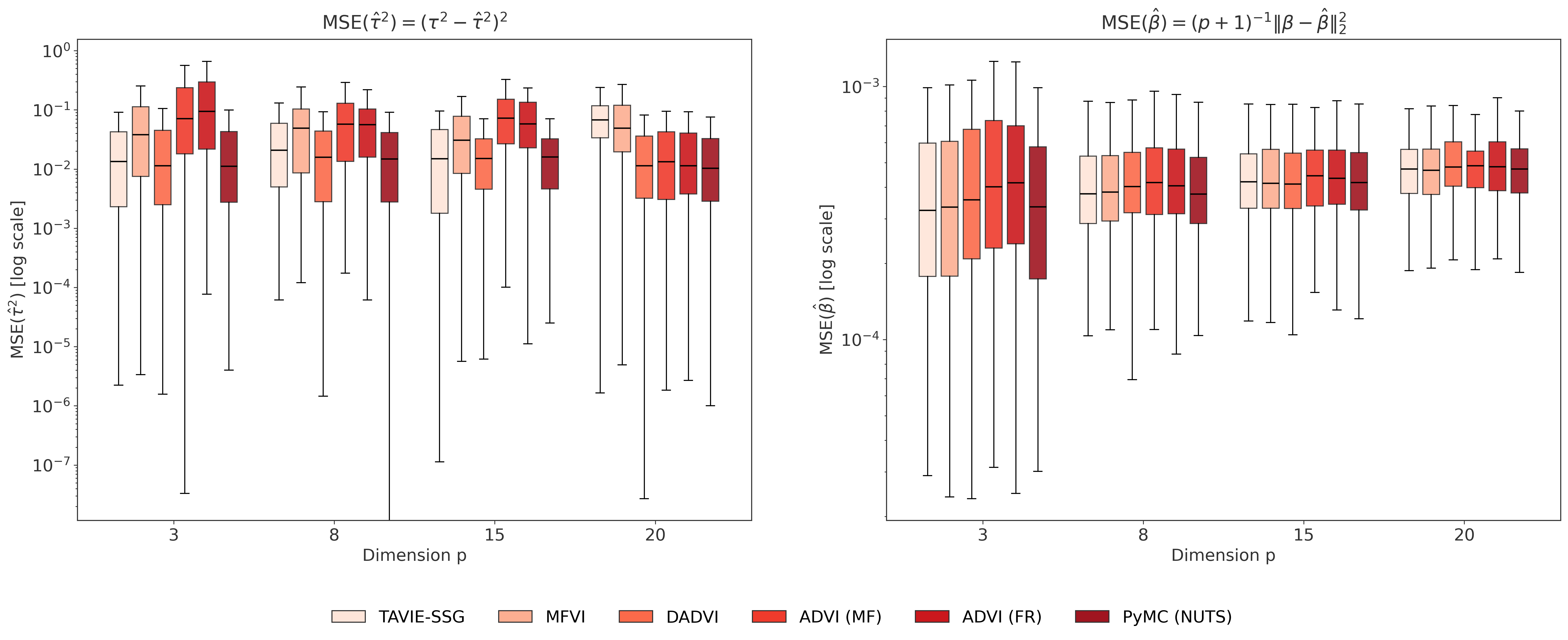}
    \caption{\footnotesize{MSEs of $(\widehat{\beta}, \widehat{\tau}^2)$ (in $\log$-scale) 
    for $\tssg$ and competitors under the Student's-$t$ $\mathsf{SSG}$ likelihood ($\nu=5$) in experiment E2.}}
    \label{fig:student_singlen_multip}
\end{figure}
}

In experiment E1 (Figure \ref{fig:student_singlep_multin}), $\tssg$, DADVI, and \texttt{PyMC} (NUTS) deliver highly accurate estimates of $\tau^2$, with substantially lower MSEs than ADVI (MF/FR). For $\beta$-estimation, all methods perform competitively. Moreover, the posterior distance comparisons show that $\tssg$, DADVI, and MFVI produce closer approximations to the true posterior than ADVI variants.
The MFVI algorithm proposed by~\cite{mfvi-student} uses a latent variable reformulation of the Student’s-$t$ likelihood, whereas $\tssg$ achieves better performance for $\tau^{2}$-estimation by directly banking on the intrinsic geometry of the likelihood. When $p$ increases in experiment E2 (Figure \ref{fig:student_singlen_multip}), $\tssg$ continues to deliver competitive accuracy. Across both experiments, Figure~\ref{fig:runtime_student} shows that $\tssg$ is also markedly more computationally efficient, with runtimes orders of magnitude smaller than the competitors.

For convergence diagnostics, we track the optimization objectives of $\tssg$ and ADVI (MF/FR) (Section~\ref{app-convergence-ELBO-student} of Supplementary Materials). For $\tssg$, this is $\mathsf{L}(\xi)$ in~\eqref{eq:ELBO-general}, while for ADVI (MF/FR) it is the MC approximation of the true $\mathsf{ELBO}$ $\mathcal{L}_{\mathrm{true}}(q)$. Additional results for other $\ssg$ models, different choices of $\alpha$, and regimes with $p$ comparable to $n$ are provided in Section~\ref{app:additional-simulation-studies} of Supplementary Materials.

\subsection{Bayesian Quantile Regression for U.S. 2000 Census Data}\label{subsec:Census-data-study}
The Census 2000 state-level dataset~\citep{yang-FastQR-pmlr} contains demographic records for a $5\%$ sample of the U.S. population. We model $\log$ annual salary on the demographic attributes (gender, age, race, marital status, and education), where $n=5\times 10^6$ and $p=11$. 

BQR treats the quantile loss, $\rho_u(x):= x(u - \mathds{1}(x<0))$, where $x\in \mathbb{R}$ and $0 < u < 1$ is the $u$th quantile; as the negative log-likelihood yielding the asymmetric Laplace distribution (ALD)~\citep{Yu-Moeed-Bayesian-QR}, $p_{\mathrm{ALD}}(x\mid \tau, u) := 2\tau u(1-u)\exp\left\{-2\tau \rho_u(x)\right\}$, for $\tau\in \mathbb{R}^{+}$.
Following~\cite{yang-FastQR-pmlr}, we consider $\tau=\tau_0$ (known) and the standard likelihood setup ($\alpha=1$). The joint likelihood for our purpose is $p(y\mid \mathbf{X}, \beta) = \prod_{i\in [n]}p_{\mathrm{ALD}}(y_i - \mathbf{x}_i^{\top}\beta\mid \tau_0, u)$.
%

We adapt the $\tssg$ framework to BQR using the general construction of Section \ref{section:methods} for the ALD likelihood.
The corresponding implementation details are in Section~\ref{subsec:TAVIE-BQR-setup} of Supplementary Materials. We evaluate the BQR performance of $\tssg$ in Section \ref{subsubsec:TAVIE-SSG-BQR}, followed by a comparative analysis against competing methods in Section \ref{subsubsec:tavie-ssg-BQR-competitors}.

\subsubsection{Performance of \texorpdfstring{$\tssg$}{TAVIE-SSG} on Full Census Dataset}\label{subsubsec:TAVIE-SSG-BQR}

Using $\tau_0=1$ and the prior $\beta\sim\mathcal N_p(\mu,\Sigma)$ with $\mu=0$ and $\Sigma=I_p$, we implement $\tssg$ on the full Census dataset $(n=5\times10^6)$ independently across quantiles $u\in\{0.05,\ldots,0.95\}$, monitoring convergence with $\texttt{tol}=10^{-6}$.

{
\renewcommand{\baselinestretch}{1.0}\normalsize
\begin{figure}[t!]
    \centering
    \begin{subfigure}[t]{0.38\textwidth}
        \centering
        \includegraphics[width=\linewidth]{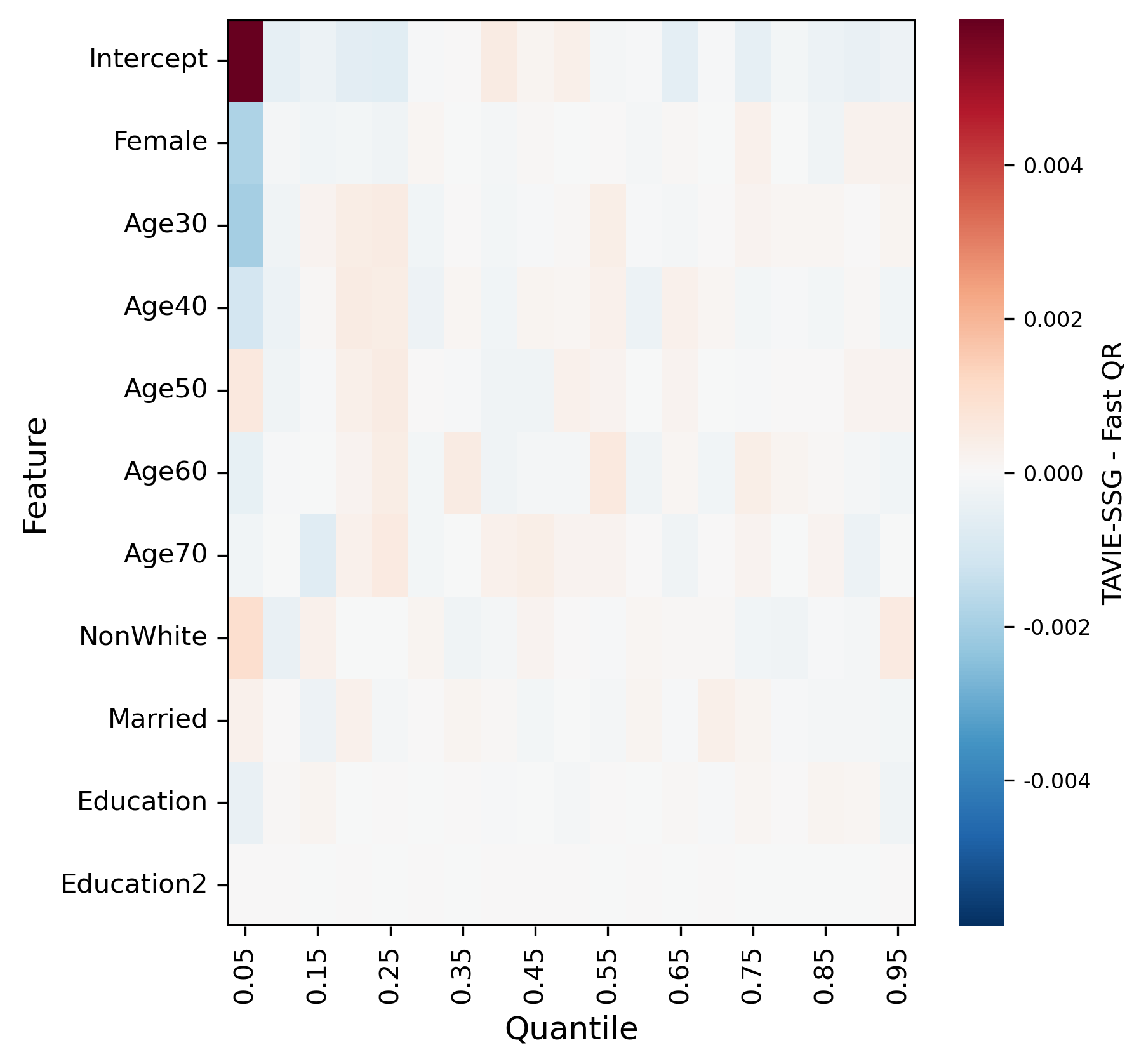}
        \caption{\footnotesize{Heatmap of differences between the $\tssg$ and FAST QR estimates on full Census dataset.}}
        \label{fig:heatmap_TAVIEQR_FASTQR}
    \end{subfigure}
    \hfill
    \begin{subfigure}[t]{0.54\textwidth}
        \centering
        \includegraphics[width=\linewidth]{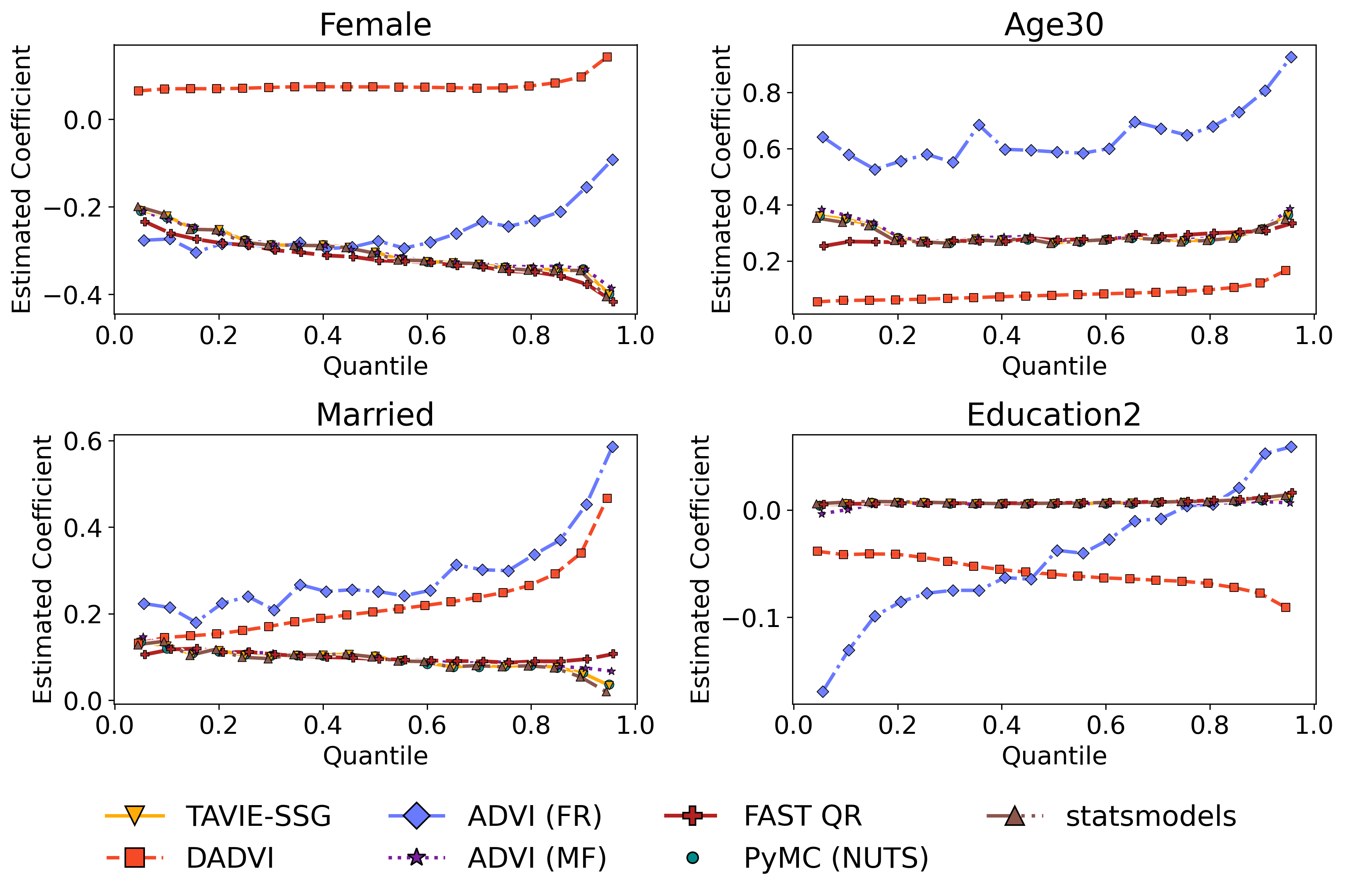}
        \caption{\footnotesize{Comparison of selected $\tssg$ estimates and competitors on sub-sampled ($n=10^4$) dataset.}}
        \label{fig:census_estimates_competing}
    \end{subfigure}
    
    \caption{\footnotesize{BQR performance of $\tssg$ and competing methods on U.S. 2000 Census dataset.}}
    \label{fig:TAVIE_QR_estimates_against_FAST_QR}
\end{figure}
}
For benchmarking, we compare against FAST QR~\citep{yang-FastQR-pmlr}, a large-scale approximate quantile regression method. The heatmap in Figure~\ref{fig:heatmap_TAVIEQR_FASTQR}, together with the $\tssg$ coefficient estimates and $95\%$ point-wise credible intervals in Figure~\ref{fig:tavie_QR_estimates_census_all} of Supplementary Materials, shows that $\tssg$ attains accuracy competitive with FAST QR while efficiently performing large-scale BQR. This comparison is particularly appropriate since both methods estimate quantiles independently, in lieu of joint modeling.

\subsubsection{Comparative Analysis of \texorpdfstring{$\tssg$}{TAVIE-SSG} against Competing Methods}\label{subsubsec:tavie-ssg-BQR-competitors}

We further compare $\tssg$ against DADVI, ADVI (MF/FR), \texttt{PyMC} (NUTS), and \texttt{statsmodels}~\citep{seabold2010statsmodels}, using original FAST QR estimates as reference. Due to the inability of competitors to scale to the full Census dataset, we sub-sample $n=10^4$ observations and run all methods except FAST QR. $\tssg$ uses the same configuration as above, while competitor settings are in Section~\ref{app-competing-methods-details} of Supplementary Materials.

Figure \ref{fig:census_estimates_competing} reports the comparison, with additional features shown in Figure~\ref{fig:census_estimates_competing_methods_n_10000_all} of Supplementary Materials. DADVI and ADVI (FR) exhibit pronounced instability in coefficient estimates across features and quantiles, diverging notably from FAST QR. The trajectory of the MC approximation of ADVI (FR)'s true $\mathsf{ELBO}$ (Section~\ref{subsec:BQR-additional-plots} of Supplementary Materials) confirms that estimation instability persists despite algorithmic convergence. In contrast, $\tssg$, \texttt{PyMC} (NUTS), and \texttt{statsmodels} remain closely aligned with FAST QR estimates. However, ADVI (MF) achieves comparable accuracy only with more iterations and a larger learning rate, incurring additional computational cost.

Figure~\ref{fig:runtime_BQR} reports runtimes across the considered quantiles, showing that $\tssg$ is orders of magnitude faster than the competing VI and MC methods. Among the BBVI approaches, DADVI has the largest computational overhead and, despite being approximate, it is less stable and more expensive than \texttt{PyMC} (NUTS). Altogether, these results establish $\tssg$ as an accurate, scalable, and computationally efficient approach for large-scale BQR.

\subsection{Predicting Spatial Gene Expressions in STARmap Data}\label{subsec:STARmap-data-study}

Spatially resolved transcript amplicon readout mapping (STARmap)~\citep{STARmap} enables high-resolution gene-expression profiling across spatial locations at single-cell precision. We analyze the primary visual cortex dataset from $4$ mice in~\cite{STARmap}, focusing on one representative tissue sample with $G=160$ genes measured over $n=941$ spatial locations. Spatial gene expression counts are modeled using a Negative-Binomial likelihood to capture complex spatial patterns.

Let $y_g = (y_{g1}, \ldots, y_{gn})^{\top} \in \mathbb{Z}_{+, 0}^{n}$ denote the gene expression counts for gene $g\in [G]$ across $n$ spatial locations $S = \{s_i: i\in [n]\}  \subset \mathbb{R}^{2}$. To model smooth spatial variation, we map each spatial location in $S$ to $p=27$ two-dimensional cubic $B$-spline~\citep{deBoor1978} basis functions yielding $\mathbf{X}\in \mathbb{R}^{n\times p}$.
For each gene, we fit a Negative-Binomial model, $y_{gi}\mid \mathbf{x}_i, \beta_{0, g}, \beta_{g}\sim \mathrm{NB}(m_i, p_{gi})$ having probability mass function as in Section~\ref{sec:parameterization-details} of Supplementary Materials. We set the mean count proportional to library size, $m_i = \sum_{g\in [G]}y_{gi}$~\citep{Robinson2010TMM}.
Success probabilities are modeled as $p_{gi} = \sigma(\beta_{0, g} + \mathbf{x}_i^{\top}\beta_g)$, where $\beta_{0, g}$ is the intercept and $\sigma$ is the sigmoid function. 
$\tssg$ is implemented under $\alpha=1$ with a Gaussian prior over $(\beta_{0,g}, \beta_{g}^{\top})^{\top}$ having $\mu = 0$ and $\Sigma = \mathrm{diag}(10, I_p)$. Convergence is monitored using $\texttt{tol}=10^{-9}$. Performance is comapred with DADVI, ADVI (MF/FR), and \texttt{PyMC} (NUTS), executed under configurations in Section~\ref{app-competing-methods-details} of Supplementary Materials.

{
\renewcommand{\baselinestretch}{1}\normalsize
\begin{figure}[t]
    \centering
    \begin{subfigure}[t]{0.48\textwidth}
        \centering
        \includegraphics[width=1\linewidth]{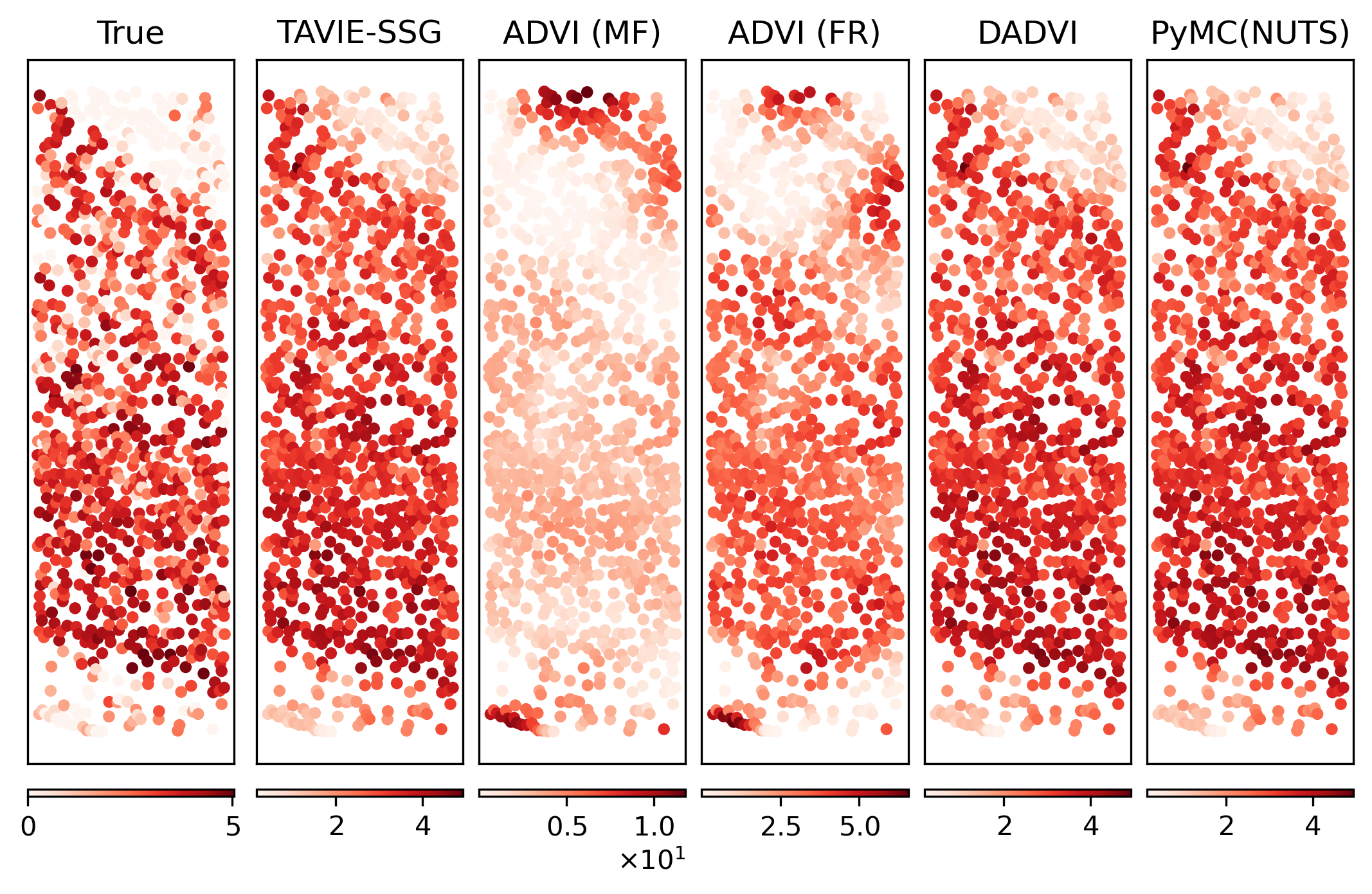}
        \caption{\footnotesize{\texttt{Slc17a7} gene expressions.
        }}
        \label{fig:gene1_pred}
    \end{subfigure}
    \hfill
    \begin{subfigure}[t]{0.48\textwidth}
        \centering
        \includegraphics[width=1\linewidth]{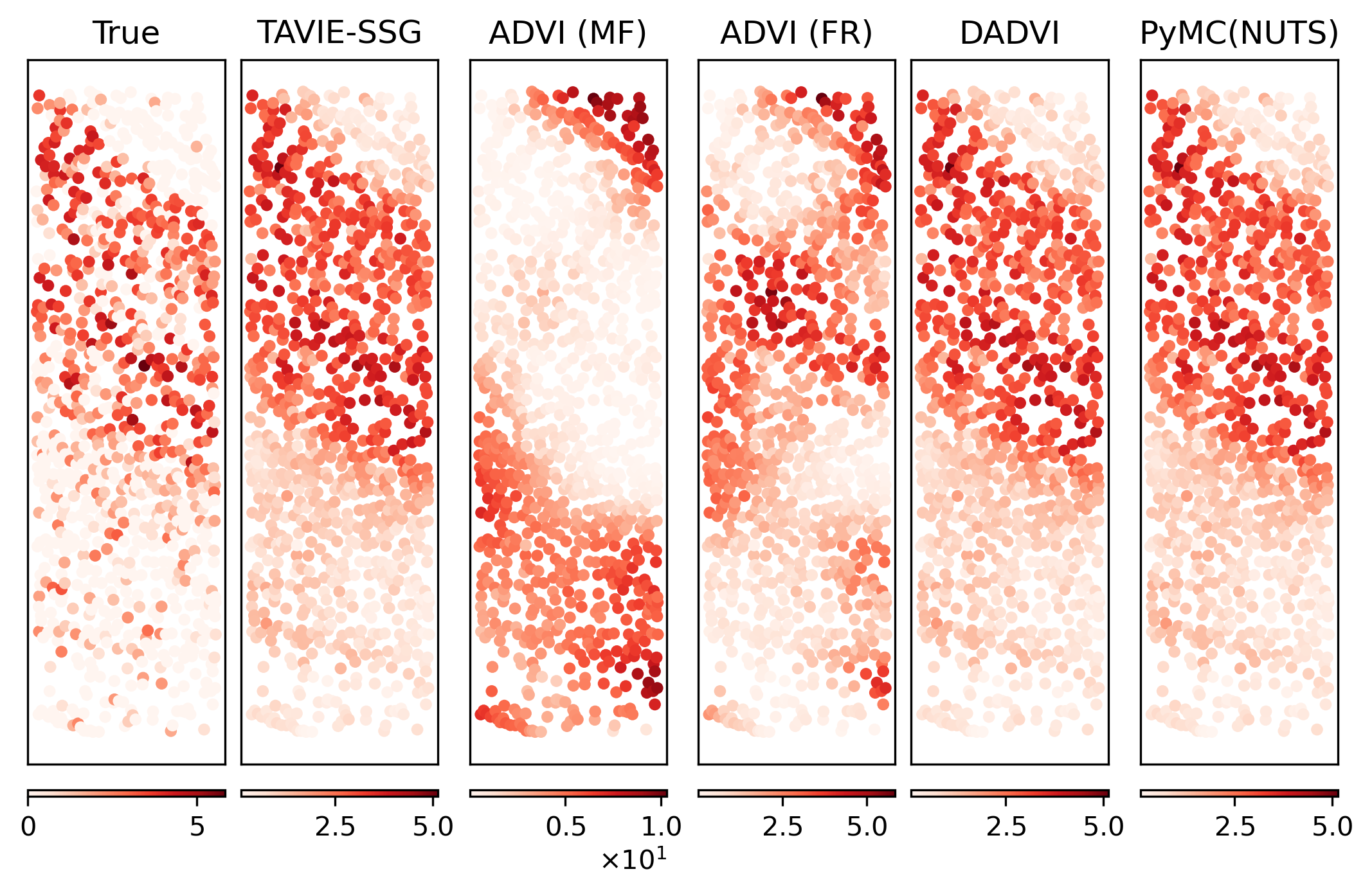}
        \caption{\footnotesize{\texttt{Pcp4} gene expressions.
        }}
        \label{fig:gene2_pred}
    \end{subfigure}
    
    \caption{\footnotesize{Log-normalized true and predicted gene expression counts obtained from $\tssg$, ADVI (MF/FR), DADVI, and \texttt{PyMC} (NUTS).}}
    \label{fig:gene_pred}
\end{figure}
}


Figure~\ref{fig:gene_pred} shows that $\tssg$, DADVI, and \texttt{PyMC} (NUTS) achieve comparable predictive performance. For gene \texttt{Slc17a7} (Figure~\ref{fig:gene1_pred}), they accurately recover complex spatial patterns, while ADVI (MF/FR) fails to capture the spatial structure. For \texttt{Pcp4} (Figure~\ref{fig:gene2_pred}), $\tssg$, DADVI, and \texttt{PyMC} (NUTS) identify high-expression regions, whereas ADVI's black-box optimization produces poor predictions. Similar patterns for additional genes are reported in Section~\ref{sec:STARmap-additional-results} of Supplementary Materials. These conclusions are further corroborated by Pearson residual sum of squares comparisons across all $G=160$ genes in Figure~\ref{fig:heatmap} of Supplementary Materials, where $\tssg$, DADVI, and \texttt{PyMC} (NUTS) maintain uniformly low residuals, unlike the systematically larger errors of ADVI (MF/FR).


\section{Discussion}\label{section:discussion}
The versatility of $\tssg$ opens several compelling future research directions. First, a natural extension is its integration with modern sparsity-inducing priors, hierarchical formulations, and mixture models, with potential applications in high-dimensional variable selection, high-throughput genomic studies, representation learning, and risk modeling. Such extensions may, however, lose the closed form variational updates available in the present setting, thereby necessitating further approximations and more involved optimization.
Second, the convergence guarantees may be sharpened by refining the \L{}ojasiewicz exponent to $1/2$, or equivalently, by establishing a PL inequality through the positive definiteness of the Hessian of $-\mathsf{L}$. This may require additional structural constraints on the design matrix to ensure local strong convexity that need not hold in general. A rigorous delineation of such conditions, along with corresponding refinement of the convergence analysis, is thus deferred to future work.
Third, motivated by the recent use of variational objectives for scalable model selection~\citep{cheriefabdellatif2019elbo,ZhangYang2024MFVBModelSelection}, a promising future direction is to develop a formal model selection regime for $\tssg$ based on the optimized objective $\mathsf{L}(\xi^{\star})$. Our empirical order analysis in Section~\ref{subsec:empirical-elbo-variational-gap} of Supplementary Materials suggests that the post-convergence variational gap $\log p_{\alpha}(y\mid \mathbf{X}) - \mathsf{L}(\xi^{\star})$ scales on the canonical Bayesian Information Criterion (BIC) order $(p/2)\log n$, indicating that maximizing $\mathsf{L}(\xi^{\star})$ may offer a principled and computationally tractable model selection criterion. Fourth, an important direction is to study $\tssg$ under statistical model misspecification, where controlling both the pseudo-true fractional posterior limit and the additional tangent minorization gap becomes necessary.
Finally, another promising avenue is the construction of {tighter tangent-based lower bounds} for $\ssg$ $\log$-likelihoods following~\cite{durante}.

\section*{Disclosure Statement}
The authors have no conflicts of interest to declare.

\section*{Data Availability Statement}
Data and \texttt{Python} code are available at  \href{https://anonymous.4open.science/r/__TAVIE-SSG__}{\texttt{anonymous.4open.science/r/\_\_TAVIE-SSG\_\_}}.

\section*{Supplementary Materials}

The Supplementary Materials provide parameterization details (Section~\ref{sec:parameterization-details}), analyses of approximation gaps (Section~\ref{sec:gap-discussion}) and computational complexity (Section~\ref{sec:time-complexity-analysis-TAVIE-SSG}) for $\tssg$, sensitivity studies (Section~\ref{sec:sensitivity-alpha}) and calibration guidelines (Section~\ref{sec:calibration-alpha}) for $\alpha$, discussion of assumptions (Section~\ref{sec:discussion-assumptions}), proofs of the main theoretical results (Sections~\ref{app:TAVIE-convergence-proof} and~\ref{app:variational-risk-bounds}), empirical assessment of risk bounds (Section~\ref{sec:empirical-evidence-risk-bounds}), implementation settings for competing methods (Section~\ref{app-competing-methods-details}), and additional empirical results (Sections~\ref{app:additional-student-results}-\ref{sec:STARmap-additional-results}).

\bibliography{bibliography}

\clearpage
\appendix

\setcounter{section}{0}
\setcounter{equation}{0}
\setcounter{figure}{0}
\setcounter{table}{0}
\setcounter{assumption}{0}

\renewcommand\thesection{\Alph{section}}
\renewcommand\thesubsection{\thesection.\arabic{subsection}}
\renewcommand\thesubsubsection{\thesubsection.\arabic{subsubsection}}

\renewcommand{\theequation}{S\thesection.\arabic{equation}}
\renewcommand{\thelemma}{\thesection.\arabic{lemma}}
\renewcommand{\thedefinition}{\thesection.\arabic{definition}}
\renewcommand{\theremark}{\thesection.\arabic{remark}}
\renewcommand{\thefigure}{S\arabic{figure}}
\renewcommand{\thetable}{S\arabic{table}}

\numberwithin{equation}{section}

\startcontents[supplement]
\def\spacingset#1{\renewcommand{\baselinestretch}{#1}\small\normalsize} 

\spacingset{1.3}

\renewcommand\thesection{\Alph{section}}

\begin{center}
{\LARGE\bf Supplementary Materials}
\end{center}

\section*{Table of Contents}
\printcontents[supplement]{}{1}{\setcounter{tocdepth}{3}}

\newpage

\section{Notation}
The set of real and natural numbers are denoted by $\mathbb{R}$ and $\mathbb{N}$, respectively. $\mathbb{R}^{+}$ and $\mathbb{R}_{0}^{+}$ are the sets of positive and nonnegative reals. For integers $n, p \geq 1$, $\mathbb{R}^{n}$ is the $n$-dimensional Euclidean space of real vectors, $\mathbb{R}_{+, 0}^{n}$ is the set of $n$-dimensional nonnegative real vectors, and $\mathbb{R}^{n\times p}$ denotes the space of real matrices of dimension $n\times p$. 
We use $[n] := \{1, \ldots, n\}$ for index sets. The Euclidean (or $\ell_2$) norm of a vector $x\in \mathbb{R}^{p}$ is $\lVert x\rVert_{2}$. For a square matrix $A$, $|A|$ denotes its determinant and $\mathrm{tr}(A)$ its trace. The maximum and minimum eigenvalues of a matrix $A\in \mathbb{R}^{n\times p}$ are $\lambda_{\max}(A)$ and $\lambda_{\min}(A)$, respectively. The maximum Euclidean norm among the rows of $A = (a_1^\top, \ldots, a_n^\top)^\top$ is $\lVert A\rVert_{2, \infty}:= \max\{\lVert a_i\rVert_2: i\in [n]\}$. $I_p$ denotes the $p$-dimensional identity matrix. The Hadamard product $A \circ B$ between matrices $A, B \in \mathbb{R}^{n\times p}$ is defined elementwise by $(A\circ B)_{ij} := A_{ij}B_{ij}$. For a symmetric matrix $A$, the notation $A\succ 0$ indicates that $A$ is positive definite. For two sequences $a_n$ and $b_n$, $a_n = \mathcal{O}(b_n)$ indicates that there exists a constant $C>0$ such that $|a_n| \leq C|b_n|$, and similarly $a_n \gtrsim b_n$ if $a_n \geq C b_n$. Also, $a_n \asymp b_n$ means that both $a_n \lesssim b_n$ and $a_n \gtrsim b_n$ hold. For a differentiable function $f:\mathbb{R}^{p}\to \mathbb{R}$, the gradient with respect to $x$ is denoted by $\nabla_{x}f(x)$, representing the vector of partial derivatives of $f$ evaluated at $x$. $\tfrac{\partial^{k}}{\partial x^{k}}f(x)$ is the $k$th derivative of $f(x)$ for $k\geq 1$. For an extended real-valued function $f:\mathbb{R}^{p} \to (-\infty, \infty]$, the domain of $f$ is $\mathrm{dom}f:=\{x\in \mathbb{R}^{p}: f(x)<\infty\}$. If $f$ is convex but not necessarily differentiable, its subdifferential at a point $x\in \mathrm{dom}f$ is the set $\partial f(x):= \{g\in \mathbb{R}^{p}: f(y) \geq f(x) + \langle g,y-x\rangle\;\text{for all}\;y\in \mathbb{R}^{p}\}$ and any $g\in \partial f(x)$ is called a subgradient of $f$ at $x$. We write the Gamma function as $\Gamma$ and the lower incomplete Gamma function as $\widehat{\Gamma}(s, x):=\int_{0}^{x}t^{s-1}\exp\{-t\}dt$ . For a set $A\subseteq \mathbb{R}^{p}$, the indicator function $\mathds{1}_{A}(x)$ is $1$ if $x\in A$ and is $0$ otherwise. Further, for an event $A$, $\mathds{1}(A)$ is $1$ if $A$ occurs and is $0$ otherwise. For two probability measures $\rho$ and $\mu$ on a measurable space, the notation $\rho \ll \mu$ indicates that $\rho$ is absolutely continuous with respect to $\mu$, i.e., $\mu(A) = 0$ implies $\rho(A)=0$ for any measurable set $A$. The Kullback-Leibler (KL) divergence between $\rho$ and $\mu$ is $\mathrm{KL}(\rho \parallel \mu):=\int\log\left(\tfrac{d\rho}{d\mu}\right)d\rho$, where $\tfrac{d\rho}{d\mu}$ is the Radon-Nikodym derivative. Further, for two probability densities $p$ and $q$ on a common measurable space (with respect to a dominating measure), the KL divergence between $p$ and $q$ is $\mathrm{KL}(p\parallel q):= \int p(x)\log\left(\tfrac{p(x)}{q(x)}\right)dx$. Finally, $\mathbb{P}_{\theta_0}$ denotes the probability measure induced by the data-generating distribution under the true parameter value $\theta_0$.

\newpage

\section{Parameterization Details} 
\label{sec:parameterization-details}

As mentioned in Section~\ref{subsec:SSG-likelihood-models} of the main manuscript, we consider two broad classes of strongly super-Gaussian ($\mathsf{SSG}$) likelihoods: (i) Type I $\mathsf{SSG}$ likelihoods, comprising non-Gaussian linear regression families with typically heavy-tails, such as the Laplace and Student's-$t$ distributions; and (ii) Type II $\mathsf{SSG}$ likelihoods, encompassing 
discrete Bernoulli-type (count) response distributions, such as Binomial (subsuming logistic regression) and Negative-Binomial families. Since several of these distributions admit multiple parameterizations, especially in regression settings, it is imperative to specify the exact forms employed in our method and theoretical analysis. This section provides the explicit parameterizations of such likelihoods and prior distributions used throughout the main manuscript for clarity and reproducibility.

\textbf{Location-scale family, Type I $\mathsf{SSG}$ likelihoods}:
We consider a {base} (or {mother}) distribution with density $f_0(t)$, such that a member of the location-scale family with parameters $(\mu, \tau)$ admits the density:
$$
p(t \mid \mu, \tau) := \tau f_0(\tau(t - \mu)),\quad t\in \mathbb{R},
$$
where $\mu\in \mathbb{R}$ is the location parameter and $\tau\in \mathbb{R}^{+}$ is the scale parameter. Taking $f_0(t) \propto \exp\left\{h(t^2)\right\}$, with $h$ being convex and decreasing, induces a Type I $\mathsf{SSG}$ likelihood. In our formulation, we specifically consider the following parameterizations for the Laplace and Student's-$t$ distributions (with fixed degrees of freedom $\nu\in \mathbb{N}$) as:
\begin{align}\label{eq:param-laplace}
\begin{split}
    \text{Laplace:}\quad &y_i\mid \mathbf{x}_i, \theta \sim \mathrm{Laplace}(\mu_i = \mathbf{x}_i^{\top}\beta, \tau),\\
    &p(y_i \mid \mathbf{x}_i, \theta) = \tfrac{\tau}{2}\exp\{-\tau|y_i - \mathbf{x}_i^{\top}\beta|\},
\end{split}
\end{align}
independently with $y_i \in \mathbb{R}$ for $i\in [n]$ and $\theta = (\beta^{\top}, \tau^{2})^{\top} \in \mathbb{R}^{p}\times \mathbb{R}^{+}$.

\begin{align}\label{eq:param-student}
    \begin{split}
        \text{Student's-}t: \quad &y_i \mid \mathbf{x}_i, \theta, \nu \sim t_{\nu}(\mu_i = \mathbf{x}_i^{\top}\beta, \tau),\\
        &p(y_i \mid \mathbf{x}_i, \theta, \nu) = 
        \frac{\Gamma\!\left(\frac{\nu + 1}{2}\right)}
        {\sqrt{\nu\pi}\,\Gamma\!\left(\frac{\nu}{2}\right)}\tau
        \left(1 + \frac{\tau^2(y_i - \mathbf{x}_i^{\top}\beta)^{2}}{\nu}\right)^{-\frac{\nu + 1}{2}},
    \end{split}
\end{align}
independently with $y_i\in \mathbb{R}$ for $i\in [n]$ along with $\theta = (\beta^{\top}, \tau^{2})^{\top} \in \mathbb{R}^{p}\times \mathbb{R}^{+}$ and fixed (known) degrees of freedom $\nu \in \mathbb{N}$.

\textbf{Bernoulli-type discrete distributions, Type II $\mathsf{SSG}$ likelihoods}: For the Type II $\mathsf{SSG}$ likelihoods, we consider 
Bernoulli–like discrete (count) distributions, such as the Binomial and Negative–Binomial families. The following parameterizations are employed throughout the analysis:
\begin{align}\label{eq:param-binom}
    \begin{split}
        \text{Binomial}: \quad &y_i \mid \mathbf{x}_i, \beta \sim \mathrm{Bin}(m_i, p_i),\\
        & p(y_i\mid \mathbf{x}_i, \beta) = \binom{m_i}{y_i} p_i^{y_i} (1 - p_i)^{\,m_i - y_i},
    \end{split}
\end{align}
independently with $y_i \in \{0,1,\ldots, m_i\}$ and $m_i \in \mathbb{N}$ for $i\in [n]$ along with $\beta\in \mathbb{R}^{p}$.

\begin{align}\label{eq:param-negbin}
    \begin{split}
        \text{Negative-Binomial}: \quad &y_i\mid \mathbf{x}_i, \beta \sim \mathrm{NB}(m_i, p_i),\\
        & p(y_i\mid \mathbf{x}_i, \beta) = \binom{y_i + m_i - 1}{y_i} p_i^{\,m_i} (1 - p_i)^{\,y_i},
    \end{split}
\end{align}
independently with $y_i \in \mathbb{N} \cup \{0\}$ and $m_i > 0$ for $i\in [n]$ along with $\beta\in \mathbb{R}^{p}$. Here, the success probability parameter $p_i$ in \eqref{eq:param-binom} and \eqref{eq:param-negbin} is modeled as:
$$
p_i = \left[1 + \exp\{-\mathbf{x}_i^{\top}\beta\}\right]^{-1}, \quad i\in [n].
$$
Note that the Binomial model in \eqref{eq:param-binom} with $m_i=1$ corresponds to the classical logistic regression formulation.

\textbf{Prior distributions}: The multivariate Gaussian (Normal) distribution, $\mathcal{N}_p(\mu, \Sigma)$, with parameters $(\mu, \Sigma)$ is specified under the standard mean-covariance parameterization, where $\mu \in \mathbb{R}^{p}$ denotes the mean vector and $\Sigma \in \mathbb{R}^{p\times p}$ the covariance matrix. For the Gamma distribution, we adopt the shape-rate parameterization, i.e., a Gamma distribution with shape $c$ and rate $d$, denoted as $\mathcal{G}(c, d)$, has density:
\begin{align}
\label{eq:param-gamma}
\begin{split}
\text{Gamma:} \quad &y\sim \mathcal{G}(c, d),\\ 
&f(y) = \frac{d^{c}}{\Gamma(c)}\, y^{c - 1} e^{-d y},
\end{split}
\end{align}
where $y \in \mathbb{R}^{+}$, $c \in \mathbb{R}^{+}$, and $d \in \mathbb{R}^{+}$.

The Normal-Gamma prior on $(\beta^{\top}, \tau^2)^{\top} \in \mathbb{R}^{p}\times \mathbb{R}^{+}$ with parameters $(\mu, \Sigma, a, b)$, denoted as $\mathcal{NG}_{p}(\mu, \Sigma,a,b)$, is defined as a hierarchical joint distribution comprising:
\begin{align}\label{eq:param-normal-gamma}
    \beta\mid \tau^{2} \sim \mathcal{N}_p\left(\mu, \tfrac{\Sigma}{\tau^2}\right), \quad \tau^{2}\sim \mathcal{G}\left(\tfrac{a}{2}, \tfrac{b}{2}\right).
\end{align}
That is, $\beta$ follows a multivariate Gaussian distribution with mean vector $\mu$ and covariance $\tau^{-2}\Sigma$, conditional on $\tau^2$; while $\tau^{2}$ itself follows a Gamma distribution with shape parameter $\tfrac{a}{2}$ and rate parameter $\tfrac{b}{2}$, respectively. We conclude this section by summarizing the notational conventions for all the distribution families discussed above in Table \ref{tab:distributions} below.
\begin{table}[H]
\centering
\begin{tabular}{ll}
\hline
\hline
\textbf{Distribution Family} & \textbf{Notation} \\
\hline
Student's-$t$ & $t_{\nu}(\mu, \tau)$ \\[4pt]
Laplace & $\mathrm{Laplace}(\mu, \tau)$ \\[4pt]
Binomial & $\mathrm{Bin}(m, p)$ \\[4pt]
Negative-Binomial & $\mathrm{NB}(m, p)$ \\[4pt]
$p$-variate Normal (Gaussian) & $\mathcal{N}_p(\mu, \Sigma)$ \\[4pt]
Gamma & $\mathcal{G}(c, d)$ \\[4pt]
$p$-variate Normal–Gamma & $\mathcal{NG}_p(\mu, \Sigma, a, b)$ \\
\hline
\hline
\end{tabular}
\caption{Distribution families and their notations.}
\label{tab:distributions}
\end{table}

\newpage

\section{Discussion on Approximation Gaps in \texorpdfstring{$\tssg$}{TAVIE-SSG}}\label{sec:gap-discussion}

\subsection{Jensen's Gap}\label{subsec:jensen's-gap}
The key conceptual feature of the $\tssg$ framework is the minorization of the true $\ssg$ likelihood contribution $p(y_i\mid \mathbf{x}_i, \theta)$ by a tangent exponential-quadratic function $\varphi(y_i\mid \mathbf{x}_i, \theta, \xi_i)$ for each $i\in [n]$, as developed in Proposition~\ref{lemma:tangent-lower-bound} of the main manuscript. The gap introduced by this minorization is called the Jensen's gap having the form:
\begin{align}
\label{eq:jensens-gap}
\Delta_{2, i} := \log p(y_{i}\mid \mathbf{x}_i, \theta) - \log \varphi(y_i \mid \mathbf{x}_i, \theta, \xi_i), \quad i \in [n],
\end{align}
which we empirically study here for both Type I and Type II $\ssg$ likelihoods. First note that, in~\eqref{eq:jensens-gap} by substituting the forms of $p(y_i\mid \mathbf x_i, \theta)$ and $\varphi(y_i\mid \mathbf x_i, \theta, \xi_i)$ from~\eqref{eq:SSG-typeI-likelihood} and~\eqref{eq:SSG-typeII-likelihood} of the main manuscript, we get:
\begin{align}
\label{eq:jensens-gap-1}
\Delta_{2, i} = h(\zeta_i^2) - h(\xi_i^{2}) - h'(\xi_i^2)(\zeta_i^2 - \xi_i^2),
\end{align}
where $\zeta_i = \tau(y_i - \mathbf{x}_i^{\top}\beta)$ and $\zeta_i = \mathbf{x}_i^{\top}\beta$ for the Type I and Type II $\ssg$ likelihoods, respectively. The explicit forms of the function $h: \mathbb{R}^{+} \to \mathbb{R}$ for the Type I (Laplace and Student's-$t$) and Type II $\ssg$ likelihoods are given in Section~\ref{subsec:SSG-likelihood-models} of the main manuscript. Figure~\ref{fig:jensens-gap} gives a pictorial illustration of the Jensen's gap $\Delta_{2, i}$ in~\eqref{eq:jensens-gap-1} for the Student's-$t$ Type I and Negative-Binomial Type II $\ssg$ likelihoods, respectively, where $\log p(y_{i}\mid \mathbf{x}_i, \theta)$ and $\log \varphi(y_i\mid \mathbf{x}_i, \theta, \xi_i)$ are plotted with respect to $\zeta_{i}$ for different values of $\xi_i$. Observe that, as one moves away from the point of tangency at $\xi_i = |\zeta_i|$, $\Delta_{2, i}$ increases. Further, Figure~\ref{fig:jensens-gap-variational-optima} illustrates that, for each observation $i$, $\xi_i^\star$ is the tangency point defining the local quadratic minorization of the likelihood, whereas $\zeta_i^\star$ is the corresponding fitted latent value under the variational posterior, around which the observation-specific contribution to the Jensen's gap is effectively evaluated. Observe that, $\zeta_i^{\star}$ remains close to the tangency point $\xi_i^{\star}$ indicating that the effective Jensen's gap value is small.

\begin{figure}[H]
    \centering

    \begin{subfigure}[t]{0.42\textwidth}
        \centering
        \includegraphics[width=\linewidth]{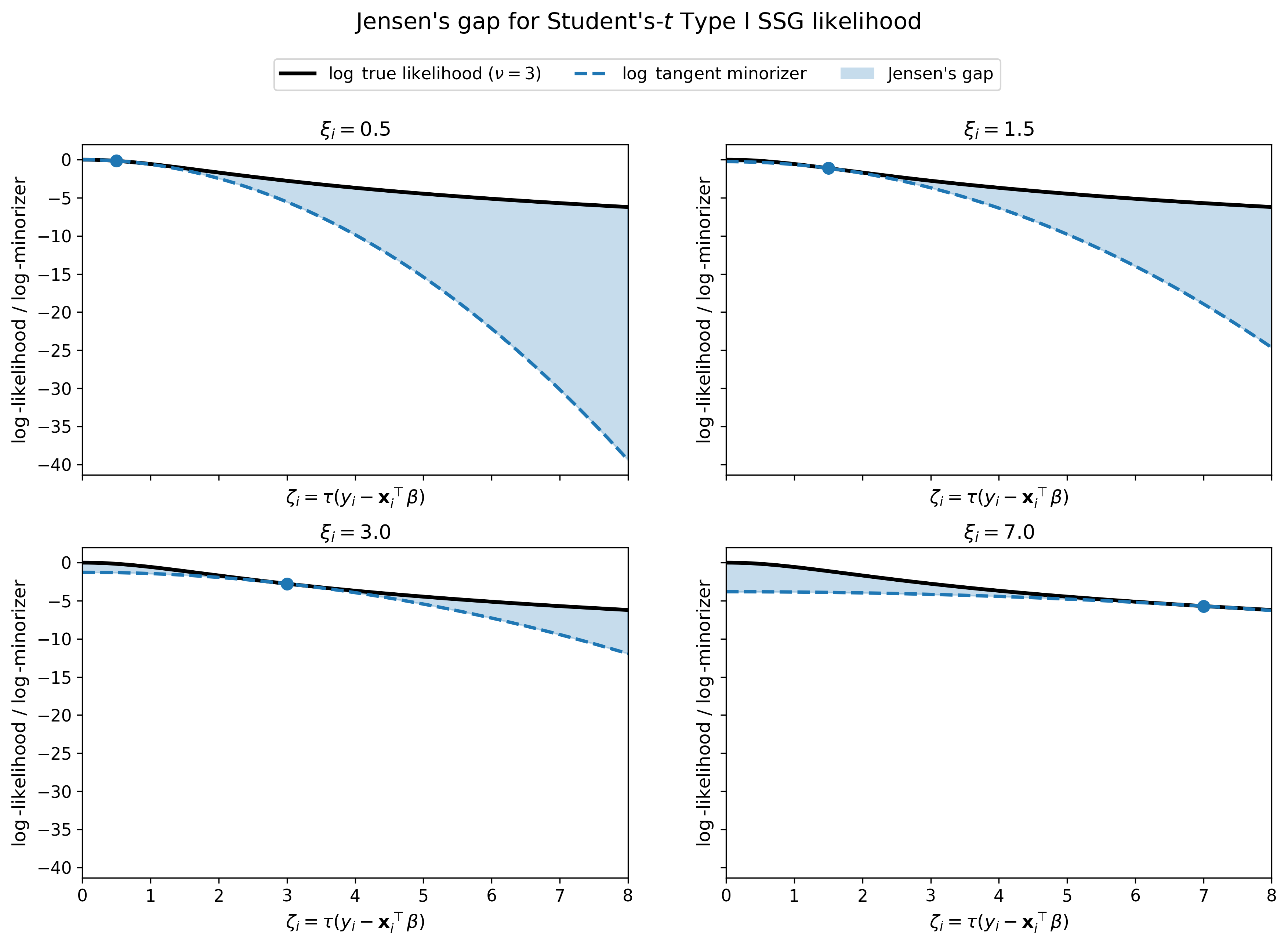}
        \label{fig:student-jensens-gap}
    \end{subfigure}
    \hfill
    \begin{subfigure}[t]{0.42\textwidth}
        \centering
        \includegraphics[width=\linewidth]{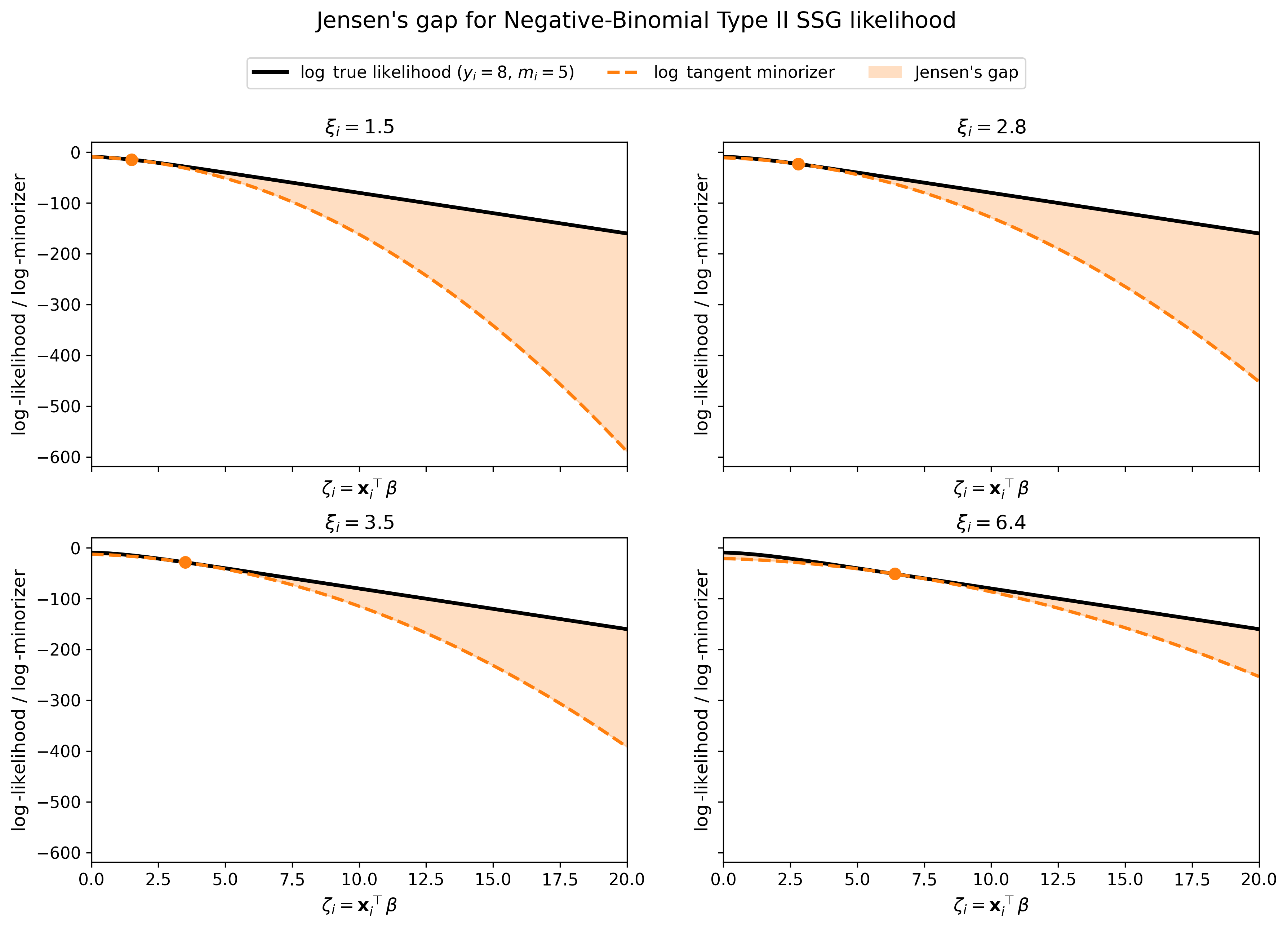}
        \label{fig:negbin-jensens-gap}
    \end{subfigure}

    \caption{\footnotesize{View of the Jensen's gap for Student's-$t$ Type I and Negative-Binomial Type II $\ssg$ likelihoods.}}
    \label{fig:jensens-gap}
\end{figure}

\begin{figure}[H]
    \centering

    \begin{subfigure}[t]{0.48\textwidth}
        \centering
        \includegraphics[width=\linewidth]{figures/review_figures/student_t_jensen_gap_xi1star_zeta1star.png}
        \label{fig:student-jensens-gap-variational-optima}
    \end{subfigure}
    \hfill
    \begin{subfigure}[t]{0.48\textwidth}
        \centering
        \includegraphics[width=\linewidth]{figures/review_figures/negbin_jensen_gap_xi1star_zeta1star.png}
        \label{fig:negbin-jensens-gap-variational-optima}
    \end{subfigure}

    \caption{\footnotesize{View of the Jensen's gap for Student's-$t$ Type I and Negative-Binomial Type II $\ssg$ likelihoods at the variational optima $\xi_i^{\star}$ and consequently $\zeta_{i}^{\star}$ for $i=1$. For both likelihoods, $\tssg$ runs were carried out with $(n, p) = (500, 8)$, using $\nu = 3$ and $\tau^{2} = 3$ for the Student's-$t$ Type I $\ssg$ likelihood and $m_i=10$ for the Negative-Binomial Type II $\ssg$ likelihood.}}
    \label{fig:jensens-gap-variational-optima}
\end{figure}

\subsection{\texorpdfstring{$\mathsf{ELBO}$}{ELBO} Gap}\label{subsec:ELBO-gap}
\label{sec:elbo-gap}
The next component of the $\tssg$ framework is the composition of the resulting minorized (working) likelihood raised to the likelihood tempering parameter $\alpha \in (0, 1]$ viz., $\varphi_{\alpha}(y\mid \mathbf{X}, \theta, \xi) = \left\{\prod_{i\in [n]}\varphi(y_i \mid \mathbf x_i, \theta, \xi_i)\right\}^{\alpha}$ with the conjugate prior $\pi(\theta)$ over $\theta$ leading to the $\alpha$-fractional variational posterior $\pi_{\alpha}(\theta \mid \mathcal{D}_n, \xi)$ for $\xi = (\xi_1, \ldots, \xi_n)^{\top}\in \mathbb{R}^{n}_+$. The associated evidence lower bound ($\mathsf{ELBO}$) with respect to the working likelihood is:
\begin{align}
\label{eq:TAVIE-SSG-ELBO}
\mathcal{L}(q, \xi) := \int_{\theta\in \Theta}\log\frac{\varphi_{\alpha}(y\mid \mathbf X, \theta, \xi) \pi(\theta)}{q(\theta)}q(\theta)d\theta,
\end{align}
which is maximized, for fixed $\xi$, at $q_{\xi}\equiv \pi_{\alpha}(\cdot\mid \mathcal D_n, \xi)$. Hence the profile objective becomes:
\begin{align}
\label{eq:profile}
\mathsf{L}(\xi) := \mathcal{L}(q_\xi, \xi) = \log \varphi_{\alpha}(y\mid \mathbf X, \xi),
\end{align}
with the explicit closed forms as in~\eqref{eq:ELBO-general} of the main manuscript for Type I and Type II $\ssg$ likelihoods. The profile objective in~\eqref{eq:profile} is maximized over $\xi\in \mathbb{R}^{n}_+$ using the $\tssg$ EM routine in Algorithm~\ref{alg:tavie-em} of the main manuscript, which yields $\xi^\star \in \mathbb{R}^{n}_+$ as the optimum value. 

At this point, it is important to distinguish $\mathsf{L}(\xi)$ from the true variational objective based on the true $\alpha$-fractional likelihood $p_{\alpha}(y\mid \mathbf{X}, \theta) = \left\{\prod_{i\in [n]}p(y_i\mid \mathbf{x}_i, \theta)\right\}^{\alpha}$. The corresponding true $\mathsf{ELBO}$ is:
\begin{align}
\label{eq:true-ELBO}
\mathcal{L}_{\mathrm{true}}(q) := \int_{\theta\in \Theta}\log \frac{p_{\alpha}(y\mid \mathbf X, \theta)\pi(\theta)}{q(\theta)}q(\theta)d\theta, 
\end{align}
and upon substituting the same variational posterior $q_{\xi} \equiv \pi_{\alpha}(\cdot \mid \mathcal{D}_n, \xi)$, we obtain:
\begin{align}
\label{eq:true-profile-ELBO}
\mathsf{L}_{\mathrm{true}}(\xi) := \int_{\theta\in \Theta}\log \frac{p_{\alpha}(y\mid \mathbf X, \theta)\pi(\theta)}{\pi_{\alpha}(\theta \mid \mathcal D_n, \xi)}\pi_{\alpha}(\theta \mid \mathcal D_{n}, \xi)d\theta. 
\end{align}
Thus, $\mathsf{L}(\xi)$ and $\mathsf{L}_{\mathrm{true}}(\xi)$ are evaluated at the same variational posterior family, but differ in the likelihood term appearing inside the logarithm: the former is based on the tangent minorized working likelihood $\varphi_{\alpha}(y\mid \mathbf{X}, \theta, \xi)$, while the latter is based on the true $\alpha$-fractional likelihood $p_{\alpha}(y\mid \mathbf X, \theta)$. In particular, $\mathsf{L}(\xi)$ is the quantity that is naturally optimized by $\tssg$, whereas evaluating $\mathsf{L}_{\mathrm{true}}(\xi)$ requires an additional Monte Carlo (MC) layer beyond the closed form calculations already used in the variational updates in Algorithm~\ref{alg:tavie-em} of the main manuscript. For this reason, $\mathsf{L}(\xi)$ serves as the primary convergence diagnostic for $\tssg$, while $\mathsf{L}_{\mathrm{true}}(\xi)$ is better viewed as a post hoc diagnostic quantity used to assess the extra approximation error induced by the tangent minorization.

Figure~\ref{fig:true_variational_elbo} compares these two quantities over the $\tssg$ iterations for representative $\ssg$ likelihood families. Consistent with the minorization property $\varphi_{\alpha}(y\mid \mathbf X, \theta, \xi) \leq p_{\alpha}(y\mid \mathbf X, \theta)$ for all $\theta\in \Theta$, we observe that $\mathsf{L}_{\mathrm{true}}(\xi)\geq \mathsf{L}(\xi)$ with both quantities displaying similar stabilization behavior near the convergence region.
\begin{figure}[H]
    \centering
    \includegraphics[width=0.8\linewidth]{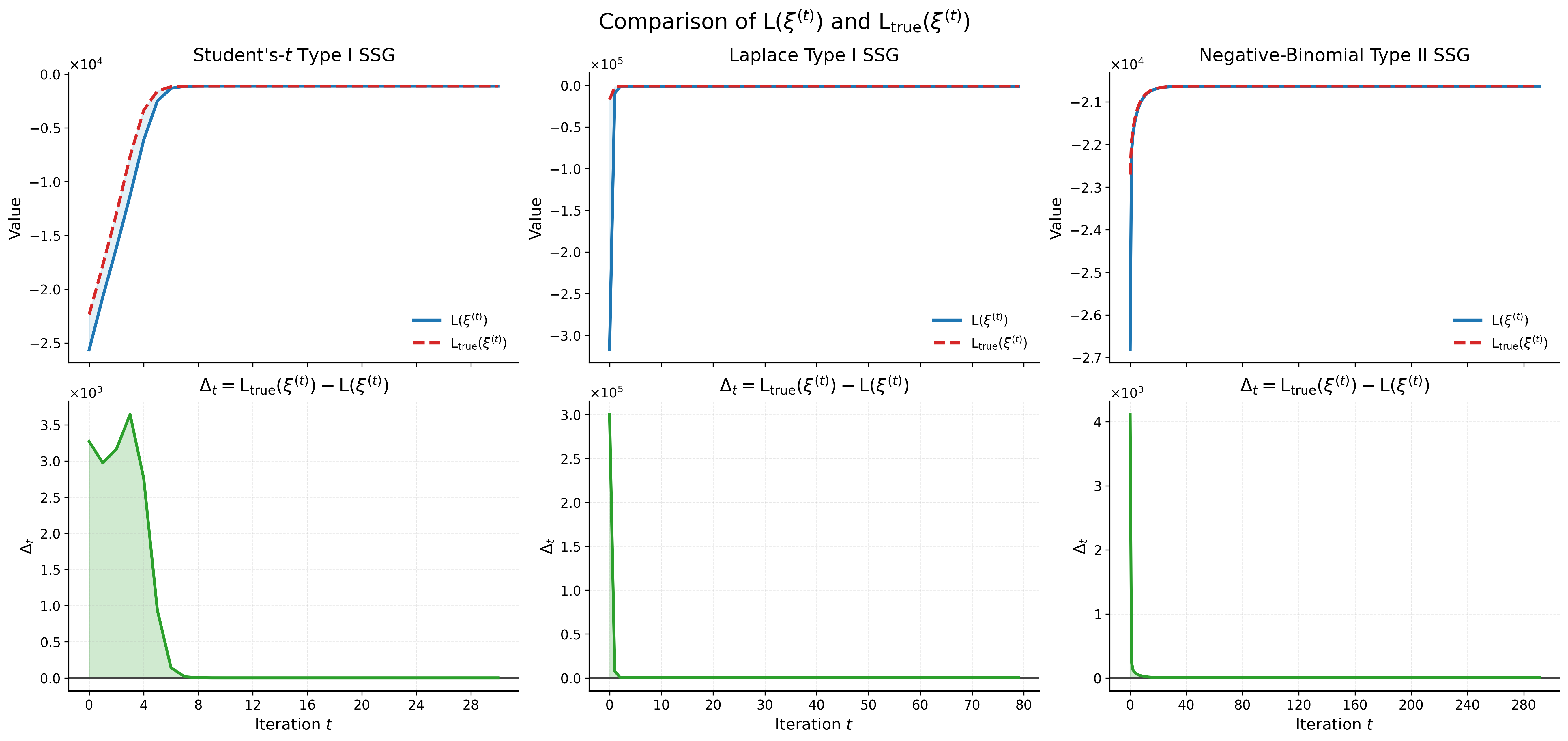}
    \caption{\footnotesize{Comparison of $\mathsf{L}(\xi)$ and $\mathsf{L}_{\mathrm{true}}(\xi)$ over $\tssg$ iterations for representative $\ssg$ likelihood families. Throughout, we fix $(n, p) = (1000, 8)$.}}
    \label{fig:true_variational_elbo}
\end{figure}

This motivates the following quantity, regarded as the $\mathsf{ELBO}$ gap which measures the discrepancy between the true profile variational objective and the profile objective actually optimized by $\tssg$:
\begin{align}
\label{eq:ELBO-gap}
\begin{split}
\mathsf{L}_{\mathrm{true}}(\xi) - \mathsf{L}(\xi) &:= \int_{\theta\in \Theta}\log \frac{p_{\alpha}(y\mid \mathbf X, \theta)}{\varphi_\alpha(y\mid \mathbf{X}, \theta, \xi)}\pi_{\alpha}(\theta\mid \mathcal D_n, \xi)d\theta \\
&= \mathbb{E}_{\xi}\left[\sum_{i\in [n]}\Delta_{2, i}\right] = \mathbb{E}_{\xi}[\Delta_2],
\end{split}
\end{align}
where $\mathbb{E}_{\xi}$ is the expectation with respect to $\pi_{\alpha}(\theta\mid \mathcal D_n, \xi)$, and $\Delta_{2, i}$ is the Jensen's gap defined in~\eqref{eq:jensens-gap}. Hence, the $\mathsf{ELBO}$ gap is the variational posterior average of the total Jensen's gap $\Delta_2$. In this sense, the $\mathsf{ELBO}$ gap provides a global summary of the approximation error induced by replacing the true likelihood with its tangent minorizer, after averaging over the posterior uncertainty encoded by $\pi_{\alpha}(\theta\mid \mathcal{D}_n, \xi)$.

The lower panel of Figure~\ref{fig:true_variational_elbo} highlights this $\mathsf{ELBO}$ gap across the $\tssg$ iterations. Empirically, the gap remains controlled and stabilizes as the algorithm converges, indicating that the profile objective $\mathsf{L}(\xi)$ optimized by $\tssg$ remains close to the corresponding true profile objective $\mathsf{L}_{\mathrm{true}}(\xi)$ in the representative $\ssg$ models considered.

Observe also that, the $\mathsf{ELBO}$ gap quantifies the approximation cost incurred by optimizing the tangent-minorized working objective $\mathsf{L}(\xi)$ in lieu of the true profile variational objective $\mathsf{L}_{\mathrm{true}}(\xi)$.
In this sense, the $\mathsf{ELBO}$ gap may be viewed as an effective complexity penalty induced by the tangent approximation: if $\mathsf{L}_{\mathrm{true}}(\xi)$ is regarded as the ideal objective and $\mathsf{L}(\xi)$ as the tractable proxy actually optimized by $\tssg$, then the $\mathsf{ELBO}$ gap can be interpreted as the loss in objective value in exchange for the corresponding gain in computational tractability.

\subsection{Variational Gap}\label{subsec:variational-gap}

The discussion above in Section~\ref{sec:elbo-gap} isolates the approximation error contributed by the tangent minorization step alone. However, the $\mathsf{ELBO}$ gap is only one component of the overall discrepancy between the exact Bayesian target and the final $\tssg$ variational approximation. The remaining component is the intrinsic discrepancy due to variational inference (VI) itself, i.e., the gap between the true $\log$-$\alpha$-fractional marginal likelihood $\log p_{\alpha}(y\mid \mathbf{X})$ and the true $\mathsf{ELBO}$ $\mathsf{L}_{\mathrm{true}}(\xi)$. By the standard $\mathsf{ELBO}$ identity (see Lemma~\ref{lemma-auxiliary}), this quantity is exactly the Kullback-Leibler (KL) divergence between the variational posterior $\pi_{\alpha}(\theta \mid \mathcal{D}_n, \xi)$ and the true $\alpha$-fractional posterior $\pi_{\alpha}(\theta\mid \mathcal D_n)$. Accordingly, the total discrepancy, which we refer to as the variational gap, admits the decomposition:
\begin{align}
\label{eq:variational-gap}
\begin{split}
    \log p_{\alpha}(y\mid \mathbf{X}) - \mathsf{L}(\xi) &:= \log p_{\alpha}(y\mid \mathbf{X}) - \mathsf{L}_{\mathrm{true}}(\xi) + \mathsf{L}_{\mathrm{true}}(\xi) - \mathsf{L}(\xi)\\
    &=\mathrm{KL}(\pi_{\alpha}(\theta \mid \mathcal{D}_n, \xi)\parallel \pi_{\alpha}(\theta \mid \mathcal{D}_n)) + \mathsf{ELBO}\text{ gap}.
\end{split}
\end{align}

Thus, the variational gap in~\eqref{eq:variational-gap} captures the total approximation error induced jointly by tangent minorization and variational approximation.
Analogously to the $\mathsf{ELBO}$ gap, the variational gap may be viewed as the total effective complexity penalty associated with the tractable $\tssg$ objective. 
This perspective naturally connects both the $\mathsf{ELBO}$ and variational gaps to classical model selection heuristics. In likelihood-based model comparison, the Bayesian information criterion (BIC) penalizes the increase in model dimension at the canonical scale $\tfrac{p}{2}\log n$. Therefore, if the post-convergence variational gap $\log p_{\alpha}(y\mid \mathbf{X}) - \mathsf{L}(\xi^\star)$ grows with $p$ on the same scale as the classical BIC penalty then the resulting $\tssg$ approximation error remains asymptotically comparable to the usual likelihood-complexity trade-off, rather than dominating it. More concretely, when comparing two models of dimensions $p_1$ and $p_2$ ($p_2 > p_1$), the question is whether the gain in the optimized tractable objective $\mathsf L(\xi^\star)$ is being distorted by approximation error only at the same scale as the usual BIC penalty $\tfrac{p_2-p_1}{2}\log n$, or at a substantially larger scale. Thus, empirically studying the complexity behavior of the both the $\mathsf{ELBO}$ and variational gaps, and benchmarking them against the BIC penalty, provides a direct way to assess whether model selection under $\tssg$ remains compatible with standard asymptotic complexity control.

\subsection{Empirical Complexity Analysis of \texorpdfstring{$\mathsf{ELBO}$}{ELBO} and Variational Gaps}\label{subsec:empirical-elbo-variational-gap}

For illustration, we now consider the Student's-$t$ Type I $\ssg$ likelihood as in~\eqref{eq:param-student}. Data, $\mathcal{D}_n:= \{(\mathbf x_i, y_i)\;:\; i\in [n]\}$, is simulated using the model parameters: $\nu = 5$, $\tau^{2} = 3$, $\beta \sim \mathcal{N}_{p+1}(0, I_{p+1})$, and $\mathbf{X} = (\mathbf x_1, \ldots, \mathbf x_n)^{\top}\in \mathbb{R}^{n\times \overline{p+1}}$ with $x_{i1} = 1$ and $x_{ij} \stackrel{\mathrm{i.i.d.}}{\sim}\mathcal{N}_{1}(0, 1)$ for $j=2, \ldots, p+1$. Across all experiments $\tssg$ is implemented with configurations as mentioned in Section~\ref{subsec:sim-exp-student} of the main manuscript. To quantify the post-convergence $\mathsf{ELBO}$ gap, we compute both $\mathsf{L}(\xi^\star)$ from $\tssg$ and a Monte Carlo (MC) approximation of $\mathsf{L}_{\mathrm{true}}(\xi^\star)$ (using $1000$ MC draws), and summarize the corresponding gap in Figure~\ref{fig:ELBO_gap_student_t_fix_n_vary_p} over $10$ replications for $n=1000$ and $p\in \{8, 15, 20, 25\}$. We estimate $\mathsf{L}_{\mathrm{true}}(\xi^\star) - \mathsf{L}(\xi^\star)$ by averaging $\mathsf{L}_{\mathrm{true}}(\xi) - \mathsf{L}(\xi)$ over the last $10\%$ of the converged $\tssg$ trajectory, rather than using only the final iterate, to reduce Monte Carlo error in $\mathsf{L}_{\mathrm{true}}(\xi^\star)$. As expected, the post-convergence $\mathsf{ELBO}$ gap increases monotonically with dimension $p$, with relatively small variability across repetitions, indicating a clear and stable growth of tangent approximation cost as model complexity increases.
\begin{figure}[!htp]
    \centering
    \includegraphics[width=0.6\linewidth]{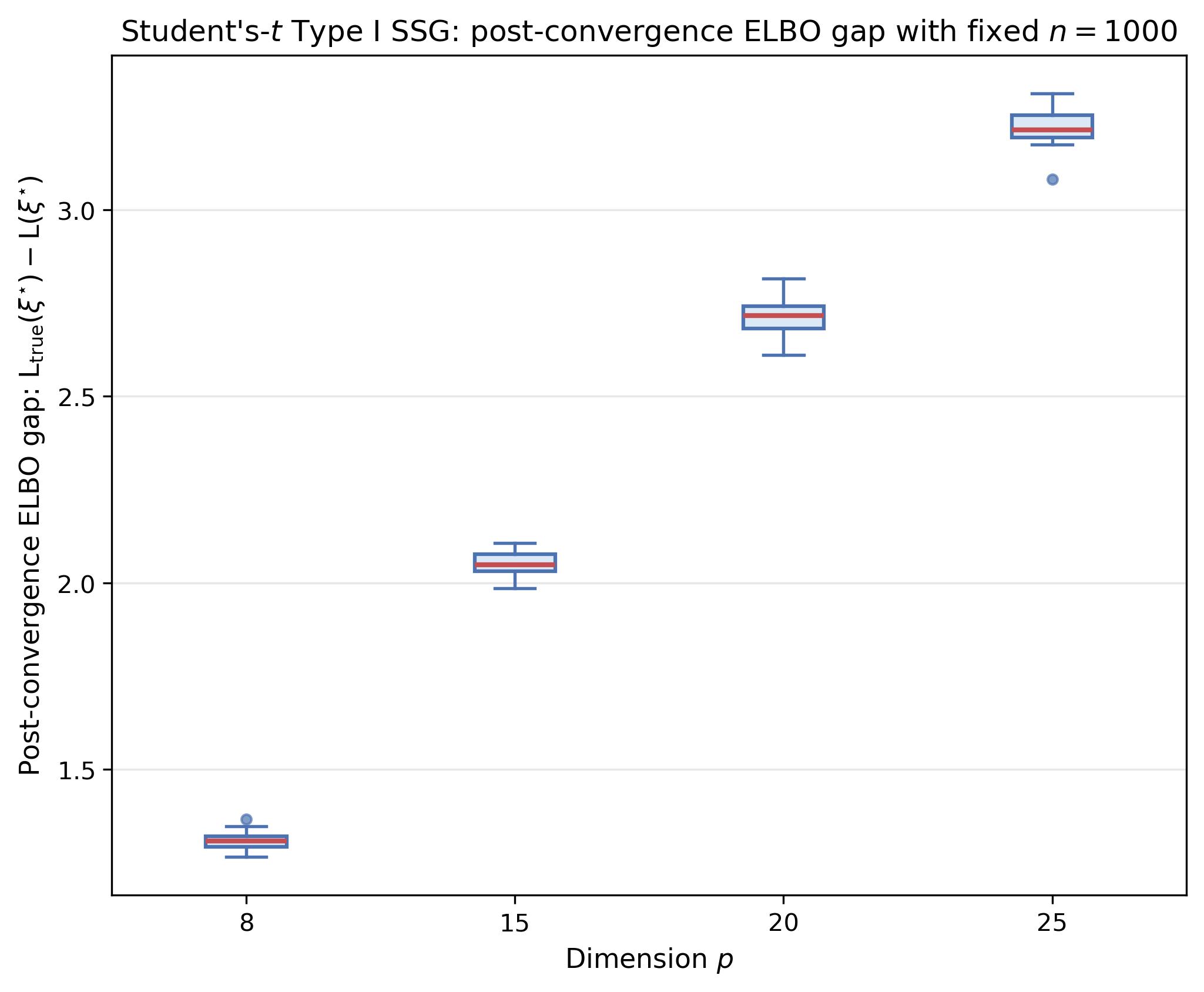}
    \caption{\footnotesize{Post-convergence $\mathsf{ELBO}$ gap $\mathsf{L}_{\mathrm{true}}(\xi^{\star}) - \mathsf{L}(\xi^{\star})$ for the Student's-$t$ Type I $\ssg$ likelihood as model dimension $p$ increases.}}
    \label{fig:ELBO_gap_student_t_fix_n_vary_p}
\end{figure}

Now we take a closer look at the increase in the $\mathsf{ELBO}$ gap observed in Figure~\ref{fig:ELBO_gap_student_t_fix_n_vary_p} through the three diagnostics in Figure~\ref{fig:ELBO_gap_order}. Firstly, Figure~\ref{fig:gap_p} plots the post-convergence $\mathsf{ELBO}$ gap against $p\in \{10, 20, 40, 80, 160\}$ across several sample sizes $n\in \{1000, 4000, 8000, 12000\}$ over $10$ repetitions, and shows a clear linear increase with dimension $p$, indicating that the tangent approximation cost grows systematically as model complexity increases. Secondly, Figure~\ref{fig:gap_log_n} examines the corresponding gap against $\log n$ for fixed values of $p$; in comparison to the strong dependence on $p$, the variation with $n$ is relatively much milder within each dimension, suggesting that the growth of the gap is predominantly due to the dimensional complexity rather than sample size. Finally, Figures~\ref{fig:gap_half_plog_n} and~\ref{fig:log_gap_log} plots the post-convergence $\mathsf{ELBO}$ gap against $\tfrac{p}{2}\log n$ both in the original and $\log$-$\log$ scale. Particularly, the near-linear trend in the $\log$-$\log$ representation with slope close to $1$ provides empirical evidence that the post-convergence $\mathsf{ELBO}$ gap scales approximately the same as the BIC complexity order $\tfrac{p}{2}\log n$.

\begin{figure}[!htp]
    \centering

    \begin{subfigure}[t]{0.42\textwidth}
        \centering
        \includegraphics[width=\linewidth]{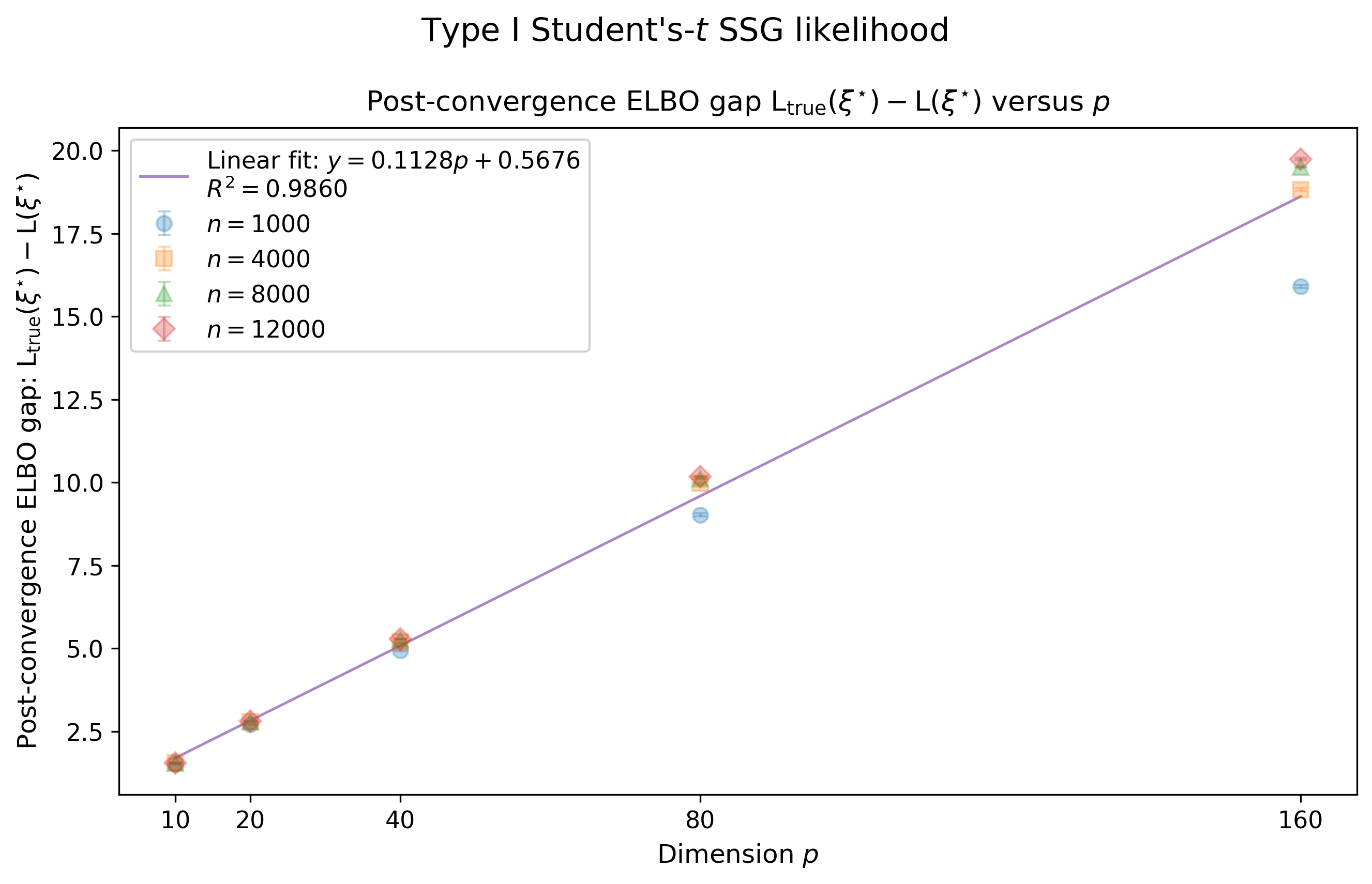}
        \caption{\footnotesize{$\mathsf{ELBO}$ gap against $p$.}}
        \label{fig:gap_p}
    \end{subfigure}
    \hfill
    \begin{subfigure}[t]{0.50\textwidth}
        \centering
        \includegraphics[width=\linewidth]{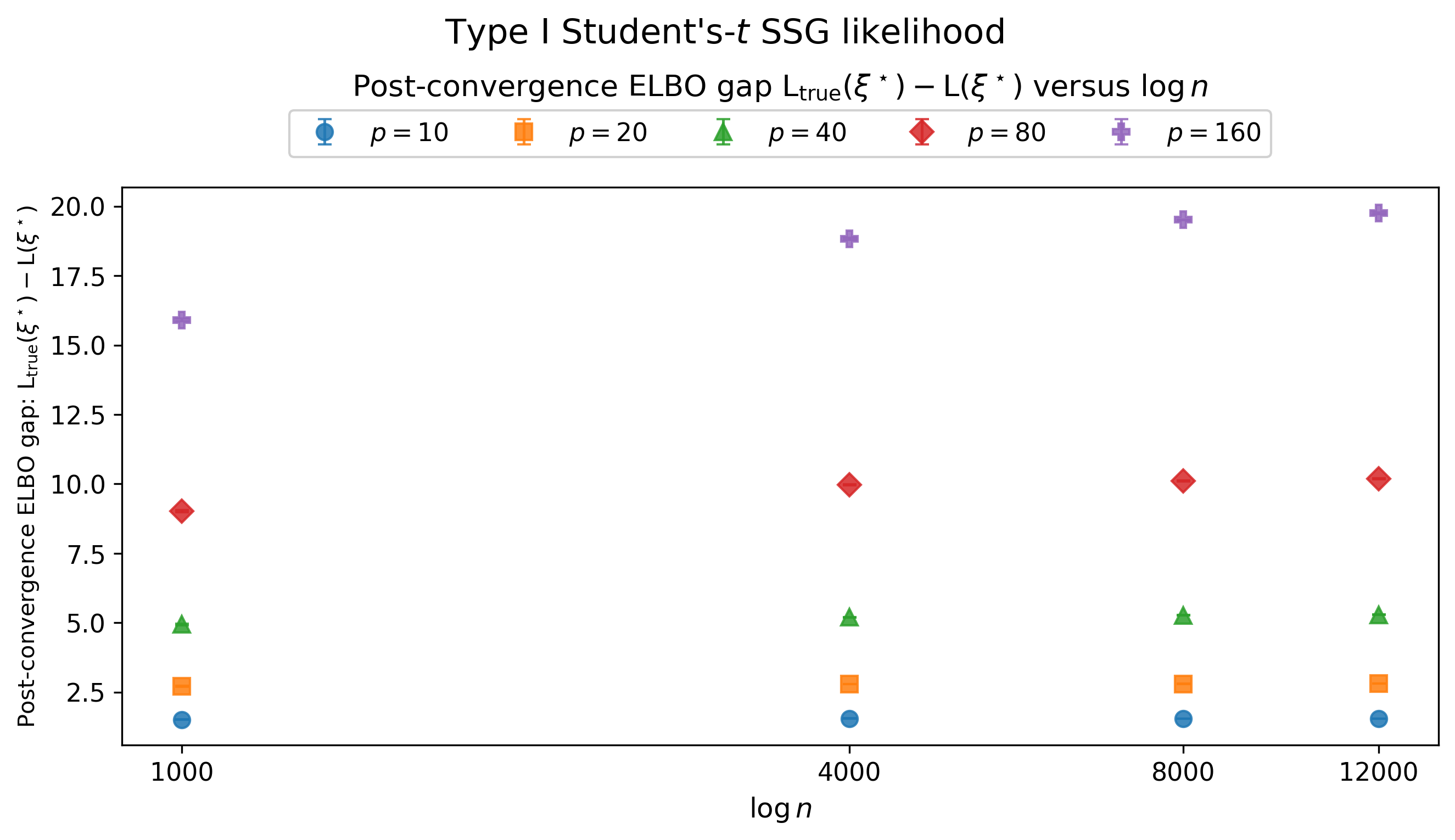}
        \caption{\footnotesize{$\mathsf{ELBO}$ gap against $\log n$.}}
        \label{fig:gap_log_n}
    \end{subfigure}

    \vspace{0.8em}

    \begin{subfigure}[t]{0.45\textwidth}
        \centering
        \includegraphics[width=\linewidth]{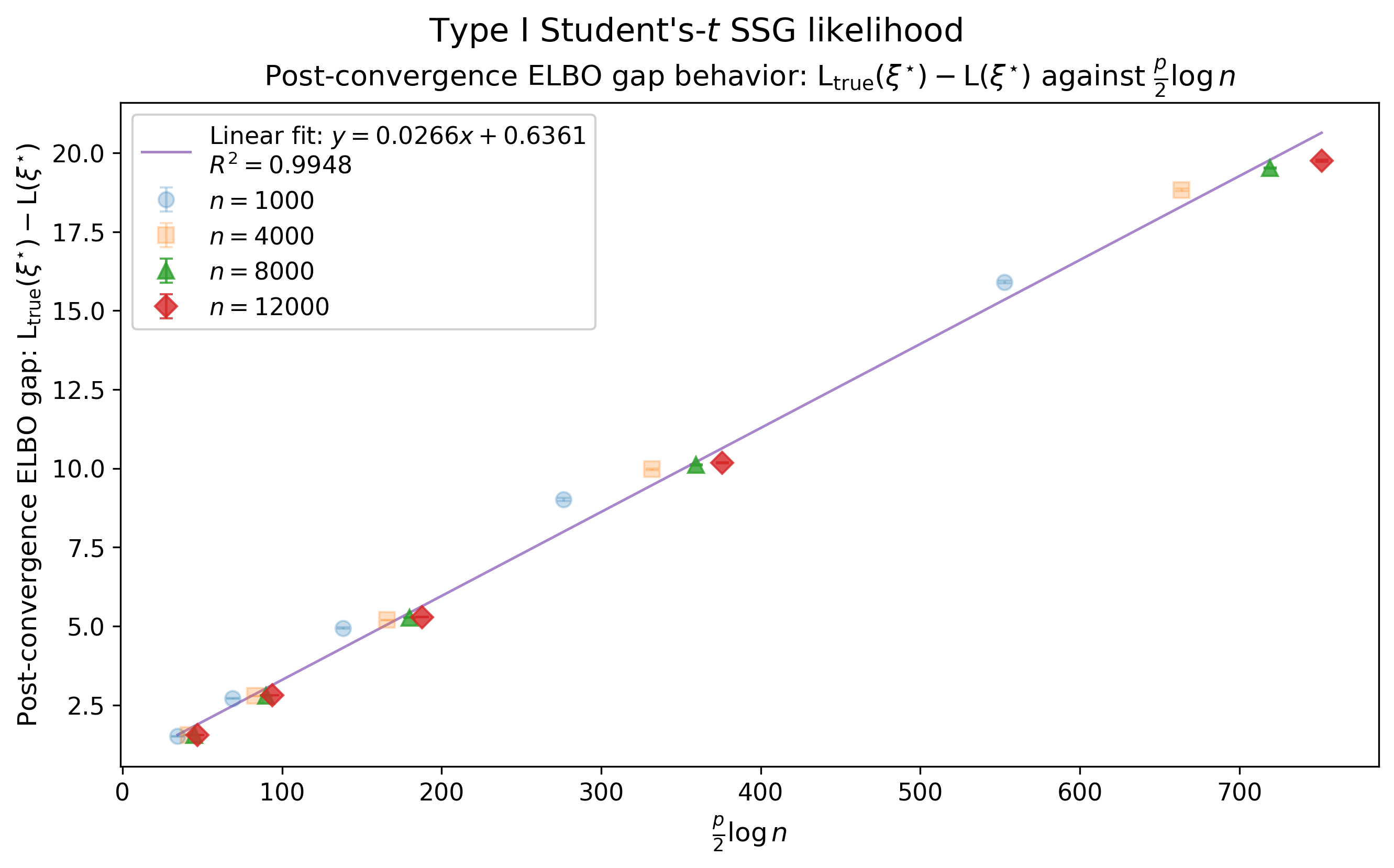}
        \caption{\footnotesize{$\mathsf{ELBO}$ gap against $\tfrac{p}{2}\log n$.}}
        \label{fig:gap_half_plog_n}
    \end{subfigure}
    \hfill
    \begin{subfigure}[t]{0.45\textwidth}
        \centering
        \includegraphics[width=\linewidth]{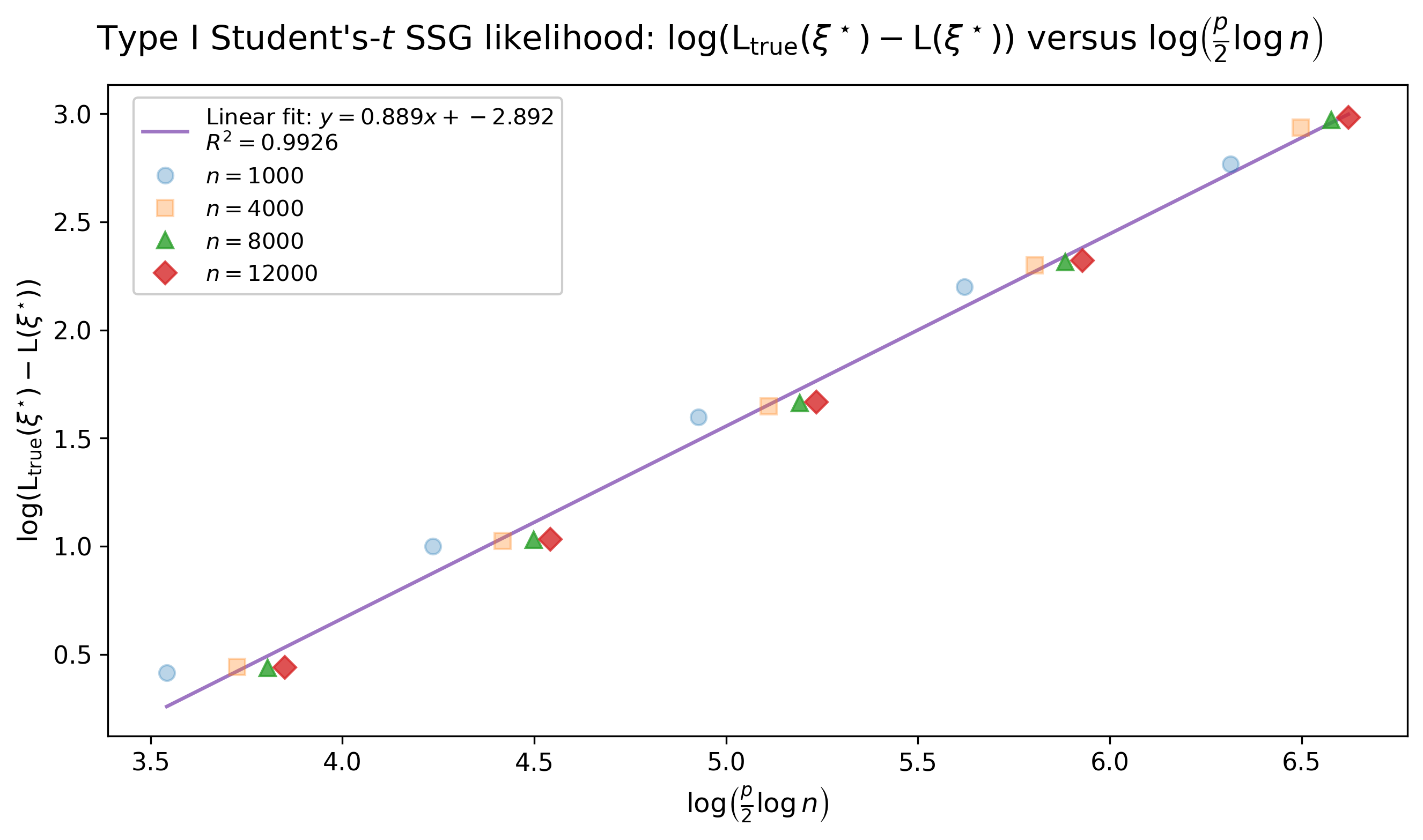}
        \caption{\footnotesize{$\log$ $\mathsf{ELBO}$ gap against $\log\left(\tfrac{p}{2}\log n\right)$.}}
        \label{fig:log_gap_log}
    \end{subfigure}

    \caption{\footnotesize{Empirical complexity analysis of the post-convergence $\mathsf{ELBO}$ gap $\mathsf{L}_{\mathrm{true}}(\xi^\star) - \mathsf{L}(\xi^\star)$ with respect to $n$ and $p$ for the Student's-$t$ Type I $\ssg$ likelihood.}}
    \label{fig:ELBO_gap_order}
\end{figure}

After analyzing the post-convergence $\mathsf{ELBO}$ gap, we now turn to the corresponding complexity analysis of the post-convergence variational gap $\log p_{\alpha}(y\mid \mathbf{X}) - \mathsf{L}({\xi^\star})$ in~\eqref{eq:variational-gap} (with the same simulation setup as in the $\mathsf{ELBO}$ gap analysis), with the goal of understanding how the overall approximation error varies with $p$ and $n$, and whether its growth remains aligned with the canonical BIC scale $\tfrac p2 \log n$. This comparison is especially informative because, unlike the $\mathsf{ELBO}$ gap, which isolates only the contribution of tangent minorization, the variational gap captures the combined effect of both the tangent approximation and the intrinsic variational approximation. It therefore provides a more complete measure of the approximation cost underlying model selection with $\tssg$.
\begin{figure}[!htp]
    \centering
    \includegraphics[width=0.6\linewidth]{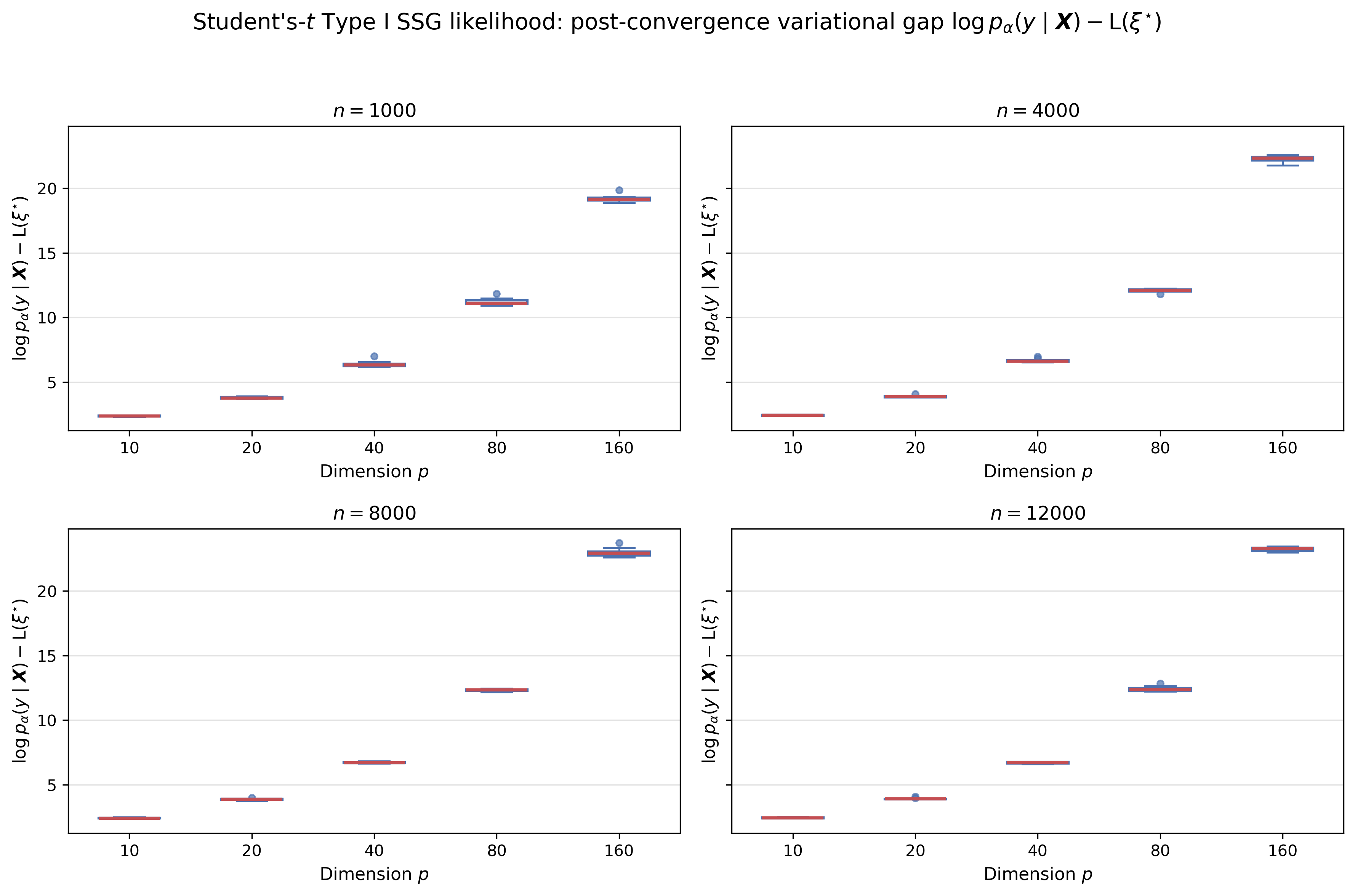}
    \caption{\footnotesize{Post-convergence variational gap $\log p_{\alpha}(y\mid \mathbf X) - \mathsf{L}(\xi^{\star})$ for the Student's-$t$ Type I $\ssg$ likelihood as model dimension $p$ increases for different sample sizes $n$.}}
    \label{fig:variational_gap_student_boxplot}
\end{figure}
Figure~\ref{fig:variational_gap_student_boxplot} first reports the post-convergence variational gap across increasing dimensions $p\in \{10, 20, 40, 80, 160\}$ for several fixed sample sizes $n \in \{1000, 4000, 8000, 12000\}$ over $10$ repetitions. Much like the $\mathsf{ELBO}$ gap, the variational gap exhibits a clear monotone increase with $p$, with only limited variability across repetitions, indicating a stable and systematic growth of total approximation cost as the model becomes more complex. At the same time, for a fixed $p$, the variation across different values of $n$ is comparatively modest, suggesting again that dimensional complexity plays the dominant role.

To examine the order of this growth more directly, Figures~\ref{fig:variational_gap_1} and~\ref{fig:variational_gap_2} plots $\log p_{\alpha}(y\mid \mathbf X) - \mathsf{L}(\xi^\star)$ against $\tfrac p2\log n$ both in the original and $\log$-$\log$ scale. The near-linear trend in the $\log$-$\log$ representation provides empirical evidence that the variational gap also scales in close accordance with the BIC-type complexity proxy $\tfrac{p}{2}\log n$. Although the variational gap is naturally larger than the $\mathsf{ELBO}$ gap, since it incorporates both sources of approximation error, its growth still appears to remain commensurate with the same effective complexity scale.
Taken together, these plots suggest that the full approximation error induced by $\tssg$ remains controlled relative to the standard likelihood-complexity trade-off, thereby supporting the view that model selection based on the tractable $\tssg$ objective continues to operate within a classical asymptotic complexity regime.

\begin{figure}[H]
    \centering

    \begin{subfigure}[t]{0.45\textwidth}
        \centering
        \includegraphics[width=\linewidth]{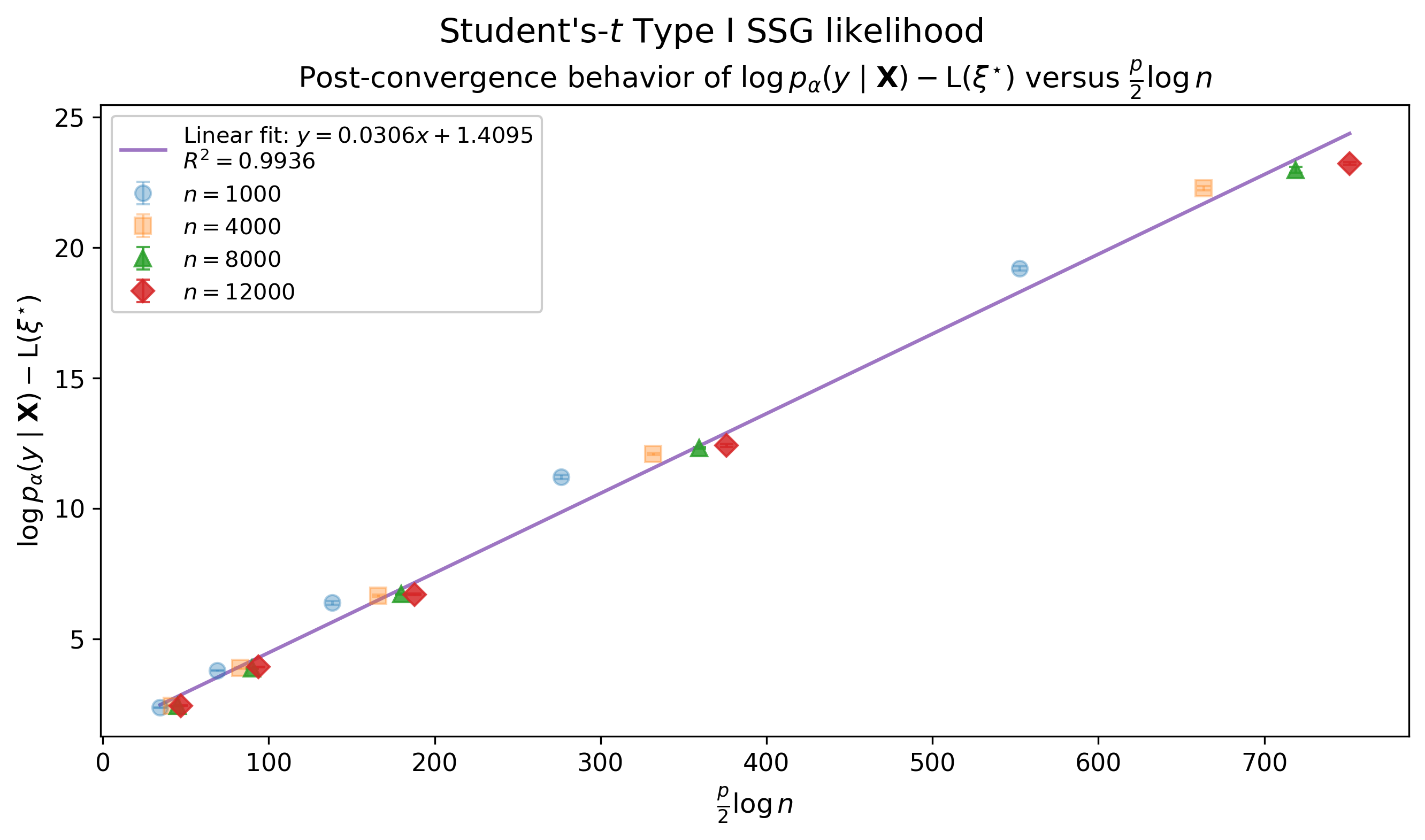}
        \caption{\footnotesize{Variational gap against $\tfrac{p}{2}\log n$.}}
        \label{fig:variational_gap_1}
    \end{subfigure}
    \hfill
    \begin{subfigure}[t]{0.45\textwidth}
        \centering
        \includegraphics[width=\linewidth]{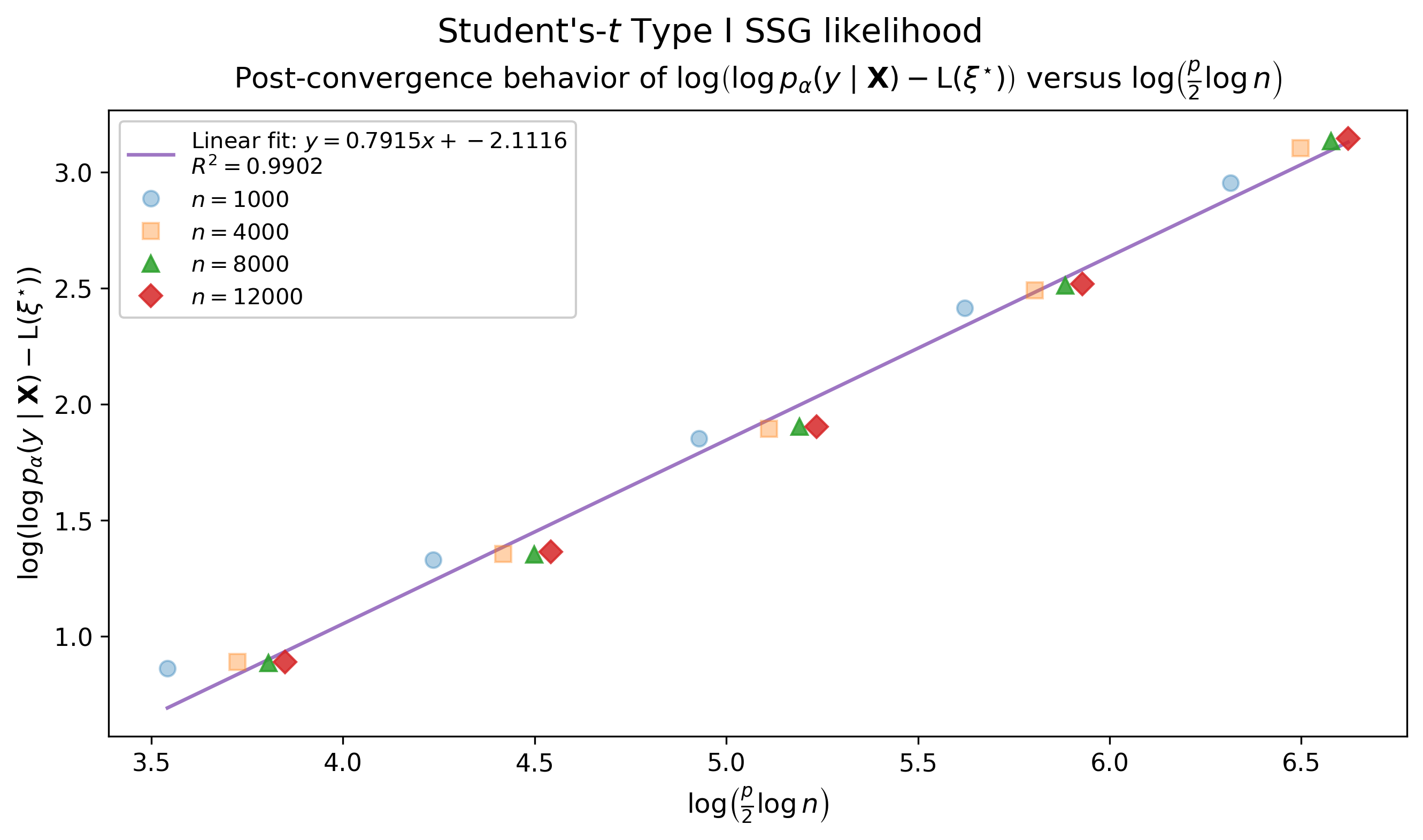}
        \caption{\footnotesize{$\log$ variational gap against $\tfrac{p}{2}\log n$.}}
        \label{fig:variational_gap_2}
    \end{subfigure}
\caption{\footnotesize{Empirical complexity analysis of the post-convergence variational gap $\log p_{\alpha}(y\mid \mathbf{X}) - \mathsf{L}(\xi^{\star})$ with respect to $\tfrac p2 \log n$ (in original and $\log$-$\log$ scale) for the Student's-$t$ Type I $\ssg$ likelihood.}}
\label{fig:student_t_log_gap_vs_log_half_p_log_n_linear_fit}
\end{figure}

In summary, these findings suggest that the post-convergence $\tssg$ $\mathsf{ELBO}$ $\mathsf{L}(\xi^\star)$ may serve as a practically meaningful criterion for model selection, in the sense that selecting the model maximizing $\mathsf{L}(\xi^\star)$ appears compatible with classical complexity control. Indeed, both the post-convergence $\mathsf{ELBO}$ and variational gaps exhibit growth on the canonical BIC scale $\tfrac{p}{2}\log n$. This interpretation is further reinforced by Figure~\ref{fig:student_t_elbo_minus_true_loglik_vs_half_p_log_n_linear_fit}, which shows that the post-convergence discrepancy $\mathsf{L}(\xi^\star)-\log p_{\alpha}(y\mid \mathbf X,\theta_0)$ displays an essentially perfect negative linear trend as a function of $\tfrac{p}{2}\log n$, where $\theta_0$ denotes the true parameter vector. Hence, up to sign, the magnitude of this discrepancy also scales on the same canonical BIC order. Overall, these empirical findings strongly support the conclusion that, at least in the present regular setting, model comparison based on the tractable $\tssg$ objective reflects the same first-order complexity regime as BIC, in close agreement with the broader theoretical conclusion that $\mathsf{ELBO}$-based model selection can asymptotically align with BIC-based model selection~\citep{cheriefabdellatif2018consistency,cheriefabdellatif2019elbo,ZhangYang2024MFVBModelSelection}.

\begin{figure}[H]
    \centering
    \includegraphics[width=0.6\linewidth]{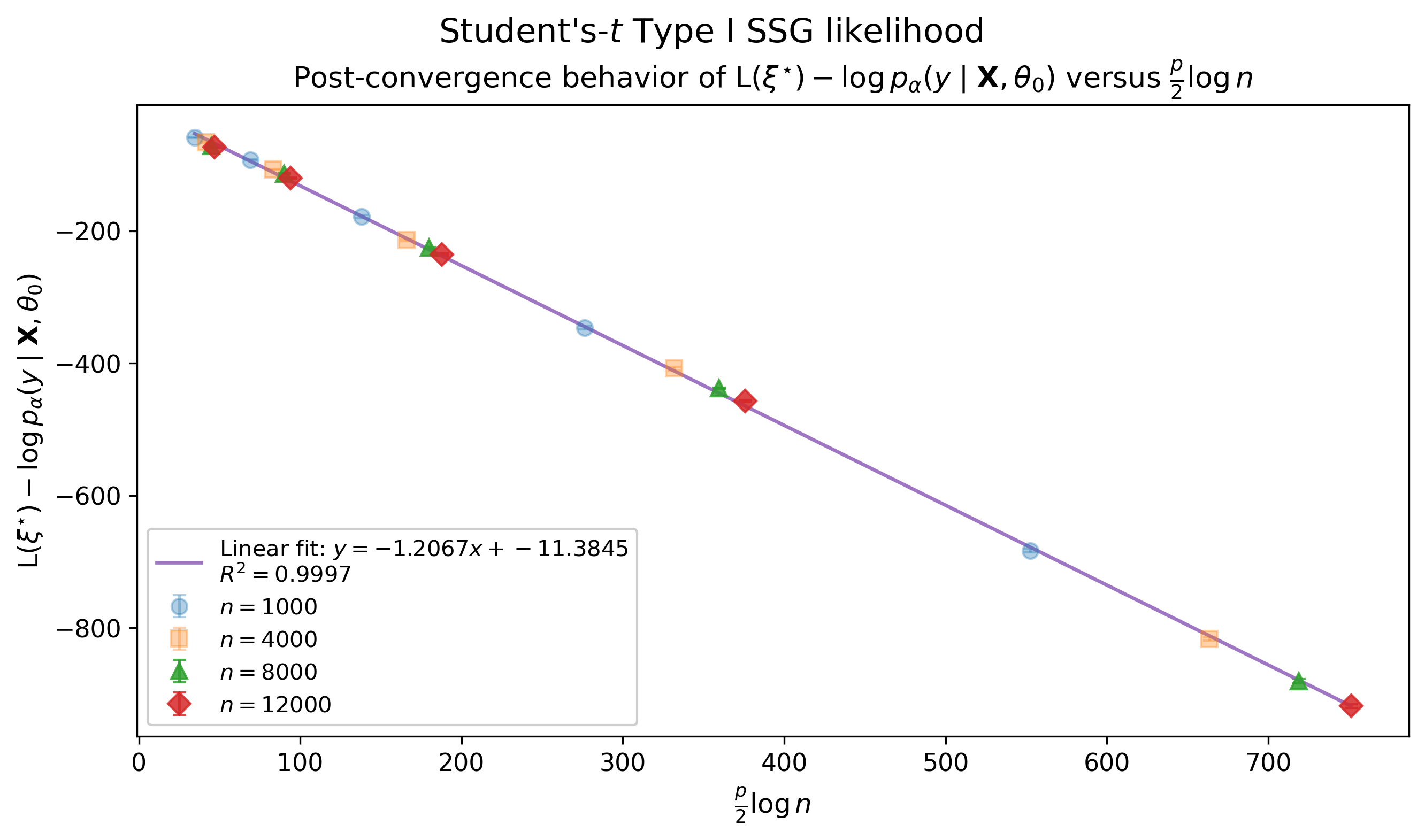}
    \caption{\footnotesize{Empirical complexity analysis of the post-convergence discrepancy $\mathsf{L}(\xi^{\star}) - \log p(y\mid \mathbf{X}, \theta_0)$ with respect to $\tfrac p2\log n$ for the Student's-$t$ Type I $\ssg$ likelihood.}}
    \label{fig:student_t_elbo_minus_true_loglik_vs_half_p_log_n_linear_fit}
\end{figure}

\subsection{Empirical Comparison of True and \texorpdfstring{$\tssg$}{TAVIE-SSG} Variational Posterior Distributions}\label{subsec:empirical-true-variational-posterior}

The variational gap studied in Sections~\ref{subsec:variational-gap} and~\ref{subsec:empirical-elbo-variational-gap} above quantifies the approximation error induced by replacing the true $\alpha$-fractional posterior with its variational surrogate. While this gap is defined at the level of the marginal evidence and the corresponding optimization objective, its practical significance is ultimately posterior-geometric. This motivates a more direct empirical comparison between the true and variational posterior distributions.

For the Student's-$t$ Type I $\ssg$ likelihood with $\nu=5$, $p=8$, and $\alpha=1$, Figure~\ref{fig:contour_true_variational-1} displays the bivariate $(\beta_1, \beta_2)$ marginal contours of the variational posteriors of $\tssg$ and competitors, together with the true $\alpha$-fractional posterior contour landscape approximated using \texttt{PyMC} NUTS~\citep{pymc}, for increasing $n\in \{1000, 2000\}$.  Also, Figure~\ref{fig:sliced_wasserstein_student} complements this visual comparison with boxplots of the sliced Wasserstein (SW) distance~\citep{bonneel2015sliced, kolouri2016sliced} (in $\log$-scale) between the true and variational posterior distributions across $10$ independent replications for $n\in\{200, 500, 1000, 2000\}$. 

\begin{figure}[H]
    \centering

    \begin{subfigure}[b]{\textwidth}
        \centering
        \includegraphics[width=0.8\textwidth]{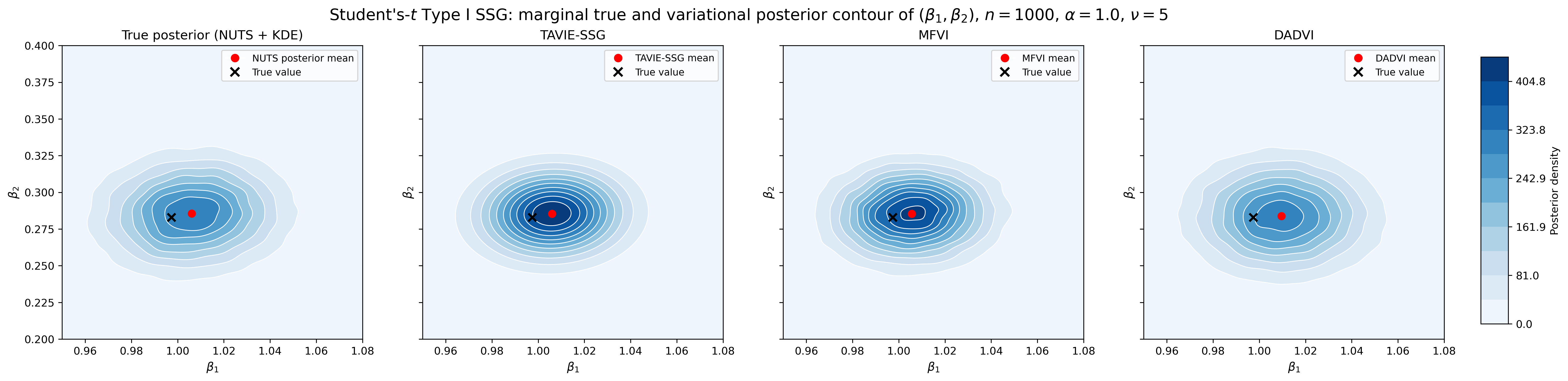}
        \label{fig:contour_1000_1}
    \end{subfigure}
    
    \vspace{0.5em}
    \begin{subfigure}[b]{\textwidth}
        \centering
        \includegraphics[width=0.8\textwidth]{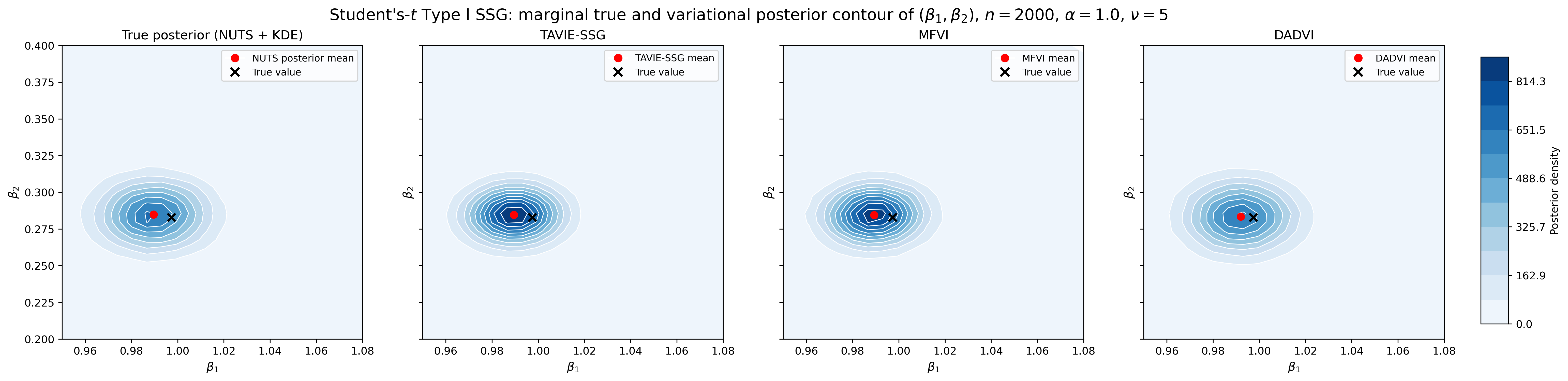}
        \label{fig:contour_2000_1}
    \end{subfigure}
    
    \caption{\footnotesize{Contour plots comparing the bivariate $(\beta_1, \beta_2)$ marginals of the variational posteriors of $\tssg$ and competitors with the corresponding fractional true posterior approximated using \texttt{PyMC} NUTS draws with a Gaussian kernel density estimate (KDE), under the Student's-$t$ $(\nu=5)$ Type I $\ssg$ model with $p=8$. Results are displayed for two sample sizes $n \in \{1000,2000\}$ with $\alpha=1$.}}
    \label{fig:contour_true_variational-1}
\end{figure}

\begin{figure}[H]
    \centering
    \includegraphics[width=0.6\linewidth]{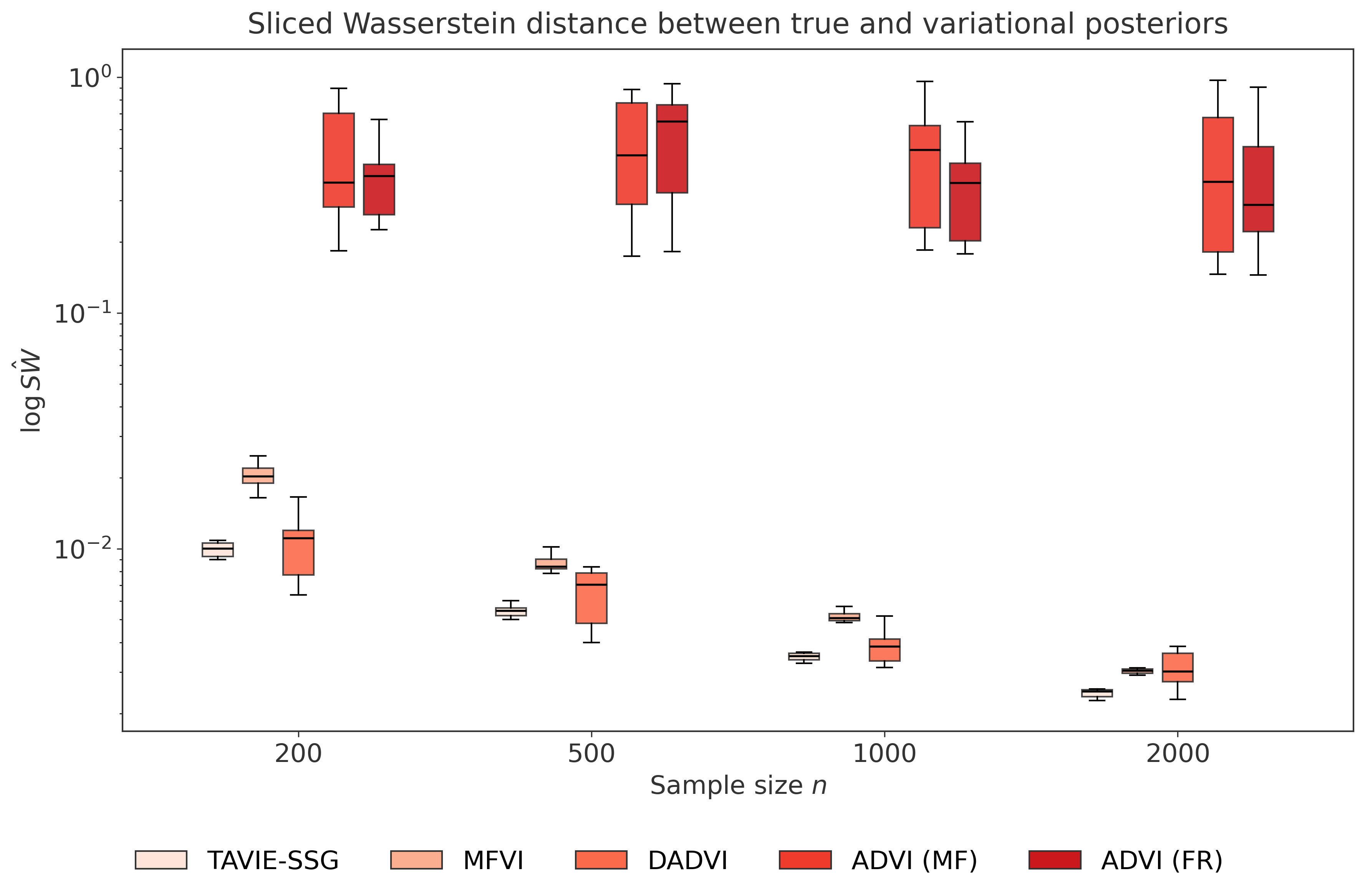}
    \caption{\footnotesize{Monte Carlo (MC) estimated sliced Wasserstein (SW) distance (in $\log$-scale) between the fractional true posterior (approximated using \texttt{PyMC} NUTS) and the variational posteriors of $\tssg$ and competitors under the Student's-$t$ ($\nu=5$) Type I $\ssg$ model, with $p=8$ and $\alpha=1$. For each sample size $n\in \{200, 500, 1000, 2000\}$, the boxplots summarize results over $10$ independent replications. The SW distance is computed in the joint space $(\beta,\log \tau^2)$ using random one-dimensional projections.}}
    \label{fig:sliced_wasserstein_student}
\end{figure}

Taken together, these figures show that DADVI, owing to its more flexible variational family, often yields the closest match to the true $\alpha$-fractional posterior distribution. At the same time, $\tssg$ and MFVI perform competitively both in terms of distance from the true posterior as well as the overall shape of the contours. A further clear trend is, that for $\tssg$, DADVI, and MFVI, the distance from the true posterior distribution decreases as the sample size $n$ increases, consistent with all three methods producing posteriors that become increasingly concentrated around the truth. By contrast, the variational posterior approximations obtained from the ADVI variants remain substantially farther from the true posterior across the increasing range of sample sizes $n$.

\newpage

\section{Time Complexity Analysis for \texorpdfstring{$\tssg$}{TAVIE-SSG}}\label{sec:time-complexity-analysis-TAVIE-SSG}

In this section, we quantify the per-iteration computational cost of $\tssg$ for Type I and Type II $\ssg$ likelihoods, as tabulated in Tables \ref{tab:type-1-time-complexity} and \ref{tab:type-2-time-complexity}, respectively. In both cases, the dominant operations arise from: (i) forming the weighted Gram matrix $\mathbf{X}^\top (\mathbf{X} \odot \mathcal{A}(\xi^{(t)}))$\footnote{Here, $\mathbf{X} \odot \mathcal{A}(\xi)$ denotes row-wise scaling. Mathematically, it is $\mathcal{A}(\xi)\mathbf{X}$, but computationally it means scaling each row of $\mathbf{X}$ by the corresponding diagonal entry of $\mathcal{A}(\xi)$. This trick avoids forming explicit matrix products and ensures that the Gram matrix computation is in $\mathcal{O}(np^2)$ time rather than $\mathcal{O}(np^3)$.}, which costs $\mathcal{O}(np^2)$ for dense $\mathbf{X}$, and (ii) solving a $p \times p$ linear system to obtain $\Sigma_{\alpha}(\xi^{(t)})$ from its inverse, which costs $\mathcal{O}(p^3)$ via Cholesky factorization.  All remaining steps, including the coordinate-wise updates of $\xi^{(t)}$ and evaluation of $\mathcal{A}(\xi^{(t)})$, scale linearly in $n$ or quadratically in $p$ and are therefore of lower order. Consequently, each iteration has overall time complexity $\mathcal{O}(np^2 + p^3)$.

To validate this scaling empirically, we conduct a synthetic experiment using the Type I Student's-$t$ $\ssg$ model with degrees of freedom $\nu=5$, precision $\tau^2=1$, $\alpha=1$, $\beta \sim \mathcal{N}_{p+1}(0, I_{p+1})$, and $\mathbf{X} = (\mathbf{x}_1, \ldots, \mathbf{x}_n)^{\top} \in \mathbb{R}^{n\times \overline{p+1}}$, where $x_{i1}=1$ and $x_{ij}$'s are generated independently and identically from the standard Gaussian distribution for $j=2, \ldots, p+1$. For each configuration, we average per-iteration runtime over $10$ independent repetitions. We study two regimes: (i) scaling in $n$ with $p$ fixed at $30$ using $n \in \{100, 200, 400, 800, 1600\}$, and (ii) scaling in $p$ with $n$ fixed at $1000$ using $p \in \{5, 10, 15, 20, 50, 80, 100\}$.

\begin{figure}[!htp]
    \centering
    \includegraphics[width=0.95\linewidth]{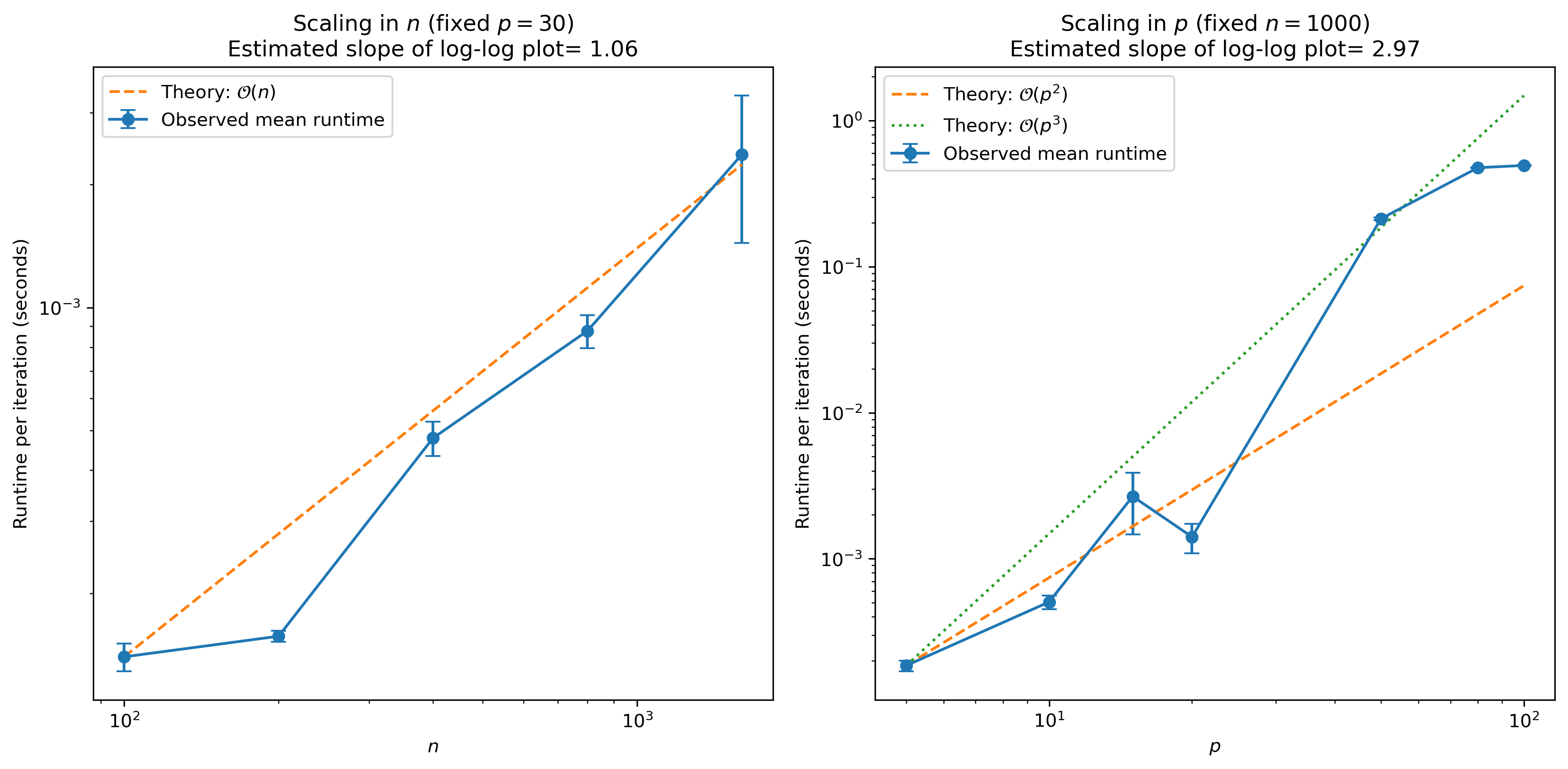}
    \caption{\footnotesize{Scaling behavior of $\tssg$ with respect to $n$ (left) and $p$ (right).}}
    \label{fig:tavie_scaling}
\end{figure}

The left panel of Figure~\ref{fig:tavie_scaling} shows that, for fixed $p$, the runtime grows approximately linearly with $n$, with an estimated $\log$-$\log$ slope of $1.06$, closely matching the theoretical $\mathcal{O}(n)$ behavior implied by $\mathcal{O}(np^2 + p^3)$ when $p$ is fixed. This reflects the fact that the $\xi$-updates decompose into $n$ independent univariate computations and are thus embarrassingly parallelizable. The right panel of Figure~\ref{fig:tavie_scaling} examines scaling in $p$ for fixed $n$. While the overall estimated slope of the $\log$-$\log$ plot is $2.97$, the empirical curve seems to interpolate between quadratic and cubic growth regimes, closely tracking these theoretical growth curves. This behavior is consistent with the theoretical $\mathcal{O}(np^2 + p^3)$ complexity.

Overall, these results confirm that $\tssg$ exhibits near-linear scaling in $n$ for fixed $p$, and polynomial scaling in $p$ consistent with the underlying linear algebra costs. This makes $\tssg$ particularly well-suited for high-sample, low-to-moderate dimensional settings, which is the primary target regime of the proposed method.

\begin{table}[H]
\centering
\caption{\footnotesize{Per-iteration time complexity of $\tssg$ for Type I $\ssg$ likelihoods.}}
\label{tab:type-1-time-complexity}
\footnotesize
\renewcommand{\arraystretch}{1.2}
\begin{tabular}{lll}
\toprule
\toprule
\textbf{Algorithm Step} & \textbf{Sub-steps} & \textbf{Time complexity} \\
\midrule

\textbf{Update $\xi,\mathcal{A}(\xi)$ } 
& Compute $\mathbf{X} \Sigma_{\alpha}(\xi^{(t-1)})$ & $\mathcal{O}(np^2)$ \\
\textbf{(parallelizable)}& Compute $\mathbf{X} \mu_{\alpha}(\xi^{(t-1)})$ & $\mathcal{O}(np)$ \\
& Compute $\operatorname{diag}(\mathbf{X} \Sigma_{\alpha}(\xi^{(t-1)}) \mathbf{X}^\top)$ & $\mathcal{O}(np)$ \\
& Update $\xi^{(t)} = \sqrt{\kappa_i(\xi^{(t-1)})}$ & $\mathcal{O}(n)$ \\
& Evaluate $\mathcal{A}(\xi^{(t)})$ & $\mathcal{O}(n)$ \\

\midrule
\textbf{Update $\Sigma_{\alpha}(\xi)$} 
& Form $\mathbf{X} \odot \mathcal{A}(\xi^{(t)})$ & $\mathcal{O}(np)$ \\
& Compute $\mathbf{X}^\top (\mathbf{X} \odot \mathcal{A}(\xi^{(t)}))$ & $\mathcal{O}(np^2)$ \\
& Form $\Sigma_{\alpha}(\xi^{(t)})^{-1}$ & $\mathcal{O}(p^2)$ \\
& Compute $\Sigma_{\alpha}(\xi^{(t)})$ via Cholesky solve & $\mathcal{O}(p^3)$ \\

\midrule
\textbf{Update $\mu_{\alpha}(\xi)$} 
& Compute $(\mathbf{X}^\top \odot \mathcal{A}(\xi^{(t)})) y$ & $\mathcal{O}(np)$ \\
& Compute $\Sigma^{-1}\mu - 2\alpha(\mathbf{X}^\top \odot \mathcal{A}(\xi^{(t)})) y$ & $\mathcal{O}(p)$ \\
& Multiply by $\Sigma_{\alpha}(\xi^{(t)})$ & $\mathcal{O}(p^2)$ \\

\midrule
\textbf{Update $b_{\alpha}(\xi)$} 
& Compute $\mathcal{A}(\xi^{(t)})^\top (y^2)$ & $\mathcal{O}(n)$ \\
& Compute $\mu_{\alpha}(\xi^{(t)})^\top \Sigma_{\alpha}(\xi^{(t)})^{-1} \mu_{\alpha}(\xi^{(t)})$ & $\mathcal{O}(p^2)$ \\
& Update $b_{\alpha}(\xi^{(t)})$ & $\mathcal{O}(1)$ \\

\midrule
\textbf{ELBO} $\mathsf{L}(\xi)$
& Compute $\log |\Sigma_{\alpha}(\xi^{(t)})|$ & $\mathcal{O}(p^3)$ \\
& Compute $\sum_{i \in [n]} \gamma(\xi_i^{(t)})$ & $\mathcal{O}(n)$ \\

\midrule
\textbf{Convergence} 
& Compute $\|\xi^{(t)} - \xi^{(t-1)}\|_2$ & $\mathcal{O}(n)$ \\

\midrule
\textbf{Total} & & $\mathbf{\mathcal{O}(np^2 + p^3)}$ \\
\bottomrule
\bottomrule
\end{tabular}
\end{table}

\begin{table}[H]
\centering
\caption{\footnotesize{Per-iteration time complexity of $\tssg$ for Type II $\ssg$ likelihoods.}}
\label{tab:type-2-time-complexity}
\footnotesize
\renewcommand{\arraystretch}{1.2}
\begin{tabular}{lll}
\toprule
\toprule
\textbf{Algorithm Step} & \textbf{Sub-steps} & \textbf{Time complexity} \\
\midrule

\textbf{Update $\xi,\mathcal{A}(\xi)$} 
& Compute $\mathbf{X} \mu_{\alpha}(\xi^{(t-1)})$ & $\mathcal{O}(np)$ \\
\textbf{(parallelizable)}& Compute $\mathbf{X} \mu_{\alpha}(\xi^{(t-1)})$ & $\mathcal{O}(np)$ \\
& Compute $\mathbf{X} \Sigma_{\alpha}(\xi^{(t-1)})$ & $\mathcal{O}(np^2)$ \\
& Compute $\operatorname{diag}(\mathbf{X} \Sigma_{\alpha}(\xi^{(t-1)}) \mathbf{X}^\top)$ & $\mathcal{O}(np)$ \\
& Update $\xi^{(t)} = \sqrt{\kappa_i(\xi^{(t-1)})}$ & $\mathcal{O}(n)$ \\
& Evaluate $\mathcal{A}(\xi^{(t)}) = -\mathbf{b} \odot \tanh(\xi^{(t)}/2)/(4\xi^{(t)})$ & $\mathcal{O}(n)$ \\

\midrule
\textbf{Update $\Sigma_{\alpha}(\xi)$} 
& Form $\mathbf{X} \odot \mathcal{A}(\xi^{(t)})$ & $\mathcal{O}(np)$ \\
& Compute $\mathbf{X}^\top (\mathbf{X} \odot \mathcal{A}(\xi^{(t)}))$ & $\mathcal{O}(np^2)$ \\
& Form $\Sigma_{\alpha}(\xi^{(t)})^{-1}$ & $\mathcal{O}(p^2)$ \\
& Compute $\Sigma_{\alpha}(\xi^{(t)})$ via Cholesky solve & $\mathcal{O}(p^3)$ \\

\midrule
\textbf{Update $\mu_{\alpha}(\xi)$} 
& Multiply $\Sigma_{\alpha}(\xi^{(t)})\left[\Sigma^{-1}\mu
+
\alpha\mathbf{X}^\top (\mathbf{a} - \mathbf{b}/2)\right]$ & $\mathcal{O}(p^2)$ \\

\midrule
\textbf{ELBO} $\mathsf{L}(\xi)$
& Compute $\mu_{\alpha}(\xi^{(t)})^\top \Sigma_{\alpha}(\xi^{(t)})^{-1} \mu_{\alpha}(\xi^{(t)})$ & $\mathcal{O}(p^2)$ \\
& Compute $\log |\Sigma_{\alpha}(\xi^{(t)})|$ & $\mathcal{O}(p^3)$ \\
& Compute $\sum_{i \in [n]} b_i\, \gamma(\xi_i^{(t)})$ & $\mathcal{O}(n)$ \\

\midrule
\textbf{Convergence} 
& Compute $\|\xi^{(t)} - \xi^{(t-1)}\|_2$ & $\mathcal{O}(n)$ \\

\midrule
\textbf{Total} & & $\mathbf{\mathcal{O}(np^2 + p^3)}$ \\
\bottomrule
\bottomrule
\end{tabular}
\end{table}

\newpage

\section{Sensitivity Analysis for \texorpdfstring{$\alpha$}{alpha}}\label{sec:sensitivity-alpha}
The likelihood tempering parameter $\alpha \in (0,1]$ is a key component of the $\tssg$ framework.
At an intuitive level, $\alpha$ controls the strength with which the data update the prior over the model parameters, and hence the degree of trust placed in the assumed likelihood. The choice $\alpha=1$ recovers the standard Bayesian update, whereas $\alpha\in (0, 1)$ yields a tempered update that downweights the likelihood contribution and therefore induces a more conservative learning rule. This interpretation is standard in the literature on fractional posteriors~\citep{Bayesian-fractional-posterior}, where likelihood tempering is used to moderate posterior learning and improve robustness under model misspecification; it is also closely related to generalized Bayes~\citep{BissiriHolmesWalker2016}, where the likelihood or loss contribution is explicitly scaled to calibrate the update, and to the SafeBayes perspective~\citep{GrunwaldVanOmmen2017}, where such scaling serves as a safeguard against overly aggressive learning under imperfect models. In this section, we conduct several simulation experiments to better understand the practical impact of $\alpha$ and to empirically study the sensitivity of our $\tssg$ algorithm to $\alpha$ along three complementary directions: the landscape of the variational posterior distribution, the accuracy of the resulting variational point estimates, and the frequentist coverage of credible intervals.

\textbf{Effect of $\alpha$ over the shape of variational posterior}. A natural first step is to understand how $\alpha$ influences the geometry of the variational posterior distribution. To this end, we consider fitting the $\tssg$ model with Laplace Type I $\ssg$ likelihood as in Section~\ref{app:additional-laplace-results}, with $\tau^{2}=3, \beta \sim \mathcal{N}_{p+1}(0, I_{p+1})$, and $\mathbf{X} = (\mathbf{x}_1, \ldots, \mathbf{x}_n)^{\top} \in \mathbb{R}^{n\times \overline{p+1}}$, where $x_{i1}=1$ and $x_{ij}$'s are generated independently and identically from the standard Gaussian distribution for $j=2, \ldots, p+1$; here we fix $p=5$. For $\alpha\in \{0.20, 0.40, 0.60, 0.95, 1.00\}$ and $n\in \{1000, 2000, 4000\}$, Figure~\ref{fig:contour_beta1_beta2} displays the contours of the bivariate marginal variational posterior distribution over two coordinates of $\beta$. For each fixed $n$, the contours exhibit a systematic increase in concentration around the variational posterior mean as $\alpha$ increases. This behavior can be directly attributed to the form of the Type I $\ssg$ variational posterior precision matrix, $\Sigma_{\alpha}^{-1}(\xi^{\star}) = \Sigma^{-1} - 2\alpha \mathbf{X}^{\top}\mathcal{A}(\xi^{\star})\mathbf{X}$, where $\mathcal{A}(\xi^{\star})$ is a diagonal matrix with negative entries. Hence, increasing $\alpha$ strengthens the likelihood contribution to the posterior precision, thereby yielding tighter variational marginals and visibly sharper contour plots. Furthermore, as expected with increasing $n$, we observe both the concentration of the variational posterior around its mean as well as the variational posterior mean moves closer to the true value of the parameter $(\beta_1, \beta_2)$.
\begin{figure}[!htp]
    \centering
    \includegraphics[width=0.9\linewidth]{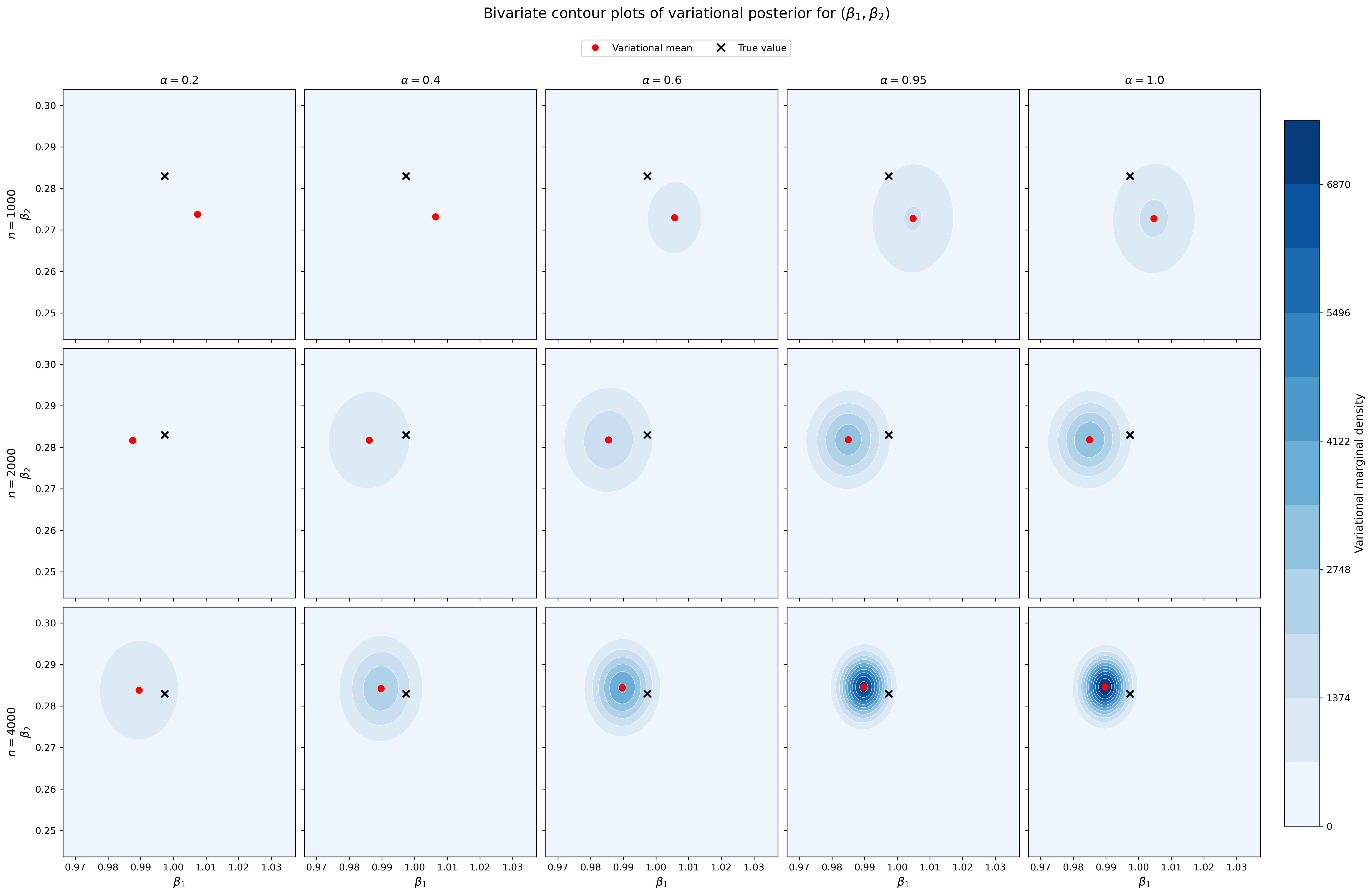}
    \caption{\footnotesize{Concentration of the marginal variational posterior distribution of $(\beta_1, \beta_2)$ around its mean $((\mu_{\alpha}(\xi^{\star}))_1, (\mu_{\alpha}(\xi^{\star}))_2)$ across different $(\alpha, n)$ for Laplace Type I $\ssg$ likelihood.}}
    \label{fig:contour_beta1_beta2}
\end{figure}

\textbf{Effect of $\alpha$ over point-estimation accuracy}. We next examine how this change in variational posterior concentration is reflected in point-estimation accuracy. Under the same simulation regime as mentioned above, for fixed $p=8$, we fit the $\tssg$ model with the Laplace Type I $\ssg$ likelihood over a range of tempering levels $\alpha \in \{0.20, 0.40, 0.60, 0.80, 0.95, 0.99\}$ and sample sizes $n\in \{200, 500, 1000, 2000\}$. We summarize the performance of $\tssg$ through the mean-squared error (MSE) of the variational posterior estimates of $\beta$ and $\tau^2$, $(p+1)^{-1}\lVert \beta - \widehat{\beta}\rVert_{2}^{2}$ and $(\tau^{2} - \widehat{\tau}^{2})^{2}$, over $100$ data regenerations; see Figure~\ref{fig:boxplot_mse_alpha_laplace}. The effect of $\alpha$ is systematic: for each fixed $n$, the MSE decreases as $\alpha$ increases, and for each fixed $\alpha$, the MSE decreases as $n$ increases. Thus, larger values of $\alpha$ yield more accurate variational point estimates, consistent with the fact that they place greater weight on the likelihood and therefore anchor the variational approximation more strongly to the data-generating mechanism.
\begin{figure}[!htp]
    \centering
    \begin{subfigure}[t]{0.48\textwidth}
        \centering
        \includegraphics[width=\textwidth]{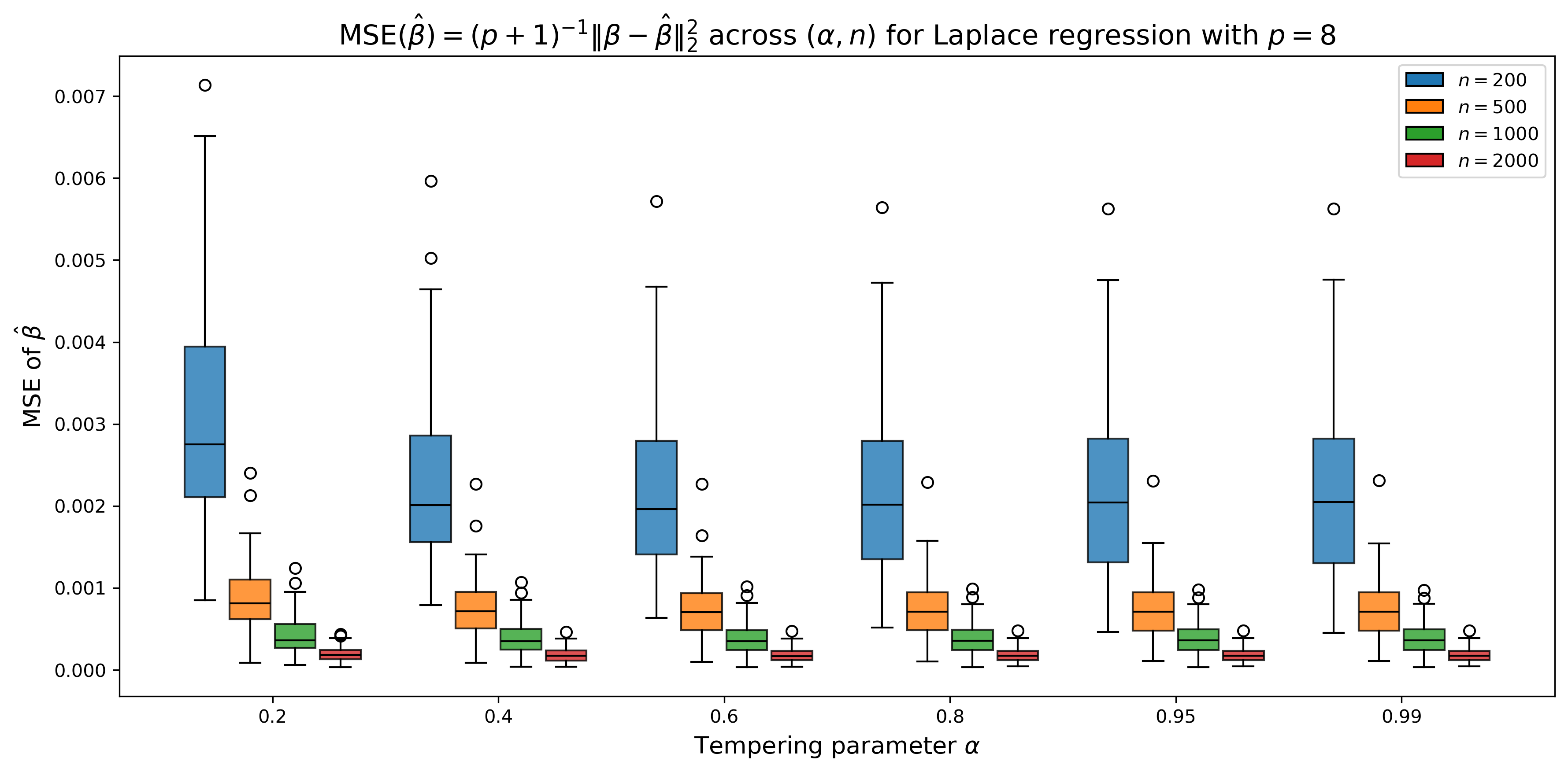}
        \label{fig:boxplot_mse_beta_alpha_laplace}
    \end{subfigure}
    \hfill
    \begin{subfigure}[t]{0.48\textwidth}
        \centering
        \includegraphics[width=\textwidth]{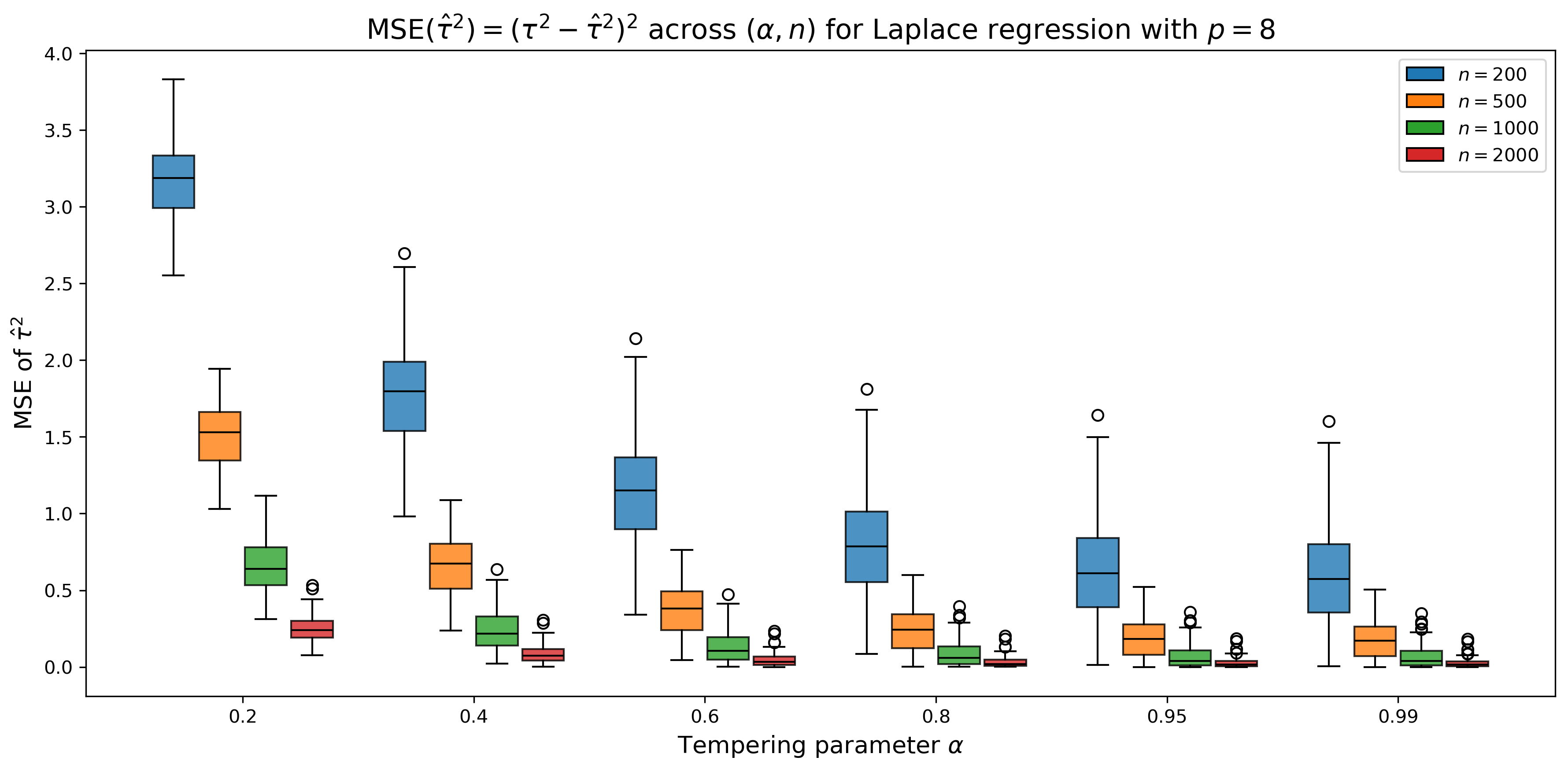}
        \label{fig:boxplot_mse_tau2_alpha_laplace}
    \end{subfigure}
    \caption{\footnotesize{Accuracy of variational posterior estimates of $\beta$ and $\tau^2$ across different $(\alpha, n)$ for Laplace Type I $\ssg$ likelihood.}}
    \label{fig:boxplot_mse_alpha_laplace}
\end{figure}

Collectively, the contour and MSE analyses show that $\alpha$ imparts a dampening effect on the rate of variational posterior concentration around the mean. Smaller values of $\alpha$ yield a more diffuse variational posterior, whereas larger values lead to sharper concentration with more accurate variational point estimates.

Additionally, we also investigate the $\alpha$-R\'{e}nyi divergence $D_{\alpha}(\widehat{\theta}, \theta_0)$ as defined in Section~\ref{subsec:var-risk} of the main manuscript. Figure~\ref{fig:boxplot_renyi_alpha_laplace} shows that the behavior of $D_{\alpha}(\widehat{\theta}, \theta_0)$ is more nuanced. While the divergence increases mildly with $\alpha$ for fixed $n$, it decreases markedly as $n$ increases. This pattern reflects the fact that $D_{\alpha}(\widehat{\theta}, \theta_0)$ captures not only proximity of the variational point estimate to the true parameter, but also the distributional effect of likelihood tempering induced by $\alpha$.
\begin{figure}[!htp]
    \centering
    \includegraphics[width=0.8\linewidth]{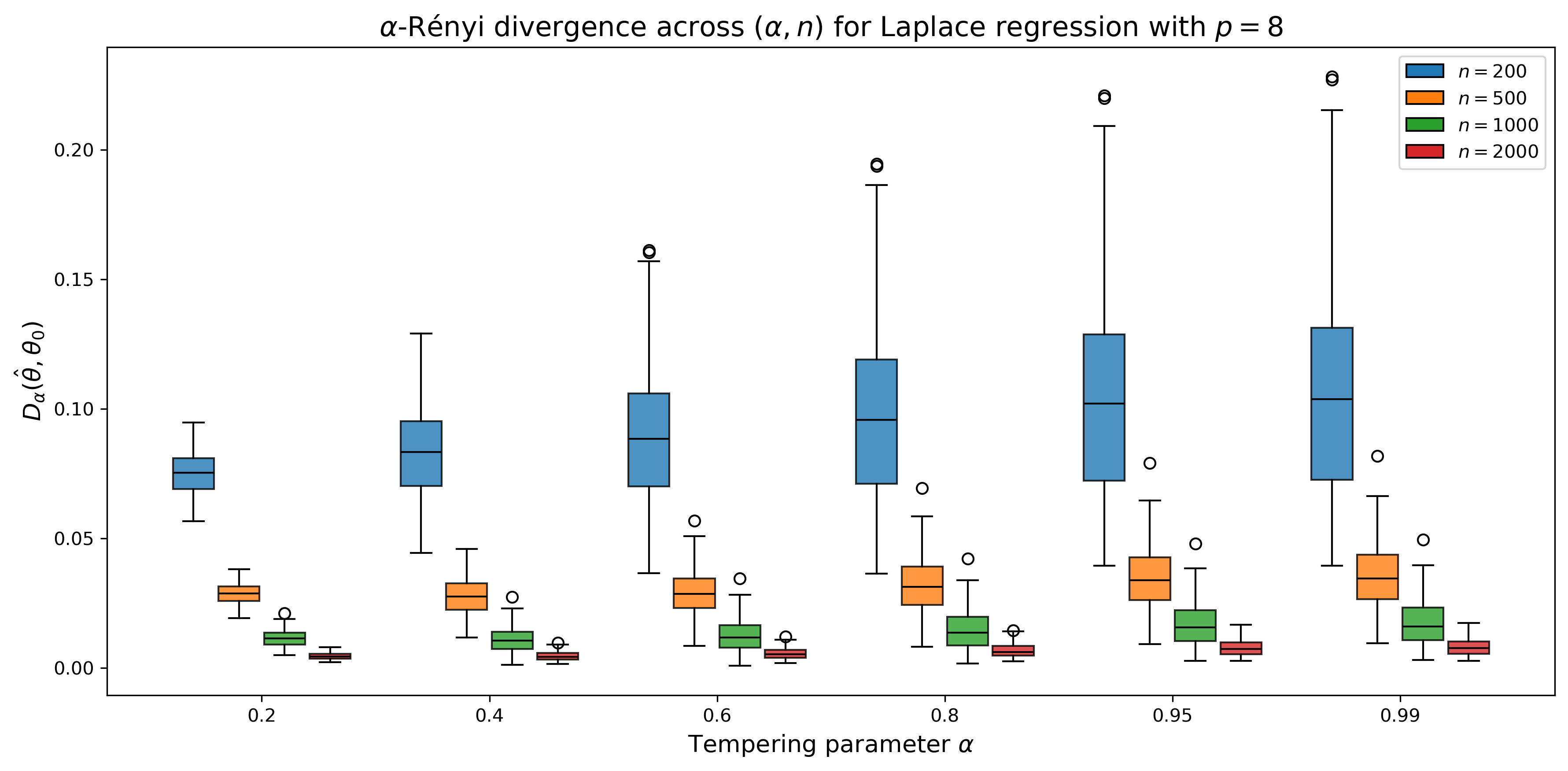}
    \caption{\footnotesize{$D_{\alpha}(\widehat{\theta}, \theta_0)$ across different $(\alpha, n)$ and $100$ data regenerations for Laplace Type I $\ssg$ likelihood.}}
    \label{fig:boxplot_renyi_alpha_laplace}
\end{figure}

\textbf{Effect of $\alpha$ over frequentist coverage}. The preceding experiments show that $\alpha$ directly regulates the spread of the variational posterior distribution. This, in turn, motivates an examination of uncertainty quantification. A growing literature has emphasized that variational methods can achieve strong point-estimation performance while still misrepresenting posterior dispersion leading to poor calibration of posterior uncertainty, particularly under restrictive approximation families such as mean-field variational Bayes. Existing efforts to address this issue include post hoc covariance-correction methods, such as linear-response variational Bayes~\citep{GiordanoBroderickJordan2018}, resampling-based approaches, such as the variational weighted likelihood bootstrap of~\cite{han2019statisticalinferencemeanfieldvariational}, and more recent work formalizing trade-offs inherent in variational uncertainty quantification~\citep{MargossianPillaudVivienSaul2025}. Against this backdrop, we next investigate the effect of $\alpha$ on the frequentist coverage of credible intervals produced by $\tssg$.

\begin{figure}[!t]
    \centering
    \begin{subfigure}[t]{0.48\textwidth}
        \centering
        \includegraphics[width=\textwidth]{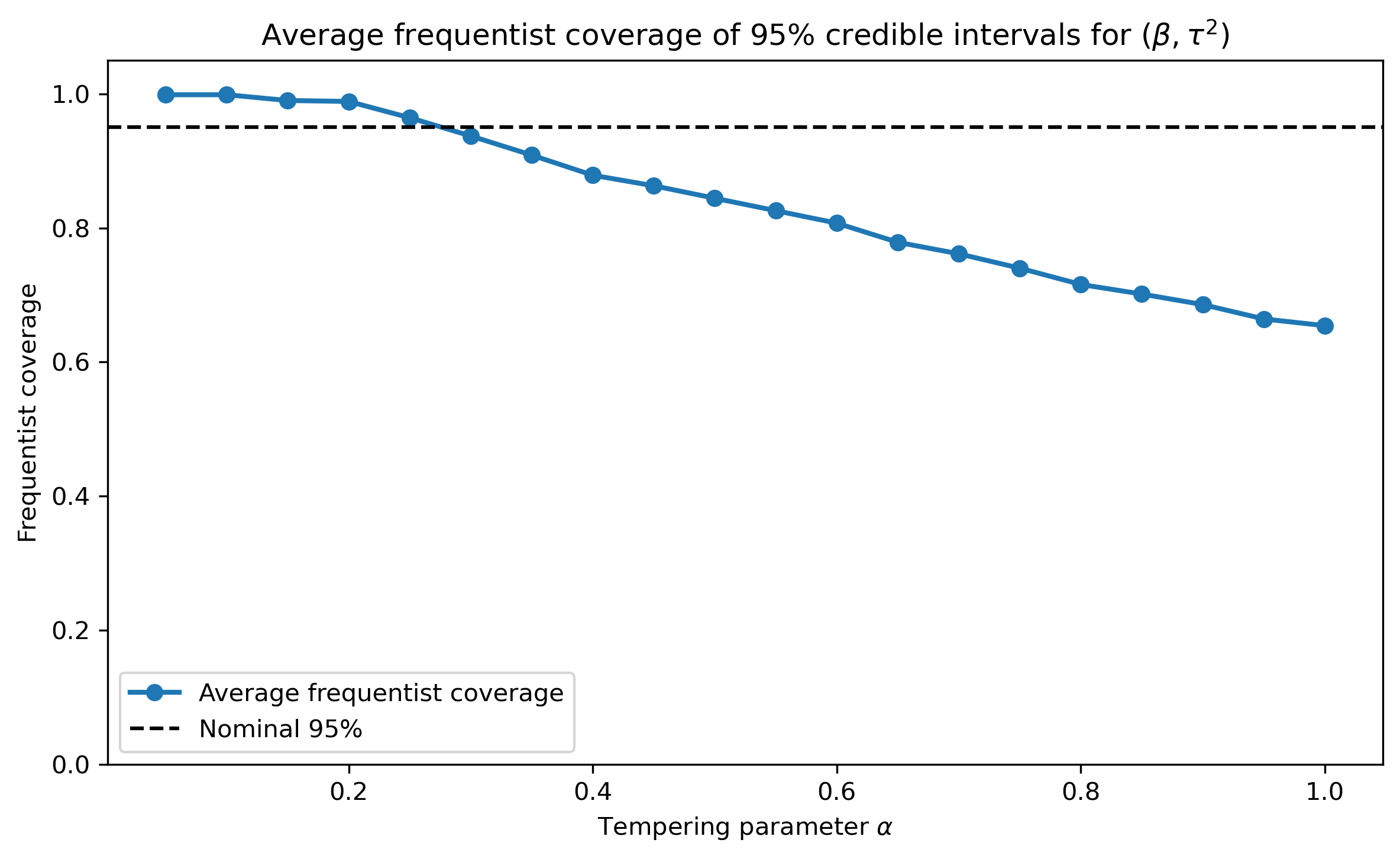}
        \label{fig:coverage_overall_vs_alpha}
    \end{subfigure}
    \hfill
    \begin{subfigure}[t]{0.48\textwidth}
        \centering
        \includegraphics[width=\textwidth]{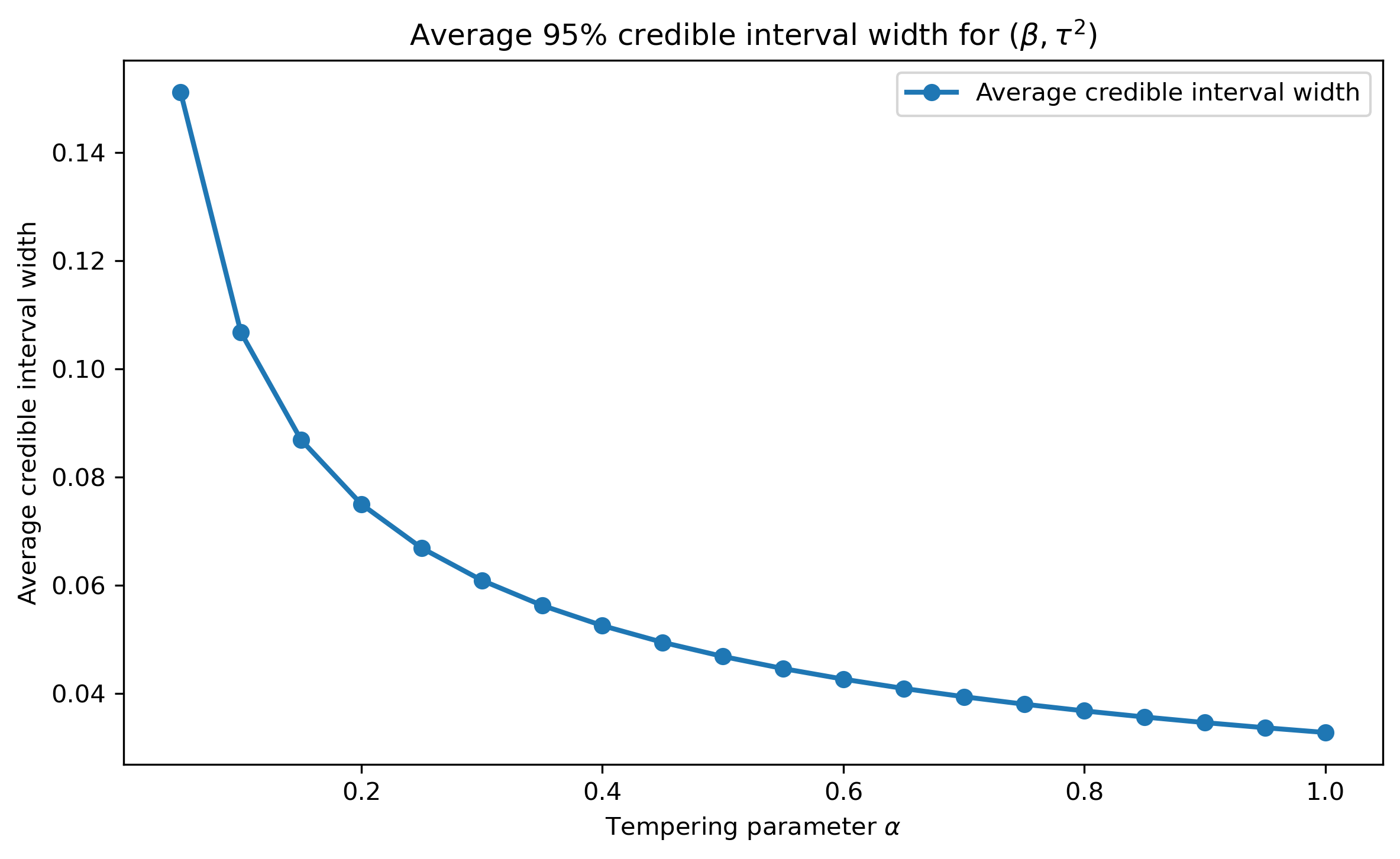}
        \label{fig:width_overall_vs_alpha}
    \end{subfigure}
    \caption{\footnotesize{Average frequentist coverage and width of $95\%$ marginal credible intervals of $(\beta, \tau^2)$ for Laplace Type I $\ssg$ likelihood.}}
    \label{fig:frequentist_coverage}
\end{figure}

\begin{figure}[!htp]
    \centering
    \includegraphics[width=0.8\linewidth]{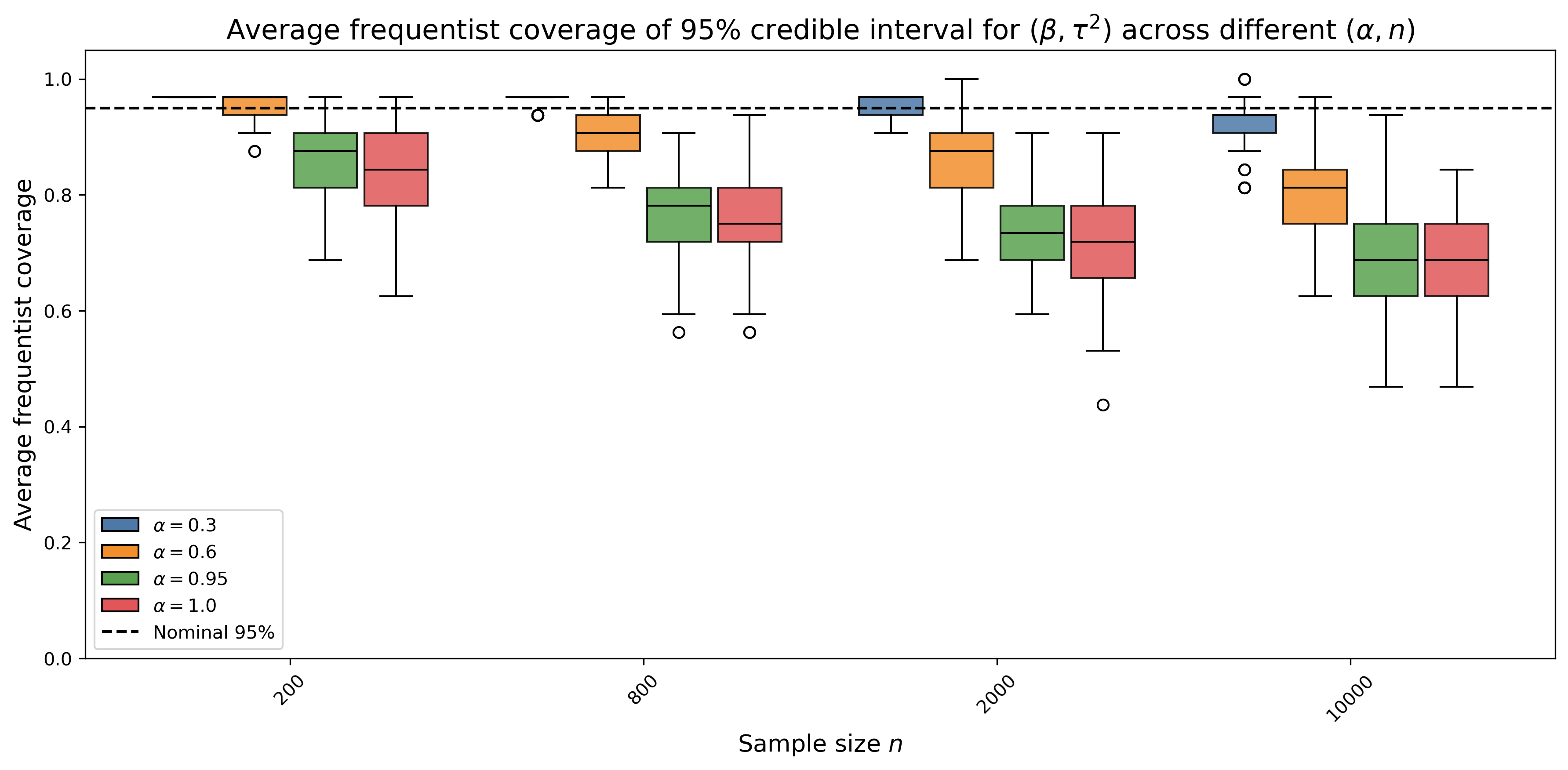}
    \caption{\footnotesize{Behavior of average frequentist coverage of $95\%$ marginal credible intervals of $(\beta, \tau^2)$ across different $(\alpha, n)$ for Laplace Type I $\ssg$ likelihood.}}
    \label{fig:coverage_boxplots_by_n_alpha}
\end{figure}

Specifically, we evaluate the frequentist coverage, over $100$ independently regenerated datasets, of nominal $95\%$ marginal credible intervals for each of the $p+2$ model parameters in $(\beta^{\top}, \tau^{2})^{\top} \in \mathbb{R}^{p+1} \times \mathbb{R}^{+}$. Following a similar experimental setup for the Laplace Type I $\ssg$ likelihood as above, at fixed $n=10000$ and $p=5$, Figure~\ref{fig:frequentist_coverage} shows that the average coverage (averaged over $p+2$ model parameters) decreases as $\alpha$ increases from $0.05$ to $1.00$, while the corresponding average interval width shrinks in parallel. The same pattern persists in the varying-$n$ experiment (with $p=30$) as depicted by Figure~\ref{fig:coverage_boxplots_by_n_alpha}: for each fixed $\alpha$, the average coverage decreases as $n$ increases, while for each fixed $n$, larger $\alpha$ produces systematically lower coverage. 

This finding is also aligned with the general behavior of variational approximations. Since larger $\alpha$ yields a more concentrated working posterior, credible intervals become tighter; in finite samples, this leads to undercoverage relative to the nominal level. The effect becomes stronger with increasing sample size because the posterior concentrates further. Therefore, in the present experiment, there is a clear accuracy-coverage trade-off induced by $\alpha$: values of $\alpha$ closer to $1$ improve point estimation but degrade frequentist coverage. It is therefore natural to ask whether the additional degree of control provided by the likelihood tempering parameter $\alpha$ can better account for uncertainty in the variational posterior.

\newpage

\section{Calibration of \texorpdfstring{$\alpha$}{alpha}}
\label{sec:calibration-alpha}
The preceding sensitivity analysis shows that $\alpha$ serves two distinct inferential roles: it controls the concentration of the variational posterior distribution and, at the same time, determines the frequentist coverage of the resulting credible intervals. In this section, we therefore develop a practical strategy for selecting and calibrating $\alpha$ for the $\tssg$ framework so that the credible intervals attain the nominal $95\%$ coverage level, while recognizing that such calibration may come at some cost to variational point-estimation accuracy. Our proposed procedure for calibrating $\alpha$ follows from the general posterior calibration framework of~\cite{SyringMartin2019}, which advocates tuning a posterior spread parameter through a frequentist coverage criterion.

For illustration, following~\cite{SyringMartin2019}[Algorithm 1], we present the procedure for the $\tssg$ Type I Student's-$t$ model with the model parameters $\theta = (\beta^{\top}, \tau^{2})^{\top}\in \mathbb{R}^{p+1} \times \mathbb{R}^{+}$ and $\nu\in \mathbb{N}$ along with the design matrix $\mathbf{X}\in \mathbb{R}^{n\times \overline{p+1}}$ configured as in Section~\ref{subsec:sim-exp-student} of the main manuscript; we fix $(n, p) = (10000, 8)$. The same procedure applies more broadly to $\tssg$ models across both Type I and Type II $\ssg$ families. Fix a target credibility level $1-\gamma \in (0, 1)$, a convergence tolerance $\mathcal{E} > 0$, and an initial guess $\alpha^{(0)}\in (0, 1]$. Let $\mathcal{D}_n := \{(\mathbf{x}_i, y_i)\;:\;i\in [n]\}$ denote the observed data, and let $\widehat{\theta}(\mathbb P_n) := (\widehat{\beta}(\mathbb P_n)^{\top}, \widehat{\tau}^{2}(\mathbb{P}_n))^{\top}\in \mathbb{R}^{p+1}\times \mathbb{R}^{+}$ be a plug-in estimate obtained from the $\tssg$ Student's-$t$ model fit on the observed sample. Here, $\mathbb{P}_n$ is the empirical distribution of the observed data. For each $\alpha$, let:
\begin{align}
\label{eq:calibration-1}
\mathcal{C}_{\alpha, \gamma}(\mathcal D_n) := \left\{\theta\;:\; \pi_{\alpha}(\theta\mid \mathcal{D}_n, \xi)\geq c_{\gamma}(\alpha)\right\},
\end{align}
denote the joint $100(1-\gamma)\%$ highest posterior density (HPD) credible region under the variational posterior distribution, where the threshold $c_{\gamma}(\alpha)$ is chosen such that the variational posterior probability assigned to $\mathcal{C}_{\alpha, \gamma}(\mathcal{D}_n)$ is $1-\gamma$. Then, define the empirical coverage function:
\begin{align}
\label{eq:calibration-2}
\widehat{c}_{\gamma}(\alpha; \mathbb{P}_n) := \frac{1}{B}\sum_{b\in [B]} \mathbf{1}\left\{\widehat{\theta}(\mathbb{P}_n) \in \mathcal{C}_{\alpha, \gamma}(\mathcal{D}^{\star}_{n(b)})\right\},
\end{align}
where $\mathcal{D}^{\star}_{n(b)}$ for $b\in [B]$ are bootstrap resamples (with replacement) from the observed data. The objective is to choose $\alpha$ so that $\widehat{c}_{\gamma}(\alpha; \mathbb P_n) = 1-\gamma$ (or, approximately), i.e., the credible regions produced by $\tssg$ achieve the desired nominal frequentist coverage. The resulting calibration algorithm is given in Algorithm~\ref{alg:calibration-algorithm}.

{
\renewcommand{\baselinestretch}{1.0}\normalsize
\begin{algorithm}[H]
\caption{Calibrating $\alpha \in (0,1]$ for the $\tssg$ Type I Student's-$t$ model}
\label{alg:calibration-algorithm}
\DontPrintSemicolon

\KwIn{Observed data $\mathcal{D}_n$, plug-in estimate $\widehat{\theta}(\mathbb P_n)$, target credibility level $1-\gamma$, tolerance $\mathcal E>0$, initial guess $\alpha^{(0)}$, number of bootstrap samples $B$, number of Monte Carlo (MC) samples $M$, admissible interval $[\alpha_{\min},\alpha_{\max}]$.}
\KwOut{Calibrated likelihood tempering parameter $\alpha^{\star}$.}

\textbf{Initialize}: Generate $B$ bootstrap resamples $\mathcal{D}_{n(b)}^{\star}=\{(\mathbf{x}_{i(b)}^{\star}, y_{i(b)}^{\star})\;:\;i\in [n]\}$ for $b\in [B]$, and set $t\gets 0$. \;

\Repeat{$\bigl|\widehat{c}_{\gamma}(\alpha^{(t)};\mathbb P_n)-(1-\gamma)\bigr|<\mathcal E$}{
  \tcc{Step 1: Fit $\tssg$ on each bootstrap sample and construct HPD credible regions}
  \For{$b\in [B]$}{
    Fit the $\tssg$ Type I Student's-$t$ model on $\mathcal{D}_{n(b)}^{\star}$ with tempering parameter $\alpha^{(t)}$, yielding the fitted Normal-Gamma variational posterior $\pi_{\alpha^{(t)}}(\theta\mid \mathcal{D}_{n(b)}^{\star},\xi^{\star})$. Next, generate $M$ MC samples: 
    \begin{equation*}
        \theta_b^{(1)},\ldots,\theta_b^{(M)} \sim \pi_{\alpha^{(t)}}(\theta\mid \mathcal{D}_{n(b)}^{\star},\xi^{\star}),
    \end{equation*}
    and compute the corresponding joint variational $\log$-densities
    $
    \ell_b^{(m)}=\log \pi_{\alpha^{(t)}}(\theta_b^{(m)}\mid \mathcal{D}_{n(b)}^{\star},\xi^{\star})
    $, for $m\in [M]$. Let $\widehat{c}_{\gamma,b}^{\mathrm{MC}}$ be the empirical lower $\gamma$-quantile of $\{\ell_b^{(m)}\}_{m\in [M]}$. Define the joint $100(1-\gamma)\%$ HPD credible region by:
    \begin{equation*}
    \mathcal C_{\alpha^{(t)},\gamma}(\mathcal{D}_{n(b)}^{\star})
    : =
    \left\{
    \theta:\log \pi_{\alpha^{(t)}}(\theta\mid \mathcal{D}_{n(b)}^{\star},\xi^{\star})\ge \widehat{c}_{\gamma,b}^{\mathrm{MC}}
    \right\}.
    \end{equation*}
    \;
  }

  \tcc{Step 2: Evaluate empirical coverage probability}
  Compute:
  \begin{equation*}
    \widehat{c}_{\gamma}(\alpha^{(t)};\mathbb P_n)
  =
  \frac{1}{B}\sum_{b\in [B]}
  \mathds{1}\left\{
  \widehat{\theta}(\mathbb P_n)\in
  \mathcal C_{\alpha^{(t)},\gamma}(\mathcal{D}_{n(b)}^{\star})
  \right\}.
  \end{equation*}
  \;

  \tcc{Step 3: Update $\alpha$ via Robbins-Monro stochastic approximation scheme}
  As per~\cite{SyringMartin2019}, set $\kappa_{t} = (t+1)^{-0.51}$ and update~\citep{RobbinsMonro1951}:
  \begin{equation*}
      \alpha^{(t+1)}
  \gets
  \Pi_{[\alpha_{\min},\alpha_{\max}]}
  \left[
  \alpha^{(t)} + s\kappa_t
  \left\{
  \widehat{c}_{\gamma}(\alpha^{(t)};\mathbb P_n)-(1-\gamma)
  \right\}
  \right],
  \end{equation*}
  where $s=1$ and $\Pi_{[\alpha_{\min},\alpha_{\max}]}(x) = \min\{\alpha_{\max}, \max\{\alpha_{\min}, x\}\}$ denotes projection onto $[\alpha_{\min},\alpha_{\max}]$. \;
  
  $t\gets t+1$ \;
}

\Return{$\alpha^{\star}=\alpha^{(t)}$} \;
\end{algorithm}

\newpage

\begin{figure}[!t]
    \centering
    \begin{subfigure}[t]{0.48\textwidth}
        \centering
        \includegraphics[width=\textwidth]{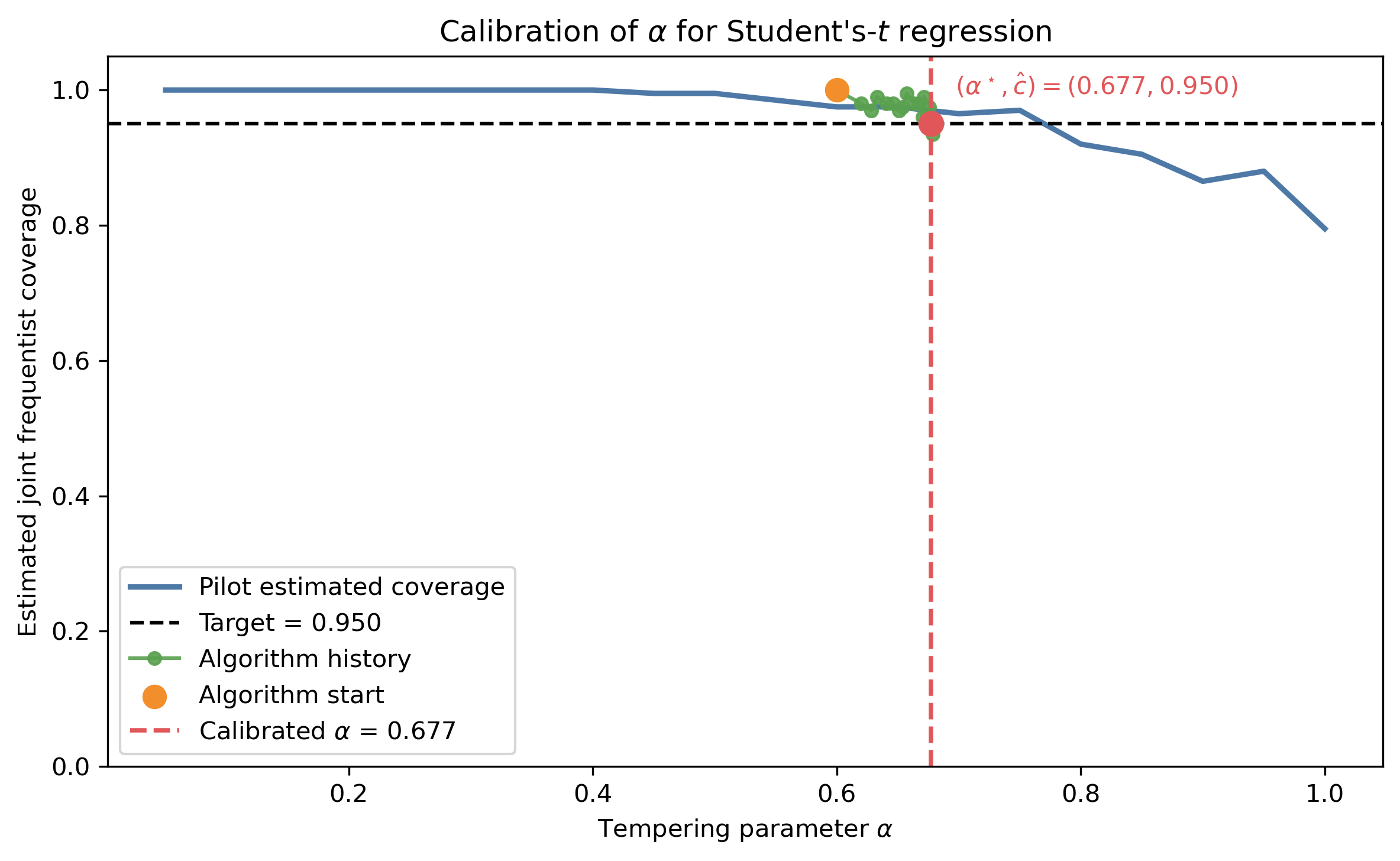}
        \label{fig:calibrated_alpha_with_history}
    \end{subfigure}
    \hfill
    \begin{subfigure}[t]{0.48\textwidth}
        \centering
        \includegraphics[width=\textwidth]{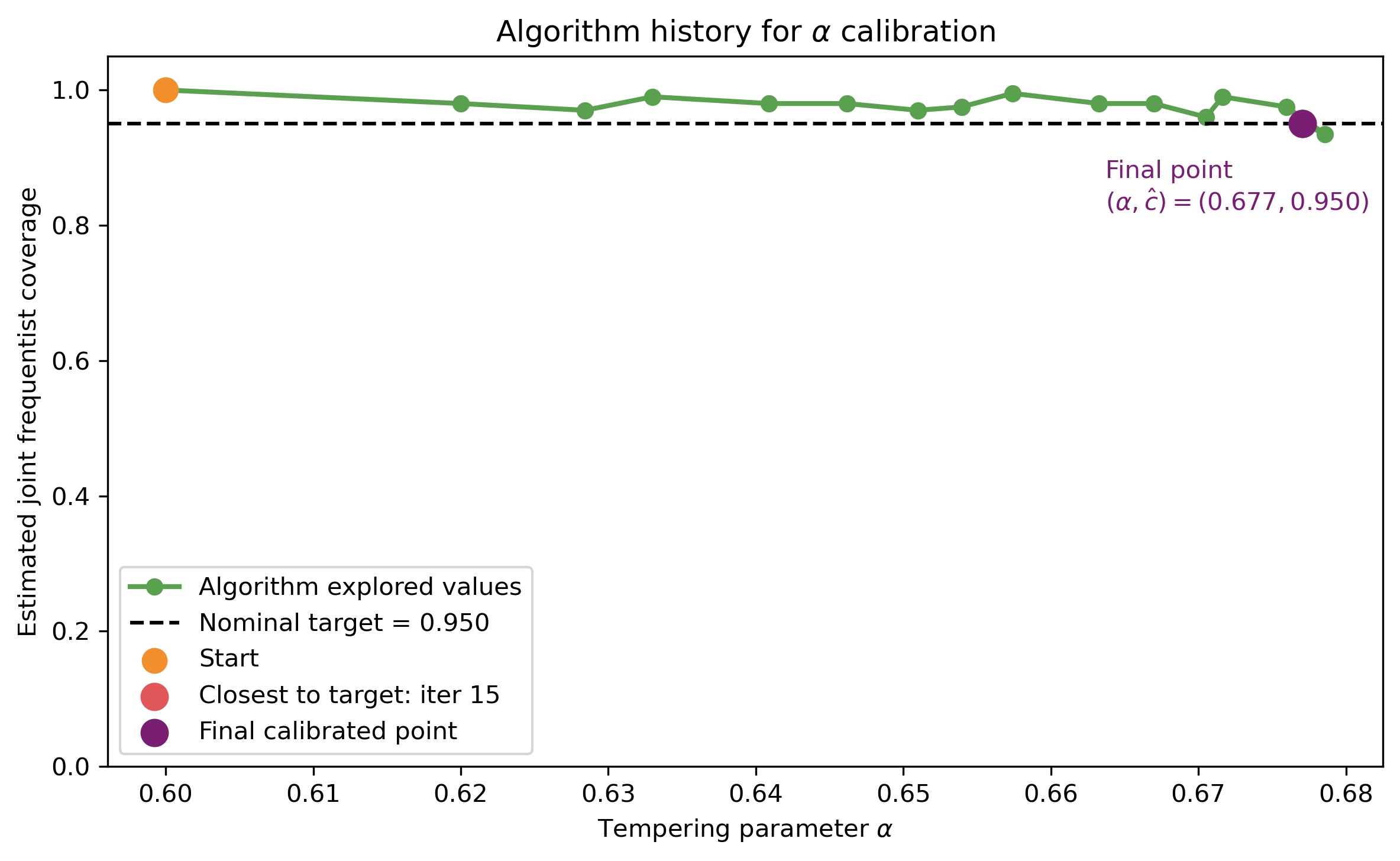}
        \label{fig:algorithm_history}
    \end{subfigure}
    \caption{\footnotesize{Calibration of $\alpha$ for the $\tssg$ Type I Student's-$t$ model with target $1-\gamma=0.95$. $\widehat{\theta}(\mathbb{P}_n)$ is obtained by fitting $\tssg$ under $\alpha=1.00$ and $(\mu, \Sigma, a, b, \texttt{tol}) = (0, I_{p+1}, 0.025, 0.025, 10^{-9})$. For each candidate $\alpha$, calibration is based on $B=200$ bootstrap resamples, with joint $95\%$ HPD credible region for $(\beta^{\top}, \tau^2)^{\top}$ constructed from $M=5000$ MC draws from the fitted variational posterior. A pilot grid search over $[\alpha_{\min}, \alpha_{\max}] = [0.05, 1.00]$ (left panel) provides a reasonable initial value $\alpha^{(0)} = 0.60$, which is then refined via the Robbins-Monro scheme~\citep{RobbinsMonro1951} in Algorithm~\ref{alg:calibration-algorithm} (see right panel for algorithm history) with tolerance $\mathcal{E}=0.005$. Algorithm~\ref{alg:calibration-algorithm} results into $(\alpha^{\star}, \widehat{c}_{0.05}(\alpha^{\star}; \mathbb{P}_n)) = (0.677, 0.95)$.}}
    \label{fig:calibration_algorithm_results}
\end{figure}
}

Taken together, the plots in Figure~\ref{fig:calibration_algorithm_results} summarize the calibration results obtained from Algorithm~\ref{alg:calibration-algorithm}. The pilot grid coverage estimates (left panel), along with the stochastic approximation path (right panel), show that the nominal $95\%$ coverage level is attained at an interior value of $\alpha$, rather than at either boundary of the admissible interval $[\alpha_{\min}, \alpha_{\max}] = [0.05, 1.00]$. The pilot search identifies $\alpha^{(0)} = 0.60$ as a suitable starting value, and the Robbins-Monro refinement in Algorithm~\ref{alg:calibration-algorithm} then adjusts this value slightly upward, converging to the calibrated choice $\alpha^{\star} = 0.677$. At this value, the estimated joint coverage is $\widehat{c}_{0.05}(0.677; \mathbb{P}_n) = 0.95$, up to MC error.

This is practically important. The calibration does not simply drive $\alpha$ toward very small values in order to inflate uncertainty, but instead selects a moderate level of tempering that balances posterior concentration against interval calibration, given the target credibility level $1-\gamma=0.95$. In this sense, Algorithm~\ref{alg:calibration-algorithm} identifies a data-driven compromise between the overconservative regime at small $\alpha$ and the under-covering regime near $\alpha=1$, thereby providing a general and practical recipe for calibrating and selecting $\alpha$ in the $\tssg$ framework.

\newpage

\section{Discussion on Assumptions}\label{sec:discussion-assumptions}

\subsection{Assumption~\ref{ass:1}}

\begin{assumption}[Regularity on $h$]
\label{ass:1-supp}
The function $h: \mathbb{R}^{+}_{0}\to \mathbb{R}$ in Definition~\ref{def:SSG} of the main manuscript satisfies the following regularity conditions: 
\begin{enumerate}
    \item [(i)] $h$ is thrice differentiable on $\mathbb{R}^{+}$ and all derivatives up to order $3$ are continuous on $\mathbb{R}^{+}$;

    \item [(ii)] $h$ is decreasing with $\lim_{t \rightarrow \infty} h(t) = -\infty$;

    \item [(iii)] $h''(t) > 0$; and

    \item [(iv)] there exists a constant $K\in \mathbb{R}^{+}$ such that the function $t \mapsto h(t^2)$ is $K$-Lipschitz. 
\end{enumerate}
\end{assumption}

Assumption~\ref{ass:1-supp} imposes mild regularity on the function $h: \mathbb{R}_{0}^{+} \to \mathbb{R}$ defining the $\ssg$ likelihoods in Definition~\ref{def:SSG} of the main manuscript. Below, we briefly justify each component of the assumption and verify that all model families with the form of $h$ in \eqref{h-func} satisfy these conditions.
\begin{equation}
\label{h-func}
h(t)=
\begin{cases}
-\sqrt{t}, & \text{for Laplace family},\\[6pt]
-\dfrac{\nu+1}{2}\,\log\!\left(1+\dfrac{t}{\nu}\right), 
& \text{for Student's-$t$ family with fixed }\nu\in\mathbb{N},\\[10pt]
-\log\!\left[2\cosh\!\left(\dfrac{\sqrt{t}}{2}\right)\right], 
& \text{for Type II $\ssg$ likelihoods}.
\end{cases}
\end{equation}

\noindent
\emph{Smoothness}. Assumption \ref{ass:1-supp}(i) requires $h$ to be thrice differentiable on $\mathbb{R}^{+}$ with the corresponding  derivatives being continuous on $\mathbb{R}^{+}$. All three forms of $h$ in \eqref{h-func} are thrice differentiable on $\mathbb{R}^{+}$ with continuity on $\mathbb{R}^{+}$. The existence of first and second order derivatives is essential for constructing tangent minorizers and for controlling the Jensen's gap:
$$
\Delta_{2, i} := \log p(y_i\mid \mathbf{x}_i,\theta) - \log \varphi(y_i\mid \mathbf{x}_i, \theta, \xi_i), \quad i\in [n], 
$$
which plays a central role in the variational risk analysis in Section~\ref{app:variational-risk-bounds}. The third derivative ensures the local $\omega$-strong concavity of the Expectation-Maximization (EM) surrogate function $\mathcal{Q}$, which underpins the convergence guarantees for the $\tssg$ iterates in Section \ref{app:TAVIE-convergence-proof}.

\noindent
\emph{Monotonicity and tail behavior}. Assumption \ref{ass:1-supp}(ii) requires $h$ to be strictly decreasing with $\lim_{t\to \infty}h(t) = -\infty$. This is inherent to $\ssg$ likelihoods: for Type II models the claim follows directly from the functional form, while for Type I models the divergence of $h(t)$ to $-\infty$ reflects the fact that any density supported on $\mathbb{R}$ must decay to zero in the tails.

\noindent
\emph{Convexity}. Assumption \ref{ass:1-supp}(iii) states that $h''(t) > 0$, i.e., $h$ is strictly convex on $\mathbb{R}_0^{+}$. This convexity is a defining property of the $\ssg$ class and is easily verified for the Laplace, Student's-$t$, and Type II $\ssg$ families.

\noindent
\emph{Lipschitz regularity of the composition $t\mapsto h(t^2)$}. Assumption \ref{ass:1-supp}(iv) requires $h(t^2)$ to be $K$-Lipschitz for some $K\in \mathbb{R}^{+}$. This condition holds with $K=1$ for Type II $\ssg$ likelihoods and the Laplace (Type I $\ssg$) family, and with $K=\tfrac{\nu+1}{2\sqrt{\nu}}$ for the Student's-$t$ family with $\nu\in \mathbb{N}$ (fixed). This Lipschitz property is used to control the $\log$-likelihood ratio terms that arise in the proof of variational risk bounds in Section \ref{app:variational-risk-bounds}.


\subsection{Assumptions~\ref{ass:2} and~\ref{ass:3}}

Assumptions \ref{ass:2} and \ref{ass:3} impose mild conditions on the design matrix $\mathbf{X}$ under both Type I and Type II $\ssg$ likelihoods, and on the moment structure of Type I $\ssg$ likelihoods, respectively. We summarize their roles below.

\begin{assumption}[Design regularity]
\label{ass:2-supp}
The rows $\{\mathbf{x}_i: i \in [n]\}$ of $\mathbf{X}$ are non-zero.
\end{assumption}

Requiring each row $\mathbf{x}_i$ of the design matrix $\mathbf{X}$ to be non-zero ensures that the $\tssg$ iterates do not get stuck at boundary points where one or more coordinates of $\xi = (\xi_1, \ldots, \xi_n)^{\top}$ collapse to zero. This prevents degeneracy in the surrogate objective and guarantees that each limit point lies in a region where the EM surrogate $\mathcal{Q}$ maintains local $\omega$-strong concavity, an essential ingredient for the convergence analysis of the $\tssg$ updates in Section \ref{app:TAVIE-convergence-proof}.

\begin{assumption}[Existence of moments for Type I $\mathsf{SSG}$ likelihoods]
\label{ass:3-supp}
The $k$th moment exists and
$\mathscr{E}_k := \mathcal{Z}_h^{-1}\int |s|^k\exp\left\{h(s^2)\right\}ds  < \infty$, for all $k\in [2]$, where $\mathcal{Z}_h = \int_{\mathbb{R}} \exp\left\{h(s^2)\right\}ds$.
\end{assumption}

Assumption \ref{ass:3-supp} requires that the first two moments of the underlying Type I $\ssg$ density be finite. These moment bounds allow high probability control over stochastic terms arising in the tangent minorization expansion and are used in bounding the Jensen's gap that appears in the variational risk bounds developed in Section \ref{app:variational-risk-bounds}. The condition is mild and is satisfied by all the representative Type I $\ssg$ families, including Laplace and Student's-$t$ models (with fixed degrees of freedom $\nu >2$).

\newpage
\section{Convergence Guarantees for the \texorpdfstring{$\tssg$}{TAVIE-SSG} EM Algorithm}\label{app:TAVIE-convergence-proof}

\subsection{Proof of Proposition~\ref{proposition:stationarity-equivalence}}

\begin{proof}
By uniqueness and differentiability of $q_\xi$, the envelope/Danskin's theorem~\citep{Danskin1967} gives $\nabla_{\xi} \mathsf{L}(\xi) =\nabla_\xi\mathcal L(q_\xi,\xi)$.  If $(q^\star,\xi^\star)$ is stationary, then $q^\star=q_{\xi^\star}$ and $\nabla_\xi\mathcal L(q^\star,\xi^\star)=0$, hence $\nabla_{\xi} \mathsf{L}(\xi^\star)=0$.  Conversely, if $\nabla_{\xi} \mathsf{L}(\xi^\star)=0$ then $\nabla_\xi\mathcal L(q_{\xi^\star},\xi^\star)=0$, and since $q_{\xi^\star}$ maximizes $\mathcal{L}(q, \xi^{\star})$ in $q$, the functional derivative in $q$-direction vanishes at $q_{\xi^{\star}}$; thus $(q_{\xi^\star},\xi^\star)$ is stationary for the joint maximization problem. 
\end{proof}

\subsection{Lemmata for Theorem~\ref{theorem:convergence}}\label{app:lemmata-convergence-TAVIE}

\begin{lemma}[Relative compactness of upper level sets of $\mathsf{L}$]
\label{lemma-conv:1}  
Under Assumptions \ref{ass:1}(i)-\ref{ass:1}(iv) of the main manuscript, $\mathsf{L}:\mathbb{R}_{+}^{n}\to \mathbb{R}$ for both Type I and Type II $\mathsf{SSG}$ likelihoods in~\eqref{eq:SSG-typeI-likelihood} and~\eqref{eq:SSG-typeII-likelihood} of the main manuscript, is continuous and satisfies:
\begin{gather}
    \label{eq:L1.1}
        \lim_{\max_{i\in [n]} \xi_i \uparrow \infty}\mathsf{L}(\xi) = -\infty.
\end{gather}
Moreover, one of the following holds:
\begin{enumerate}
    \item If $\lim_{t\downarrow 0}h'(t) = -\infty$, then:
    \begin{align}
    \label{eq:L1.2}
    \lim_{\min_{i\in [n]} \xi_i \downarrow 0} \mathsf{L}(\xi) = -\infty,
    \end{align}
    and hence every upper level set $\mathsf{U}_c := \{\xi\in \mathbb{R}^{n}_{+}\;:\;\mathsf{L}(\xi) \geq c\}$ is compact in $\mathbb{R}^{n}_{+}$.

    \item If $\lim_{t\downarrow 0}h'(t) = \ell_{h}\in \mathbb{R}$, then $\mathsf{L}$ admits a continuous extension $\tilde{\mathsf{L}}: \mathbb{R}^{n}_{+, 0} \to \mathbb{R}$, and for every $c$ in the range of $\tilde{\mathsf{L}}$, the closure $\overline{\mathsf{U}}_c$ of $\mathsf{U}_c$ is compact in $\mathbb{R}^{n}_{+, 0}$.
\end{enumerate}
\end{lemma}

\begin{proof}
Continuity of $\mathsf{L}$ for both Type I and Type II $\mathsf{SSG}$ likelihoods on $\mathbb{R}_{+}^{n}$ follows from Assumption \ref{ass:1}(i) of the main manuscript. Note that, \eqref{eq:L1.1} can be immediately concluded  using the form of Type I and Type II $\mathsf{SSG}$ likelihoods in~\eqref{eq:SSG-typeI-likelihood} and~\eqref{eq:SSG-typeII-likelihood} of the main manuscript along with Assumption \ref{ass:1}(ii) of the main manuscript, where $\gamma(t) = h(t^2) - t^2 h'(t^2) \to -\infty$ as $|t| \to \infty$. 

We now analyze the behavior near the boundary $\{\xi_i=0\}$. Consider the function $h': \mathbb{R}^{+} \to (-\infty, 0)$. By Assumption~\ref{ass:1}(iii) of the main manuscript, $h'$ is increasing over $\mathbb{R}^{+}$. Consequently, $\lim_{t\downarrow 0}h'(t)$ exists and can be either equal to a fixed $\ell_h\in \mathbb{R}$, or it can be diverging to $-\infty$.

\textbf{Case 1}: $\lim_{t\downarrow 0}h'(t) = -\infty$. Note that, by Assumption~\ref{ass:1}(iv) of the main manuscript, $|2th'(t^2)| \leq K$ for a positive constant $K\in \mathbb{R}^{+}$. Hence, $\lim_{t\downarrow 0}\gamma(t)$ remains bounded. Recall, $A(t) = h'(t^2)$. Hence, if $\min_{i\in [n]}\xi_i \downarrow 0$:
\begin{align}
\begin{split}
&\implies \min_{i\in [n]}A(\xi_i) \downarrow -\infty 
\implies \lambda_{\max}(\Sigma^{-1}_{\alpha}(\xi)) \uparrow \infty\\
&
\implies \lambda_{\min}(\Sigma_{\alpha}(\xi)) \downarrow 0 
\implies \log |\Sigma_{\alpha}(\xi))| \downarrow -\infty
\implies \mathsf{L}(\xi) \downarrow -\infty,
\end{split}
\end{align}
i.e., $\lim_{\min_{i\in [n]}\xi_i\downarrow 0}\mathsf{L}(\xi) = -\infty$, which proves~\eqref{eq:L1.2}. Observe, this scenario applies to the Laplace Type I $\ssg$ likelihood. Now fix $c\in \mathrm{range}(\mathsf{L})$. By continuity, $\mathsf{U}_c$ is closed in $\mathbb{R}^{n}_+$. Moreover,~\eqref{eq:L1.1} prevents $\mathsf{U}_c$ from escaping to infinity, while~\eqref{eq:L1.2} prevents it from approaching the boundary $\{\xi_i=0\}$. Hence $\mathsf{U}_c$ is bounded and bounded away from the boundary of $\mathbb{R}^{n}_+$. Therefore, $\mathsf{U}_c$ is closed and bounded as a subset of $\mathbb{R}^{n}$, and thus compact by the Heine-Borel theorem.

\textbf{Case 2}: $\lim_{t\downarrow 0}h'(t) = \ell_h\in \mathbb{R}$. Define the extension of $h'$, $\tilde{h}':\mathbb{R}^{+}_0 \to (-\infty, 0)$ by setting $\tilde{h}'(t) = h'(t)$ for $t>0$ and $\tilde{h}'(0) = \ell_h$. Since $h'$ is monotone, this extension is continuous at $0$. Replacing $h'$ by $\tilde{h}'$ in the closed form expressions of $\mathsf{L}(\xi)$ for both Type I and Type II $\ssg$ likelihoods in~\eqref{eq:ELBO-general} of the main manuscript, $\mathsf{L}(\xi)$ can be extended continuously to $\tilde{\mathsf{L}}(\xi):\mathbb{R}^{n}_{+, 0} \to \mathbb{R}$. Now let, $\overline{\mathsf{U}}_c$ be the closure in $\mathbb{R}^{n}_{+, 0}$ of $\mathsf{U}_c$ as defined above. Further, since $\xi_i$'s are bounded below by $0$ for all $i\in [n]$, combining this observation along with \eqref{eq:L1.1}, establishes boundedness of $\overline{\mathsf{U}}_c$. Consequently, an application of the Heine-Borel theorem yields the compactness of $\overline{\mathsf{U}}_{c}$.
%
\end{proof}

\begin{lemma}[$\tssg$ iterates remain in $\mathbb{R}^{n}_+$]
\label{lemma-conv:bounded-away-zero}
Under Assumptions~\ref{ass:1}(ii) and~\ref{ass:2} of the main manuscript, for both Type I and Type II $\ssg$ likelihoods in~\eqref{eq:SSG-typeI-likelihood} and~\eqref{eq:SSG-typeII-likelihood} of the main manuscript, the sequence of $\tssg$ iterates $\{\xi^{(t)}:t\geq 0\}$ remains in $\mathbb{R}_{+}^{n}$, i.e., no iterate hits the boundary of $\mathbb{R}_{+}^{n}$, if $\xi^{(0)}\in \mathbb{R}^{n}_+$.
\end{lemma}

\begin{proof}
Observe that, $\Sigma_{\alpha}(\xi) = [\Sigma^{-1} - 2\alpha \mathbf{X}^{\top}\mathcal{A}(\xi)\mathbf{X}]^{-1} \succ 0$ for Type I $\ssg$ likelihoods and $\Sigma_{\alpha}(\xi) = \left[\Sigma^{-1} - 2\alpha \mathbf{X}^{\top}\left[\mathbf{b} \circ \mathcal{A}(\xi)\right]\mathbf{X}\right]^{-1} \succ 0$ for Type II $\ssg$ likelihoods, by Assumptions \ref{ass:1}(ii) and \ref{ass:2} of the main manuscript. 

Further using Assumption \ref{ass:2} of the main manuscript, $\kappa_i(\xi) \geq \mathbf{x}_i^{\top}\Sigma_{\alpha}(\xi)\mathbf{x}_i > 0$ for all $i\in [n]$ and $\xi\in \mathbb{R}^{n}_{+}$. Consider $\xi^{(t)}$ to be the variational parameter vector at the $t$-th $\tssg$ iteration. For $t=0$, the initial variational parameter vector $\xi^{(0)} \in \mathbb{R}^{n}_+$, and for $t > 0$, $\xi_{i}^{(t)} = \sqrt{\kappa_i(\xi^{(t-1)})} > 0$. Hence, $\xi^{(t)}$ does not approach the zero-boundary of $\mathbb{R}^{n}_+$.
\end{proof}

\begin{lemma}[Cluster points of $\tssg$ iterates are fixed-points]
\label{lemma-conv:1.1}
Any cluster point $\xi^{\star}$ of the sequence of $\tssg$ iterates $\{\xi^{(t)}: t\geq 0\}$ is a fixed-point of $H: \mathbb{R}^{n}_{+} \to \mathbb{R}^{n}_{+}$ defined as:
\begin{align}
    H(\xi):= \left(\sqrt{\kappa_1(\xi)}, \ldots, \sqrt{\kappa_n(\xi)}\right)^{\top},
\end{align}
where $\kappa_{i}(\xi)$ for Type I and Type II $\ssg$ likelihoods are as given in Section~\ref{subsec:tavie-algo} of the main manuscript.
\end{lemma}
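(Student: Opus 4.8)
The plan is to run the classical Expectation--Maximization (EM) monotone-ascent argument on the profile objective $\mathsf{L}$ and then close it with a uniqueness property of the coordinatewise surrogate maximizer. Throughout, the iterates obey $\xi^{(t+1)}=H(\xi^{(t)})$ by the closed-form update in \eqref{eq:TAVIE-xi-update}. First I would record two preliminary facts. \emph{(i) $H$ is continuous on $\mathbb{R}^{n}_{+,0}$.} Since $h$ is decreasing, $A(t)=h'(t^2)\le 0$, so for Type~I $\ssg$ likelihoods $\Sigma_{\alpha}(\xi)^{-1}=\Sigma^{-1}-2\alpha\mathbf{X}^{\top}\mathcal{A}(\xi)\mathbf{X}\succ 0$ for every $\xi$ (and analogously for Type~II), whence $\xi\mapsto\Sigma_{\alpha}(\xi),\mu_{\alpha}(\xi),b_{\alpha}(\xi)$ are continuous; by Assumption~\ref{ass:2} and positive-definiteness of $\Sigma_\alpha(\xi)$ one has $\kappa_i(\xi)\ge\mathbf{x}_i^{\top}\Sigma_{\alpha}(\xi)\mathbf{x}_i>0$, so each $\xi\mapsto\sqrt{\kappa_i(\xi)}$ is continuous. \emph{(ii) Sharp EM ascent.} Writing the $\mathsf{ELBO}$ of \eqref{eq:ELBO-original} as $\mathcal L(q_\xi,\xi^{\dagger})=\mathcal{Q}(\xi^{\dagger}\mid\xi)+\mathrm{Ent}(q_\xi)$ and using $\mathsf{L}(\xi^{\dagger})=\mathcal L(q_\xi,\xi^{\dagger})+\mathrm{KL}(q_\xi\,\|\,q_{\xi^{\dagger}})$ gives the identity $\mathsf{L}(\xi^{\dagger})-\mathsf{L}(\xi)=[\mathcal{Q}(\xi^{\dagger}\mid\xi)-\mathcal{Q}(\xi\mid\xi)]+\mathrm{KL}(q_\xi\,\|\,q_{\xi^{\dagger}})$, which at $\xi^{\dagger}=H(\xi)$ (a maximizer of $\mathcal{Q}(\cdot\mid\xi)$ by \eqref{eq:TAVIE-xi-update}) yields $\mathsf{L}(H(\xi))\ge\mathsf{L}(\xi)$. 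Hence $\{\mathsf{L}(\xi^{(t)})\}$ is non-decreasing; by Lemma~\ref{lemma-conv:1} all iterates lie in the compact upper level set $\mathsf{U}_{\mathsf{L}(\xi^{(0)})}$ on which the continuous $\mathsf{L}$ is bounded, so $\mathsf{L}(\xi^{(t)})\uparrow\mathsf{L}^{\star}$ for some finite $\mathsf{L}^{\star}$.

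Next I would transfer this to a cluster point. Let $\xi^{(t_k)}\to\xi^{\star}$. Continuity of $\mathsf{L}$ gives $\mathsf{L}(\xi^{\star})=\mathsf{L}^{\star}$; continuity of $H$ gives $\xi^{(t_k+1)}=H(\xi^{(t_k)})\to H(\xi^{\star})$, and since the whole value sequence converges, $\mathsf{L}(H(\xi^{\star}))=\mathsf{L}^{\star}=\mathsf{L}(\xi^{\star})$. Plugging $\xi^{\dagger}=H(\xi^{\star})$ and $\xi=\xi^{\star}$ into the identity from step (ii) forces both nonnegative summands to vanish; in particular $\mathcal{Q}(H(\xi^{\star})\mid\xi^{\star})=\mathcal{Q}(\xi^{\star}\mid\xi^{\star})$, so $\xi^{\star}$ is itself a maximizer of $\xi^{\dagger}\mapsto\mathcal{Q}(\xi^{\dagger}\mid\xi^{\star})$.

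The final step is to show this maximizer is unique and equals $H(\xi^{\star})$. By \eqref{eq:EM-surrogate} the surrogate separates as $\alpha\sum_i t_i\,g_i(\xi_i^{\dagger})$ with $g_i(u):=A(u)\kappa_i(\xi^{\star})+\gamma(u)$ on $[0,\infty)$ and $t_i>0$. Using $A(u)=h'(u^2)$, the identity $\gamma'(u)=-u^2 A'(u)$, and $A'(u)=2u\,h''(u^2)>0$ for $u>0$ by Assumption~\ref{ass:1}(iii), a short computation gives $g_i'(u)=A'(u)\bigl(\kappa_i(\xi^{\star})-u^2\bigr)$, which is strictly positive on $\bigl(0,\sqrt{\kappa_i(\xi^{\star})}\bigr)$ and strictly negative on $\bigl(\sqrt{\kappa_i(\xi^{\star})},\infty\bigr)$; since $\kappa_i(\xi^{\star})>0$, the unique global maximizer of $g_i$ on $[0,\infty)$ is $\sqrt{\kappa_i(\xi^{\star})}$, ruling out the spurious boundary critical point $u=0$. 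Therefore $H(\xi^{\star})$ is the unique maximizer of $\mathcal{Q}(\cdot\mid\xi^{\star})$, and combined with the previous paragraph, $\xi^{\star}=H(\xi^{\star})$. As a by-product, because $H$ maps into the strictly positive orthant, every cluster point in fact has strictly positive coordinates.

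The main obstacle is the implication ``$\mathsf{L}(H(\xi^{\star}))=\mathsf{L}(\xi^{\star})\Rightarrow\xi^{\star}$ is a fixed-point'': it hinges on having the EM ascent in the sharp form with the explicit $\mathrm{KL}$ remainder (so that equality of values annihilates the surrogate gap), coupled with the strict unimodality of each $g_i$, for which Assumption~\ref{ass:1}(iii) ($h''>0$, making $A'>0$ away from the origin) and Assumption~\ref{ass:2} (non-zero design rows, giving $\kappa_i(\xi^{\star})>0$ so the maximizer is interior) are both indispensable. Establishing continuity of $H$ uniformly in $\xi$, which reduces to $\Sigma_{\alpha}(\xi)\succ0$ for all $\xi$, is routine but should be stated explicitly.
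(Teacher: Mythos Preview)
Your argument is correct and shares the same EM monotone-ascent skeleton as the paper: compactness of the upper level set, continuity of $H$, convergence of $\{\mathsf{L}(\xi^{(t)})\}$, and passage to the limit along a subsequence to obtain $\mathsf{L}(H(\xi^{\star}))=\mathsf{L}(\xi^{\star})$. The difference lies in the closing identification $\xi^{\star}=H(\xi^{\star})$. The paper deduces from $\mathsf{L}(\eta)=\mathcal{Q}(\eta\mid\xi^{\star})$ that $\eta=\xi^{\star}$ by invoking the EM tangency ``$\mathsf{L}(\xi)\ge\mathcal{Q}(\xi\mid v)$ with equality exactly when $\xi=v$'', which amounts to using the vanishing of the $\mathrm{KL}$ remainder and an (implicit) injectivity of $\xi\mapsto q_{\xi}$. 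You instead exploit the vanishing of the \emph{surrogate gap} to conclude that $\xi^{\star}$ itself maximizes $\mathcal{Q}(\cdot\mid\xi^{\star})$, and then prove that this maximizer is unique by the strict unimodality of each coordinate function $g_i(u)=A(u)\kappa_i(\xi^{\star})+\gamma(u)$, using $A'(u)=2u\,h''(u^2)>0$ from Assumption~\ref{ass:1}(iii) and $\kappa_i(\xi^{\star})>0$ from Assumption~\ref{ass:2}. Your route is more self-contained and avoids the injectivity claim, at the price of explicitly consuming Assumptions~\ref{ass:1}(iii) and~\ref{ass:2} already at this lemma (the paper only brings these in later, in the proof of Theorem~\ref{theorem:convergence}); conversely, the paper's route is shorter but leans on a tangency-equality step that it does not justify in detail.
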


\begin{proof}
Using the definition of the variational EM algorithm:
\begin{align}\label{eq:lemma-conv-1.1-1}
    \mathsf{L}(\xi^{(t+1)}) \geq \mathcal{Q}(\xi^{(t+1)} \mid \xi^{(t)}) \geq \mathcal{Q}(\xi^{(t)} \mid \xi^{(t)}) = \mathsf{L}(\xi^{(t)}).
\end{align}
Following \eqref{eq:lemma-conv-1.1-1}, we obtain the monotone ascent of the scalar sequence $\{\mathsf{L}(\xi^{(t)}): t\geq 0\}$. Hence for $t\geq 0$, $\xi^{(t)} \in \mathsf{U}_{\mathsf{L}(\xi^{(0)})}$. By compactness of the closure $\overline{\mathsf{U}}_{\mathsf{L}(\xi^{(0)})}$ of upper level set $\mathsf{U}_{\mathsf{L}(\xi^{(0)})}$ of $\mathsf{L}(\xi^{(0)})$ as per Lemma \ref{lemma-conv:1}, there exists a convergent sub-sequence $\xi^{(t_j)} \to \xi^{\star}$. Define the shifted sub-sequence $\xi^{(t_j+1)}$. Using the update rule $\xi^{(t_j+1)} = H(\xi^{(t_j)})$ and by continuity of $H$, we get $\xi^{(t_j+1)} \to H(\xi^{\star})$. Call $H(\xi^{\star}) = \eta$, i.e., $\xi^{(t_j+1)} \to \eta$. Hence, it only remains to show that $\eta = \xi^{\star}$.

Further, the sequence $\{\mathsf{L}(\xi^{(t)}):t\geq 0\}$ is bounded above by the evidence $p_{\alpha}(y\mid \mathbf{X}, \theta)$ and hence convergent. Let $\ell:= \lim_{t\to \infty}\mathsf{L}(\xi^{(t)})$. 
Consequently, the sub-sequences:
\begin{align}\label{eq:lemma-conv-1.1-2}
    \mathsf{L}(\xi^{(t_j)}) \to \ell, \quad \mathsf{L}(\xi^{(t_j+1)}) \to \ell.
\end{align}
Substituting $t=t_j$ in \eqref{eq:lemma-conv-1.1-1} and taking limit as $j\to \infty$:
\begin{align}
\label{eq:lemma-conv-1.1-2.1}
    \lim_{j\to \infty} \mathsf{L}(\xi^{(t_j+1)}) \geq \lim_{j\to\infty}\mathcal{Q}(\xi^{(t_j+1)}\mid \xi^{(t_j)}) \geq \lim_{j\to \infty}\mathcal{Q}(\xi^{(t_j)}\mid \xi^{(t_j)}) = \lim_{j\to \infty}\mathsf{L}(\xi^{(t_j)}).
\end{align}
From continuity of $\mathsf{L}$, $\mathcal{Q}$ (in both arguments), the convergences $\xi^{(t_j)} \to \xi^{\star}$ and $\xi^{(t_j+1)} \to \eta$, and \eqref{eq:lemma-conv-1.1-2}, we get from \eqref{eq:lemma-conv-1.1-2.1} that:
\begin{align}
\begin{split}
\label{eq:lemma-conv-1.1-3}
    \ell \geq \mathcal{Q}(\eta&\mid \xi^{\star}) \geq \mathcal{Q}(\xi^{\star}\mid \xi^{\star}) = \mathsf{L}(\xi^{\star}),\\
\mathsf{L}(\xi^{\star}) &= \lim_{j\to \infty}\mathsf{L}(\xi^{(t_j)}) = \ell,\\
\mathsf{L}(\eta) &= \lim_{j\to \infty}\mathsf{L}(\xi^{(t_j+1)}) = \ell.
\end{split}
\end{align}
Finally, using \eqref{eq:lemma-conv-1.1-3} we get:
\begin{align}
\label{eq:lemma-conv-1.1-4}
    \mathsf{L}(\eta) = \mathcal{Q}(\eta\mid \xi^{\star}) = \mathcal{Q}(\xi^{\star}\mid \xi^{\star}) = \mathsf{L}(\xi^{\star}) = \ell.
\end{align}
By the construction of the variational EM algorithm, $\mathsf{L}(\xi)\geq \mathcal{Q}(\xi\mid v)$ for all $\xi$ with equality exactly when $\xi = v$. Taking $v = \xi^{\star}$ and $\xi = \eta$, we have $\mathsf{L}(\eta) \geq  \mathcal{Q}(\eta\mid \xi^{\star})$ and since equality holds by \eqref{eq:lemma-conv-1.1-4}, we conclude $\eta = \xi^{\star}$.
\end{proof}

\begin{lemma}[Critical points of $\mathsf{L}$ for Type I and Type II $\ssg$ likelihoods]
\label{lemma:critical-point-L}
For Type I and Type II $\ssg$ likelihoods, any point $\xi^{\star}$ is a fixed-point of Algorithm~\ref{alg:tavie-em} in the main manuscript, if and only if it is a critical point of $\mathsf{L}(\xi)$, i.e., $\nabla_{\xi} \mathsf{L}(\xi^{\star}) = 0$. 
\end{lemma}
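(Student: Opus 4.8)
The plan is to show that both conditions — ``$\xi^{\star}$ is a fixed-point of Algorithm \ref{alg:tavie-em}'' and ``$\nabla_{\xi}\mathsf{L}(\xi^{\star}) = 0$'' — are equivalent to the same system of $n$ scalar equations $(\xi_i^{\star})^{2} = \kappa_i(\xi^{\star})$, $i \in [n]$. First I would record the fixed-point characterization: by the coordinate update $\xi_i^{(l+1)} = \sqrt{\kappa_i(\xi^{(l)})}$ in \eqref{eq:TAVIE-xi-update}, $\xi^{\star}$ is a fixed-point of Algorithm \ref{alg:tavie-em} exactly when $\xi_i^{\star} = \sqrt{\kappa_i(\xi^{\star})}$ for every $i$, i.e., $\xi^{\star} = H(\xi^{\star})$ with $H$ as in Lemma \ref{lemma-conv:1.1}. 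Since $h$ is decreasing (Assumption \ref{ass:1}(ii)), $A(\xi_i) = h'(\xi_i^{2}) < 0$ on $\mathbb{R}^{+}$, so $-2\alpha\mathbf{X}^{\top}\mathcal{A}(\xi)\mathbf{X} \succeq 0$ and hence $\Sigma_{\alpha}(\xi)^{-1} \succeq \Sigma^{-1} \succ 0$ for every $\xi$ (and similarly for Type II, using $\mathbf{b} > 0$). Together with Assumption \ref{ass:2} ($\mathbf{x}_i \neq 0$), this gives $\kappa_i(\xi) \geq \mathbf{x}_i^{\top}\Sigma_{\alpha}(\xi)\mathbf{x}_i > 0$ for all $i$ and all $\xi$; in particular every fixed-point lies in the interior where all coordinates are strictly positive, so I may restrict attention there, consistent with Lemma \ref{lemma-conv:1.1}, which already localizes all cluster points to the interior.

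The core of the argument is the gradient identity $\nabla_{\xi}\mathsf{L}(\xi) = \nabla_{\xi^{\dagger}}\mathcal{Q}(\xi^{\dagger}\mid\xi)\big|_{\xi^{\dagger}=\xi}$. This is the envelope relation for the profile objective $\mathsf{L}(\xi) = \max_{q\in\mathcal{P}_{\Theta}}\mathcal{L}(q,\xi) = \mathcal{L}(q_{\xi},\xi)$: the inner maximizer $q_{\xi} = \pi_{\alpha}(\cdot\mid\mathcal{D}_n,\xi)$ is unique and depends smoothly on $\xi$ (using $\Sigma_{\alpha}(\xi) \succ 0$ from the previous step and $h \in C^{3}(\mathbb{R}^{+})$, Assumption \ref{ass:1}(i)), so the envelope/Danskin theorem applies exactly as in the proof of Proposition \ref{proposition:stationarity-equivalence}, and only the working-likelihood term $\mathbb{E}_{q}[\log\varphi_{\alpha}(y\mid\mathbf{X},\theta,\xi)]$ of $\mathcal{L}(q,\xi)$ depends on $\xi$. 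Differentiating the explicit surrogate \eqref{eq:EM-surrogate} in $\xi^{\dagger}$, using $\kappa_i(\xi) = \mathbb{E}_{\xi}[\zeta_i^{2}]$ together with the minorizer form \eqref{eq:general-minorizer} and then the identity $\gamma'(t) = -t^{2}A'(t)$, yields the coordinate-wise expression
\begin{equation*}
\partial_{\xi_i}\mathsf{L}(\xi) \;=\; \alpha\, t_i\big[A'(\xi_i)\kappa_i(\xi) + \gamma'(\xi_i)\big] \;=\; \alpha\, t_i\, A'(\xi_i)\big(\kappa_i(\xi) - \xi_i^{2}\big),\qquad i\in[n],
\end{equation*}
valid for Type I ($t_i = 1$) and Type II ($t_i = b_i > 0$) alike, since \eqref{eq:EM-surrogate} already separates coordinatewise with this common structure.

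It then remains to read off the equivalence. For $\xi^{\star}$ in the interior, $\alpha > 0$, $t_i > 0$, and $A'(\xi_i^{\star}) = 2\xi_i^{\star}h''\big((\xi_i^{\star})^{2}\big) > 0$ by Assumption \ref{ass:1}(iii) and $\xi_i^{\star} > 0$; hence the $i$th coordinate of $\nabla_{\xi}\mathsf{L}(\xi^{\star})$ vanishes if and only if $\kappa_i(\xi^{\star}) = (\xi_i^{\star})^{2}$. Taking the conjunction over $i \in [n]$, $\nabla_{\xi}\mathsf{L}(\xi^{\star}) = 0$ is equivalent to $\xi_i^{\star} = \sqrt{\kappa_i(\xi^{\star})}$ for all $i$ (taking the nonnegative root), i.e., $\xi^{\star} = H(\xi^{\star})$, which is precisely the fixed-point condition. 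This delivers both implications simultaneously.

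I expect the gradient identity to be the main obstacle: making the envelope/Danskin step rigorous requires verifying that $\xi \mapsto q_{\xi}$ is continuously differentiable — which is exactly where the uniform positive-definiteness $\Sigma_{\alpha}(\xi)^{-1} \succeq \Sigma^{-1}$ and the smoothness in Assumption \ref{ass:1}(i) are used — and that differentiation may be interchanged with the posterior expectation defining $\kappa_i$. An alternative that sidesteps Danskin is to differentiate the closed forms of $\mathsf{L}$ in \eqref{eq:ELBO-general} directly; this is more computational but yields the same expression. Everything else — the positivity $A' > 0$ (Assumption \ref{ass:1}(iii)), the non-degeneracy $\kappa_i > 0$ (Assumption \ref{ass:2}), and the algebraic identity $\gamma'(t) = -t^{2}A'(t)$ — is routine.
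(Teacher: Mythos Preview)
Your argument is correct and lands on exactly the same coordinate-wise gradient formula as the paper, $\partial_{\xi_i}\mathsf{L}(\xi) = \alpha\, t_i\, A'(\xi_i)\big(\kappa_i(\xi) - \xi_i^{2}\big)$, from which the equivalence follows immediately on the interior via $A'(\xi_i) = 2\xi_i h''(\xi_i^{2}) > 0$. However, the route you take is genuinely different from the paper's. You obtain the gradient identity via the envelope relation $\nabla_{\xi}\mathsf{L}(\xi) = \nabla_{\xi^{\dagger}}\mathcal{Q}(\xi^{\dagger}\mid\xi)\big|_{\xi^{\dagger}=\xi}$, i.e., essentially by invoking Proposition~\ref{proposition:stationarity-equivalence}/Danskin after checking that $q_{\xi}$ is unique and smooth in $\xi$. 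The paper, by contrast, deliberately avoids Danskin here: it differentiates the closed-form expressions for $\mathsf{L}(\xi)$ in \eqref{eq:ELBO-general} directly, carrying out the matrix calculus for $\log|\Sigma_{\alpha}(\xi)|$, $b_{\alpha}(\xi)$, $\mu_{\alpha}(\xi)$ separately for Type~I and Type~II (equations \eqref{eq:L5.1}--\eqref{eq:L5.10} and \eqref{eq:L6.1}--\eqref{eq:L6.6}). Indeed, the Remark immediately following the lemma in the paper states that the ``independent importance'' of Lemma~\ref{lemma:critical-point-L} is precisely that it delivers the conclusion of Proposition~\ref{proposition:stationarity-equivalence} \emph{without} verifying the Danskin hypotheses. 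So your approach is shorter and handles both types uniformly through the surrogate \eqref{eq:EM-surrogate}, but it leans on exactly the abstract machinery the paper set out to bypass; the paper's approach is more laborious but fully self-contained. You do flag the direct-differentiation alternative yourself, and that \emph{is} the paper's proof.
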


\begin{remark}
Although Lemma \ref{lemma:critical-point-L} has the same conclusion as Proposition~\ref{proposition:stationarity-equivalence} of the main manuscript, the independent importance of the lemma stems from the fact that while Proposition~\ref{proposition:stationarity-equivalence} is true under the general definition of $\ssg$ likelihood functions, it requires the uniqueness/differentiability conditions for Danskin's theorem~\citep{Danskin1967} to be verified. However, Lemma \ref{lemma:critical-point-L} demonstrates that for the Type I and Type II $\ssg$ likelihoods, the same conclusion as that of Proposition~\ref{proposition:stationarity-equivalence} hold with no additional assumptions on the respective likelihoods.
\end{remark}

\begin{proof}[Proof of Lemma \ref{lemma:critical-point-L}]

\noindent

\textbf{Type I $\mathsf{SSG}$ likelihoods}: The function $\mathsf{L}:\mathbb{R}_{+}^{n}\to \mathbb{R}$ for Type I $\mathsf{SSG}$ likelihood up to additive constants is:
\begin{align}\label{eq:f-Type-I}
    \mathsf{L}(\xi) := -\frac{a+n\alpha}{2}\log \frac{b_{\alpha}(\xi)}{2} + \alpha\sum_{i=1}^{n}\gamma(\xi_i) + \frac{1}{2}\log |\Sigma_{\alpha}(\xi)|.
\end{align}
The $\alpha$-posterior covariance matrix $\Sigma_{\alpha}(\xi)$ is:
\begin{align}\label{eq:L5.1}
    \Sigma_{\alpha}(\xi) = \left[\Sigma^{-1} - 2\alpha \mathbf{X}^{\top}\mathcal{A}(\xi)\mathbf{X}\right]^{-1},\quad M(\xi) := \Sigma_{\alpha}^{-1}(\xi).
\end{align}
Note that, for any invertible matrix $M(\psi)$, where $\psi\in \mathbb{R}$:
\begin{align}\label{eq:L5.2}
    \frac{\partial}{\partial \psi}\log |M(\psi)| = \mathrm{tr}\left(M^{-1}(\psi)\frac{\partial}{\partial \psi}M(\psi)\right).
\end{align}
Using \eqref{eq:L5.2}:
\begin{align}\label{eq:L5.3}
\begin{split}
    \frac{\partial}{\partial \xi_j}\left[\frac{1}{2}\log |\Sigma_{\alpha}(\xi)|\right] &=  -\frac{1}{2}\frac{\partial}{\partial \xi_j}\log|M(\xi)|\\
    &= -\frac{1}{2}\mathrm{tr}\left(\Sigma_{\alpha}(\xi)\frac{\partial}{\partial \xi_j}M(\xi_j)\right)\\
    &= -\frac{1}{2}\mathrm{tr}\left(\Sigma_{\alpha}(\xi)\left[-2\alpha A'(\xi_j)\mathbf{x}_j\mathbf{x}_{j}^{\top}\right]\right)\\
    &= \alpha A'(\xi_j)\mathbf{x}_j^{\top}\Sigma_{\alpha}(\xi)\mathbf{x}_j.
\end{split}
\end{align}
Now:
\begin{align}\label{eq:L5.4}
    \frac{\partial}{\partial \xi_j}b_{\alpha}(\xi) = \frac{\partial}{\partial \xi_j}\left(-2\alpha y^{\top}\mathcal{A}(\xi) y - \mu_{\alpha}(\xi)^{\top}\Sigma_{\alpha}^{-1}(\xi)\mu_{\alpha}(\xi)\right)
\end{align}
Note:
\begin{align}\label{eq:L5.5}
\begin{split}
    \frac{\partial}{\partial \xi_j}\Sigma_{\alpha}(\xi) &= -\Sigma_{\alpha}(\xi)\left(\frac{\partial}{\partial \xi_j}\Sigma_{\alpha}^{-1}(\xi)\right)\Sigma_{\alpha}(\xi)\\
    &= -\Sigma_{\alpha}(\xi)\left(\frac{\partial}{\partial \xi_j}M(\xi)\right)\Sigma_{\alpha}(\xi)\\
    &= -\Sigma_{\alpha}(\xi)\left(-2\alpha A'(\xi_j)\mathbf{x}_j\mathbf{x}_j^{\top}\right)\Sigma_{\alpha}(\xi),\quad \text{from \eqref{eq:L5.3}}\\
    &= 2\alpha A'(\xi_j)\Sigma_{\alpha}(\xi)\mathbf{x}_j\mathbf{x}_j^{\top}\Sigma_{\alpha}(\xi)
\end{split}
\end{align}
Let $c(\xi):= \Sigma^{-1}_{\alpha}(\xi)\mu_{\alpha}(\xi) = \Sigma^{-1}\mu - 2\alpha \mathbf{X}^{\top}A(\xi) y$. Then:
\begin{align}\label{eq:L5.6}
    \begin{split}
        \frac{\partial}{\partial \xi_j}c(\xi) &= -2\alpha \mathbf{X}^{\top}A'(\xi_j) y_j e_j\\
        &= -2\alpha A'(\xi_j) y_j \mathbf{x}_j,
    \end{split}
\end{align}
where $e_j = (0, \ldots, 1, \ldots, 0)^{\top}\in \mathbb{R}^n$ is the $j$th canonical basis vector. From \eqref{eq:L5.5} and \eqref{eq:L5.6}:
\begin{align}\label{eq:L5.7}
    \begin{split}
        \frac{\partial}{\partial \xi_j}\mu_{\alpha}(\xi) &= \left[2\alpha A'(\xi_j)\Sigma_{\alpha}(\xi)\mathbf{x}_j\mathbf{x}_j^{\top}\Sigma_{\alpha}(\xi)\right]c(\xi) + \Sigma_{\alpha}(\xi)\left[-2\alpha A'(\xi_j) y_j \mathbf{x}_j\right]\\
        &= 2\alpha A'(\xi_j)\Sigma_{\alpha}(\xi) \mathbf{x}_j\left[\mathbf{x}_{j}^{\top}\mu_{\alpha}(\xi) - y_j\right]\\
        &= -2\alpha A'(\xi_j)\Sigma_{\alpha}(\xi)\mathbf{x}_j\mathfrak{R}_j(\xi),
    \end{split}
\end{align}
where $\mathfrak{R}_j(\xi) = y_j - \mathbf{x}_j^{\top}\mu_{\alpha}(\xi)$ for $j\in [p]$ and the penultimate equality in \eqref{eq:L5.7} above follows from the fact that $\mathbf{x}_{j}^{\top}\mu_{\alpha}(\xi) = \mathbf{x}_{j}^{\top}\Sigma_{\alpha}(\xi)c(\xi)$. Using \eqref{eq:L5.6} and \eqref{eq:L5.7} in \eqref{eq:L5.4}:
\begin{align}\label{eq:L5.8}
    \begin{split}
        \frac{\partial}{\partial \xi_j}b_{\alpha}(\xi) &= -2\alpha y^{\top}\left(\frac{\partial}{\partial \xi_j}\mathcal{A}(\xi)\right) y - \frac{\partial}{\partial \xi_j}\left(\mu_{\alpha}(\xi)^{\top}c(\xi)\right)\\
        &= -2\alpha A'(\xi_j) y_j^2 - \left[\frac{\partial}{\partial\xi_j}\mu_{\alpha}(\xi)^{\top}\right]c(\xi) - \mu_{\alpha}(\xi)^{\top}\left[\frac{\partial}{\partial\xi_j}c(\xi)\right]\\
        &= -2\alpha A'(\xi_j)\mathfrak{R}^{2}_j(\xi).
    \end{split}
\end{align}
Observe that, $\gamma(\xi_j) = h(\xi_j^2) - \xi_j^2h'(\xi_j^2)$, which implies $r'(\xi_j) = -\xi_j^2 A'(\xi_j)$. Using this observation along with \eqref{eq:L5.3} and \eqref{eq:L5.8} in \eqref{eq:f-Type-I}:
\begin{align}\label{eq:L5.9}
    \begin{split}
        \frac{\partial}{\partial \xi_j}\mathsf{L}(\xi) &= \alpha A'(\xi_j)\left[\frac{a+n\alpha}{b_{\alpha}(\xi)}\mathfrak{R}^{2}_j(\xi) + \mathbf{x}_j^{\top}\Sigma_{\alpha}(\xi)\mathbf{x}_j - \xi_j^{2}\right]\\
        &= \alpha A'(\xi_j)\left[\kappa_j(\xi) - \xi_j^2\right],
    \end{split}
\end{align}
where $\kappa_j(\xi) = \frac{a+n\alpha}{b_{\alpha}(\xi)}\mathfrak{R}^{2}_j(\xi) + \mathbf{x}_j^{\top}\Sigma_{\alpha}(\xi)\mathbf{x}_j$ as defined in Section~\ref{subsec:tavie-algo} of the main manuscript. Defining $\kappa(\xi) := (\kappa_1(\xi),\ldots,\kappa_n(\xi))^{\top}$, we get from \eqref{eq:L5.9}:
\begin{align}\label{eq:L5.10}
    \nabla_{\xi}\mathsf{L}(\xi) = \alpha A'(\xi)\circ \left[ \kappa(\xi) - \xi \circ \xi\right],
\end{align}
where $A'(\xi) = (A'(\xi_1), \ldots, A'(\xi_n))^{\top}$. For any $\xi^{\star}$,  $\kappa(\xi^{\star}) = \xi^{\star} \circ \xi^{\star} \iff \nabla_{\xi}\mathsf{L}(\xi^{\star}) = 0$ from \eqref{eq:L5.10}. Therefore, we can conclude that, $\xi^{\star}$ is a critical point of $\mathsf{L}$ if and only if it is a fixed point of Algorithm~\ref{alg:tavie-em} of the main manuscript.

\noindent

\textbf{Type II $\mathsf{SSG}$ likelihoods}: The function $\mathsf{L}: \mathbb{R}_{+}^{n} \to \mathbb{R}$ for Type II $\mathsf{SSG}$ likelihood up to additive constants is:
\begin{align}\label{eq:f-Type-II}
\mathsf{L}(\xi) := \frac{1}{2}w^{\top}\Sigma_{\alpha}(\xi) w  +\alpha\sum_{i=1}^{n}b_i\gamma(\xi_i) + \frac{1}{2}\log |\Sigma_{\alpha}(\xi)|,
\end{align}
where $w = \Sigma^{-1}\mu + \alpha \mathbf{X}^{\top}\left(\mathbf{a} - \tfrac{\mathbf{b}}{2}\right)$ with $\mathbf{a} = (a_1, \ldots, a_n)^{\top}$ and $\mathbf{b} = (b_1, \ldots, b_n)^{\top}$. The $\alpha$-posterior covariance matrix $\Sigma_{\alpha}(\xi)$ is:
\begin{align}\label{eq:L6.1}
    \Sigma_{\alpha}(\xi) = \Sigma^{-1} - 2\alpha \mathbf{X}^{\top}(\mathcal{A}(\xi)\circ \mathrm{diag}(\mathbf{b}))\mathbf{X}. 
\end{align}
Therefore:
\begin{align}\label{eq:L6.2}
    \begin{split}
        \frac{\partial}{\partial \xi_j}\Sigma_{\alpha}(\xi) &= -\Sigma_{\alpha}(\xi)\left(\frac{\partial}{\partial \xi_j}(\Sigma^{-1} - 2\alpha \mathbf{X}^{\top}(\mathcal{A}(\xi)\circ \mathrm{diag}(\mathbf{b}))\mathbf{X})\right)\Sigma_{\alpha}(\xi)\\
        &= 2\alpha A'(\xi_j)b_j\Sigma_{\alpha}(\xi)\mathbf{x}_j\mathbf{x}_j^{\top}\Sigma_{\alpha}(\xi),
    \end{split}
\end{align}
which follows from the first equality in \eqref{eq:L5.5}. Using \eqref{eq:L6.2}, we have:
\begin{align}\label{eq:L6.3}
\begin{split}
        \frac{1}{2}\frac{\partial}{\partial \xi_j}w^{\top}\Sigma_{\alpha}(\xi)w &= \alpha A'(\xi_j)b_jw^{\top}\Sigma_{\alpha}(\xi)\mathbf{x}_j\mathbf{x}_j^{\top}\Sigma_{\alpha}(\xi)w\\
        &= \alpha A'(\xi_j)b_j\left(\mathbf{x}_j^{\top}\mu_{\alpha}(\xi)\right)^2.
\end{split}
\end{align}
From \eqref{eq:L5.2}:
\begin{align}\label{eq:L6.4}
    \begin{split}
        \frac{1}{2}\frac{\partial}{\partial \xi_j}\log|\Sigma_{\alpha}(\xi)| &= -\frac{1}{2}\frac{\partial}{\partial \xi_j}\log|\Sigma^{-1} - 2\alpha \mathbf{X}^{\top}\left(\mathcal{A}(\xi)\circ \mathbf{b}\right)\mathbf{X}|\\
        &= -\frac{1}{2}\mathrm{tr}\left(\Sigma_{\alpha}(\xi)\left(-2\alpha b_j A'(\xi_j)\mathbf{x}_j\mathbf{x}_j^{\top}\right)\right)\\
        &= \alpha b_j A'(\xi_j)\mathbf{x}_j^{\top}\Sigma_{\alpha}(\xi)\mathbf{x}_j.
    \end{split}
\end{align}
Using \eqref{eq:L6.3}, \eqref{eq:L6.4}, and $\gamma'(\xi_j) = -\xi_j^2A'(\xi_j)$ in \eqref{eq:f-Type-II}:
\begin{align}\label{eq:L6.5}
    \begin{split}
        \frac{\partial}{\partial \xi_j}\mathsf{L}(\xi) &= -\alpha b_j A'(\xi_j)\left[\xi_j^2 - \left(\mathbf{x}_j^{\top}\mu_{\alpha}(\xi)\right)^2 - \mathbf{x}_j^{\top}\Sigma_{\alpha}(\xi)\mathbf{x}_j\right]\\
        &= \alpha b_j A'(\xi_j)\left[\kappa_j(\xi) - \xi_j^2\right].
    \end{split}
\end{align}
where $\kappa_j(\xi) = \left(\mathbf{x}_j^{\top}\mu_{\alpha}(\xi)\right)^2 + \mathbf{x}_j^{\top}\Sigma_{\alpha}(\xi)\mathbf{x}_j$ as defined in Section~\ref{subsec:tavie-algo} of the main manuscript. Defining $\kappa(\xi) := (\kappa_1(\xi),\ldots,\kappa_n(\xi))^{\top}$, we get from \eqref{eq:L6.5}:
\begin{align}\label{eq:L6.6}
    \nabla_{\xi} \mathsf{L}(\xi) = \alpha \mathbf{b}\circ A'(\xi)\circ \left[\kappa(\xi) - \xi \circ \xi\right],
\end{align}
where $A'(\xi) = (A'(\xi_1), \ldots, A'(\xi_n))^{\top}$.
Following the similar analogy as in the case of Type I $\mathsf{SSG}$ likelihoods, using \eqref{eq:L6.6}, we conclude that, for any $\xi^{\star}$,  $\kappa(\xi^{\star}) = \xi^{\star} \circ \xi^{\star} \iff \nabla_{\xi}\mathsf{L}(\xi^{\star}) = 0$. Therefore, we can conclude that, $\xi^{\star}$ is a critical point of $\mathsf{L}$ if and only if it is a fixed point of Algorithm~\ref{alg:tavie-em} of the main manuscript.
\end{proof}

\subsection{Proof of Convergence of the \texorpdfstring{$\tssg$}{TAVIE-SSG} EM Algorithm (Theorem~\ref{theorem:convergence})}\label{app:TAVIE-convergence}

\begin{definition}[Kurdyka-\L{}ojasiewicz (K\L) property and exponent~\citep{bolte2014proximal}]\label{def:KL-property} A proper closed function $\mathfrak{h}:\mathbb{R}^{n} \to (-\infty, \infty]$ satisfies the Kurdyka-\L{}ojasiewicz (K\L) property at $\hat{x} \in \mathrm{dom}\;\partial \mathfrak{h}$ if there exists a $\rho \in \mathbb{R}^{+}$, a neighborhood $\mathcal{U}$ of $\hat{x}$, and a continuous concave function $\phi: [0, \rho) \to \mathbb{R}^{+}$ with $\phi(0) = 0$ such that:
\begin{enumerate}
    \item [i.] $\phi$ is continuously differentiable on $(0, \rho)$ with $\phi' > 0$ on $(0, \rho)$; and

    \item [ii.] for every $x\in \mathcal{U}$ with $\mathfrak{h}(\hat{x}) < \mathfrak{h}(x) < \mathfrak{h}(\hat{x}) + \rho$, it holds that:
    \begin{align}\label{eq:KL}
        \phi'(\mathfrak{h}(x) - \mathfrak{h}(\hat{x}))\lVert \partial \mathfrak{h}(x)\rVert_2 \geq 1.
    \end{align}
\end{enumerate}
Further, if $\mathfrak{h}$ satisfies the K\L{} property at $\hat{x}\in \mathrm{dom}\;\partial \mathfrak{h}$ and $\phi$ in \eqref{eq:KL} can be chosen as $\phi(x) = \rho_0 x^{1-\Omega}$ for some $\rho_0>0$ and $\Omega\in [0, 1)$, then $\mathfrak{h}$ satisfies the K\L{} property at $\hat{x}$ with exponent $\Omega$, i.e., there exists a constant $c_{\mathrm{K\L{}}} > 0$ such that:
\begin{align}\label{eq:KL-1}
    \lVert \partial \mathfrak{h}({x})\lVert_{2} > c_{\mathrm{K\L{}}}\left(\mathfrak{h}(x) - \mathfrak{h}(\hat{x})\right)^{\Omega}.
\end{align}
\end{definition}

\begin{proof}[Proof of Theorem~\ref{theorem:convergence}]
Define $H: \mathbb{R}_{+}^{n} \to \mathbb{R}^{n}_{+}$ as $H(z) := \left(\sqrt{\kappa_1(z)}, \ldots, \sqrt{\kappa_n(z)}\right)^{\top}$, where:
\begin{align}\label{eq:kappa-Type-I-II-supp}
    \kappa_i(z) = \begin{cases}
        \mathbf{x}_i^{\top}\Sigma_{\alpha}(z) \mathbf{x}_i + \frac{a_\alpha}{b_{\alpha}(z)}\left(y_i - \mathbf{x}_{i}^{\top}\mu_{\alpha}(z)\right)^2, &\text{Type I } \mathsf{SSG},\\
        \mathbf{x}_i^{\top}\Sigma_{\alpha}(z)\mathbf{x}_i + \left(\mathbf{x}_i^{\top}\mu_{\alpha}(z)\right)^{2}, & \text{Type II } \mathsf{SSG},
    \end{cases}
\end{align}
for $i\in [n]$. We prove Theorem~\ref{theorem:convergence} of the main manuscript using the following steps.

\noindent

\emph{Monotonic ascent and boundedness of $\mathsf{L}(\xi^{(t)})$}.
Since $\xi^{(t+1)} = \arg\max_{\xi \in \mathbb{R}_{+}^{n}}\mathcal{Q}(\xi\mid \xi^{(t)})$:
\begin{align}\label{eq:P.1}
    \mathsf{L}(\xi^{(t+1)}) \geq \mathcal{Q}(\xi^{(t+1)}\mid \xi^{(t)}) \geq \mathcal{Q}(\xi^{(t)}\mid \xi^{(t)}) = \mathsf{L}(\xi^{(t)}),
\end{align}
Hence, $\mathsf{L}(\xi^{(t)})$ is nondecreasing and bounded above the evidence $p(y\mid \mathbf{X}, \theta)$, as a result of which it is convergent.

\noindent
\emph{Cluster points are fixed-points}. 
By compactness of $\overline{\mathsf{U}}_{\mathsf{L}(\xi^{(0)})}$ from Lemma \ref{lemma-conv:1}, the sequence of $\tssg$ iterates $\{\xi^{(t)}: t\geq 0\}$ admit convergent sub-sequences and hence has cluster points.
By invoking Lemma \ref{lemma-conv:1.1}, any cluster point $\xi^{\star}$ of the sequence of $\tssg$ iterates $\{\xi^{(t)}: t\geq 0\}$ satisfies $\xi^{\star} = H(\xi^{\star})$.

\noindent
\emph{Cluster points are bounded away from zero}. Using Lemma~\ref{lemma-conv:bounded-away-zero} and $\xi^{\star} = H(\xi^{\star})$, we have $\xi_i^{\star} = \sqrt{\kappa_i(\xi^{\star})} > 0$ for all $i\in [n]$. Let $\delta(\xi^{\star}) := \min_i \xi_i^{\star} > 0$. Consider any sub-sequence $\xi^{(t_j)} \to \xi^{\star}$. Then there exists $J \in \mathbb{N}$ such that for $j \geq J$, $\xi^{(t_j)} \in \left[\tfrac{\delta(\xi^{\star})}{2}, \infty\right)^{n}$. By Lemma \ref{lemma-conv:1}, the compactness of $\overline{\mathsf{U}}_{\mathsf{L}(\xi^{(t_J)})}$ implies the existence of $R>0$ such that for all $j\geq J$, $\xi^{(t_j)} \in \left[\tfrac{\delta(\xi^{\star})}{2}, R\right]^{n}$.

\noindent
\emph{Local $\omega$-strong concavity of EM surrogate $\mathcal{Q}(\cdot \mid v)$}. Since $h''$ is continuous and strictly positive by Assumptions \ref{ass:1}(i) and \ref{ass:1}(iii) of the main manuscript:
\begin{align}\label{eq:P.1.1}
m := \inf_{t\in \left[\tfrac{\delta(\xi^{\star})}{2}, R\right]}\;th''(t^2) > 0.
\end{align}
We work in $\left[\tfrac{\delta(\xi^{\star})}{2}, R\right]^{n}$. For a single coordinate $\xi_i$ under the Type I $\mathsf{SSG}$ likelihood:
\begin{align}\label{eq:Q-TypeI}
\begin{split}
    \frac{\partial^2}{\partial \xi_i^2}\mathcal{Q}(\xi\mid v)\; &= \; \alpha \left[4\xi^2_ih'''(\xi_i^2)\kappa_i(v) + 2h''(\xi_i^2)\kappa_i(v) - 6\xi_i^2h''(\xi_i^2) - 4\xi_i^4h'''(\xi_i^2)\right]\\
    &=\; \alpha \left[\kappa_i(v)\left(4\xi_i^2h'''(\xi_i^2) + 2h''(\xi_i^2)\right) - \left(6\xi_i^2h''(\xi_i^2) + 4\xi_i^4 h'''(\xi_i^2)\right)\right]\\
    &=\; -4\alpha \xi_i^2h''(\xi_i^2),
\end{split}
\end{align}
evaluated at $\kappa_i(v) = \xi_i^2$. Using \eqref{eq:P.1.1} in \eqref{eq:Q-TypeI}, we get:
\begin{align}\label{eq:TypeI-concavity}
    \frac{\partial^2}{\partial \xi_i^2}\mathcal{Q}(\xi\mid v)\Big|_{\xi_i = \sqrt{\kappa_i(v)}}\;\leq\; -2\alpha m\delta(\xi^{\star}).
\end{align}
Similarly, for Type II $\mathsf{SSG}$ likelihood:
\begin{align}\label{eq:Q-TypeII}
\begin{split}
    \frac{\partial^2}{\partial \xi_i^2}\mathcal{Q}(\xi\mid v)\; &= \; \alpha b_i \left[4\xi^2_ih'''(\xi_i^2)\kappa_i(v) + 2h''(\xi_i^2)\kappa_i(v) - 6\xi_i^2h''(\xi_i^2) - 4\xi_i^4h'''(\xi_i^2)\right]\\
    &=\; \alpha b_i \left[\kappa_i(v)\left(4\xi_i^2h'''(\xi_i^2) + 2h''(\xi_i^2)\right) - \left(6\xi_i^2h''(\xi_i^2) + 4\xi_i^4 h'''(\xi_i^2)\right)\right]\\
    &=\; -4\alpha b_i \xi_i^2h''(\xi_i^2),
\end{split}
\end{align}
evaluated at $\kappa_i(v) = \xi_i^2$.  From \eqref{eq:P.1.1} and \eqref{eq:Q-TypeII}, we have:
\begin{align}\label{eq:TypeII-concavity}
\frac{\partial^2}{\partial \xi_i^2}\mathcal{Q}(\xi\mid v)\Big|_{\xi_i = \sqrt{\kappa_i(v)}}\;\leq\; -2\alpha m\delta(\xi^{\star})\cdot \min_{i}b_i.
\end{align}
Hence, from \eqref{eq:TypeI-concavity} and \eqref{eq:TypeII-concavity}, the local $\omega$-strong concavity of the EM surrogate function $\mathcal{Q}(\xi \mid v)$ holds for $\omega = -2\alpha m \delta(\xi^{\star})$ and $\omega = -2\alpha m \delta(\xi^{\star}) \cdot \min_i b_i$ under Type I and Type II $\mathsf{SSG}$ likelihoods, respectively, in $\xi, v \in \left[\tfrac{\delta(\xi^{\star})}{2}, R\right]^{n}$.

\noindent
\emph{Sufficient ascent and relative-error bound}. Let $v \in \left[\tfrac{\delta(\xi^{\star})}{2}, R\right]^{n}$ and $\xi = H(v)$, the local $\omega$-strong concavity of $\mathcal{Q}(\cdot \mid v)$ in $\left[\tfrac{\delta(\xi^{\star})}{2}, R\right]^{n}$ holds with modulus $\omega > 0$ independent of $v$. Therefore:
\begin{align}\label{eq:P.2}
    \mathcal{Q}(\xi\mid v) - \mathcal{Q}(v\mid v) \geq \frac{\omega}{2}\lVert \xi - v\rVert^{2}_2.
\end{align}
Using $\mathsf{L}(\xi) \geq \mathcal{Q}(\xi\mid v)$ with equality when $v=\xi$, we get:
\begin{align}\label{eq:P.3}
    \mathsf{L}(\xi) - \mathsf{L}(v) \geq \frac{\omega}{2}\lVert \xi - v\rVert^{2}_2,\quad (\text{\textit{sufficient ascent}}).
\end{align}
Since $\nabla_{\tilde{\xi}} \mathcal{Q}(\tilde{\xi}=\xi\mid v) = 0$ and $\nabla_{\tilde{\xi}} \mathsf{L}(\xi) = \nabla_{\tilde{\xi}}\mathcal{Q}(\tilde{\xi}=\xi\mid\xi)$, we obtain:
\begin{align}\label{eq:P.4}
\begin{split}
    \lVert \nabla_{\tilde{\xi}} \mathsf{L}(\xi)\rVert_2 &= \lVert \nabla_{\tilde{\xi}} \mathcal{Q}(\tilde{\xi}=\xi\mid \xi) - \nabla_{\tilde{\xi}}\mathcal{Q}(\tilde{\xi}=\xi\mid v)\rVert_2\\
    &\leq c\lVert \xi - v\rVert_2,\quad (\text{\textit{relative-error bound}}),
\end{split}
\end{align}
for some $c>0$ independent of $v$, because the map $(\xi, v) \mapsto \nabla_{\tilde{\xi}}\mathcal{Q}(\xi \mid v)$ is continuously differentiable on $ \left[\tfrac{\delta(\xi^{\star})}{2}, R\right]^{n}\times  \left[\tfrac{\delta(\xi^{\star})}{2}, R\right]^{n}$ and hence Lipschitz in the second argument on  this compact set.

\noindent
\emph{Convergence to a fixed-point}. Let $\xi^{\star}$ be a fixed-point of $H$. Also, consider $\mathcal{V}$ to be an open neighborhood of $\xi^{\star}$ contained in $\left[\tfrac{\delta(\xi^{\star})}{2}, R\right]^{n}$.
Note that, the function $\mathsf{L}:\mathbb{R}_{+}^{n}\to \mathbb{R}$, under both Type I and Type II $\mathsf{SSG}$ likelihoods, is an analytic map. Therefore, it satisfies the K\L{} property in Definition \ref{def:KL-property} on compact sets (in particular at the fixed-point $\xi^{\star}$)~\citep{KL-SIAM-1,bolte2014proximal}. Hence, there exists a neighborhood $\mathcal{U}$ of $\xi^{\star}$, $\eta>0$, and a concave function $\phi: [0, \eta) \to \mathbb{R}^{+}$, which is continuously differentiable on $(0, \eta)$ with $\phi'> 0$ such that:
\begin{align}\label{eq:P.7}
\phi'(f(\xi) - f(\xi^{\star}))\lVert \nabla_\xi f(\xi)\rVert_2 \geq 1,\text{ for all }\xi\in \mathcal{U}\text{ with }0<f(\xi) - f(\xi^{\star})< \eta, 
\end{align}
where $f(\xi) = -\mathsf{L}(\xi)$. As $f(\xi^{(t)}) \downarrow f(\xi^{\star})$ and $\xi^{(t)}$ eventually stays in $\tilde{\mathcal{U}} = \mathcal{U} \cap \mathcal{V}$, there exists $T\in \mathbb{N}$ with $\xi^{(t)} \in \tilde{\mathcal{U}}$ and $0<f(\xi^{(t)}) - f(\xi^{\star}) <\eta$, for all $t\geq T$. For $t\geq T$, define:
\begin{align}\label{eq:P.8}
\begin{split}
    S_t:= f(\xi^{(t)}) - f(\xi^{\star}) \in (0, \eta)
    \implies S_{t+1} - S_{t} = f(\xi^{(t+1)}) - f(\xi^{(t)}). 
\end{split}
\end{align}
From \eqref{eq:P.7} and the relative-error bound in \eqref{eq:P.4}:
\begin{align}\label{eq:P.9}
    \phi'(S_t) \geq \frac{1}{\lVert \nabla_{\xi} f(\xi^{(t)})\rVert_2} \geq \frac{1}{c\lVert \Delta^{(t)}\rVert_2},
\end{align}
where $\Delta^{(t)} = \xi^{(t)} - \xi^{(t-1)}$. Concavity of $\phi$ gives:
\begin{align}\label{eq:P.10}
    \begin{split}
        \phi(S_{t+1}) \leq \phi(S_t) + \phi'(S_t)(S_{t+1} - S_t)
        \implies \phi(S_t) - \phi(S_{t+1}) \geq \phi'(S_t)(S_t-S_{t+1}).
    \end{split}
\end{align}
Using $S_t - S_{t+1} = f(\xi^{(t)}) - f(\xi^{(t+1)})$, \eqref{eq:P.9}, and sufficient ascent in \eqref{eq:P.3}, we get:
\begin{align}\label{eq:P.11}
    \begin{split}
        \phi(S_t) - \phi(S_{t+1}) &\geq \phi'(S_t)(f(\xi^{(t)}) - f(\xi^{(t+1)}))\\
        &= \phi'(S_t)(\mathsf{L}(\xi^{(t+1)}) - \mathsf{L}(\xi^{(t)}))\\
        &\geq \frac{1}{c\lVert \Delta^{(t)}\rVert_2}\cdot \frac{\omega}{2}\lVert \Delta^{(t+1)}\rVert^2_2\\
        &= \frac{\omega}{2c}\cdot \frac{\lVert \Delta^{(t+1)}\rVert^{2}_2}{\lVert \Delta^{(t)}\rVert_2}.
    \end{split}
\end{align}
Observe:
\begin{align}\label{eq:P.12}
 \phi(S_t) - \phi(S_{t+1}) \geq \frac{\omega}{2c}\left(2\lVert \Delta^{(t+1)}\rVert_2 - \lVert \Delta^{(t)}\rVert_2\right),   
\end{align}
using $b^{-1}a^2 \geq 2a-b$ for $a, b>0$. From \eqref{eq:P.12}, we get:
\begin{align}\label{eq:P.13}
    \lVert \Delta^{(t+1)}\rVert_2 \leq \frac{1}{2}\lVert \Delta^{(t)}\rVert_2+ \frac{c}{\omega}(\phi(S_t) - \phi(S_{t+1})).
\end{align}
Defining $\mathcal{S}_N:= \sum_{t=T}^{N}\lVert \Delta^{(t)}\rVert_2$ and summing \eqref{eq:P.13} over $T\leq t\leq N$, we have:
\begin{align}\label{eq:P.14}
\begin{split}
    \mathcal{S}_{N+1} - \lVert \Delta^{(T)}\rVert_2 &\leq \frac{\mathcal{S}_N}{2} + \frac{c}{\omega}(\phi(S_T) - \phi(S_{N+1}))\\
    &\leq \frac{\mathcal{S}_N}{2} + \frac{c}{\omega}\phi(S_T).
\end{split}
\end{align}
Therefore:
\begin{align}\label{eq:P.15}
    \mathcal{S}_{N+1} \leq \frac{\mathcal{S}_N}{2} + \lVert \Delta^{(T)}\rVert_2 + \frac{c}{\omega}\phi(S_T).
\end{align}
By recurrence on \eqref{eq:P.15}:
\begin{align}\label{eq:P.16}
    \mathcal{S}_N \leq \left(\frac{1}{2}\right)^{N-T}\mathcal{S}_T + 2\left[\lVert \Delta^{(T)}\rVert_2 +\frac{c}{\omega}\phi(S_T)\right],
\end{align}
which implies $\sup_N \mathcal{S}_N <\infty$. Hence, $\sum_{t=1}^{\infty}\lVert \Delta^{(t)}\rVert_2 <\infty$, implying that the sequence $\{\xi^{(t)}: t\geq 0\}$ is Cauchy and converges to a fixed-point $\xi^{\star}$. Hence, Lemma \ref{lemma:critical-point-L} implies $\nabla_{\xi}\mathsf{L}(\xi^{\star}) = 0$, thereby completing the proof of Theorem~\ref{theorem:convergence} in the main manuscript.

\end{proof}

\subsection{Proof of Convergence Rate of the \texorpdfstring{$\tssg$}{TAVIE-SSG} EM Algorithm (Theorem~\ref{theorem:convergence-rate})}\label{app:TAVIE-convergence-rate}

\begin{proof}[Proof of Theorem~\ref{theorem:convergence-rate}]
Let $\Psi$ be the nonempty compact set of the cluster points of the sequence of $\tssg$ iterates $\{\xi^{(t)}: t\geq 0\}$ and $f := -\mathsf{L}$. In extension to the proof of Theorem~\ref{theorem:convergence} in Section \ref{app:TAVIE-convergence} (particularly the step corresponding to the convergence of $\{\xi^{(t)}: t\geq 0\}$ to a fixed-point), using the K\L{} exponent property in Definition \ref{def:KL-property}, we assert the following Claim \ref{claim:uniformized-KL}.

\begin{claim}[Uniformized Kurdyka-\L{}ojasiewicz (K\L{}) neighborhood]
\label{claim:uniformized-KL}
There exists an open neighborhood $\mathcal{U}$ of $\Psi$, a $\rho \in (0, 1)$, a constant $c_{\mathrm{K\L{}}} > 0$, and an exponent $\Omega \in (0, 1)$ such that:
\begin{align}\label{eq:UKL-1}
    \lVert \nabla_{\psi} f(\psi)\rVert_{2} \geq c_{\mathrm{K\L{}}}\left(f(\psi) - \inf_{\Psi} f\right)^{\Omega},\text{ for all } \psi\in \mathcal{U}\text{ with }0<f(\psi) - \inf_{\Psi}f < \rho.
\end{align}
Moreover, there exists $T\in \mathbb{N}$ such that $\xi^{(t)} \in \mathcal{U}$ and $0 < f(\xi^{(t)}) - \inf_{\Psi} f < \rho$ for all $t\geq T$.
\end{claim}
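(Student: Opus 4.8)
The plan is to obtain the uniformized inequality \eqref{eq:UKL-1} as the pointwise Kurdyka-\L{}ojasiewicz property of the real-analytic objective $f = -\mathsf{L}$ at its limit point, and then to trap the tail of the iterate sequence inside the associated neighborhood using monotone ascent together with the convergence already proved in Theorem \ref{theorem:convergence}.

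First I would observe that, by Theorem \ref{theorem:convergence}, the whole sequence $\{\xi^{(t)}:t\ge 0\}$ converges for both Type I and Type II $\ssg$ likelihoods, so its cluster set is the singleton $\Psi = \{\xi^\star\}$ and $\inf_{\Psi} f = f(\xi^\star) = -\mathsf{L}(\xi^\star)$; the general ``uniformization'' device of \citepsupp{bolte2014proximal} --- covering a compact $\Psi$ by finitely many pointwise K\L{} neighborhoods, using that $f$ is constant on $\Psi$, and patching the desingularizing functions --- therefore collapses to the pointwise statement at $\xi^\star$. Next I would recall, from the proof of Theorem \ref{theorem:convergence}, that the limit satisfies $\xi_i^\star \ge \delta(\xi^\star) > 0$ for every $i$ and that the iterates eventually lie in a compact box $[\delta(\xi^\star)/2, R]^n$ on which $\Sigma_\alpha(\cdot) \succ 0$ (and $b_\alpha(\cdot) > 0$ in the Type I case). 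On this box every ingredient of $\mathsf{L}$ in \eqref{eq:ELBO-general} --- the maps $\xi_i \mapsto \gamma(\xi_i)$ and $\xi_i \mapsto A(\xi_i)$ built from the analytic $h$ of the three representative families in \eqref{h-func}, the entries of $\Sigma_\alpha^{-1}(\xi)$, $\mu_\alpha(\xi)$ and $b_\alpha(\xi)$, and $X \mapsto \log|X|$, $x \mapsto \log x$ --- is real-analytic, so $f = -\mathsf{L}$ is real-analytic on an open neighborhood of $\xi^\star$.

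Then I would invoke the \L{}ojasiewicz gradient inequality for real-analytic functions (equivalently, the exponent form of Definition \ref{def:KL-property} with desingularizing function $\phi(s) = \rho_0 s^{1-\Omega}$) to produce constants $c_{\mathrm{K\L{}}} > 0$ and $\Omega \in (0,1)$, a radius $\rho \in (0,1)$, and an open neighborhood $\mathcal{U}$ of $\xi^\star$ (shrunk so that $\mathcal{U} \subset [\delta(\xi^\star)/2, R]^n$) for which \eqref{eq:UKL-1} holds; this is precisely \eqref{eq:KL-exponent} stated at the critical point $\xi^\star$ of $\mathsf{L}$. For the ``moreover'' assertion I would combine the monotone ascent \eqref{eq:P.1} with $\xi^{(t)} \to \xi^\star$ and continuity of $\mathsf{L}$ to get $f(\xi^{(t)}) \downarrow f(\xi^\star)$: if $\mathsf{L}(\xi^{(t_0)}) = \mathsf{L}(\xi^\star)$ ever occurs, then monotonicity together with the sufficient-ascent bound \eqref{eq:P.3} forces $\xi^{(t)} = \xi^\star$ for all $t \ge t_0$, so the algorithm terminates in finitely many steps and both Claim \ref{claim:uniformized-KL} and Theorem \ref{theorem:convergence-rate} hold trivially; otherwise $f(\xi^{(t)}) - f(\xi^\star) > 0$ for all $t$ and tends to $0$, so some $T \in \mathbb{N}$ places every later iterate in $\mathcal{U}$ with $0 < f(\xi^{(t)}) - f(\xi^\star) < \rho$.

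The main obstacle is the real-analyticity step: it is not implied by Assumption \ref{ass:1} (which only posits $C^3$ regularity of $h$) and must be checked family by family --- verifying that $h$, and hence $\gamma$ and $A$, is real-analytic on $\mathbb{R}^+$ for the Laplace, Student's-$t$, and Type II forms in \eqref{h-func} --- and then married with the localization already obtained in the proof of Theorem \ref{theorem:convergence}, namely that $\xi^\star$ lies in the open region where the active coordinates are bounded away from $0$, $\Sigma_\alpha(\xi)$ remains positive definite, and $b_\alpha(\xi)$ remains positive, so that the logarithms and determinant in \eqref{eq:ELBO-general} are analytic there. Once analyticity is secured, both the \L{}ojasiewicz inequality and the tail-trapping argument are routine.
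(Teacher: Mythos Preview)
Your proposal is correct and takes a genuinely simpler route than the paper. The paper treats $\Psi$ as an \emph{a priori} general compact set of cluster points and carries out the full Bolte--Sabach--Teboulle uniformization: it shows $f$ is constant on $\Psi$, covers $\Psi$ by pointwise K\L{} neighborhoods $\{\mathcal{U}_{\overline{\psi}}\}_{\overline{\psi}\in\Psi}$, extracts a finite subcover by compactness, and then sets $c_{\mathrm{K\L{}}} = \min_k c_k$, $\Omega = \max_k \Omega_k$, $\rho = \min_k \rho_k$ to obtain a single triple on $\mathcal{U} = \bigcup_k \mathcal{U}_k$; the eventual-containment step is likewise proved by a separate $\mathrm{dist}(\xi^{(t)},\Psi)\to 0$ argument via contradiction and a tubular neighborhood. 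You instead observe that Theorem~\ref{theorem:convergence} already gives full-sequence convergence, so $\Psi=\{\xi^\star\}$ and the whole subcover machinery collapses to the pointwise \L{}ojasiewicz inequality at $\xi^\star$ (which, as you note, is essentially \eqref{eq:KL-exponent} already invoked in the proof of Theorem~\ref{theorem:convergence}); eventual containment then follows directly from $\xi^{(t)}\to\xi^\star$ and $f(\xi^{(t)})\downarrow f(\xi^\star)$, with your finite-termination caveat handling the degenerate case. Your route is more economical given the logical order of the paper; the paper's route is the standard template that would still work if only subsequential convergence were available at this stage, but that extra generality is not needed here. Your explicit flagging of the real-analyticity verification (which the paper asserts but does not detail) is a welcome addition.
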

\begin{proof}[Proof of Claim \ref{claim:uniformized-KL}]

Since the $\tssg$ iterates $\xi^{(t)}$ lie in the compact set $\overline{\mathsf{U}}_{\mathsf{L}(\xi^{(0)})}$ (see Lemma \ref{lemma-conv:1}), the sequence $\{\xi^{(t)}: t\geq 0\}$ is bounded, so $\Psi$ containing its cluster points is nonempty and compact. Because, $f = -\mathsf{L}$ is real-analytic in a neighborhood of $\Psi$, it is, in particular, continuous.

Let $\overline{f}:= \lim_{t\to \infty} f(\xi^{(t)})$, which exists as $f(\xi^{(t)})$ is nonincreasing and bounded below. If $\psi \in \Psi$, there exists a sub-sequence $\xi^{(t_j)} \to \psi$, hence by continuity $f(\psi) = \lim_{j\to \infty} f(\xi^{(t_j)}) = \overline{f}$. Therefore:
\begin{align}\label{eq:UKL-2}
    f\text{ is constant on $\Psi$}\quad\text{and}\quad \inf_{\Psi} f = \overline{f}.
\end{align}

\noindent
\emph{Pointwise K\L{} property at each $\overline{\psi} \in \Psi$}. Since $f$ is real-analytic, it satisfies the (pointwise) K\L{} property in Definition \ref{def:KL-property} at every $\overline{\psi}\in \Psi$~\citep{KL-SIAM-1}, with an open neighborhood $\mathcal{U}_{\overline{\psi}}$ of $\overline{\psi}$, constants $c_{\overline{\psi}} > 0$, $\Omega_{\overline{\Psi}} \in [0, 1)$, and $\rho_{\overline{\psi}} \in (0, 1)$, such that for every $\psi \in \mathcal{U}_{\overline{\psi}}$ and $0< f(\psi) - f(\overline{\psi}) < \rho_{\overline{\psi}}$, \eqref{eq:KL-1} in Definition \ref{def:KL-property} holds. By \eqref{eq:UKL-2}, $f(\overline{\psi}) = \overline{f}$ for every $\overline{\psi} \in \Psi$, so:
\begin{align}\label{eq:UKL-3}
    \lVert \nabla_{\psi} f(\psi)\rVert_{2} \geq c_{\overline{\psi}}\left(f(\psi) - \overline{f}\right)^{\Omega_{\overline{\psi}}},\text{ for $\psi \in \mathcal{U}_{\overline{\psi}}, 0<f(\psi) - \overline{f}<\rho_{\overline{\psi}}$}.
\end{align}

\noindent
\emph{Finite sub-cover and uniform constants}. The family $\{\mathcal{U}_{\overline{\psi}}\}_{\overline{\psi}\in \Psi}$ is an open cover of the compact set $\Psi$, hence it admits a finite sub-cover:
\begin{align}\label{eq:UKL-4}
\Psi \subset \cup_{k=1}^{K}\mathcal{U}_{k},\text{ where }\mathcal{U}_{k}:= \mathcal{U}_{\overline{\psi}_{k}},\; c_{k}:= c_{\overline{\psi}_{k}}>0,\; \Omega_{k}:= \Omega_{\overline{\psi}_{k}}\in [0, 1),\; \rho_{k}:= \rho_{\overline{\psi}_{k}}\in (0, 1).
\end{align}
Now, set:
\begin{align}\label{eq:UKL-5}
c_{\mathrm{K\L{}}}:=\min_{k\in[K]} c_{k} > 0,\; \Omega:= \max_{k\in [K]}\Omega_{k} < 1,\; \rho:= \min\{1, \rho_1, \ldots, \rho_K\}\in (0, 1).
\end{align}
Let $\mathcal{U}:= \cup_{k=1}^{K}\mathcal{U}_{k}$, which is an open neighborhood of $\Psi$. Fix any $\psi \in \mathcal{U}$ with $0< f(\psi) - \overline{f} < \rho$. Then, $\psi \in \mathcal{U}_{k^{\star}}$ for some $k^{\star}$. Since, $0<f(\psi) - \overline{f}<\rho\leq \rho_{k^{\star}} \leq 1$ and $\Omega \geq \Omega_{k^{\star}}$, we have:
\begin{align}\label{eq:UKL-6}
    \left(f(\psi) - \overline{f}\right)^{\Omega_{k^{\star}}} \geq \left(f(\psi) - \overline{f}\right)^{\Omega},
\end{align}
and hence, by local inequality on $\mathcal{U}_{k^{\star}}$:
\begin{align}\label{eq:UKL-7}
    \lVert \nabla_{\psi}f(\psi)\rVert_{2} \geq c_{k^{\star}}\left(f(\psi) - \overline{f}\right)^{\Omega_{k^{\star}}} \geq c_{\mathrm{K\L{}}}\left(f(\psi) - \overline{f}\right)^{\Omega}.
\end{align}
This proves \eqref{eq:UKL-1} on $\mathcal{U}$ with the single pair $(c_{\mathrm{K\L{}}}, \Omega)$ and parameter $\rho$.

\noindent
\emph{Eventual containment in $\mathcal{U}$ and small gaps}. Let $\mathrm{dist}(x, A) := \inf_{y \in A} \lVert x - y \rVert_2$ where $x \in \mathbb{R}^n$ and $A \subset \mathbb{R}^n$ is compact. We show that $\mathrm{dist}(\xi^{(t)}, \Psi) \to 0$. Suppose not. Then, there exists $\epsilon > 0$ and a sub-sequence $\{\xi^{(t_j)}: j\geq 1\}$ with $\mathrm{dist}(\xi^{(t_j)}, \mathcal{U}) \geq \epsilon$, for all $j \geq 1$. By compactness of $\overline{\mathsf{U}}_{\mathsf{L}(\xi^{(0)})}$, $\{\xi^{(t_j)}: j\geq 1\}$ admits a further convergent sub-sequence $\xi^{(t_{j_l})} \to \psi$; necessarily $\psi\in \Psi$. But then, $\mathrm{dist}(\xi^{(t_{j_l})}, \Psi) \leq \lVert \xi^{(t_{j_l})} - \psi\rVert_2 \to 0$, which yields a contradiction. Hence:
\begin{align}\label{eq:UKL-8}
    \mathrm{dist}(\xi^{(t)}, \Psi) \to 0.
\end{align}
Since, $\mathcal{U}$ is an open neighborhood of $\Psi$, there exists points $\psi_1, \ldots, \psi_K \in \Psi$ and radii $\mathsf{r}_1, \ldots, \mathsf{r}_K > 0$ such that, the balls for $k\in [K]$, $B(\psi_k, \mathsf{r}_k) \subset \mathcal{U}$ cover $\Psi$. Replacing $\mathsf{r}_k$ by $\tfrac{\mathsf{r}_{k}}{2}$, define:
\begin{align}\label{eq:UKL-9}
\epsilon:= \frac{1}{2}\min_{k\in [K]}\mathsf{r}_k > 0.
\end{align}
Note that, the tubular neighborhood $\{\psi: \mathrm{dist}(\psi, \Psi) < \epsilon\}$ is contained in $\mathcal{U}$ as, if $\mathrm{dist}(\psi, \Psi) < \epsilon$, choose  $\tilde{\psi} \in \Psi$ with $\lVert \psi - \tilde{\psi}\rVert_2 < \epsilon$; then $\tilde{\psi} \in B\left(\psi_k, \tfrac{\mathsf{r}_k}{2}\right)$ for some $k$, and $\lVert \psi - \psi_k\rVert_2 \leq \lVert \psi - \tilde{\psi}\rVert_2 + \lVert \tilde{\psi} - \psi_k\rVert_2 < \epsilon + \tfrac{\mathsf{r}_k}{2} \leq \mathsf{r}_k$. Hence, $\psi \in B(\psi_k, \mathsf{r}_k)\subset \mathcal{U}$.

By \eqref{eq:UKL-7}, there exists $T_1\in \mathbb{N}$ such that $\mathrm{dist}(\xi^{(t)}, \Psi) < \epsilon$ for all $t\geq T_1$, hence $\xi^{(t)} \in \mathcal{U}$ for all $t\geq T_1$. Moreover, since $f(\xi^{(t)}) \downarrow \overline{f}$ and $\rho>0$, there exists $T_2\in \mathbb{N}$ such that $0\leq f(\xi^{(t)}) - \overline{f}\leq \rho$ for all $t\geq T_2$. Taking $T:= \max\{T_1, T_2\}$, yields:
\begin{align}\label{eq:UKL-10}
\xi^{(t)} \in \mathcal{U}\quad \text{and}\quad 0<f(\xi^{(t)}) - \overline{f}<\rho,\text{ for all $t\geq T$}.
\end{align}
This completes the proof of Claim \ref{claim:uniformized-KL}.
\end{proof}

\noindent
\emph{One-step recursion}.
Define the value gap $s_t:= f(\xi^{(t)}) - \overline{f}$, where $\overline{f}$ is as defined in the proof of Claim \ref{claim:uniformized-KL} above.
Following the proof of Theorem~\ref{theorem:convergence} of the main manuscript in Section \ref{app:TAVIE-convergence}, any $\psi^{\star}\in \Psi$ has strictly positive entries. Therefore, for any sub-sequence $\xi^{(t_j)} \to \xi^{\star}$, eventually lies in $\left[\tfrac{\delta(\xi^{\star})}{2}, R\right]^{n}$, where $\delta(\xi^{\star}) = \min_{i}\xi_i^{\star}$ and constant $R > 0$. Hence, in line of the arguments presented in the proof of Theorem~\ref{theorem:convergence}, local $\omega$-strong concavity of the EM surrogate function $\mathcal{Q}$ holds inside $\left[\tfrac{\delta(\xi^{\star})}{2}, R\right]^{n}$, which implies the sufficient ascent and relative-error bounds in \eqref{eq:P.3} and \eqref{eq:P.4}, respectively.
Let $\mathcal{V}$ be an open neighborhood of $\xi^{\star}$ contained in $\left[\tfrac{\delta(\xi^{\star})}{2}, R\right]^{n}$ and define $\tilde{\mathcal{U}} = \mathcal{U} \cap \mathcal{V}$, on which the uniformized K\L{} inequality in Claim \ref{claim:uniformized-KL}, relative error bound, and sufficient ascent property all holds.

Working in $\tilde{\mathcal{U}}$, the uniformized K\L{} inequality in Claim \ref{claim:uniformized-KL} at $\xi^{(t)}$ and the relative-error bound in \eqref{eq:P.4} yields:
\begin{align}\label{eq:UKL-13}
    \lVert \Delta^{(t)}\rVert_2:= \lVert \xi^{(t)} - \xi^{(t-1)}\rVert_2 \geq \frac{1}{C}\lVert \nabla_{\xi} f(\xi^{(t)})\rVert_2 \geq \frac{c_{\mathrm{K\L{}}}}{C}s_t^{\Omega}.
\end{align}
The sufficient descent inequality in \eqref{eq:P.3} yields:
\begin{align}\label{eq:UKL-14}
    s_{t-1}-s_t = f(\xi^{(t-1)}) - f(\xi^{(t)}) \geq \frac{\omega}{2}\lVert\Delta^{(t)}\rVert_{2}^{2} \geq \frac{\omega}{2}\frac{c_{\mathrm{K\L{}}}^{2}}{C^2}s_t^{2\Omega}.
\end{align}
Therefore:
\begin{align}\label{eq:UKL-15}
    s_t \leq s_{t-1} - \frac{\omega}{2}\frac{c_{\mathrm{K\L{}}}^{2}}{C^2}s_t^{2\Omega},\quad t\geq T+1.
\end{align}

Now, we derive the explicit convergence rates for $\Omega \in (0, 1)$ by treating the cases $\Omega \in (0, \tfrac{1}{2})$, $\Omega = \tfrac{1}{2}$, and $\Omega \in (\tfrac{1}{2}, 1)$ separately.

\noindent
\emph{Rate for $\Omega = \frac{1}{2}$}. From \eqref{eq:UKL-15}, $s_t \leq s_{t-1} - \frac{\omega}{2}\frac{c_{\mathrm{K\L{}}}^{2}}{C^2}s_t^{2\Omega}$, i.e., $(1 + \frac{\omega}{2}\frac{c_{K\L{}}^{2}}{C^2})s_t \leq s_{t-1}$. Hence, $s_{t} \leq (1+\frac{\omega}{2}\frac{c_{\mathrm{K\L{}}}^{2}}{C^2})^{-(t-T)}s_T$ is geometric. The distance rate follows from $\lVert \xi^{(t)} - \xi^{\star}\rVert_{2} \leq \sum_{k\geq t} \lVert \Delta ^{(k+1)}\rVert_2 \leq \sqrt{\tfrac{2}{\omega}}\sum_{k\geq t}(s_k - s_{k-1})^{1/2} \lesssim \sqrt{s_t}$.

\noindent
\emph{Rate for $\Omega \in (0, \frac{1}{2})$}. Since $2\Omega -1 < 0$ and $\{s_t: t\geq 1\}$ is decreasing, there exists $M>0$ such that, $s_t^{2\Omega -1} \geq M$ for all $t\geq T$. From \eqref{eq:UKL-15}:
\begin{align}\label{eq:UKL-16}
    s_t \leq s_{t-1} - \frac{\omega}{2}\frac{c_{\mathrm{K\L{}}}^{2}}{C^2}s_t^{2\Omega} \leq s_{t-1} - \frac{\omega}{2}\frac{c_{\mathrm{K\L{}}}^{2}}{C^2}M s_t\implies s_t \leq \frac{1}{1 + \frac{\omega}{2}\frac{c_{\mathrm{K\L{}}}^{2}}{C^2}M}s_{t-1},
\end{align}
which is geometric. The distance rate again follows from $\lVert \xi^{(t)} - \xi^{\star}\rVert_2 \lesssim \sqrt{s_t}$.

\noindent
\emph{Rate for $\Omega \in (\frac{1}{2}, 1)$}. Let $\phi(u): = u^{1-2\Omega}$ (note $1-2\Omega <0$, so $\phi$ is decreasing). By the mean-value theorem with $s_t<s^{\star}_t< s_{t-1}$:
\begin{align}\label{eq:UKL-17}
    \phi(s_t) - \phi(s_{t-1}) = (1-2\Omega)(s^{\star}_t)^{-2\Omega}(s_t-s_{t-1}) \geq (1-2\Omega)s_t^{-2\Omega}(s_t-s_{t-1}).
\end{align}
So using \eqref{eq:UKL-15}:
\begin{align}\label{eq:UKL-18}
    \phi(s_t) - \phi(s_{t-1}) \geq (2\Omega-1)\frac{s_{t-1}-s_t}{s_t^{2\Omega}} \geq (2\Omega -1)\frac{\omega}{2}\frac{c_{\mathrm{K\L{}}}^{2}}{C^2}.
\end{align}
Summing from $T+1$ to $t$ gives:
\begin{align}\label{eq:UKL-19}
    s_t^{1-2\Omega} \geq s_{T+1}^{1-2\Omega} + (2\Omega - 1)\frac{\omega}{2}\frac{c_{K\L{}}^{2}}{C^2}(t-T-1),
\end{align}
hence:
\begin{align}\label{eq:UKL-20}
    s_t \leq \left(s^{1-2\Omega}_{T+1} + (2\Omega-1)\frac{\omega}{2}\frac{c_{\mathrm{K\L{}}}^{2}}{C^2}(t-T-1)\right)^{-\frac{1}{2\Omega -1}} = \mathcal{O}\left((t-T)^{-\frac{1}{2\Omega -1}}\right).
\end{align}
For the distance rate, using $\lVert \Delta^{(t+1)}\rVert_2 \leq \sqrt{\tfrac{2}{\omega}}(s_t-s_{t+1})^{1/2} \leq (\tfrac{c_{\mathrm{K\L{}}}}{C})s_{t+1}^{\Omega} = \mathcal{O}\left((t-T)^{-\frac{\Omega}{2\Omega -1}}\right)$:
\begin{align}\label{eq:UKL-21}
    \lVert \xi^{(t)} - \xi^{\star}\rVert_2 \leq \sum_{k\geq t}\lVert \Delta^{(k+1)}\rVert_2 = \mathcal{O}\left((t-T)^{-\frac{1-\Omega}{2\Omega -1}}\right).
\end{align}
This completes the proof of Theorem~\ref{theorem:convergence-rate} in the main manuscript.

\end{proof}

\newpage

\section{Variational Risk Bounds}\label{app:variational-risk-bounds}

\subsection{Lemmata for Variational Risk Bounds}\label{app:lemmata-variatonal-risk-bounds-general}

\begin{lemma}[Donsker and Varadhan's variational inequality]\label{lemma-variational-inequality}
Let $\mu$ be a probability measure and $h$ be a measurable function such that $e^{h}$ is integrable. Then:
\begin{equation}
    \log\int e^{h}d\mu = \sup_{\rho\ll\mu}\left[\int h d\rho - \mathrm{KL}(\rho\parallel \mu)\right].
\end{equation}
\end{lemma}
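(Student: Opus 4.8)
The plan is to produce the maximizer explicitly and then establish optimality through a single Kullback--Leibler identity, so that both inequalities and the identification of the extremizer fall out simultaneously. Write $Z := \int e^{h}\,d\mu$, which lies in $(0,\infty)$ under the stated integrability hypothesis (positivity holds because $h$ is real-valued, so $e^{h}>0$ pointwise). Define the \emph{Gibbs tilt} $\rho^{\star}$ via the Radon--Nikodym derivative $d\rho^{\star}/d\mu = e^{h}/Z$; then $\rho^{\star}$ is a probability measure and, since $e^{h}/Z$ is strictly positive, $\rho^{\star}$ is equivalent to $\mu$ (mutually absolutely continuous).

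The key step is the following computation, valid for any $\rho \ll \mu$ with $\mathrm{KL}(\rho\parallel\mu)<\infty$: equivalence of $\mu$ and $\rho^{\star}$ gives $\rho\ll\rho^{\star}$, and the chain rule for Radon--Nikodym derivatives yields $d\rho/d\rho^{\star} = (d\rho/d\mu)\,(Z/e^{h})$. Taking logarithms and integrating against $\rho$,
\begin{equation*}
  \mathrm{KL}(\rho\parallel\rho^{\star}) \;=\; \mathrm{KL}(\rho\parallel\mu) + \log Z - \int h\,d\rho,
\end{equation*}
equivalently $\int h\,d\rho - \mathrm{KL}(\rho\parallel\mu) = \log Z - \mathrm{KL}(\rho\parallel\rho^{\star})$. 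Since $\mathrm{KL}(\rho\parallel\rho^{\star})\ge 0$, the right-hand side is at most $\log Z$, with equality if and only if $\rho=\rho^{\star}$; and restricting the supremum to $\rho$ with finite $\mathrm{KL}(\rho\parallel\mu)$ loses nothing, because the objective equals $-\infty$ otherwise. Hence $\sup_{\rho\ll\mu}\big[\int h\,d\rho - \mathrm{KL}(\rho\parallel\mu)\big]=\log Z$, attained at $\rho^{\star}$. An alternative route to the upper bound is Jensen's inequality, $\int h\,d\rho - \mathrm{KL}(\rho\parallel\mu) = \int \log\!\big(e^{h}\tfrac{d\mu}{d\rho}\big)\,d\rho \le \log \int e^{h}\tfrac{d\mu}{d\rho}\,d\rho = \log Z$ by concavity of $\log$ and $\rho\ll\mu$; the matching lower bound again comes from substituting $\rho^{\star}$.

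The only delicate point is the measure-theoretic bookkeeping, which I would dispatch as follows. One must ensure $\int h\,d\rho$ is unambiguously defined when combined with $\mathrm{KL}(\rho\parallel\mu)$: splitting $h = h_{+}-h_{-}$, the bound $\int h\,d\rho \le \mathrm{KL}(\rho\parallel\mu)+\log Z$ (read off from $\mathrm{KL}(\rho\parallel\rho^{\star})\ge 0$) controls the positive part, while if $\int h_{-}\,d\rho=\infty$ the objective is $-\infty$ and the case is vacuous; null sets where $d\rho/d\mu$ vanishes carry no $\rho$-mass and the convention $0\log 0 = 0$ applies; and the chain rule for densities is legitimate precisely because $\mu$ and $\rho^{\star}$ are equivalent. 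All of this is routine, so this is the main (and essentially only) obstacle.
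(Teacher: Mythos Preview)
Your argument is correct and is the standard proof of the Donsker--Varadhan formula: construct the Gibbs tilt $\rho^\star$ with $d\rho^\star/d\mu=e^{h}/Z$, use the chain rule for Radon--Nikodym derivatives to obtain the identity $\int h\,d\rho-\mathrm{KL}(\rho\parallel\mu)=\log Z-\mathrm{KL}(\rho\parallel\rho^\star)$, and read off both the upper bound and its attainment. The paper does not actually give a proof of this lemma; it simply cites the original Donsker--Varadhan reference. So your proposal is in fact more complete than what appears in the paper, and there is nothing to compare.

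One minor caveat worth flagging (though it does not affect the applications in the paper, where all densities are well-behaved): your claim that the supremum is \emph{attained} at $\rho^\star$ implicitly requires $\int h\,d\rho^\star$ and $\mathrm{KL}(\rho^\star\parallel\mu)$ to be separately finite, which does not follow from $e^{h}\in L^1(\mu)$ alone. In full generality one recovers the value $\log Z$ by approximating $\rho^\star$ via truncations of $h$, so the equality of the supremum with $\log Z$ still holds even if it is not achieved. This is a routine refinement and does not undermine your argument.
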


\begin{proof}
See~\cite{DonskerVaradhan1983IV}.
\end{proof}

\begin{lemma}
\label{lemma-auxiliary}
Let $x$ and $y$ be two continuous random vectors with joint density function $f(x, y)$. The maximum value of:
\begin{align}
    \int q(x)\log\left(\frac{f(x,y)}{q(x)}\right)dx,
\end{align}
over all density functions $q$ is obtained by $q^{\star}(x) = f(x\mid y)$.
\end{lemma}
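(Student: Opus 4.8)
The plan is to reduce this maximization to the non-negativity of the Kullback--Leibler divergence. First I would factor the joint density as $f(x,y) = f(x\mid y)\, f_Y(y)$, where $f_Y(y) = \int f(x,y)\,dx > 0$ is the marginal density of $y$ (well-defined on its support). Substituting this factorization into the objective and splitting the logarithm of the product, for any density $q$ one obtains
\begin{equation}
\int q(x)\log\!\left(\frac{f(x,y)}{q(x)}\right)dx = \log f_Y(y) - \int q(x)\log\!\left(\frac{q(x)}{f(x\mid y)}\right)dx = \log f_Y(y) - \mathrm{KL}\big(q \parallel f(\cdot\mid y)\big).
\end{equation}
Since $\log f_Y(y)$ is a constant that does not depend on $q$, maximizing the left-hand side over all densities $q$ is equivalent to minimizing $\mathrm{KL}\big(q \parallel f(\cdot\mid y)\big)$.

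Next I would invoke Gibbs' inequality: $\mathrm{KL}\big(q \parallel f(\cdot\mid y)\big) \ge 0$ for every density $q$, with equality if and only if $q(x) = f(x\mid y)$ for almost every $x$. This follows from Jensen's inequality applied to the strictly convex map $t\mapsto -\log t$, and strict convexity furnishes the equality characterization. Consequently the supremum of the objective equals $\log f_Y(y)$ and is attained precisely at $q^\star(x) = f(x\mid y)$, which is the claimed maximizer. Alternatively, this entire step can be replaced by a one-line appeal to Lemma~\ref{lemma-variational-inequality} (Donsker--Varadhan) with reference measure $f(\cdot\mid y)\,dx$ and $h\equiv 0$, giving $\sup_{q}\big[-\mathrm{KL}(q\parallel f(\cdot\mid y))\big] = 0$ with the supremum attained at $q = f(\cdot\mid y)$.

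The remaining care is bookkeeping about the domain of the functional rather than anything conceptual: for densities $q$ that are not absolutely continuous with respect to $f(\cdot\mid y)$, the integral $\int q\log\!\big(q/f(\cdot\mid y)\big)$ equals $+\infty$, so the original objective is $-\infty$ there and such $q$ are harmlessly excluded from being maximizers; at $q = q^\star$ the objective evaluates to the finite value $\log f_Y(y)$, confirming optimality (and uniqueness up to almost-everywhere equivalence). I expect the only genuine obstacle to be these measure-theoretic technicalities---verifying integrability so the decomposition above is valid, handling the $q \not\ll f(\cdot\mid y)$ case, and noting that on the $f_Y$-null set where $f_Y(y)=0$ the conditional density is undefined and the statement is vacuous.
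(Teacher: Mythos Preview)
Your proof is correct and essentially identical to the paper's: both rewrite the objective via the identity $\int q(x)\log\frac{f(x,y)}{q(x)}\,dx = \log f_Y(y) - \mathrm{KL}(q\parallel f(\cdot\mid y))$ and then appeal to nonnegativity of the KL divergence with its equality case. The paper's version is terser and omits the measure-theoretic caveats you add, but the argument is the same.
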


\begin{proof}
Observe that:
\begin{align}\label{eq:lemma2.1}
    \begin{split}
        \log f(y) = \int q(x)\log \left(\frac{f(x, y)}{q(x)}\right)dx + \mathrm{KL}(q\parallel f(\cdot\mid y)) \geq \int q(x)\log \left(\frac{f(x, y)}{q(x)}\right)dx,
    \end{split}
\end{align}
as $\mathrm{KL}(q\parallel f(\cdot\mid y))\geq 0$. Therefore, equality in \eqref{eq:lemma2.1} holds if and only if $q^{\star}(x) = f(x\mid y)$.
\end{proof}

\begin{lemma}[Optimal $\tssg$ variational solution]\label{lemma-variational-optimizer}
Let $\mathcal{P}_{\Theta}$ be the set of densities supported on $\Theta$ and $\varphi_{\alpha}(y, \theta\mid \mathbf{X}, \xi) = \varphi_{\alpha}(y\mid \mathbf{X}, \theta, \xi)\pi(\theta) = \left\{\prod_{i\in [n]}\varphi(y_i\mid \mathbf{x}_i, \theta, \xi_i)\right\}^{\alpha}\pi(\theta)$, for any $\alpha \in (0,1]$. Then any maximizer $(q^\star, \xi^\star)$ of the objective function $\mathcal{L}(q, \xi):\mathcal{P}\times \mathbb{R}^n_{+}\rightarrow \mathbb{R}$ defined as:
\begin{align}\label{eq:objective-function}
   \mathcal{L}(q, \xi) := \int_{\theta\in \Theta}\log\frac{\varphi_{\alpha}(y, \theta\mid \mathbf{X}, \xi)}{q(\theta)}q(\theta)d\theta,
\end{align}
satisfies:
\begin{align}\label{eq:lemma-optimal-variational-solution}
    q^\star(\theta) \equiv \pi_{\alpha}(\theta\mid \mathcal{D}_n, \xi^{\star}), \quad 
    \xi_i^{\star} = \sqrt{\kappa_i(\xi^\star)}, \quad i\in [n],
\end{align}
where $\pi_{\alpha}(\theta\mid \mathcal{D}_n, \xi)$ and $\kappa_{i}(\xi)$ are as given in Sections~\ref{subsec:prior-posterior} and~\ref{subsec:tavie-algo} of the main manuscript.
\end{lemma}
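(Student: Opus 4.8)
The plan is to treat the maximization of $\mathcal{L}(q,\xi)$ as an alternating optimization, pinning down the $q$-component and the $\xi$-component of any joint maximizer separately. First, fix $\xi\in\mathbb{R}^n_{+,0}$ and view $\varphi_\alpha(y,\theta\mid\mathbf{X},\xi)=\varphi_\alpha(y\mid\mathbf{X},\theta,\xi)\pi(\theta)$ as an unnormalized joint density in $(y,\theta)$. Applying Lemma~\ref{lemma-auxiliary} (equivalently, the Donsker--Varadhan identity of Lemma~\ref{lemma-variational-inequality} with base measure $\pi$ and log-integrand $\log\varphi_\alpha(y\mid\mathbf{X},\theta,\xi)$) shows that $q\mapsto\mathcal{L}(q,\xi)$ attains its supremum uniquely at $q_\xi(\theta)=\varphi_\alpha(y,\theta\mid\mathbf{X},\xi)/\varphi_\alpha(y\mid\mathbf{X},\xi)=\pi_\alpha(\theta\mid\mathcal{D}_n,\xi)$ from \eqref{eq:alpha-frac-posterior}, with value $\mathsf{L}(\xi):=\mathcal{L}(q_\xi,\xi)=\log\varphi_\alpha(y\mid\mathbf{X},\xi)$. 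This step only needs $\varphi_\alpha(y\mid\mathbf{X},\xi)\in(0,\infty)$, which is exactly the conjugacy established in Section~\ref{subsec:prior-posterior} (the finite, positive normalizing constant of a Normal--Gamma, resp.\ Gaussian, density). Hence for any joint maximizer $(q^\star,\xi^\star)$ we must have $q^\star=q_{\xi^\star}=\pi_\alpha(\cdot\mid\mathcal{D}_n,\xi^\star)$, and $\xi^\star$ maximizes the profile objective $\mathsf{L}$ over $\mathbb{R}^n_{+,0}$ (whose existence follows from compactness of the upper level sets, Lemma~\ref{lemma-conv:1}).

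Next I would characterize maximizers of $\mathsf{L}$ via the EM surrogate $\mathcal{Q}(\cdot\mid v)$ of \eqref{eq:EM-surrogate}. Jensen's inequality applied to the tangent minorizer \eqref{eq:general-minorizer} gives the standard EM minorization $\mathsf{L}(\xi)\ge\mathcal{Q}(\xi\mid v)$ (up to a $\xi$-free additive constant), with equality at $\xi=v$. Crucially, $\mathcal{Q}(\cdot\mid v)$ splits coordinate-wise into terms proportional to $g_i(\xi_i):=A(\xi_i)\,\kappa_i(v)+\gamma(\xi_i)$ with positive weights $t_i$. Using the identity $\gamma'(t)=-t^2A'(t)$ from the definitions in Proposition~\ref{lemma:tangent-lower-bound}, one computes $g_i'(\xi_i)=A'(\xi_i)\bigl(\kappa_i(v)-\xi_i^2\bigr)$; since $A'(t)=2t\,h''(t^2)>0$ for $t>0$ by Assumption~\ref{ass:1}(iii), and $\kappa_i(v)\ge\mathbf{x}_i^\top\Sigma_\alpha(v)\mathbf{x}_i>0$ by $\Sigma_\alpha(v)\succ0$ together with Assumption~\ref{ass:2}, the map $g_i$ is strictly increasing on $(0,\sqrt{\kappa_i(v)})$ and strictly decreasing on $(\sqrt{\kappa_i(v)},\infty)$. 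Therefore $\mathcal{Q}(\cdot\mid v)$ has the unique maximizer $H(v)=(\sqrt{\kappa_1(v)},\dots,\sqrt{\kappa_n(v)})^\top$, which is precisely the closed-form update \eqref{eq:TAVIE-xi-update}.

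To close the argument, take $v=\xi^\star$: the ascent chain $\mathsf{L}(H(\xi^\star))\ge\mathcal{Q}(H(\xi^\star)\mid\xi^\star)\ge\mathcal{Q}(\xi^\star\mid\xi^\star)=\mathsf{L}(\xi^\star)$, together with maximality of $\mathsf{L}$ at $\xi^\star$, forces equality throughout; the strict uniqueness from the previous step then yields $\xi^\star=H(\xi^\star)$, i.e.\ $\xi_i^\star=\sqrt{\kappa_i(\xi^\star)}$ for all $i\in[n]$, which is \eqref{eq:lemma-optimal-variational-solution}. (Alternatively, one may invoke Lemma~\ref{lemma:critical-point-L}: $\nabla_\xi\mathsf{L}(\xi^\star)=0$ with $A'(\xi_i^\star)\neq0$ and $\kappa_i(\xi^\star)>0$ gives the same conclusion, once it is noted that a maximizer of $\mathsf{L}$ cannot lie on the boundary because $\kappa_i>0$.) The main obstacle is exactly this last characterization: ruling out degenerate boundary maximizers with some $\xi_i^\star=0$ and establishing \emph{uniqueness} of the coordinate-wise surrogate optimizer, both of which hinge on the strict positivity $\kappa_i(\xi^\star)>0$ --- precisely where the non-zero design rows (Assumption~\ref{ass:2}) and strict convexity $h''>0$ (Assumption~\ref{ass:1}(iii)) enter. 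By contrast, the inner maximization in $q$ is routine once conjugacy is in hand.
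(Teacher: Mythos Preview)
Your proof is correct and, for the $q$-component, proceeds exactly as the paper does (via Lemma~\ref{lemma-auxiliary} to identify $q^\star=\pi_\alpha(\cdot\mid\mathcal{D}_n,\xi^\star)$). For the $\xi$-component, however, you take a different route: the paper simply matches first-order stationarity conditions, writing $\mathbb{E}_{\pi_\alpha(\theta\mid\mathcal{D}_n,\xi)}\bigl[\partial_\xi\log\varphi_\alpha(y,\theta\mid\mathbf{X},\xi)\bigr]=0$ and observing that this coincides with the stationarity condition $\partial_{\xi^{(t+1)}}\mathcal{Q}(\xi^{(t+1)}\mid\xi^{(t)})=0$ of the EM surrogate at a fixed point. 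You instead argue through the EM minorization $\mathsf{L}(\xi)\ge\mathcal{Q}(\xi\mid v)$ with equality at $\xi=v$, establish \emph{strict} uniqueness of the coordinate-wise surrogate maximizer $H(v)$ via the sign analysis of $g_i'(\xi_i)=A'(\xi_i)(\kappa_i(v)-\xi_i^2)$, and then close with the ascent chain $\mathsf{L}(H(\xi^\star))\ge\mathcal{Q}(H(\xi^\star)\mid\xi^\star)\ge\mathcal{Q}(\xi^\star\mid\xi^\star)=\mathsf{L}(\xi^\star)$ to force $\xi^\star=H(\xi^\star)$. Your approach buys more: it explicitly rules out boundary maximizers (using $\kappa_i>0$ from Assumption~\ref{ass:2} and $\Sigma_\alpha\succ0$) and does not rely on differentiability at $\xi^\star$ or on the implicit assumption that stationary points are maximizers, whereas the paper's argument is terser but leaves these regularity issues unaddressed. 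The parenthetical alternative you mention---invoking Lemma~\ref{lemma:critical-point-L} directly---is in fact closer to the paper's own reasoning.
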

\begin{proof}
From \eqref{eq:objective-function}, we have:
\begin{align}\label{eq:optimal-proof-1}
    \mathcal{L}(q, \xi) := \int_{\theta\in \Theta}q(\theta)\log \varphi_{\alpha}(y, \theta\mid \mathbf{X}, \xi)d\theta - \int_{\theta\in \Theta}q(\theta)\log q(\theta)d\theta.
\end{align}
We want to maximize \eqref{eq:optimal-proof-1} jointly with respect to $(q, \xi) \in \mathcal{P}_{\Theta}\times \mathbb{R}_{+}^n$. Therefore, considering $q$ fixed, we set $\tfrac{\partial \mathcal{L}(q, \xi)}{\partial \xi}$ to $0$. Since, the second term on the right hand side of \eqref{eq:optimal-proof-1} is independent of $\xi$, our maximization problem equivalently amounts to:
\begin{align}\label{eq:optimal-proof-2}
    \frac{\partial}{\partial \xi}\mathbb{E}_{q}\left[\log \varphi_{\alpha}(y, \theta\mid \mathbf{X}, \xi)\right] = 0.
\end{align}
By using differentiation under the integral, from \eqref{eq:optimal-proof-2}, we have:
\begin{align}
    \mathbb{E}_{q}\left[\frac{\partial}{\partial\xi}\log \varphi_{\alpha}(y, \theta\mid \mathbf{X},\xi)\right] = 0.
\end{align}
Using Lemma \ref{lemma-auxiliary} above, $\mathcal{L}(q, \xi)$  in \eqref{eq:objective-function} can be maximized for a fixed $\xi$, which leads to the optimal variational family $q$ being the conditional distribution $\pi_{\alpha}(\theta\mid \mathcal{D}_n, \xi)$. Taking expectation in \eqref{eq:optimal-proof-2} with respect to this optima results into:
\begin{align}\label{eq:optimal-proof-3}
    \mathbb{E}_{\pi_{\alpha}(\theta\mid \mathcal{D}_n, \xi)}\left[\frac{\partial}{\partial\xi}\log \varphi_{\alpha}(y, \theta \mid \mathbf{X}, \xi)\right] = 0.
\end{align}
In order to show that, the solution of \eqref{eq:optimal-proof-3} satisfies the fixed-point update in Algorithm~\ref{alg:tavie-em} of the main manuscript, we use the first-order stationarity condition for maximizing the EM surrogate function $\mathcal{Q}(\xi^{(t+1)}\mid \xi^{(t)})$ in~\eqref{eq:EM-surrogate} of the main manuscript with respect to $\xi^{(t+1)}$, given by:
\begin{align}\label{eq:optimal-proof-4}
\frac{\partial}{\partial\xi^{(t+1)}}\mathcal{Q}(\xi^{(t+1)}\mid \xi^{(t)}) = \mathbb{E}_{\pi_{\alpha}(\theta\mid \mathcal{D}_n, \xi^{(t)})}\left[\frac{\partial}{\partial\xi^{(t+1)}}\log\varphi_{\alpha}(y, \theta\mid \mathbf{X},\xi^{(t+1)})\right] = 0.
\end{align}

Hence, we show that the solution of \eqref{eq:optimal-proof-3} satisfies the fixed-point update of Algorithm~\ref{alg:tavie-em} in the main manuscript, thus completing the proof.
\end{proof}

\begin{lemma}[Gaussian translation bounds \citep{anderson, Ball1993-vo}]
\label{lem:gaussian-translation}
Let $\Phi_p$ denote the standard Gaussian measure on $\mathbb{R}^p$:
\begin{align}
\Phi_p(E) \;=\; (2\pi)^{-p/2}\int_E \exp\left\{-\frac{\|x\|_2^2}{2}\right\}\,dx,
\end{align}
and let $A\subseteq\mathbb{R}^p$ be a convex set that is symmetric about the origin (i.e., $A=-A$).
Then for every $m\in\mathbb{R}^p$:
\begin{align}
\exp\left\{-\frac{\|m\|_2^2}{2}\right\}\,\Phi_p(A) \;\le\; \Phi_p(A+m)\;\le\; \Phi_p(A),
\end{align}
where $A+m:=\{x+m:x\in A\}$. 
Moreover, equality on the left hand side above holds if and only if $A$ is a linear subspace (up to Gaussian null sets).
\end{lemma}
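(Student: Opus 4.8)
The plan is to handle the two inequalities by separate arguments, since they are of different natures. For the \emph{upper} bound $\Phi_p(A+m)\le\Phi_p(A)$, I would appeal directly to Anderson's integral inequality \citepsupp{anderson}: the standard Gaussian density $x\mapsto(2\pi)^{-p/2}\exp\{-\|x\|_2^2/2\}$ is even and log-concave, hence symmetric and unimodal in Anderson's sense, so for every convex symmetric $A$ the function $m\mapsto\int_{A+m}\exp\{-\|x\|_2^2/2\}\,dx$ is maximized at $m=0$. If a self-contained derivation is preferred, the standard route is the Pr\'ekopa--Leindler (equivalently Brunn--Minkowski) inequality applied to the log-concave integrand, using $A=\tfrac12(A+m)+\tfrac12(A-m)$ together with $A-m=-(A+m)$; this is the only genuinely nontrivial ingredient in that half.

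For the \emph{lower} bound the argument is elementary and explicit. First substitute $x=z+m$ to write $\Phi_p(A+m)=(2\pi)^{-p/2}\int_A\exp\{-\|z+m\|_2^2/2\}\,dz$, then complete the square via $\|z+m\|_2^2=\|z\|_2^2+2\langle z,m\rangle+\|m\|_2^2$ to factor out $\exp\{-\|m\|_2^2/2\}$. Next exploit $A=-A$: the change of variables $z\mapsto-z$ shows $\int_A\exp\{-\|z\|_2^2/2\}e^{-\langle z,m\rangle}\,dz=\int_A\exp\{-\|z\|_2^2/2\}e^{\langle z,m\rangle}\,dz$, so averaging the two representations yields $\int_A\exp\{-\|z\|_2^2/2\}\cosh(\langle z,m\rangle)\,dz$. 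Since $\cosh\ge1$ pointwise, this is at least $\int_A\exp\{-\|z\|_2^2/2\}\,dz=(2\pi)^{p/2}\Phi_p(A)$, which is exactly the claimed bound $\Phi_p(A+m)\ge\exp\{-\|m\|_2^2/2\}\Phi_p(A)$, with the sharp constant as in \citepsupp{Ball1993-vo}.

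Finally, the equality characterization tracks the single inequality that was used: equality holds iff $\cosh(\langle z,m\rangle)=1$, i.e. $\langle z,m\rangle=0$, for $\Phi_p$-almost every $z\in A$. When $m\ne0$ this forces $A$ to lie, up to a Gaussian-null set, inside the hyperplane $m^{\perp}$; since $A$ is convex and symmetric (so $0\in A$), its affine hull is then a proper linear subspace and $\Phi_p(A)=0$. Conversely, if $A$ is contained in a proper linear subspace then both $\Phi_p(A)$ and $\Phi_p(A+m)$ vanish and equality is immediate (as it is trivially when $m=0$); this is the sense in which equality corresponds to $A$ being a linear subspace up to Gaussian-null sets. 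I anticipate no serious obstacle: the bulk of the substance is the external citation of Anderson's inequality, and the only delicate bookkeeping is this last step, where one must note that a symmetric convex set on which a nonzero linear functional vanishes almost everywhere has Lebesgue (hence Gaussian) measure zero, using that a convex set of positive Lebesgue measure has nonempty interior.
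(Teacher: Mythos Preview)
Your proposal is correct. For the upper bound you and the paper both invoke Anderson's inequality directly. For the lower bound the paper's ``proof'' is a one-line pointer to a log-concavity argument from \citepsupp{Ball1993-vo}, whereas you give a fully explicit and more elementary derivation: translate, complete the square, symmetrize via $z\mapsto -z$ to produce the $\cosh$ factor, and use $\cosh\ge 1$. This route avoids any appeal to log-concavity or Pr\'ekopa--Leindler for the lower bound and makes the equality analysis transparent (it reduces to $\langle z,m\rangle=0$ a.e.\ on $A$), which the paper does not spell out at all. Your treatment of the equality case is also more careful than the paper's bare statement, since you note the trivial $m=0$ case and explain why a convex symmetric set on which a nonzero linear functional vanishes almost everywhere must have Gaussian measure zero.
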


\begin{proof}
This is a combination of Anderson’s inequality~\citep{anderson} for the upper bound and a quantitative refinement for the lower bound obtained via $\log$-concavity~\citep{Ball1993-vo}.
\end{proof}

\begin{lemma}[Prior concentration bound for Type I $\mathsf{SSG}$ likelihoods]\label{lemma:prior-inequality}
Let $(\beta, \tau^2) \sim \mathcal{NG}_p(\mu, \Sigma, a, b)$, $\beta_0 \in \mathbb{R}^p$, and $\tau_0^2 > 0$. Then for $c_1, c_2 \in \mathbb{R}^{+}$:
\begin{align}\label{eq:lemma-anderson-normal-gamma}
\begin{split}
&\pi(\lVert\beta - \beta_0\rVert_2 \leq c_1, |\tau^2 - \tau_0^2| \leq c_2)\\
&\geq C(a,b,\tau_0^2)c_2
\frac{2^{-\frac{p}{2}}}{\Gamma\left(\frac{p}{2}+1\right)}\left(\sqrt{\frac{t_{\min}}{\lambda_{\mathrm{max}}(\Sigma)}}c_1\right)^{p}
\exp\left\{-\frac{t_{\max}}{2}\left[\Delta^2 + \frac{c_1^2}{\lambda_{\mathrm{max}}(\Sigma)}\right]\right\},
\end{split}
\end{align}
where $t_{\max} = \tau_0^2 + c_2$, $t_{\min} = \max\{\tau_0^2 - c_2, 0\}$, $\Delta^{2} = \lVert \Sigma^{-\frac{1}{2}}(\beta_0-\mu)\rVert_{2}^{2}$, and $C(a,b,\tau^2_0) = \frac{b}{2\Gamma\left(\frac{a}{2}\right)}\left(\frac{b \tau_0^2}{2}\right)^{\frac{a}{2}-1}\exp\left\{-\frac{b\tau_0^2}{2}\right\}$.
\end{lemma}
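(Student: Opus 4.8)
The plan is to condition on $\tau^2$ and write the joint prior probability as the iterated integral
\[
\pi\!\left(\|\beta-\beta_0\|_2\le c_1,\ |\tau^2-\tau_0^2|\le c_2\right)=\int_{t_{\min}}^{t_{\max}}\left[\int_{\|\beta-\beta_0\|_2\le c_1}\mathcal{N}_p\!\left(\beta\mid\mu,\tfrac{\Sigma}{\tau^2}\right)d\beta\right]\mathcal{G}\!\left(\tau^2\mid\tfrac a2,\tfrac b2\right)d\tau^2,
\]
using the Normal--Gamma hierarchy in \eqref{eq:param-normal-gamma} and the fact that $[t_{\min},t_{\max}]=\{\tau^2>0:|\tau^2-\tau_0^2|\le c_2\}$ with $t_{\min},t_{\max}$ as defined. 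The strategy is to first lower-bound the inner Gaussian ball probability \emph{uniformly} over the slab $\tau^2\in[t_{\min},t_{\max}]$, and then separately control the remaining Gamma integral.

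For the inner integral I would fix $\tau^2$ and pass to whitened coordinates $u=\Sigma^{-1/2}(\beta-\mu)$, so that $u\mid\tau^2\sim\mathcal{N}_p(0,\tau^{-2}I_p)$. Since $\|\beta-\beta_0\|_2\le\sqrt{\lambda_{\max}(\Sigma)}\,\|\Sigma^{-1/2}(\beta-\beta_0)\|_2$, the event $\{\|\beta-\beta_0\|_2\le c_1\}$ contains $\{\|u-w\|_2\le\rho\}$ with $w:=\Sigma^{-1/2}(\beta_0-\mu)$, $\|w\|_2=\Delta$, and $\rho:=c_1/\sqrt{\lambda_{\max}(\Sigma)}$. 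Writing $u=Z/\tau$ for $Z\sim\mathcal{N}_p(0,I_p)$, this event becomes $Z\in B(0,\tau\rho)+\tau w$, and Lemma~\ref{lem:gaussian-translation} (the Anderson translation bound) gives $\Phi_p\!\left(B(0,\tau\rho)+\tau w\right)\ge e^{-\tau^2\Delta^2/2}\,\Phi_p\!\left(B(0,\tau\rho)\right)$. Bounding the centered ball probability below by volume times the minimal density, $\Phi_p(B(0,r))\ge (2\pi)^{-p/2}e^{-r^2/2}\,\pi^{p/2}r^p/\Gamma(\tfrac p2+1)=r^p\,2^{-p/2}e^{-r^2/2}/\Gamma(\tfrac p2+1)$, and taking $r=\tau\rho$, yields the bound (uniform in $\tau^2$)
\[
\int_{\|\beta-\beta_0\|_2\le c_1}\mathcal{N}_p\!\left(\beta\mid\mu,\tfrac{\Sigma}{\tau^2}\right)d\beta\ \ge\ \frac{(\tau\rho)^p}{2^{p/2}\Gamma(\tfrac p2+1)}\exp\!\left\{-\frac{\tau^2}{2}\left(\Delta^2+\rho^2\right)\right\}.
\]

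Substituting this into the iterated integral, I would pull out of the $\tau^2$-integral the factor $(\tau^2)^{p/2}\ge t_{\min}^{p/2}$ and the factor $\exp\{-\tfrac{\tau^2}{2}(\Delta^2+\rho^2)\}\ge\exp\{-\tfrac{t_{\max}}{2}(\Delta^2+\rho^2)\}$, each valid pointwise on $[t_{\min},t_{\max}]$ by monotonicity, leaving $\int_{t_{\min}}^{t_{\max}}\mathcal{G}(\tau^2\mid\tfrac a2,\tfrac b2)\,d\tau^2$. The final step is a Gamma-density concentration estimate: because the Gamma density is monotone on each side of its mode (and decreasing on all of $(0,\infty)$ when $a<2$), the superlevel set $\{\tau^2:\mathcal{G}(\tau^2)\ge\mathcal{G}(\tau_0^2)\}$ contains a subinterval of $[\tau_0^2-c_2,\tau_0^2+c_2]$ of length $c_2$, so that $\int_{t_{\min}}^{t_{\max}}\mathcal{G}(\tau^2)\,d\tau^2\ge c_2\,\mathcal{G}(\tau_0^2)=c_2\,C(a,b,\tau_0^2)$, where I have used the identity $\mathcal{G}(\tau_0^2\mid\tfrac a2,\tfrac b2)=\tfrac{b}{2\Gamma(a/2)}(\tfrac{b\tau_0^2}{2})^{a/2-1}e^{-b\tau_0^2/2}=C(a,b,\tau_0^2)$. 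Recalling $\rho^p t_{\min}^{p/2}=(c_1\sqrt{t_{\min}/\lambda_{\max}(\Sigma)})^p$ and $\rho^2=c_1^2/\lambda_{\max}(\Sigma)$ then reassembles exactly \eqref{eq:lemma-anderson-normal-gamma}.

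The main obstacle is the coupling introduced by the $\tau^2$-dependent covariance $\Sigma/\tau^2$: the inner bound above is not monotone in $\tau^2$, so one must separate it into a genuinely increasing part (the polynomial factor, minorized at $t_{\min}$) and a genuinely decreasing part (the exponential factor, minorized at $t_{\max}$) and argue that the product of the one-sided bounds is itself a valid lower bound on the slab. The only other delicate point is the terminal Gamma-mass estimate, where one needs $c_2$ small enough relative to $\tau_0^2$ and to the distance from $\tau_0^2$ to the mode of $\mathcal{G}$ for the length-$c_2$ superlevel subinterval to fit inside $[\tau_0^2-c_2,\tau_0^2+c_2]$; this is automatic in the asymptotic regime $c_2=\varepsilon_n\downarrow 0$ in which the lemma is subsequently applied.
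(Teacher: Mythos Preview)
Your proposal is correct and follows essentially the same route as the paper: condition on $\tau^2$, apply the Anderson translation bound (Lemma~\ref{lem:gaussian-translation}) to recenter the Gaussian ball, lower-bound the centered ball probability by volume-times-minimal-density (equivalently the paper's lower-incomplete-Gamma estimate), and then minorize the resulting $\tau^2$-dependent bound by pushing the polynomial factor to $t_{\min}$ and the exponential factor to $t_{\max}$. The only cosmetic difference is the final Gamma-mass step, where the paper invokes the mean-value theorem on $[t_{\min},t_{\max}]$ followed by a small-$c_2$ approximation $\min_{[t_{\min},t_{\max}]}\mathcal G\approx\mathcal G(\tau_0^2)$, while you use a unimodality/superlevel-set argument; both carry exactly the small-$c_2$ caveat you already flag, and both yield $c_2\,C(a,b,\tau_0^2)$.
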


\begin{proof}
We have:
\begin{align}\label{eq:proof-lemma-anderson-ng-1}
    \begin{split}
        &\pi\left(\lVert \beta-\beta_0\rVert_2 \leq c_1, |\tau^2-\tau^2_0| \leq c_2\right)\\
        &\geq \inf_{t\in [t_{\min}, t_{\max}]}\pi\left(\lVert \beta-\beta_0\rVert_{2}\leq c_1 \Big|\tau^{2}=t\right)
        \pi\left(|\tau^2-\tau_0^2| \leq c_2\right).
    \end{split}
\end{align}
We now bound each term on the right hand side of \eqref{eq:proof-lemma-anderson-ng-1}. First, using Lemma \ref{lem:gaussian-translation} with $A := \{\beta: \lVert \beta-\mu\rVert_2 \leq c_1\}$ and $m := \beta_0-\mu$, for any $t > 0$ we get:
\begin{align}\label{eq:Ball-application}
    \pi\left(\lVert \beta-\beta_0\rVert_2 \leq c_1\mid \tau^2=t\right) \geq \exp\left\{-\frac{t}{2}\Delta^2\right\}\pi\left(\lVert \beta-\mu\rVert_2\leq c_1\mid \tau^2=t\right).
\end{align}
In \eqref{eq:Ball-application} note that:
\begin{align}\label{eq:Ball-application-1}
\begin{split}
    \pi\left(\lVert \beta-\mu\rVert_2\leq c_1\mid \tau^2=t\right) &\geq \pi\left(\lVert Z\rVert_2 \leq \sqrt{\frac{t}{\lambda_{\mathrm{max}}(\Sigma)}} c_1\right)
    = \frac{\widehat{\Gamma}\left(\frac{p}{2}, \frac{t c_1^2}{2\lambda_{\mathrm{max}}(\Sigma)}\right)}{\Gamma\left(\frac{p}{2}\right)}\\
    &\geq \frac{2^{-\frac{p}{2}}}{\Gamma\left(\frac{p}{2}+1\right)} \left(\sqrt{\frac{t}{\lambda_{\mathrm{max}}(\Sigma)}}c_1\right)^{p}\exp\left\{-\frac{1}{2}\frac{t c_1^2}{\lambda_{\mathrm{max}}(\Sigma)}\right\},
\end{split}
\end{align}
where $Z = t^{\frac{1}{2}}\Sigma^{-\frac{1}{2}}(\beta-\mu) \sim \mathcal{N}_p(0, I_p)$ and $\widehat{\Gamma}(a,s) = \int_{0}^{s} e^{-x}x^{a-1}dx$ is the lower incomplete gamma function. Combining \eqref{eq:Ball-application} and \eqref{eq:Ball-application-1}, we finally obtain:
\begin{align}\label{eq:Ball-application-2}
    \begin{split}
        &\inf_{t \in [t_{\min}, t_{\max}]}\pi\left(\lVert \beta-\beta_0\rVert_2\leq c_1\mid \tau^2=t\right) \\
        &\geq \frac{2^{-\frac{p}{2}}}{\Gamma\left(\frac{p}{2}+1\right)}\left(\sqrt{\frac{t_{\min}}{\lambda_{\mathrm{max}}(\Sigma)}}c_1\right)^{p}\exp\left\{-\frac{t_{\max}}{2}\left[\Delta^2 + \frac{c_1^2}{\lambda_{\mathrm{max}}(\Sigma)}\right]\right\}.
    \end{split}
\end{align}

\textbf{Note:} Taking $t = t_{\max} = t_{\min} = 1$ in \eqref{eq:Ball-application-2}, yields an upper bound of $\pi\left(\lVert \beta-\beta_0\rVert_2\leq c_1\right)$ for $\beta \sim \mathcal{N}_p(\mu, \Sigma)$, which is used for prior concentration bound under Type II $\mathsf{SSG}$ likelihoods.

Now, we provide a bound for $\pi\left(|\tau^2-\tau_0^2| \leq c_2\right)$. Observe that:
\begin{align}\label{eq:proof-lemma-anderson-ng-2}
\begin{split}
\pi\left(|\tau^2-\tau_0^2| \leq c_2\right) 
&= \mathbb{P}(t_{\min} < \tau^2 < t_{\min}), \quad \tau^2\sim \mathcal{G}\left(\frac{a}{2}, \frac{b}{2}\right).
\end{split}
\end{align}
By mean-value theorem, for some $t^{\star} \in [t_{\min}, t_{\max}]$:
\begin{align}\label{eq:integrand-bound}
    \mathbb{P}(t_{\min} < \tau^2 < t_{\max}) &\geq \frac{b}{2\Gamma\left(\frac{a}{2}\right)} (t_{\max} - t_{\min}) \left(\frac{b t^{\star}}{2}\right)^{\frac{a}{2}-1}\exp\left\{-\frac{b t^{\star}}{2}\right\}.
\end{align}
Hence, a lower bound follows from bounding below the integrand above in \eqref{eq:integrand-bound} over $[t_{\min}, t_{\max}]$:
\begin{align}
    \mathbb{P}(t_{\min} < \tau^2 < t_{\max}) \geq \frac{b}{2\Gamma\left(\frac{a}{2}\right)}(t_{\max} - t_{\min}) \min_{t\in [t_{\min}, t_{\max}]}\left\{\left(\frac{bt}{2}\right)^{\frac{a}{2}-1}\exp\left\{-\frac{bt}{2}\right\}\right\}.
\end{align}
We have:
\begin{align}
    t_{\max} - t_{\min} := \begin{cases}
        2 c_2, & \text{if }\tau_0^2 \geq c_2\\
        \tau_0^2 + c_2, & \text{otherwise},
    \end{cases}
\end{align}
where $c_2\in \mathbb{R}^{+}$ and since the Gamma probability density function is unimodal at $t^{\dagger}:= \frac{a-2}{b}$, for $a>2$:
\begin{align}
    \mathbb{P}(t_{\min} < \tau^2 < t_{\max}) \geq \frac{b \Delta_t}{2\Gamma\left(\frac{a}{2}\right)}\min_{s\in [t_{\min}, t_{\max}]}\left\{\left(\frac{bs}{2}\right)^{\frac{a}{2}-1}\exp\left\{-\frac{bs}{2}\right\}\right\},\quad \Delta_t:= t_{\max} - t_{\min}.
\end{align}
For small $c_2$:
\begin{align}
    \Delta_t \asymp 2c_2,
\end{align}
and since $t_{\min}, t_{\max} \approx \tau_0^2$:
\begin{align}\label{eq:final-integral-bound}
    \mathbb{P}(t_{\min} < \tau^2 < t_{\max}) \gtrsim C(a, b, \tau_0)c_2,
\end{align}
where $C(a, b, \tau_0):= \frac{b}{2\Gamma\left(\frac{a}{2}\right)}\left(\frac{b \tau_0^2}{2}\right)^{\frac{a}{2}-1}\exp\left\{-\frac{b\tau_0^2}{2}\right\}$.
Using \eqref{eq:Ball-application-2} and \eqref{eq:final-integral-bound} in \eqref{eq:proof-lemma-anderson-ng-1}, we prove the assertion in Lemma \ref{lemma:prior-inequality}.
\end{proof}

\begin{lemma}[Majorization of the variational risk under $\alpha$-R\'{e}nyi divergence]\label{lemma:majorization-Renyi}
Fix $\varepsilon \in (0, 1)$ and let $D>1$ be an arbitrary constant. The variational risk under $\alpha$-R\'{e}nyi divergence for any $\mathsf{SSG}$ likelihood satisfies the following bound with $\mathbb{P}_{\theta_0}$-probability at least $1-\varepsilon - [(D-1)^2n\varepsilon^2]$:
\begin{align}\label{eq:lemma-majorization-renyi-1}
        n(1-\alpha)\int_{\theta\in \Theta}D_{\alpha}(\theta, \theta_{0})\pi_{\alpha}(\theta \mid \mathcal{D}_n, \xi^{\star})d\theta
        \leq Dn\alpha\varepsilon^2 - \log \pi(\mathcal{B}_n(\theta_0, \varepsilon)) + \log\left(\frac{1}{\varepsilon}\right),
\end{align}
where for arbitrary $\tilde{\xi} \in \mathbb{R}_{+}^{n}$:
\begin{align}\label{eq:lemma-majorization-renyi-2}
    \begin{split}
        \mathcal{B}_{n}(\theta_0, \varepsilon) &:= \left\{\tilde{D}\left(p(.\mid \mathbf{X}, \theta_{0}) \parallel \varphi(.\mid \mathbf{X}, \theta, \tilde{\xi})\right)\leq n\varepsilon^2,
        \right.\\&\left.\qquad 
        {V}\left(p(.\mid \mathbf{X}, \theta_0) \parallel \varphi(.\mid \mathbf{X}, \theta, \tilde{\xi})\right)\leq n\varepsilon^2\right\},
    \end{split}
\end{align}
with $\tilde{D}(f\parallel g) := \int f|\log(f/g)|$ and ${V}(f\parallel g) := \int f\log^{2}(f/g) - \tilde{D}^{2}(f\parallel g)$, for positive functions $f$ and $g$ respectively.
\end{lemma}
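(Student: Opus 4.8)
The plan is to combine (i) a Markov/Donsker--Varadhan step that converts the integrated R\'enyi risk into a data-dependent quantity involving $\mathrm{KL}(q^{\star}\parallel\pi)$ and a log-likelihood ratio, (ii) the minorization $\varphi\le p$ together with the joint optimality of the ELBO, used to replace that quantity by the prior mass $\pi(\mathcal B_n(\theta_0,\varepsilon))$ plus a residual log-ratio term, and (iii) a Chebyshev bound on the residual exploiting the $\tilde D$- and $V$-constraints built into $\mathcal B_n(\theta_0,\varepsilon)$. Throughout, $q^{\star}:=\pi_{\alpha}(\cdot\mid\mathcal D_n,\xi^{\star})$ denotes the joint ELBO maximizer (Lemma~\ref{lemma-variational-optimizer}), which is absolutely continuous with respect to $\pi$.

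First I would observe that, by the definition of $D_{\alpha}$ in~\eqref{eq:alpha-renyi-divergence}, for every $\theta$ one has $\mathbb E_{y\sim p(\cdot\mid\mathbf X,\theta_0)}\big[(p(y\mid\mathbf X,\theta)/p(y\mid\mathbf X,\theta_0))^{\alpha}\,e^{n(1-\alpha)D_{\alpha}(\theta,\theta_0)}\big]=1$. Integrating over $\theta$ against the prior $\pi$ (Tonelli) shows that the nonnegative statistic $W:=\int (p(y\mid\mathbf X,\theta)/p(y\mid\mathbf X,\theta_0))^{\alpha}e^{n(1-\alpha)D_{\alpha}(\theta,\theta_0)}\pi(d\theta)$ has $\mathbb E_{\theta_0}W=1$, so Markov's inequality gives $W\le 1/\varepsilon$ with $\mathbb P_{\theta_0}$-probability at least $1-\varepsilon$. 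On that event, applying Lemma~\ref{lemma-variational-inequality} with base measure $\pi$ and test function $h(\theta)=\alpha\log\frac{p(y\mid\mathbf X,\theta)}{p(y\mid\mathbf X,\theta_0)}+n(1-\alpha)D_{\alpha}(\theta,\theta_0)$, and evaluating the supremum at $q^{\star}$, yields
\begin{equation*}
n(1-\alpha)\!\int\! D_{\alpha}(\theta,\theta_0)\,q^{\star}(d\theta)\;\le\;\log\tfrac1\varepsilon+\mathrm{KL}(q^{\star}\parallel\pi)-\alpha\!\int\!\log\tfrac{p(y\mid\mathbf X,\theta)}{p(y\mid\mathbf X,\theta_0)}\,q^{\star}(d\theta).
\end{equation*}

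Next, using $\varphi_{\alpha}(y\mid\mathbf X,\theta,\xi^{\star})\le p(y\mid\mathbf X,\theta)^{\alpha}$ (Proposition~\ref{lemma:tangent-lower-bound} raised to the power $\alpha>0$), the two right-hand terms above are bounded above by $-\mathcal L(q^{\star},\xi^{\star})+\alpha\log p(y\mid\mathbf X,\theta_0)$ with $\mathcal L$ as in~\eqref{eq:ELBO-original}. Since $(q^{\star},\xi^{\star})$ maximizes $\mathcal L$ jointly over $\mathcal P_{\Theta}\times\mathbb R^{n}_{+,0}$, comparing against the renormalized restricted prior $\rho_n:=\pi|_{\mathcal B_n(\theta_0,\varepsilon)}/\pi(\mathcal B_n(\theta_0,\varepsilon))$ and the (fixed, arbitrary) $\tilde\xi$ defining $\mathcal B_n(\theta_0,\varepsilon)$ gives $-\mathcal L(q^{\star},\xi^{\star})\le-\mathcal L(\rho_n,\tilde\xi)=-\log\pi(\mathcal B_n(\theta_0,\varepsilon))-\int\log\varphi_{\alpha}(y\mid\mathbf X,\theta,\tilde\xi)\,\rho_n(d\theta)$, where $\mathrm{KL}(\rho_n\parallel\pi)=-\log\pi(\mathcal B_n(\theta_0,\varepsilon))$. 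Using $\varphi_{\alpha}(\cdot,\theta,\tilde\xi)=\varphi(\cdot,\theta,\tilde\xi)^{\alpha}$ and collecting terms, the bound becomes
\begin{equation*}
n(1-\alpha)\!\int\! D_{\alpha}(\theta,\theta_0)\,q^{\star}(d\theta)\;\le\;\log\tfrac1\varepsilon-\log\pi(\mathcal B_n(\theta_0,\varepsilon))+\alpha\!\int\!\log\tfrac{p(y\mid\mathbf X,\theta_0)}{\varphi(y\mid\mathbf X,\theta,\tilde\xi)}\,\rho_n(d\theta).
\end{equation*}

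Finally I would control the last term. Setting $L_\theta:=\log\frac{p(y\mid\mathbf X,\theta_0)}{\varphi(y\mid\mathbf X,\theta,\tilde\xi)}$ and $U:=\int|L_\theta|\,\rho_n(d\theta)$, and noting that $\rho_n$ is supported on $\mathcal B_n(\theta_0,\varepsilon)$, the defining inequalities of that ball give $\mathbb E_{\theta_0}U=\int\tilde D(p(\cdot\mid\mathbf X,\theta_0)\parallel\varphi(\cdot\mid\mathbf X,\theta,\tilde\xi))\,\rho_n(d\theta)\le n\varepsilon^2$ and, by Jensen's inequality applied to the $\rho_n$-average, $\mathrm{Var}_{\theta_0}(U)\le\int V(p(\cdot\mid\mathbf X,\theta_0)\parallel\varphi(\cdot\mid\mathbf X,\theta,\tilde\xi))\,\rho_n(d\theta)\le n\varepsilon^2$. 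Chebyshev's inequality with threshold $(D-1)n\varepsilon^2$ then gives $\int L_\theta\,\rho_n\le U\le Dn\varepsilon^2$ with $\mathbb P_{\theta_0}$-probability at least $1-[(D-1)^2n\varepsilon^2]^{-1}$; intersecting with the Markov event and a union bound yields the stated probability $1-\varepsilon-[(D-1)^2n\varepsilon^2]^{-1}$, and substituting $\alpha\int L_\theta\rho_n\le Dn\alpha\varepsilon^2$ into the previous display gives exactly~\eqref{eq:lemma-majorization-renyi-1}. The main obstacle is the bookkeeping needed to reconcile the R\'enyi divergence---defined through the \emph{true} likelihood $p$---with the variational machinery built on the minorizer $\varphi$: one must invoke $\varphi\le p$ only in the single direction where it helps, and exploit the \emph{joint} maximization in $(q,\xi)$ to decouple the data-dependent optimum $\xi^{\star}$ from the arbitrary $\tilde\xi$ entering $\mathcal B_n(\theta_0,\varepsilon)$; the Donsker--Varadhan and Chebyshev estimates themselves are then routine.
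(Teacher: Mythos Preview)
Your proposal is correct and follows essentially the same route as the paper: Donsker--Varadhan combined with Markov to convert the integrated R\'enyi risk into a KL-plus-log-ratio bound, the minorization $\varphi\le p$ together with the joint ELBO optimality of $(q^{\star},\xi^{\star})$ to pass from the data-dependent optimum to the restricted prior $\rho_n$ at the arbitrary $\tilde\xi$, and finally Chebyshev via the $\tilde D$/$V$ constraints defining $\mathcal B_n(\theta_0,\varepsilon)$. The only cosmetic difference is ordering: the paper invokes $\varphi\le p$ \emph{before} Donsker--Varadhan (so the exponential inequality is written directly in terms of $\varphi$), whereas you run Donsker--Varadhan on the true-likelihood quantity first and insert $\varphi\le p$ afterwards; your choice to work with $U=\int|L_\theta|\,\rho_n$ rather than the signed integral also makes the variance bound $\mathrm{Var}_{\theta_0}(U)\le\int V\,\rho_n$ line up exactly with the definition $V=\mathbb E L_\theta^2-\tilde D^2$.
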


\begin{proof}
From the definition of $\alpha$-R\'{e}nyi divergence in~\eqref{eq:alpha-renyi-divergence} of the main manuscript and using the fact that $\varphi(y\mid \mathbf{X}, \theta, \xi)$ lower bounds $p(y\mid \mathbf{X}, \theta)$ along with $\theta_0$ being the true value of the parameter $\theta$, we get:
\begin{align}\label{eq:proof-lemma-majorization-renyi-1}
    \begin{split}
        \mathbb{E}_{\theta_0}\left[\exp\left\{\alpha\log\frac{\varphi(y\mid \mathbf{X},\theta, \xi)}{p(y\mid \mathbf{X}, \theta_0)}\right\}\right] &\leq \mathbb{E}_{\theta_0}\left[\exp\left\{\alpha\log\frac{p(y\mid \mathbf{X}, \theta)}{p(y\mid \mathbf{X}, \theta_0)}\right\}\right]\\
        &= \exp\left\{-n(1-\alpha)D_{\alpha}(\theta, \theta_0)\right\},
    \end{split}
\end{align}
where $\mathbb{E}_{\theta_0}$ is the expectation under $p(y\mid \mathbf{X}, \theta_0)$. Thus, for any $\varepsilon\in (0,1)$:
\begin{align}\label{eq:proof-lemma-majorization-renyi-2}
    \begin{split}
        \mathbb{E}_{\theta_0}\left[\exp\left\{\alpha\log\frac{\varphi(y\mid \mathbf{X}, \theta, \xi)}{p(y\mid \mathbf{X}, \theta_0)} + n(1-\alpha)D_{\alpha}(\theta, \theta_0) - \log\left(\frac{1}{\varepsilon}\right)\right\}\right] \leq \varepsilon.
    \end{split}
\end{align}
Integrating both sides of \eqref{eq:proof-lemma-majorization-renyi-2} above with respect to the prior $\pi(\theta)$ and a consequent application of Fubini's theorem yields:
\begin{align}\label{eq:proof-lemma-majorization-renyi-3}
    \mathbb{E}_{\theta_0}\left[\int \exp\left\{\alpha \log\frac{\varphi(y\mid \mathbf{X}, \theta, \xi)}{p(y\mid \mathbf{X}, \theta_0)} + n(1-\alpha)D_{\alpha}(\theta, \theta_0) - \log\left(\frac{1}{\varepsilon}\right)\right\}\pi(\theta)d\theta\right] \leq \varepsilon.
\end{align}
Using Donsker and Varadhan's variational inequality in Lemma \ref{lemma-variational-inequality} above, we have:
\begin{align}\label{eq:proof-lemma-majorization-renyi-4}
    \begin{split}
        &\mathbb{E}_{\theta_0}\Big[\sup_{q\ll \pi}\Big(\int_{\theta\in \Theta}\left\{\alpha\log\frac{\varphi(y\mid\mathbf{X}, \theta, \xi)}{p(y\mid \mathbf{X}, \theta_0)} + n(1-\alpha)D_{\alpha}(\theta, \theta_0) - \log\left(\frac{1}{\varepsilon}\right)\right\}q(\theta)d\theta\\
        &\qquad\qquad\qquad\qquad\qquad\qquad\qquad\qquad\qquad\qquad\qquad\qquad\qquad-\mathrm{KL}(q\parallel \pi)\Big)\Big]\leq \varepsilon,
    \end{split}
\end{align}
where $\pi$ represents the prior distribution over the parameters in $\theta$. Choosing $\rho$ as the optimal variational solution, i.e., $\rho = q^\star \equiv \pi_{\alpha}(\theta \mid \mathcal{D}_n, \xi^\star)$ and setting $\xi = \xi^\star$, we obtain:
\begin{align}\label{eq:proof-lemma-majorization-renyi-5}
    \begin{split}
        &\mathbb{E}_{\theta_{0}}\Big[\int_{\theta\in \Theta }\left\{\alpha\log\frac{\varphi(y\mid \mathbf{X}, \theta, \xi^\star)}{p(y\mid \mathbf{X}, \theta_{0})} + n(1-\alpha)D_{\alpha}(\theta, \theta_{0}) - \log\left(\frac{1}{\varepsilon}\right)\right\}q^{\star}(\theta)d\theta\\
        &\qquad\qquad\qquad\qquad\qquad\qquad\qquad\qquad\qquad\qquad\qquad\qquad\qquad- \mathrm{KL}(q^\star\parallel \pi)\Big] \leq \varepsilon.
    \end{split}
\end{align}
With the application of Markov's inequality followed by Lemma \ref{lemma-variational-optimizer}, we further obtain with $\mathbb{P}_{\theta_{0}}$-probability at least $(1-\varepsilon)$:
\begin{align}\label{eq:proof-lemma-majorization-renyi-6}
    \begin{split}
        &n(1-\alpha)\int_{\theta\in \Theta}D_{\alpha}(\theta, \theta_{0})q^{\star}(\theta)d\theta\\
        &\leq -\alpha \int_{\theta\in \Theta}\log \frac{\varphi(y\mid \mathbf{X}, \theta, \xi^{\star})}{p(y\mid \mathbf{X}, \theta_{0})}q^{\star}(\theta)d\theta + \mathrm{KL}(q^{\star}\parallel \pi)
        - \log\left(\varepsilon\right)\\
        &= \inf_{q, \xi}\left\{-\alpha\int_{\theta\in \Theta }\log\frac{\varphi(y\mid \mathbf{X}, \theta, \xi)}{p(y\mid \mathbf{X}, \theta_{0})}q(\theta)d\theta + \mathrm{KL}(q\parallel \pi)\right\}
        - \log\left(\varepsilon\right).
    \end{split}
\end{align}
Finally, defining $\mathcal{I}(\tilde{q},\tilde{\xi}) = -\int_{\theta\in \Theta }\log\frac{\varphi(y\mid \mathbf{X}, \theta, \tilde{\xi)}}{p(y\mid \mathbf{X}, \theta_{0})}\tilde{q}(\theta)d\theta$, we conclude that the variational risk under $\alpha$-R\'{e}nyi divergence satisfies the following bound for any choices of $\tilde{q}$ and $\tilde{\xi}$:
\begin{align}\label{eq:proof-lemma-majorization-renyi-7}
n(1-\alpha)\int_{\theta\in \Theta}D_{\alpha}(\theta, \theta_{0})q^{\star}(\theta)d\theta
\leq \alpha \mathcal{I}(\tilde{q}, \tilde{\xi})+ \mathrm{KL}(\tilde{q} \parallel \pi) - \log\left(\varepsilon\right).
\end{align}
%
We now choose $\tilde{q}$ as the following (with $\tilde{\xi}$ being arbitrary but fixed):
%
\begin{align}\label{eq:proof-lemma-majorization-renyi-8}
    \begin{split}
        \tilde{q}(\theta) := \frac{\pi(\theta)}{\pi\left(\mathcal{B}_n(\theta_0, \varepsilon)\right)}\mathds{1}_{\mathcal{B}_n(\theta_0, \varepsilon)}(\theta),\quad \text{for all}\hspace{1mm}\theta\in \Theta.
    \end{split}
\end{align}
The choice of $\tilde{q}$ above in \eqref{eq:proof-lemma-majorization-renyi-8} is essentially the restriction of the prior $\pi$ into the KL neighborhood $\mathcal{B}_n(\theta_0, \varepsilon)$ around $\theta_0$ with radius $\varepsilon$, which is defined as in \eqref{eq:lemma-majorization-renyi-2}. 
Substituting $\tilde{q}(\beta)$ in \eqref{eq:proof-lemma-majorization-renyi-7} makes the second term the negative $\log$-prior mass:
\begin{align}\label{eq:proof-lemma-majorization-renyi-9}
    \mathrm{KL}(\tilde{q} \parallel \pi) = -\log(\pi(\mathcal{B}_n(\theta_0, \varepsilon))).
\end{align}
%

We now obtain a high-probability upper bound for $\mathcal{I}(\tilde{q}, \tilde{\xi})$ in \eqref{eq:proof-lemma-majorization-renyi-7}. Using Fubini's theorem, followed by the definition of $\mathcal{B}_n(\theta_0, \varepsilon)$ in \eqref{eq:lemma-majorization-renyi-2} above, we get:
\begin{align}\label{eq:proof-lemma-majorization-renyi-10}
\begin{split}
    \mathbb{E}_{\theta_0}\left[\int_{\theta\in \Theta}\tilde{q}(\theta)\log\frac{\varphi(y\mid \mathbf{X}, \theta, \tilde{\xi})}{p(y\mid \mathbf{X}, \theta_0)}d\theta\right] &= 
    \int_{\theta\in \Theta} \mathbb{E}_{\theta_{0}}\left[\log\frac{\varphi(y \mid \mathbf{X}, \theta, \tilde{\xi})}{p(y\mid \mathbf{X}, \theta_{0})}\right]\tilde{q}(\theta)d\theta \\
    &\leq \int_{\mathcal{B}_{n}(\theta_{0}, \varepsilon)}\tilde{D}\left(p(.\mid \mathbf{X},\theta_0)\parallel \varphi(.\mid \mathbf{X}, \theta, \tilde{\xi})\right)\tilde{q}(\theta)d\theta\\
    &\leq n\varepsilon^2.
\end{split}
\end{align}
Now, using Cauchy-Schwarz inequality, we bound the second moment as:
\begin{align}\label{eq:proof-lemma-majorization-renyi-11}
    \begin{split}
        &\text{Var}_{\theta_{0}}\left[\int_{\theta\in \Theta} \tilde{q}(\theta)\log\frac{\varphi(y\mid \mathbf{X}, \theta, \tilde{\xi})}{p(y\mid \mathbf{X}, \theta_0)}d\theta\right]\\
        &\leq \int_{\mathcal{B}_{n}(\theta_{0}, \varepsilon)}V\left(p(.\mid \mathbf{X}, \theta_{0})\parallel \varphi(.\mid \mathbf{X}, \theta, \tilde{\xi})\right)\tilde{q}(\theta)d\theta \\&\leq n\varepsilon^2.
    \end{split}
\end{align}
For some arbitrary constant $D>1$ and using \eqref{eq:proof-lemma-majorization-renyi-10} and \eqref{eq:proof-lemma-majorization-renyi-11}, along with the application of Chebyshev's inequality, we have:
\begin{align}\label{eq:proof-lemma-majorization-renyi-12}
    \begin{split}
        &\mathbb{P}_{\theta_{0}}\left\{\int_{\theta\in \Theta} \tilde{q}(\theta)\log\frac{\varphi(y\mid \mathbf{X}, \theta, \tilde{\xi})}{p(y\mid \mathbf{X}, \theta_{0})}d\theta \leq -Dn\varepsilon^2\right\}\\
        &\leq \mathbb{P}_{\theta_{0}}\left\{\int_{\theta\in \Theta} \tilde{q}(\theta)\log\frac{\varphi(y \mid \mathbf{X}, \theta, \tilde{\xi})}{p(y \mid \mathbf{X}, \theta_{0})}d\theta\right.\\
        &\left.\qquad \qquad - \mathbb{E}_{\theta_{0}}\left[\int_{\theta\in \Theta} \tilde{q}(\theta)\log\frac{\varphi(y \mid \mathbf{X}, \theta, \tilde{\xi})}{p(y\mid \mathbf{X}, \theta_{0})}d\theta\right] \leq -(D-1)n\varepsilon^2\right\}\\
        &\leq \frac{\text{Var}_{\theta_{0}} \left[\int_{\theta\in \Theta} \tilde{q}(\theta)\log\frac{\varphi(y\mid \mathbf{X}, \theta, \tilde{\xi})}{p(y\mid \mathbf{X}, \theta_{0})}d\theta\right]}{(D-1)^{2}n^2\varepsilon^4}\\
        &\leq \frac{1}{(D-1)^{2}n\varepsilon^2}.
    \end{split}
\end{align}
From \eqref{eq:proof-lemma-majorization-renyi-12}, with $\mathbb{P}_{\theta_0}$-probability at least $1 - [(D-1)^2n\varepsilon^2]^{-1}$, $\mathcal{I}(\tilde{q} , \tilde{\xi})$ in \eqref{eq:proof-lemma-majorization-renyi-7} 
satisfies the following inequality:
\begin{align}\label{eq:proof-lemma-majorization-renyi-13}
    \mathcal{I}(\tilde{q} , \tilde{\xi}) = -\int_{\theta\in \Theta}\tilde{q}(\theta)\log\frac{\varphi(y\mid \mathbf{X}, \theta, \tilde{\xi})}{p(y\mid \mathbf{X}, \theta_0)}d\theta \leq Dn\varepsilon^2.
\end{align}
Using \eqref{eq:proof-lemma-majorization-renyi-9} and \eqref{eq:proof-lemma-majorization-renyi-13} in \eqref{eq:proof-lemma-majorization-renyi-7}, we obtain the desired bound.
\end{proof}

\begin{lemma}[Statistical identifiability;~\cite{ghosal2007convergence}]
\label{lemma:hell-risk-assumption}
For some $\varepsilon_n > 0$, any $\varepsilon \geq \varepsilon_n$, and $\Pi$ being the prior measure, there exists a sieve set $\mathcal{F}_{n, \varepsilon} \subset \Theta$, where $\Theta$ is compact, and a test function $\phi_{n, \varepsilon}:\mathbb{R}^{n} \to [0, 1]$ such that, for some $c\in \mathbb{R}^{+}$:
\begin{gather}\label{eq:hell-risk-assumption-1}
\Pi(\mathcal{F}_{n, \varepsilon}^{c}) \leq \exp\left\{-cn\varepsilon^2\right\},\\
\label{eq:hell-risk-assumption-2}
\mathbb{E}_{\theta_0}\left[\phi_{n, \varepsilon}\right] \leq \exp\left\{-cn \varepsilon_n^2\right\},\\
\label{eq:hell-risk-assumption-3}
\mathbb{E}_{\theta}\left[1 - \phi_{n, \varepsilon}\right] \leq \exp\left\{-c n \mathcal{H}^{2}(\theta \parallel \theta_0)\right\},
\end{gather}
for all $\theta\in \mathcal{F}_{n, \varepsilon}$ satisfying $\mathcal{H}^{2}(\theta\parallel\theta_0) \geq \varepsilon^2$, where $\mathcal{H}^{2}(\theta\parallel\theta_0)$ is the squared Hellinger distance between $p(y\mid \mathbf{X}, \theta)$ and $p(y\mid \mathbf{X}, \theta_0)$ as in~\eqref{eq:Hellinger-divergence} of the main manuscript.
\end{lemma}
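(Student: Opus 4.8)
The plan is to derive this lemma from the classical theory of exponentially consistent tests~\citepsupp{ghosal2007convergence}, specialized to the present setting in which $\Theta\subset\mathbb{R}^{p}$ is compact. First I would dispose of~\eqref{eq:hell-risk-assumption-1} by taking the sieve to be the whole parameter space, $\mathcal{F}_{n,\varepsilon}:=\Theta$; since $\Pi$ is a probability measure, $\Pi(\mathcal{F}_{n,\varepsilon}^{c})=0\leq\exp\{-cn\varepsilon^{2}\}$ for every $c>0$. All the work then goes into constructing the test $\phi_{n,\varepsilon}$ satisfying~\eqref{eq:hell-risk-assumption-2} and~\eqref{eq:hell-risk-assumption-3}.

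The construction proceeds in two stages. In the first stage I would build \emph{local} tests: for a fixed $\theta_{1}\in\Theta$ with $\mathcal{H}(\theta_{1}\parallel\theta_{0})\geq\varepsilon$, the product structure of the likelihood over $i\in[n]$ yields the Hellinger-affinity bound $\int\prod_{i\in[n]}\sqrt{p(y_{i}\mid\mathbf{x}_{i},\theta)\,p(y_{i}\mid\mathbf{x}_{i},\theta')}\,dy\leq\exp\{-\tfrac{n}{2}\mathcal{H}^{2}(\theta\parallel\theta')\}$ for any $\theta,\theta'$, and the Le Cam--Birg\'e minimax testing argument then furnishes a test $\phi_{\theta_{1}}:\mathbb{R}^{n}\to[0,1]$ whose Type~I error at $\theta_{0}$ and Type~II error uniformly over the Hellinger ball of radius $\mathcal{H}(\theta_{1}\parallel\theta_{0})/2$ around $\theta_{1}$ are both at most $\exp\{-c_{0}n\,\mathcal{H}^{2}(\theta_{1}\parallel\theta_{0})/4\}$ for a universal $c_{0}>0$. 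In the second stage I would peel the alternative $\{\theta\in\Theta:\mathcal{H}(\theta\parallel\theta_{0})\geq\varepsilon\}$ into shells $S_{j}=\{\theta:j\varepsilon\leq\mathcal{H}(\theta\parallel\theta_{0})<(j+1)\varepsilon\}$, cover each $S_{j}$ by $N_{j}$ Hellinger balls of radius $j\varepsilon/2$ with centres in $S_{j}$, attach to each centre its local test, and set $\phi_{n,\varepsilon}$ equal to the maximum of all these tests over all $j\geq1$. For $\theta\in S_{j}$ the test at the corresponding centre gives $\mathbb{E}_{\theta}[1-\phi_{n,\varepsilon}]\leq\exp\{-c_{0}nj^{2}\varepsilon^{2}/4\}\leq\exp\{-c_{1}n\mathcal{H}^{2}(\theta\parallel\theta_{0})\}$, which is~\eqref{eq:hell-risk-assumption-3}, while a union bound gives $\mathbb{E}_{\theta_{0}}[\phi_{n,\varepsilon}]\leq\sum_{j\geq1}N_{j}\exp\{-c_{0}nj^{2}\varepsilon^{2}/4\}$.

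To close the Type~I estimate I would invoke the metric-entropy bound $\log N_{j}\lesssim p\log(1/\varepsilon)$, valid uniformly in $j$ because $\Theta$ is compact and, by the smoothness of $h$ in Assumption~\ref{ass:1}(i), the map $\theta\mapsto\sqrt{p(\cdot\mid\mathbf{x}_{i},\theta)}$ is Lipschitz on $\Theta$ with a constant uniform over $i\in[n]$, so that $\log N(\delta,\Theta,\mathcal{H})\lesssim p\log(1/\delta)$. The series is then dominated by $\exp\{-c_{2}n\varepsilon^{2}+Cp\log(1/\varepsilon)\}$, and for $\varepsilon\geq\varepsilon_{n}$ with $n\varepsilon_{n}^{2}\asymp p\log n$ (as in Remark~\ref{remark:near-minimaxity}) the entropy term is absorbed, yielding $\mathbb{E}_{\theta_{0}}[\phi_{n,\varepsilon}]\leq\exp\{-cn\varepsilon^{2}\}\leq\exp\{-cn\varepsilon_{n}^{2}\}$, which is~\eqref{eq:hell-risk-assumption-2}. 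Taking $c:=\min\{c_{1},c_{2}\}$ (shrunk if needed) makes all three displays hold simultaneously.

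I expect the one genuinely non-routine step to be the uniform entropy/Lipschitz estimate $\log N(\delta,\Theta,\mathcal{H})\lesssim p\log(1/\delta)$: this amounts to controlling the Hellinger metric of the statistical model by the Euclidean metric on the compact parameter set, which requires a uniform-in-$\Theta$ bound on the derivative of the root-likelihood, obtained from Assumption~\ref{ass:1}(i) (thrice-continuous differentiability of $h$ on $\mathbb{R}^{+}_{0}$) together with compactness (which in the Type~I case also keeps $\tau^{2}$ bounded away from $0$); once this is in hand, the peeling bookkeeping and the bookkeeping of constants are standard.
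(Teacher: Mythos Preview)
The paper does not supply its own proof of this lemma: it is stated with a citation to~\citesupp{ghosal2007convergence} and then invoked as a black box in the proofs of Theorems~\ref{theorem-alpha-equals-1-variational-risk-bound-Type-I} and~\ref{theorem-alpha-equals-1-variational-risk-bound-Type-II}. Your proposal therefore goes beyond what the paper does, by reconstructing the standard Ghosal--van~der~Vaart argument (local Le~Cam--Birg\'e tests, peeling into Hellinger shells, covering, and an entropy bound) that underlies the cited reference.

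Your sketch is sound and matches the classical construction. Two minor remarks. First, in the paper's subsequent use of the lemma (see the line ``Setting $\mathcal{F}_{n,\varepsilon}=\Theta$'' in the proof of Theorems~\ref{theorem-alpha-equals-1-variational-risk-bound-Type-I} and~\ref{theorem-alpha-equals-1-variational-risk-bound-Type-II}), the sieve is indeed taken to be all of $\Theta$, so your choice $\mathcal{F}_{n,\varepsilon}:=\Theta$ is exactly the specialization the authors have in mind; the sieve machinery is carried along only for compatibility with the general statement in the cited reference. Second, your absorption of the entropy term requires $n\varepsilon_n^{2}\gtrsim p\log(1/\varepsilon_n)$, which is precisely the regime $\varepsilon_n^{2}\asymp n^{-1}p\log n$ highlighted in Remark~\ref{remark:near-minimaxity}; you are right to flag this as the place where the choice of $\varepsilon_n$ enters, though strictly speaking the lemma as stated asserts existence of \emph{some} $\varepsilon_n$, so this is consistent.
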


\subsection{Proof of Variational Risk Bound under \texorpdfstring{$\alpha$}{alpha}-R\'{e}nyi Divergence for Type I \texorpdfstring{$\mathsf{SSG}$}{SSG} Likelihoods (Theorem~\ref{theorem-alpha-less-1-variational-risk-bound-Type-I})}\label{app:proof-theorem-vb-alpha-less-1-type-1}

\begin{proof}[Proof of Theorem~\ref{theorem-alpha-less-1-variational-risk-bound-Type-I}]
We present the proof of Theorem~\ref{theorem-alpha-less-1-variational-risk-bound-Type-I} of the main manuscript using the following steps. Throughout, we consider our working likelihood under the Type I $\mathsf{SSG}$ model as:
$$\varphi_{\alpha}(y\mid \mathbf{X}, \theta, \xi) \propto \tau^{n}\exp\left\{\sum_{i \in [n]}h(\tau^2(y_i - \mathbf{x}_i^{\top}\beta^2))\right\}.$$

By Lemma \ref{lemma:majorization-Renyi}, the variational risk under $\alpha$-R\'{e}nyi divergence satisfies:
\begin{align}\label{eq:proof-theorem1-7}
        n(1-\alpha)\int_{\theta\in \Theta}D_{\alpha}(\theta, \theta_{0})\pi(\theta \mid \mathcal{D}_n, \xi^{\star})d\theta
        \leq Dn\alpha\varepsilon^2 - \log \pi(\mathcal{B}_n(\theta_0, \varepsilon)) + \log\left(\frac{1}{\varepsilon}\right),
\end{align}
with $\mathbb{P}_{\theta_0}$-probability at least $1-\varepsilon-[(D-1)^2n\varepsilon^2]^{-1}$. Note that, $-\log \pi(\mathcal{B}_n(\theta_0, \varepsilon))$ in \eqref{eq:proof-theorem1-7} above, is the local Bayesian complexity (see~\cite{Bayesian-fractional-posterior}).

We now obtain a high-probability upper bound for the local Bayesian complexity in \eqref{eq:proof-theorem1-7}. To start, we obtain an upper bound for the $\log$-pseudo-likelihood ratio denoted by $\Delta(\theta, \theta_0)$:
\begin{align}\label{eq:proof-theorem1-16}
    \begin{split}
        \Delta(\theta, \theta_0) &:= \log \frac{\varphi(y\mid \mathbf{X}, \theta, \tilde{\xi})}{p(y\mid \mathbf{X}, \theta_0)}\\
        &= \left(\log p(y\mid \mathbf{X}, \theta) - \log p(y\mid \mathbf{X}, \theta_0)\right) + \left(\log \varphi(y\mid \mathbf{X}, \theta, \tilde{\xi}) - \log p(y\mid \mathbf{X}, \theta)\right)\\
        &= \Delta_1 +\Delta_2.
    \end{split}
\end{align}
Consider $\Delta_1$ in \eqref{eq:proof-theorem1-16}:
\begin{align}\label{eq:proof-theorem1-17}
    \begin{split}
        \Delta_1 &:= \log p(y\mid \mathbf{X}, \theta) - \log p(y\mid \mathbf{X}, \theta_0)\\
        &= \sum_{i\in [n]}\left[h\left(\tau^2(y_i - \mathbf{x}_i^{\top}\beta)^2\right) - h\left(\tau_0^2(y_i-\mathbf{x}_i^{\top}\beta_0)^2\right)\right]\\
        &\stackrel{\star}{\leq} K\sum_{i\in [n]}\Big|\tau(y_i - \mathbf{x}_i^{\top}\beta) - \tau_0(y_i-\mathbf{x}_i^{\top}\beta_0)\Big|,\\
        &\leq K \sum_{i\in [n]} \left[|\tau-\tau_0||y_i-\mathbf{x}^{\top}\beta_0| + \left(|\tau-\tau_0| + \tau_0\right)|\mathbf{x}_i^{\top}(\beta-\beta_0)|\right]\\
        &\leq K n |\tau-\tau_0| \mathcal{E}_1(y, \mathbf{X}, \beta_0) + Kn\left(|\tau-\tau_0| + \tau_0\right)\lVert \mathbf{X}\rVert_{2, \infty}\lVert \beta-\beta_0\rVert_2\\
        &\leq \frac{Kn}{\tau_0}|\tau^2-\tau_0^2|\mathcal{E}_1(y, \mathbf{X}, \beta_0) + Kn\left(\frac{|\tau^2-\tau_0^2|}{\tau_0} + \tau_0\right) \lVert \mathbf{X}\rVert_{2, \infty}\lVert \beta-\beta_0\rVert_2,
    \end{split}
\end{align}
where the inequality denoted by $\star$ in \eqref{eq:proof-theorem1-17} above is due to Assumption \ref{ass:1}(iv) in the main manuscript and the last inequality in \eqref{eq:proof-theorem1-17} above follows from, $|\tau-\tau_0| \leq \tau_0^{-1}|\tau^2-\tau_0^2|$, and $\mathcal{E}_k(y, \mathbf{X}, \beta) := n^{-1}\sum_{i \in [n]}|y_i - \mathbf{x}_i^{\top}\beta|^k$, for $k\in \mathbb{N}$.

Now consider $\Delta_2$ in \eqref{eq:proof-theorem1-16}, which is called the Jensen's gap. Additionally define $s_i := |\tau(y_i - \mathbf{x}_i^{\top}\beta)|$ and $s_{i0} := |\tau_0(y_i - \mathbf{x}_i^{\top}\beta_0)|$:
\begin{align}\label{eq:proof-theorem1-21a}
    \begin{split}
        \Delta_2 &:= \Big|\log \varphi (y\mid \mathbf{X}, \theta, \tilde{\xi}) - \log p(y\mid \mathbf{X}, \theta)\Big|=\Big|\sum_{i\in [n]}\left(h'(\tilde{\xi}_i^2)s_i^2 + \gamma(\tilde{\xi}_i) - h\left(s_i^2\right)\right)\Big|\\
        &= \sum_{i\in [n]}\Big|h'(\tilde{\xi}_i^2)\left(s_i^2-\tilde{\xi}_i^2\right) + h(\tilde{\xi}_i^2) - h\left(s_i^2\right)\Big| = \sum_{i \in [n]} \Delta_{2,i},
    \end{split}
\end{align}
where $\Delta_{2,i} = \Big|h\left(s_i^2\right) - h(\tilde{\xi}_i^2) - h'(\tilde{\xi}_i^2)\left(s_i^2-\tilde{\xi}_i^2\right)\Big|$. Since the function $t \mapsto h(t^2)$ is $K$-Lipschitz by Assumption \ref{ass:1}(iv) and has derivative $2th'(t^2)$:
\begin{align}\label{eq:proof-theorem1-21b}
    \begin{split}
        \Delta_{2,i} &:= \Big|h(s_i^2) - h(\tilde{\xi}_i^2) - h'(\tilde{\xi}_i^2)\left(s_i^2-\tilde{\xi}_i^2\right)\Big|\\
        &\leq |h(s_i^2) - h(\tilde{\xi}_i^2)| + \frac{|2\tilde{\xi}_i h'(\tilde{\xi}_i^2)|}{2\tilde{\xi}_i}|s_i^2-\tilde{\xi}_i^2|\\
        &\leq K |s_i - \tilde{\xi}_i| + \frac{K}{2\tilde{\xi}_i}| s_i - \tilde{\xi}_i|(s_i -  \tilde{\xi}_i + 2\tilde{\xi}_i)\\
        &\leq 2K|s_i - \tilde{\xi}_i| + \frac{K}{2\tilde{\xi}_i}|s_i - \tilde{\xi}_i|^2.
    \end{split}
\end{align}
In~\eqref{eq:proof-theorem1-21b}, we take $\tilde{\xi}_i = \max\{s_{i0}, C_{\star}\varepsilon^2\}$ for some $C_{\star}>0$ to be fixed later. Hence:
\begin{align}\label{eq:proof-theorem1-21c}
    \begin{split}
        &s_{i0} \leq \tilde{\xi}_i \leq s_{i0} + C_{\star}\varepsilon^2 \\ \iff
        &s_{i0} - s_i \leq \tilde{\xi}_i - s_i \leq s_{i0} - s_i + C_{\star}\varepsilon^2 \\ 
        \implies &|s_i - \tilde{\xi}_i| \leq |s_i - s_{i0}| + C_{\star}\varepsilon^2 \\
        \implies & |s_i - \tilde{\xi}_i|^2 \leq |s_i - s_{i0}|^2 + 2C_{\star}|s_i - s_{i0}| \varepsilon^2 + C_{\star}^2\varepsilon^4,
    \end{split}
\end{align}
which leads to:
\begin{align}\label{eq:proof-theorem1-21d}
    \begin{split}
        \Delta_{2,i} & \leq 2K|s_i - \tilde{\xi}_i| + \frac{K}{2\tilde{\xi}_i}|s_i - \tilde{\xi}_i|^2 \\
        & \leq 2K|s_i - \tilde{\xi}_i| + \frac{K}{2C_{\star}\varepsilon^2} |s_i - \tilde{\xi}_i|^2 \\
        & \leq 2K |s_i - s_{i0}| + 2KC_{\star}\varepsilon^2 + \frac{K}{2C_{\star}\varepsilon^2}|s_i - s_{i0}|^2 + \frac{1}{2}K |s_i - s_{i0}|  + KC_{\star}\varepsilon^2 \\
        & = 3K |s_i - s_{i0}| + \frac{5}{2}KC_{\star}\varepsilon^2 + \frac{K}{2C_{\star}\varepsilon^2} |s_i - s_{i0}|^2.
    \end{split}
\end{align}
Finally we get:
\begin{align}
    \begin{split} \label{eq:proof-theorem1-21e}
        |s_i - s_{i0}| &= \Big||\tau(y_i - \mathbf{x}_i^{\top}\beta)| - |\tau_0(y_i-\mathbf{x}_i^{\top}\beta_0)|\Big| \\
        &\leq \Big|\tau(y_i - \mathbf{x}_i^{\top}\beta) - \tau_0(y_i-\mathbf{x}_i^{\top}\beta_0)\Big| \\
        &= |\tau-\tau_0||y_i-\mathbf{x}_i^{\top}\beta_0| + \left(|\tau-\tau_0| + \tau_0\right)|\mathbf{x}_i^{\top}(\beta-\beta_0)| \\
        &= \frac{|\tau^2-\tau_0^2|}{\tau_0}|y_i-\mathbf{x}_i^{\top}\beta_0| + \left(\frac{|\tau^2-\tau_0^2|}{\tau_0} + \tau_0\right)|\mathbf{x}_i^{\top}(\beta-\beta_0)|,
    \end{split}
\end{align}
and:
\begin{align}
    \begin{split} \label{eq:proof-theorem1-21f}
        |s_i - s_{i0}|^2 &\leq \frac{|\tau^2-\tau_0^2|^2}{\tau_0^2}|y_i-\mathbf{x}_i^{\top}\beta_0|^2 + \left(\frac{|\tau^2-\tau_0^2|}{\tau_0} + \tau_0\right)^2|\mathbf{x}_i^{\top}(\beta-\beta_0)|^2\\
        & \qquad\qquad \qquad +
        2\frac{|\tau^2-\tau_0^2|}{\tau_0}\left(\frac{|\tau^2-\tau_0^2|}{\tau_0} + \tau_0\right)|y_i-\mathbf{x}_i^{\top}\beta_0||\mathbf{x}_i^{\top}(\beta-\beta_0)|.
    \end{split}
\end{align}
Putting \eqref{eq:proof-theorem1-21e} and \eqref{eq:proof-theorem1-21f} in \eqref{eq:proof-theorem1-21d} and then back in \eqref{eq:proof-theorem1-21a}, we get:
\begin{align}
    \begin{split} \label{eq:proof-theorem1-21g}
        \Delta_2 &= \sum_{i \in [n]}\Delta_{2,i} \\
        &\leq 3Kn \frac{|\tau^2-\tau_0^2|}{\tau_0}\mathcal{E}_1(y, \mathbf{X}, \beta_0) + 3Kn\left(\frac{|\tau^2-\tau_0^2|}{\tau_0} + \tau_0\right)\lVert\mathbf{X}\rVert_{2,\infty} \lVert\beta-\beta_0\rVert_2 \\
        & + \frac{5}{2} nKC_{\star}\varepsilon^2 +  \frac{Kn}{2C_{\star}\varepsilon^2}\frac{|\tau^2-\tau_0^2|^2}{\tau_0^2}\mathcal{E}_2(y, \mathbf{X}, \beta_0)  \\
        &+  \frac{Kn}{2C_{\star}\varepsilon^2}\left(\frac{|\tau^2-\tau_0^2|}{\tau_0} + \tau_0\right)^2\lVert\mathbf{X}\rVert_{2,\infty}^2 \lVert\beta-\beta_0\rVert_2^2 \\
        & + \frac{Kn}{C_{\star}\varepsilon^2}\frac{|\tau^2-\tau_0^2|}{\tau_0}\left(\frac{|\tau^2-\tau_0^2|}{\tau_0} + \tau_0\right)\lVert\mathbf{X}\rVert_{2,\infty} \lVert\beta-\beta_0\rVert_2\mathcal{E}_1(y, \mathbf{X}, \beta_0).
    \end{split}
\end{align}
Using \eqref{eq:proof-theorem1-17} and \eqref{eq:proof-theorem1-21g} in \eqref{eq:proof-theorem1-16}, we obtain:
\begin{align}\label{eq:proof-theorem1-23}
\begin{split}
|\Delta(\theta, \theta_0)|
&\leq |\Delta_1| + |\Delta_2|
\\
&\leq 4Kn \frac{|\tau^2-\tau_0^2|}{\tau_0}\mathcal{E}_1(y, \mathbf{X}, \beta_0)
+ 4Kn\left(\frac{|\tau^2-\tau_0^2|}{\tau_0} + \tau_0\right)
\lVert \mathbf{X}\rVert_{2,\infty}\lVert \beta-\beta_0\rVert_2
\\
&\quad + \frac{5}{2}\,nK  C_{\star}\varepsilon^2
+ \frac{Kn}{2 C_{\star}\varepsilon^2}\frac{|\tau^2-\tau_0^2|^2}{\tau_0^2}
\mathcal{E}_2(y, \mathbf{X}, \beta_0)
\\
&\quad + \frac{Kn}{2 C_{\star}\varepsilon^2}
\left(\frac{|\tau^2-\tau_0^2|}{\tau_0}+\tau_0\right)^2
\lVert \mathbf{X}\rVert_{2,\infty}^2 \lVert \beta-\beta_0\rVert_2^2
\\
&\quad + \frac{Kn}{C_{\star}\varepsilon^2}
\frac{|\tau^2-\tau_0^2|}{\tau_0}
\left(\frac{|\tau^2-\tau_0^2|}{\tau_0}+\tau_0\right)
\lVert \mathbf{X}\rVert_{2,\infty}\lVert \beta-\beta_0\rVert_2
\mathcal{E}_1(y, \mathbf{X}, \beta_0).
\end{split}
\end{align}

By application of Markov's inequality along with Assumption \ref{ass:3} in the main manuscript, with $\mathbb{P}_{\theta_0}$-probability at least $1-\varepsilon$, we have $\mathcal{E}_{k}(y, \mathbf{X}, \beta_0) \leq \mathscr{E}_k \tau_0^{-k}\varepsilon^{-1}$ for each of $k \in \{1,2\}$. Hence, from \eqref{eq:proof-theorem1-23}, with $\mathbb{P}_{\theta_0}$-probability at least $1-2\varepsilon$:
\begin{align}\label{eq:proof-theorem1-24}
\begin{split}
|\Delta(\theta, \theta_0)|
&\leq \frac{4Kn}{\tau_0^2\varepsilon}|\tau^2-\tau_0^2|\mathscr{E}_1
+ 4Kn\left(\frac{|\tau^2-\tau_0^2|}{\tau_0} + \tau_0\right)
\lVert \mathbf{X}\rVert_{2,\infty}\lVert \beta-\beta_0\rVert_2
\\
&\quad + \frac{5}{2}\,nK  C_{\star}\varepsilon^2
+ \frac{Kn}{2 C_{\star}\varepsilon^3}\frac{|\tau^2-\tau_0^2|^2}{\tau_0^4}\mathscr{E}_2
\\
&\quad + \frac{Kn}{2 C_{\star}\varepsilon^2}
\left(\frac{|\tau^2-\tau_0^2|}{\tau_0}+\tau_0\right)^2
\lVert \mathbf{X}\rVert_{2,\infty}^2 \lVert \beta-\beta_0\rVert_2^2
\\
&\quad + \frac{Kn}{C_{\star} \varepsilon^2}
\frac{|\tau^2-\tau_0^2|}{\tau_0}
\left(\frac{|\tau^2-\tau_0^2|}{\tau_0}+\tau_0\right)
\lVert \mathbf{X}\rVert_{2,\infty}\lVert \beta-\beta_0\rVert_2
\frac{\mathscr{E}_1}{\tau_0 \varepsilon}
\\
&=: \Xi_{\Delta(\theta, \theta_0)}
\\
&\leq n\varepsilon^2,\quad \text{if }
\lVert \beta-\beta_0\rVert_2 \leq \frac{\varepsilon^2}{\mathsf{Q}(\mathbf{X}, \tau_0)}
\text{ and }
|\tau^2 - \tau_0^2| \leq \frac{\varepsilon^3}{\mathsf{Q}(\mathbf{X}, \tau_0)}.
\end{split}
\end{align}
for $\mathsf{Q}(\mathbf{X}, \tau_0) = \max\{1, 50K\}\max\{\tau_0^2, \tau_0^{-4}\}\max\{1, \mathscr{E}_1, \mathscr{E}_2\}\max\{1, \lVert \mathbf X \rVert_{2, \infty}^2\}$ and $C_{\star} = \mathsf{Q}(\mathbf{X}, \tau_0)^{-1}$.
Now, \eqref{eq:proof-theorem1-24} implies $n^{-1}\tilde{D}\left(p(y\mid \mathbf{X}, \theta_0) \parallel \varphi(y\mid \mathbf{X}, \theta, \tilde{\xi})\right) \leq \varepsilon^2$. Also since, $$V\left(p(y\mid \mathbf{X}, \theta_0) \parallel \varphi(y\mid \mathbf{X}, \theta, \tilde{\xi})\right) = nV\left(p(y_1\mid \mathbf{x}_1, \theta_0) \parallel \varphi(y_1\mid \mathbf{x}_1, \theta, \tilde{\xi})\right),$$ we use the single observation to obtain:
\begin{align}\label{eq:proof-theorem1-25}
    \begin{split}
        \Delta_1(\theta, \theta_0) \leq \frac{\Xi_{\Delta(\theta, \theta_0)}}{n}.
    \end{split}
\end{align}
Therefore:
\begin{align}\label{eq:proof-theorem1-26}
    \begin{split}
        \frac{1}{n}V\left(p(y\mid \mathbf{X}, \theta_0) \parallel \varphi(y\mid \mathbf{X}, \theta, \tilde{\xi})\right) \leq \varepsilon^2.
    \end{split}
\end{align}
From $n^{-1}\tilde{D}\left(p(y\mid \mathbf{X}, \theta_0) \parallel \varphi(y\mid \mathbf{X}, \theta, \tilde{\xi})\right) \leq \varepsilon^2$ and \eqref{eq:proof-theorem1-26}, finally we obtain:
\begin{align}\label{eq:proof-theorem1-27}
    \begin{split}
        &-\log \pi(\mathcal{B}_n(\theta_0, \varepsilon)) \leq -\log \pi\left(\lVert \beta-\beta_0\rVert_2\leq \mathsf{Q}^{-1}(\mathbf{X}, \tau_0)\varepsilon^2, |\tau^2-\tau_0^2| \leq \mathsf{Q}^{-1}(\mathbf{X}, \tau_0)\varepsilon^{3}\right)\\
        &\stackrel{\#}{\leq} -\log\left(\frac{C(a,b,\tau_0)\varepsilon^3}{\mathsf{Q}(\mathbf{X}, \tau_0)}\right)  + \log\left[2^{\frac{p}{2}}\Gamma\left(\frac{p}{2}+1\right)\left(\lambda_{\mathrm{max}}(\Sigma)\right)^{\frac{p}{2}}\right]- \frac{p}{2}\log t_{-}\\
        &\qquad + p\log\left(\frac{\mathsf{Q}(\mathbf{X}, \tau_0)}{\varepsilon^2}\right) + \frac{t_{+}}{2}\left(\Delta^2 + \frac{\varepsilon^4}{\mathsf{Q}^{2}(\mathbf{X}, \tau_0)}\right),
    \end{split}
\end{align}
where the inequality in $\#$ above follows from Lemma \ref{lemma:prior-inequality}.
Also, $t_+ = \tau_0^2 + \mathsf{Q}^{-1}(\mathbf{X}, \tau_0)\varepsilon^3$, $t_{-} = \max\{\tau_0^2 - \mathsf{Q}^{-1}(\mathbf{X}, \tau_0)\varepsilon^3, 0\}$, $\Delta^{2} = \lVert \Sigma^{-\frac{1}{2}}(\beta_0-\mu)\rVert_{2}^{2}$, and $C(a,b,\tau_0) = \frac{b}{2\Gamma\left(\frac{a}{2}\right)}\left(\frac{b \tau_0^2}{2}\right)^{\frac{a}{2}-1}\exp\left\{-\frac{b\tau_0^2}{2}\right\}$. 

Finally we substitute \eqref{eq:proof-theorem1-27} in \eqref{eq:proof-theorem1-7} to obtain, for any $\varepsilon\in \left(0, \frac{1}{3}\right)$ with $\mathbb{P}_{\theta_0}$-probability at least $(1-3\varepsilon) - [(D-1)^2n\varepsilon^2]^{-1}$:
\begin{align}\label{eq:proof-theorem1-30}
    \begin{split}
        &(1-\alpha)\int_{\theta\in \Theta}D_{\alpha}(\theta, \theta_0)\pi_{\alpha}(\theta\mid \mathcal{D}_n, \xi^{\star})d\theta\\
        &\leq D\alpha \varepsilon^2 + \frac{p}{n}\log\left(\frac{\mathsf{Q}(\mathbf{X}, \tau_0)}{\varepsilon^2}\right) - \frac{1}{n}\log\left(
        \frac{C(a, b, \tau_0)\varepsilon^3}{\mathsf{Q}(\mathbf{X}, \tau_0)}\right) - \frac{p}{2n}\log t_{-}\\
        &+ \frac{1}{n}\log\left(2^{\frac{p}{2}}\Gamma\left(\frac{p}{2}+1\right)\left(\lambda_{\mathrm{max}}(\Sigma)\right)^{\frac{p}{2}}\right) + \frac{t_{+}}{2n}\left(\Delta^2 + \frac{\varepsilon^4}{\mathsf{Q}^{2}(\mathbf{X}, \tau_0)}\right) + \frac{1}{n}\log\left(\frac{1}{\varepsilon}\right).
    \end{split}
\end{align}
We simplify the bound in \eqref{eq:proof-theorem1-30} as:
\begin{align}\label{eq:proof-theorem1-final-TypeI-bound}
\boxed{(1-\alpha)\int_{\theta\in \Theta}D_{\alpha}(\theta, \theta_0)\pi_{\alpha}(\theta\mid \mathcal{D}_n, \xi^{\star})d\theta
\leq D\alpha \varepsilon^2 + \frac{p\log p}{n} + \frac{C_1 p}{n}\log\left(\frac{1}{\varepsilon}\right)},
\end{align}
for arbitrary $D>1$ and constant $C_1 \in \mathbb{R}^{+}$ satisfying:
\begin{align}\label{eq:proof-theorem1-bound-constants-type-I}
C_1 \leq 6 + \log \left(\frac{\sqrt{4\lambda_{\max}(\Sigma)}}{\tau_0}C(a,b,\tau_0)\right) + \left(\tau_0^2 + \mathsf{Q}^{-1}(\mathbf{X}, \tau_0)\right)\left(\Delta^2 + \mathsf{Q}^{-2}(\mathbf{X}, \tau_0)\right),
\end{align}
which depends only on prior hyperparameters $(\mu, \Sigma, a, b)$, design matrix $\mathbf{X}$, and true parameter $\theta_0$. 
\end{proof}

\subsection{Proof of Variational Risk Bound under \texorpdfstring{$\alpha$}{alpha}-R\'{e}nyi Divergence for Type II \texorpdfstring{$\mathsf{SSG}$}{SSG} Likelihoods (Theorem~\ref{theorem-alpha-less-1-variational-risk-bound-Type-II})}\label{app:proof-theorem-vb-alpha-less-1-type-2}

\begin{proof}[Proof of Theorem~\ref{theorem-alpha-less-1-variational-risk-bound-Type-II}]
The Type II $\mathsf{SSG}$ likelihoods are of the form:
\begin{equation}\label{eq:proof-theorem2-1}
p(y\mid \mathbf{X}, \beta) = \prod_{i\in [n]}\left(\exp\left\{a_i\mathbf{x}_{i}^{\top}\beta\right\}\left[1 + \exp\left\{\mathbf{x}_{i}^{\top}\beta\right\}\right]^{-a_i - b_i}\right).
\end{equation}

By Lemma \ref{lemma:majorization-Renyi}, the variational risk under $\alpha$-R\'{e}nyi divergence satisfies:
\begin{align}\label{eq:proof-theorem2-2}
        n(1-\alpha)\int_{\beta\in \mathbb{R}^{p}}D_{\alpha}(\beta, \beta_{0})\pi(\beta \mid \mathcal{D}_n, \xi^{\star})d\beta
        \leq Dn\alpha\varepsilon^2 - \log \pi(\mathcal{B}_n(\beta_0, \varepsilon)) + \log\left(\frac{1}{\varepsilon}\right),
\end{align}
with $\mathbb{P}_{\beta_0}$-probability at least $1-\varepsilon-[(D-1)^2n\varepsilon^2]^{-1}$.

We now derive an upper bound for the local Bayesian complexity, $-\log \pi(\mathcal{B}_n(\beta_0, \varepsilon))$, by starting with an upper bound for the $\log$-pseudo-likelihood ratio:
\begin{align}\label{eq:proof-theorem1-32}
    \begin{split}
        \Delta(\beta, \beta_0) &:= \log \frac{\varphi(y\mid \mathbf{X}, \beta, \tilde{\xi})}{p(y\mid \mathbf{X}, \beta_0)}\\
        &= \left(\log p(y\mid \mathbf{X}, \beta) - \log p(y\mid \mathbf{X}, \beta_0)\right) + \left(\log \varphi(y\mid \mathbf{X}, \beta, \tilde{\xi}) - \log p(y\mid \mathbf{X}, \beta)\right)\\
        &= \Delta_1 +\Delta_2.
    \end{split}
\end{align}
Observe that, $\Delta_1$ in \eqref{eq:proof-theorem1-32} above can be upper bounded as:
\begin{align}\label{eq:proof-theorem1-33}
\begin{split}
        \Delta_1 &:= \log p(y\mid \mathbf{X}, \beta) - \log p(y\mid \mathbf{X}, \beta_{0})\\
        &= \sum_{i \in [n]}\left(a_i\mathbf{x}_i^{\top}(\beta-\beta_0) + b_i\left[\log\left(1 + \exp\left\{\mathbf{x}_i^\top\beta_0\right\}\right) - \log\left(1 + \exp\left\{\mathbf{x}_i^\top\beta\right\}\right)\right]\right)\\
        &\leq n\lVert \mathbf{X}\rVert_{2, \infty}\lVert \beta-\beta_0\rVert_2\left(\frac{\sum_{i\in [n]} (a_i+b_i)}{n}\right),
\end{split}
\end{align}
where the last inequality in \eqref{eq:proof-theorem1-33} follows from the fact that $\log(1 + \exp\{t\})$ is $1$-Lipschitz.
Now we consider the Jensen's gap $\Delta_2$ in \eqref{eq:proof-theorem1-32}, which can be upper bounded as follows:
\begin{equation}\label{eq:proof-theorem1-34}
    \begin{split}
        \Delta_2 &:= \log \varphi(y\mid \mathbf{X}, \beta, \tilde{\xi}) - \log p(y\mid \mathbf{X}, \beta)\\
        &= \sum_{i\in [n]} b_i\left(h'(\tilde{\xi}_i^2) (\mathbf{x}_i^{\top}\beta)^2 + \gamma(\tilde{\xi}_i) - h((\mathbf{x}_i^{\top}\beta)^2)\right)\\
        &= \sum_{i\in [n]}b_i\left(h'(\tilde{\xi}_i^2)\left[(\mathbf{x}_i^{\top}\beta)^2 - \tilde{\xi}_i^2\right] + h(\tilde{\xi}_i^2) - h((\mathbf{x}_i^{\top}\beta)^2)\right),
    \end{split}
\end{equation}
where the last equality is obtained using $\gamma(\tilde{\xi}_i) = h(\tilde{\xi}_i^2) - \tilde{\xi}_i^2h'(\tilde{\xi}_i^2)$. Taking $\tilde{\xi}_i := \mathbf{x}_i^{\top}\beta_0$ and applying Taylor's theorem in \eqref{eq:proof-theorem1-34} for $\delta_i^{\star}$ between $(\mathbf{x}_i^{\top}\beta)^2$ and $(\mathbf{x}_i^{\top}\beta_0)^2$:
\begin{align}\label{eq:proof-theorem1-35}
    \begin{split}
        |\Delta_2| &= \sum_{i\in [n]}\frac{b_i}{2}h''(\delta_i^{\star})\left((\mathbf{x}_i^{\top}\beta)^2 - (\mathbf{x}_i^{\top}\beta_0)^2\right)^2\\
        &\leq \frac{1}{2}\sum_{i\in [n]}\left(b_i h''(\delta_i^{\star})\left[\left(\mathbf{x}_i^{\top}(\beta-\beta_0)\right)^2\left(\mathbf{x}_i^{\top}(\beta-\beta_0) + 2\mathbf{x}_i^{\top}\beta_0\right)^2\right]\right)\\
        &\leq \frac{1}{2}\sum_{i\in [n]}\left(b_i\left[\left(\mathbf{x}_i^{\top}(\beta-\beta_0)\right)^2\left(\mathbf{x}_i^{\top}(\beta-\beta_0) + 2\mathbf{x}_i^{\top}\beta_0\right)^2\right]\right),
    \end{split}
\end{align}
where the last inequality in \eqref{eq:proof-theorem1-35} above follows from the fact that $0<h''(t)\leq \frac{1}{96} \leq 1$ for all $t\geq 0$, where $h(t) = -\log (2\cosh(\sqrt{t}/2))$. Finally, the upper bound of $\Delta_2$ is obtained as:
\begin{align}\label{eq:proof-theorem1-36}
    \begin{split}
        |\Delta_2| &\stackrel{\dagger}{\leq} \frac{1}{2}\sum_{i\in [n]}b_i\left(2\left(\mathbf{x}_i^{\top}(\beta-\beta_0)\right)^4 + 8(\mathbf{x}_i^{\top}\beta_0)^2\left(\mathbf{x}_i^{\top}(\beta-\beta_0)\right)^2\right)\\
        &\leq\left(\frac{1}{n}\sum_{i\in [n]}b_i\right)\left(n\lVert \mathbf{X}\rVert_{2, \infty}^{4}\lVert \beta-\beta_0\rVert_{2}^{4} + 4n\lVert \mathbf{X}\rVert_{2, \infty}^{4}\lVert \beta_0\rVert_{2}^{2}\lVert \beta-\beta_0\rVert_{2}^{2}\right),
    \end{split}
\end{align}
where the inequality in $\dagger$ above is obtained from $(a+b)^2\leq 2(a^2+b^2)$. We now probabilistically bound two quantities viz., $n^{-1}\sum_{i\in [n]}b_i$ and $n^{-1}\sum_{i\in [n]}(a_i+b_i)$.

In particular, for Type II $\mathsf{SSG}$ likelihood with Negative-Binomial distribution, i.e., $\mathrm{NB}(m_i, p_i)$ with $p_i = \exp\{\mathbf{x}_i^{\top}\beta_0\}(1 + \exp\{\mathbf{x}_i^{\top}\beta_0\})^{-1}$ (having parameterization as in \eqref{eq:param-negbin}), we have:
\begin{align}\label{eq:proof-theorem1-37}
    \begin{split}
        &\frac{\sum_{i\in[n]} a_i}{n} = \frac{\sum_{i\in [n]}m_i}{n} \leq \mathbf{m}^{\star} := \max\{m_1, \ldots, m_n\}\\
        &\frac{\sum_{i\in [n]}b_i}{n} = \frac{\sum_{i\in [n]}(y_i+m_i)}{n} \leq \mathbf{m}^{\star}\left(1 + \frac{\exp\left\{\lVert \mathbf{X}\rVert_{2, \infty}\lVert \beta_0\rVert_2\right\}}{\varepsilon}\right),
    \end{split}
\end{align}
where the bound on $n^{-1}\sum_{i\in [n]}b_i$ above in \eqref{eq:proof-theorem1-37} is due to Markov's inequality and holds with $\mathbb{P}_{\beta_0}$-probability at least $1-\varepsilon$.

Now, for Type II $\mathsf{SSG}$ likelihood with Binomial distribution, i.e., $\mathrm{Bin}(m_i, p_i)$ with $p_i = \exp\{\mathbf{x}_i^{\top}\beta_0\}(1 + \exp\{\mathbf{x}_i^{\top}\beta_0\})^{-1}$ (having parameterization as in \eqref{eq:param-binom}), we have:
\begin{align}\label{eq:proof-theorem1-38}
    \begin{split}
        &\frac{\sum_{i\in[n]} a_i}{n} = \frac{\sum_{i\in [n]}y_i}{n} \leq \frac{\sum_{i\in [n]}m_i}{n} \leq \mathbf{m}^{\star} := \max\{m_1, \ldots, m_n\}\\
        &\frac{\sum_{i\in [n]}b_i}{n} = \frac{\sum_{i\in [n]}m_i}{n} \leq \mathbf{m}^{\star}.
    \end{split}
\end{align}
Therefore, for both Negative-Binomial and Binomial Type II $\mathsf{SSG}$ likelihoods, using \eqref{eq:proof-theorem1-37} and \eqref{eq:proof-theorem1-38} in \eqref{eq:proof-theorem1-33} and \eqref{eq:proof-theorem1-36}, we bound $\Delta(\beta, \beta_0)$ in \eqref{eq:proof-theorem1-32} as:
\begin{align}\label{eq:proof-theorem1-39}
\begin{split}
|\Delta(\beta, \beta_{0})| &=|\Delta_1| + |\Delta_2|\\
&\leq [n\lVert\mathbf{X}\rVert_{2, \infty}\lVert \beta-\beta_0\rVert_{2} + n\lVert\mathbf{X}\rVert_{2, \infty}^{4}\lVert\beta-\beta_0\rVert_{2}^{4}\\
&\quad+ 4n\lVert \mathbf{X}\rVert_{2, \infty}^{4}\lVert\beta_0\rVert_{2}^{2}\lVert\beta-\beta_0\rVert_{2}^{2}] \mathbf{m}^{\star}\left(2 + \exp\left\{\lVert\mathbf{X}\rVert_{2, \infty}\lVert\beta_0\rVert_{2}\right\}/\varepsilon\right).
\end{split}
\end{align}
If $\lVert\beta - \beta_0\rVert_2 \leq \varepsilon^3 \mathsf{Q}^{-1}( \mathbf{X}, \beta_0)$, where $\mathsf{Q}(\mathbf{X}, \beta_0)$ is taken to be:
\begin{align}\label{eq:proof-theorem1-40}
    \begin{split}
        \mathsf{Q}(\mathbf{X}, \beta_0) = \max\left\{4\lVert\mathbf{X}\rVert_{2, \infty}, 8\lVert\mathbf{X}\rVert_{2, \infty}^{2}\lVert\beta_0\rVert_{2}\right\}\mathbf{m}^{\star}\left(1 + \exp\{\lVert\mathbf{X}\rVert_{2, \infty}\lVert\beta_0\rVert_{2}\}\right),
    \end{split}
\end{align}
then $\Delta(\beta, \beta_0) \leq n\varepsilon^2$. Following similar arguments as in the case of Type I $\mathsf{SSG}$ likelihoods above, we conclude that:
\begin{equation}\label{eq:proof-theorem1-41}
\begin{split}
&-\log (\pi\left(\mathcal{B}_n(\beta_0, \varepsilon)\right))
\leq -\log \pi\left(\lVert\beta - \beta_0\rVert_2 \leq \varepsilon^3\mathsf{Q}^{-1}(\mathbf{X}, \beta_0)\right)\\
&\leq \frac{1}{2}\left(\Delta^2 + \frac{\varepsilon^6}{\lambda_{\mathrm{max}}(\Sigma)\mathsf{Q}^{2}(\mathbf{X}, \beta_0)}\right) + p\log\left(\frac{\mathsf{Q}(\mathbf{X},\beta_0)}{\varepsilon^3}\right) + \log\left(2^{\frac{p}{2}}\Gamma\left(\frac{p}{2}+1\right)(\lambda_{\mathrm{max}}(\Sigma))^{\frac{p}{2}}\right)
\end{split}
\end{equation}
where the last inequality follows from \eqref{eq:Ball-application-2} with $t=1$ in Lemma \ref{lemma:prior-inequality} and $\Delta^{2} = \lVert \Sigma^{-\frac{1}{2}}(\beta_0-\mu)\rVert_{2}^{2}$. 

Finally, using \eqref{eq:proof-theorem1-41} in \eqref{eq:proof-theorem2-2}, for any $\varepsilon \in \left(0,\frac{1}{3}\right)$ with $\mathbb{P}_{\beta_0}$-probability at least $(1 - 2\varepsilon) - [(D-1)^2n\varepsilon^2]^{-1}$:
\begin{equation}\label{eq:proof-theorem1-42}
\begin{split}
    &(1-\alpha)\int_{\beta\in \mathbb{R}^p} D_{\alpha}(\beta, \beta_0)\pi(\beta| \mathcal{D}_n, \xi^{\star})d\beta\\
&\leq D\alpha\varepsilon^2 + \frac{p}{n}\left[\log \left(\frac{\mathsf{Q}( \mathbf{X}, \beta_0)}{\varepsilon^3}\right) + \frac{1}{2} \log 2 + \frac{1}{2}\log \lambda_{\max}(\Sigma)\right] + \frac{1}{n}\log \left(\Gamma\left(\frac{p}{2} + 1\right)\right)\\
&\quad + \frac{1}{2n}\left(\Delta^2 + \frac{\varepsilon^6}{\lambda_{\mathrm{max}}(\Sigma)\mathsf{Q}^{2}(\mathbf{X}, \beta_0)}\right) + \frac{1}{n}\log\left(\frac{1}{\varepsilon}\right).
\end{split}
\end{equation}
We simplify the bound in \eqref{eq:proof-theorem1-42} as:
\begin{equation}\label{eq:proof-theorem1-final-TypeII-bound}
\boxed{(1-\alpha)\int_{\beta\in \mathbb{R}^p} D_{\alpha}(\beta, \beta_0)\pi(\beta\mid \mathcal{D}_n, \xi^{\star})d\beta \leq D\alpha\varepsilon^2 + \frac{p \log p}{n} + \frac{C_2 p}{n}\log\left(\frac{1}{\varepsilon}\right)},
\end{equation}
for arbitrary $D>1$ and constant $C_2\in \mathbb{R}^{+}$ satisfying:
\begin{align}\label{eq:proof-theorem1-bound-constants-type-II}
C_2 \leq \frac{1}{2}\left(8 + \log 2 + \log \lambda_{\max}(\Sigma) + 2\log \mathsf{Q}( \mathbf{X}, \beta_0) + \Delta^2 + \left(\lambda_{\max}(\Sigma)\mathsf{Q}^2(\mathbf{X}, \beta_0)\right)^{-1}\right),
\end{align}
which depends only on prior hyperparameters $(\mu, \Sigma)$, design matrix $\mathbf{X}$, and true parameter $\beta_0$. 
\end{proof}

\subsection{\texorpdfstring{$\tssg$}{TAVIE-SSG} for Compact Restriction of Parameter Space}\label{subsec:compact-parameter-space}

In this section, we develop an extension of the $\tssg$ framework to settings where the model parameters are restricted to a compact subset of the full parameter space.

Let $\varphi(y_i \mid \mathbf{x}_i, \theta, \xi_i)$ be the tangent minorant of a $\ssg$ likelihood $p(y_i \mid \mathbf{x}_i, \theta)$ for observation $i \in [n]$, having the form (see Proposition~\ref{lemma:tangent-lower-bound} of the main manuscript):
\begin{equation}\label{eq:compact-restrictin-0}
\varphi(y_i \mid \mathbf{x}_i, \theta, \xi_i) := r_i \, \exp \left\{\,s_i \zeta_i + t_i\,A(\xi_i)\zeta_i^2 + t_i\,\gamma(\xi_i)\right\} \leq p(y_i \mid \mathbf{x}_i, \theta),
\end{equation}
where $\zeta_i = u_i \mathbf{x}_i^{\top}\beta + v_i$ and $\theta\in \Theta$ is the set of model parameters including the regression coefficient vector $\beta \in \mathbb{R}^{p}$ and (possibly) dispersion parameter $\tau$ along with $h(\cdot), r_i, t_i, s_i, u_i, v_i$ are as in Definition~\ref{def:SSG} of the main manuscript in general and in Table~\ref{tab:ssg-types} of the main manuscript in particular for Type I and Type II $\ssg$ likelihoods. Define $\varphi_{\alpha}(y \mid \mathbf{X}, \theta, \xi) := \prod_{i \in [n]} \left[\varphi(y_i \mid \mathbf{x}_i, \theta, \xi_i)\right]^{\alpha}$ to be the tangent minorant of the $\alpha$-fractional likelihood $p_{\alpha}(y \mid \mathbf{X}, \theta)$. 

Let $\pi(\theta)$ be the unrestricted prior on $\theta$. Recall that, the unrestricted $\alpha$-fractional posterior of $\theta$ and the minorant of the marginal likelihood are given by (see~\eqref{eq:alpha-frac-posterior} of the main manuscript):
\begin{equation}\label{eq:compact-restriction-1}
\pi_{\alpha}(\theta \mid \mathcal{D}_n, \xi) := \frac{\varphi_{\alpha}(y \mid \mathbf{X}, \theta, \xi) \pi(\theta)}{\varphi_{\alpha}(y \mid \mathbf{X}, \xi)},
\qquad \varphi_{\alpha}(y \mid \mathbf{X}, \xi) := \int_{\theta \in \Theta} \varphi_{\alpha}(y \mid \mathbf{X}, \theta, \xi) \pi(\theta) d\theta.
\end{equation}

Now, we consider a compact subset $\mathcal{C} \subset \Theta$. Define $\mathcal{Z}_{\mathcal{C}}: \mathcal{P}_{\Theta} \rightarrow \mathbb{R}^{+}$ as $\mathcal{Z}_{\mathcal{C}}(q) := \int_{\mathcal{C}} q(\theta) d\theta$. Therefore, the restriction of the prior $\pi$ to $\mathcal{C}$ is:
\begin{align}\label{eq:compact-restriction-prior}
    \widehat{\pi}(\theta) = \frac{\pi(\theta)\mathds{1}_{\mathcal{C}}(\theta)}{\mathcal{Z}_{\mathcal{C}}(\pi)}.
\end{align} 
With the restricted prior in \eqref{eq:compact-restriction-prior}, the minorant of the marginal likelihood becomes:
\begin{equation}
\label{eq:compact-restriction-2}
\begin{split}
\widehat{\varphi}_{\alpha}(y \mid \mathbf{X}, \xi) &:= \int_{\theta \in \mathcal{C}} \frac{\varphi_{\alpha}(y \mid \mathbf{X}, \theta, \xi) \pi(\theta)}{\mathcal{Z}_{\mathcal{C}}(\pi)} d\theta = \int_{\theta \in \mathcal{C}} \frac{\varphi_{\alpha}(y \mid \mathbf{X}, \xi) \pi(\theta \mid \mathcal{D}_n, \xi)}{\mathcal{Z}_{\mathcal{C}}(\pi)} d\theta\\
&= \frac{\varphi_{\alpha}(y \mid \mathbf{X}, \xi)}{\mathcal{Z}_{\mathcal{C}}(\pi)}\int_{\theta \in \mathcal{C}} { \pi(\theta \mid \mathcal{D}_n, \xi)} d\theta 
= \varphi_{\alpha}(y \mid \mathbf{X}, \xi)\frac{\mathcal{Z}_{\mathcal{C}}(\pi_{\alpha}(\cdot \mid \mathcal{D}_n, \xi))}{\mathcal{Z}_{\mathcal{C}}(\pi)},
\end{split}
\end{equation}
and the restricted $\alpha$-fractional posterior is:
\begin{equation}
\label{eq:compact-restriction-3}
\begin{split}
\widehat{\pi}_{\alpha}(\theta \mid \mathcal{D}_n, \xi) :&= \frac{\varphi_{\alpha}(y \mid \mathbf{X}, \theta, \xi) \widehat{\pi}(\theta)}{\widehat{\varphi}_{\alpha}(y \mid \mathbf{X}, \xi)} = \frac{\varphi_{\alpha}(y \mid \mathbf{X}, \theta, \xi) \pi(\theta)}{\varphi_{\alpha}(y \mid \mathbf{X}, \xi)\mathcal{Z}_{\mathcal{C}}(\pi_{\alpha}(\cdot \mid \mathcal{D}_n, \xi))}\mathds{1}_{\mathcal{C}}(\theta) 
\\&
= \frac{\pi_{\alpha}(\theta \mid \mathcal{D}_n, \xi)}{\mathcal{Z}_{\mathcal{C}}(\pi_{\alpha}(\cdot \mid \mathcal{D}_n, \xi))}\mathds{1}_{\mathcal{C}}(\theta).
\end{split}
\end{equation}
Thus, under this {restricted} setting, variational posterior is simply the same as that in the unrestricted setting, but restricted to $\mathcal{C}$. Now, the profile objective (up to constants) to be maximized is: 
\begin{equation}
\label{eq:compact-restriction-3a}
\widehat{\mathsf{L}}(\xi) := \log \widehat{\varphi}_{\alpha}(y \mid \mathbf{X}, \xi) = \mathsf{L}(\xi) + \log \mathcal{Z}_{\mathcal{C}}(\pi_{\alpha}(\cdot \mid \mathcal{D}_n, \xi)),
\end{equation}
where $\mathsf{L}(\xi) = \log \varphi_{\alpha}(y \mid \mathbf{X}, \xi)$ is the {unrestricted} objective as defined in~\eqref{eq:ELBO-general} of the main manuscript. 

The final step is the derivation of the $\tssg$ algorithm for the {restricted} setting. Let $\mathbb{E}_{\xi}$ denote the expectation with respect to the variational posterior distribution, $\widehat{\pi}_{\alpha}(\theta \mid \mathcal{D}_n, \xi)$, for any $\xi \in \mathbb{R}^n_{+}$. Then, the general EM surrogate function for the $(l+1)$th step is:
\begin{align}\label{eq:compact-restriction-4}
\begin{split}
\mathcal{Q}(\xi^{\dagger} \mid \xi^{(l)}) 
&:= \mathbb{E}_{\xi^{(l)}} \left[\log \widehat{\pi}(\theta)\right] + \alpha \sum_{i\in[n]} \mathbb{E}_{\xi^{(l)}} \left[\log \varphi(y_i \mid \mathbf{x}_i, \theta, \xi^{\dagger}_i) \right]\\
&= \alpha \sum_{i\in[n]} t_i\left\{\,A(\xi_i^{\dagger}) \mathbb{E}_{\xi^{(l)}} \left[\zeta_i^2\right] + \gamma(\xi_i^{\dagger})\right\} + \mathfrak{C}(\xi^{(l)}),
\end{split}
\end{align}
where $\mathfrak{C}(\xi^{(l)})$ collects the terms independent of $\xi^{\dagger}$. Note that, $\mathcal{Q}(\xi^{\dagger} \mid \xi^{(l)})$ is maximized at:
\begin{align}
\label{eq:compact-restriction-5}
(\xi_i^{(l+1)})^2 = \kappa_i(\xi^{(l)}) = \mathbb{E}_{\xi^{(l)}}\left[\zeta_i^2\right]. 
\end{align}
\begin{remark}
Observe that, $\mathbb{E}_{\xi^{(l)}}\left[\zeta_i^2\right]$ need not admit a closed form expression under the {restricted} compact space. However, the purpose of this derivation is to establish that a $\tssg$ algorithm exists in principle. Also, if we take $\mathcal{C}$ to be large, $\mathbb{E}_{\xi^{(l)}}\left[\zeta_i^2\right]$ goes close to its corresponding unrestricted counterpart.
\end{remark}

\subsection{Proof of Variational Risk Bound under Hellinger Distance for Type I and II \texorpdfstring{$\mathsf{SSG}$}{SSG} Likelihoods (Theorems~\ref{theorem-alpha-equals-1-variational-risk-bound-Type-I} and~\ref{theorem-alpha-equals-1-variational-risk-bound-Type-II})}\label{app:proof-theorem-vb-hellinger-type-1-2}

\begin{proof}[Proofs of Theorems~\ref{theorem-alpha-equals-1-variational-risk-bound-Type-I} and~\ref{theorem-alpha-equals-1-variational-risk-bound-Type-II}]
As $\Theta$ is compact and we operate with the squared Hellinger distance $\mathcal{H}^{2}(\theta \parallel \theta_0)$, we invoke Lemma~\ref{lemma:hell-risk-assumption} to obtain the sieve set $\mathcal{F}_{n, \varepsilon} \subset \Theta$ and test functions $\phi_{n, \varepsilon}$ satisfying \eqref{eq:hell-risk-assumption-1}-\eqref{eq:hell-risk-assumption-3} by~\cite{ghosal2007convergence}. Denote the $\log$-likelihood function as $\ell_n(\theta) := \log p(y\mid \mathbf{X}, \theta)$ and the $\log$-likelihood ratio function as $\ell_n(\theta, \theta_0):= \ell_n(\theta) - \ell_n(\theta_0)$. Clearly:
\begin{align}\label{eq:hell-risk-lemma-1}
    \mathbb{E}_{\theta_0}\left[\exp\left\{\ell_n(\theta, \theta_0)\right\}\right] = 1.
\end{align}
The type II error bound \eqref{eq:hell-risk-assumption-3} in Lemma \ref{lemma:hell-risk-assumption} implies for fixed $\varepsilon > \varepsilon_n$ and any $\theta \in \mathcal{F}_{n, \varepsilon}$:
\begin{align}\label{eq:hell-risk-lemma-2}
    \mathbb{E}_{\theta_0}\left[\exp\left\{\ell_n(\theta, \theta_0)\right\}(1 - \phi_{n, \varepsilon})\right] \leq \exp\left\{ - c n \mathcal{H}^{2}(\theta\parallel \theta_0) \mathds{1}(\mathcal{H}^{2}(\theta\parallel \theta_0) \geq \varepsilon^2)\right\}.
\end{align}
Thus, for any $\eta \in (0, 1)$, we have:
\begin{align}\label{eq:hell-risk-lemma-3}
    \begin{split}
        \mathbb{E}_{\theta_0}\left[\exp\left\{\ell_n(\theta, \theta_0) + c n  \mathcal{H}^{2}(\theta\parallel\theta_0)\mathds{1}(\mathcal{H}^{2}(\theta\parallel \theta_0) \geq \varepsilon^2)  - \log\left(\frac{1}{\eta}\right)\right\}(1 - \phi_{n, \varepsilon})\right] \leq \eta.
    \end{split}
\end{align}
Let $\Pi_{\mathcal{F}_{n, \varepsilon}}(\cdot) = \tfrac{\Pi(\cdot \cap \mathcal{F}_{n, \varepsilon})}{\Pi(\mathcal{F}_{n, \varepsilon})}$ be the restriction of the prior measure $\Pi$ on $\mathcal{F}_{n, \varepsilon}$ and the corresponding restricted density be $\pi_{\mathcal{F}_{n, \varepsilon}}(\cdot) = \tfrac{\pi(\cdot) \mathds{1}(\cdot \in \mathcal{F}_{n, \varepsilon})}{\Pi(\mathcal{F}_{n, \varepsilon})}$. Integrating both sides of \eqref{eq:hell-risk-lemma-3} with respect to $\pi_{\mathcal{F}_{n, \varepsilon}}$ and interchanging the integrals using Fubini's theorem, we obtain:
\begin{multline}\label{eq:hell-risk-lemma-4}
        \mathbb{E}_{\theta_0}\Bigg[
            (1-\phi_{n,\varepsilon})
            \int_{\mathcal{F}_{n,\varepsilon}}
            \exp\left\{
                \ell_{n}(\theta,\theta_0)
                + c n \mathcal{H}^{2}(\theta\parallel\theta_0)
                  \mathds{1}\!\left(\mathcal{H}^{2}(\theta\parallel\theta_0)\ge \varepsilon^2\right)
                - \log\left(\frac{1}{\eta}\right)
            \right\}\\ 
            \pi_{\mathcal{F}_{n,\varepsilon}}(\theta)d\theta
        \Bigg]
        \le \eta.
\end{multline}
Now, Lemma \ref{lemma-variational-inequality} implies for any $\rho \ll \pi_{\mathcal{F}_{n, \varepsilon}}$:
\begin{multline}\label{eq:hell-risk-lemma-5}
    \mathbb{E}_{\theta_0}\Bigg[(1-\phi_{n, \varepsilon}) \exp\Bigg\{\int_{\mathcal{F}_{n, \varepsilon}}\left(\ell_n(\theta, \theta_0) + c n \mathcal{H}^{2}(\theta\parallel \theta_0)\mathds{1}(\mathcal{H}^{2}(\theta\parallel \theta_0) \geq \varepsilon^2) - \log\left(\frac{1}{\eta}\right)\right)\rho(\theta)d\theta\\
    -\mathrm{KL}(\rho\parallel \pi_{\mathcal{F}_{n, \varepsilon}})\Bigg\}\Bigg] \leq \eta.
\end{multline}
Take $\rho$ to be the restriction $q^{\star}_{\mathcal{F}_{n, \varepsilon}}$ of $q^{\star}\equiv \pi_1(\cdot\mid \mathcal{D}_n, \xi^{\star})\equiv \pi(\cdot\mid \mathcal{D}_n, \xi^{\star})$ over $\mathcal{F}_{n, \varepsilon}$, and let the corresponding probability measures be $Q^{\star}_{\mathcal{F}_{n, \varepsilon}}$ and $Q^{\star}$, to obtain:

\begin{multline}\label{eq:hell-risk-lemma-6}
    \mathbb{E}_{\theta_0}\Bigg[(1-\phi_{n, \varepsilon}) \exp\Bigg\{\frac{1}{Q^{\star}(\mathcal{F}_{n, \varepsilon})}\int_{\mathcal{F}_{n, \varepsilon}}\Bigg(\ell_n(\theta, \theta_0) + c n \mathcal{H}^{2}(\theta\parallel \theta_0)\mathds{1}(\mathcal{H}^{2}(\theta\parallel \theta_0) \geq \varepsilon^2)\\ - \log\left(\frac{1}{\eta}\right)\Bigg)q^{\star}(\theta)d\theta
    -\mathrm{KL}(q^{\star}_{\mathcal{F}_{n, \varepsilon}}\parallel \pi_{\mathcal{F}_{n, \varepsilon}})\Bigg\}\Bigg] \leq \eta.
\end{multline}
By applying Markov's inequality, we further obtain that with $\mathbb{P}_{\theta_0}$-probability at least $(1 - \eta^{\tfrac{1}{2}})$:
\begin{multline}\label{eq:hell-risk-lemma-7}
    (1-\phi_{n, \varepsilon}) \exp\Bigg\{\frac{1}{Q^{\star}(\mathcal{F}_{n, \varepsilon})}\int_{\mathcal{F}_{n, \varepsilon}}\Bigg(\ell_n(\theta, \theta_0) + c n \mathcal{H}^{2}(\theta \parallel \theta_0) \mathds{1}(\mathcal{H}^{2}(\theta\parallel \theta_0) \geq \varepsilon^2)\\
    -\log\left(\frac{1}{\eta}\right)\Bigg)q^{\star}(\theta)d\theta - \mathrm{KL}(q^{\star}_{\mathcal{F}_{n, \varepsilon}}\parallel \pi_{\mathcal{F}_{n, \varepsilon}})\Bigg\} \leq \eta^{-\tfrac{1}{2}}.
\end{multline}
Denote the exponential term in \eqref{eq:hell-risk-lemma-7} above as $\mathfrak{A}_n$, then:
\begin{align}\label{eq:hell-risk-lemma-8}
    (1-\phi_{n, \varepsilon})\mathfrak{A}_n \leq \eta^{-\tfrac{1}{2}}.
\end{align}
The type I error bound \eqref{eq:hell-risk-assumption-2} in Lemma \ref{lemma:hell-risk-assumption} implies, by Markov's inequality, that $\phi_{n, \varepsilon} \leq \exp\left\{-\tfrac{c n \varepsilon_n^2}{2}\right\}$ holds with $\mathbb{P}_{\theta_0}$-probability at least $\left(1 - \exp\left\{-\tfrac{c n \varepsilon_n^2}{2}\right\}\right)$, obtaining:
\begin{align}\label{eq:hell-risk-lemma-9}
    \phi_{n, \varepsilon}\mathfrak{A}_n \leq \exp\left\{-\frac{c n \varepsilon_n^2}{2}\right\}\mathfrak{A}_n.
\end{align}
Combining \eqref{eq:hell-risk-lemma-8} and \eqref{eq:hell-risk-lemma-9}, we obtain with $\mathbb{P}_{\theta_0}$-probability at least $\left(1 - 2\exp\left\{-\tfrac{c n \varepsilon_n^{2}}{2}\right\}\right)$ (taking $\eta = \exp\left\{-c n \varepsilon_n^2\right\}$):
\begin{align}\label{eq:hell-risk-lemma-10}
    \mathfrak{A}_n = (1 - \phi_{n, \varepsilon})\mathfrak{A}_n + \phi_{n, \varepsilon}\mathfrak{A}_n \leq \exp\left\{\frac{c n \varepsilon_n^2}{2}\right\} + \exp\left\{-\frac{c n \varepsilon_n^2}{2}\right\}\mathfrak{A}_n,
\end{align}
leading to the following bound for $\mathfrak{A}_n$ as:
\begin{align}\label{eq:hell-risk-lemma-11}
    \mathfrak{A}_n \leq \frac{1}{1 - \exp\left\{-\frac{c n \varepsilon_n^2}{2}\right\}}\exp\left\{\frac{c n \varepsilon_n^2}{2}\right\} \leq 2 \exp\left\{\frac{c n \varepsilon_n^2}{2}\right\}
\end{align}
Consequently, using the definition of $\mathfrak{A}_n$, we get:
\begin{multline}\label{eq:hell-risk-lemma-12}
    \frac{1}{Q^{\star}(\mathcal{F}_{n, \varepsilon})}\int_{\mathcal{F}_{n, \varepsilon}}\left(\ell_n(\theta, \theta_0) + c n \mathcal{H}^{2}(\theta\parallel \theta_0) \mathds{1}(\mathcal{H}^{2}(\theta \parallel \theta_0) \geq \varepsilon^2) - \log \left(\frac{1}{\eta}\right)\right)q^{\star}(\theta)d\theta\\
    -\mathrm{KL}(q^{\star}_{\mathcal{F}_{n, \varepsilon}}\parallel \pi_{\mathcal{F}_{n, \varepsilon}}) \leq \frac{c n \varepsilon_n^2}{2} + \log 2.
\end{multline}
Rearranging terms, we obtain:

\begin{multline}\label{eq:hell-risk-lemma-13}
    c n \int_{\theta \in \mathcal{F}_{n, \varepsilon}, \mathcal{H}^{2}(\theta \parallel \theta_0) \geq \varepsilon^2}\mathcal{H}^{2}(\theta\parallel \theta_0) q^{\star}(\theta)d\theta - Q^{\star}(\mathcal{F}_{n, \varepsilon})\mathrm{KL}(q^{\star}_{\mathcal{F}_{n, \varepsilon}}\parallel \pi_{\mathcal{F}_{n, \varepsilon}})\\
    \leq \int_{\mathcal{F}_{n, \varepsilon}}-\ell_{n}(\theta, \theta_0)q^{\star}(\theta)d\theta + \left[\frac{c n \varepsilon_n^2}{2} + \log 2\right]Q^{\star}(\mathcal{F}_{n, \varepsilon}).
\end{multline}
Similarly, for each $\theta \in \mathcal{F}_{n, \varepsilon}^{c}$, from the identity $\mathbb{E}_{\theta_0}\left[\exp\left\{\ell_{n}(\theta, \theta_0)\right\}\right] = 1$ and Lemma \ref{lemma-variational-inequality}, we can obtain for any measure $\rho \ll \pi_{\mathcal{F}_{n, \varepsilon}^{c}}$:
\begin{align}\label{eq:hell-risk-lemma-14}
\mathbb{E}_{\theta_0}\left[\exp\left\{\int_{\mathcal{F}_{n, \varepsilon}^{c}}\left(\ell_n(\theta, \theta_0) - \log \left(\frac{1}{\eta}\right)\right)\rho(\theta)d\theta - \mathrm{KL}(\rho \parallel \pi_{\mathcal{F}^{c}_{n, \varepsilon}})\right\}\right] \leq \eta.
\end{align}
Take $\rho$ to be the restriction $q^{\star}_{\mathcal{F}^{c}_{n, \varepsilon}}$ of $q^{\star}$ over $\mathcal{F}^{c}_{n, \varepsilon}$, and let the corresponding probability measures be $Q^{\star}_{\mathcal{F}^{c}_{n, \varepsilon}}$ and $Q^{\star}$. With $\eta = \exp\left\{-c n \varepsilon_n^2\right\}$, we get with $\mathbb{P}_{\theta_0}$-probability at least $\left(1 - 2\exp\left\{-\tfrac{c n \varepsilon_n^2}{2}\right\}\right)$:
\begin{align}\label{eq:hell-risk-lemma-15}
    \frac{1}{Q^{\star}(\mathcal{F}_{n, \varepsilon}^{c})}\left\{\int_{\mathcal{F}_{n, \varepsilon}^{c}}\ell_{n}(\theta, \theta_0)q^{\star}(\theta)d\theta -\mathrm{KL}(q^{\star}_{\mathcal{F}_{n, \varepsilon}^{c}}\parallel \pi_{\mathcal{F}_{n, \varepsilon}^{c}})\right\} \leq \frac{c n \varepsilon_n^2}{2},
\end{align}
which implies:
\begin{align}\label{eq:hell-risk-lemma-16}
    0 \leq \int_{\mathcal{F}_{n, \varepsilon}^{c}}-\ell(\theta, \theta_0)q^{\star}(\theta)d\theta + Q^{\star}(\mathcal{F}_{n, \varepsilon}^{c})\mathrm{KL}(q^{\star}_{\mathcal{F}_{n, \varepsilon}^{c}}\parallel \pi_{\mathcal{F}_{n, \varepsilon}^{c}}) + \left[\frac{c n \varepsilon_n^2}{2}+\log 2\right]Q^{\star}(\mathcal{F}_{n, \varepsilon}^{c}).
\end{align}
Finally, by combining \eqref{eq:hell-risk-lemma-13} and \eqref{eq:hell-risk-lemma-16}, and using the identity:
\begin{align}\label{eq:hell-risk-lemma-17}
    \begin{split}
        \mathrm{KL}(q^{\star}\parallel \pi) &= \int q^{\star}(\theta) \log \frac{q^{\star}(\theta)}{\pi(\theta)}d\theta\\
        &= Q^{\star}(\mathcal{F}_{n, \varepsilon})\int_{\mathcal{F}_{n, \varepsilon}}q^{\star}_{\mathcal{F}_{n, \varepsilon}}(\theta)\log \frac{q^{\star}_{\mathcal{F}_{n, \varepsilon}}(\theta)}{\pi_{\mathcal{F}_{n, \varepsilon}}(\theta)}d\theta\\
        &\quad + Q^{\star}(\mathcal{F}_{n, \varepsilon}^{c})\int_{\mathcal{F}_{n, \varepsilon}^{c}}q^{\star}_{\mathcal{F}_{n, \varepsilon}^{c}}(\theta)\log \frac{q^{\star}_{\mathcal{F}_{n, \varepsilon}^c}(\theta)}{\pi_{\mathcal{F}_{n, \varepsilon}^{c}}(\theta)}d\theta\\
        &\quad + Q^{\star}(\mathcal{F}_{n, \varepsilon}^{c})\log \frac{Q^{\star}(\mathcal{F}_{n, \varepsilon}^{c})}{\Pi(\mathcal{F}_{n, \varepsilon}^{c})} + (1 - Q^{\star}(\mathcal{F}_{n, \varepsilon}^{c}))\log\frac{1 - Q^{\star}(\mathcal{F}_{n, \varepsilon}^{c})}{1 - \Pi(\mathcal{F}_{n, \varepsilon}^{c})},
    \end{split}
\end{align}
we have that with $\mathbb{P}_{\theta_0}$-probability at least $\left(1 - 2\exp\left\{-\tfrac{c n \varepsilon_n^2}{2}\right\}\right)$:
\begin{align}
\label{eq:hell-risk-lemma-18}
\begin{split}
    &c n \int_{\theta \in \mathcal{F}_{n, \varepsilon}, \mathcal{H}^{2}(\theta \parallel \theta_0) \geq \varepsilon^2}\mathcal{H}^{2}(\theta\parallel \theta_0)q^{\star}(\theta)d\theta + Q^{\star}(\mathcal{F}_{n, \varepsilon}^{c})\log \frac{Q^{\star}(\mathcal{F}_{n, \varepsilon}^{c})}{\Pi(\mathcal{F}_{n, \varepsilon}^{c})}\\
    &\qquad \qquad + (1 - Q^{\star}(\mathcal{F}_{n, \varepsilon}^{c}))\log\frac{1 - Q^{\star}(\mathcal{F}_{n, \varepsilon}^{c})}{1 - \Pi(\mathcal{F}_{n, \varepsilon}^{c})}\\
    &\leq \int - \ell_{n}(\theta, \theta_0)q^{\star}(\theta)d\theta + \mathrm{KL}(q^{\star}\parallel \pi) + \frac{c n \varepsilon_n^2}{2} + \log 2\\
    &= -\int \log \frac{p(y\mid \mathbf{X}, \theta)}{p(y\mid \mathbf{X}, \theta_0)}q^{\star}(\theta)d\theta + \mathrm{KL}(q^{\star}\parallel \pi) + \frac{c n \varepsilon_n^2}{2} + \log 2\\
    &\stackrel{\#}{\leq} -\int \log \frac{\varphi(y\mid \mathbf{X}, \theta, \xi^{\star})}{p(y\mid \mathbf{X}, \theta_0)}q^{\star}(\theta)d\theta + \mathrm{KL}(q^{\star}\parallel \pi) + \frac{c n \varepsilon_n^2}{2} + \log 2\\
    &= \mathcal{I}(q^{\star}, \xi^{\star}) +\mathrm{KL}(q^{\star}\parallel \pi) + \frac{c n \varepsilon_n^2}{2} + \log 2,
\end{split}
\end{align}
where the inequality \# in \eqref{eq:hell-risk-lemma-18} above follows from the fact that $\varphi(y\mid \mathbf{X}, \theta, \xi^{\star})$ is a minorizer of $p(y\mid \mathbf{X}, \theta)$. Using Lemma \ref{lemma-variational-optimizer} for $\alpha = 1$, for any choice of $\tilde{q}$ and $\tilde{\xi} \in \mathbb{R}_{+}^n$, we have:
\begin{align}\label{eq:hell-risk-lemma-19}
\mathcal{I}(q^{\star}, \xi^{\star}) + \mathrm{KL}(q^{\star} \parallel \pi) \leq \mathcal{I}(\tilde{q}, \tilde{\xi}) + \mathrm{KL}(\tilde{q} \parallel \pi).   
\end{align}
Taking $\tilde{q}$ as the following (with $\tilde{\xi}$ being arbitrary) as in the proof of Lemma \ref{lemma:majorization-Renyi}:
%
\begin{align}\label{eq:hell-risk-lemma-20}
    \begin{split}
        \tilde{q}(\theta) := \frac{\pi(\theta)}{\pi\left(\mathcal{B}_n(\theta_0, \varepsilon_n)\right)}\mathds{1}_{\mathcal{B}_n(\theta_0, \varepsilon_n)}(\theta),\quad \text{for all}\hspace{1mm}\theta\in \Theta,
    \end{split}
\end{align}
we obtain:
\begin{equation}\label{eq:hell-risk-lemma-21}
\mathrm{KL}(\tilde{q}\parallel\pi) = - \log \pi(\mathcal{B}_n(\theta_0, \varepsilon_n)),
\end{equation}
and with $\mathbb{P}_{\theta_0}$-probability at least $1 - [(D-1)^2n\varepsilon_n^2]^{-1}$, $\mathcal{I}(\tilde{q} , \tilde{\xi})$ in \eqref{eq:hell-risk-lemma-19} 
satisfies the following inequality:
\begin{align}\label{eq:hell-risk-lemma-22}
    \mathcal{I}(\tilde{q} , \tilde{\xi}) = -\int_{\theta\in \Theta}\tilde{q}(\theta)\log\frac{\varphi(y\mid \mathbf{X}, \theta, \tilde{\xi})}{p(y\mid \mathbf{X}, \theta_0)}d\theta \leq Dn\varepsilon_n^2,
\end{align}
for an arbitrary constant $D>1$.
Setting $\mathcal{F}_{n, \varepsilon} = \Theta$ along with using \eqref{eq:hell-risk-lemma-18}, \eqref{eq:hell-risk-lemma-19}, \eqref{eq:hell-risk-lemma-21}, and \eqref{eq:hell-risk-lemma-22}, we get:
\begin{align}\label{eq:hell-risk-lemma-23}
\begin{split}
    c\int_{\Theta}\mathcal{H}^{2}(\theta \parallel \theta_0) q^{\star}(\theta)d\theta &\leq c\varepsilon_n^2 + c\int_{\mathcal{H}^{2}(\theta\parallel \theta_0) \geq \varepsilon_n^2}h^{2}(\theta\parallel \theta_0)q^{\star}(\theta)d\theta\\
    &\leq \left(D + \frac{3c}{2}\right)\varepsilon_n^{2} - \frac{1}{n}\log \pi(\mathcal{B}_n(\theta_0, \varepsilon_n)) + \frac{1}{n}\log 2,
\end{split}
\end{align}
with $\mathbb{P}_{\theta_0}$-probability at least $\left(1 - [(D-1)^2n\varepsilon_n^2]^{-1} - 2\exp\left\{-\tfrac{c n \varepsilon_n^2}{2}\right\}\right)$. 
We now derive an upper bound for the local Bayesian complexity, $-\log \pi(\mathcal{B}_n(\theta_0, \varepsilon_n))$ in \eqref{eq:hell-risk-lemma-23}, separately for Type I and Type II $\ssg$ likelihoods.

\textit{For Type I $\ssg$ likelihoods}. From the proof of Theorem~\ref{theorem-alpha-less-1-variational-risk-bound-Type-I} in Section \ref{app:proof-theorem-vb-alpha-less-1-type-1}, setting $\tilde{\xi}_i = \tau_0|y_i - \mathbf{x}_i^{\top}\beta_0|$ for $i\in [n]$, we have:
\begin{align}\label{eq:hell-risk-lemma-24}
    \begin{split}
        &-\log \pi(\mathcal{B}_n(\theta_0, \varepsilon_n)) 
        \leq -\log\left(\frac{C(a,b,\tau_0)\varepsilon_n^3}{\mathsf{Q}(\mathbf{X}, \tau_0)}\right)  + \log\left[2^{\frac{p}{2}}\Gamma\left(\frac{p}{2}+1\right)\left(\lambda_{\mathrm{max}}(\Sigma)\right)^{\frac{p}{2}}\right]\\
        &\qquad - \frac{p}{2}\log t_{-} + p\log\left(\frac{\mathsf{Q}(\mathbf{X}, \tau_0)}{\varepsilon_n^2}\right) + \frac{t_{+}}{2}\left(\Delta^2 + \frac{\varepsilon_n^4}{\mathsf{Q}^{2}(\mathbf{X}, \tau_0)}\right),
    \end{split}
\end{align}
where $\mathsf{Q}(\mathbf{X}, \tau_0) = \max\{1, 50K\}\max\{\tau_0^2, \tau_0^{-4}\}\max\{1, \mathscr{E}_1, \mathscr{E}_2\}\max\{1, \lVert \mathbf X \rVert_{2, \infty}^2\}$, 
$t_+ = \tau_0^2 + \mathsf{Q}^{-1}(\mathbf{X}, \tau_0)\varepsilon_n^3$, $t_{-} = \max\{\tau_0^2 - \mathsf{Q}^{-1}(\mathbf{X}, \tau_0)\varepsilon_n^3, 0\}$, $\Delta^{2} = \lVert \Sigma^{-\frac{1}{2}}(\beta_0-\mu)\rVert_{2}^{2}$, and $C(a,b,\tau_0) = \frac{b}{2\Gamma\left(\frac{a}{2}\right)}\left(\frac{b \tau_0^2}{2}\right)^{\frac{a}{2}-1}\exp\left\{-\frac{b\tau_0^2}{2}\right\}$. Hence, using \eqref{eq:hell-risk-lemma-24} in \eqref{eq:hell-risk-lemma-23}, we obtain with $\mathbb{P}_{\theta_0}$-probability at least $\left(1 - 3\varepsilon_n - [(D-1)^2n\varepsilon_n^2]^{-1} - 2\exp\left\{-\tfrac{c n \varepsilon_n^2}{2}\right\}\right)$:
\begin{equation}\label{eq:hell-risk-lemma-25}
\boxed{c\int_{\Theta}\mathcal{H}^{2}(\theta \parallel \theta_0) q^{\star}(\theta)d\theta \leq \left(D + \frac{3c}{2}\right)\varepsilon_n^2 + \frac{p\log p}{n} + \frac{C_3 p}{n}\log\left(\frac{1}{\varepsilon_n}\right)},
\end{equation}
for arbitrary constants $c\in \mathbb{R}^{+}, D>1$, and $C_3\in \mathbb{R}^{+}$, where $C_3$ is given by:
\begin{equation}\label{eq:hell-risk-lemma-26}
C_3 \leq 5 + \log\left(\frac{\sqrt{16 \lambda_{\max}(\Sigma)}}{\tau_0}C(a, b, \tau_0)\right) + \left(\tau_0^2 + \mathsf{Q}^{-1}(\mathbf{X}, \tau_0)\right)\left(\Delta^2 + \mathsf{Q}^{-2}(\mathbf{X}, \tau_0)\right),
\end{equation}
which depends only on prior hyperparameters $(\mu, \Sigma, a, b)$, design matrix $\mathbf{X}$, and true parameter $\theta_0$.

\textit{For Type II $\ssg$ likelihoods}. Here, $\theta = \beta$. From the proof of Theorem~\ref{theorem-alpha-less-1-variational-risk-bound-Type-II} in Section \ref{app:proof-theorem-vb-alpha-less-1-type-2}, setting $\tilde{\xi}_i = |\mathbf{x}_i^{\top}\beta_0|$ for $i\in [n]$, we have:
\begin{equation}\label{eq:hell-risk-lemma-27}
\begin{split}
-\log (\pi\left(\mathcal{B}_n(\beta_0, \varepsilon_n)\right))
&\leq \frac{1}{2}\left(\Delta^2 + \frac{\varepsilon_n^6}{\lambda_{\mathrm{max}}(\Sigma)\mathsf{Q}^{2}(\mathbf{X}, \beta_0)}\right) + p\log\left(\frac{\mathsf{Q}(\mathbf{X},\beta_0)}{\varepsilon_n^3}\right)\\
&\qquad + \log\left(2^{\frac{p}{2}}\Gamma\left(\frac{p}{2}+1\right)(\lambda_{\mathrm{max}}(\Sigma))^{\frac{p}{2}}\right),
\end{split}
\end{equation}
where 
$\mathsf{Q}(\mathbf{X}, \beta_0) = \max\left\{4\lVert\mathbf{X}\rVert_{2, \infty}, 8\lVert\mathbf{X}\rVert_{2, \infty}^{2}\lVert\beta_0\rVert_{2}\right\}\mathbf{m}^{\star}\left(1 + \exp\{\lVert\mathbf{X}\rVert_{2, \infty}\lVert\beta_0\rVert_{2}\}\right)$ and $\Delta^{2} = \lVert \Sigma^{-\frac{1}{2}}(\beta_0-\mu)\rVert_{2}^{2}$. Hence, using \eqref{eq:hell-risk-lemma-27} in \eqref{eq:hell-risk-lemma-23}, we obtain with $\mathbb{P}_{\theta_0}$-probability at least $\left(1 - 2\varepsilon_n - [(D-1)^2n\varepsilon_n^2]^{-1} - 2\exp\left\{-\tfrac{c n \varepsilon_n^2}{2}\right\}\right)$:
\begin{equation}\label{eq:hell-risk-lemma-28}
\boxed{c\int_{\Theta}\mathcal{H}^{2}(\theta \parallel \theta_0) q^{\star}(\theta)d\theta \leq \left(D + \frac{3c}{2}\right)\varepsilon_n^2 + \frac{p\log p}{n} + \frac{C_4 p}{n}\log\left(\frac{1}{\varepsilon_n}\right)},
\end{equation}
for arbitrary constants $c \in \mathbb{R}^{+}, D>1$, and $C_4\in \mathbb{R}^{+}$, where $C_4$ is given by:
\begin{align}\label{eq:hell-risk-lemma-29}
C_4 \leq \frac{1}{2}\left(6 + 3\log 2 + \log \lambda_{\max}(\Sigma) + 2\log \mathsf{Q}( \mathbf{X}, \beta_0) + \Delta^2 + \left(\lambda_{\max}(\Sigma)\mathsf{Q}^2(\mathbf{X}, \beta_0)\right)^{-1}\right),
\end{align}
which depends only on prior hyperparameters $(\mu, \Sigma)$, design matrix $\mathbf{X}$, and true parameter $\beta_0$.

\end{proof}

\newpage

\section{Empirical Analysis of Variational Risk Bounds under \texorpdfstring{$\alpha$}{alpha}-R\'{e}nyi Divergence}\label{sec:empirical-evidence-risk-bounds}

We conduct an empirical investigation of the variational risk bounds derived in Sections \ref{app:proof-theorem-vb-alpha-less-1-type-1} and \ref{app:proof-theorem-vb-alpha-less-1-type-2} under the $\alpha$-R\'{e}nyi divergence framework for both Type I and Type II $\mathsf{SSG}$ likelihoods. As representative examples of these two classes, we consider the Laplace and Negative-Binomial models, corresponding to their respective formulations in \eqref{eq:param-laplace} and \eqref{eq:param-negbin}. Recall that, the $\alpha$-R\'{e}nyi divergence, for $\alpha \in (0,1)$, is given as:
\begin{align}\label{eq:alpha-renyi-divergence-supp}
\begin{split}
    D_{\alpha}(\theta, \theta_0) &:= \frac{1}{n(\alpha-1)}\log \int \left\{p(y\mid \mathbf{X}, \theta)\right\}^{\alpha}\left\{p(y\mid \mathbf{X}, \theta_0)\right\}^{1-\alpha}dy\\
    &= \frac{1}{n}\sum_{i\in [n]}D_{\alpha}(p_{\theta, i}, p_{\theta_0, i}).
\end{split}
\end{align}

\subsection{Type I Laplace \texorpdfstring{$\mathsf{SSG}$}{SSG} Likelihood}

Consider two Laplace distributions with densities $p_{\theta, i}$ and $p_{\theta_0, i}$, independently over $i\in [n]$, both having the parameterization in \eqref{eq:param-laplace}, where $\theta = (\beta^{\top}, \tau)^{\top}$ and $\theta_0 = (\beta_0^{\top}, \tau_0)^{\top}$. Following~\cite{GIL2013124}, $D_{\alpha}(p_{\theta, i}, p_{\theta_0, i})$ in \eqref{eq:alpha-renyi-divergence-supp}, for $\alpha \in (0,1)$, is obtained as:
\begin{align}\label{eq:alpha-renyi-divergence-single}
    \begin{split}
        D_{\alpha}(p_{\theta, i}, p_{\theta_0, i}) &:= \frac{1}{\alpha-1}\log \left[\left(\frac{\tau}{2}\right)^{\alpha}\left(\frac{\tau_0}{2}\right)^{1-\alpha}\left(\Psi_1(\mathbf{x}_i, \theta, \theta_0, \alpha) + \Psi_2(\mathbf{x}_i, \theta, \theta_0, \alpha)\right)\right],
    \end{split}
\end{align}
where $\Psi_1(\mathbf{x}_i, \theta, \theta_0, \alpha)$ and $\Psi_2(\mathbf{x}_i, \theta, \theta_0, \alpha)$ are:
\begin{align}\label{eq:alpha-renyi-divergence-single-components}
    \begin{split}
        \Psi_1(\mathbf{x}_i, \theta, \theta_0, \alpha) &:= \frac{\exp\left\{-\tau_0(1-\alpha)|\mathbf{x}_i^{\top}(\beta-\beta_0)|\right\} + \exp\left\{-\tau\alpha |\mathbf{x}_i^{\top}(\beta-\beta_0)|\right\}}{\alpha \tau + (1-\alpha)\tau_0}\\
        \Psi_2(\mathbf{x}_i, \theta, \theta_0, \alpha) &:= \frac{\exp\left\{-\tau_0(1-\alpha)|\mathbf{x}_i^{\top}(\beta-\beta_0)|\right\} - \exp\left\{-\tau\alpha |\mathbf{x}_i^{\top}(\beta-\beta_0)|\right\}}{\alpha \tau - (1-\alpha)\tau_0}.
    \end{split}
\end{align}

From \eqref{eq:param-laplace}, for an unknown scale parameter $\tau \in \mathbb{R}^{+}$, the Laplace $\mathsf{SSG}$ likelihood for $y_i\in \mathbb{R}$, denoted as $y_i\mid \theta \sim \mathrm{Laplace}(\mu_i = \mathbf{x}_i^{\top}\beta, \tau)$, is:
\begin{align}\label{eq:param-Laplace-theory-validation}
    p(y_i\mid \mathbf{x}_i, \theta) = \tfrac{\tau}{2}\exp\left\{-\tau|y_i - \mathbf{x}_i^{\top}\beta|\right\},
\end{align}
independently for $i\in [n]$, where $\theta = (\beta^{\top}, \tau^{2})^{\top}$. The dataset $\mathcal{D}_n := \{(\mathbf{x}_i, y_i): i\in [n]\}$ is simulated repeatedly for $\texttt{nreps}=50$ replications using the model parameters in \eqref{eq:param-Laplace-theory-validation}, configured as: the true precision is fixed at $\tau_0^2 = 8$, the true regression coefficients are drawn as $\beta_0 \sim \mathcal{N}_{p+1}(0, I_{p+1})$, and the design matrix $\mathbf{X} = (\mathbf{x}_1, \ldots, \mathbf{x}_n)^{\top} \in \mathbb{R}^{n\times \overline{p+1}}$ has entries $x_{i1}=1$ and $x_{ij}$'s generated independently from the standard Gaussian distribution for $j=2, \ldots, p+1$.

For a fixed number of features $p=8$, data are simulated for two choices of sample size, $n=2000$ and $n = 10000$. For each configuration and $\alpha \in \{0.2, 0.3, 0.4, 0.6, 0.8, 0.95\}$, the $\tssg$ algorithm in Algorithm~\ref{alg:tavie-em} of the main manuscript is applied to obtain the optimal variational parameter estimate $\xi^{\star}$, which is then used to evaluate the integrated variational risk, corresponding to the left hand side of \eqref{eq:proof-theorem1-final-TypeI-bound}, based on expressions in \eqref{eq:alpha-renyi-divergence-supp}-\eqref{eq:param-Laplace-theory-validation}. The integral with respect to the optimal variational posterior distribution is approximated using Monte Carlo simulations with $\texttt{n\_MC} = 100$ samples.

\begin{figure}[H]
    \centering
    \includegraphics[width=0.8\linewidth]{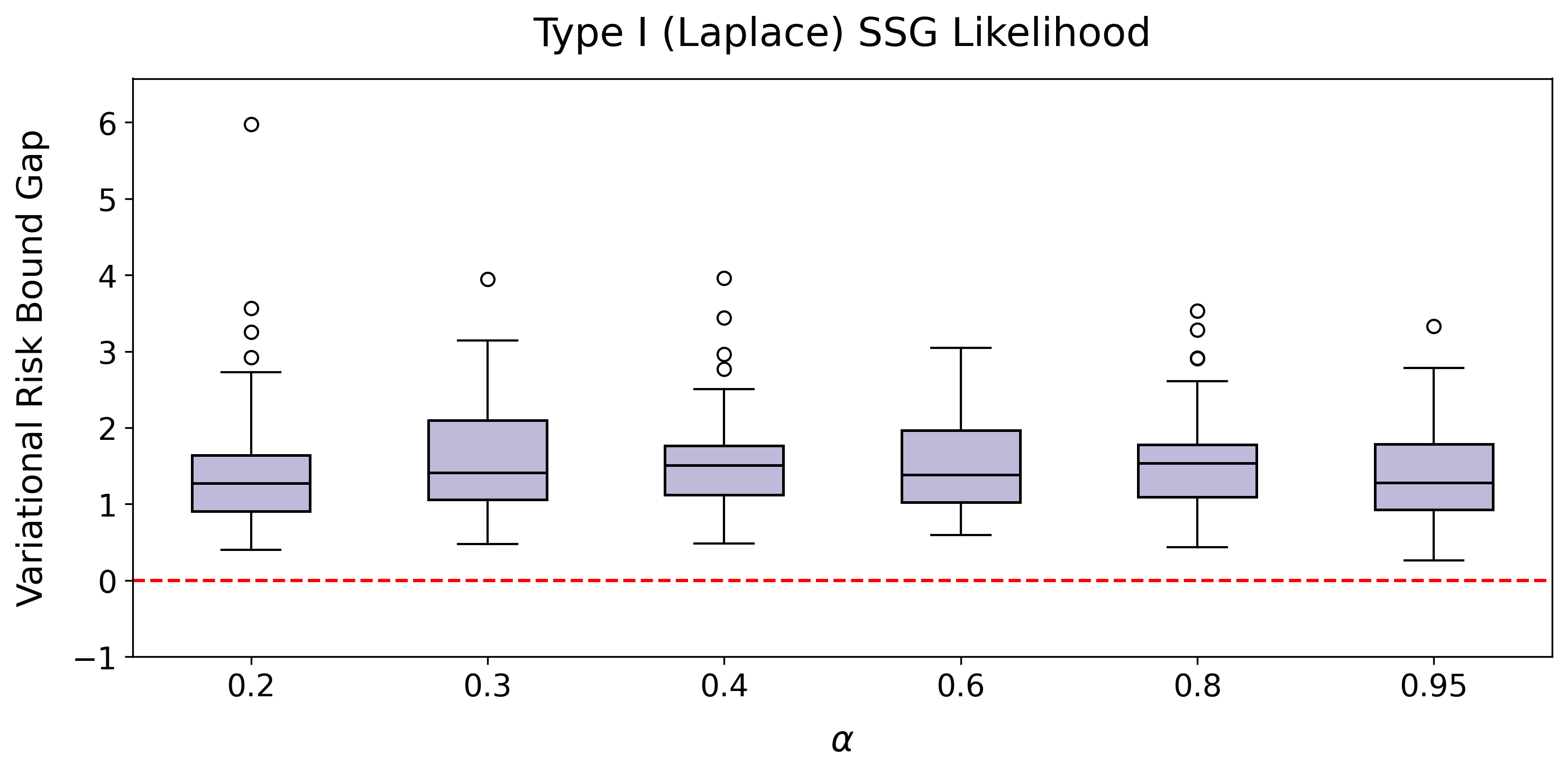}
    \caption{\footnotesize{Variational risk bound gap under $\alpha$-R\'{e}nyi divergence for Laplace Type I $\mathsf{SSG}$ likelihood ($n = 2000$).}}
    \label{fig:variational_risk_bound_gap_Laplace_n_2000_p_8}
\end{figure}

\begin{figure}[H]
    \centering
    \includegraphics[width=0.8\linewidth]{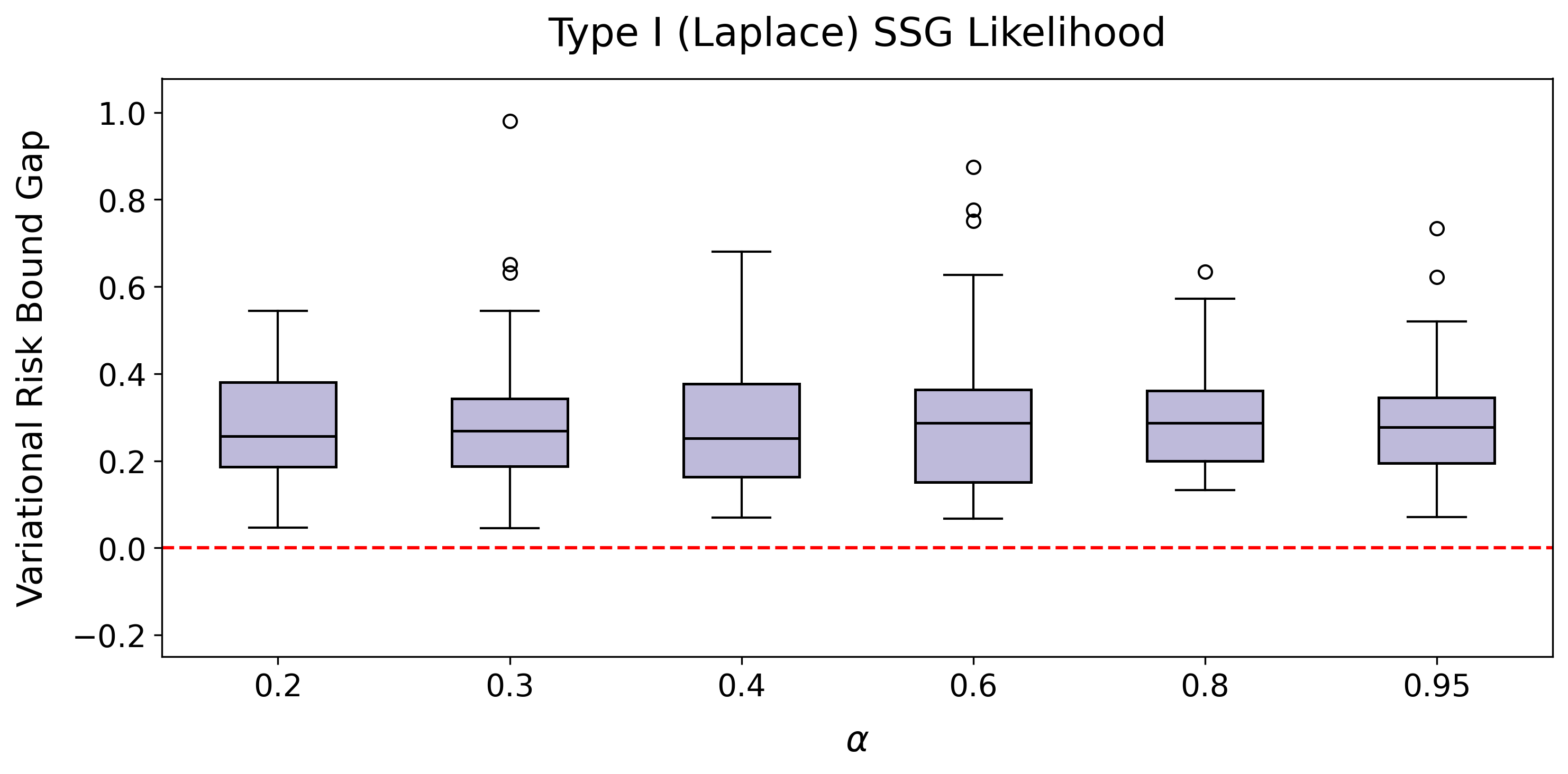}
    \caption{\footnotesize{Variational risk bound gap under $\alpha$-R\'{e}nyi divergence for Laplace Type I $\mathsf{SSG}$ likelihood ($n = 10000$).}}
    \label{fig:variational_risk_bound_gap_Laplace_n_10000_p_8}
\end{figure}

The right hand side of the variational risk bound in \eqref{eq:proof-theorem1-final-TypeI-bound} is computed accordingly, and the variational risk bound gap, defined as the difference between the right and left hand sides of \eqref{eq:proof-theorem1-final-TypeI-bound}, is evaluated. Box plots of the empirical bound gap across two sample sizes are presented in Figures \ref{fig:variational_risk_bound_gap_Laplace_n_2000_p_8} and \ref{fig:variational_risk_bound_gap_Laplace_n_10000_p_8}, where the positive gap values demonstrate empirical validation of the theoretical variational risk bound in \eqref{eq:proof-theorem1-final-TypeI-bound}.

\subsection{Type II Negative-Binomial \texorpdfstring{$\mathsf{SSG}$}{SSG} Likelihood}

Consider two Negative-Binomial distributions with probability mass functions $p_{\beta, i}$ and $p_{\beta_0, i}$, independently over $i\in [n]$, both having the parameterization in \eqref{eq:param-negbin}. $D_{\alpha}(p_{\beta, i}, p_{\beta_0, i})$ in \eqref{eq:alpha-renyi-divergence-supp}, for $\alpha \in (0,1)$, is obtained as:
\begin{align}\label{eq:alpha-renyi-divergence-single-nb}
    \begin{split}
        D_{\alpha}(p_{\beta, i}, p_{\beta_0, i}) &:= \frac{m_i}{\alpha-1}\log \left[\frac{\sigma(\mathbf{x}_i^{\top}\beta)^{\alpha}\sigma(\mathbf{x}_i^{\top}\beta_0)^{1-\alpha}}{1 - (1 - \sigma(\mathbf{x}_i^{\top}\beta))^{\alpha}(1 - \sigma(\mathbf{x}_i^{\top}\beta_0))^{1-\alpha}}\right],
    \end{split}
\end{align}
where $\sigma(t) = [1 + \exp\left\{-t\right\}]^{-1}$ is the sigmoid function. From \eqref{eq:param-negbin}, for a fixed size parameter $m_i > 0$, the Negative-Binomial $\mathsf{SSG}$ likelihood for $y_i\in \mathbb{N} \cup \{0\}$, denoted as $y_i\mid \beta \sim \mathrm{NB}(m_i, p_i = \sigma(\mathbf{x}_i^{\top}\beta))$, is:
\begin{align}\label{eq:param-NB-theory-validation}
    p(y_i\mid \mathbf{x}_i, \beta) = \binom{y_i + m_i - 1}{m_i}p_i^{m_i}(1-p_i)^{y_i},
\end{align}
independently for $i\in [n]$. The dataset $\mathcal{D}_n := \{(\mathbf{x}_i, y_i): i\in [n]\}$ is simulated repeatedly for $\texttt{nreps}=50$ replications using the model parameters in \eqref{eq:param-NB-theory-validation}, configured as: $m_i = m = 10$ for $i \in [n]$, the true regression coefficients are drawn as $\beta_0 \sim \mathcal{N}_{p+1}(0, \sqrt{0.5}I_{p+1})$, and the design matrix $\mathbf{X} = (\mathbf{x}_1, \ldots, \mathbf{x}_n)^{\top} \in \mathbb{R}^{n\times \overline{p+1}}$ has entries $x_{i1}=1$ and $x_{ij}$'s generated independently from the standard Gaussian distribution for $j=2, \ldots, p+1$.

For a fixed number of features $p=8$, data are simulated for two choices of sample size, $n=2000$ and $n = 10000$. For each configuration and $\alpha \in \{0.2, 0.3, 0.4, 0.6, 0.8, 0.95\}$, the $\tssg$ algorithm in Algorithm~\ref{alg:tavie-em} of the main manuscript is applied to obtain the optimal variational parameter estimate $\xi^{\star}$, which is then used to evaluate the integrated variational risk, corresponding to the left hand side of \eqref{eq:proof-theorem1-final-TypeII-bound}, based on expressions in \eqref{eq:alpha-renyi-divergence-supp}, \eqref{eq:alpha-renyi-divergence-single-nb}, and  \eqref{eq:param-NB-theory-validation}. The integral with respect to the optimal variational posterior distribution is approximated using Monte Carlo simulations with $\texttt{n\_MC} = 100$ samples.

\begin{figure}[!htp]
    \centering
    \includegraphics[width=0.8\linewidth]{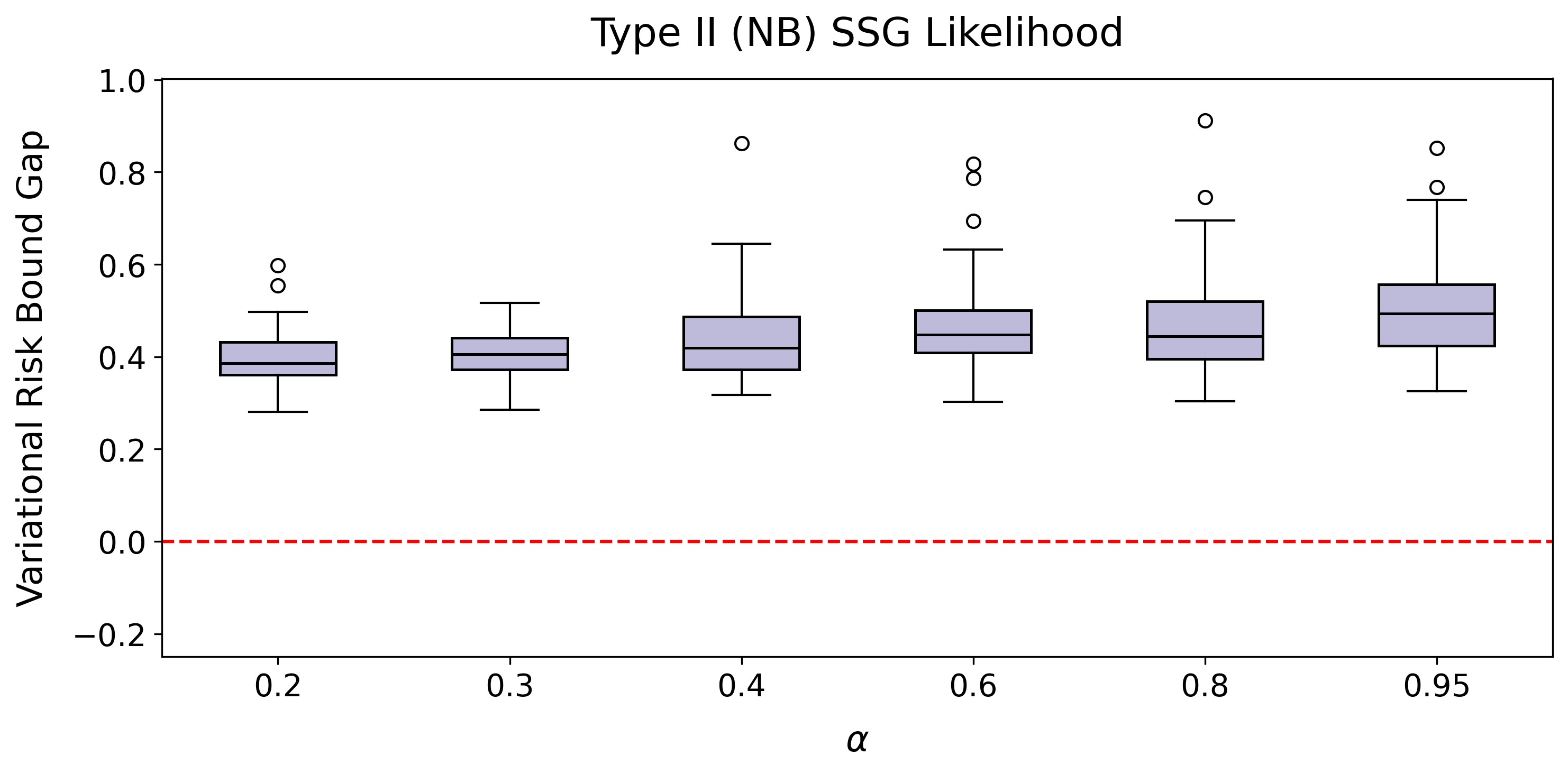}
    \caption{\footnotesize{Variational risk bound gap under $\alpha$-R\'{e}nyi divergence for Negative-Binomial Type II $\mathsf{SSG}$ likelihood ($n = 2000$).}}
    \label{fig:variational_risk_bound_gap_NB_n_2000_p_8}
\end{figure}

\begin{figure}[!htp]
    \centering
    \includegraphics[width=0.8\linewidth]{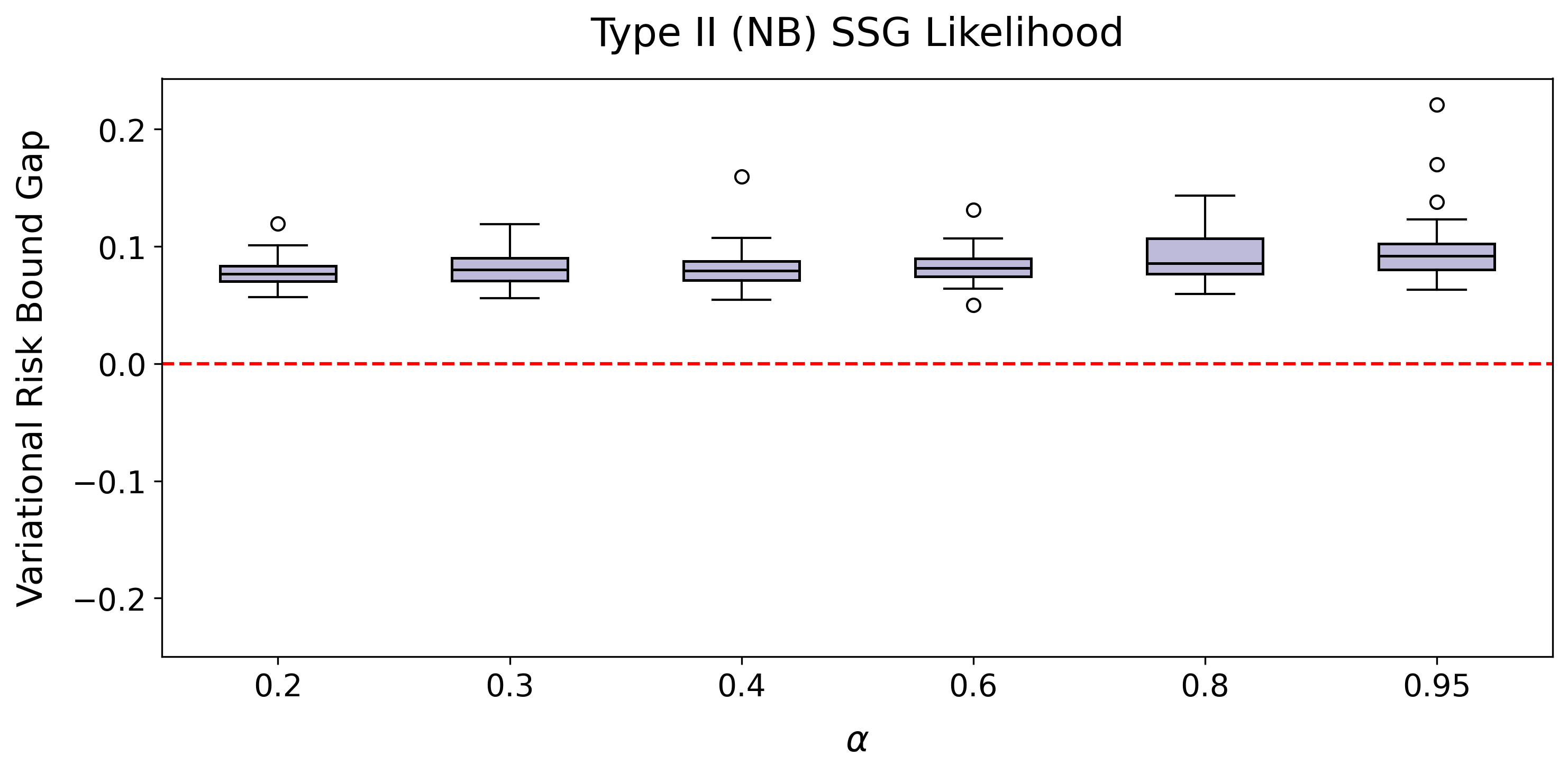}
    \caption{\footnotesize{Variational risk bound gap under $\alpha$-R\'{e}nyi divergence for Negative-Binomial Type II $\mathsf{SSG}$ likelihood ($n = 10000$).}}
    \label{fig:variational_risk_bound_gap_NB_n_10000_p_8}
\end{figure}

The right hand side of the variational risk bound in \eqref{eq:proof-theorem1-final-TypeII-bound} is computed accordingly, and the variational risk bound gap, defined as the difference between the right and left hand sides of \eqref{eq:proof-theorem1-final-TypeII-bound}, is evaluated. Box plots of the empirical bound gap across two sample sizes are presented in Figures \ref{fig:variational_risk_bound_gap_NB_n_2000_p_8} and \ref{fig:variational_risk_bound_gap_NB_n_10000_p_8}, where the positive gap values demonstrate empirical validation of the theoretical variational risk bound in \eqref{eq:proof-theorem1-final-TypeII-bound}.

\newpage

\section{Experimental Settings of Competing Methods}\label{app-competing-methods-details}

This section outlines the implementation details and experimental configurations of the competing variational inference (VI) and Monte Carlo (MC) algorithms used for comparison with $\tssg$ in Section~\ref{sec:TAVIE-SSG-applications} of the main manuscript.

\begin{itemize}
    \item \textbf{Automatic Differentiation Variational Inference} (ADVI): We employ two variants of this black-box variational inference (BBVI) algorithm viz., ADVI mean-field (MF) and ADVI full-rank (FR)~\citep{advi}. A Gaussian variational family is specified for the regression parameter $\beta \in \mathbb{R}^{p}$, and a Gamma variational family for the scale parameter $\tau^{2} = \sigma^{-2} \in \mathbb{R}^{+}$, i.e., $q(\tau^2) \equiv \mathcal{G}(a, b)$ (whenever applicable to the $\mathsf{SSG}$ likelihood, such as Laplace, Student's-$t$, or scale-mixture of Gaussian models). In the ADVI FR variant, a full covariance structure is adopted with $q(\beta) \equiv \mathcal{N}_{p}(\mu, LL^{\top})$, where $L$ is the Cholesky factor. The ADVI MF variant constrains $q(\beta)$ to have a diagonal covariance. Both variants use \texttt{AdamW} optimization~\citep{AdamW} with a learning rate of $10^{-3}$ and weight decay regularization. Prior hyperparameters are initialized as $\mu = 0$, $\Sigma = I$, $a=0.025$, and $b = 0.025$, respectively. Gradients are estimated via the reparameterization trick, and optimization terminates when the $\mathsf{ELBO}$ improvement falls below a tolerance of $10^{-9}$ for a patience of $10^{4}$ iterations. Particularly, for Bayesian quantile regression on U.S. 2000 Census data in Section~\ref{subsec:Census-data-study} of the main manuscript, the \texttt{AdamW} learning rate is set to $5\times 10^{-3}$ and maximum iterations for convergence is increased to $\texttt{max\_iter} = 5\times 10^4$.

    \item \textbf{Deterministic second-order ADVI} (DADVI): The DADVI algorithm~\citep{dadvi} is executed with its default configuration using $N=30$ gradient samples per iteration. The implementation is based on the \texttt{PyMC} framework~\citep{pymc2023peerj} and customized for each $\mathsf{SSG}$ model under study\footnote{The \texttt{Python} implementation of DADVI is available at \href{https://github.com/martiningram/dadvi}{\texttt{https://github.com/martiningram/dadvi}}. The corresponding \texttt{PyMC} model specifications for the $\mathsf{SSG}$ likelihoods are provided in the anonymous $\tssg$ Github repository, \href{https://anonymous.4open.science/r/__TAVIE-SSG__}{\texttt{https://anonymous.4open.science/r/\_\_TAVIE-SSG\_\_}}.}. We increase the number of gradient samples per iteration to $N=100$ for performing Bayesian quantile regression on the U.S. 2000 Census data in Section~\ref{subsec:Census-data-study} of the main manuscript.

    \item \textbf{Mean-field Variational Inference} (MFVI): For the Student's-$t$ Type I $\mathsf{SSG}$ likelihood, we benchmark against the MFVI algorithm for Student's-$t$ linear regression proposed by~\cite{mfvi-student}. The algorithm employs Gaussian and Inverse-Gamma variational families under mean-field assumptions for the regression and dispersion parameters, respectively. Prior hyperparameters are set as $\mu = 0$, $\Sigma = I$, $a=2$, and $b=2$. Convergence is declared when the relative change in parameter estimates falls below a tolerance of $10^{-9}$ or when the maximum iteration count (\texttt{max\_iter}$=500$) is reached.

    \item \textbf{\texttt{PyMC}'s No-U-Turn Monte Carlo Sampling} (NUTS): As a non-variational baseline, we employ NUTS implemented in \texttt{PyMC}~\citep{pymc}. The same customized \texttt{PyMC} $\mathsf{SSG}$ model used for DADVI is used here, with all NUTS parameters configured at their default settings.

    \item \textbf{\texttt{statsmodels}}: Bayesian quantile regression on the U.S. 2000 Census data in Section~\ref{subsec:Census-data-study} of the main manuscript is performed using the \texttt{QuantReg} module of \texttt{statsmodels}~\citep{seabold2010statsmodels}. Each quantile is modeled independently with the Powell optimization method, using a maximum of $2000$ iterations and a tolerance of $10^{-6}$.
\end{itemize}

\newpage

\section{Auxiliary Results for Student's-\texorpdfstring{$t$}{t} Type I \texorpdfstring{$\mathsf{SSG}$}{SSG} Likelihood in Section~\ref{subsec:sim-exp-student}}\label{app:additional-student-results}

Here, we present additional results comparing the performance of $\tssg$ under the Student’s-$t$ $\ssg$ likelihood ($\nu=5$) with competing methods across experiments E1 and E2, as outlined in Section~\ref{subsec:sim-exp-student} of the main manuscript. The reported metrics summarize the median mean-squared errors (MSEs) of $(\widehat{\beta}, \widehat{\tau}^{2})$ and computational runtimes over $100$ data repetitions. We also provide the first ($Q_1$) and third ($Q_3$) quartiles of the corresponding MSEs and runtimes. In addition, we include the $\mathsf{ELBO}$ convergence trajectories for $\tssg$ and the ADVI (MF/FR) variants to illustrate their optimization behavior.

\subsection{Performance Metrics for Student's-\texorpdfstring{$t$}{t} \texorpdfstring{$\mathsf{SSG}$}{SSG} Likelihood under Experiment E1}\label{app-avg-metrics-student-singlep-multin}

{\scriptsize
\setlength{\tabcolsep}{3pt}
\renewcommand{\arraystretch}{1.15}%
\begin{longtable}{llcccc}
\caption{\footnotesize{Comparison of $\tssg$ against MFVI, DADVI, ADVI (MF/FR), and \texttt{PyMC} (NUTS). Performance [median over $100$ repetitions; quartile range ($Q_1, Q_3$) in parentheses] for the Student's-$t$ $\mathsf{SSG}$ likelihood ($\nu=5$) under experiment E1. Top $3$ best performing algorithms are in \textbf{bold}.}}
\label{tab:student_singlep_multin}\\
\toprule
\toprule
 &  & \multicolumn{4}{c}{Sample Size ($n$)} \\
\cmidrule(lr){3-6}
Metric & Method & 200 & 500 & 1000 & 2000 \\
\midrule
\endfirsthead
\toprule
 &  & \multicolumn{4}{c}{Sample Size ($n$)} \\
\cmidrule(lr){3-6}
Metric & Method & 200 & 500 & 1000 & 2000 \\
\midrule
\endhead
\midrule
\multicolumn{6}{r}{Continued on next page} \\
\midrule
\endfoot
\bottomrule
\endlastfoot

\multirow{6}{*}{MSE of $\widehat{\beta}$}
& $\tssg$ 
& \makecell[c]{2.458e-03\\(2.372e-03, 2.544e-03)}
& \makecell[c]{7.453e-04\\(5.901e-04, 9.006e-04)}
& \makecell[c]{\textbf{3.864e-04}\\(3.638e-04, 4.091e-04)}
& \makecell[c]{\textbf{4.491e-04}\\(3.818e-04, 5.163e-04)} \\

& MFVI 
& \makecell[c]{\textbf{2.302e-03}\\(2.183e-03, 2.422e-03)}
& \makecell[c]{\textbf{7.094e-04}\\(5.559e-04, 8.628e-04)}
& \makecell[c]{\textbf{3.837e-04}\\(3.696e-04, 3.977e-04)}
& \makecell[c]{4.542e-04\\(3.831e-04, 5.252e-04)} \\

& DADVI 
& \makecell[c]{2.512e-03\\(2.392e-03, 2.631e-03)}
& \makecell[c]{\textbf{7.358e-04}\\(5.481e-04, 9.235e-04)}
& \makecell[c]{4.331e-04\\(4.080e-04, 4.581e-04)}
& \makecell[c]{\textbf{4.343e-04}\\(3.792e-04, 4.895e-04)} \\

& ADVI (MF) 
& \makecell[c]{\textbf{2.338e-03}\\(2.229e-03, 2.447e-03)}
& \makecell[c]{8.201e-04\\(7.001e-04, 9.401e-04)}
& \makecell[c]{\textbf{3.428e-04}\\(3.247e-04, 3.608e-04)}
& \makecell[c]{4.573e-04\\(4.070e-04, 5.077e-04)} \\

& ADVI (FR) 
& \makecell[c]{\textbf{2.318e-03}\\(2.311e-03, 2.325e-03)}
& \makecell[c]{8.154e-04\\(6.831e-04, 9.476e-04)}
& \makecell[c]{4.194e-04\\(3.856e-04, 4.531e-04)}
& \makecell[c]{\textbf{4.315e-04}\\(3.733e-04, 4.898e-04)} \\

& \texttt{PyMC} (NUTS) 
& \makecell[c]{2.587e-03\\(2.522e-03, 2.653e-03)}
& \makecell[c]{\textbf{7.110e-04}\\(5.510e-04, 8.710e-04)}
& \makecell[c]{3.970e-04\\(3.743e-04, 4.196e-04)}
& \makecell[c]{4.542e-04\\(3.864e-04, 5.220e-04)} \\
\hline

\multirow{6}{*}{MSE of $\widehat{\tau}^{2}$}
& $\tssg$ 
& \makecell[c]{\textbf{2.646e-01}\\(1.936e-01, 3.356e-01)}
& \makecell[c]{\textbf{2.419e-02}\\(1.210e-02, 3.627e-02)}
& \makecell[c]{\textbf{5.085e-02}\\(2.644e-02, 7.527e-02)}
& \makecell[c]{\textbf{2.215e-03}\\(1.361e-03, 3.069e-03)} \\

& MFVI 
& \makecell[c]{8.019e-01\\(6.999e-01, 9.039e-01)}
& \makecell[c]{\textbf{1.017e-02}\\(6.193e-03, 1.415e-02)}
& \makecell[c]{\textbf{5.467e-02}\\(4.859e-02, 6.075e-02)}
& \makecell[c]{\textbf{9.100e-03}\\(7.485e-03, 1.072e-02)} \\

& DADVI 
& \makecell[c]{\textbf{5.905e-02}\\(3.162e-02, 8.649e-02)}
& \makecell[c]{1.335e-01\\(9.085e-02, 1.761e-01)}
& \makecell[c]{1.202e-01\\(7.344e-02, 1.670e-01)}
& \makecell[c]{9.340e-03\\(5.565e-03, 1.311e-02)} \\

& ADVI (MF) 
& \makecell[c]{8.456e-01\\(8.165e-01, 8.748e-01)}
& \makecell[c]{2.340e-01\\(1.914e-01, 2.765e-01)}
& \makecell[c]{3.302e-01\\(1.716e-01, 4.887e-01)}
& \makecell[c]{2.192e-01\\(1.953e-01, 2.432e-01)} \\

& ADVI (FR) 
& \makecell[c]{4.430e-01\\(2.648e-01, 6.213e-01)}
& \makecell[c]{\textbf{9.630e-03}\\(8.468e-03, 1.079e-02)}
& \makecell[c]{3.525e-01\\(2.204e-01, 4.847e-01)}
& \makecell[c]{1.990e-01\\(1.841e-01, 2.139e-01)} \\

& \texttt{PyMC} (NUTS) 
& \makecell[c]{\textbf{7.162e-02}\\(3.599e-02, 1.073e-01)}
& \makecell[c]{1.056e-01\\(6.844e-02, 1.427e-01)}
& \makecell[c]{\textbf{1.007e-01}\\(5.908e-02, 1.424e-01)}
& \makecell[c]{\textbf{5.945e-03}\\(3.193e-03, 8.697e-03)} \\
\hline

\multirow{6}{*}{Runtime (s)}
& $\tssg$ 
& \makecell[c]{\textbf{8.398e-03}\\(6.587e-03, 1.021e-02)}
& \makecell[c]{\textbf{6.861e-03}\\(5.825e-03, 7.896e-03)}
& \makecell[c]{\textbf{9.397e-03}\\(9.234e-03, 9.560e-03)}
& \makecell[c]{\textbf{1.557e-02}\\(1.461e-02, 1.652e-02)} \\

& MFVI 
& \makecell[c]{\textbf{1.199e-02}\\(8.807e-03, 1.517e-02)}
& \makecell[c]{\textbf{6.154e-02}\\(5.649e-02, 6.658e-02)}
& \makecell[c]{\textbf{4.139e-02}\\(3.892e-02, 4.386e-02)}
& \makecell[c]{\textbf{5.288e-02}\\(5.201e-02, 5.375e-02)} \\

& DADVI 
& \makecell[c]{\textbf{8.832e+00}\\(4.860e+00, 1.280e+01)}
& \makecell[c]{\textbf{9.584e-01}\\(9.392e-01, 9.777e-01)}
& \makecell[c]{\textbf{1.076e+00}\\(1.074e+00, 1.078e+00)}
& \makecell[c]{\textbf{1.279e+00}\\(1.235e+00, 1.322e+00)} \\

& ADVI (MF) 
& \makecell[c]{1.414e+01\\(1.401e+01, 1.427e+01)}
& \makecell[c]{1.654e+01\\(1.645e+01, 1.663e+01)}
& \makecell[c]{1.685e+01\\(1.637e+01, 1.734e+01)}
& \makecell[c]{1.775e+01\\(1.772e+01, 1.779e+01)} \\

& ADVI (FR) 
& \makecell[c]{2.077e+01\\(2.063e+01, 2.091e+01)}
& \makecell[c]{2.237e+01\\(2.178e+01, 2.297e+01)}
& \makecell[c]{2.482e+01\\(2.418e+01, 2.547e+01)}
& \makecell[c]{2.498e+01\\(2.490e+01, 2.506e+01)} \\

& \texttt{PyMC} (NUTS) 
& \makecell[c]{1.289e+01\\(8.909e+00, 1.687e+01)}
& \makecell[c]{5.024e+00\\(4.849e+00, 5.199e+00)}
& \makecell[c]{5.628e+00\\(5.536e+00, 5.720e+00)}
& \makecell[c]{6.303e+00\\(5.973e+00, 6.634e+00)} \\
\hline
\end{longtable}
}

\subsection{Performance Metrics for Student's-\texorpdfstring{$t$}{t} \texorpdfstring{$\mathsf{SSG}$}{SSG} Likelihood under Experiment E2}\label{app-avg-metrics-student-singlen-multip}

{\scriptsize
\setlength{\tabcolsep}{3pt} 
\renewcommand{\arraystretch}{1.15}%
\begin{longtable}{llcccc}
\caption{\footnotesize{Comparison of $\tssg$ against MFVI, DADVI, ADVI (MF/FR), and \texttt{PyMC} (NUTS). Performance [median over $100$ repetitions; quartile range ($Q_1, Q_3$)  in parentheses] for the Student's-$t$ $\mathsf{SSG}$ likelihood ($\nu=5$) under experiment E2. Top $3$ best performing algorithms are in \textbf{bold}.}}
\label{tab:student_singlen_multip}\\\\
\toprule
\toprule
 &  & \multicolumn{4}{c}{Dimension ($p$)} \\
\cmidrule(lr){3-6}
Metric & Method & 3 & 8 & 15 & 20 \\
\midrule
\endfirsthead
\toprule
 &  & \multicolumn{4}{c}{Dimension ($p$)} \\
\cmidrule(lr){3-6}
Metric & Method & 3 & 8 & 15 & 20 \\
\midrule
\endhead
\midrule
\multicolumn{6}{r}{Continued on next page} \\
\midrule
\endfoot
\bottomrule
\endlastfoot

\multirow{6}{*}{MSE of $\widehat{\beta}$}
& $\tssg$ & \makecell[c]{\textbf{3.244e-04}\\(1.775e-04, 5.976e-04)} & \makecell[c]{\textbf{3.774e-04}\\(2.872e-04, 5.312e-04)} & \makecell[c]{4.204e-04\\(3.300e-04, 5.421e-04)} & \makecell[c]{\textbf{4.721e-04}\\(3.779e-04, 5.649e-04)} \\
& MFVI & \makecell[c]{\textbf{3.341e-04}\\(1.781e-04, 6.078e-04)} & \makecell[c]{\textbf{3.833e-04}\\(2.943e-04, 5.331e-04)} & \makecell[c]{\textbf{4.147e-04}\\(3.306e-04, 5.642e-04)} & \makecell[c]{\textbf{4.670e-04}\\(3.740e-04, 5.658e-04)} \\
& DADVI & \makecell[c]{3.568e-04\\(2.085e-04, 6.767e-04)} & \makecell[c]{4.032e-04\\(3.170e-04, 5.489e-04)} & \makecell[c]{\textbf{4.123e-04}\\(3.293e-04, 5.456e-04)} & \makecell[c]{4.807e-04\\(4.038e-04, 6.050e-04)} \\
& ADVI (MF) & \makecell[c]{4.020e-04\\(2.298e-04, 7.339e-04)} & \makecell[c]{4.181e-04\\(3.119e-04, 5.729e-04)} & \makecell[c]{4.439e-04\\(3.368e-04, 5.606e-04)} & \makecell[c]{4.862e-04\\(3.983e-04, 5.557e-04)} \\
& ADVI (FR) & \makecell[c]{4.169e-04\\(2.387e-04, 6.988e-04)} & \makecell[c]{4.052e-04\\(3.141e-04, 5.656e-04)} & \makecell[c]{4.346e-04\\(3.428e-04, 5.609e-04)} & \makecell[c]{4.828e-04\\(3.878e-04, 6.043e-04)} \\
& \texttt{PyMC} (NUTS) & \makecell[c]{\textbf{3.350e-04}\\(1.737e-04, 5.771e-04)} & \makecell[c]{\textbf{3.761e-04}\\(2.874e-04, 5.242e-04)} & \makecell[c]{\textbf{4.174e-04}\\(3.249e-04, 5.485e-04)} & \makecell[c]{\textbf{4.721e-04}\\(3.794e-04, 5.671e-04)} \\
\hline

\multirow{6}{*}{MSE of $\widehat{\tau}^{2}$}
& $\tssg$ & \makecell[c]{\textbf{1.345e-02}\\(2.301e-03, 4.277e-02)} & \makecell[c]{\textbf{4.099e-02}\\(4.973e-03, 5.930e-02)} & \makecell[c]{\textbf{1.497e-02}\\(1.789e-03, 4.661e-02)} & \makecell[c]{4.784e-02\\(3.377e-02, 1.174e-01)} \\
& MFVI & \makecell[c]{3.834e-02\\(7.493e-03, 1.131e-01)} & \makecell[c]{4.904e-02\\(8.612e-03, 1.034e-01)} & \makecell[c]{3.076e-02\\(8.434e-03, 7.777e-02)} & \makecell[c]{4.918e-02\\(1.948e-02, 1.199e-01)} \\
& DADVI & \makecell[c]{\textbf{1.142e-02}\\(2.485e-03, 4.509e-02)} & \makecell[c]{\textbf{1.591e-02}\\(2.813e-03, 4.366e-02)} & \makecell[c]{\textbf{1.514e-02}\\(4.577e-03, 3.238e-02)} & \makecell[c]{\textbf{1.137e-02}\\(3.221e-03, 3.611e-02)} \\
& ADVI (MF) & \makecell[c]{7.109e-02\\(1.809e-02, 2.376e-01)} & \makecell[c]{5.746e-02\\(1.346e-02, 1.289e-01)} & \makecell[c]{7.288e-02\\(2.666e-02, 1.514e-01)} & \makecell[c]{1.338e-02\\(3.048e-03, 4.238e-02)} \\
& ADVI (FR) & \makecell[c]{9.497e-02\\(2.163e-02, 2.965e-01)} & \makecell[c]{5.620e-02\\(1.589e-02, 1.035e-01)} & \makecell[c]{5.825e-02\\(2.279e-02, 1.347e-01)} & \makecell[c]{\textbf{1.141e-02}\\(3.793e-03, 4.037e-02)} \\
& \texttt{PyMC} (NUTS) & \makecell[c]{\textbf{1.124e-02}\\(2.757e-03, 4.309e-02)} & \makecell[c]{\textbf{1.488e-02}\\(2.783e-03, 4.141e-02)} & \makecell[c]{\textbf{1.600e-02}\\(4.590e-03, 3.224e-02)} & \makecell[c]{\textbf{1.041e-02}\\(2.863e-03, 3.279e-02)} \\
\hline

\multirow{6}{*}{Runtime (s)}
& $\tssg$ & \makecell[c]{\textbf{5.323e-03}\\(5.053e-03, 5.826e-03)} & \makecell[c]{\textbf{6.184e-03}\\(5.986e-03, 6.599e-03)} & \makecell[c]{\textbf{7.675e-03}\\(7.361e-03, 8.520e-03)} & \makecell[c]{\textbf{1.038e-02}\\(9.632e-03, 1.106e-02)} \\
& MFVI & \makecell[c]{\textbf{2.715e-02}\\(2.380e-02, 3.232e-02)} & \makecell[c]{\textbf{3.742e-02}\\(3.305e-02, 4.617e-02)} & \makecell[c]{\textbf{4.414e-02}\\(3.848e-02, 5.316e-02)} & \makecell[c]{\textbf{6.303e-02}\\(5.476e-02, 7.381e-02)} \\
& DADVI & \makecell[c]{\textbf{9.786e-01}\\(9.599e-01, 1.007e+00)} & \makecell[c]{\textbf{1.007e+00}\\(9.925e-01, 1.040e+00)} & \makecell[c]{\textbf{1.113e+00}\\(1.080e+00, 1.127e+00)} & \makecell[c]{\textbf{1.137e+00}\\(1.103e+00, 1.160e+00)} \\
& ADVI (MF) & \makecell[c]{1.381e+01\\(1.280e+01, 1.484e+01)} & \makecell[c]{1.624e+01\\(1.530e+01, 1.658e+01)} & \makecell[c]{1.673e+01\\(1.648e+01, 1.695e+01)} & \makecell[c]{1.693e+01\\(1.672e+01, 1.712e+01)} \\
& ADVI (FR) & \makecell[c]{1.946e+01\\(1.779e+01, 2.062e+01)} & \makecell[c]{2.289e+01\\(2.195e+01, 2.377e+01)} & \makecell[c]{2.397e+01\\(2.339e+01, 2.428e+01)} & \makecell[c]{2.434e+01\\(2.414e+01, 2.455e+01)} \\
& \texttt{PyMC} (NUTS) & \makecell[c]{4.919e+00\\(4.811e+00, 5.076e+00)} & \makecell[c]{5.470e+00\\(5.327e+00, 5.594e+00)} & \makecell[c]{5.955e+00\\(5.546e+00, 6.105e+00)} & \makecell[c]{6.221e+00\\(5.537e+00, 6.354e+00)} \\
\hline
\end{longtable}
}

\subsection{\texorpdfstring{$\mathsf{ELBO}$}{ELBO} history for \texorpdfstring{$\tssg$}{TAVIE-SSG}, ADVI (MF), and ADVI (FR) under the Student's-\texorpdfstring{$t$}{t} \texorpdfstring{$\mathsf{SSG}$}{SSG} (Type I) Likelihood}\label{app-convergence-ELBO-student}

\begin{figure}[!htp]
    \centering
    \includegraphics[width=1.0\linewidth]{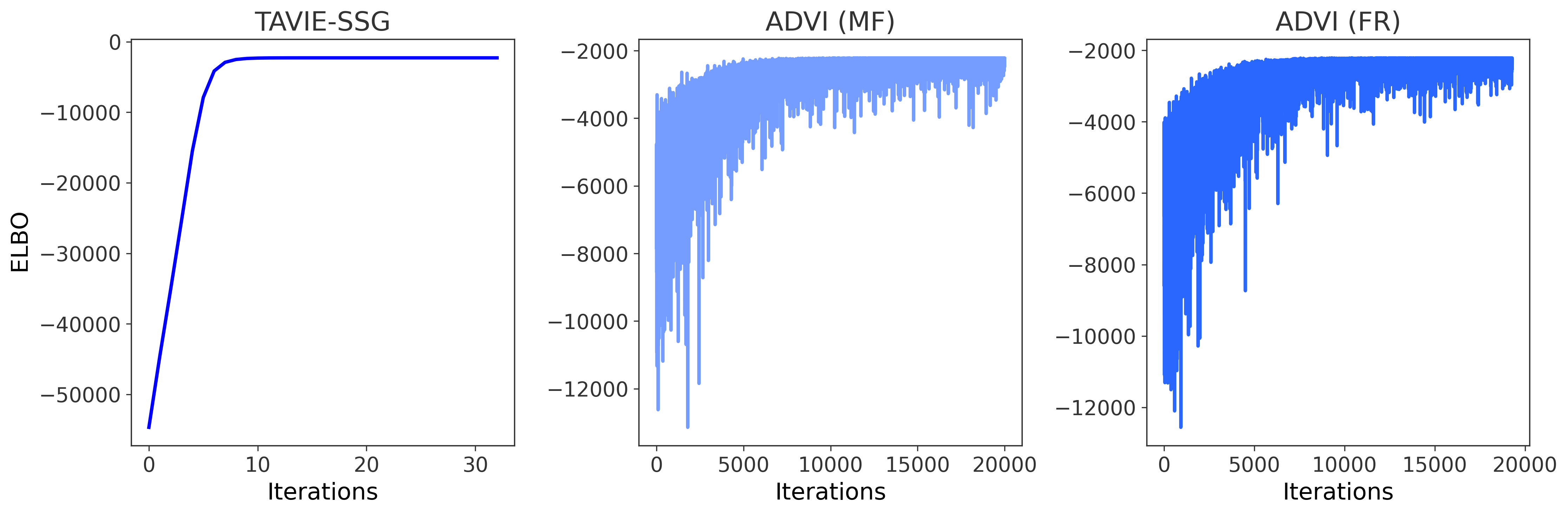}
    \caption{\footnotesize{Convergence diagnostics ($\mathsf{ELBO}$ monitoring) of $\tssg$, ADVI (MF), and ADVI (FR) for $(n,p)=(2000, 8)$ under the Student's-$t$ $\mathsf{SSG}$ likelihood ($\nu=5$). For $\tssg$, the $\mathsf{ELBO}$ corresponds to $\mathsf{L}(\xi)$ in~\eqref{eq:ELBO-general} of the main manuscript, while for ADVI (MF/FR) the MC approximation of the true $\mathsf{ELBO}$ is tracked.}}
    \label{fig:convergence_ELBO_student}
\end{figure}

\newpage

\section{Additional Simulation Studies}\label{app:additional-simulation-studies}

\subsection{Laplace Type I \texorpdfstring{$\mathsf{SSG}$}{SSG} Likelihood}
\label{app:additional-laplace-results}

Following \eqref{eq:param-laplace}, the Laplace $\ssg$ likelihood for $y_i\in \mathbb{R}$, denoted as $y_i\mid \mathbf{x}_i, \theta \sim \text{Laplace}(\mu_i = \mathbf{x}_i^{\top}\beta, \tau)$, is:
\begin{align}\label{eq:student-SSG-model}
     p(y_i\mid \mathbf{x}_i, \theta)\propto \tau\exp\left\{-\tau|y_i-\mathbf{x}_i^{\top}\beta|\right\},
\end{align}
independently for $i\in [n]$, where $\theta = (\beta^{\top}, \tau^{2})^{\top}$. Data, $\mathcal{D}_n := \{(\mathbf{x}_i, y_i): i\in [n]\}$, is simulated using the model parameters in \eqref{eq:student-SSG-model} configured as: $\tau^2=3$, $\beta\sim \mathcal{N}_{p+1}(0, I_{p+1})$, and $\mathbf{X}=(\mathbf{x}_1, \ldots, \mathbf{x}_{n})^{\top}\in \mathbb{R}^{n\times \overline{p+1}}$ with $x_{i1}=1$ and $x_{ij}$'s generated independently and identically from the standard Gaussian distribution for $j=2, \ldots, p+1$.

We consider two sets of experiments: (E1) increasing sample sizes, $n\in \{200, 500, 1000, 2000\}$ with fixed number of features, $p=8$; and (E2) increasing features, $p\in \{3, 8, 15, 20\}$ with a fixed sample size, $n=1000$. Under both experimental specifications, $\tssg$ is implemented for the standard likelihood setup with $\alpha=1$ and the prior hyperparameters set as, $(\mu, \Sigma, a, b) = (0, I_{p+1}, 0.025, 0.025)$. Convergence is assessed in Algorithm~\ref{alg:tavie-em} of the main manuscript through a tolerance level of $\texttt{tol}=10^{-9}$. The competing methods (DADVI, ADVI (MF/FR), \texttt{PyMC} (NUTS)) are executed under experimental settings outlined in Section \ref{app-competing-methods-details}. The performance of $\mathsf{TAVIE}$-$\mathsf{SSG}$ and competing methods is assessed based on statistical accuracy, quantified by the {mean-squared error} (MSE) between the estimated and true parameter values, $(p+1)^{-1}\lVert \beta - \widehat{\beta}\rVert_{2}^{2}$ and $(\tau^{2} - \widehat{\tau}^{2})^2$; computational efficiency, measured by the {runtimes} (in seconds) of each algorithm; and in experiment E1, sliced Wasserstein (SW) distance~\citep{bonneel2015sliced,kolouri2016sliced} between the true and variational posteriors. The corresponding numerical results are tabulated in Tables \ref{tab:laplace_singlep_multin} and \ref{tab:laplace_singlen_multip} as well as visually represented in Figures \ref{fig:laplace_singlep_multin}, \ref{fig:laplace_singlen_multip} ($100$ repetitions of the dataset $\mathcal{D}_n$ for MSE analyses), and Figure~\ref{fig:laplace_SW_distance} over $10$ repetitions for the posterior distance.

Overall, the results collectively demonstrate the robust empirical performance of $\tssg$ across varying sample sizes and feature dimensions. In experiment E1, for increasing sample sizes, as illustrated in Figure \ref{fig:laplace_singlep_multin}, $\tssg$, DADVI, and \texttt{PyMC} (NUTS) compare well and deliver accurate posterior variational estimates for the regression coefficients, achieving lower MSEs of $\widehat{\beta}$ for estimating $\beta$ compared to the ADVI (MF/FR) variants.
For estimating $\tau^{2}$, $\tssg$, DADVI, and \texttt{PyMC} (NUTS) have similar performance.
Moreover, the posterior distance comparisons in Figure~\ref{fig:laplace_SW_distance} show that $\tssg$ and DADVI produce closer approximations to the true posterior than ADVI variants.
When the feature dimension increases under experiment E2 (see Figure \ref{fig:laplace_singlen_multip}), $\tssg$ continues to deliver competitive accuracy of posterior estimates relative to DADVI and \texttt{PyMC} (NUTS). As evidenced by Figure \ref{fig:runtime_laplace}, across both experiments E1 and E2, $\tssg$ consistently demonstrates superior computational efficiency, achieving runtimes that are orders of magnitude lower than all competing methods. 

We further investigated the convergence behavior of ADVI (MF/FR) variants by tracking $\mathsf{ELBO}$ trajectories over iterations and comparing them with $\tssg$ (see Figure \ref{fig:convergence_ELBO_laplace}). The results indicate that $\tssg$ converges rapidly to highly accurate posterior variational estimates of $\beta$ and $\tau^2$.

{\scriptsize
\setlength{\tabcolsep}{3pt} 
\renewcommand{\arraystretch}{1.15}%
\begin{longtable}{llcccc}
\caption{\footnotesize{Comparison of $\tssg$ against DADVI, ADVI (MF/FR), and \texttt{PyMC} (NUTS). Performance [median over $100$ repetitions; quartile range ($Q_1, Q_3$)  in parentheses] for the Laplace $\mathsf{SSG}$ likelihood under experiment E1. Top $3$ best performing algorithms are in \textbf{bold}.}}
\label{tab:laplace_singlep_multin}\\
\toprule
\toprule
 &  & \multicolumn{4}{c}{Sample Size ($n$)} \\
\cmidrule(lr){3-6}
Metric & Method & 200 & 500 & 1000 & 2000 \\
\midrule
\endfirsthead
\toprule
 &  & \multicolumn{4}{c}{Sample Size ($n$)} \\
\cmidrule(lr){3-6}
Metric & Method & 200 & 500 & 1000 & 2000 \\
\midrule
\endhead
\midrule
\multicolumn{6}{r}{Continued on next page} \\
\midrule
\endfoot
\bottomrule
\endlastfoot

\multirow{5}{*}{MSE of $\widehat{\beta}$}
& $\tssg$ & \makecell[c]{\textbf{1.942e-03}\\(1.271e-03, 2.808e-03)} & \makecell[c]{\textbf{6.709e-04}\\(4.707e-04, 9.619e-04)} & \makecell[c]{\textbf{3.472e-04}\\(2.313e-04, 4.866e-04)} & \makecell[c]{\textbf{1.640e-04}\\(1.200e-04, 2.333e-04)} \\
& DADVI & \makecell[c]{2.156e-03\\(1.344e-03, 2.925e-03)} & \makecell[c]{7.294e-04\\(5.107e-04, 9.775e-04)} & \makecell[c]{\textbf{3.539e-04}\\(2.424e-04, 5.289e-04)} & \makecell[c]{\textbf{1.778e-04}\\(1.285e-04, 2.433e-04)} \\
& ADVI (MF) & \makecell[c]{\textbf{2.073e-03}\\(1.501e-03, 3.029e-03)} & \makecell[c]{7.455e-04\\(5.471e-04, 1.076e-03)} & \makecell[c]{4.937e-04\\(2.548e-04, 5.625e-04)} & \makecell[c]{1.860e-04\\(1.326e-04, 2.474e-04)} \\
& ADVI (FR) & \makecell[c]{2.177e-03\\(1.584e-03, 2.847e-03)} & \makecell[c]{\textbf{7.142e-04}\\(5.599e-04, 1.023e-03)} & \makecell[c]{3.881e-04\\(2.562e-04, 5.399e-04)} & \makecell[c]{1.841e-04\\(1.332e-04, 2.712e-04)} \\
& \texttt{PyMC} (NUTS) & \makecell[c]{\textbf{1.858e-03}\\(1.293e-03, 2.738e-03)} & \makecell[c]{\textbf{6.768e-04}\\(4.685e-04, 9.772e-04)} & \makecell[c]{\textbf{3.359e-04}\\(2.295e-04, 4.901e-04)} & \makecell[c]{\textbf{1.631e-04}\\(1.164e-04, 2.331e-04)} \\
\hline

\multirow{5}{*}{MSE of $\widehat{\tau}^{2}$}
& $\tssg$ & \makecell[c]{\textbf{1.992e-01}\\(7.869e-02, 4.184e-01)} & \makecell[c]{7.107e-02\\(2.492e-02, 1.285e-01)} & \makecell[c]{\textbf{1.844e-02}\\(3.774e-03, 6.791e-02)} & \makecell[c]{\textbf{9.466e-03}\\(2.118e-03, 2.316e-02)} \\
& DADVI & \makecell[c]{\textbf{8.999e-02}\\(1.254e-02, 3.064e-01)} & \makecell[c]{\textbf{2.821e-02}\\(5.432e-03, 8.045e-02)} & \makecell[c]{\textbf{2.189e-02}\\(7.219e-03, 5.612e-02)} & \makecell[c]{\textbf{5.571e-03}\\(1.693e-03, 2.234e-02)} \\
& ADVI (MF) & \makecell[c]{4.710e-01\\(1.358e-01, 9.991e-01)} & \makecell[c]{4.890e-02\\(6.977e-03, 9.014e-02)} & \makecell[c]{4.859e-02\\(1.685e-02, 1.687e-01)} & \makecell[c]{1.215e-01\\(6.424e-02, 2.092e-01)} \\
& ADVI (FR) & \makecell[c]{3.371e-01\\(1.367e-01, 8.357e-01)} & \makecell[c]{\textbf{3.530e-02}\\(7.254e-03, 1.239e-01)} & \makecell[c]{4.308e-02\\(1.128e-02, 1.310e-01)} & \makecell[c]{1.029e-01\\(4.128e-02, 1.803e-01)} \\
& \texttt{PyMC} (NUTS) & \makecell[c]{\textbf{9.121e-02}\\(1.513e-02, 2.935e-01)} & \makecell[c]{\textbf{2.789e-02}\\(7.147e-03, 8.254e-02)} & \makecell[c]{\textbf{2.295e-02}\\(6.442e-03, 5.360e-02)} & \makecell[c]{\textbf{7.313e-03}\\(1.389e-03, 2.002e-02)} \\
\hline

\multirow{5}{*}{Runtime (s)}
& $\tssg$ & \makecell[c]{\textbf{7.201e-03}\\(6.727e-03, 7.873e-03)} & \makecell[c]{\textbf{1.028e-02}\\(9.386e-03, 1.146e-02)} & \makecell[c]{\textbf{1.401e-02}\\(1.312e-02, 1.516e-02)} & \makecell[c]{\textbf{2.291e-02}\\(2.202e-02, 2.461e-02)} \\
& DADVI & \makecell[c]{\textbf{1.497e+00}\\(1.264e+00, 1.542e+00)} & \makecell[c]{\textbf{2.348e+00}\\(2.218e+00, 2.396e+00)} & \makecell[c]{\textbf{3.728e+00}\\(3.669e+00, 3.771e+00)} & \makecell[c]{\textbf{5.638e+00}\\(5.581e+00, 5.704e+00)} \\
& ADVI (MF) & \makecell[c]{1.257e+01\\(1.190e+01, 1.322e+01)} & \makecell[c]{1.374e+01\\(1.326e+01, 1.441e+01)} & \makecell[c]{1.479e+01\\(1.420e+01, 1.522e+01)} & \makecell[c]{1.551e+01\\(1.514e+01, 1.567e+01)} \\
& ADVI (FR) & \makecell[c]{1.820e+01\\(1.733e+01, 1.904e+01)} & \makecell[c]{1.986e+01\\(1.908e+01, 2.094e+01)} & \makecell[c]{2.186e+01\\(2.070e+01, 2.227e+01)} & \makecell[c]{2.243e+01\\(2.216e+01, 2.270e+01)} \\
& \texttt{PyMC} (NUTS) & \makecell[c]{\textbf{4.928e+00}\\(4.667e+00, 5.143e+00)} & \makecell[c]{\textbf{4.716e+00}\\(4.363e+00, 4.861e+00)} & \makecell[c]{\textbf{5.064e+00}\\(4.277e+00, 5.180e+00)} & \makecell[c]{\textbf{4.846e+00}\\(4.747e+00, 5.063e+00)} \\
\hline
\end{longtable}
}

{\scriptsize
\setlength{\tabcolsep}{3pt} 
\renewcommand{\arraystretch}{1.15}%
\begin{longtable}{llcccc}
\caption{\footnotesize{Comparison of $\tssg$ against DADVI, ADVI (MF/FR), and \texttt{PyMC} (NUTS). Performance [median over $100$ repetitions; quartile range ($Q_1, Q_3$)  in parentheses] for the Laplace $\mathsf{SSG}$ likelihood under experiment E2. Top $3$ best performing algorithms are in \textbf{bold}.}} 
\label{tab:laplace_singlen_multip}\\
\toprule
\toprule
 &  & \multicolumn{4}{c}{Dimension ($p$)} \\
\cmidrule(lr){3-6}
Metric & Method & 3 & 8 & 15 & 20 \\
\midrule
\endfirsthead
\toprule
 &  & \multicolumn{4}{c}{Dimension ($p$)} \\
\cmidrule(lr){3-6}
Metric & Method & 3 & 8 & 15 & 20 \\
\midrule
\endhead
\midrule
\multicolumn{6}{r}{Continued on next page} \\
\midrule
\endfoot
\bottomrule
\endlastfoot

\multirow{5}{*}{MSE of $\widehat{\beta}$}
& $\tssg$ & \makecell[c]{\textbf{2.876e-04}\\(1.925e-04, 5.205e-04)} & \makecell[c]{\textbf{3.603e-04}\\(2.418e-04, 4.938e-04)} & \makecell[c]{\textbf{3.699e-04}\\(2.886e-04, 4.899e-04)} & \makecell[c]{\textbf{3.998e-04}\\(3.148e-04, 5.113e-04)} \\
& DADVI & \makecell[c]{3.018e-04\\(1.983e-04, 5.484e-04)} & \makecell[c]{\textbf{3.521e-04}\\(2.459e-04, 5.364e-04)} & \makecell[c]{\textbf{3.679e-04}\\(2.986e-04, 4.706e-04)} & \makecell[c]{4.068e-04\\(3.342e-04, 5.264e-04)} \\
& ADVI (MF) & \makecell[c]{4.865e-04\\(2.171e-04, 6.225e-04)} & \makecell[c]{4.034e-04\\(2.539e-04, 5.725e-04)} & \makecell[c]{4.030e-04\\(3.008e-04, 4.997e-04)} & \makecell[c]{\textbf{4.036e-04}\\(3.269e-04, 5.311e-04)} \\
& ADVI (FR) & \makecell[c]{\textbf{3.592e-04}\\(2.006e-04, 6.035e-04)} & \makecell[c]{4.996e-04\\(2.574e-04, 5.869e-04)} & \makecell[c]{3.924e-04\\(2.906e-04, 4.909e-04)} & \makecell[c]{4.066e-04\\(3.371e-04, 5.301e-04)} \\
& \texttt{PyMC} (NUTS) & \makecell[c]{\textbf{2.896e-04}\\(1.884e-04, 5.164e-04)} & \makecell[c]{\textbf{3.439e-04}\\(2.266e-04, 4.878e-04)} & \makecell[c]{\textbf{3.597e-04}\\(2.875e-04, 4.724e-04)} & \makecell[c]{\textbf{3.857e-04}\\(3.095e-04, 5.071e-04)} \\
\hline

\multirow{5}{*}{MSE of $\widehat{\tau}^{2}$}
& $\tssg$ & \makecell[c]{\textbf{1.499e-02}\\(4.142e-03, 6.087e-02)} & \makecell[c]{\textbf{1.976e-02}\\(3.289e-03, 7.129e-02)} & \makecell[c]{\textbf{2.193e-02}\\(5.989e-03, 6.220e-02)} & \makecell[c]{9.581e-02\\(4.721e-02, 1.733e-01)} \\
& DADVI & \makecell[c]{\textbf{1.522e-02}\\(6.435e-03, 6.094e-02)} & \makecell[c]{\textbf{2.187e-02}\\(7.142e-03, 5.670e-02)} & \makecell[c]{\textbf{1.654e-02}\\(3.080e-03, 3.956e-02)} & \makecell[c]{\textbf{1.561e-02}\\(4.019e-03, 5.266e-02)} \\
& ADVI (MF) & \makecell[c]{9.595e-02\\(3.286e-02, 2.601e-01)} & \makecell[c]{7.049e-02\\(1.360e-02, 1.741e-01)} & \makecell[c]{4.324e-02\\(1.114e-02, 1.200e-01)} & \makecell[c]{\textbf{1.359e-02}\\(2.388e-03, 5.029e-02)} \\
& ADVI (FR) & \makecell[c]{1.130e-01\\(3.235e-02, 2.665e-01)} & \makecell[c]{5.189e-02\\(9.952e-03, 1.340e-01)} & \makecell[c]{2.663e-02\\(8.285e-03, 7.636e-02)} & \makecell[c]{1.961e-02\\(3.408e-03, 4.936e-02)} \\
& \texttt{PyMC} (NUTS) & \makecell[c]{\textbf{1.677e-02}\\(3.942e-03, 6.138e-02)} & \makecell[c]{\textbf{2.341e-02}\\(6.561e-03, 5.394e-02)} & \makecell[c]{\textbf{1.393e-02}\\(4.197e-03, 4.060e-02)} & \makecell[c]{\textbf{1.782e-02}\\(4.193e-03, 5.348e-02)} \\
\hline

\multirow{5}{*}{Runtime (s)}
& $\tssg$ & \makecell[c]{\textbf{1.195e-02}\\(1.116e-02, 1.360e-02)} & \makecell[c]{\textbf{1.595e-02}\\(1.457e-02, 1.828e-02)} & \makecell[c]{\textbf{2.066e-02}\\(1.740e-02, 2.506e-02)} & \makecell[c]{\textbf{2.456e-02}\\(2.284e-02, 2.646e-02)} \\
& DADVI & \makecell[c]{\textbf{2.051e+00}\\(2.003e+00, 2.104e+00)} & \makecell[c]{\textbf{3.747e+00}\\(3.660e+00, 3.838e+00)} & \makecell[c]{\textbf{6.968e+00}\\(6.754e+00, 7.379e+00)} & \makecell[c]{\textbf{8.900e+00}\\(7.968e+00, 9.019e+00)} \\
& ADVI (MF) & \makecell[c]{1.265e+01\\(1.203e+01, 1.349e+01)} & \makecell[c]{1.504e+01\\(1.452e+01, 1.557e+01)} & \makecell[c]{1.563e+01\\(1.535e+01, 1.617e+01)} & \makecell[c]{1.575e+01\\(1.545e+01, 1.593e+01)} \\
& ADVI (FR) & \makecell[c]{1.850e+01\\(1.708e+01, 1.986e+01)} & \makecell[c]{2.200e+01\\(2.101e+01, 2.257e+01)} & \makecell[c]{2.291e+01\\(2.239e+01, 2.366e+01)} & \makecell[c]{2.290e+01\\(2.255e+01, 2.319e+01)} \\
& \texttt{PyMC} (NUTS) & \makecell[c]{\textbf{4.293e+00}\\(3.911e+00, 4.524e+00)} & \makecell[c]{\textbf{4.782e+00}\\(4.364e+00, 5.200e+00)} & \makecell[c]{\textbf{5.557e+00}\\(5.059e+00, 6.242e+00)} & \makecell[c]{\textbf{6.627e+00}\\(6.376e+00, 6.784e+00)} \\
\hline
\end{longtable}
}

\begin{figure}[!htp]
    \centering
    \includegraphics[width=\linewidth]{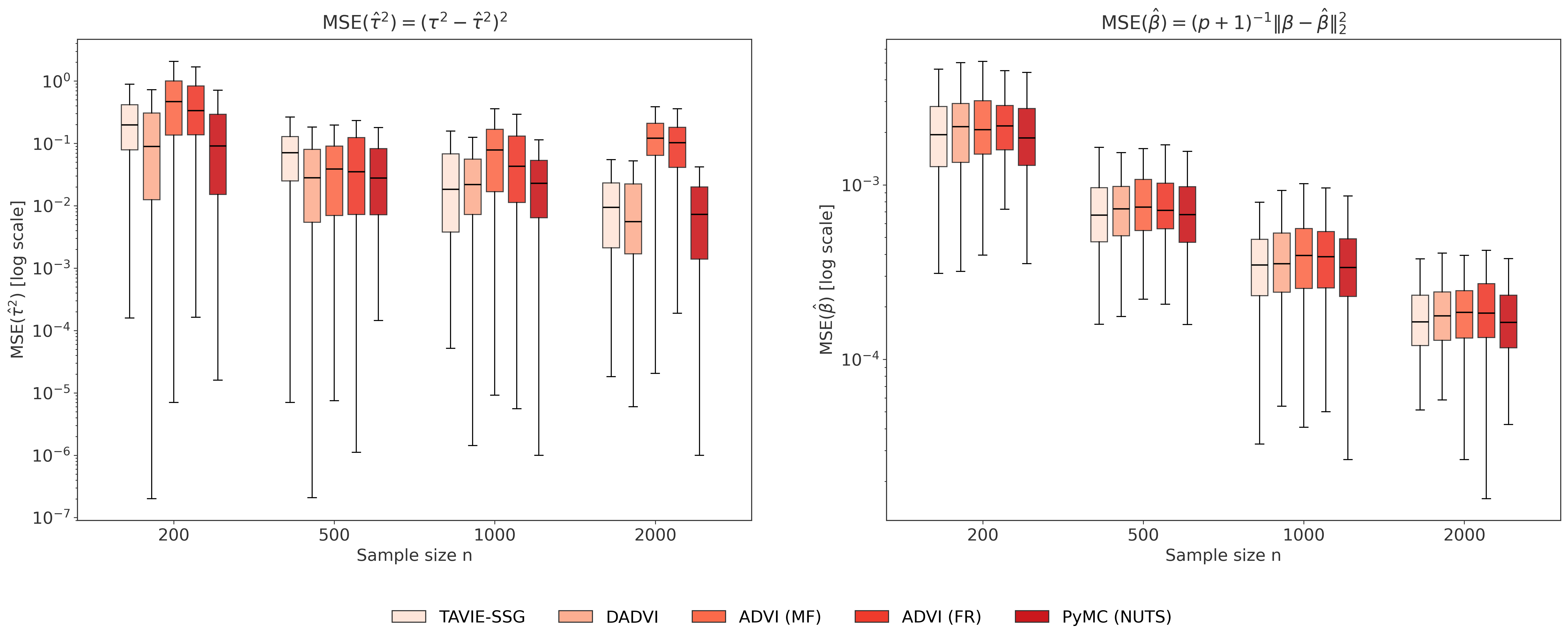}
    \caption{\footnotesize{MSEs of $(\widehat{\beta}, \widehat{\tau}^2)$ (in $\log$-scale) 
    across $100$ data repetitions of $\tssg$ and competitors for the Laplace $\mathsf{SSG}$ likelihood under experiment E1: $n\in \{200, 500, 1000, 2000\},\; p=8$.}}
    \label{fig:laplace_singlep_multin}
\end{figure}

\begin{figure}[!htp]
    \centering
    \includegraphics[width=0.6\linewidth]{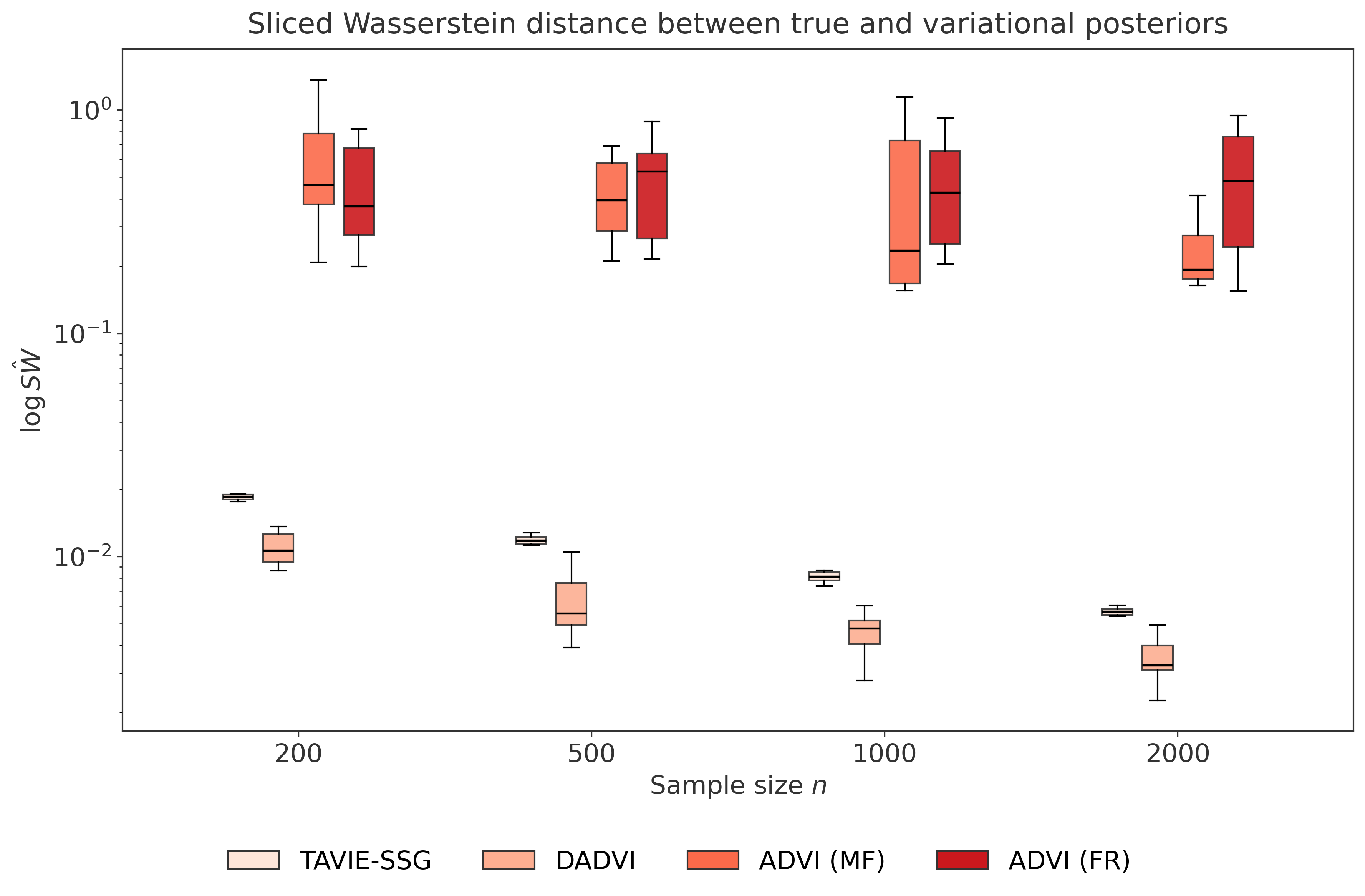}
    \caption{\footnotesize{Monte Carlo (MC) estimated sliced Wasserstein (SW) distance (in $\log$-scale) between the fractional true posterior (approximated using \texttt{PyMC} NUTS) and the variational posteriors of $\tssg$ and competitors under the Laplace Type I $\ssg$ model, with $p=8$ and $\alpha=1$. For each sample size $n\in \{200, 500, 1000, 2000\}$, the boxplots summarize results over $10$ independent replications. The SW distance is computed in the joint space $(\beta,\log \tau^2)$ using random one-dimensional projections.}}
    \label{fig:laplace_SW_distance}
\end{figure}

\begin{figure}[!htp]
    \centering
    \includegraphics[width=\linewidth]{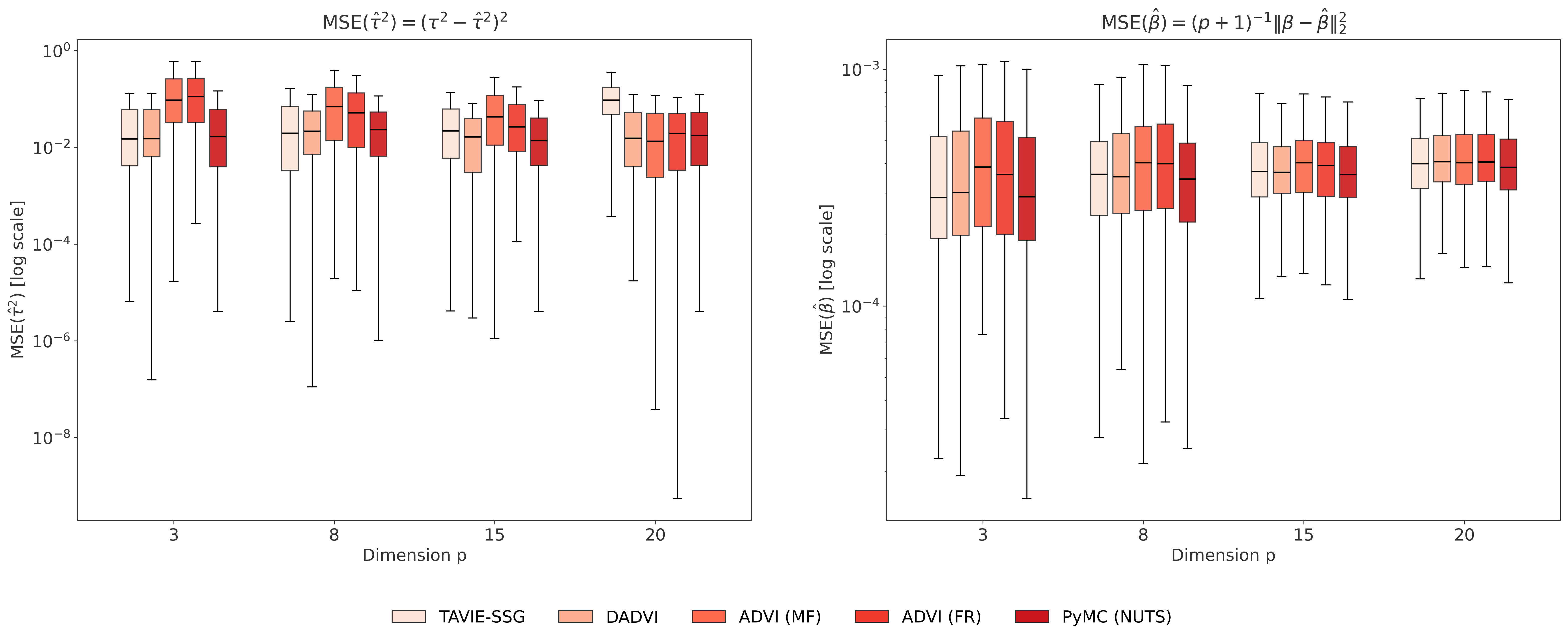}
    \caption{\footnotesize{MSEs of $(\widehat{\beta}, \widehat{\tau}^2)$ (in $\log$-scale) 
    across $100$ data repetitions of $\tssg$ and competitors for the Laplace $\mathsf{SSG}$ likelihood under experiment E2: $p\in \{3, 8, 15, 20\},\; n=1000$.}}
    \label{fig:laplace_singlen_multip}
\end{figure}

\begin{figure}[!htp]
    \centering
    \includegraphics[width=1.0\linewidth]{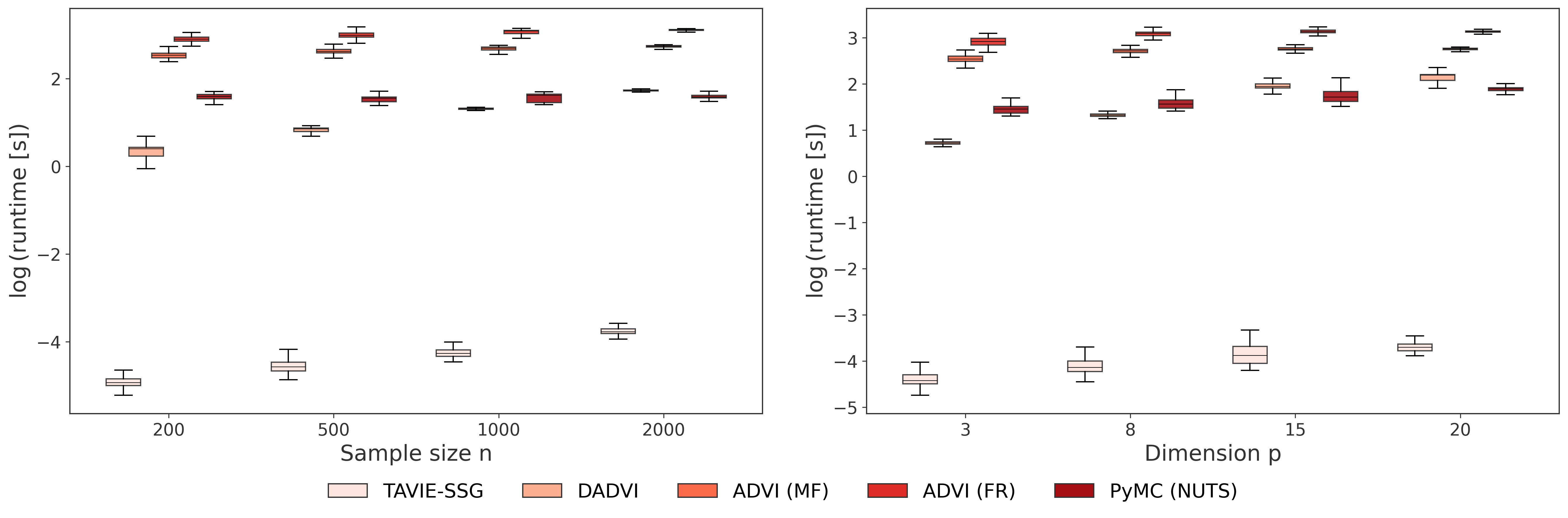}
    \caption{\footnotesize{Runtimes (in $\log$-scale) across $100$ data repetitions of $\tssg$ and competitors for the Laplace $\ssg$ likelihood under experiments E1 and E2.}}
    \label{fig:runtime_laplace}
\end{figure}

\begin{figure}[!htp]
    \centering
    \includegraphics[width=1.0\linewidth]{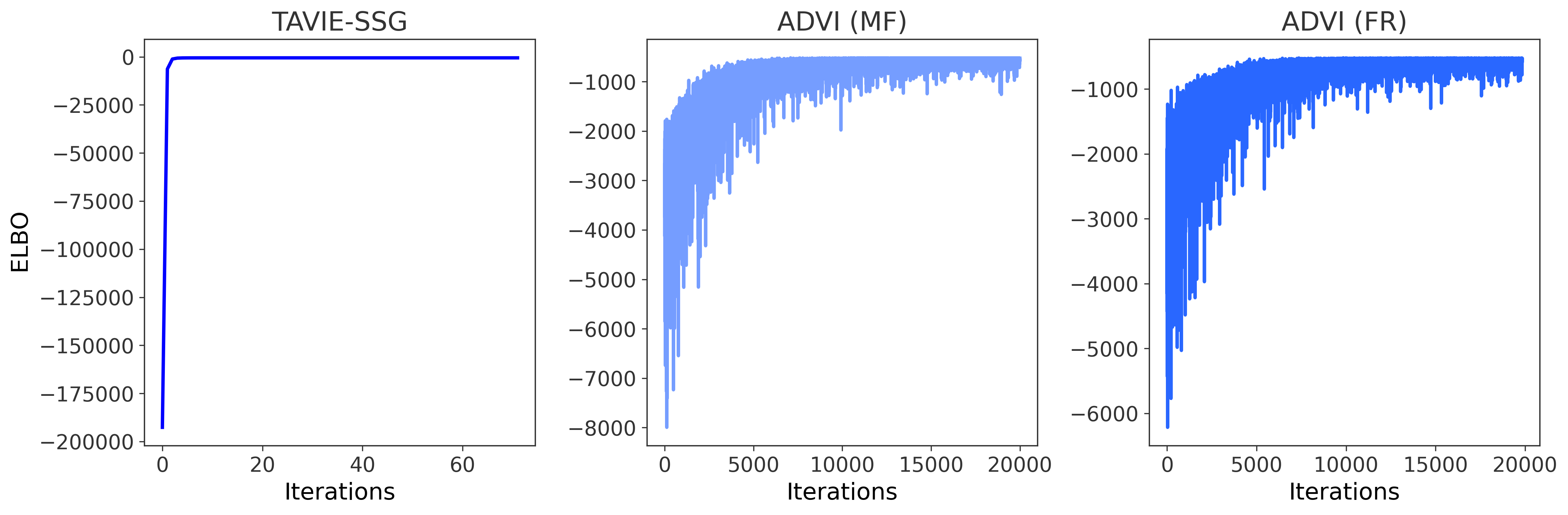}
    \caption{\footnotesize{Convergence diagnostics ($\mathsf{ELBO}$ monitoring) of $\tssg$, ADVI (MF), and ADVI (FR) for $(n,p)=(1000, 8)$ under the Laplace $\mathsf{SSG}$ likelihood. For $\tssg$, the $\mathsf{ELBO}$ corresponds to $\mathsf{L}(\xi)$ in~\eqref{eq:ELBO-general} of the main manuscript, while for ADVI (MF/FR) the MC approximation of the true $\mathsf{ELBO}$ is tracked.}}
    \label{fig:convergence_ELBO_laplace}
\end{figure}

\newpage

\subsection{Negative-Binomial Type II \texorpdfstring{$\mathsf{SSG}$}{SSG} Likelihood}\label{subsec:sim-exp-negbin}
The Negative-Binomial $\mathsf{SSG}$ model uses the count form with sizes $m_i>0$ (fixed) and success probability $p_i = \sigma(\mathbf{x}_i^{\top}\beta)$ with $\sigma(t) = [1 + \exp\{-t\}]^{-1}$ being the sigmoid function, where the per-observation probability mass function (for $y_i\in \mathbb{N}\cup \{0\}$) following \eqref{eq:param-negbin}, is:
\begin{align}\label{eq:negbin-SSG-model}
    y_i\mid \mathbf{x}_i, \beta \sim \mathrm{NB}(m_i, p_i), \quad p(y_i\mid \mathbf{x}_i, \beta) = \binom{y_i+m_i-1}{y_i}(1-p_i)^{y_i}p^{m_i},
\end{align}
independently for $i\in [n]$. Data, $\mathcal{D}_n:=\{(\mathbf{x}_i, y_i): i\in [n]\}$, is simulated using the model parameters in \eqref{eq:negbin-SSG-model} configured as: $m_i=10$ for $i\in [n]$, $\beta \sim \mathcal{N}_{p+1}(0, \sqrt{0.5}I_{p+1})$, and $\mathbf{X} = (\mathbf{x}_1, \ldots, \mathbf{x}_n)^{\top}\in \mathbb{R}^{n\times \overline{p+1}}$ with $x_{i1}=1$ and $x_{ij}$'s generated independently and identically from the standard Gaussian distribution for $j=2,\ldots, p+1$.

We consider the same two sets of experiments: (E1) increasing sample sizes with a fixed number of features ($n\in \{200, 500, 1000, 2000\}, p = 8$) and (E2) increasing dimensionality with a fixed sample size ($p\in \{3, 8, 15, 20\}, n = 1000$). Under both experimental specifications, $\tssg$ is implemented for the standard likelihood setup with $\alpha=1$ and the prior hyperparameters set as, $(\mu, \Sigma) = (0, I_{p+1})$. Convergence is assessed in Algorithm~\ref{alg:tavie-em} of the main manuscript through a tolerance level of $\texttt{tol}=10^{-9}$. The competing methods (DADVI, ADVI (MF/FR), \texttt{PyMC} (NUTS)) are executed under experimental settings outlined in Section \ref{app-competing-methods-details}. The performance of $\tssg$ and competing methods is assessed based on statistical accuracy, quantified by the mean-squared error (MSE) between the estimated and true parameter values, $(p+1)^{-1}\lVert \beta - \widehat{\beta}\rVert_{2}^{2}$ and $(\tau^2 - \widehat{\tau}^{2})^2$; computational efficiency, measured by the runtimes (in seconds) of each algorithm; and in experiment E1, sliced Wasserstein (SW) distance~\citep{bonneel2015sliced,kolouri2016sliced} between true and variational posteriors. The corresponding numerical results are tabulated in Tables \ref{tab:negbin_singlep_multin} and \ref{tab:negbin_singlen_multip} as well as visually represented in Figures \ref{fig:negbin_singlep_multin}, \ref{fig:negbin_singlen_multip} ($100$ repetitions of the dataset $\mathcal{D}_n$ for MSE analyses), and Figure~\ref{fig:neg-bin_SW_distance} over $10$ repetitions for the posterior distance.

For experiment E1, the results in Figure \ref{fig:negbin_singlep_multin} and Table \ref{tab:negbin_singlep_multin} highlight the superior performance of $\tssg$ comparing well with DADVI and \texttt{PyMC} (NUTS) in accurately estimating $\beta$. In contrast, both ADVI (MF/FR) produce less accurate posterior estimates of $\beta$ with comparatively higher MSE, reflecting poorer black-box optimization as sample sizes increases. Also, the posterior distance comparisons in Figure~\ref{fig:neg-bin_SW_distance} show that $\tssg$ and DADVI produce closer approximations to the true posterior than ADVI variants.

The instability of the black-box optimization in ADVI (MF/FR) when estimating $\beta$ persists across varying (increasing) feature dimensions with a fixed sample size, i.e., in experiment E2, as portrayed through the results in Figure \ref{fig:negbin_singlen_multip} and Table \ref{tab:negbin_singlen_multip}, respectively. $\tssg$ exhibits improved performance over DADVI and \texttt{PyMC} (NUTS) in $\beta$-estimation, while also maintaining strong computational advantages, achieving runtimes an order of magnitude lower than the competing algorithms across all Negative-Binomial simulation studies (refer to Figure~\ref{fig:runtime_negbin}). Additional convergence diagnostics, showing the $\mathsf{ELBO}$ history over iterations for $\tssg$ and ADVI (MF/FR) are provided in Figure \ref{fig:convergence_ELBO_neg-bin}.

{\scriptsize
\setlength{\tabcolsep}{3pt} 
\renewcommand{\arraystretch}{1.15}%
\begin{longtable}{llcccc}
\caption{\footnotesize{Comparison of $\tssg$ against DADVI, ADVI (MF/FR), and \texttt{PyMC} (NUTS). Performance [median over $100$ repetitions; quartile range ($Q_1, Q_3$)  in parentheses] for the Negative-Binomial $\mathsf{SSG}$ likelihood under experiment E1. Top $3$ best performing algorithms are in \textbf{bold}.}}\label{tab:negbin_singlep_multin}\\
\toprule
\toprule
 &  & \multicolumn{4}{c}{Sample Size ($n$)} \\
\cmidrule(lr){3-6}
Metric & Method & 200 & 500 & 1000 & 2000 \\
\midrule
\endfirsthead
\toprule
 &  & \multicolumn{4}{c}{Sample Size ($n$)} \\
\cmidrule(lr){3-6}
Metric & Method & 200 & 500 & 1000 & 2000 \\
\midrule
\endhead
\midrule
\multicolumn{6}{r}{Continued on next page} \\
\midrule
\endfoot
\bottomrule
\endlastfoot

\multirow{5}{*}{MSE of $\widehat{\beta}$}
& $\tssg$ & \makecell[c]{\textbf{1.110e-03}\\(7.542e-04, 1.583e-03)} & \makecell[c]{\textbf{5.028e-04}\\(3.510e-04, 6.714e-04)} & \makecell[c]{\textbf{2.326e-04}\\(1.671e-04, 3.165e-04)} & \makecell[c]{\textbf{1.279e-04}\\(9.403e-05, 1.772e-04)} \\
& DADVI & \makecell[c]{\textbf{1.080e-03}\\(8.092e-04, 1.690e-03)} & \makecell[c]{\textbf{5.070e-04}\\(3.724e-04, 6.997e-04)} & \makecell[c]{\textbf{2.436e-04}\\(1.708e-04, 3.205e-04)} & \makecell[c]{\textbf{1.307e-04}\\(9.823e-05, 1.734e-04)} \\
& ADVI (MF) & \makecell[c]{1.429e-03\\(9.713e-04, 2.072e-03)} & \makecell[c]{6.106e-04\\(4.164e-04, 8.447e-04)} & \makecell[c]{2.791e-04\\(2.051e-04, 3.956e-04)} & \makecell[c]{1.789e-04\\(1.238e-04, 2.527e-04)} \\
& ADVI (FR) & \makecell[c]{1.385e-03\\(9.088e-04, 2.166e-03)} & \makecell[c]{6.110e-04\\(4.233e-04, 8.934e-04)} & \makecell[c]{2.800e-04\\(2.017e-04, 3.646e-04)} & \makecell[c]{1.432e-04\\(9.815e-05, 1.962e-04)} \\
& \texttt{PyMC} (NUTS) & \makecell[c]{\textbf{1.112e-03}\\(7.403e-04, 1.580e-03)} & \makecell[c]{\textbf{5.055e-04}\\(3.568e-04, 6.703e-04)} & \makecell[c]{\textbf{2.357e-04}\\(1.680e-04, 3.185e-04)} & \makecell[c]{\textbf{1.279e-04}\\(9.441e-05, 1.764e-04)} \\
\hline

\multirow{5}{*}{Runtime (s)}
& $\tssg$ & \makecell[c]{\textbf{3.709e-03}\\(3.434e-03, 4.485e-03)} & \makecell[c]{\textbf{4.459e-03}\\(4.083e-03, 5.576e-03)} & \makecell[c]{\textbf{5.513e-03}\\(4.949e-03, 7.461e-03)} & \makecell[c]{\textbf{8.233e-03}\\(7.327e-03, 1.003e-02)} \\
& DADVI & \makecell[c]{\textbf{7.744e-01}\\(7.262e-01, 8.346e-01)} & \makecell[c]{\textbf{9.946e-01}\\(9.484e-01, 1.041e+00)} & \makecell[c]{\textbf{1.224e+00}\\(1.171e+00, 1.291e+00)} & \makecell[c]{\textbf{1.499e+00}\\(1.454e+00, 1.557e+00)} \\
& ADVI (MF) & \makecell[c]{\textbf{4.008e+00}\\(3.937e+00, 4.070e+00)} & \makecell[c]{\textbf{4.075e+00}\\(3.977e+00, 4.147e+00)} & \makecell[c]{\textbf{4.061e+00}\\(3.978e+00, 4.162e+00)} & \makecell[c]{\textbf{4.207e+00}\\(4.114e+00, 4.311e+00)} \\
& ADVI (FR) & \makecell[c]{1.276e+01\\(1.206e+01, 1.351e+01)} & \makecell[c]{1.461e+01\\(1.372e+01, 1.490e+01)} & \makecell[c]{1.489e+01\\(1.443e+01, 1.525e+01)} & \makecell[c]{1.516e+01\\(1.493e+01, 1.545e+01)} \\
& \texttt{PyMC} (NUTS) & \makecell[c]{6.652e+00\\(6.403e+00, 6.943e+00)} & \makecell[c]{6.995e+00\\(6.465e+00, 7.411e+00)} & \makecell[c]{7.798e+00\\(7.227e+00, 8.426e+00)} & \makecell[c]{1.011e+01\\(9.515e+00, 1.121e+01)} \\
\hline
\end{longtable}
}

{\scriptsize
\setlength{\tabcolsep}{3pt} 
\renewcommand{\arraystretch}{1.15}%
\begin{longtable}{llcccc}
\caption{\footnotesize{Comparison of $\tssg$ against DADVI, ADVI (MF/FR), and \texttt{PyMC} (NUTS). Performance [median over $100$ repetitions; quartile range ($Q_1, Q_3$)  in parentheses] for the Negative-Binomial $\mathsf{SSG}$ likelihood under experiment E2. Top $3$ best performing algorithms are in \textbf{bold}.}}
\label{tab:negbin_singlen_multip}\\
\toprule
\toprule
 &  & \multicolumn{4}{c}{Dimension ($p$)} \\
\cmidrule(lr){3-6}
Metric & Method & 3 & 8 & 15 & 20 \\
\midrule
\endfirsthead
\toprule
 &  & \multicolumn{4}{c}{Dimension ($p$)} \\
\cmidrule(lr){3-6}
Metric & Method & 3 & 8 & 15 & 20 \\
\midrule
\endhead
\midrule
\multicolumn{6}{r}{Continued on next page} \\
\midrule
\endfoot
\bottomrule
\endlastfoot

\multirow{5}{*}{MSE of $\widehat{\beta}$}
& $\tssg$ & \makecell[c]{\textbf{1.471e-04}\\(1.035e-04, 2.732e-04)} & \makecell[c]{\textbf{1.652e-04}\\(1.126e-04, 2.204e-04)} & \makecell[c]{\textbf{1.795e-04}\\(1.383e-04, 2.020e-04)} & \makecell[c]{\textbf{2.415e-04}\\(1.772e-04, 2.939e-04)} \\
& DADVI & \makecell[c]{\textbf{1.694e-04}\\(1.094e-04, 2.756e-04)} & \makecell[c]{\textbf{1.730e-04}\\(1.176e-04, 2.400e-04)} & \makecell[c]{\textbf{1.801e-04}\\(1.458e-04, 2.151e-04)} & \makecell[c]{\textbf{2.437e-04}\\(1.828e-04, 2.921e-04)} \\
& ADVI (MF) & \makecell[c]{3.575e-04\\(1.720e-04, 6.735e-04)} & \makecell[c]{2.794e-04\\(1.820e-04, 4.092e-04)} & \makecell[c]{2.593e-03\\(1.782e-03, 3.646e-03)} & \makecell[c]{2.100e-02\\(1.101e-02, 4.009e-02)} \\
& ADVI (FR) & \makecell[c]{3.530e-04\\(1.856e-04, 7.168e-04)} & \makecell[c]{1.741e-04\\(1.296e-04, 2.567e-04)} & \makecell[c]{1.833e-04\\(1.547e-04, 2.292e-04)} & \makecell[c]{2.549e-04\\(1.995e-04, 3.117e-04)} \\
& \texttt{PyMC} (NUTS) & \makecell[c]{\textbf{1.510e-04}\\(1.028e-04, 2.761e-04)} & \makecell[c]{\textbf{1.645e-04}\\(1.137e-04, 2.168e-04)} & \makecell[c]{\textbf{1.814e-04}\\(1.377e-04, 2.017e-04)} & \makecell[c]{\textbf{2.423e-04}\\(1.769e-04, 2.909e-04)} \\
\hline

\multirow{5}{*}{Runtime (s)}
& $\tssg$ & \makecell[c]{\textbf{3.634e-03}\\(3.177e-03, 4.488e-03)} & \makecell[c]{\textbf{9.286e-03}\\(8.433e-03, 1.044e-02)} & \makecell[c]{\textbf{4.726e-02}\\(4.075e-02, 5.987e-02)} & \makecell[c]{\textbf{8.083e-02}\\(5.691e-02, 1.095e-01)} \\
& DADVI & \makecell[c]{\textbf{1.384e+00}\\(1.349e+00, 1.446e+00)} & \makecell[c]{\textbf{1.542e+00}\\(1.495e+00, 1.593e+00)} & \makecell[c]{\textbf{1.636e+00}\\(1.599e+00, 1.715e+00)} & \makecell[c]{\textbf{1.705e+00}\\(1.658e+00, 1.750e+00)} \\
& ADVI (MF) & \makecell[c]{\textbf{4.030e+00}\\(3.963e+00, 4.084e+00)} & \makecell[c]{\textbf{4.156e+00}\\(4.077e+00, 4.226e+00)} & \makecell[c]{\textbf{4.145e+00}\\(4.077e+00, 4.223e+00)} & \makecell[c]{\textbf{4.123e+00}\\(4.067e+00, 4.179e+00)} \\
& ADVI (FR) & \makecell[c]{1.223e+01\\(1.114e+01, 1.317e+01)} & \makecell[c]{1.517e+01\\(1.489e+01, 1.538e+01)} & \makecell[c]{1.525e+01\\(1.496e+01, 1.542e+01)} & \makecell[c]{1.518e+01\\(1.499e+01, 1.532e+01)} \\
& \texttt{PyMC} (NUTS) & \makecell[c]{6.560e+00\\(6.338e+00, 6.989e+00)} & \makecell[c]{8.407e+00\\(8.127e+00, 8.794e+00)} & \makecell[c]{9.336e+00\\(9.053e+00, 9.762e+00)} & \makecell[c]{9.480e+00\\(9.132e+00, 1.004e+01)} \\
\hline
\end{longtable}
}

\begin{figure}[!htp]
    \centering
    \includegraphics[width=\linewidth]{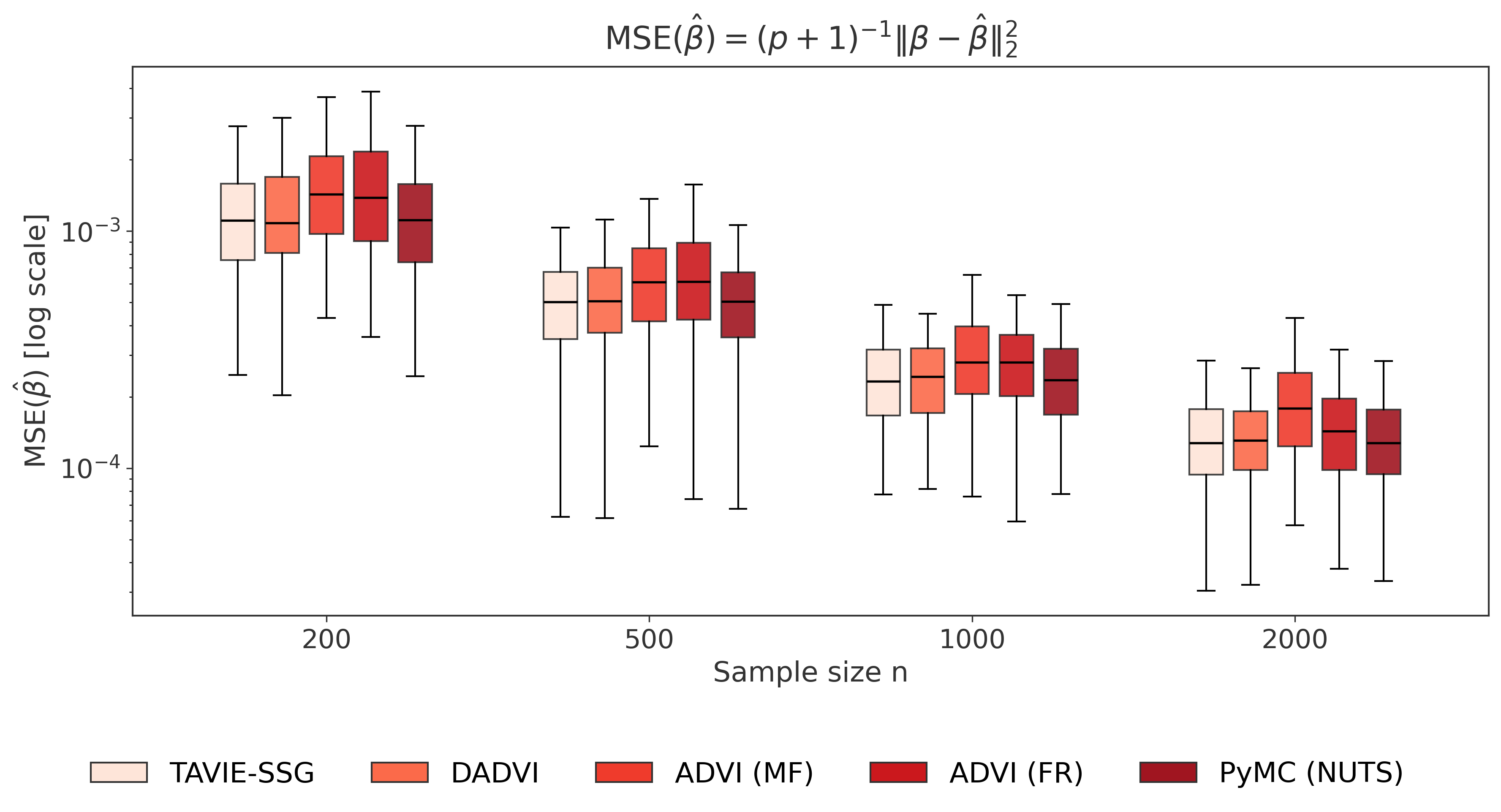}
    \caption{\footnotesize{MSE of $\widehat{\beta}$ (in $\log$-scale) 
    across $100$ data repetitions of $\tssg$ and competitors for the Negative-Binomial $\mathsf{SSG}$ likelihood under experiment E1: $n\in \{200, 500, 1000, 2000\},\; p=8$.}}
    \label{fig:negbin_singlep_multin}
\end{figure}

\begin{figure}[!htp]
    \centering
    \includegraphics[width=0.8\linewidth]{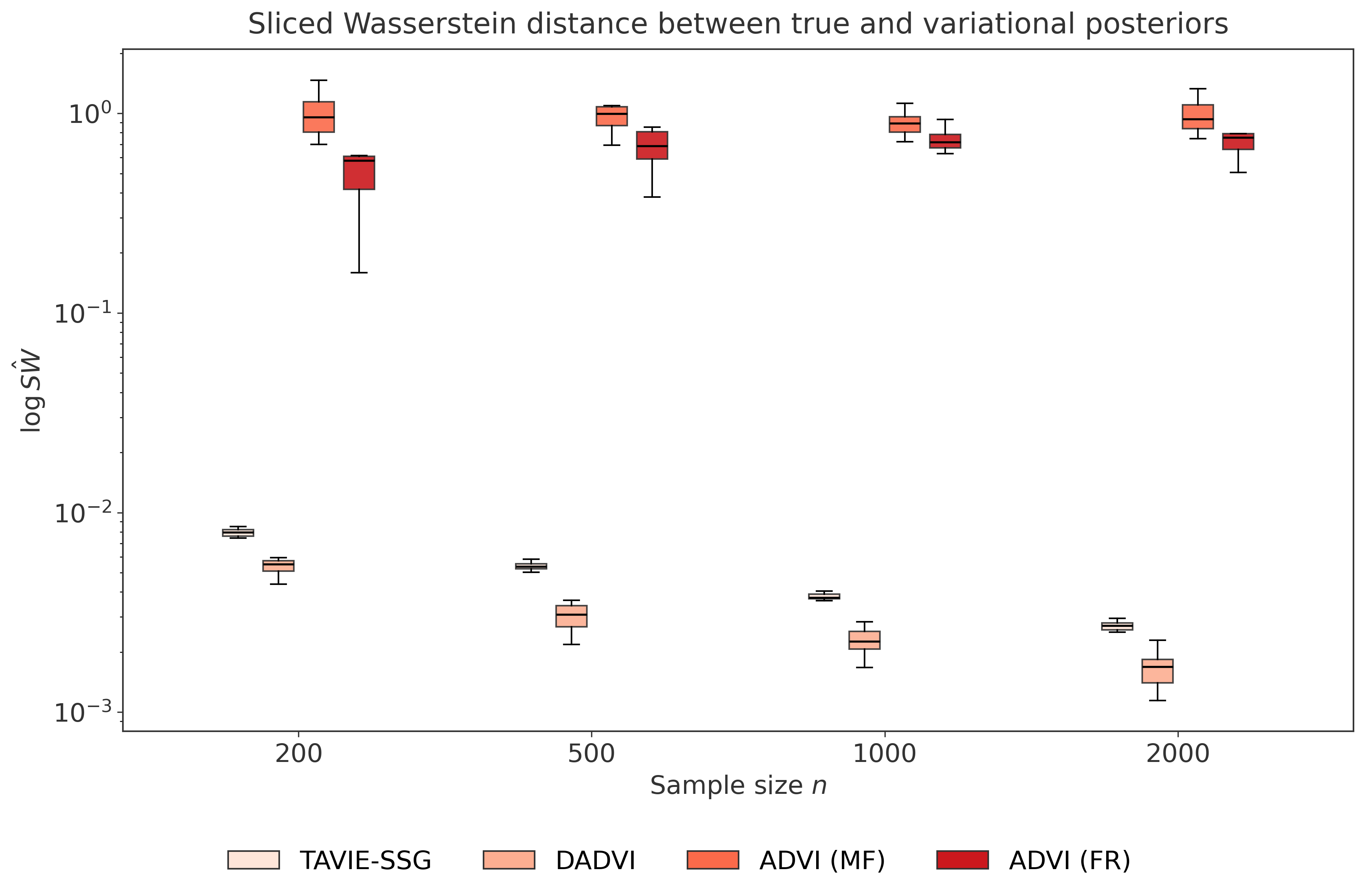}
    \caption{\footnotesize{Monte Carlo (MC) estimated sliced Wasserstein (SW) distance (in $\log$-scale) between the fractional true posterior (approximated using \texttt{PyMC} NUTS) and the variational posteriors of $\tssg$ and competitors under the Negative-Binomial Type II $\ssg$ model, with $p=8$ and $\alpha=1$. For each sample size $n\in \{200, 500, 1000, 2000\}$, the boxplots summarize results over $10$ independent replications.}}
    \label{fig:neg-bin_SW_distance}
\end{figure}

\begin{figure}[!htp]
    \centering
    \includegraphics[width=\linewidth]{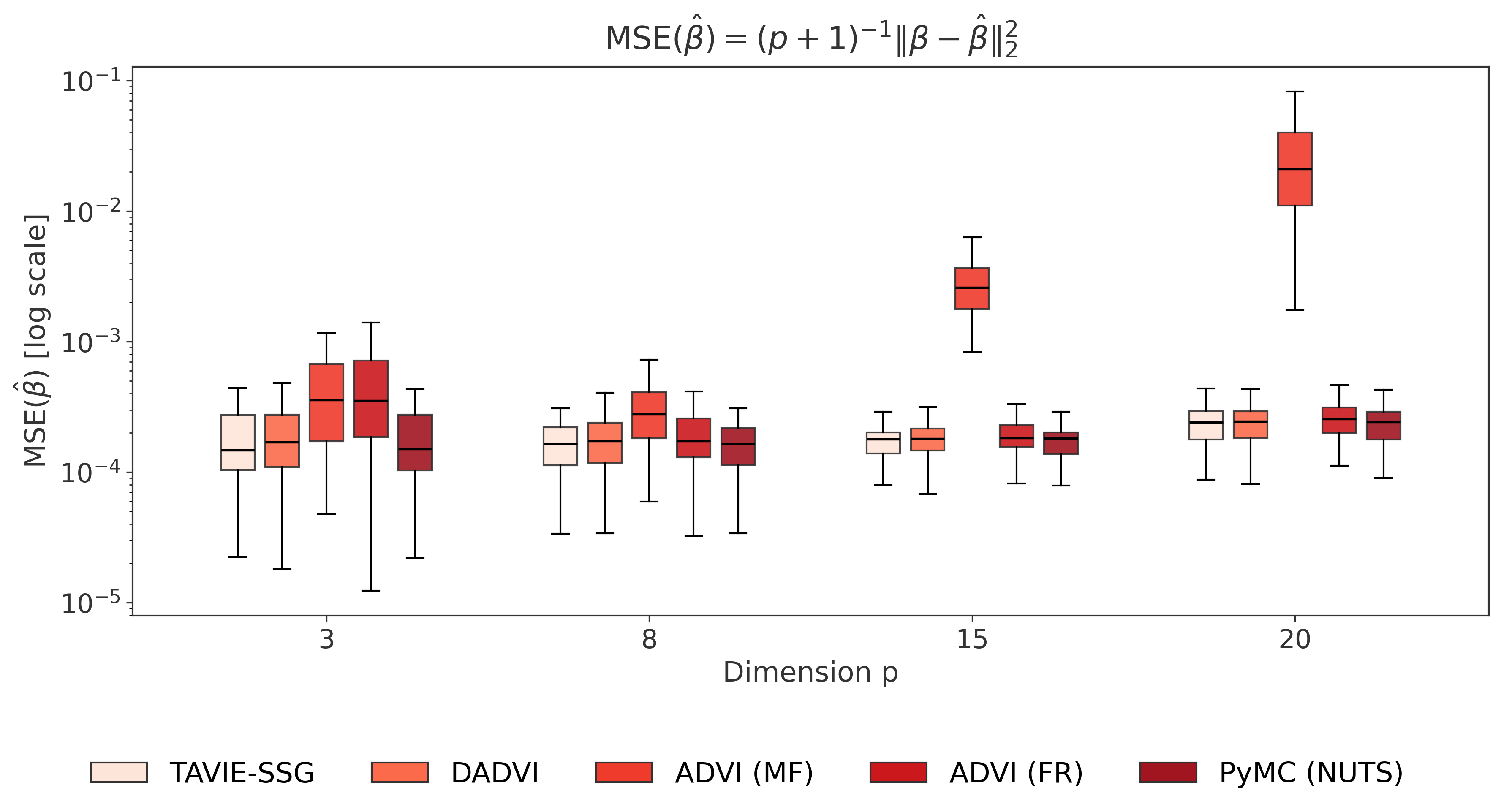}
    \caption{\footnotesize{MSE of $\widehat{\beta}$ (in $\log$-scale) across $100$ data repetitions of $\tssg$ and competitors for the Negative-Binomial $\mathsf{SSG}$ likelihood under experiment E2: $p\in \{3, 8, 15, 20\}, n=1000$.}}
    \label{fig:negbin_singlen_multip}
\end{figure}

\begin{figure}[!htp]
    \centering
    \includegraphics[width=1.0\linewidth]{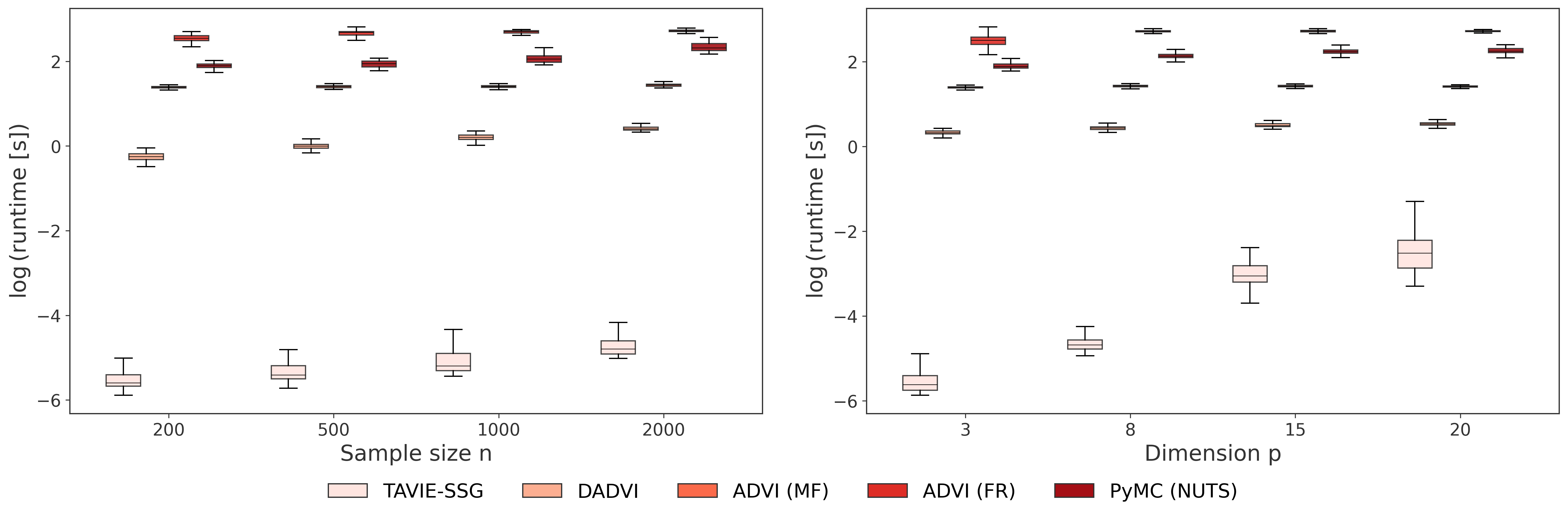}
    \caption{\footnotesize{Runtimes (in $\log$-scale) across $100$ data repetitions of $\tssg$ and competitors for the Negative-Binomial $\ssg$ likelihood under experiments E1 and E2.}}
    \label{fig:runtime_negbin}
\end{figure}

\newpage

\begin{figure}[!htp]
    \centering
    \includegraphics[width=1.0\linewidth]{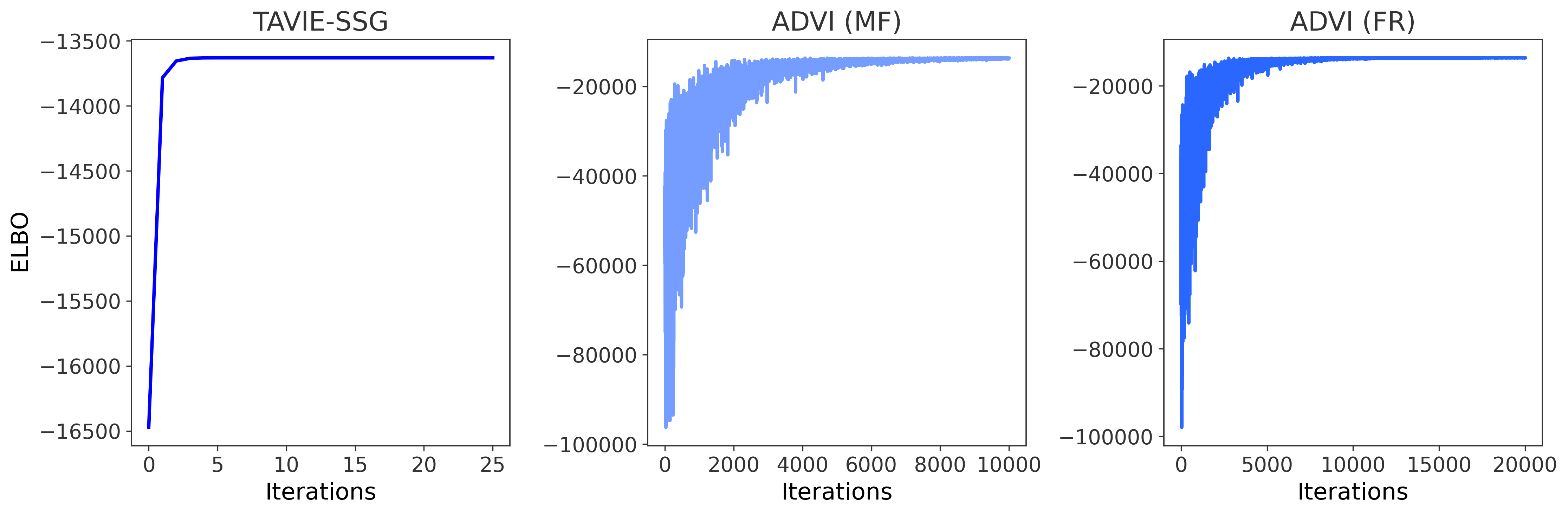}
    \caption{\footnotesize{Convergence diagnostics ($\mathsf{ELBO}$ monitoring) of $\tssg$, ADVI (MF), and ADVI (FR) for $(n,p)=(1000, 8)$ under the Negative-Binomial $\mathsf{SSG}$ likelihood. For $\tssg$, the $\mathsf{ELBO}$ corresponds to $\mathsf{L}(\xi)$ in~\eqref{eq:ELBO-general} of the main manuscript, while for ADVI (MF/FR) the MC approximation of the true $\mathsf{ELBO}$ is tracked.}}
    \label{fig:convergence_ELBO_neg-bin}
\end{figure}

\subsection{Extended Evaluation of \texorpdfstring{$\tssg$}{TAVIE-SSG} under Different \texorpdfstring{$\alpha$}{alpha}}

In addition to the results presented in Sections \ref{app:additional-student-results}, \ref{app:additional-laplace-results}, and \ref{subsec:sim-exp-negbin}, we further evaluate the performance of $\tssg$ under Student's-$t$ ($\nu=5$), Laplace, and Negative-Binomial $\ssg$ likelihoods over a range of tempering parameters $\alpha \in \{0.20,\, 0.40,\, 0.60,\, 0.80,\, 0.95,\, 1.00\}$. For each likelihood tempered with the aforementioned choices of $\alpha$, $\tssg$ is implemented using the same experimental settings as in the corresponding simulation setups of Sections~\ref{subsec:sim-exp-student} (in the main manuscript), \ref{app:additional-laplace-results}, and \ref{subsec:sim-exp-negbin}, respectively. All experiments are conducted at sample size and feature dimension $(n, p) = (2000, 8)$ for the specified values of $\alpha$.

\newpage

\subsubsection{Student's-\texorpdfstring{$t$}{t} Type I \texorpdfstring{$\ssg$}{SSG} Likelihood}

{\scriptsize
\setlength{\tabcolsep}{6pt} 
\renewcommand{\arraystretch}{1.15}%
\begin{longtable}{lccc}
\caption{\footnotesize{$\tssg$ under different choices of the likelihood tempering parameter $\alpha$. Performance [median over $100$ repetitions of the simulated dataset with $(n, p) = (2000, 8)$; quartile range $(Q_1, Q_3)$ in parentheses] for the Student's-$t$ $\ssg$ likelihood ($\nu=5$).}} \label{tab:Student-different-alpha}\\
\toprule
\toprule
$\alpha$ & MSE of $\widehat{\beta}$ & MSE of $\widehat{\tau}^{2}$ & Runtime (s) \\
\midrule
\endfirsthead

\toprule
$\alpha$ & MSE of $\widehat{\beta}$ & MSE of $\widehat{\tau}^{2}$ & Runtime (s) \\
\midrule
\endhead

\midrule
\multicolumn{4}{r}{Continued on next page} \\
\midrule
\endfoot

\bottomrule
\bottomrule
\endlastfoot

0.20 & 
  \makecell[c]{2.074e-04\\(1.492e-04, 3.004e-04)} &
  \makecell[c]{1.038e-01\\(5.693e-02, 1.356e-01)} &
  \makecell[c]{7.745e-03\\(7.572e-03, 7.964e-03)} \\

0.40 & 
  \makecell[c]{2.059e-04\\(1.474e-04, 2.786e-04)} &
  \makecell[c]{3.083e-02\\(7.879e-03, 5.140e-02)} &
  \makecell[c]{8.070e-03\\(7.924e-03, 8.298e-03)} \\

0.60 & 
  \makecell[c]{1.993e-04\\(1.487e-04, 2.770e-04)} &
  \makecell[c]{1.704e-02\\(3.443e-03, 3.055e-02)} &
  \makecell[c]{8.188e-03\\(8.110e-03, 8.357e-03)} \\

0.80 & 
  \makecell[c]{1.982e-04\\(1.477e-04, 2.773e-04)} &
  \makecell[c]{1.196e-02\\(2.989e-03, 2.426e-02)} &
  \makecell[c]{8.262e-03\\(8.152e-03, 8.415e-03)} \\

0.95 & 
  \makecell[c]{1.970e-04\\(1.471e-04, 2.769e-04)} &
  \makecell[c]{1.001e-02\\(2.434e-03, 2.040e-02)} &
  \makecell[c]{8.276e-03\\(8.136e-03, 8.369e-03)} \\

1.00 & 
  \makecell[c]{1.971e-04\\(1.470e-04, 2.770e-04)} &
  \makecell[c]{9.789e-03\\(2.427e-03, 1.951e-02)} &
  \makecell[c]{8.335e-03\\(8.143e-03, 8.393e-03)} \\

\end{longtable}
}

\begin{figure}[H]
    \centering
    \includegraphics[width=0.8\linewidth]{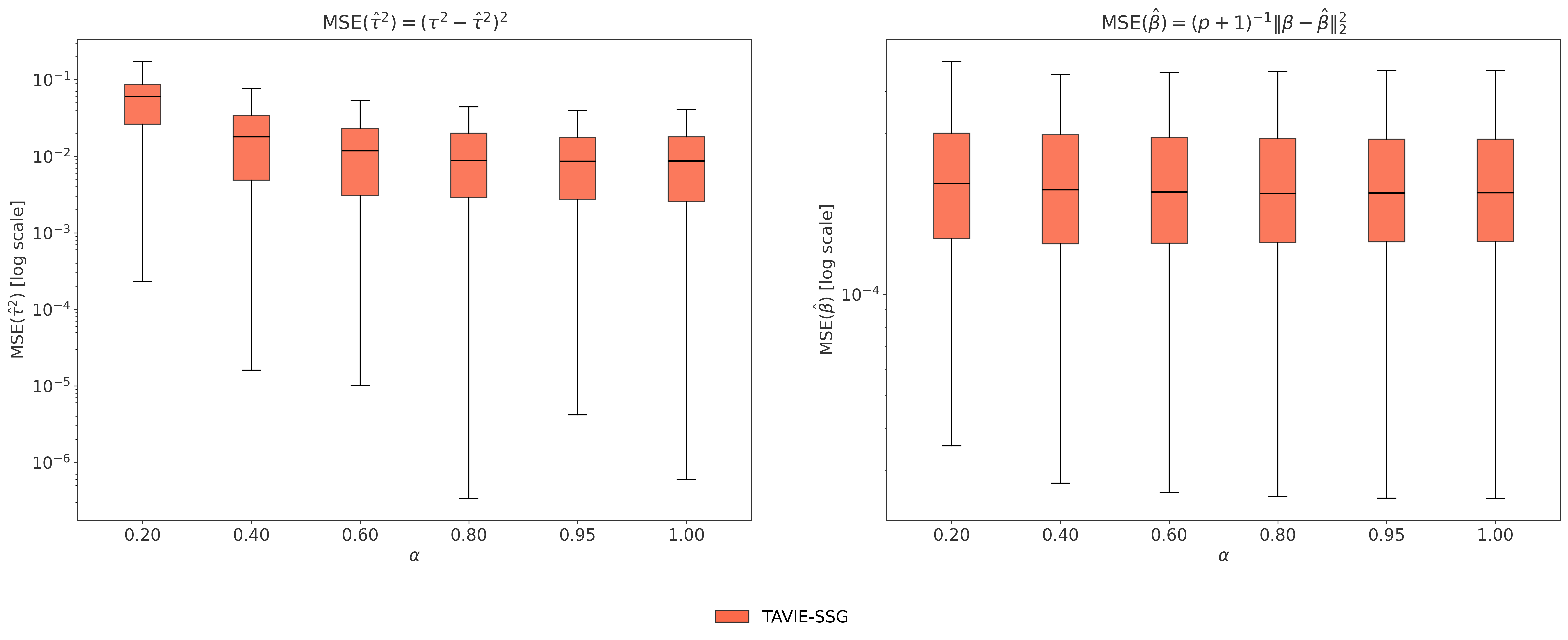}
    \caption{\footnotesize{MSEs of $(\widehat{\beta}, \widehat{\tau}^2)$ (in $\log$-scale) across $100$ data repetitions with $(n, p) = (2000, 8)$ for $\tssg$ under Student's-$t$ $\ssg$ likelihood ($\nu=5$) across different choices of the likelihood tempering parameter $\alpha$.}}
    \label{fig:Student_alpha_MSE_boxplots}
\end{figure}

\newpage

\subsubsection{Laplace Type I \texorpdfstring{$\ssg$}{SSG} Likelihood}

{\scriptsize
\setlength{\tabcolsep}{6pt} 
\renewcommand{\arraystretch}{1.15}%
\begin{longtable}{lccc}
\caption{\footnotesize{$\tssg$ under different choices of the likelihood tempering parameter $\alpha$. Performance [median over $100$ repetitions of the simulated dataset with $(n, p) = (2000, 8)$; quartile range $(Q_1, Q_3)$ in parentheses] for the Laplace $\ssg$ likelihood.}} \label{tab:Laplace-different-alpha}\\
\toprule
\toprule
$\alpha$ & MSE of $\widehat{\beta}$ & MSE of $\widehat{\tau}^{2}$ & Runtime (s) \\
\midrule
\endfirsthead

\toprule
$\alpha$ & MSE of $\widehat{\beta}$ & MSE of $\widehat{\tau}^{2}$ & Runtime (s) \\
\midrule
\endhead

\midrule
\multicolumn{4}{r}{Continued on next page} \\
\midrule
\endfoot

\bottomrule
\bottomrule
\endlastfoot

0.20 &
  \makecell[c]{1.609e-04\\(1.276e-04, 2.438e-04)} &
  \makecell[c]{1.227e-01\\(8.567e-02, 1.748e-01)} &
  \makecell[c]{1.651e-02\\(1.509e-02, 1.803e-02)} \\

0.40 &
  \makecell[c]{1.609e-04\\(1.172e-04, 2.355e-04)} &
  \makecell[c]{3.431e-02\\(1.422e-02, 6.922e-02)} &
  \makecell[c]{1.938e-02\\(1.768e-02, 2.098e-02)} \\

0.60 &
  \makecell[c]{1.614e-04\\(1.181e-04, 2.377e-04)} &
  \makecell[c]{1.677e-02\\(5.419e-03, 4.202e-02)} &
  \makecell[c]{2.104e-02\\(1.963e-02, 2.302e-02)} \\

0.80 &
  \makecell[c]{1.627e-04\\(1.182e-04, 2.325e-04)} &
  \makecell[c]{1.291e-02\\(3.910e-03, 3.094e-02)} &
  \makecell[c]{2.306e-02\\(2.059e-02, 2.528e-02)} \\

0.95 &
  \makecell[c]{1.646e-04\\(1.170e-04, 2.314e-04)} &
  \makecell[c]{1.084e-02\\(2.924e-03, 2.653e-02)} &
  \makecell[c]{2.405e-02\\(2.158e-02, 2.670e-02)} \\

1.00 &
  \makecell[c]{1.652e-04\\(1.173e-04, 2.314e-04)} &
  \makecell[c]{1.019e-02\\(2.803e-03, 2.526e-02)} &
  \makecell[c]{2.485e-02\\(2.251e-02, 2.783e-02)} \\

\end{longtable}
}

\begin{figure}[H]
    \centering
    \includegraphics[width=\linewidth]{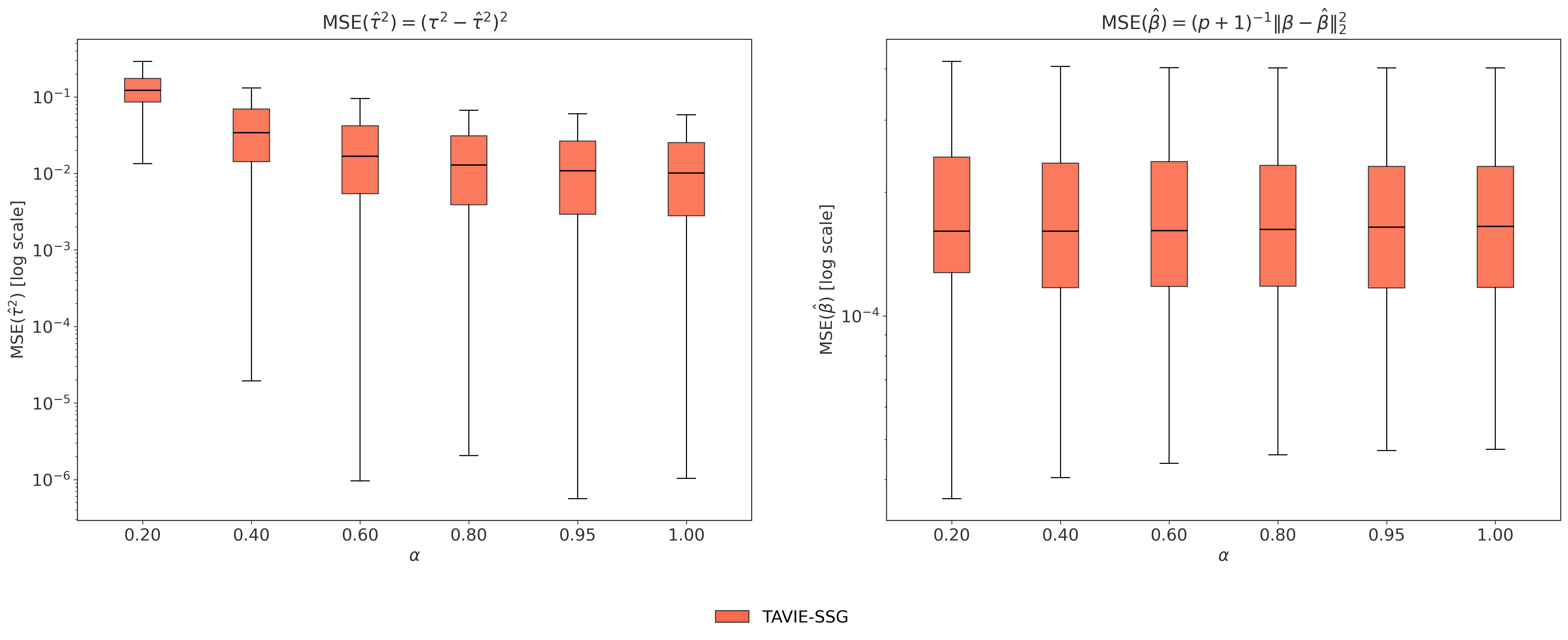}
    \caption{\footnotesize{MSEs of $(\widehat{\beta}, \widehat{\tau}^2)$ (in $\log$-scale) across $100$ data repetitions with $(n, p) = (2000, 8)$ for $\tssg$ under Laplace $\ssg$ likelihood across different choices of the likelihood tempering parameter $\alpha$.}}
    \label{fig:Laplace_alpha_MSE_boxplots}
\end{figure}

\newpage

\subsubsection{Negative-Binomial Type II \texorpdfstring{$\ssg$}{SSG} Likelihood}

{\scriptsize
\setlength{\tabcolsep}{6pt} 
\renewcommand{\arraystretch}{1.15}%
\begin{longtable}{lccc}
\caption{\footnotesize{$\tssg$ under different choices of the likelihood tempering parameter $\alpha$. Performance [median over $100$ repetitions of the simulated dataset with $(n, p) = (2000, 8)$; quartile range $(Q_1, Q_3)$ in parentheses] for the Negative-Binomial $\ssg$ likelihood.}} \label{tab:NegBin-different-alpha}\\
\toprule
\toprule
$\alpha$ & MSE of $\widehat{\beta}$ & Runtime (s) \\
\midrule
\endfirsthead

\toprule
$\alpha$ & MSE of $\widehat{\beta}$ & Runtime (s) \\
\midrule
\endhead

\midrule
\multicolumn{3}{r}{Continued on next page} \\
\midrule
\endfoot

\bottomrule
\bottomrule
\endlastfoot

0.20 &
  \makecell[c]{8.763e-05\\(5.344e-05, 1.245e-04)} &
  \makecell[c]{1.768e-02\\(1.663e-02, 1.959e-02)} \\

0.40 &
  \makecell[c]{8.802e-05\\(5.310e-05, 1.238e-04)} &
  \makecell[c]{1.794e-02\\(1.670e-02, 2.031e-02)} \\

0.60 &
  \makecell[c]{8.815e-05\\(5.300e-05, 1.236e-04)} &
  \makecell[c]{1.874e-02\\(1.718e-02, 2.062e-02)} \\

0.80 &
  \makecell[c]{8.822e-05\\(5.295e-05, 1.235e-04)} &
  \makecell[c]{1.811e-02\\(1.709e-02, 2.044e-02)} \\

0.95 &
  \makecell[c]{8.825e-05\\(5.292e-05, 1.234e-04)} &
  \makecell[c]{1.784e-02\\(1.645e-02, 1.964e-02)} \\

1.00 &
  \makecell[c]{8.826e-05\\(5.292e-05, 1.234e-04)} &
  \makecell[c]{1.802e-02\\(1.699e-02, 2.014e-02)} \\

\end{longtable}
}

\begin{figure}[H]
    \centering
    \includegraphics[width=\linewidth]{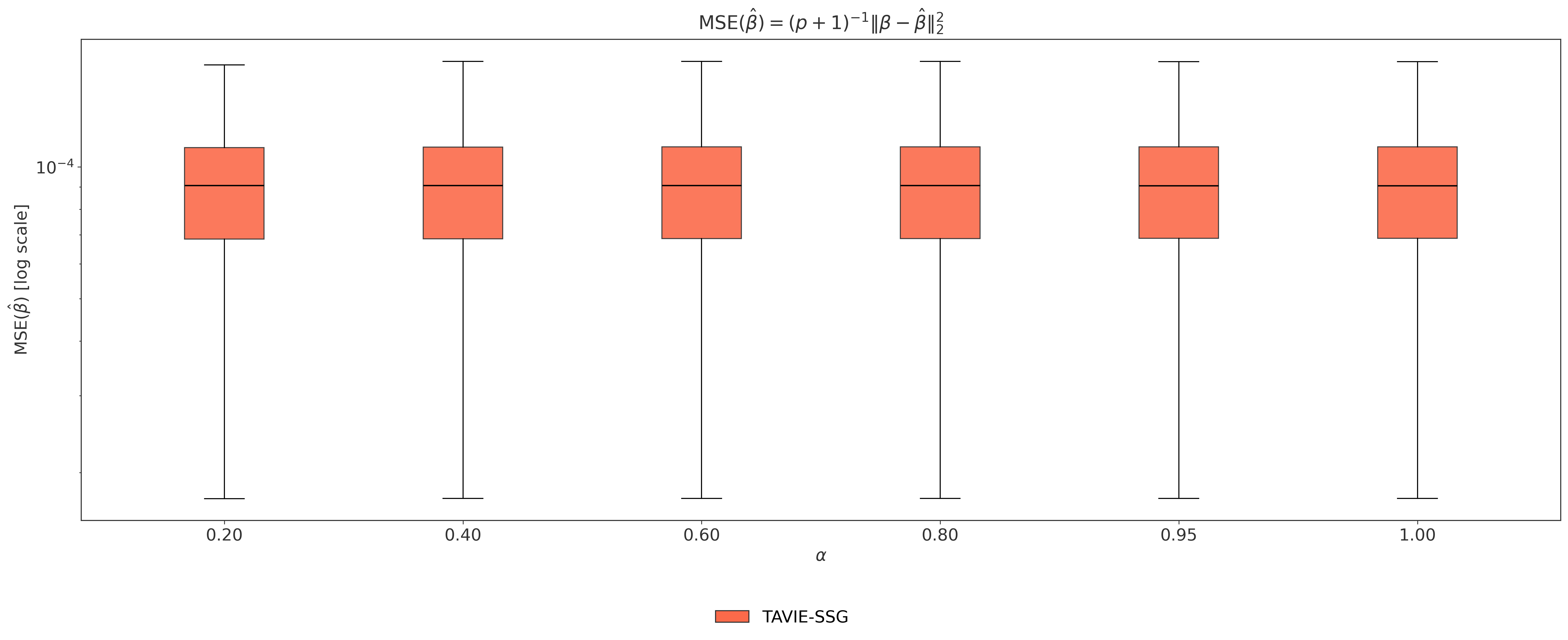}
    \caption{\footnotesize{MSE of $\widehat{\beta}$ (in $\log$-scale) across $100$ data repetitions with $(n, p) = (2000, 8)$ for $\tssg$ under Negative-Binomial $\ssg$ likelihood across different choices of the likelihood tempering parameter $\alpha$.}}
    \label{fig:NegBin_alpha_MSE_boxplots}
\end{figure}

\subsection{Effect of the Dimension-to-Sample-Size Ratio on \texorpdfstring{$\tssg$}{TAVIE-SSG}}
\label{subsec:regimes-p-comparable-n}
In contrast to Sections~\ref{app:additional-student-results} and~\ref{app:additional-laplace-results}-\ref{subsec:sim-exp-negbin}, here we investigate the empirical behavior of $\tssg$ in a regime where the sample size $n$ is held fixed while the ambient dimension $p$ increases from small to moderate values, eventually becoming comparable to $n$. For illustration, we consider the Student's-$t$ Type I (the dimension-to-sample-size ratio increases up to $0.80$, i.e., $\tfrac{p}{n} \uparrow 0.80$) and Negative-Binomial Type II (the dimension-to-sample-size ratio increases up to $0.50$, i.e., $\tfrac{p}{n} \uparrow 0.50$) $\ssg$ models under the same experimental configurations as outlined in Section~\ref{subsec:sim-exp-student} of the main manuscript and Section~\ref{subsec:sim-exp-negbin}, respectively.

Figure~\ref{fig:mse_boxplots_scaling_p_student_t} reports the mean-squared errors (MSEs) of $\widehat{\beta}$ and $\widehat{\tau}^2$, i.e., $(p+1)^{-1}\lVert \beta - \widehat{\beta}\rVert_{2}^{2}$ and $(\tau^{2} - \widehat{\tau}^{2})^2$, across $10$ independently regenerated datasets for the Student's-$t$ Type I $\ssg$ model. Here we fix $n=1000$ and consider $p\in \{50, 100, 200, 500, 800\}$. The left panel shows that $\tssg$ remains highly accurate for estimating $\beta$ over a broad range of dimensions, with only mild deterioration up to $p=200$. A more noticeable increase in MSE appears at $p=500$, and the degradation becomes pronounced at $p=800$. Thus, recovery of $\beta$ remains robust in moderate dimensions, but worsens once the model becomes highly parameter-rich relative to the sample size.
The right panel shows a stronger sensitivity for estimation of the precision parameter $\tau^2$. While the MSE of $\widehat{\tau}^{2}$ is small for low dimensions, it increases steadily with $p$, becoming substantially larger from $p=100$ onward and particularly pronounced at $p=500$ and $p=800$. This suggests that scale estimation is more sensitive to growing dimensionality than $\beta$-estimation, which is expected since reliable recovery of the residual scale depends on accurate estimation of a large number of regression coefficients.

\begin{figure}[!htp]
    \centering
    \includegraphics[width=\linewidth]{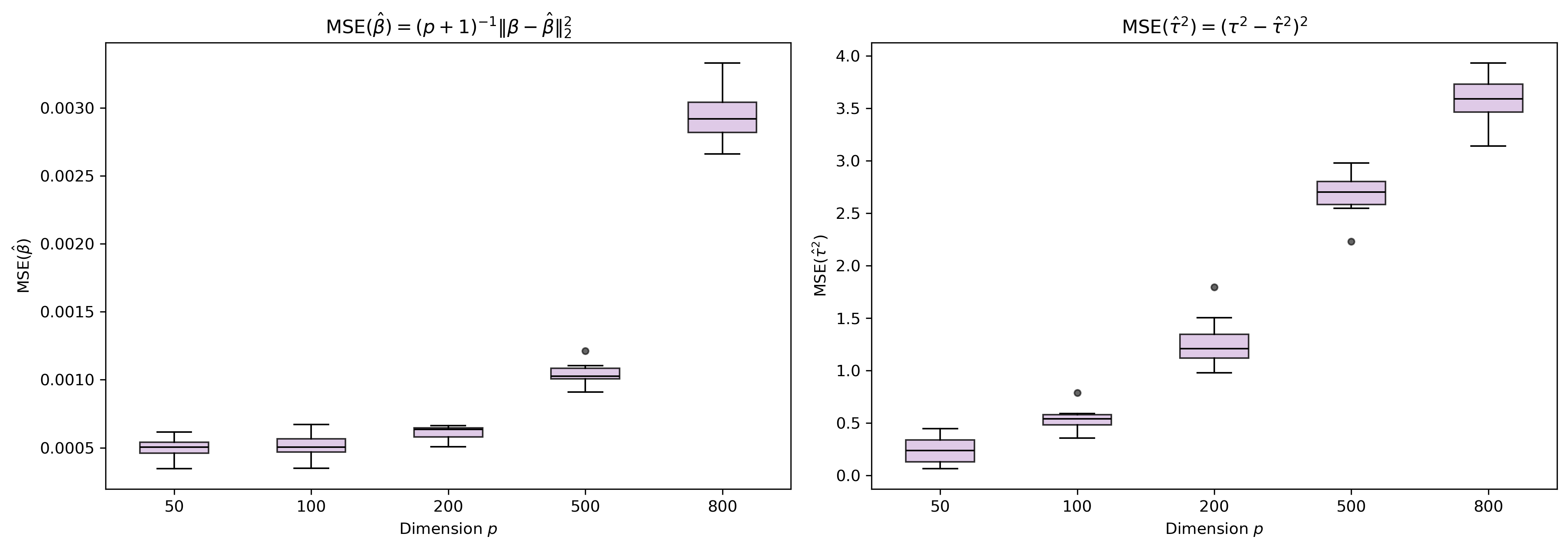}
    \caption{\footnotesize{Performance of $\tssg$ for the Student's-$t$ Type I $\ssg$ model as the dimension-to-sample-size ratio $\tfrac{p}{n}$ increases (up to $0.80$). MSEs of $\widehat{\beta}$ and $\widehat{\tau}^{2}$ over $10$ independent data regenerations, with fixed $n=1000$ and $p\in \{50, 100, 200, 500, 800\}$.}}
    \label{fig:mse_boxplots_scaling_p_student_t}
\end{figure}

For the Negative-Binomial experiment, we fix $n=400$ and consider $p\in \{10, 50, 100, 150, 200\}$. Figure~\ref{fig:mse_boxplots_scaling_p_negbin} reports the MSE of $\widehat{\beta}$ across $10$ independently regenerated datasets. The same qualitative pattern emerges as $\tfrac{p}{n}$ increases: for $p=10$ and $p=50$, the MSE remains very small, indicating strong performance of $\tssg$ in low- to moderate-dimensional count regression settings; however, once $p=100$, the MSE increases markedly, with further deterioration at $p=150$ and $p=200$.

\begin{figure}[!htp]
    \centering
    \includegraphics[width=0.7\linewidth]{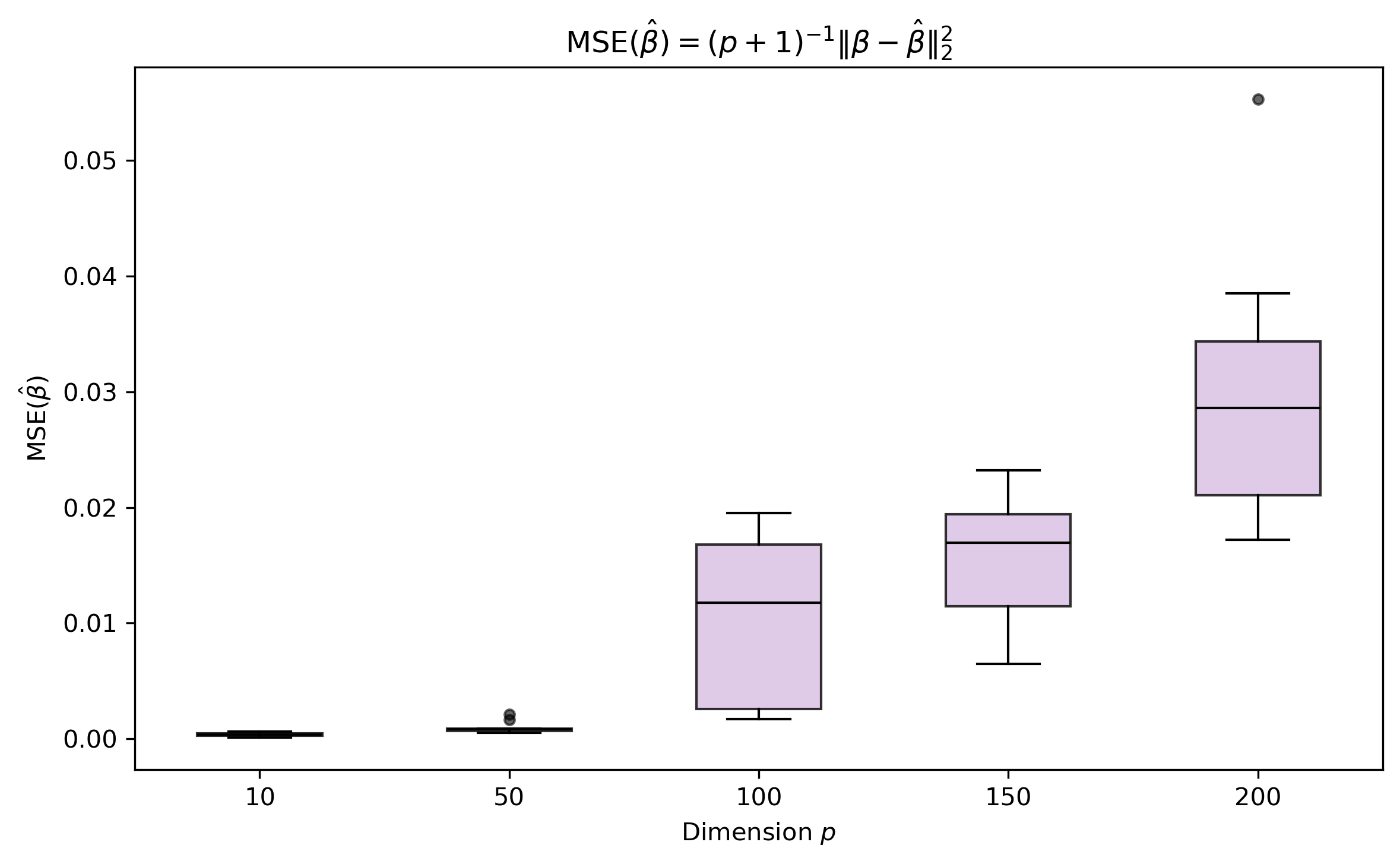}
    \caption{\footnotesize{Performance of $\tssg$ for the Negative-Binomial Type II $\ssg$ model as the dimension-to-sample-size ratio $\tfrac{p}{n}$ increases (up to $0.50$). MSEs of $\widehat{\beta}$ over $10$ independent data regenerations, with fixed $n=400$ and $p\in \{10, 50, 100, 150, 200\}$.}}
    \label{fig:mse_boxplots_scaling_p_negbin}
\end{figure}

Overall, the above experiments show that $\tssg$ remains accurate when $\tfrac{p}{n}$ is moderate, but its performance deteriorates as the ambient dimension $p$ grows large relative to the sample size $n$. This behavior is a natural manifestation of the curse of dimensionality and falls beyond the primary scope of the current $\tssg$ framework. A more refined treatment of such large dimensional choices would likely require sparsity-inducing priors or structural sparsity assumptions on the design matrix $\mathbf{X}$, both of which are promising directions deferred to future work.

\newpage

\section{Auxiliary Results for Bayesian Quantile Regression on U.S. 2000 Census Data}\label{sec:BQR-additional}

\subsection{Adaptation of \texorpdfstring{$\tssg$}{TAVIE-SSG} Methodology to Bayesian Quantile Regression}\label{subsec:TAVIE-BQR-setup}

Recall, the joint likelihood for quantile regression with $\tau = \tau_0$ and quantile of interest $u\in (0, 1)$, is given by:
\begin{align}\label{eq:TAVIE-QR-joint-likelihood}
    p(y\mid \mathbf{X}, \beta) = \prod_{i\in [n]}p_{\mathrm{ALD}}(y_i - \mathbf{x}_i^{\top}\beta\mid \tau_0, u).
\end{align}
Following Proposition~\ref{lemma:tangent-lower-bound} of the main manuscript, the joint likelihood in \eqref{eq:TAVIE-QR-joint-likelihood} is minorized by:
\begin{align}
\label{eq:BQR-minorizer}
\begin{split}
    &\varphi(y \mid \mathbf{X}, \beta, \xi) := \exp\Big\{-\tau_0\tilde{u}\mathds{1}_n^{\top}(y - \mathbf{X}\beta) + \tau_0(y - \mathbf{X}\beta)^{\top}\mathcal{A}(\xi)(y - \mathbf{X}\beta) + \tau_0\sum_{i \in [n]}\gamma(\xi_i)\Big\},
\end{split}
\end{align}
where $\tilde{u} := 2u-1$, $\xi := (\xi_1,\ldots,\xi_n)^{\top} \in \mathbb{R}^{n}_{+}$, $\mathcal{A}(\xi) := \mathrm{diag}(h'(\xi_1^2),\ldots,h'(\xi_n^2))$, $h(t) := -\sqrt{t}$, and $\gamma(t) := h(t^2) - t^2h'(t^2) = -\tfrac{|t|}{2}$. Analogous to Section~\ref{subsec:prior-posterior} of the main manuscript, the regression parameter vector $\beta$ is endowed upon with a conjugate Gaussian prior, $\beta \sim \mathcal{N}_p(\mu, {\Sigma})$, yielding a multivariate Gaussian $\alpha$-fractional variational posterior, $\pi_{\alpha}(\beta \mid \mathcal{D}_n, \xi) \equiv \mathcal{N}_p(\mu_{\alpha}(\xi), \Sigma_{\alpha}(\xi))$, with parameters:
\begin{align}\label{eq:QR-variational-posterior-parameters}
\begin{split}
    \Sigma_{\alpha}^{-1}(\xi) = \Sigma^{-1} - 2\tau_{\alpha} \mathbf{X}^{\top}\mathcal{A}(\xi)\mathbf{X},\quad \mu_{\alpha}(\xi) = \Sigma_{\alpha}(\xi)\left[\Sigma^{-1}\mu - 2\tau_{\alpha}\mathbf{X}^{\top}\mathcal{A}(\xi)y + \tau_{\alpha} \tilde{u}\mathbf{X}^{\top}\mathds{1}_n\right],
\end{split}
\end{align}
where $\tau_{\alpha} = \tau_0\alpha$. The optimal variational parameter $\xi^{\star} \in \mathbb{R}^{n}_{+}$ is obtained by maximizing the $\mathsf{ELBO}$ $\mathsf{L}(\xi)$:
\begin{align}\label{eq:ELBO-BQR}
\begin{split}
    \mathsf{L}(\xi) &:= \log\left[\frac{\varphi(y\mid \mathbf{X}, \beta, \xi)\pi(\beta)}{\pi_{\alpha}(\beta\mid \mathcal{D}_n, \xi)}\right] \\&= \frac{1}{2}\mu_{\alpha}(\xi)^{\top}\Sigma^{-1}_{\alpha}(\xi)\mu_{\alpha}(\xi) + \frac{1}{2}\log|\Sigma_{\alpha}(\xi)| + \tau_{\alpha}\sum_{i \in [n]} \gamma(\xi_i) + \tau_{\alpha}y^{\top}\mathcal{A}(\xi)y.
\end{split}
\end{align}

\textbf{Construction of the EM surrogate}. In this case, following the derivations in Section~\ref{subsec:tavie-algo} of the main manuscript, the EM surrogate function for the $l$th step is:
\begin{equation}
\label{eq:BQR-EM-surrogate}
\begin{split}
\mathcal{Q}(\xi^{\dagger} \mid \xi^{(l)}) &:= \mathbb{E}_{\xi^{(l)}}[\log \pi(\beta)] + \alpha\sum_{i \in [n]} \mathbb{E}_{\xi^{(l)}}\left[\log \varphi(y_i \mid \mathbf{x}_i, \theta, \xi_i^{\dagger})\right]
\\ &= \alpha \sum_{i \in [n]} \left[A(\xi_i^{\dagger})\kappa_i(\xi^{(l)}) + \gamma(\xi_{i}^{\dagger})\right] + \mathfrak{C}(\xi^{(l)}),
\end{split}
\end{equation}
where $A(t) := h'(t^2)$, $\mathfrak{C}(\xi^{(l)})$ is a constant independent of $\xi^{\dagger}$, and $\kappa(\xi) := (\kappa_1(\xi),\ldots, \kappa_n(\xi))^{\top} \in \mathbb{R}^{n}_{+}$, with:
\begin{align}\label{eq:kappa-BQR}
    \kappa_i(\xi) = 
        \mathbf{x}_i^{\top}\Sigma_{\alpha}(\xi) \mathbf{x}_i + \left(y_i - \mathbf{x}_{i}^{\top}\mu_{\alpha}(\xi)\right)^2,
\end{align}
for $i\in [n]$. Next, we present the optimization of the EM surrogate function, $\mathcal{Q}(\xi^{\dagger}\mid \xi^{(l)})$ in \eqref{eq:BQR-EM-surrogate} with respect to $\xi^{\dagger}$.

\textbf{Maximization of the surrogate}. As in Section~\ref{subsec:tavie-algo} of the main manuscript, the  EM update for maximization of $\mathsf{L}(\xi)$ in \eqref{eq:ELBO-BQR} is performed as:
\begin{equation}
\label{eq:TAVIE-xi-update-BQR}
{\xi}_i^{(l+1)}
= \arg\max_{{\xi}_i^{\dagger} > 0}\;\mathbb{E}_{\xi^{(l)}} \left[\log \varphi(y_i \mid \mathbf{x}_i, \beta, {\xi}_i^{\dagger}) \right] = \sqrt{\kappa_i(\xi^{(l)})},\quad i\in [n],
\end{equation}
where the last equality in \eqref{eq:TAVIE-xi-update-BQR} above follows from $\gamma'(t) = -t^2 A'(t)$, for $t\in \mathbb{R}^{+}$, which uses the definitions of $\gamma(t)$ and $A(t)$ in Proposition~\ref{lemma:tangent-lower-bound} of the main manuscript. The final EM algorithm is presented in Algorithm \ref{alg:tavie-em-BQR} below.

{
\renewcommand{\baselinestretch}{1.0}\normalsize
\begin{algorithm}[H]
\caption{The $\tssg$ EM Algorithm for Bayesian Quantile Regression}
\label{alg:tavie-em-BQR}
\DontPrintSemicolon

\KwIn{Data $\mathcal{D}_n$, prior hyperparameters, tempering parameter $\alpha$, scale parameter $\tau_0$, quantile level $u$, tolerance $\texttt{tol}$.}
\KwOut{Variational parameters $\xi^\star$ and variational posterior hyperparameters.}

\textbf{Initialize}: Set $t \gets 0$ and initialize $\xi^{(0)} \in \mathbb{R}_{+}^{n}$. \;

\Repeat{$\lVert \xi^{(t)} - \xi^{(t-1)} \rVert_2 \le$ {\texttt{tol}}}{
  \tcc{Update variational posterior hyperparameters}
    update $(\mu_{\alpha}(\xi^{(t)}),\, \Sigma_{\alpha}(\xi^{(t)}))$ via~\eqref{eq:QR-variational-posterior-parameters}\;

  \tcc{Update variational parameters (coordinate-wise)}
  \For{$i \in [n]$}{
    $\xi_i^{(t+1)} \gets \sqrt{\kappa_i(\xi^{(t)})}$ \quad where $\kappa_i(\xi)$ is defined in~\eqref{eq:kappa-BQR}\;
  }

  $t \gets t+1$\;
}

\end{algorithm}
}

\newpage

\subsection{Additional Plots}\label{subsec:BQR-additional-plots}

\begin{figure}[!htp]
    \centering
    \includegraphics[width=0.95\linewidth]{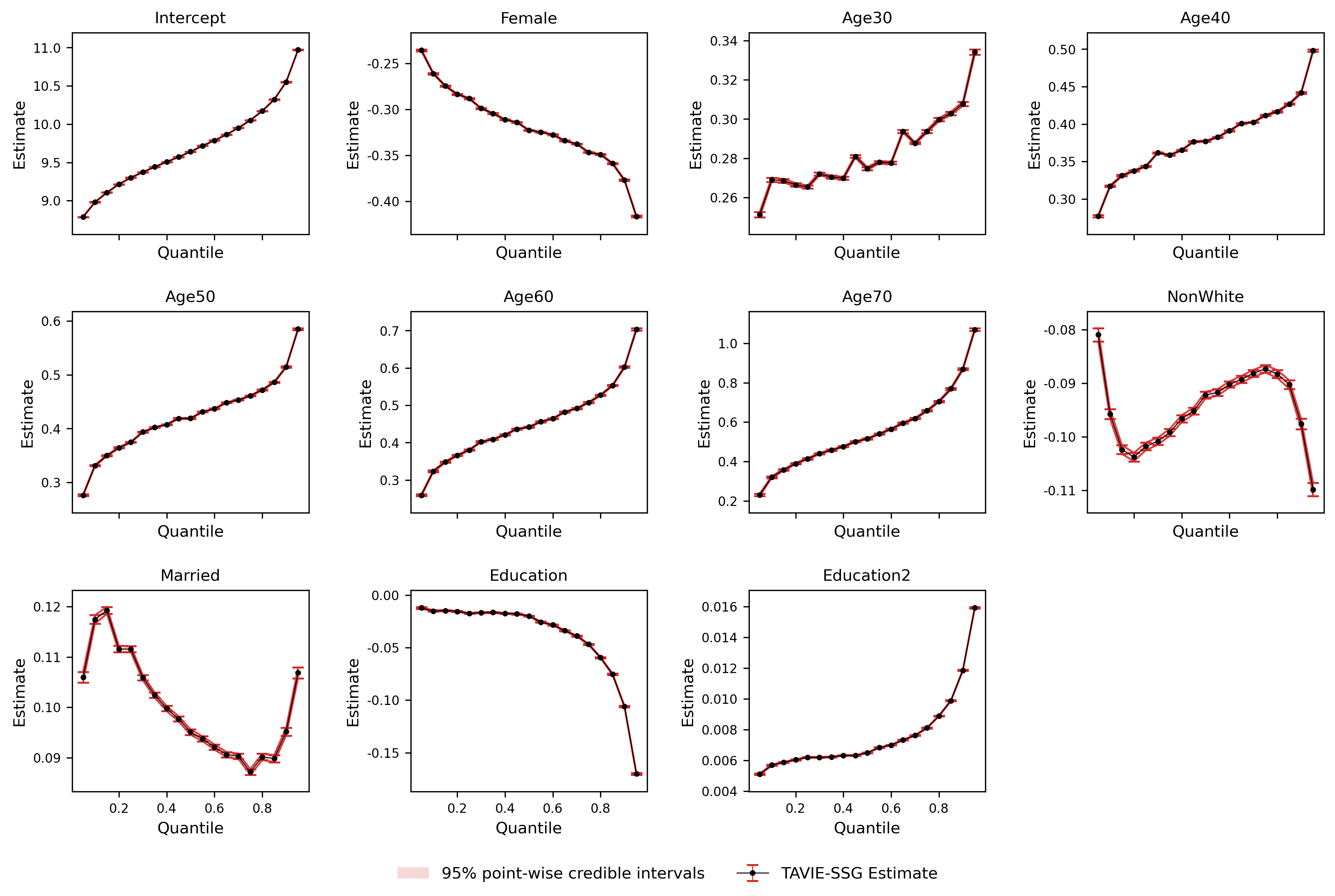}
    \caption{\footnotesize{$\tssg$ variational estimates and $95\%$ point-wise credible intervals for all features in U.S. 2000 Census data.}}
    \label{fig:tavie_QR_estimates_census_all}
\end{figure}

\begin{figure}[!htp]
    \centering
    \includegraphics[width=\linewidth]{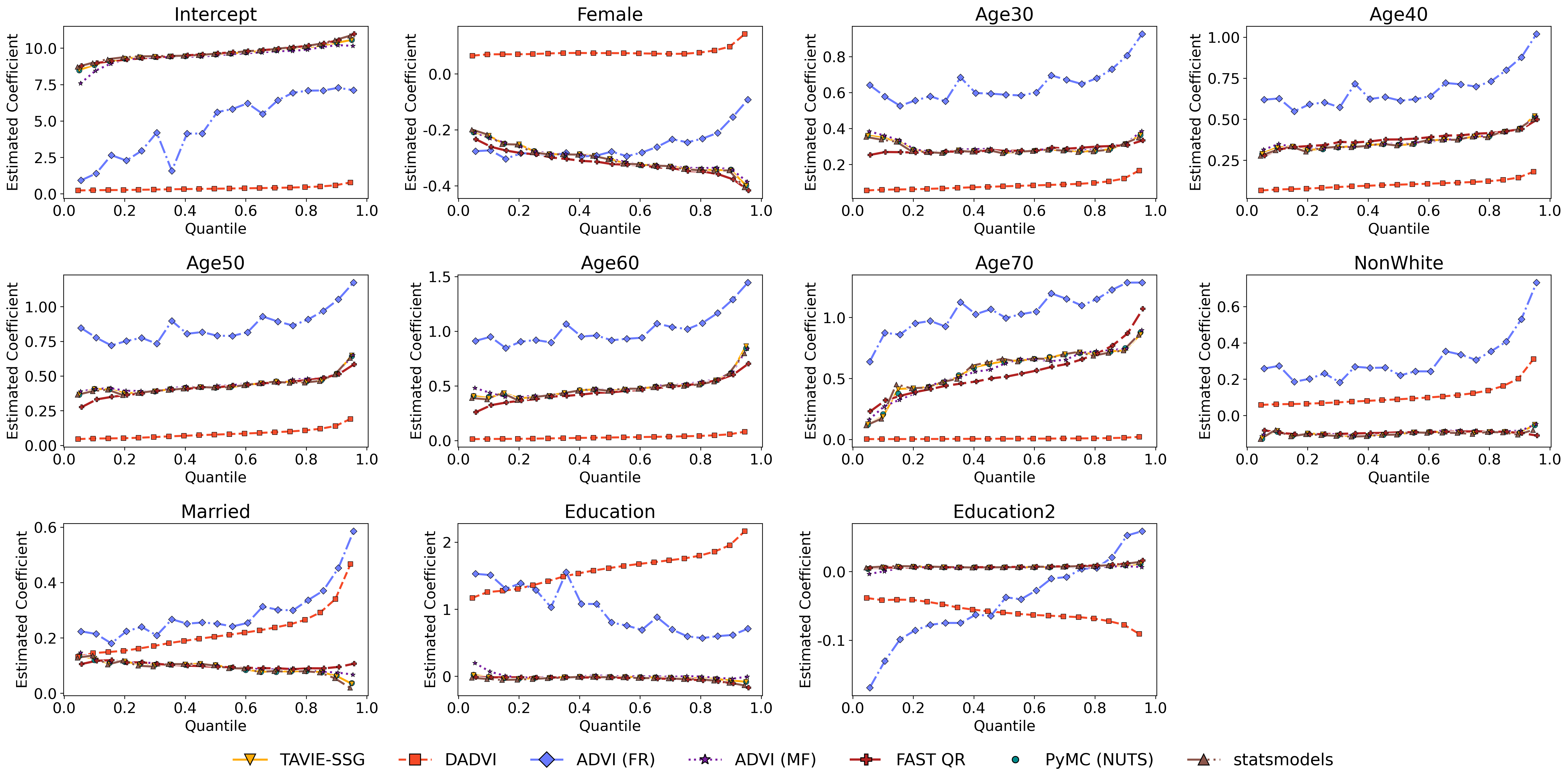}
    \caption{\footnotesize{Complete comparison results of $\tssg$ variational estimates on sub-sampled ($n=10^4$) U.S. 2000 Census data with DADVI, ADVI (MF/FR), PyMC (NUTS), and \texttt{statsmodels}.}}
    \label{fig:census_estimates_competing_methods_n_10000_all}
\end{figure}

\begin{figure}[!htp]
    \centering
    \includegraphics[width=\linewidth]{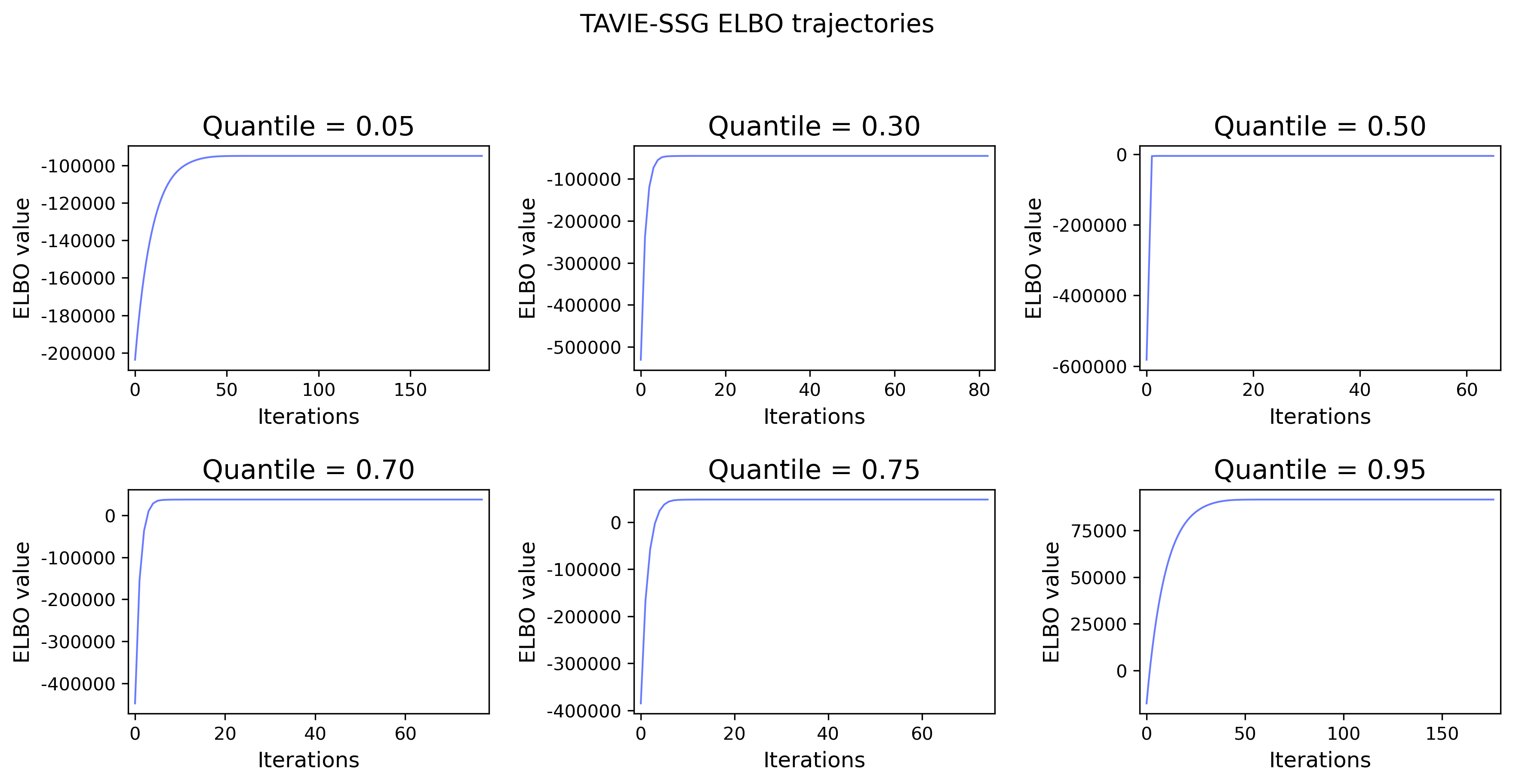}
    \caption{\footnotesize{$\mathsf{ELBO}$ of $\tssg$ plotted over iterations for various quantiles, demonstrating monotonic ascent and convergence. For $\tssg$, the $\mathsf{ELBO}$ corresponds to $\mathsf{L}(\xi)$ in~\eqref{eq:ELBO-general} of the main manuscript.}}
    \label{fig:census_tavie_elbo}
\end{figure}

\begin{figure}[!htp]
    \centering
    \includegraphics[width=\linewidth]{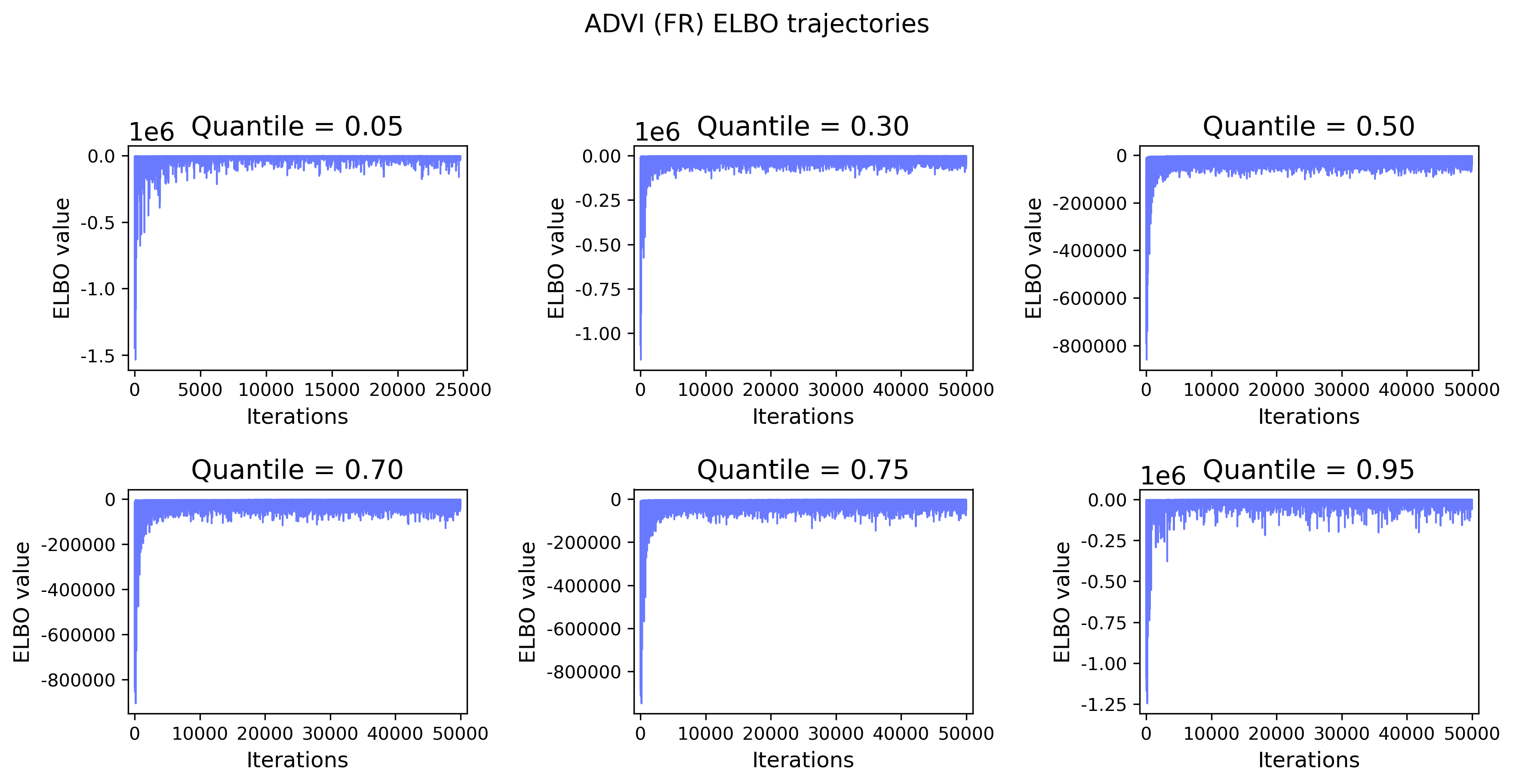}
    \caption{\footnotesize{$\mathsf{ELBO}$ of ADVI (FR) plotted over iterations for various quantiles, demonstrating convergence and stochastic behavior. For ADVI (FR), the MC approximation of the true $\mathsf{ELBO}$ is tracked.}}
    \label{fig:census_advi_fr_elbo}
\end{figure}

\newpage

\section{Auxiliary Results for STARmap Data Analysis}\label{sec:STARmap-additional-results}

\begin{figure}[!htp]
    \centering

    \begin{subfigure}[t]{0.75\textwidth}
        \centering
        \includegraphics[width=\linewidth]{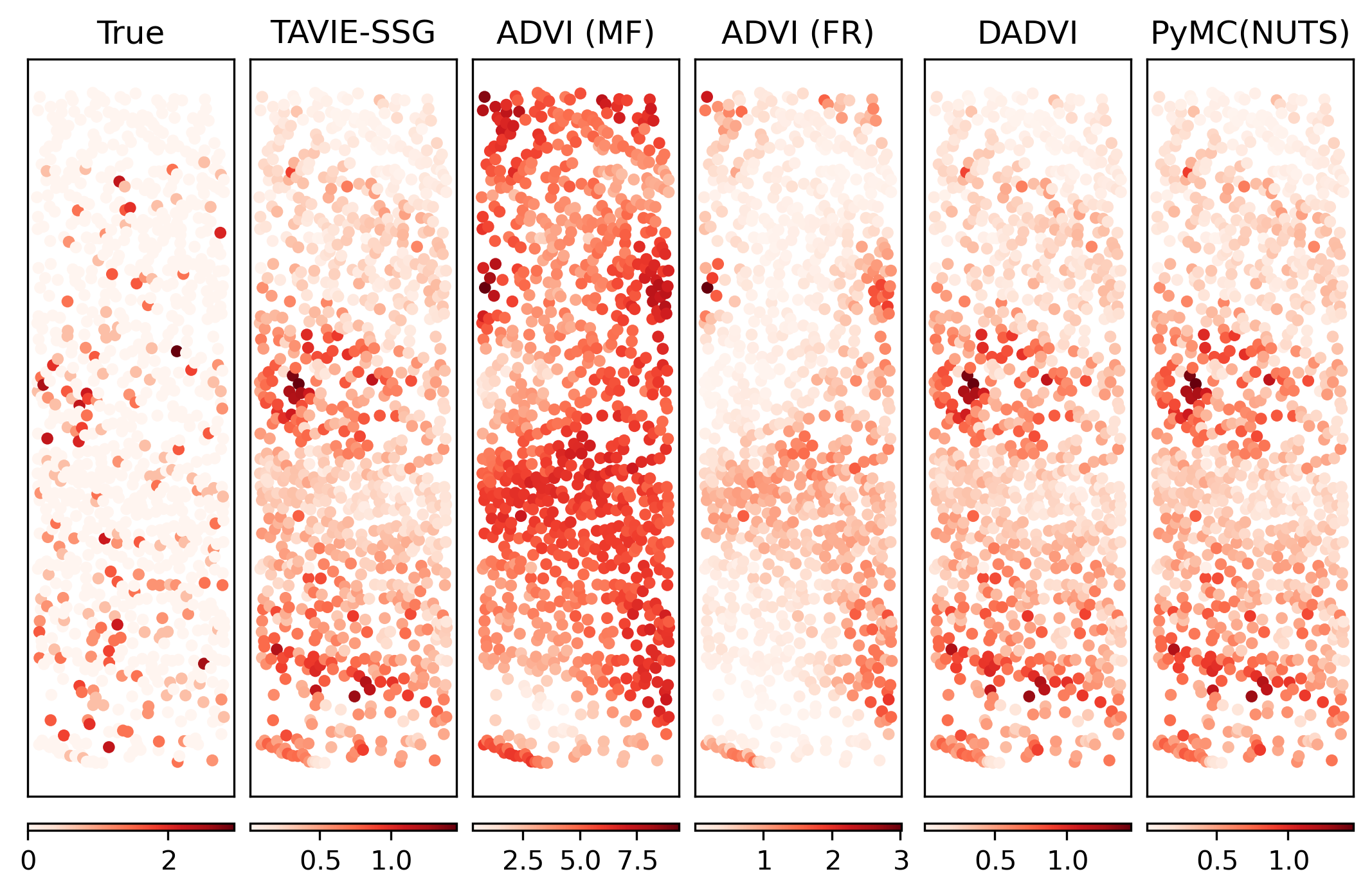}
        \caption{\footnotesize{\texttt{Lmo2} gene expressions.}}
        \label{fig:Lmo2}
    \end{subfigure}

    \vspace{0.75em}

    \begin{subfigure}[t]{0.75\textwidth}
        \centering
        \includegraphics[width=\linewidth]{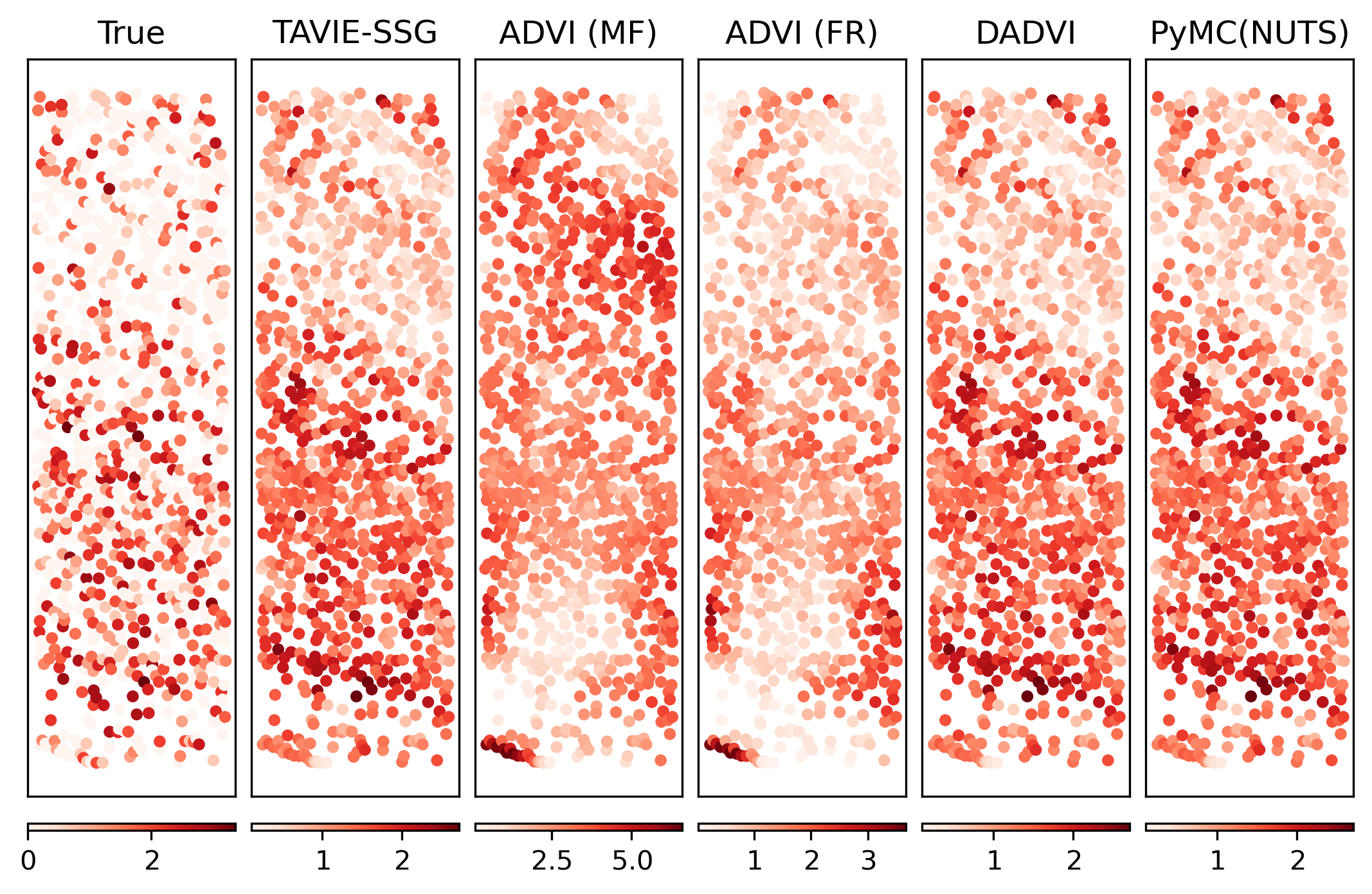}
        \caption{\footnotesize{\texttt{Chat} gene expressions.}}
        \label{fig:Chat}
    \end{subfigure}

    \caption{\footnotesize{Log-normalized true and predicted gene expression counts obtained from $\tssg$, ADVI (MF/FR), DADVI, and \texttt{PyMC} (NUTS).}}
    \label{fig:gene_pred-supp}
\end{figure}

\begin{figure}[!htp]
    \ContinuedFloat
    \centering

    \begin{subfigure}[t]{0.75\textwidth}
        \centering
        \includegraphics[width=\linewidth]{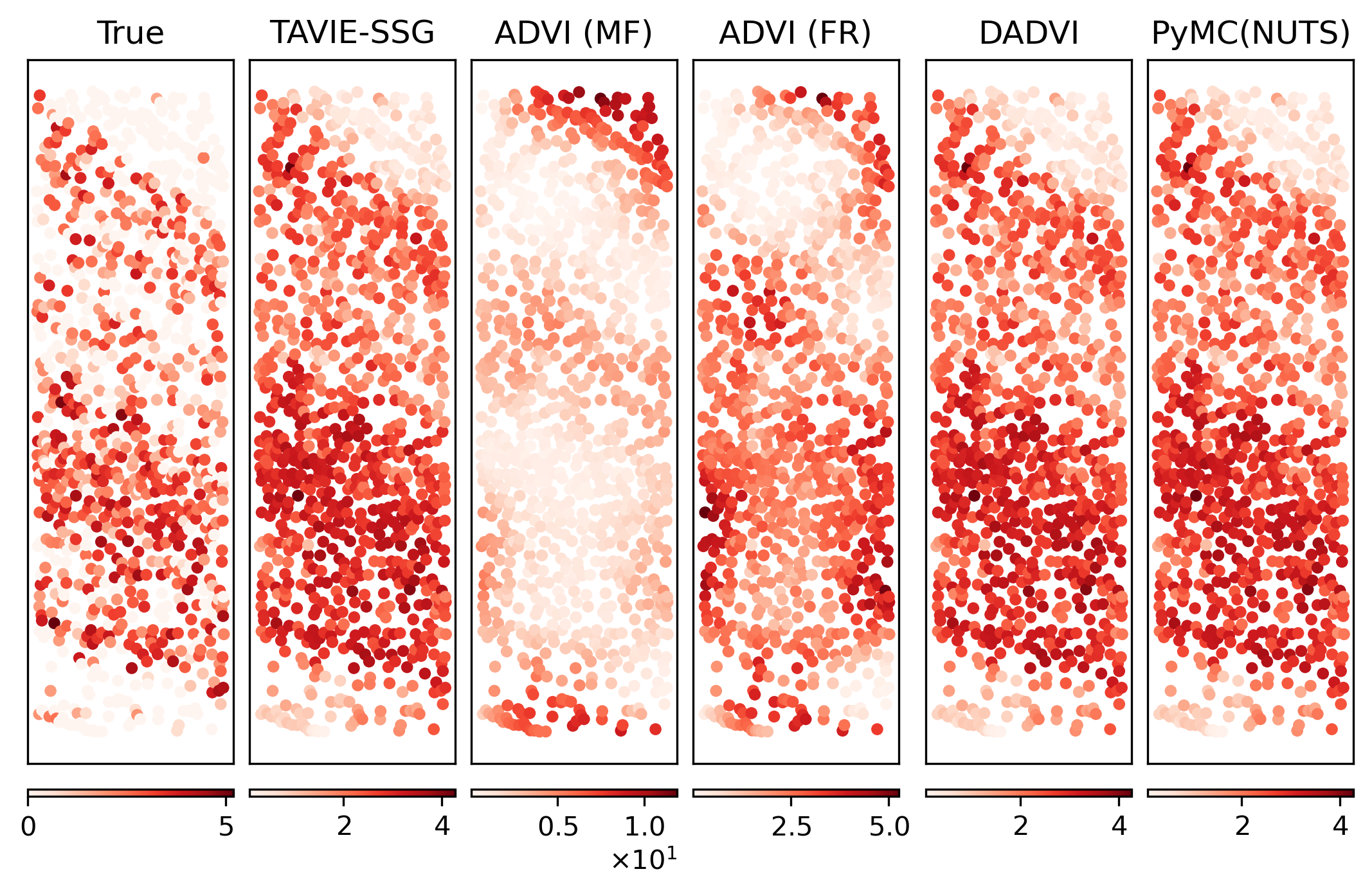}
        \caption{\footnotesize{\texttt{Egr1} gene expressions.}}
        \label{fig:Egr1}
    \end{subfigure}

    \vspace{0.75em}

    \begin{subfigure}[t]{0.75\textwidth}
        \centering
        \includegraphics[width=\linewidth]{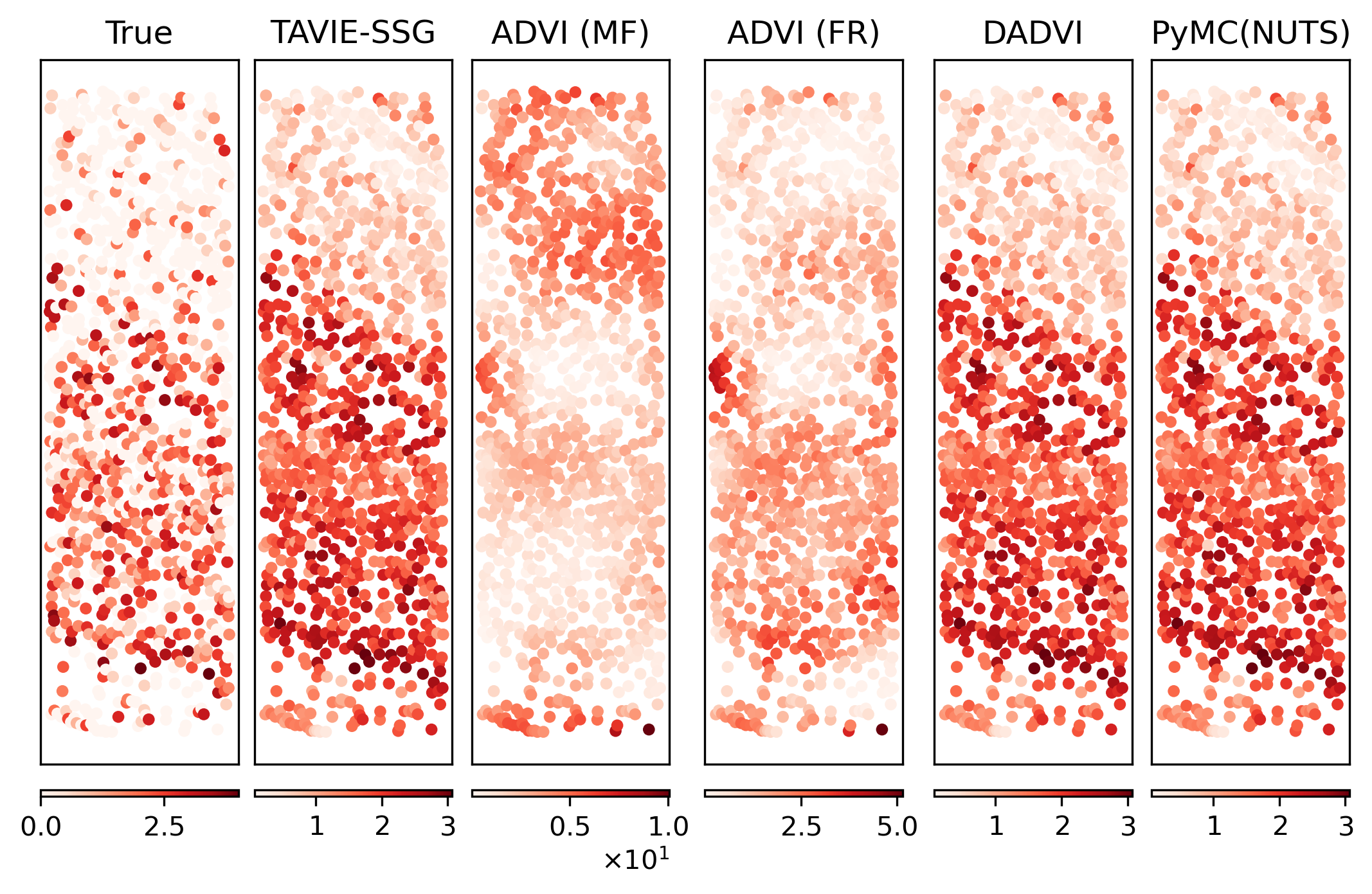}
        \caption{\footnotesize{\texttt{Fam19a1} gene expressions.}}
        \label{fig:Fam19a1}
    \end{subfigure}

    \caption{\footnotesize{Log-normalized true and predicted gene expression counts obtained from $\tssg$, ADVI (MF/FR), DADVI, and \texttt{PyMC} (NUTS) (continued).}}
\end{figure}

\begin{figure}[!htp]
    \ContinuedFloat
    \centering

    \begin{subfigure}[t]{0.75\textwidth}
        \centering
        \includegraphics[width=\linewidth]{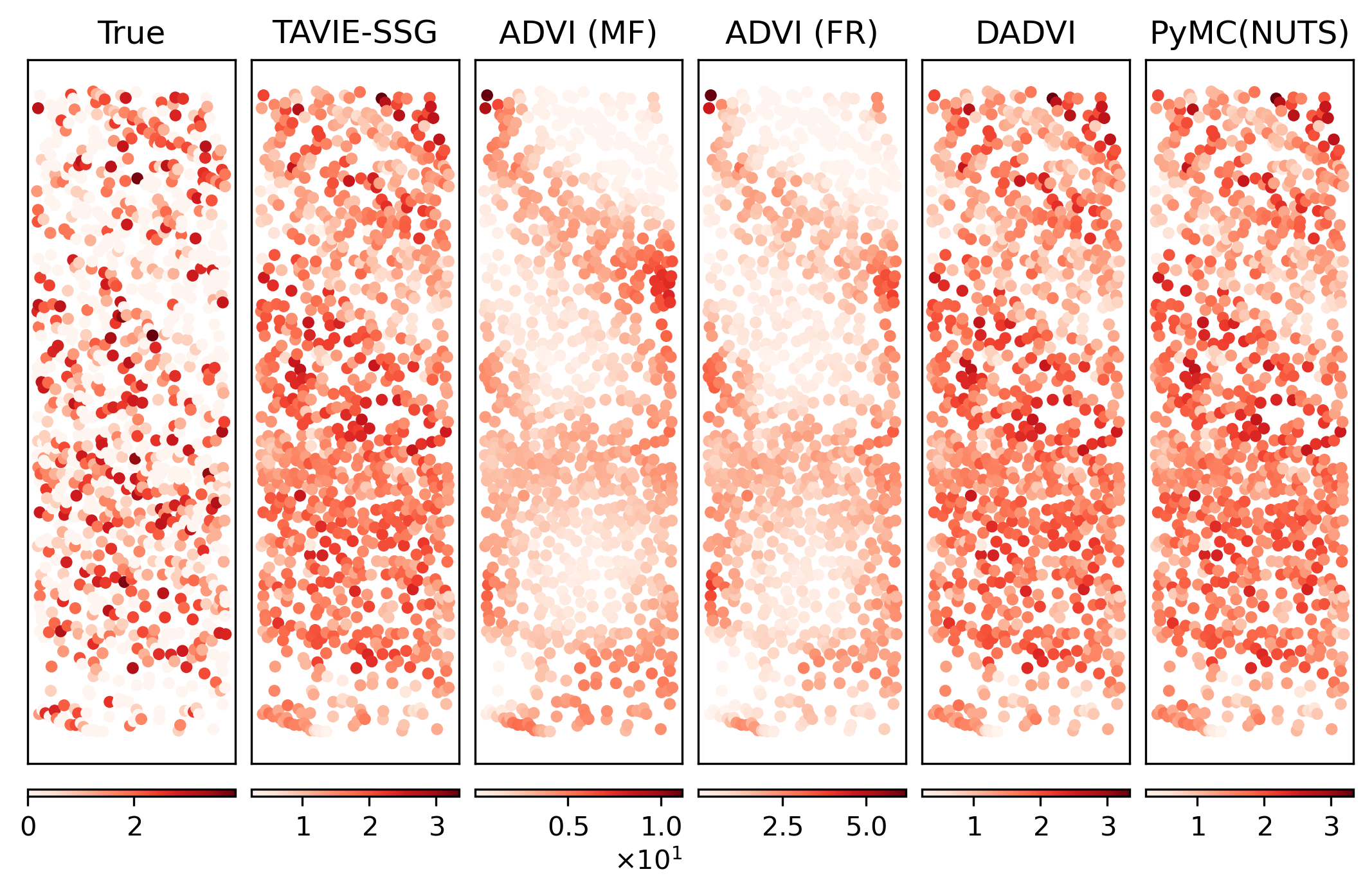}
        \caption{\footnotesize{\texttt{Mog} gene expressions.}}
        \label{fig:Mog}
    \end{subfigure}

    \vspace{0.75em}

    \begin{subfigure}[t]{0.75\textwidth}
        \centering
        \includegraphics[width=\linewidth]{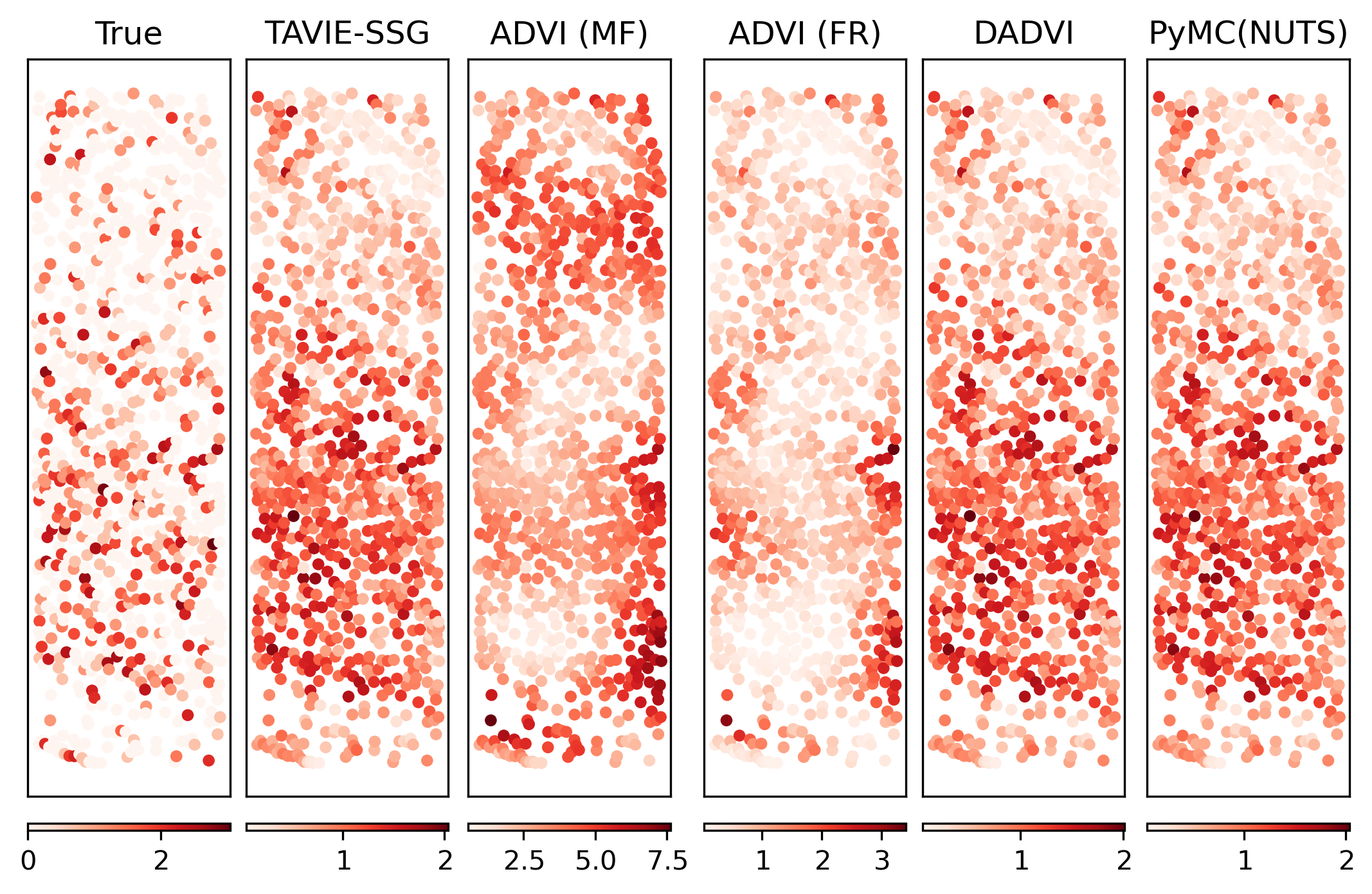}
        \caption{\footnotesize{\texttt{Gpc3} gene expressions.}}
        \label{fig:Gpc3}
    \end{subfigure}

    \caption{\footnotesize{Log-normalized true and predicted gene expression counts obtained from $\tssg$, ADVI (MF/FR), DADVI, and \texttt{PyMC} (NUTS) (continued).}}
\end{figure}

\newpage

\begin{figure}[!htp]
    \centering

    \begin{subfigure}[t]{0.75\textwidth}
        \centering
        \includegraphics[width=\linewidth]{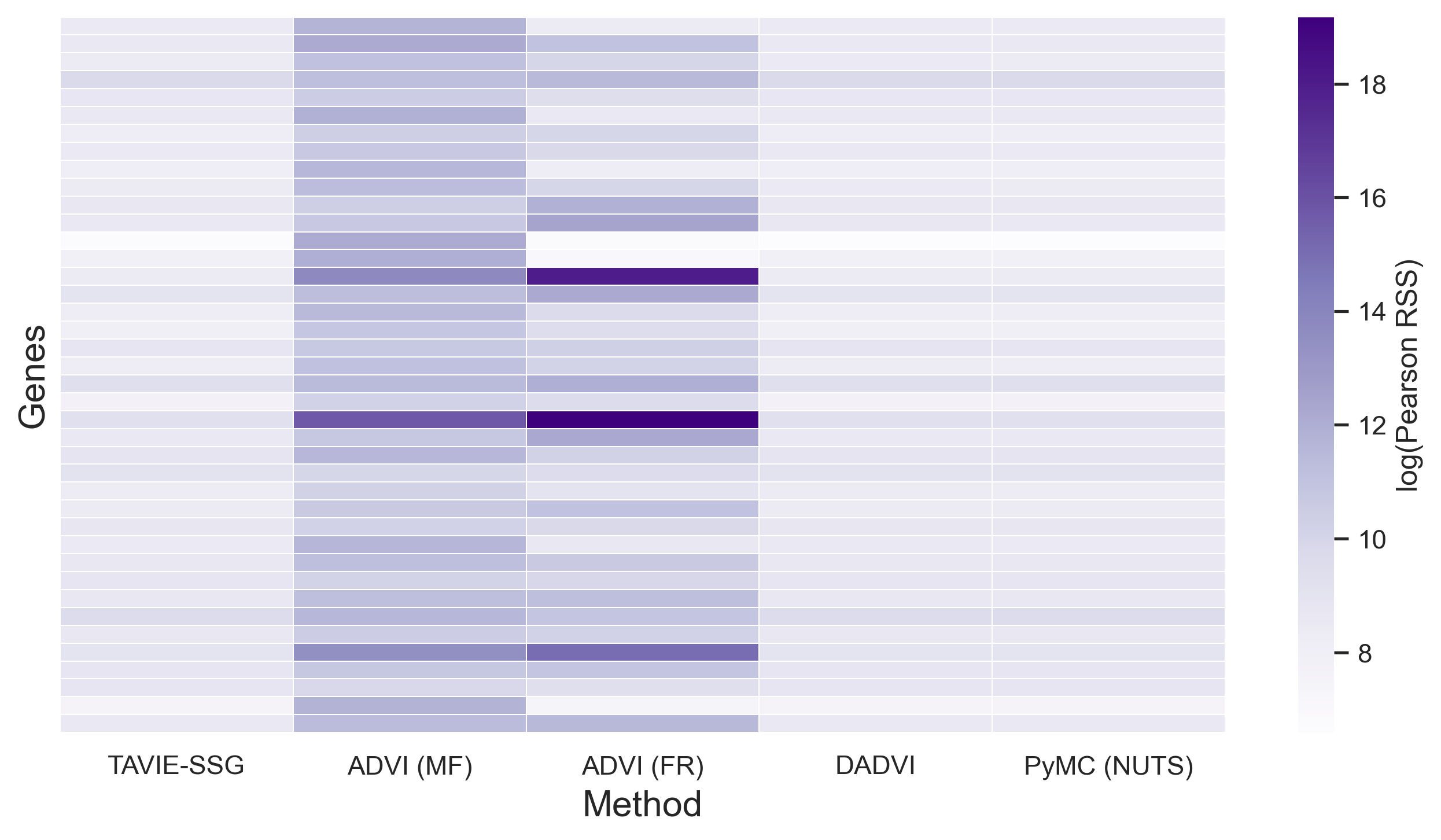}
        \caption{\footnotesize{Genes 1--40.}}
        \label{fig:gene1-40}
    \end{subfigure}

    \vspace{0.75em}

    \begin{subfigure}[t]{0.75\textwidth}
        \centering
        \includegraphics[width=\linewidth]{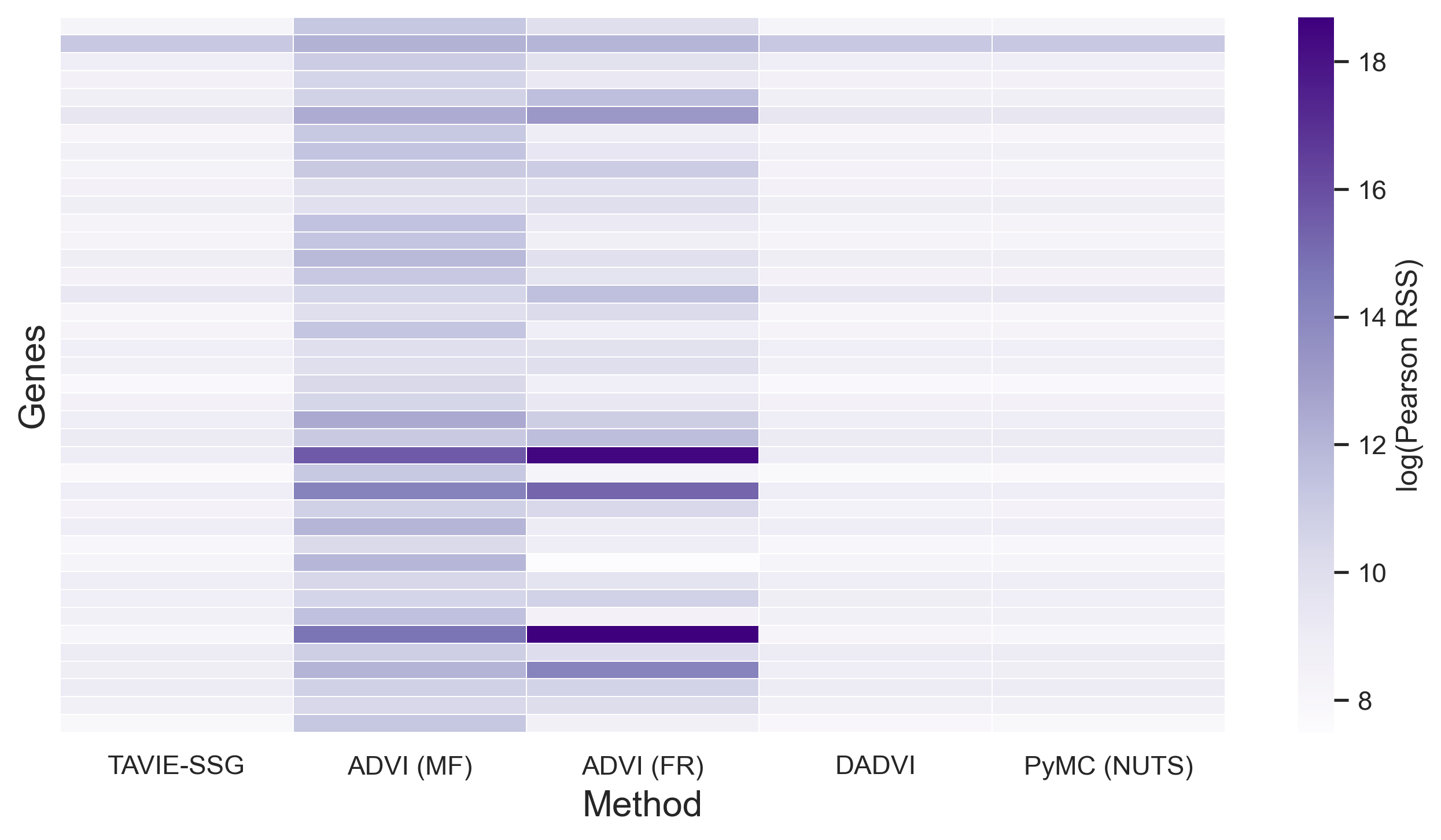}
        \caption{\footnotesize{Genes 41--80.}}
        \label{fig:gene41-80}
    \end{subfigure}

    \caption{\footnotesize{
    Heatmaps of the $\log$ Pearson residual sum of squares between observed and predicted gene expression counts across $\tssg$, ADVI (MF/FR), DADVI, and \texttt{PyMC} (NUTS), shown over four blocks of randomly selected genes.
    }}
    \label{fig:heatmap}
\end{figure}

\begin{figure}[!htp]
    \ContinuedFloat
    \centering

    \begin{subfigure}[t]{0.75\textwidth}
        \centering
        \includegraphics[width=\linewidth]{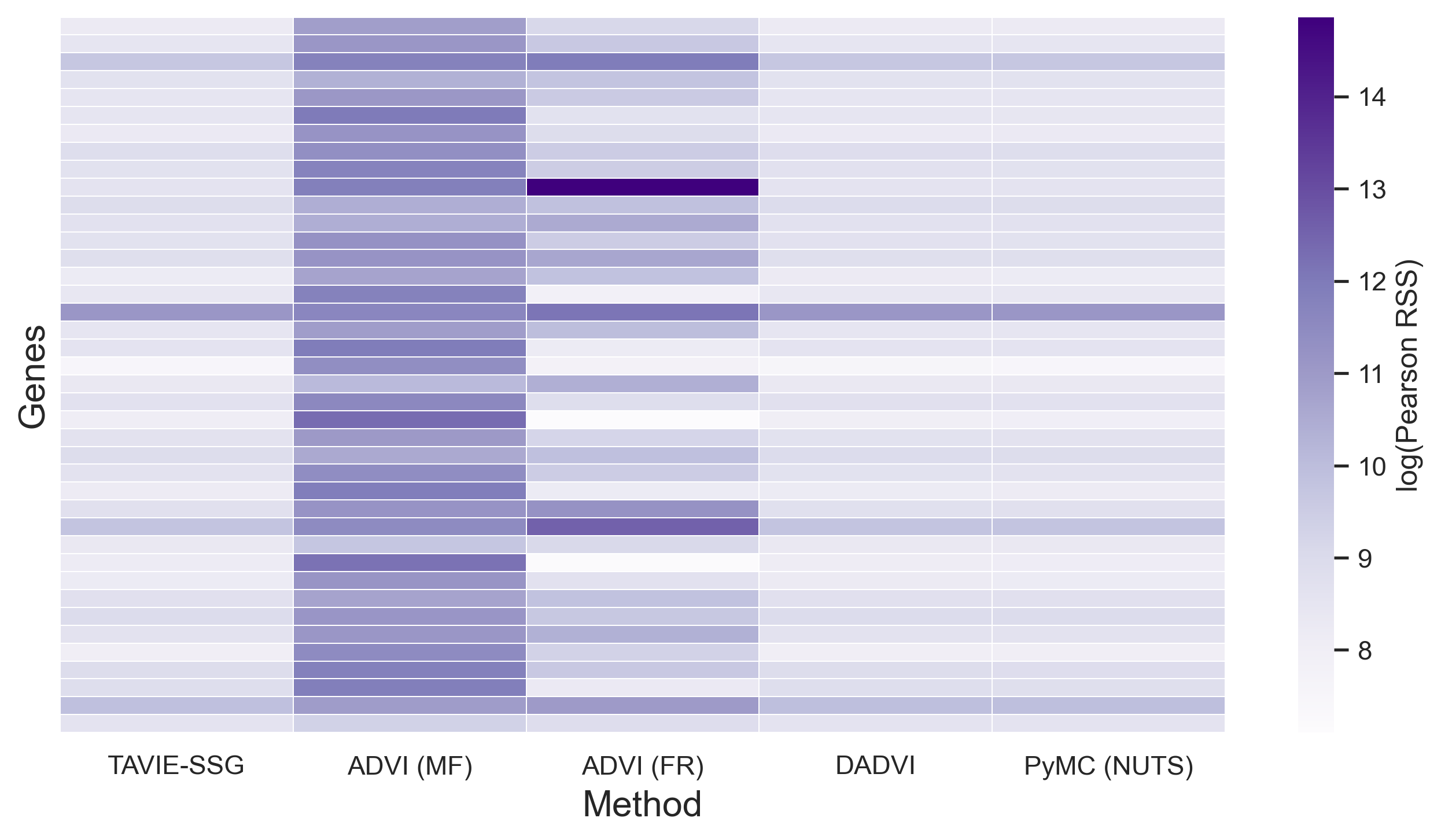}
        \caption{\footnotesize{Genes 81--120.}}
        \label{fig:gene81-120}
    \end{subfigure}

    \vspace{0.75em}

    \begin{subfigure}[t]{0.75\textwidth}
        \centering
        \includegraphics[width=\linewidth]{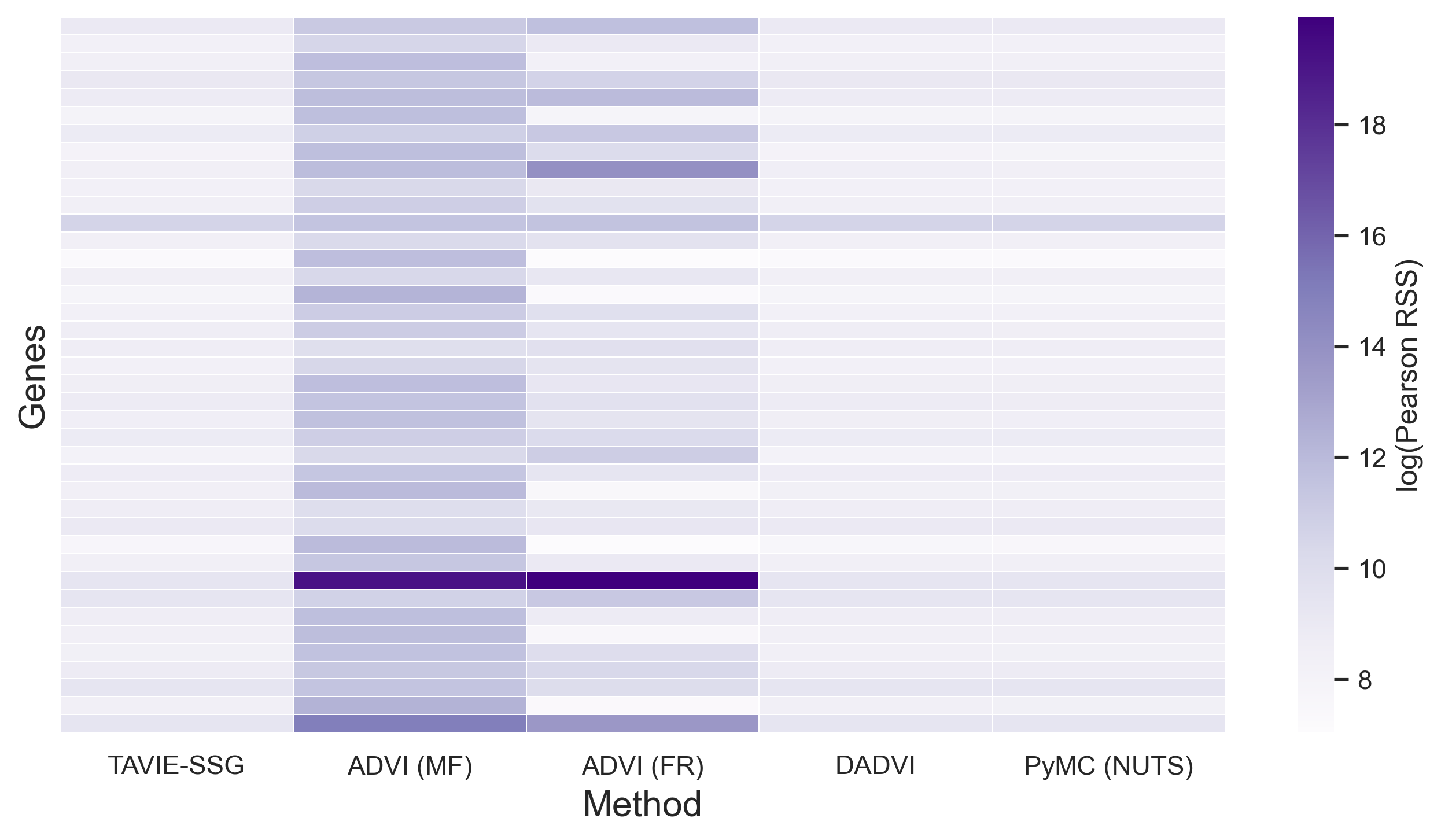}
        \caption{\footnotesize{Genes 121--160.}}
        \label{fig:gene121-160}
    \end{subfigure}

    \caption{\footnotesize{
    Heatmaps of the $\log$ Pearson residual sum of squares between observed and predicted gene expression counts across $\tssg$, ADVI (MF/FR), DADVI, and \texttt{PyMC} (NUTS), shown over four blocks of randomly selected genes (continued).
    }}
\end{figure}

\stopcontents[supplement]

\end{document}